\tikzset{highlight/.style={rectangle,
		fill=red!15,
		blend mode = multiply,
		rounded corners = 0.5 mm,
		inner sep=1pt,
		fit = #1}}
\tikzset{highlightblue/.style={rectangle,
		fill=blue!15,
		blend mode = multiply,
		rounded corners = 0.5 mm,
		inner sep=1pt,
		fit = #1}}
\tikzset{highlightyellow/.style={rectangle,
		fill=yellow!15,
		blend mode = multiply,
		rounded corners = 0.5 mm,
		inner sep=1pt,
		fit = #1}}
\tikzset{block/.style = {rectangle, draw, minimum height=10mm,
		align=center}}
\tikzstyle{shapeaftershift}=[thick,dashed]
\tikzstyle{shapebeforeshift}=[gray,opacity=.5]
\theoremstyle{theorem}
\newtheorem{theorem}{Theorem}[section]
\theoremstyle{definition}
\newtheorem{zbdefinition}[theorem]{Definition}
\theoremstyle{theorem}
\newtheorem{lemma}[theorem]{Lemma}
\theoremstyle{theorem}
\newtheorem{corollary}[theorem]{Corollary}
\theoremstyle{theorem}
\newtheorem{proposition}[theorem]{Proposition}
\theoremstyle{definition}
\newtheorem{zbclaim}[theorem]{Claim}
\theoremstyle{definition}
\newtheorem{zbproblem}[theorem]{Problem}
\theoremstyle{definition}
\newtheorem{zbexample}[theorem]{Example}
\def\zbdate{\leavevmode\hbox{\twodigits\day.\twodigits\month.\the\year}}
\def\twodigits#1{\ifnum#1<10 0\fi\the#1}
\definecolor{OliveGreen}{rgb}{0,0.6,0}
\definecolor{LightCyan}{rgb}{0.88,1,1}
\newcolumntype{P}[1]{>{\centering\arraybackslash}p{#1}}
\newcommand{\gen}[1]{\langle #1 \rangle}
\newcommand{\norm}[1]{\left\Vert#1\right\Vert}
\def\Npos{\mathbb{N}_{\ge 1}}
\def\N{\mathbb{N}_{\ge 0}}
\def\R{\mathbb{R}}
\def\Q{\mathbb{Q}}
\def\transp{\mathrm{T}}
\DeclareMathOperator{\ModelsOf}{Mod}
\DeclareMathOperator{\Free}{Free}
\DeclareMathOperator{\Pred}{Pred}
\DeclareMathOperator{\PredRespecting}{PredResp}
\DeclareMathOperator{\PredDisrespecting}{PredDisresp}
\DeclareMathOperator{\LinDep}{LinDep}
\DeclareMathOperator{\Disjuncts}{DNF}
\DeclareMathOperator{\DisjunctsOf}{DisjOf}
\DeclareMathOperator{\ImageOf}{im}
\DeclareMathOperator{\Sat}{Sat}
\DeclareMathOperator{\OpenCubeSymbol}{C}
\def\varphiprop{\varphi_\mathrm{prop}}
\def\Mstruct{\mathcal{M}}
\def\Uuniverse{\mathcal{U}}
\def\Iinterp{\mathcal{I}}
\def\Fixes{\mathcal{F}}
\def\DisjPhi{\Disjuncts_\varphi[\varphi]}
\def\DisjNegPhi{\Disjuncts_\varphi[\neg\varphi]}
\def\DisjTrue{\Disjuncts_\varphi[\top]}
\newcommand{\DisjOf}[1]{\DisjunctsOf_\varphi(#1)}
\newcommand{\PiSimp}[1]{\PredRespecting_\Pi(#1)}
\newcommand{\PiComp}[1]{\PredDisrespecting_\Pi(#1)}
\newcommand{\id}{\mathrm{id}}
\newcommand{\lc}{\mathrm{lc}}
\newcommand{\OpenCube}[2]{\OpenCubeSymbol_{#2}^{#1}}
\newlang{\QFLRA}{QFLRA}
\def\zbdocname/{thesis}
\def\zbsectionname/{chapter}
\newcommand{\zbthesismode}{1}
\newcommand{\zbtikzbiggerscaling}{1.4}
\newcommand{\zbtikznormalscaling}{1.025}
\newcommand{\zbtikzaggressivescaling}{1.2}
\newcommand{\zbrefsec}[1]{Chapter \ref{#1}}
\title{Variable Independence \\ in Linear Real Arithmetic}
\author{Alexander Mayorov}
\date{\today}
\begin{document}

\makeatletter

\let\chapterorig\chapter
\let\sectionorig\section
\let\subsectionorig\subsection

\renewcommand{\section}{\chapterorig}
\renewcommand{\subsection}{\sectionorig}
\renewcommand{\subsubsection}{\subsectionorig}

\makeatother

\frontmatter

% Title page

\newgeometry{margin=1in}
\begin{titlepage}
	\makeatletter
	\centering
	\includegraphics[width=0.5\textwidth]{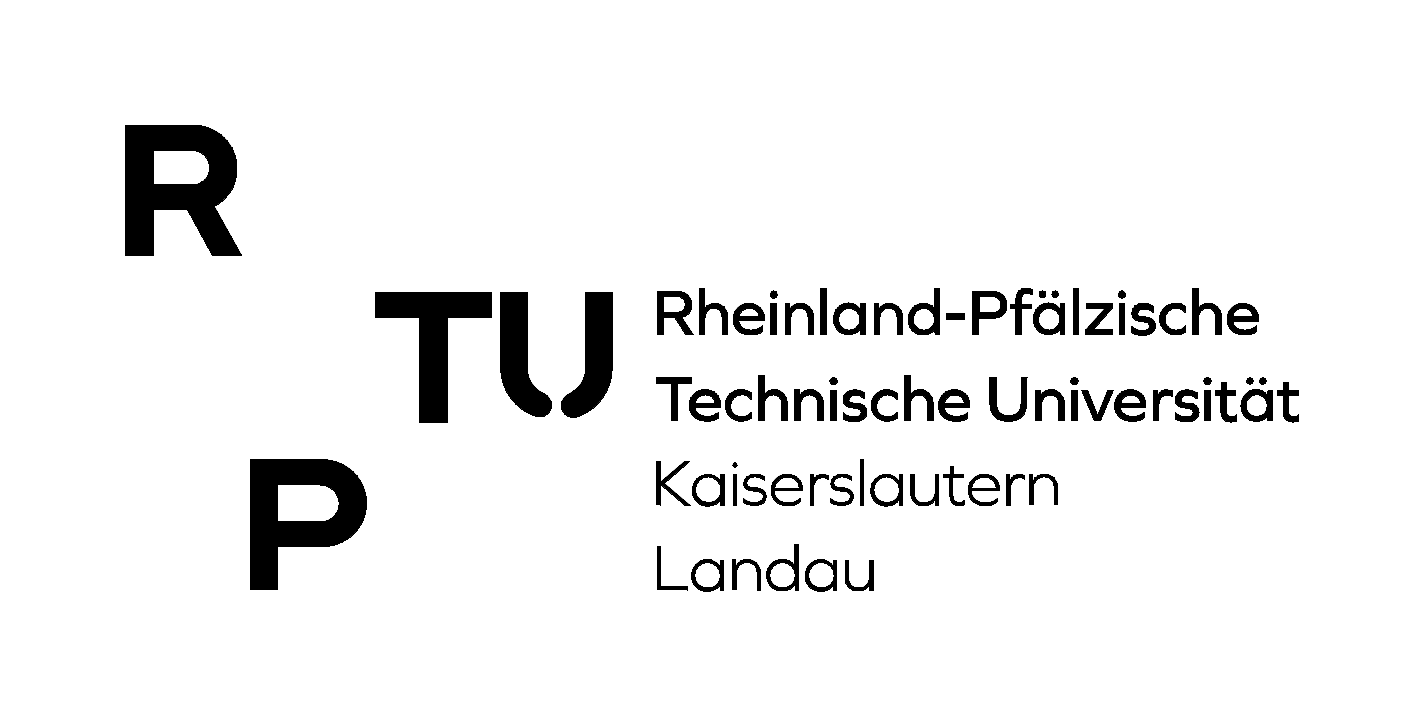} \par \vspace*{\fill}
	{\scshape\LARGE \@title} \par \vspace*{\fill}
	{\bfseries Bachelor Thesis} \par\vspace{1.5cm}
	{\Large\itshape \@author} \par\vspace{1.5cm}
	{\@date} \par\vspace*{\fill}
	{Rheinland-Pf{\"a}lzische Technische Universit{\"a}t Kaiserslautern-Landau\\
		Department of Computer Science,\\
		Gottlieb-Daimler-Straße, 67663 Kaiserslautern, Germany} \par\vspace*{\fill}
	{
		\begin{tabular}{rl}
			Supervisors: & Prof. Dr. Anthony W. Lin\\
			& Prof. Matthew Hague
		\end{tabular}
	}
	\makeatother
\end{titlepage}
\restoregeometry

% Authorship statement

\sectionorig*{Declaration}

I hereby declare that I am the sole author of this bachelor thesis and that the work contained herein is my own except where explicitly stated otherwise in the text. I further declare that I have not submitted this thesis at any other institution in order to obtain a degree.

% \vspace{0.5cm}

% \vspace{2cm}

\par\vspace*{8ex}\noindent
Kaiserslautern, \zbdate
\hfill
\begin{tabular}[t]{@{}c@{}}
	\includegraphics[scale=0.3]{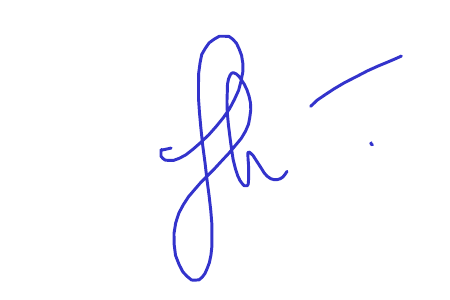}\vspace{-10pt} \\
	\makebox[15em]{\dotfill}\\
	Alexander Mayorov
\end{tabular}

% abstract

\newpage
% \pdfbookmark{Abstract}{Abstract}
\sectionorig*{Abstract}

\begin{center}
	\begin{minipage}[t]{0.7\textwidth}
		Variable independence and decomposability are algorithmic techniques for simplifying logical formulas by tearing apart connections between free variables. These techniques were originally proposed to speed up query evaluation in constraint databases, in particular by representing the query as a Boolean combination of formulas with no interconnected variables. They also have many other applications in SMT, string analysis, databases, automata theory and other areas. However, the precise complexity of variable independence and decomposability has been left open especially for the quantifier-free theory of linear real arithmetic (LRA), which is central in database applications. We introduce a novel characterization of formulas admitting decompositions and use it to show that it is $ \coNP $-complete to decide variable decomposability over LRA. As a corollary, we obtain that deciding variable independence is in the second level of the polynomial hierarchy. These results substantially improve the best known double-exponential time algorithms for variable decomposability and independence. In many practical applications, it is important not only to decide variable independence/decomposability algorithmically but also to be able to efficiently eliminate connections between variables whenever possible. We design and implement an algorithm for this problem, which is optimal in theory, exponentially faster compared to the current state-of-the-art algorithm and efficient on various microbenchmarks. In particular, our algorithm is the first one to overcome a fundamental barrier between non-discrete and discrete first-order theories (for the latter, the problem is significantly easier, and many techniques yielding efficient algorithms are known). Formulas arising in practice often have few or even no free variables that are perfectly independent. In this case, our algorithm can compute a best-possible approximation of a decomposition, which can be used to optimize database queries by exploiting partial variable independence, which is present in almost every logical formula or database query constraint.

	\end{minipage}
\end{center}

% acknowledgements

\newpage

\sectionorig*{Acknowledgements}

\begin{center}
	\begin{minipage}[t]{0.7\textwidth}
		I wish to express my sincere thanks to my supervisor Anthony Lin for his guidance, continuous support, advice and for the helpful and inspiring discussions we had. I am also very grateful to him for providing me with all the necessary facilities for my research. I would like to thank my co-supervisor Matthew Hague and Daniel Stan for their guidance, useful feedback, comments and for the fruitful discussions we had throughout the entire period of research.

I want to also express my profound gratitude to my family and friends for providing me with unfailing support and continuous encouragement throughout the process of researching and writing this thesis. Thank you.
	\end{minipage}
\end{center}

% table of contents

\setcounter{tocdepth}{3}
\renewcommand{\contentsname}{Table of Contents}
\cleardoublepage                    % Open right (req. for bookmark)
\pdfbookmark[0]{\contentsname}{toc}
\tableofcontents

% \listoffigures

% \newpage

% \vfill

% Main content

\mainmatter

\section{Introduction}
\label{sec:intro}

Constraint databases \cite{CDB-book, gaede:1996, revesz:1998, KKR90} generalize the notion of a tuple in a relational database \cite{ramakrishnan:2003:databasemanagementsystems} to a conjunction of constraints over an appropriate language that is typically a fragment of first-order logic. By using constraints, it is possible to represent complex and a very large or even infinite amount of data in a compact and efficient way. The key idea of constraint databases is to allow users to interact transparently with data expressed using constraints as if that data were to be explicitly stored in a standard relational database. For example, in a cadastral database, one might be interested in finding intersecting polygons representing positions of private properties on a map. This query can be formulated in the language of constraint database queries roughly as follows: does there exist a point belonging to two polygons at the same time? Note that this query mentions the existence of a point, i.e., an object that is not in the database and that would be impossible to explicitly store in a relational database simply because there are infinitely many points in two-dimensional unbounded space. More generally, constraint databases have numerous applications in spatial \cite{brodsky:1993, brodsky:1995, brodsky:1999, paredaens:1994, vandeurzen:1995}, temporal \cite{chomicki:1994:temporalquerylangs, toman:1997} and spatiotemporal databases \cite{GRS01, grumbach:1999}. Many important classes of constraints have been studied extensively over the last decades, including Presburger Arithmetic \cite{haase:2018, bradley:2007, kroening:2016, griggio:2012:intlinarith, stansifer:1984, presburger:1929}, Linear Real Arithmetic (LRA) \cite{monniaux:2008, dutertre:2006:linarithsolver, dutertre:2006:simplexdpllt, kroening:2016, bradley:2007}, the theory of Real-Closed Fields \cite{caviness:2012:qecad, jirstrand:1995:cad, goldin:1996} and decidable theories over strings (in particular, automatic structures) \cite{barcelo:2019, chen:2017, jez:2022}.

% Most real-world database systems support finite relations between data. For example, in relational databases \cite{ramakrishnan:2003:databasemanagementsystems}, it is a standard operation to compute a natural join of two relations, which corresponds to finding all matching pairs of attribute values appearing in corresponding relations. Constraint databases \cite{CDB-book, KKR90, gaede:1996} naturally generalize finite relations between data by allowing infinite relations that are expressible using finitely-representable constraints.

\subsection{Variable independence}

Variable independence and decomposability \cite{Lib99, libkin:2003, GRS01, CGKT03, CR99, CGK96} have been identified as crucial algorithmic techniques in relational and constraint databases because they allow us to significantly optimize query evaluation whenever there is variable independence in the constraints involved. We demonstrate this on a simple example.

\begin{zbexample}
	\label{example:independence_eff_query_eval}
	Consider a relational database with two binary relations $ X(x_1, x_2) $, $ Y(y_1, y_2) $ over the real numbers. Suppose we want to evaluate a conjunctive query \[
		q \leftarrow X(x_1, x_2), Y(y_1, y_2), \varphi(x_1, x_2, y_1, y_2)
	\] where \[
		\varphi(x_1, x_2, y_1, y_2) := 4x_1 - 6x_2 + y_1 + 5y_2 < 8 \wedge 2x_1 - 3x_2 = 1
	\] is an LRA formula expressing the desired tuples (i.e., $ x_1, x_2, y_1, y_2 $ are real-valued free variables). Set $ n := \max(s, r) $, where $ s $ is the size of the database and $ r $ is the size of the desired query output. Naively evaluating this query would require computing the join of $ X $ and $ Y $, and selecting rows satisfying $ \varphi $. Clearly, this would take $ \Theta(n^2) $ time in the worst case. However, it turns out that every $ x_i $ variable is independent from every $ y_j $ variable of $ \varphi $ in the sense that $ \varphi $ can be equivalently rewritten as a conjunction \[
		\psi := \underbrace{2x_1 - 3x_2 = 1}_{\text{Uses only }x_1, x_2} \wedge \underbrace{y_1 + 5y_2 < 6}_{\text{Uses only }y_1, y_2}
	\] of formulas never using $ x_i $ and $ y_j $ at the same time. Using this knowledge, we can evaluate the same query $ q $ in $ O(n) $ time (instead of $ \Theta(n^2) $) as follows: \begin{itemize}
		\item Select tuples in $ X $ that satisfy $ 2x_1 - 3x_2 = 1 $ and let $ A \subseteq X $ denote the result;
		\item Select tuples in $ Y $ that satisfy $ y_1 + 5y_2 < 6 $ and let $ B \subseteq Y $ denote the result;
		\item Return $ A \times B $;
	\end{itemize} Moreover, the first two steps are fully independent of each other and thus can be executed in parallel. As regards the method of measuring query evaluation performance, it is worth noting that in the present example we have slightly departed from the standard notion of \textit{data complexity} used in the literature to assess the theoretical speed of query evaluation (see \cite[Definition 2.4.9]{CDB-book} and \cite{vardi:1982, chandra:1980}) in the aspect that we treat as taking linear time the operations whose time complexity is linear in the size of the desired output. This has a very simple explanation: the result of a query may be of size quadratic in the database size, so it would be otherwise impossible to develop a subquadratic algorithm for query evaluation because merely outputting the result already requires quadratic time.

	More generally, suppose in some database we have relations $ X(x_1, \dots, x_m) \subseteq \R^m $ and $ Y(y_1, \dots, y_k) \subseteq \R^k $, and we want to evaluate a conjunctive query \[
		q \leftarrow X(x_1, \dots, x_m), Y(y_1, \dots, y_k), \varphi(x_1, \dots, x_m, y_1, \dots, y_k)
	\] where $ \varphi $ is the query constraint. Suppose the variables $ x_1, \dots, x_m $ are independent from $ y_1, \dots, y_k $ in the sense that $ \varphi $ has an equivalent representation \[
		\varphi \equiv \bigvee_{i=1}^t \psi_i(x_1, \dots, x_m) \wedge \gamma_i(y_1, \dots, y_k)
	\] where $ \psi_i $ and $ \gamma_i $ are formulas. Naively evaluating this query would mean joining $ X $ and $ Y $, which takes $ \Theta(n^2) $ time. However, the same query can be evaluated in $ O(n) $ time by independently selecting tuples from $ X $ that satisfy some $ \psi_i $ formula, tuples from $ Y $ satisfying some $ \gamma_j $ formula, and outputting the union of all pairs of tuples satisfying $ \psi_i \wedge \gamma_i $ for some $ i $. 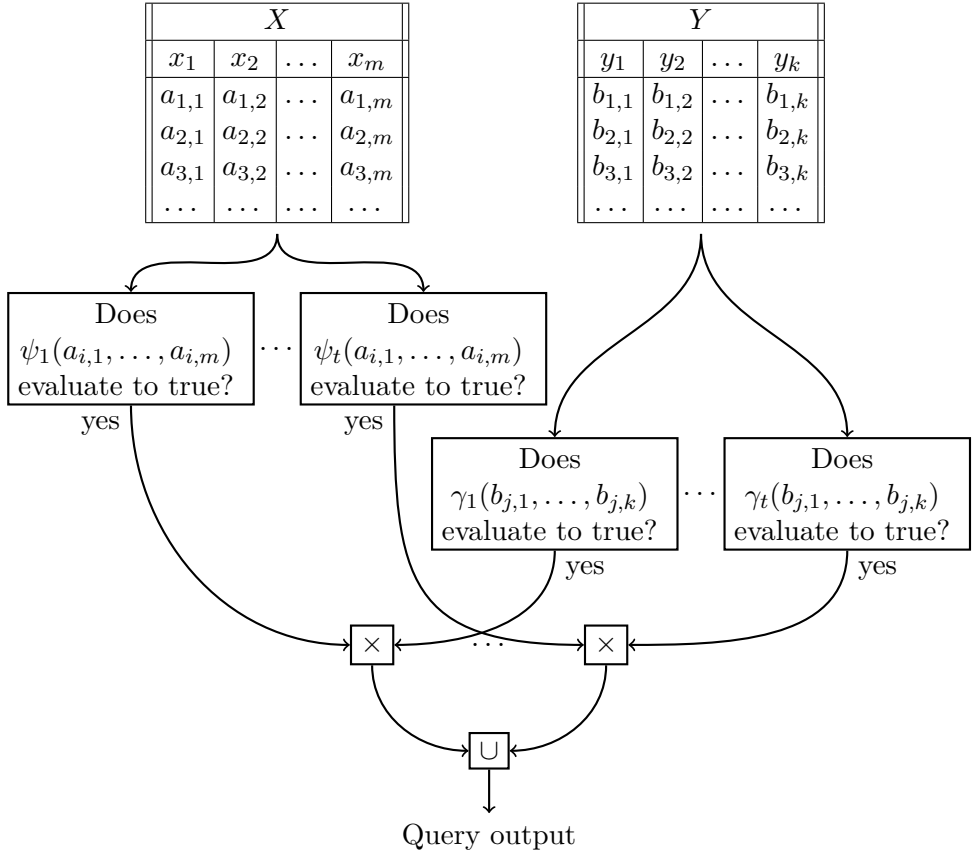
\begin{figure}[t]
		\centering
		\begin{tikzpicture}[thick]
	% relations in the database
	\node (xrelation) { % to test correct bounding box: \node[block]
		\begin{tabular}{||c|c|c|c||}
			\hline
			\multicolumn{4}{||c||}{$ X $} \\
			\hline
			$ x_1 $ & $ x_2 $ & $ \dots $ & $ x_m $ \\
			\hline
			$ a_{1,1} $ & $ a_{1,2} $ & $ \dots $ & $ a_{1,m} $ \\
			$ a_{2,1} $ & $ a_{2,2} $ & $ \dots $ & $ a_{2,m} $ \\
			$ a_{3,1} $ & $ a_{3,2} $ & $ \dots $ & $ a_{3,m} $ \\
			$ \dots $ & $ \dots $ & $ \dots $ & $ \dots $ \\
			\hline
		\end{tabular}
	};
	
	\node[right=5em of xrelation] (yrelation) {
		\begin{tabular}{||c|c|c|c||}
			\hline
			\multicolumn{4}{||c||}{$ Y $} \\
			\hline
			$ y_1 $ & $ y_2 $ & $ \dots $ & $ y_k $ \\
			\hline
			$ b_{1,1} $ & $ b_{1,2} $ & $ \dots $ & $ b_{1,k} $ \\
			$ b_{2,1} $ & $ b_{2,2} $ & $ \dots $ & $ b_{2,k} $ \\
			$ b_{3,1} $ & $ b_{3,2} $ & $ \dots $ & $ b_{3,k} $ \\
			$ \dots $ & $ \dots $ & $ \dots $ & $ \dots $ \\
			\hline
		\end{tabular}
	};

	% filters corresponding to X relation
	\node[block, below=2em of xrelation, xshift=-5em] (xrelationfilter1) {
		\begin{varwidth}{9em}
			\centering
			Does $ \psi_1(a_{i,1}, \dots, a_{i,m}) $ evaluate to true?
		\end{varwidth}
	};

	\node[block, below=2em of xrelation, xshift=5em] (xrelationfiltert) {
		\begin{varwidth}{9em}
			\centering
			Does $ \psi_t(a_{i,1}, \dots, a_{i,m}) $ evaluate to true?
		\end{varwidth}
	};

	\node[xshift=0.05em] at ($ (xrelationfilter1.east)!0.5!(xrelationfiltert.west) $) {$ \dots $};

	% filters corresponding to Y relation
	\node[block, below=7em of yrelation, xshift=-5em] (yrelationfilter1) {
		\begin{varwidth}{9em}
			\centering
			Does $ \gamma_1(b_{j,1}, \dots, b_{j,k}) $ evaluate to true?
		\end{varwidth}
	};

	\node[block, below=7em of yrelation, xshift=5em] (yrelationfiltert) {
		\begin{varwidth}{9em}
			\centering
			Does $ \gamma_t(b_{j,1}, \dots, b_{j,k}) $ evaluate to true?
		\end{varwidth}
	};

	\node[xshift=0.05em] at ($ (yrelationfilter1.east)!0.5!(yrelationfiltert.west) $) {$ \dots $};

	% combinators

	\node[rectangle, draw, align=center, anchor=north, xshift=-4em, yshift=-7em] (combinator1) at ($ (xrelationfiltert)!0.5!(yrelationfilter1) $) {$ \times $};

	\node[rectangle, draw, align=center, anchor=north, xshift=4em, yshift=-7em] (combinatort) at ($ (xrelationfiltert)!0.5!(yrelationfilter1) $) {$ \times $};
	
	\node[xshift=0.05em] at ($ (combinator1)!0.5!(combinatort) $) {$ \dots $};

	% union
	\node[rectangle, draw, align=center, anchor=north, yshift=-3em] (union) at ($ (combinator1)!0.5!(combinatort) $) {$ \cup $};
	
	% query output
	\node[below=1.5em of union] (queryresult) {Query output};
	
	% edges from x relation
	\draw[->] (xrelation) to[out=-90, in=90] (xrelationfilter1);
	\draw[->] (xrelation) to[out=-90, in=90] (xrelationfiltert);
	
	% edges from y relation
	\draw[->] (yrelation) to[out=-90, in=90] (yrelationfilter1);
	\draw[->] (yrelation) to[out=-90, in=90] (yrelationfiltert);
	
	% edges to first combinator
	\draw[->] (xrelationfilter1) to[out=-90, in=180] (combinator1);
	\draw[->] (yrelationfilter1) to[out=-90, in=0] (combinator1);
	
	% edges to second combinator
	\draw[->] ($ (xrelationfiltert.south) - (1em, 0) $) to[out=-90, in=180, distance=6em] (combinatort);
	\draw[->] (yrelationfiltert) to[out=-90, in=0] (combinatort);
	
	% labels of edges to combinators
	\node[anchor=north east] at (xrelationfilter1.south) {yes};
	\node[anchor=north east] at ($ (xrelationfiltert.south) - (1em, 0) $) {yes};
	\node[anchor=north west] at (yrelationfilter1.south) {yes};
	\node[anchor=north west] at (yrelationfiltert.south) {yes};
	
	% edges to union
	\draw[->] (combinator1) to[out=-90, in=180] (union);
	\draw[->] (combinatort) to[out=-90, in=0] (union);
	
	%\draw ($ (xrelationfilter1.south)!0.5!(combinator1) $) node [anchor=north, yshift=5] {yes};
	%\draw ($ (yrelationfilter1.south)!0.5!(combinator1) $) node [anchor=north, yshift=5] {yes};
	
	\draw[->] (union) -- (queryresult);% node[midway, right] {yes};
\end{tikzpicture}
		\caption{Data flow of efficient query evaluation. Tuples of the form $ (a_{i,1}, \dots, a_{i,m}) \in \R^m $ and $ (b_{j,1}, \dots, b_{j,k}) \in \R^k $ originate in relations $ X $ and $ Y $ depicted at the top of the diagram and travel along edges. Nodes containing a question filter tuples accordingly. The ``$ \times $'' node produces, for every pair of incoming tuples $ (a_{i,1}, \dots, a_{i,m}) $ and $ (b_{j,1}, \dots, b_{j,k}) $, the tuple $ (a_{i,1}, \dots, a_{i,m}, b_{j,1}, \dots, b_{j,k}) $. The ``$ \cup $'' node simply forwards all incoming tuples.}
		\label{fig:independence_eff_query_eval}
	\end{figure} This computation is visualized in Figure~\ref{fig:independence_eff_query_eval}. Moreover, this evaluation technique can be parallelized because the decision of whether or not we select some tuple from $ X $ or $ Y $ depends only on that tuple.
\end{zbexample}

Although Example~\ref{example:independence_eff_query_eval} deals with a simple relational database, it can be easily generalized to constraint databases. In that setting, we would be working with constraints rather than tuples, and selection in $ X $ (resp. $ Y $) would correspond to adding $ 2x_1 - 3x_2 = 1 $ (resp. $ y_1 + 5y_2 < 6 $) as a conjunct to every formula in $ A $ (resp. $ B $) and removing unsatisfiable conjuncts.

% We start with a simple example. Consider the formula \[
% 	\varphi(x, y, z) := 2x - y + 3z < 1 \wedge 4x - 2y = 10
% \] with three real-valued free variables $ x $, $ y $ and $ z $. Suppose we are interested in some sort of variable independence question, e.g., are $ x $ and $ z $ independent in $ \varphi $? For this particular $ \varphi $ it is trivial that $ 4x - 2y = 10 $ is equivalent to $ 2x - y = 5 $ (divide both sides by two), so replacing $ 2x - y $ by $ 5 $ in the left conjunct of $ \varphi $ yields the following formula \[
% 	\psi := 5 + 3z < 1 \wedge 2x - y =  5
% \] which is equivalent to the original formula $ \varphi $. Note that, unlike $ \varphi $, no predicate appearing in $ \psi $ establishes a connection between $ z $ and either $ x $ or $ y $ in the sense that $ 5 + 3z < 1 $ uses only $ z $, whereas $ 2x - y = 5 $ contains only variables $ x $ and $ y $ but does not use $ z $. Intuitively, via this equivalence-preserving transformation, we have managed to tear apart the connections between $ x $ and $ z $ established by predicates in $ \varphi $, meaning that it makes sense to say that $ x $ and $ z $ are independent. However, such a transformation is not always possible; for example, $ x = y $ cannot be rewritten as a formula whose predicates never use both $ x $ and $ y $ at the same time, meaning that, intuitively, $ x $ and $ y $ are not independent.

As is apparent from Example~\ref{example:independence_eff_query_eval}, query evaluation optimization reduces to checking whether the constraints involved in the query can be rewritten as a disjunction of conjunctions never using certain pairs of variables at the same time, and computing such a representation if yes. More generally, the property of being able to rewrite $ \varphi(x_1, \dots, x_n) $ as a Boolean combination $ \psi $ of predicates not establishing unwanted connections between variables is known as \textit{variable decomposability} with respect to a partition $ \Pi $ of $ \{x_1, \dots, x_n\} $, where $ \Pi $ specifies the allowed and forbidden connections between variables \cite{libkin:2003, libkin:2000}. More precisely, a formula $ \varphi $ from some quantifier-free first-order theory generated by a structure $ \Mstruct $ is said to be \textit{$ \Pi $-decomposable} \cite{hague:2020} if there exists a Boolean combination $ \psi $ of predicates each having its free variables within some block of $ \Pi $, such that $ \varphi $ is equivalent to $ \psi $ in the theory of $ \Mstruct $. Such a Boolean combination is called a \textit{$ \Pi $-decomposition}. Two variables $ x_i $ and $ x_j $ are said to be \textit{independent} \cite{libkin:2003} if $ \varphi $ is $ \Pi $-decomposable for some partition $ \Pi $ containing $ x_i $ and $ x_j $ in distinct blocks. In the context of Example~\ref{example:independence_eff_query_eval}, $ \Mstruct $ is the $ (\R, +, -, 0, 1, =, <) $ structure whose theory is well-known as Linear Real Arithmetic (LRA); the transformation of $ \varphi $ into $ \psi $ shows that $ \varphi $ is $ \Pi $-decomposable for $ \Pi := \{\{x_1, x_2\}, \{y_1, y_2\}\} $. In particular, it follows that $ x_i $ and $ y_j $ are independent for all $ i, j \in \{1, 2\} $.

One of the most important special cases of variable decomposability people have extensively studied over the last decade is the notion of \textit{monadic decomposability} \cite{veanes:2017}, which asks whether the given formula can be rewritten as a Boolean combination of monadic predicates, i.e., predicates containing at most one free variable. Such a Boolean combination is known as a \textit{monadic decomposition}. In other words, a monadic decomposition of a formula $ \varphi(x_1, \dots, x_n) $ is simply a $ \Pi $-decomposition, where $ \Pi := \{\{x_1\}, \dots, \{x_n\}\} $ is the partition consisting of singleton sets. Thus, for such a partition $ \Pi $, the notion of monadic decomposability coincides with that of $ \Pi $-decomposability.

\subsection{History}

Variable decomposability was first systematically studied by Libkin about 20 years ago \cite{libkin:2000, libkin:2003}. His primary motivation was to optimize constraint database query processing and, in particular, quantifier elimination. His approach involved developing an equivalence relation capturing the notion of variable decomposability, which he used to derive a sufficient condition for the decidability of variable decomposition and independence problems \cite[pp. 438, 439]{libkin:2003}. In the same paper, he extended his decidability result to yield algorithms determining variable decomposability over Presburger arithmetic, Linear Real Arithmetic and the theory of Real-Closed Fields. The monadic decomposability problem was originally studied by Veanes et al. \cite{veanes:2017} in the context of Satisfiability Modulo Theories (SMT). They derived an efficient semi-algorithm that produces monadic decompositions regardless of the base theory under consideration, but it does not terminate if the input formula is not monadically decomposable. The authors also provided decidability results for several major SMT theories, including Presburger arithmetic and EUF, i.e., the theory of equality and uninterpreted functions. Their work also includes a comprehensive list of applications of monadic/variable decompositions, including symbolic transducers \cite{dnthony:2017:symtransducers} and string analysis.

A series of other complexity results for the variable decomposability problem are available. For example, Hague et al. \cite{hague:2020} proved that over the quantifier-free fragment of Presburger arithmetic, deciding monadic/variable decomposability is $ \coNP $-complete. Barcel{\'{o}} et al. \cite{barcelo:2019} showed that it is $ \NL $-complete (resp. $ \PSPACE $-complete) to decide the monadic decomposability over the quantifier-free theory of automatic structures when the formulas are represented as deterministic (resp. nondeterministic) synchronized multi-tape automata. A good example of a similar research direction is the recent work by Markgraf et al. \cite{markgraf:2021}, where the authors develop a polynomial-time algorithm for learning integer hypercubes and apply it to generate monadic decompositions of Presburger arithmetic formulas.

\subsection{Motivation}

It is hard to construct procedures capable of deciding variable decomposability of $ \varphi $ with respect to a partition $ \Pi $ because in case $ \varphi $ is not $ \Pi $-decomposable, any correct algorithm must have, after finitely many steps, ruled out the existence of any $ \Pi $-decomposition, whereas there are infinitely many Boolean combinations of predicates syntactically not establishing forbidden connections between variables. In other words, it is hard to find a finite witness for the fact that no formula from an infinite family thereof is equivalent to the given formula $ \varphi $. Moreover, merely checking the correctness of a given $ \Pi $-decomposition involves deciding the equivalence of two arbitrary formulas, which is computationally already as hard as deciding the validity of a formula in the first-order theory of $ \Mstruct $, which is a $ \coNP $-hard problem for most theories. Another major difficulty in deciding variable decomposability lies in the apparent absence of a normal form which could allow us to solve the problem by separately analyzing subformulas of some simpler form. For example, over linear real arithmetic, the formulas $ x = y $, $ x < -y $ and $ y+2 > -x $ are all not monadically decomposable, but their disjunction \[
	\varphi := x = y \vee x < -y \vee y+2 > -x
\] is. Therefore, we cannot simply analyze DNF\footnote{Disjunctive normal form.} terms of $ \varphi $ separately. Neither can a separate analysis of CNF\footnote{Conjunctive normal form.} clauses yield any decisive insight into whether or not the given formula is monadically or $ \Pi $-decomposable.

% \cite[pp. 8 -- 10]{hague:2020}

There are already a number of known techniques which cope with all these difficulties and successfully yield decision procedures for monadic or even variable decomposability. However, all known algorithms work either only for ``discrete'' fragments of first-order logic, or they rely on very general techniques which yield decision procedures whose complexity is either high in theory or infeasible in practice (or both). For example, Libkin's double-exponential time\footnote{In the paper, it is shown that the algorithm runs in polynomial time. However, this is because the number of free variables is fixed. Adapting his algorithm to work for a non-fixed number of variables entails a double-exponential blowup.} algorithm for variable decomposability over the theory of Real-Closed Fields \cite[Theorem 12]{libkin:2003} heavily relies on quantifier elimination complexity bounds, which unfortunately makes the algorithm only of theoretical interest. The proof technique of Hague et al. \cite{hague:2020} exploits the fact that after a value assigned to some free variable passes a certain bound, swapping its name with that of some other free variable cannot change the truth value of the given Presburger arithmetic formula if it is monadically decomposable. Their technique is beautiful, but unfortunately, it works only thanks to the discrete nature of integers because it heavily relies on the possibility of simply enumerating all possible models of the formula consisting of values not exceeding the bound. This enumeration is impossible if we have, e.g., reals instead of integers, like in linear real arithmetic. All other comparable algorithms (see, e.g., \cite{veanes:2017, markgraf:2021, barcelo:2019, bergstrasser:2022:ramsey, beimel:1998}) also, unfortunately, at their core, extensively rely on the discreteness of $ \Mstruct $'s universe; they cannot be adapted easily to tackle the same problems but for non-discrete universes.

For these reasons, the primary motivation of this \zbdocname/ is to develop a technique that does not fundamentally rely on discreteness assumptions but nevertheless yields algorithms for monadic/variable decomposability that are both optimal in theory and efficient in practice. Another motivation, of course, is the wide range of possible applications monadic/variable decompositions and consequently our results have, including constraint database query optimization (see above), string solving \cite{hague:2020}, symbolic transducers \cite{veanes:2017, dnthony:2017:symtransducers} and quantifier elimination \cite{markgraf:2021, hague:2020}. In particular, it is worth noting that most applications of monadic/variable decomposability in \cite{hague:2020} transfer to our setting of quantifier-free linear real arithmetic because they mostly rely only on the construction of $ \Pi $-decompositions, whereas $ \Pi $-decomposability of $ \varphi $ over quantifier-free linear real arithmetic implies $ \Pi $-decomposability of the same formula $ \varphi $ over (quantifier-free) Presburger arithmetic.

% In a similar setting of Presburger arithmetic, Markgraf et al. \cite{markgraf:2021} applied techniques from computational learning theory to in particular construct a semi-algorithm capable of outputting a monadic decomposition if one exists.

% Libkin noted that assuming the theory of $ \Mstruct $ is decidable, it is trivial to construct a semi-decision procedure for determining variable decomposability by simply enumerating all possible Boolean combinations syntactically capable of being correct decompositions and checking equivalence in order to determine whether the alleged decomposition is indeed equivalent to the given formula \cite[p. 436]{libkin:2003}.

\subsection{Contributions}

The main contribution of this \zbdocname/ is a novel \textit{model-flooding} technique that we primarily use to establish precise complexity bounds for monadic/variable decomposability. We show that the same technique also yields optimal algorithms for computing variable decompositions, which is a central problem in constraint database applications (see, e.g., \cite{libkin:2003, CGKT03}).

% a novel exponential-time algorithm for constructing $ \Pi $-decompositions

We present this novel model-flooding technique in the setting of linear real arithmetic -- one of the most popular and best-investigated fragments of first-order logic \cite{monniaux:2008, dutertre:2006:linarithsolver, dutertre:2006:simplexdpllt, bozzano:2005:incrementalsatlra}. We establish the precise complexity of deciding variable decomposability and an $ \NP^\NP $ upper bound for the variable independence problem:

\begin{theorem}
	\label{thm:contrib:vardec_conp_complete_ind_sigma2}
	Given a quantifier-free linear real arithmetic formula $ \varphi $ and a partition $ \Pi $, it is $ \coNP $-complete to decide the $ \Pi $-decomposability of $ \varphi $. In particular, it follows that, over the same theory, deciding the independence of two given variables is in $ \NP^\NP $.
\end{theorem}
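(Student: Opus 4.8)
For \emph{$ \coNP $-hardness} I would give a direct reduction from the unsatisfiability problem for quantifier-free linear real arithmetic, which is $ \coNP $-complete since satisfiability is $ \NP $-complete (guess a sign pattern for the atoms, then test feasibility of the resulting linear system by linear programming). Given a QFLRA formula $ \psi(x_1,\dots,x_n) $, introduce two fresh variables $ u,v $, put $ \varphi := \psi \wedge (u = v) $, and let $ \Pi $ be any partition separating $ u $ and $ v $ -- e.g.\ the all-singletons partition, so that the same reduction already yields $ \coNP $-hardness of \emph{monadic} decomposability. If $ \psi $ is unsatisfiable then $ \varphi \equiv \bot $, which is trivially $ \Pi $-decomposable. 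If $ \psi $ is satisfiable it has a rational solution $ \bar a $ (a nonempty QFLRA-definable set contains a rational point); substituting $ x_i := a_i $ into any putative $ \Pi $-decomposition of $ \varphi $ collapses every factor over an $ x $-block to a Boolean constant, leaving a representation of the diagonal $ \{(u,v):u=v\} $ as a finite union of axis-parallel rectangles, which is impossible. Hence $ \varphi $ is $ \Pi $-decomposable iff $ \psi $ is unsatisfiable.

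For the nontrivial direction, \emph{membership in $ \coNP $}, the plan is to use the model-flooding technique to establish a characterization of $ \Pi $-decomposability that admits short certificates of \emph{failure}. The target shape is: $ \varphi $ is $ \Pi $-decomposable iff $ \varphi $ coincides with a canonical $ \Pi $-decomposable over-approximation $ \widehat{\varphi} $ obtained by, informally, ``flooding'' each point with the largest block-respecting box consistent with $ \varphi $, where the candidate boxes are governed by a polynomially bounded family of threshold hyperplanes derived from the atoms of $ \varphi $ (pairwise combinations of the $ \varphi $-atoms restricted to single blocks of $ \Pi $). Non-$ \Pi $-decomposability would then be witnessed by a single point in the symmetric difference of $ \ModelsOf(\varphi) $ and $ \ModelsOf(\widehat{\varphi}) $ together with, for each relevant box through that point, a point of $ \ModelsOf(\varphi) $ certifying that the box may not be collapsed -- i.e.\ by a bounded configuration of real points realizing a prescribed Boolean pattern of membership in $ \ModelsOf(\varphi) $.

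Turning this into a $ \coNP $ algorithm requires two further ingredients, and the first of them is the step I expect to be the main obstacle. (i) \textbf{A combinatorial locality bound}: one must show that if an obstruction exists at all, then one localized to polynomially many cells around a single witness point already exists, so that the certificate involves only polynomially many points and boxes; this is exactly where the piecewise-linear geometry of LRA must be exploited, and proving that ``local'' flooding suffices is the delicate part of the argument. (ii) \textbf{A bit-size bound}: the property ``$ p_1,\dots,p_r $ form an obstruction'' is itself expressible by a QFLRA formula of size polynomial in $ \abs{\varphi} $, so by the standard polynomial bound on the bit-length of solutions of QFLRA systems the points may be taken of polynomial bit-size. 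The co-nondeterministic machine then rejects iff such a polynomial-size obstruction exists, guessing the points and verifying the Boolean membership pattern by evaluating $ \varphi $ at each of them in polynomial time. Together with the hardness reduction this gives $ \coNP $-completeness.

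Finally, the \emph{independence} statement follows by a short oracle argument. Since $ \Pi $-decomposability is preserved under coarsening of $ \Pi $ (a Boolean combination of block-local predicates remains block-local when blocks are merged), two variables $ x_i,x_j $ are independent iff $ \varphi $ is $ \Pi $-decomposable for \emph{some} partition $ \Pi $ -- equivalently, for some two-block partition -- placing $ x_i $ and $ x_j $ in distinct blocks. An $ \NP^\NP $ procedure therefore existentially guesses such a partition $ \Pi $ (polynomially many bits) and makes one query to a $ \coNP $ oracle for ``$ \varphi $ is $ \Pi $-decomposable'' (available by the first part of the theorem), accepting iff the oracle accepts; this places variable independence over LRA in $ \NP^{\coNP} = \NP^\NP $.
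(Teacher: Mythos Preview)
Your hardness reduction and the $\NP^\NP$ corollary for independence are correct and match the paper's arguments (Theorem~\ref{thm:mondec_conp_hard}, Theorem~\ref{thm:vardec_conp_hard}, Corollary~\ref{cor:variable_independence_sigma2}) essentially verbatim.

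The genuine gap is in the $\coNP$ membership argument, and you identify it yourself: your ``combinatorial locality bound'' is asserted, not proved. The difficulty is real. Your plan is to build a canonical over-approximation $\widehat{\varphi}$ from ``threshold hyperplanes'' derived by restricting atoms of $\varphi$ to single blocks, and then certify failure by a bounded configuration of points. But restricting the original atoms to blocks is not enough: the paper shows that a correct $\Pi$-decomposition in general requires \emph{new} predicates obtained from the linear-algebraic structure of $\Gamma^=$, namely generators of spaces like $\ImageOf(\pi_Z\circ\lc_A)^\bot$ and $\ImageOf(\pi_Z\circ\lc_B)^\bot\cap\ImageOf(\pi_Z\circ\lc_A)$ (Lines~\ref{alg:cover:line:first_loop:theta_upd}, \ref{alg:cover:line:second_loop:upsilon_upd} of Algorithm~\ref{alg:cover}). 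Without these synthesized predicates your $\widehat{\varphi}$ can be strictly larger than the true minimal decomposable hull, and then a point in $\widehat{\varphi}\setminus\varphi$ need not witness non-decomposability. You also have not defined what ``$p_1,\dots,p_r$ form an obstruction'' means precisely enough to claim it is expressible by a polynomial-size QFLRA formula.

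The paper's route is structurally different. It does not compare $\varphi$ to a canonical approximation. Instead it introduces an algebraic metric $\LinDep_{\pi_Z}$ on predicate sets (Definition~\ref{def:lindep}) and proves the Overspilling Theorem~\ref{thm:overspilling}: if two $\Pi$-complex predicate sets are $p$-next to each other and have identical $Z$-dependencies for all $Z\in\Pi$, yet one entails $\varphi$ and the other $\neg\varphi$, then $\varphi$ is not $\Pi$-decomposable. This yields a sound witness format (a ``$\Lambda$-proof'', four predicate sets $\Lambda_0,\Lambda_1,\Lambda_0',\Lambda_1'$) whose verification is polynomial-time linear algebra plus LP feasibility. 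Completeness is extracted from the analysis of the covering algorithm (Theorem~\ref{thm:lambda_proof_soundness_completeness}). The final step---and the one doing the work your locality bound would have to do---is a compression argument (Section~\ref{sec:vardec:proof_compression}): the exponentially many strict inequalities that can appear in $\Lambda$ are replaced by a single open cube $\OpenCube{v}{\varepsilon}$ of polynomial bit-size lying inside $\ModelsOf(\Lambda_0'\setminus\Gamma_0)$, which preserves all the required properties of the witness. That compression, not a cell-decomposition counting argument, is what brings the certificate down to polynomial size.
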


We also obtain a similar complexity result for monadic decomposability:

\begin{theorem}
	\label{thm:contrib:mondec_conp_complete}
	Given a quantifier-free linear real arithmetic formula $ \varphi $, it is $ \coNP $-complete to decide the monadic decomposability of $ \varphi $.
\end{theorem}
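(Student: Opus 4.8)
The plan is to split the statement into its two halves and handle them very differently, leaning on Theorem~\ref{thm:contrib:vardec_conp_complete_ind_sigma2} for membership and supplying a fresh reduction for hardness. For the $\coNP$ upper bound I would simply observe that monadic decomposability of $\varphi(x_1,\dots,x_n)$ is \emph{literally} the special case of $\Pi$-decomposability in which $\Pi=\{\{x_1\},\dots,\{x_n\}\}$ is the partition into singletons. Since this $\Pi$ is computable from $\varphi$ in linear time, $\coNP$ membership is immediate from Theorem~\ref{thm:contrib:vardec_conp_complete_ind_sigma2}, and no further argument is needed. So the substance of the proof is the $\coNP$-hardness.

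For hardness I would reduce from unsatisfiability of quantifier-free linear real arithmetic formulas, which is $\coNP$-complete because satisfiability of such formulas is the classical $\NP$-complete problem. Given an instance $\chi(x_1,\dots,x_n)$ of this unsatisfiability problem, I would introduce two fresh variables $y,z$ and output
\[
  \varphi_\chi \;:=\; \chi(x_1,\dots,x_n)\,\wedge\,(y=z).
\]
The claim to verify is that $\chi$ is unsatisfiable if and only if $\varphi_\chi$ is monadically decomposable. The forward direction is trivial: if $\chi$ is unsatisfiable then $\varphi_\chi\equiv\bot$, which is (trivially) a Boolean combination of monadic predicates. For the converse, suppose $\chi$ has a model $\bar a\in\R^n$. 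I would invoke the fact that substituting constants for free variables preserves monadic decomposability --- a monadic decomposition of $\varphi_\chi$ specialises to one of $\varphi_\chi(\bar a,y,z)$, with each predicate over a substituted variable collapsing to a truth constant and every other predicate remaining monadic in $\{y,z\}$. But $\varphi_\chi(\bar a,y,z)\equiv(y=z)$, and the single atom $y=z$ is not monadically decomposable over LRA (this is the non-decomposability of $x=y$ already noted in the introduction). Hence $\varphi_\chi$ cannot be monadically decomposable when $\chi$ is satisfiable. Since $\chi\mapsto\varphi_\chi$ is obviously polynomial-time computable, this is a valid many-one reduction and $\coNP$-hardness follows.

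The step I expect to carry the most weight is the backward direction, which rests on two ingredients I would want to state cleanly (or cite from earlier in the development): first, that specialising some free variables to real constants never destroys monadic decomposability; and second, that the atom $y=z$ is genuinely not monadically decomposable over $\R$ --- the standard Ehrenfeucht--Fra\"{\i}ss\'{e}/pumping argument that two points on the diagonal agreeing on all monadic predicates would force an off-diagonal point into the set. Everything else --- the polynomial-time bound, the $\coNP$-completeness of quantifier-free LRA unsatisfiability, and the membership via Theorem~\ref{thm:contrib:vardec_conp_complete_ind_sigma2} --- is routine, so once these two facts are in place the proof is essentially assembled. As a side benefit, since this reduction already produces only singleton partitions, it simultaneously reproves the hardness half of Theorem~\ref{thm:contrib:vardec_conp_complete_ind_sigma2}.
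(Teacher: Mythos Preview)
Your proposal is correct and follows essentially the same approach as the paper. Both obtain membership by specialising the general $\Pi$-decomposability result to the singleton partition, and both obtain hardness by conjoining an unsatisfiability instance with a non-decomposable equality $z_1=z_2$; the only cosmetic differences are that the paper reduces from propositional UNSAT (translating each propositional variable $x_i$ to the LRA atom $y_i\neq 0$) rather than from QFLRA UNSAT directly, and that the paper proves the backward direction via an explicit pigeonhole argument on models of a DNF decomposition rather than factoring through the substitution-preservation lemma plus the non-decomposability of $y=z$.
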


Theorems \ref{thm:contrib:vardec_conp_complete_ind_sigma2} and \ref{thm:contrib:mondec_conp_complete} answer an open question of \cite{veanes:2017}, which was subsequently reiterated in \cite{hague:2020}. In particular, these results significantly improve the best known double-exponential time algorithms for monadic/variable decomposability and independence problems \cite{libkin:2000, libkin:2003}.

We further derive a deterministic exponential-time algorithm for computing variable decompositions whenever they exist:

\begin{theorem}
	\label{thm:contrib:computing_dec_exponential_time}
	There exists an exponential-time algorithm, which, given a quantifier-free linear real arithmetic formula $ \varphi $ and a binary partition $ \Pi $, either computes a $ \Pi $-decomposition of $ \varphi $ or determines that $ \varphi $ is not $ \Pi $-decomposable.
\end{theorem}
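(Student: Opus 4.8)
The plan is to leverage the $\coNP$-completeness machinery of Theorem \ref{thm:contrib:vardec_conp_complete_ind_sigma2}, but in a constructive way. The $\coNP$ upper bound for $\Pi$-decomposability should come from a characterization of the form: $\varphi$ is $\Pi$-decomposable iff a certain "witness of non-decomposability" (presumably a pair of models that agree on the projection onto one block, disagree on the other, yet differ in truth value — the negation of some closure-under-"model-flooding" condition) does not exist, and such a witness, if it exists, can be verified in polynomial time. For a binary partition $\Pi = \{A, B\}$, I would first establish that when $\varphi$ \emph{is} $\Pi$-decomposable, there is a canonical decomposition built from the arrangement of hyperplanes appearing in $\varphi$: projecting onto the $A$-coordinates and onto the $B$-coordinates yields two hyperplane arrangements, each inducing finitely many cells (full-dimensional regions, together with their lower-dimensional faces), and the canonical decomposition is a disjunction $\bigvee_i \psi_i(x_A) \wedge \gamma_i(x_B)$ where $\psi_i, \gamma_i$ describe these cells. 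The number of such cells is exponential in the formula size (a polynomial number of hyperplanes in a space of dimension at most $n$ carves out $2^{O(n \log n)}$-many, or more crudely $\exp(\mathrm{poly})$-many, cells), and each cell is describable by a polynomial-size conjunction of LRA literals, so the whole candidate decomposition has exponential size.

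The key steps, in order: (1) Normalize $\varphi$ and extract the finite set $H$ of hyperplanes (from the linear atoms); pass to the refinement needed so that membership in a product cell $c_A \times c_B$ determines the truth value of $\varphi$ whenever $\varphi$ is $\Pi$-decomposable — this is exactly what the model-flooding characterization guarantees. (2) Enumerate the cells $c_A$ of the induced arrangement on the $A$-space and the cells $c_B$ on the $B$-space; this enumeration is doable in exponential time (e.g., iterate over all consistent sign vectors relative to $H$ restricted to each subspace, discarding infeasible ones by an LP feasibility check). (3) For each product cell $c_A \times c_B$, pick (via linear programming) a sample point and evaluate $\varphi$ there; collect the cells where $\varphi$ holds into the candidate decomposition $\psi := \bigvee_{(c_A, c_B):\, \varphi \text{ holds}} \chi_{c_A}(x_A) \wedge \chi_{c_B}(x_B)$, where $\chi_{c}$ is the defining conjunction of literals for cell $c$. (4) Verify that $\psi \equiv \varphi$ by checking unsatisfiability of $(\varphi \wedge \neg\psi) \vee (\neg\varphi \wedge \psi)$; since $\psi$ has exponential size this is an exponential-time LRA satisfiability check (which is in $\NP$, hence exponential time is ample). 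If the check passes, output $\psi$; otherwise, report that $\varphi$ is not $\Pi$-decomposable — correctness of this last step rests on the claim that $\psi$ is the \emph{coarsest common refinement}, so if any $\Pi$-decomposition exists, this particular $\psi$ is one.

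The main obstacle I anticipate is step (4)'s correctness claim, i.e., proving that the arrangement-cell construction is \emph{complete}: if $\varphi$ admits \emph{any} $\Pi$-decomposition $\bigvee_j \alpha_j(x_A) \wedge \beta_j(x_B)$, then it admits the specific canonical one built from the hyperplanes of $\varphi$ itself. This is where the model-flooding technique must be invoked: one argues that $\Pi$-decomposability forces the truth value of $\varphi$ to be invariant under replacing the $A$-part (resp. $B$-part) of a model by any other point lying in the same cell of the arrangement restricted to the $A$-space (resp. $B$-space) — intuitively, because within a cell the atoms of $\varphi$ that "mix" $A$ and $B$ cannot be pinned down by the non-mixing decomposition, so their sign must already be constant there, which one shows by flooding the cell with models. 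Handling the lower-dimensional faces of the arrangement (boundary cases where some atom holds with equality) carefully, and making sure the sample-point LPs respect strict versus non-strict inequalities, will require some care but no new ideas. The size and time bookkeeping — polynomially many hyperplanes, exponentially many sign vectors, each feasibility/evaluation query an LP or LRA-SAT instance — then gives the claimed single-exponential running time.
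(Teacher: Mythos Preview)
Your proposal has a genuine gap at step (1): the hyperplanes you need for the decomposition are \emph{not} obtainable from the atoms of $\varphi$ by projection onto the $A$- and $B$-subspaces. A mixing atom such as $x+y=2$ (with $x\in A$, $y\in B$) projects to all of $\Q$ on either side, so it contributes nothing to your cell structure; yet the correct decomposition may require brand-new $\Pi$-respecting hyperplanes that do not appear in $\varphi$ at all. The paper's own Example~\ref{example:reduction_ex} already breaks your scheme: for $\varphi := (x+y\neq 2)\vee(x-y\neq 0)$ with $\Pi=\{\{x\},\{y\}\}$, both atoms mix $x$ and $y$, so your projected arrangements are trivial (one cell each), your candidate $\psi$ is either $\top$ or $\bot$, and the equivalence check fails --- whereas $\varphi$ \emph{is} $\Pi$-decomposable as $x\neq 1\vee y\neq 1$. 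The predicates $x=1$ and $y=1$ had to be \emph{synthesized} from the intersection point of the two mixed hyperplanes; they are not projections of anything in $\varphi$.

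This is precisely the obstacle the paper's covering algorithm is designed to overcome. Given a disjunct $\Gamma$, it does not project atoms; instead it analyzes the $Z$-dependencies of $\Gamma$ algebraically (via $\LinDep_{\pi_Z}$ and the spaces $\ImageOf(\pi_Z\circ\lc_A)^\bot$), synthesizes $\Pi$-respecting equality predicates that enforce exactly those dependencies, and then, for each disjunct $\Omega$ of the resulting $\Theta$ that carries \emph{extra} $Z$-dependencies, computes a separating predicate from a nonzero vector in $\ImageOf(\pi_Z\circ\lc_B)^\bot\cap\ImageOf(\pi_Z\circ\lc_A)$ (Lines~\ref{alg:cover:line:second_loop:witness_vec}--\ref{alg:cover:line:second_loop:upsilon_upd}). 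These separating predicates are the ``new'' hyperplanes your construction is missing. The exponential bound then comes not from counting sign vectors of the original atoms, but from bounding the recursion depth by $n$ and bounding the number of satisfiable regions cut out by the (possibly exponentially many) synthesized inequalities via the hyperplane-arrangement bound of Lemma~\ref{lemma:sat_reg_count_upper_bound}. Your instinct that arrangement cell counts control the size is right, but it applies to the \emph{synthesized} predicates, not to projections of the input atoms.
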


The algorithm of Theorem~\ref{thm:contrib:computing_dec_exponential_time} is not only exponentially better compared to the best known algorithm for the problem \cite{libkin:2003} but also optimal in the setting of deterministic algorithms because the underlying decision problem is $ \coNP $-hard by Theorem~\ref{thm:contrib:vardec_conp_complete_ind_sigma2}.

We further provide an implementation of the algorithm of Theorem~\ref{thm:contrib:computing_dec_exponential_time}. Our experiments have shown that the algorithm is not just of theoretical interest, but it is also practical and outperforms existing comparable algorithms in both time and size of produced decompositions, at least for classes of formulas we considered.

In constraint database applications, constraints appearing in queries often do not admit perfect variable independence in the sense that some constraint $ \varphi $ may not be $ \Pi $-decomposable (for an appropriate partition $ \Pi $). Nevertheless, there still may be some ``partial'' variable independence in $ \varphi $. One of the major advantages of the model flooding technique we introduce is its ability to yield algorithms that efficiently construct best-possible approximations of variable decompositions for any formula. As we demonstrate in Example~\ref{example:partial_variable_independence_application}, such approximations of decompositions can be used to speed up query evaluation even if the constraints involved are not $ \Pi $-decomposable.

\begin{zbexample}
	\label{example:partial_variable_independence_application}
	Consider a database containing information about student apartments in Kaiserslautern and Berlin. More precisely, let the database have two relations -- $ \operatorname{KL}(\operatorname{Address}, s_{\operatorname{KL}}) $ and $ \operatorname{Berlin}(\operatorname{Address}, s_{\operatorname{Berlin}}) $. The $ s_{\operatorname{KL}} $ and $ s_{\operatorname{Berlin}} $ attributes of $ \operatorname{KL} $ and $ \operatorname{Berlin} $ represent the areas (in $ m^2 $) of apartments in Kaiserslautern and Berlin, respectively. Furthermore, let $ r_{\operatorname{KL}} := 18.24 $ and $ r_{\operatorname{Berlin}} := 27.17 $ be the (average) rental prices of one square meter in Kaiserslautern and Berlin, respectively.
	
	Suppose we want to find apartments from Kaiserslautern and Berlin that have approximately the same area but which differ in price significantly, say, the price of the apartment in Berlin must be at least double the price of the apartment in Kaiserslautern. More precisely, we regard two apartments as having approximately the same area whenever these areas differ by at most 5 square meters. Finding such apartments corresponds to evaluating the conjunctive query \[
	q \leftarrow \operatorname{KL}(\_, s_{\operatorname{KL}}), \operatorname{Berlin}(\_, s_{\operatorname{Berlin}}), \varphi(s_{\operatorname{KL}}, s_{\operatorname{Berlin}})
	\] where \begin{align*}
		\varphi(s_{\operatorname{KL}}, s_{\operatorname{Berlin}}) &:= \left|s_{\operatorname{KL}} - s_{\operatorname{Berlin}}\right| \le 5 \wedge r_{\operatorname{Berlin}} s_{\operatorname{Berlin}} \ge 2 r_{\operatorname{KL}} s_{\operatorname{KL}} \\
		&\wedge s_{\operatorname{KL}} > 0 \wedge s_{\operatorname{Berlin}} > 0
	\end{align*} is an LRA formula defining the desired relation between areas of apartments\footnote{In the formula, we used the absolute value operator that is not allowed in LRA. However, $ \left|s_{\operatorname{KL}} - s_{\operatorname{Berlin}}\right| \le 5 $ is expressible in LRA because this statement can be rewritten as $ 0 \le s_{\operatorname{KL}} - s_{\operatorname{Berlin}} \le 5 \vee 0 \le s_{\operatorname{Berlin}} - s_{\operatorname{KL}} \le 5 $.}. As in Example~\ref{example:independence_eff_query_eval}, a naive evaluation of $ q $ requires quadratic time in the worst case. However, using a variation of the exponential-time algorithm of Theorem~\ref{thm:contrib:computing_dec_exponential_time}, it is possible to efficiently compute the formula \[
	\psi := s_{\operatorname{KL}} \le \frac{715}{49} \wedge s_{\operatorname{KL}} > 0 \wedge s_{\operatorname{Berlin}} \le \frac{960}{49} \wedge s_{\operatorname{Berlin}} > 0
	\] such that: (1) $ \varphi \rightarrow \psi $ is valid in LRA, and (2) $ \psi $ is a best-possible approximation of a monadic decomposition of $ \varphi $. The central idea is to use $ \psi $ instead of $ \varphi $ to optimize query evaluation as follows: \begin{itemize}
		\item Select tuples in $ \operatorname{KL} $ that satisfy $ s_{\operatorname{KL}} \le \frac{715}{49} \wedge s_{\operatorname{KL}} > 0 $ and let $ K $ denote the result;
		\item Select tuples in $ \operatorname{Berlin} $ that satisfy $ s_{\operatorname{Berlin}} \le \frac{960}{49} \wedge s_{\operatorname{Berlin}} > 0 $ and let $ B $ denote the result;
		\item Return all tuples in $ K \times B $ that satisfy $ \varphi $;
	\end{itemize}\begin{figure}[t]
		\centering
		\begin{tikzpicture}[thick]
	\node (kl) { % to test correct bounding box: \node[block]
		\begin{tabular}{||c|c||}
			\hline
			\multicolumn{2}{||c||}{$ \operatorname{KL} $} \\
			\hline
			$ \operatorname{Address} $ & $ s_{\operatorname{KL}} $ \\
			\hline
			Richard-Wagner-Stra{\ss}e 88 & $ 14 $ \\
			Friedrich-Engels-Stra{\ss}e 5 & $ 13 $ \\
			Apfelstra{\ss}e 6 & $ 19 $ \\
			$ \dots $ & $ \dots $ \\
			\hline
		\end{tabular}
	};
	
	\node[right=2em of kl] (berlin) {
		\begin{tabular}{||c|c||}
			\hline
			\multicolumn{2}{||c||}{$ \operatorname{Berlin} $} \\
			\hline
			$ \operatorname{Address} $ & $ s_{\operatorname{Berlin}} $ \\
			\hline
			Hermannstra{\ss}e 151 & $ 19 $ \\
			Lacknerstra{\ss}e 5 & $ 24 $ \\
			Friedelstra{\ss}e 23 & $ 18 $ \\
			$ \dots $ & $ \dots $ \\
			\hline
		\end{tabular}
	};
	
	\node[block, below=1.5em of kl] (klfilter) {
		\begin{varwidth}{9em}
			\centering
			Does $ s_{\operatorname{KL}} \le \frac{715}{49} \wedge s_{\operatorname{KL}} > 0 $ evaluate to true?
		\end{varwidth}
	};
	
	\node[block, below=1.5em of berlin] (berlinfilter) {
		\begin{varwidth}{12em}
			\centering
			Does $ s_{\operatorname{Berlin}} \le \frac{960}{49} \wedge s_{\operatorname{Berlin}} > 0 $ evaluate to true?
		\end{varwidth}
	};

	\node[rectangle, draw, align=center, anchor=north, yshift=-4em] (combinator) at ($ (klfilter)!0.5!(berlinfilter) $) {$ \times $};
	
	\node[block, below=1.5em of combinator] (finalfilter) {Does $ \varphi $ evaluate to true?};
	
	\node[below=1.5em of finalfilter] (queryresult) {Query output};

	\draw[->] (kl) -- (klfilter);
	\draw[->] (berlin) -- (berlinfilter);
	\draw[->] (klfilter) to[out=-90, in=180] (combinator);
	\draw[->] (berlinfilter) to[out=-90, in=0] (combinator);
	
	\node[anchor=north east] at (klfilter.south) {yes};
	\node[anchor=north west] at (berlinfilter.south) {yes};
	
	% \draw ($ (klfilter.south)!0.5!(combinator) $) node [anchor=north, yshift=5] {yes};
	% \draw ($ (berlinfilter.south)!0.5!(combinator) $) node [anchor=north, yshift=5] {yes};
	
	\draw[->] (combinator) -- (finalfilter);
	
	\draw[->] (finalfilter) -- (queryresult) node[midway, right] {yes};
\end{tikzpicture}
		\caption{Data flow of efficient query evaluation. As in Figure~\ref{fig:independence_eff_query_eval}, tuples originate in relations $ \operatorname{KL} $ and $ \operatorname{Berlin} $ depicted at the top of the diagram and travel along edges. Nodes containing a question filter tuples accordingly. The ``$ \times $'' node produces, for every pair of incoming tuples $ (a_{\operatorname{KL}}, s_{\operatorname{KL}}) $ and $ (a_{\operatorname{Berlin}}, s_{\operatorname{Berlin}}) $, the tuple $ (a_{\operatorname{KL}}, s_{\operatorname{KL}}, a_{\operatorname{Berlin}}, s_{\operatorname{Berlin}}) $.}
		\label{fig:partial_variable_independence_application}
	\end{figure}
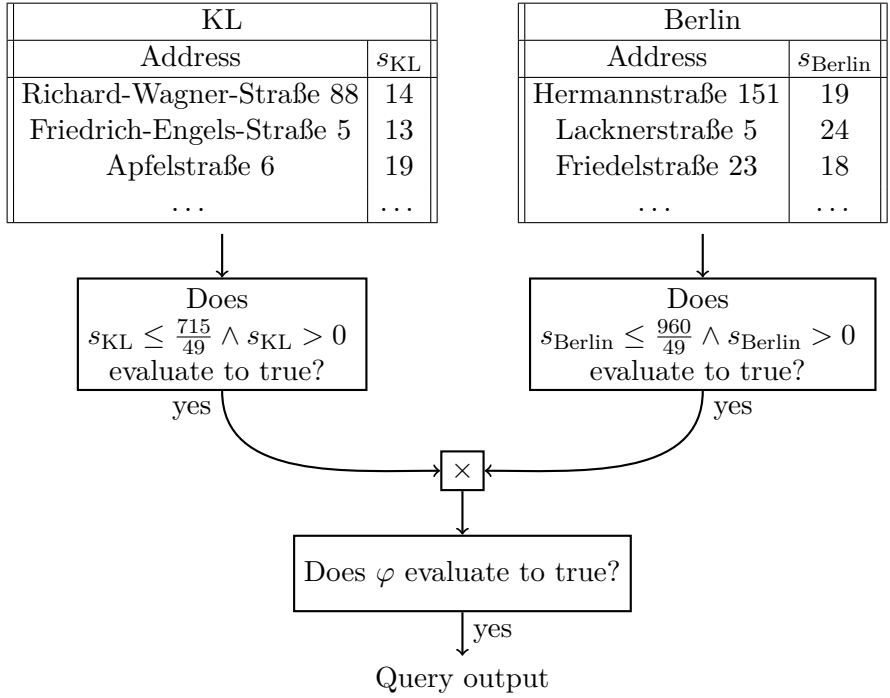 This method of evaluating $ q $ is visualized in Figure~\ref{fig:partial_variable_independence_application}. Since $ \varphi \models \psi $, the desired query result is a subset of $ K \times B $. However, $ K \times B $ may contain tuples that do not satisfy $ \varphi $. In order to ensure that we return precisely the desired set of tuples, before outputting any tuple from $ K \times B $, we simply check if that tuple satisfies $ \varphi $ and ignore it if not. Since $ \psi $ is a best-possible approximation of a monadic decomposition of $ \varphi $, it is reasonable to assume that there are at most $ O(n) $ tuples in $ \operatorname{KL} \times \operatorname{Berlin} $ that satisfy $ \psi $ but do not satisfy $ \varphi $. Hence, $ K \times B $ contains only $ O(n) $ many tuples, so the entire query can be evaluated in linear time (instead of quadratic) under the above assumption. As in Example~\ref{example:independence_eff_query_eval}, it can be further observed that the described method of evaluating $ q $ can be optimized by performing the selection of tuples from $ \operatorname{KL} $ and $ \operatorname{Berlin} $ in parallel.
\end{zbexample}

Next, we prove that the variable independence problem cannot be solved in polynomial time unless $ \P = \NP = \coNP $:

\begin{theorem}
	\label{thm:contrib:independence_conp_hard_wrt_ctt}
	Given a quantifier-free linear real arithmetic formula $ \varphi $ and two of its free variables $ x_i $ and $ x_j $, the problem of deciding the independence of $ x_i $ and $ x_j $ is $ \coNP $-hard with respect to conjunctive truth-table reductions.
\end{theorem}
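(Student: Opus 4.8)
The plan is to give a conjunctive truth-table reduction from the monadic decomposability problem, which is $\coNP$-complete by Theorem~\ref{thm:contrib:mondec_conp_complete}, to the independence problem. The reduction is the natural one: given a quantifier-free LRA formula $\varphi(x_1, \dots, x_n)$, output the list of independence instances $(\varphi, x_i, x_j)$ over all pairs $1 \le i < j \le n$. This list has at most $\binom{n}{2} \le \abs{\varphi}^2$ entries, each of size linear in $\abs{\varphi}$, so it is computable in polynomial time. Correctness is precisely the claim that $\varphi$ is monadically decomposable if and only if $x_i$ and $x_j$ are independent in $\varphi$ for every pair $i \ne j$. Granting the claim, the theorem follows immediately: monadic decomposability is $\coNP$-complete, every many-one reduction is in particular a conjunctive truth-table reduction, and pre-composing a conjunctive truth-table reduction with a many-one reduction again yields a conjunctive truth-table reduction; hence every problem in $\coNP$ reduces to the independence problem via a conjunctive truth-table reduction.

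The forward direction of the claim is immediate: the all-singletons partition $\Pi_0 := \{\{x_1\}, \dots, \{x_n\}\}$ separates every pair of variables and witnesses monadic decomposability, hence the independence of each pair. The reverse direction is the heart of the matter. Suppose that for every pair $i < j$ there is a partition $\Pi_{ij}$ placing $x_i$ and $x_j$ in distinct blocks with $\varphi$ being $\Pi_{ij}$-decomposable. Let $\Pi^\ast$ be the common refinement (the meet in the partition lattice) of all the $\Pi_{ij}$. Every pair $x_i, x_j$ lies in distinct blocks of some $\Pi_{ij}$, hence also in distinct blocks of the refinement $\Pi^\ast$; since $\Pi_0$ is the finest partition, it follows that $\Pi^\ast = \Pi_0$. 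So it suffices to establish the following \emph{closure property of decomposability}: if a quantifier-free LRA formula is both $\Pi_1$-decomposable and $\Pi_2$-decomposable, then it is $(\Pi_1 \wedge \Pi_2)$-decomposable. I expect this to be the main obstacle.

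To prove the closure property one may induct on the number of blocks, reducing to the case where $\Pi_1 = \{B, \overline{B}\}$ is binary. The combinatorial core is the elementary fact that a relation which is a finite union of ``rectangles'' with respect to each of two groupings of its coordinates is a finite union of ``boxes'' with respect to the common refinement of those groupings: the fibers of the relation over the $B$-coordinates take only finitely many values, one slices the relation into the corresponding finitely many pieces, each such piece is again decomposable with respect to $\Pi_2$ restricted to $\overline{B}$ and is handled by the induction hypothesis, and one recombines. Turning this into a statement about LRA formulas requires our characterization of decomposable formulas to certify that the sets one slices along are LRA-definable and to extract rational representatives of the relevant equivalence classes (so that the slicing can be realized by substitution), while quantifier elimination for LRA ensures that every piece built along the way is again a quantifier-free LRA formula. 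With the closure property in hand, $\varphi$ is $\Pi^\ast$-decomposable, that is, monadically decomposable, which completes the proof of the claim and hence of the theorem.
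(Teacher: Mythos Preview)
Your proposal is correct and follows essentially the same route as the paper: reduce from monadic decomposability by outputting the $\binom{n}{2}$ independence queries, with correctness resting on the equivalence ``monadically decomposable $\Leftrightarrow$ every pair of variables is independent,'' whose non-trivial direction is exactly the meet-closure property you isolate. The one difference is that the paper does not prove meet-closure from scratch; it imports it as Proposition~\ref{prop:pi1_pi2_dec_implies_dec_wrt_meet_of_pi1_pi2} from Cosmadakis et al.\ \cite{cosmadakis:2001}, so your sketch of an LRA-specific argument for it is extra work (and, as written, somewhat imprecise---the ``finitely many fibers'' step is really the finite-index equivalence-relation characterization of decomposability, and the cited result holds in greater generality than LRA).
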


We complement our results by proving that determining $ \Pi $-decomposability can be efficiently reduced to the same problem but when the partition $ \Pi $ is binary, i.e., consists of only two blocks. The reduction works by computing a collection $ S $ of binary partitions and then outputting that $ \varphi $ is $ \Pi $-decomposable whenever $ \varphi $ is $ \Pi' $-decomposable for all $ \Pi' \in S $. This method and its correctness immediately follow from the following Theorem~\ref{thm:contrib:reduction}.

\begin{theorem}
	\label{thm:contrib:reduction}
	For any (non-unary) partition $ \Pi $, there exists a collection $ S $ of binary partitions, such that any formula $ \varphi $ is $ \Pi $-decomposable if and only if $ \varphi $ is $ \Pi' $-decomposable for all $ \Pi' \in S $. Moreover, $ S $ is of size $ O(\log(|\Pi|)) $ and can be computed in polynomial time.
\end{theorem}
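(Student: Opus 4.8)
The plan is to reduce $\Pi$-decomposability to binary decompositions by a divide-and-conquer / bit-labelling argument. Write $\Pi = \{B_1, \dots, B_k\}$ with $k = |\Pi| \ge 2$. The key observation should be that $\varphi$ is $\Pi$-decomposable if and only if, for every pair $i \ne j$, the variables in $B_i$ are collectively independent from the variables in $B_j$ "in the presence of all the other blocks", and moreover that this family of pairwise conditions can be witnessed succinctly. Rather than taking all $\binom{k}{2}$ pairwise binary partitions (which would give $|S| = \Theta(k^2)$, not $O(\log k)$), I would assign to each block $B_i$ a distinct binary string $\sigma(i) \in \{0,1\}^{\lceil \log_2 k \rceil}$. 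For each coordinate $\ell \in \{1, \dots, \lceil \log_2 k\rceil\}$, define the binary partition $\Pi'_\ell$ whose two blocks are $L_\ell := \bigcup_{\sigma(i)_\ell = 0} B_i$ and $R_\ell := \bigcup_{\sigma(i)_\ell = 1} B_i$. Take $S := \{\Pi'_1, \dots, \Pi'_{\lceil \log_2 k\rceil}\}$; this has size $O(\log k)$ and is clearly polynomial-time computable.

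The forward direction is the easy one: if $\varphi$ is $\Pi$-decomposable, then since each $\Pi'_\ell$ is a coarsening of $\Pi$ (every block of $\Pi$ lies entirely inside one block of $\Pi'_\ell$), any $\Pi$-decomposition is also a $\Pi'_\ell$-decomposition, so $\varphi$ is $\Pi'_\ell$-decomposable for every $\ell$. The substantive direction is the converse: assuming $\varphi$ is $\Pi'_\ell$-decomposable for all $\ell$, show $\varphi$ is $\Pi$-decomposable. The natural route is a semantic/model-theoretic characterization of decomposability — presumably an Ehrenfeucht–Fraïssé-style or "equivalence of assignments" characterization of the kind Libkin introduced, or the model-flooding characterization developed earlier in this thesis. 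Concretely, $\Pi'_\ell$-decomposability means: whenever two assignments agree on how they "look" to $L_\ell$ and to $R_\ell$ respectively (up to the relevant finite equivalence), they agree on $\varphi$. One then argues that the conjunction over all $\ell$ of these conditions forces the analogous condition for the finest partition $\Pi$: if two assignments $\mu, \nu$ are block-wise $\Pi$-equivalent, one interpolates through a sequence of hybrid assignments, flipping the value on one block $B_i$ at a time, and uses the $\Pi'_\ell$-decompositions for the coordinates $\ell$ where $\sigma(i)_\ell$ distinguishes $B_i$ — the binary labelling guarantees every single block $B_i$ can be "isolated" as it is the unique block on its side for a suitable sequence of the $\ell$'s. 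This gives $\mu \models \varphi \iff \nu \models \varphi$, hence $\Pi$-decomposability.

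The main obstacle I anticipate is making the interpolation argument clean: to "isolate" a single block $B_i$ one needs that changing only the $B_i$-part of an assignment is invisible to $\varphi$, but a single $\Pi'_\ell$ only tells us that the $L_\ell$-side (or $R_\ell$-side) as a whole can be swapped, and $B_i$ is lumped together with other blocks on its side. Resolving this requires chaining: one changes the whole side $R_{\ell_1}$, which moves $B_i$ along with its companions, then corrects the unwanted companions back by invoking another $\Pi'_{\ell_2}$ on which those companions lie on the opposite side from $B_i$ — and one must check the binary code makes such a separating sequence always exist (it does, since distinct codewords differ in at least one bit, so for any $i$ and any set of "companions" there is a coordinate separating $i$ from at least one of them, and iterating peels them off). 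Keeping track of which intermediate hybrid assignments are legitimate and that the equivalences compose is the delicate bookkeeping; I would isolate it as a lemma stating that block-wise $\Pi$-equivalence of two assignments, together with $\Pi'_\ell$-decomposability for all $\ell \in S$, implies $\varphi$ is invariant along some finite path of assignments connecting them, and prove that lemma by induction on the Hamming structure of the codewords.
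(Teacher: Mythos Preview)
Your construction of $S$ via bit-labelling is correct and is exactly the paper's construction in different clothing: the paper builds a binary tree on the blocks and reads off one binary partition per level, which is the same as assigning each block a $\lceil \log_2 k\rceil$-bit code and taking one partition per bit. The forward direction (coarsening) is also the same. The difference is in the converse, and there the paper takes a much shorter route than you propose. It simply observes that the bit encoding gives $\bigsqcap_{\Pi'\in S}\Pi' = \Pi$ (this is precisely the statement that distinct blocks receive distinct codewords), and then invokes a known result of Cosmadakis et al.\ (stated in the paper as Proposition~\ref{prop:pi1_pi2_dec_implies_dec_wrt_meet_of_pi1_pi2}): if $\varphi$ is decomposable with respect to both $\Pi_1$ and $\Pi_2$, then it is decomposable with respect to $\Pi_1\sqcap\Pi_2$. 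Iterating over $S$ finishes the converse in one line.

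Your interpolation/peeling argument, by contrast, has a real gap --- one you flag yourself but do not resolve. $\Pi'_\ell$-decomposability lets you replace the entire $L_\ell$-side (or $R_\ell$-side) of an assignment by something \emph{equivalent under the $\ell$-th decomposition's induced equivalence}; but that equivalence lives on tuples over all of $L_\ell$, not on individual blocks $B_i$, and the equivalences for different $\ell$ are a priori unrelated. So when you ``change the whole side $R_{\ell_1}$'' and then ``correct the unwanted companions back by invoking another $\Pi'_{\ell_2}$'', there is no reason the hybrid assignment reached after the first swap is $\Pi'_{\ell_2}$-equivalent to anything useful on the relevant side --- the swaps need not compose. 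Your phrase ``block-wise $\Pi$-equivalent'' hides exactly this: you never define the per-block equivalence, and there is no canonical one to read off from the $\Pi'_\ell$-decompositions. Making the chain rigorous would amount to reproving the Cosmadakis et al.\ meet lemma from scratch; that lemma is where the actual content lives, not in the combinatorics of the bit encoding. The clean fix is to cite it and then just check $\bigsqcap_{\Pi'\in S}\Pi'=\Pi$, which your codeword construction makes immediate.
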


It was already known\footnote{This follows from \cite[p. 5]{hague:2020} read in conjunction with \cite{cosmadakis:2001}.} that testing $ \{X_1, \dots, X_n\} $-decomposability of $ \varphi $ is equivalent to checking that $ \varphi $ is $ \{X_i, \bigcup_{j\neq i} X_j\} $-decomposable for all $ 1 \le i \le n $. Thus, the novelty of Theorem~\ref{thm:contrib:reduction} is in the size of $ S $ -- we improve it from $ O(\left|\Pi\right|) $ to $ O(\log(\left|\Pi\right|)) $.

\subsection{Organization}

This \zbdocname/ is organized as follows. We shall first discuss the necessary preliminaries in \zbrefsec{sec:preliminaries}. \zbrefsec{sec:vardec} is dedicated to the development of algorithms for deciding variable decomposability and for constructing decompositions if they exist. In \zbrefsec{sec:lower_bounds}, we prove lower bounds for deciding monadic/variable decomposability and for determining variable independence. In \zbrefsec{sec:experiments}, we discuss our implementation of the algorithm for variable decomposability and present experimental results. We give some concluding remarks and discuss future work in \zbrefsec{sec:conclusions}.

\section{Preliminaries}
\label{sec:preliminaries}

\subsection{Definitions and notation}

\subsubsection{First-order logic}

We assume familiarity with the syntax and semantics of first-order logic \cite{kroening:2016, bradley:2007, schoening:2008, immerman:2012:descriptivecomplexity, libkin:2004}. Fix a first-order signature $ \Sigma $ and a $ \Sigma $-structure $ \Mstruct = (\Uuniverse, \Iinterp) $, where $ \Uuniverse \neq \varnothing $ is the universe and $ \Iinterp $ is the interpretation function. Let $ \varphi(x_1, \dots, x_n) $ be a $ \Sigma $-formula with free variables $ x_1, \dots, x_n $ and $ \Pi $ be a partition of $ \{x_1, \dots, x_n\} $. The formula $ \varphi $ is said to \textit{respect} $ \Pi $, if $ \varphi $ is a Boolean combination of formulas, each having its free variables within some element of $ \Pi $ \cite{veanes:2017}. Such a Boolean combination is also called a \textit{$ \Pi $-decomposition} \cite{hague:2020}. We say that $ \varphi $ \textit{disrespects} $ \Pi $, if $ \varphi $ does not respect $ \Pi $.

A formula $ \varphi(x_1, \dots, x_n) $ in the theory of $ \Mstruct $ is called \textit{$ \Pi $-decomposable} whenever there exists a $ \Pi $-respecting formula $ \psi(x_1, \dots, x_n) $, such that \[
	\Mstruct \models \forall x_1, \dots, x_n (\varphi \leftrightarrow \psi)
\] i.e., when $ \varphi $ is equivalent to $ \psi $ in the theory of $ \Mstruct $. Note that the difference between respecting $ \Pi $ and being $ \Pi $-decomposable is that the former notion depends only on the syntax of the formula, whereas the latter notion is semantic (it depends on $ \Mstruct $). The \textit{variable decomposition problem} is to decide, given $ \varphi $ and $ \Pi $, whether $ \varphi $ is $ \Pi $-decomposable and to compute, in case of a positive answer, a possible $ \Pi $-decomposition of $ \varphi $. Two free variables $ x_i $ and $ x_j $ appearing in a formula $ \varphi(x_1, \dots, x_n) $ are said to be \textit{independent}, if $ \varphi $ is $ \Pi $-decomposable for some partition $ \Pi $ containing $ x_i $ and $ x_j $ in distinct blocks. The \textit{variable independence problem} \cite{libkin:2003} is to decide, given $ \varphi $, $ x_i $ and $ x_j $, whether $ x_i $ and $ x_j $ are independent or not.

We use the notation $ \Free(\varphi) $ and $ \Pred(\varphi) $ to denote the set of free variables and the set of predicates appearing in the formula $ \varphi $, respectively. A valuation $ \nu $ is a function mapping every free variable to an element of $ \Uuniverse $. We write $ \Mstruct_\nu \models \varphi $ whenever $ \varphi $ evaluates to true under $ \Mstruct $ with every $ x \in \Free(\varphi) $ being interpreted as $ \nu(x) $. If this is the case, we call $ \Mstruct_\nu $ a model of $ \varphi $. Let $ \Phi, \Psi $ be sets of formulas. We write $ \Mstruct_\nu \models \Phi $ if $ \Mstruct_\nu \models \varphi $ for every $ \varphi \in \Phi $. More generally, we say that $ \Phi $ entails $ \Psi $ and write $ \Phi \models \Psi $ if for every valuation $ \nu $, $ \Mstruct_\nu \models \Phi $ implies $ \Mstruct_\nu \models \Psi $. If $ \Phi = \{\varphi\} $ is a singleton set, then $ \varphi \models \Psi $ stands for $ \{\varphi\} \models \Psi $. We allow analogous syntactic sugar also for the right-hand side of the entailment. Furthermore, we do not distinguish between single formulas and finite sets thereof. That is, for a finite set $ \Phi $ of formulas, we may also treat it directly as a single formula where appropriate. More precisely, $ \Phi $ is syntactic sugar for $ \bigwedge_{\varphi \in \Phi} \varphi $ if used as a formula.

\subsubsection{Partitions and lattices}

Let $ X \neq \varnothing $ be a set and $ \Pi, \Pi' $ be partitions of $ X $. The partition $ \Pi $ is called a \textit{refinement} of $ \Pi' $ (written: $ \Pi \sqsubseteq \Pi' $) if every element of $ \Pi $ is a subset of some set in $ \Pi' $. Clearly, the set of all partitions of $ X $ together with the refinement relation $ \sqsubseteq $ form a lattice with infimum $ \{\{x\} \mid x \in X\} $ and supremum $ \{X\} $. The \textit{meet operation} $ \sqcap $ in the lattice of $ X $'s partitions is defined as follows: the elements of $ \Pi \sqcap \Pi' $ are the nonempty sets of the form $ P \cap P' $, where $ P \in \Pi $ and $ P' \in \Pi' $. Note that $ \sqcap $ is commutative and associative.

\subsubsection{Complexity theory}
\label{sec:preliminaries:complexitytheory}

We assume familiarity with the basics of complexity theory \cite{arora:2009:computationalcomplexity, hemaspaandra:2002:complexitytheorycompanion, sipser:2012}. Among other well-known concepts, we shall, in particular, rely on the notion of a \textit{conjunctive truth-table reduction} \cite[pp. 305, 306]{hemaspaandra:2002:complexitytheorycompanion}, which we now briefly reiterate because it is less well-known compared to concepts line Turing/many-one reducibility or the polynomial hierarchy. Intuitively, truth-table reductions are simply Turing reductions, but with the additional requirement that the output must be the conjunction of answers to queries made by the reduction. More precisely, fix a finite alphabet $ \Sigma $, a symbol $ \# \notin \Sigma $, and let $ A \subseteq \Sigma^* $, $ B \subseteq \Sigma^* $ be languages. A conjunctive truth-table reduction from $ A $ to $ B $ is a polynomial-time computable function $ g : \Sigma^* \rightarrow (\Sigma \cup \{\#\})^* $, such that for all $ x \in \Sigma^* $, if \[
	g(x) = z_1 \# z_2 \# \dots \# z_l
\] is the output of $ g $ on input $ x $ where $ z_i \in \Sigma^* $, then \begin{align}
	\label{eqn:ctt_acceptance}
	x \in A \Leftrightarrow \bigwedge_{i=1}^l z_i \in B
\end{align} In this definition, the $ z_i $ strings are the queries to $ B $ made throughout the reduction, whereas (\ref{eqn:ctt_acceptance}) is the property that its output must be the conjunction of answers to the queries it makes. Note that $ \coNP $ is closed under conjunctive truth-table reductions.

\subsection{Algebraic definitions}

We expect the reader to be familiar with basic concepts in algebra such as homomorphisms\footnote{Also known as structure-preserving or linear mappings.}, vectorspaces and chain conditions \cite{axler:2015:linalg, hungerford:2012:algebra, eisenbud:2013:commalg, halmos:1995:linalg}. We also expect familiarity with the very basics of convex and affine geometry \cite{boyd:2004, paffenholz:2010:polyhedralgeometryandlinearoptimization, gallier:2001, gruber:2007:convex, ziegler:2012:lecturesonpolytopes}.

We now list some algebraic definitions we use but for which there is no consensus in the literature. Let $ \id $ denote the identity mapping. If $ A $, $ B $, $ C $ are sets, $ f : B \rightarrow C $ and $ g : A \rightarrow B $ are mappings, then we write $ f \circ g $ for the composition of $ f $ and $ g $ defined as follows: \begin{align*}
	f \circ g : A &\rightarrow C \\
	x &\mapsto f(g(x))
\end{align*} In the entire \zbdocname/, unless stated otherwise, $ \log $ denotes the binary logarithm.

If $ A \subseteq \Q^n $ and $ B \subseteq \Q^n $ are sets, then we use the notation $ A + B $ (resp. $ A - B $) to denote the \textit{Minkowski sum} (resp. \textit{difference}) \cite{ziegler:2012:lecturesonpolytopes, paffenholz:2010:polyhedralgeometryandlinearoptimization, gilbert:1988:gjk, montanari:2017:gjkimprovement} of $ A $ and $ B $: \begin{align*}
	A + B := \{a + b \mid a \in A, b \in B\} \\
	A - B := \{a - b \mid a \in A, b \in B\}
\end{align*} For $ v \in \Q^n $, the notation $ v + A $ is syntactic sugar for $ \{v\} + A $. Analogously, for $ \lambda \in \Q $, we define $ \lambda \cdot A := \{\lambda \cdot a \mid a \in A\} $.

Let $ K $ be a field and let $ B = (v_1, \dots, v_n) $ be a basis of a $ K $-vectorspace $ V $. We denote with $ \lc_B $ the linear combination isomorphism $ K^n \cong V $ given by \begin{align*}
	\lc_B : K^n &\rightarrow V \\
	(a_1, \dots, a_n)^\transp &\mapsto \sum_{i=1}^{n} a_i v_i
\end{align*} We furthermore define $ V^\bot $ to be the \textit{orthogonal complement} \[
	V^\bot := \{u \in K^n \mid \forall v \in V : u \cdot v = 0\}
\] of $ V $ in $ K^n $. In this definition and in the following, ``$ \cdot $'' stands for the dot product if written between vectors.

If $ v_1, \dots, v_m $ are vectors in a $ K $-vectorspace $ K^n $, then we use the notation $ (v_1 \mid \dots \mid v_m) $ to denote the $ n \times m $ matrix obtained by writing $ v_1, \dots, v_m $ as columns.

We say that $ U $ is a subspace of $ V $ and write $ U \le V $ whenever $ U \subseteq V $ and $ U $ is a $ K $-vectorspace. If $ U \le V $ and $ U \subsetneq V $, we say that $ U $ is a strict subspace of $ V $ and write $ U \lneq V $.

For a $ K $-vectorspace homomorphism $ h : U \rightarrow V $ we let \begin{align*}
	\ImageOf(h) &:= \{h(u) \mid u \in U\} \le V \\
	\ker(h) &:= \{u \in U \mid h(u) = 0\} \le U
\end{align*} denote the image and kernel of $ h $, respectively.

For vectors $ u, v \in \Q^n $ we furthermore define \[
	(u, v) := \{\lambda \cdot v + (1 - \lambda) \cdot u \mid \lambda \in \Q, 0 < \lambda < 1\}
\] to be the open convex hull of $ \{u, v\} $ \cite{paffenholz:2010:polyhedralgeometryandlinearoptimization, boyd:2004}. Hereinafter, unless stated otherwise, we assume $ \varepsilon $, $ \delta $ and other coefficients to be rational numbers.

\subsection{Theory-specific definitions}

In this \zbdocname/, we study the variable decomposition and independence problems over the quantifier-free theory of the $ \R_{\mathrm{lin}} := (\R, +, -, 0, 1, =, <) $ structure. Since $ \R_{\mathrm{lin}} $ is elementarily equivalent\footnote{This can be easily shown using Ehrenfeucht-Fra{\"i}ss{\'e} games \cite{immerman:2012:descriptivecomplexity, libkin:2004}} to $ \Q_{\mathrm{lin}} := (\Q, +, -, 0, 1, =, <) $ which is in turn elementarily equivalent to \[
	\Mstruct := (\Q, +, -, 0, 1, =, <, >)
\] it suffices to study the theory of $ \Mstruct $. Define $ \QFLRA $ to be the set of all quantifier-free formulas over the theory of $ \Mstruct $. Note that $ \Mstruct $ is just $ \Q_{\mathrm{lin}} $ with the predicate symbol ``$ > $'' added to it and interpreted accordingly, i.e., as the ``strictly greater than'' relation over $ \Q $. This will simplify our reasoning later because over $ \Mstruct $, unlike $ \Q_{\mathrm{lin}} $, we can assume all conjunctions of literals to be positive, i.e., be actually conjunctions of predicates. Whenever prefix notation is more convenient, we may write the predicate symbols $ P_<, P_>, P_= $ as syntactic sugar denoting the $ <, >, = $ predicates in infix notation, respectively. For convenience, define \[
	\mathbb{P} := \{P_<, P_>, P_=\}
\] to be the set of all possible predicate symbols. Hereinafter, unless stated otherwise, we assume all formulas to be quantifier-free and to have at most $ n $ free variables. Whenever dealing with predicate sets, we assume them to be finite unless explicitly stated otherwise. Due to these assumptions about formulas and structures we are restricting our attention to, we can refer to valuations $ \nu $ as models. Then it makes sense to think of a set of models of $ \varphi $ as a subset of $ \Q^n $. We use the notation $ \ModelsOf(\varphi) $ to denote the set of models of a formula $ \varphi $.

We now define some further convenient notation which we will use when dealing with sets of predicates over the theory of $ \Mstruct $. For a set $ \Gamma $ of predicates and $ P \in \mathbb{P} $, let \[
	\Gamma^P := \{p \in \Gamma \mid \text{the predicate symbol of }p\text{ is }P\}
\] be the set of $ P $-predicates appearing in $ \Gamma $. Since in $ \Mstruct $ all predicate symbols are binary, we can also define the notion of a substitution on predicates: if $ p = P(t_1, t_2) $ is an instance of the $ P $ predicate for some $ P \in \mathbb{P} $ and terms $ t_1, t_2 $, then for every $ Q \in \mathbb{P} $ we define $ p^Q := Q(t_1, t_2) $. For a predicate set $ \Gamma $ and partition $ \Pi $, we also define \[
	\PiSimp{\Gamma} := \{p \in \Gamma \mid p \text{ respects } \Pi\}
\] and \[
	\PiComp{\Gamma} := \{p \in \Gamma \mid p \text{ disrespects } \Pi\}
\] to be the sets of $ \Pi $-respecting and $ \Pi $-disrespecting predicates appearing in $ \Gamma $, respectively. If $ R $ is a set of predicate sets, then for convenience, we define \[
	\Sat(R) := \{\Gamma \in R \mid \Gamma \text{ is satisfiable}\}
\] to be the subset of satisfiable predicate sets.

\section{Upper bounds}
\label{sec:vardec}

In this \zbsectionname/, we develop algorithms for deciding variable decomposability and for solving the variable decomposition problem, i.e., when we have to additionally output a $ \Pi $-decomposition in case the input formula is $ \Pi $-decomposable.

This \zbsectionname/ is structured as follows. We start by showing that the variable decomposition problem reduces to the same problem but with the partition being binary (Section~\ref{sec:vardec:reduction_to_bin_partitions}). Then, in Section~\ref{sec:vardec:normal_form}, we derive a simple and convenient variant of the disjunctive normal form, which we will need for our subsequent reasoning. In Section~\ref{sec:vardec:reduction} we define a new problem that we call \textit{the covering problem} and show that the variable decomposition problem reduces to it. Hence, it suffices to study the covering problem, which we do in the remaining sections. In Section~\ref{sec:vardec:pisimple_picomplex} we observe that is makes sense to split the instances of the covering problem into two types. In this respect, we introduce the notions of \textit{$ \Pi $-simple} and \textit{$ \Pi $-complex} predicate sets. In order to solve the covering problem, in Section~\ref{sec:vardec:analyzing_dep_between_vars} we introduce a method for analyzing dependencies between variables. Then, in Section~\ref{sec:vardec:high_level_overview}, we introduce the \textit{covering algorithm} that solves the covering problem by first giving an intuitive and high-level overview. Then we state the algorithm precisely and fill-in all missing details in Section~\ref{sec:vardec:cover}. After that, in Section~\ref{sec:vardec:correctness}, we prove that the covering algorithm is correct and derive a double-exponential time algorithm for the variable decomposition problem. In Section~\ref{sec:vardec:example} we give a concrete example showing how to solve the variable decomposition problem by reducing it to the covering problem and running the covering algorithm. After that, in Section~\ref{sec:vardec:exponential_upper_bound}, we prove that the variable decomposition problem can be solved in exponential time, by providing a fine-grained analysis of a slightly refined version of the covering algorithm. Then we turn to the construction of algorithms for deciding variable decomposability. We discuss the bird's eye view on the main steps of our approach in Section~\ref{sec:vardec:towards_conp_upper_bound}. Relying on results of previous sections, we show that it is possible to witness the fact that a formula is not $ \Pi $-decomposable in a sound and complete way (Section~\ref{sec:vardec:nondec_proof_system}); consequently, we obtain a $ \coNEXP $ algorithm for deciding variable decomposability. However, our first witness of non-$ \Pi $-decomposability admits only an encoding of exponential length, which is the performance bottleneck of the resulting complexity bound. Thus, in order to derive a faster algorithm for deciding variable decomposability, in Section~\ref{sec:vardec:proof_compression} we show that the witnesses (i.e., proofs) of non-$ \Pi $-decomposability can actually be compressed into a string of polynomial length. Relying on this, we derive the optimal $ \coNP $ algorithm for variable decomposability.

\subsection{Reduction to binary partitions}
\label{sec:vardec:reduction_to_bin_partitions}

Let $ \varphi $ be the given formula and $ \Pi $ be the specified partition. Until the end of \zbrefsec{sec:vardec}, we assume $ \Pi = \{X, Y\} $ to be binary and the free variables of $ \varphi $ to be $ x_1, \dots, x_{|X|}, y_1, \dots, y_{|Y|} $, where $ X = \{x_1, \dots, x_{|X|}\} $ and $ Y = \{y_1, \dots, y_{|Y|}\} $. We also assume that components of all vectors representing models correspond to the \[
	x_1, \dots, x_{|X|}, y_1, \dots, y_{|Y|}
\] variables in this order. For convenience, we define \[
	\vec{z} := (x_1, \dots, x_{|X|}, y_1, \dots, y_{|Y|})^\transp
\] to be the column vector consisting of the free variables of $ \varphi $ in the correct order.

By assuming that $ \Pi $ is binary, we do not lose generality due to the following Proposition~\ref{prop:reduction_to_binary_partitions}, which yields an efficient reduction from the variable decomposition problem for arbitrary partitions to the same problem for binary ones.

\begin{proposition}
	\label{prop:reduction_to_binary_partitions}
	For any (non-unary) partition $ \Pi $, there exists a collection $ S $ of binary partitions, such that any formula $ \varphi $ is $ \Pi $-decomposable if and only if $ \varphi $ is $ \Pi' $-decomposable for all $ \Pi' \in S $. Moreover, $ S $ is of size $ O(\log(|\Pi|)) $ and can be computed in polynomial time.
\end{proposition}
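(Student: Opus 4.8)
The plan is to encode the blocks of $\Pi$ by bit-strings and let each binary partition in $S$ correspond to one bit position. Write $\Pi = \{X_1, \dots, X_k\}$ with $k = |\Pi| \ge 2$, and let $m := \lceil \log k \rceil$. Assign to each block $X_i$ a distinct label $\ell_i \in \{0,1\}^m$. For each bit position $b \in \{1, \dots, m\}$, define the binary partition $\Pi_b := \{A_b, B_b\}$, where $A_b := \bigcup \{X_i : \text{the $b$-th bit of } \ell_i \text{ is } 0\}$ and $B_b := \bigcup \{X_i : \text{the $b$-th bit of } \ell_i \text{ is } 1\}$ (discarding a block if it is empty, which can only make one of $A_b, B_b$ empty — this is handled by simply choosing the labels so that both bit values occur at each used position, or by noting such a degenerate $\Pi_b$ imposes no constraint and can be dropped). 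Set $S := \{\Pi_1, \dots, \Pi_m\}$. Then $|S| = m = O(\log|\Pi|)$, and $S$ is clearly computable in polynomial time: we just need to enumerate the blocks, assign labels, and read off bits.

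The correctness direction "$\varphi$ is $\Pi$-decomposable $\implies$ $\varphi$ is $\Pi_b$-decomposable for all $b$" is immediate, since each $\Pi_b$ is coarser than $\Pi$ (i.e. $\Pi \sqsubseteq \Pi_b$): any $\Pi$-decomposition is already a $\Pi_b$-decomposition, because a formula whose free variables lie in some block of $\Pi$ also has its free variables in a block of $\Pi_b$. The nontrivial direction is the converse. For this I would first establish the key combinatorial fact that $\Pi = \Pi_1 \sqcap \Pi_2 \sqcap \dots \sqcap \Pi_m$: two variables $x, x'$ lie in the same block of the meet iff they agree on every bit of their block labels iff they have the same label iff they are in the same block of $\Pi$ (here we use that distinct blocks get distinct labels). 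Then the statement reduces to the following lemma about decomposability and meets: \emph{if $\varphi$ is $\Pi'$-decomposable and $\Pi''$-decomposable, then $\varphi$ is $(\Pi' \sqcap \Pi'')$-decomposable}, from which the full claim follows by induction on $m$ using associativity of $\sqcap$.

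The main obstacle is proving that decomposability is preserved under meets. This does not follow from pure syntax — it is a semantic statement about the theory of $\Mstruct$ — and it is exactly the place where the footnote in the contributions section hints that something was "already known" via \cite{hague:2020} and \cite{cosmadakis:2001}. I would prove it by taking a $\Pi'$-decomposition $\psi'$ and a $\Pi''$-decomposition $\psi''$ of $\varphi$, putting $\psi'$ into a normal form as a Boolean combination (say DNF over the atomic pieces) of formulas each supported on a block of $\Pi'$, similarly for $\psi''$, and then arguing that because $\varphi \equiv \psi' \equiv \psi''$, one can "intersect" the two decompositions: on each cell of the common refinement of the two decompositions' truth patterns, the formula $\varphi$ is determined, and one rebuilds an equivalent formula whose pieces are supported on blocks of $\Pi' \sqcap \Pi''$. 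The cleanest route is probably to invoke the characterization that $\varphi$ is $\Lambda$-decomposable iff $\varphi$ is invariant under the equivalence on valuations that "swaps $\Lambda$-blocks independently" (two valuations equivalent when each block's restriction is shared by a common valuation satisfying $\varphi$), and then observe that the intersection of two such invariances is the invariance for the meet partition — this is the conceptually transparent argument and sidesteps syntactic bookkeeping. I expect the write-up to hinge on stating that invariance characterization carefully (possibly deferring it to, or citing it from, the later sections of \zbrefsec{sec:vardec}) and then the meet argument is a short diagram chase on equivalence relations.
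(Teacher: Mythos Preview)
Your proposal is correct and takes essentially the same approach as the paper: the paper constructs $S$ via a binary tree whose root-to-leaf paths assign each block of $\Pi$ a length-$m$ bit-string (exactly your labeling $\ell_i$), shows $\bigsqcap_{\Pi'\in S}\Pi'=\Pi$, and then invokes the meet-preservation result. The only difference is that the paper does not prove the meet-preservation lemma itself but cites it as an existing result of Cosmadakis et al.\ (stated as Proposition~\ref{prop:pi1_pi2_dec_implies_dec_wrt_meet_of_pi1_pi2}), so your sketched invariance argument for it is additional work beyond what the paper supplies.
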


More precisely, the reduction works by computing, based on $ \Pi $, a collection $ S $ of binary partitions and then returning that $ \varphi $ is $ \Pi $-decomposable whenever $ \varphi $ is $ \Pi' $-decomposable for all $ \Pi' \in S $. For the proof of Proposition~\ref{prop:reduction_to_binary_partitions}, we will need the following result by Cosmadakis et al. \cite[Theorem 1]{cosmadakis:2001}.

\begin{proposition}[\cite{cosmadakis:2001}]
	\label{prop:pi1_pi2_dec_implies_dec_wrt_meet_of_pi1_pi2}
	Let $ \varphi $ be a formula, $ \Pi_1, \Pi_2 $ be partitions and $ \Pi := \Pi_1 \sqcap \Pi_2 $ be the meet of $ \Pi_1 $ and $ \Pi_2 $. Then $ \varphi $ is $ \Pi $-decomposable if $ \varphi $ is decomposable with respect to both $ \Pi_1 $ and $ \Pi_2 $.
\end{proposition}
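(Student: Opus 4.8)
\noindent\emph{Proof sketch.}
The plan is to trade the syntactic notion of $\Pi$-decomposability for a combinatorial finiteness condition on the model set and then check that this condition is inherited by meets of partitions. Throughout I use that $\Mstruct$ has quantifier elimination, so ``definable'' may be read as ``quantifier-free definable'', and that every rational is $\varnothing$-definable in $\Mstruct$. For a partition $\Lambda$ of a finite set $V$ of variables, call a set $R$ of valuations of $V$ a \emph{$\Lambda$-box} if $R=\prod_{B\in\Lambda}S_B$ for definable sets $S_B$ of valuations of $B$. Putting any $\Pi$-respecting formula into disjunctive normal form and grouping the literals of each disjunct by the block of $\Pi$ they mention shows that \emph{$\varphi$ is $\Pi$-decomposable if and only if $\ModelsOf(\varphi)$ is a finite union of $\Pi$-boxes}, and conversely a finite union of $\Pi$-boxes is defined by an obvious $\Pi$-respecting formula; I would record this reformulation first. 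For a block $B\in\Lambda$ and a valuation $\bar c$ of $V\setminus B$, write $R^B_{\bar c}:=\{\bar u : (\bar u,\bar c)\in R\}$ for the corresponding fibre.

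The technical core is the following equivalence, which I would isolate as a lemma, for any definable set $R$ of valuations of $V$: \emph{$R$ is a finite union of $\Lambda$-boxes if and only if, for every $B\in\Lambda$, the family $\{R^B_{\bar c}\}_{\bar c}$ of $B$-fibres is finite.} The forward direction is a one-line count: on a single $\Lambda$-box the $B$-fibre is either empty or equals the $B$-side (according to whether $\bar c$ lies in the product of the remaining sides), so a union of $t$ boxes has at most $2^t$ distinct $B$-fibres. For the reverse direction I would induct on $|\Lambda|$. Fix $B_1\in\Lambda$; by hypothesis there are only finitely many distinct $B_1$-fibres $F_1,\dots,F_m$, and each $F_\ell$ is definable, being $\{\bar u:\varphi(\bar u,\bar c_\ell)\}$ for a rational representative $\bar c_\ell$ (substituting a rational tuple into a quantifier-free formula and clearing denominators stays quantifier-free). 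Partitioning the valuations of $V\setminus B_1$ by which $F_\ell$ they induce gives definable classes $G_\ell$ with $R=\bigcup_\ell F_\ell\times G_\ell$; a short computation expresses every fibre of $G_\ell$ over a block $B_j$ as a Boolean combination of finitely many $B_j$-fibres of $R$, so each $G_\ell$ again has finitely many fibres over each of its blocks, the induction hypothesis decomposes each $G_\ell$ into finitely many $(\Lambda\setminus\{B_1\})$-boxes, and multiplying by $F_\ell$ completes the step.

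Granting the lemma, the proposition reduces to a finiteness count. Let $R:=\ModelsOf(\varphi)$ and fix a block $D$ of $\Pi=\Pi_1\sqcap\Pi_2$, say $D=B\cap C$ with $B\in\Pi_1$, $C\in\Pi_2$ and $D\neq\varnothing$; I must show $R$ has finitely many $D$-fibres. First, since $D\subseteq C$, splitting a valuation $\bar e$ of the variables outside $D$ into its restriction $\bar e_1$ to the variables outside $C$ and its restriction $\bar e_2$ to $C\setminus D$ gives $R^D_{\bar e}=(R^C_{\bar e_1})^D_{\bar e_2}$, so every $D$-fibre of $R$ is a $D$-fibre of some $C$-fibre of $R$. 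Second, since $\varphi$ is $\Pi_2$-decomposable and $C\in\Pi_2$, the lemma says $R$ has only finitely many $C$-fibres $Q_1,\dots,Q_N$. Third --- the one point where the two hypotheses interact --- since $\varphi$ is $\Pi_1$-decomposable, $R$ is a finite union of $\Pi_1$-boxes, and restricting a $\Pi_1$-box to a fibre over the variables outside $C$ produces a box over the partition $\{B'\cap C:B'\in\Pi_1,\ B'\cap C\neq\varnothing\}$ of $C$ (its sides are fibres of the original definable sides at a rational point, hence definable); hence each $Q_s$ is a finite union of boxes over this partition, one of whose blocks is exactly $D=B\cap C$, so by the easy direction of the lemma each $Q_s$ has finitely many $D$-fibres. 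Putting the three observations together, $R$ has at most $\sum_{s=1}^{N}\#\{D\text{-fibres of }Q_s\}$ distinct $D$-fibres, i.e.\ finitely many; letting $D$ range over $\Pi$ and applying the lemma once more yields that $R$ is a finite union of $\Pi$-boxes, i.e.\ $\varphi$ is $\Pi$-decomposable.

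The only genuinely delicate part is the reverse direction of the lemma, and within it the verification that all the sets produced along the induction ($F_\ell$, $G_\ell$, and the sides of the resulting boxes) are honestly definable; quantifier elimination and the $\varnothing$-definability of rationals in $\Mstruct$ make this work, but it is exactly this bookkeeping that does the real job. Conceptually, the key step is the third observation above: the $\Pi_1$-decomposition does not simplify $\varphi$ on the small block $D$ itself, but it does constrain the shape of the $C$-fibres of $\varphi$, and it is through those fibres --- rather than through any direct syntactic combination of the two given decompositions --- that $\Pi_1$- and $\Pi_2$-decomposability are brought to bear together.
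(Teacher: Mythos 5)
The paper does not actually prove this proposition: it is imported by citation from Cosmadakis, Kuper and Libkin \cite{cosmadakis:2001}, so there is no in-paper argument to compare against. Your proof is correct for the paper's setting $\Mstruct = (\Q, +, -, 0, 1, =, <, >)$, and it is essentially a reconstruction of the route taken in the cited literature: decomposability is equivalent to $\ModelsOf(\varphi)$ having, for each block, only finitely many fibres (Libkin's ``section'' characterization), and meet-closure then follows from exactly the count you give, where $\Pi_2$-decomposability bounds the number of $C$-fibres and $\Pi_1$-decomposability forces each such fibre to be a finite union of boxes one of whose blocks is $D = B \cap C$, so the $2^t$ bound from the easy direction applies. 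The two points you flag as delicate do go through here. First, since the paper's universe is $\Q$, every nonempty fibre class contains a rational representative, so the fibres $F_\ell$ are parameter-free quantifier-free definable after clearing denominators, and the classes $G_\ell$ are then definable via quantifier elimination applied to $\forall \bar u\,(\varphi(\bar u, \bar c) \leftrightarrow \theta_\ell(\bar u))$; picking the representatives before defining $\theta_\ell$ avoids any circularity. Second, your ``short computation'' for the fibres of $G_\ell$ is made precise by choosing, for each pair of distinct fibres $F_\ell \neq F_{\ell'}$, a test point in $F_\ell \triangle F_{\ell'}$: membership of $\bar c$ in $G_\ell$ is then decided by finitely many membership queries to $R$, which exhibits each fibre of $G_\ell$ as a fixed Boolean combination of fibres of $R$. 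The only caveat worth recording is that your argument is specific to structures with quantifier elimination in which every nonempty definable set has a $\varnothing$-definable point; this suffices for everything the paper uses the proposition for, though it is tied to the theory in a way the purely cited statement is not.
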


For example, if $ \Pi_1 = \{\{x\}, \{y, z\}\} $, $ \Pi_2 = \{\{x, y\}, \{z\}\} $ and we have a formula $ \varphi(x, y, z) $ that is decomposable with respect to both $ \Pi_1 $ and $ \Pi_2 $, Proposition~\ref{prop:pi1_pi2_dec_implies_dec_wrt_meet_of_pi1_pi2} yields that $ \varphi $ is $ \Pi $-decomposable for \[
\Pi = \Pi_1 \sqcap \Pi_2 = \{\{x\}, \{y\}, \{z\}\}
\] In other words, it follows that $ \varphi $ is monadically decomposable. Overall, Proposition~\ref{prop:pi1_pi2_dec_implies_dec_wrt_meet_of_pi1_pi2} is a powerful and convenient tool allowing us to talk about decomposability with respect to refinements of partitions, without actually decomposing the formula with respect to that refinement. We now apply this tool to prove Proposition~\ref{prop:reduction_to_binary_partitions}.

Let $ k := \left|\Pi\right| $, $ m := \lceil\log(\left|\Pi\right|)\rceil $ and write $ \Pi = \{X_1, \dots, X_k\} $. The general idea of our reduction is to construct $ S $ in such a way that blocks of different partitions have as few common elements as possible. This ensures that the meet $ \sqcap_{\Pi' \in S} \Pi' $ approaches $ \Pi $ quickly as we keep adding partitions to $ S $, which in turn allows us to apply Proposition~\ref{prop:pi1_pi2_dec_implies_dec_wrt_meet_of_pi1_pi2} and obtain that decomposability with respect to partitions from $ S $ implies $ \Pi $-decomposability. Intuitively, we express the parallel binary search of every $ X_i \in \Pi $ in the language of binary partition sets, using the meet operation of the partition lattice. We now make these intuitions precise by defining a binary tree where the stated binary search actually happens. Our definition is recursive: let $ \Pi $ be the root; for every node $ N $ already in the tree and containing at least two elements (i.e., $ \left|N\right| \ge 2 $), we partition $ N $ into equally-sized\footnote{More precisely, we allow the sizes of the parts to differ by at most one, this ensures that we can handle the case when $ |N| $ is odd.} and disjoint parts $ L \cup R = N $ and set $ L $ and $ R $ to be the left and the right children of $ N $, respectively. This construction is visualized in Figure~\ref{fig:reduction_to_binary_partitions} for $ k = 8 $.

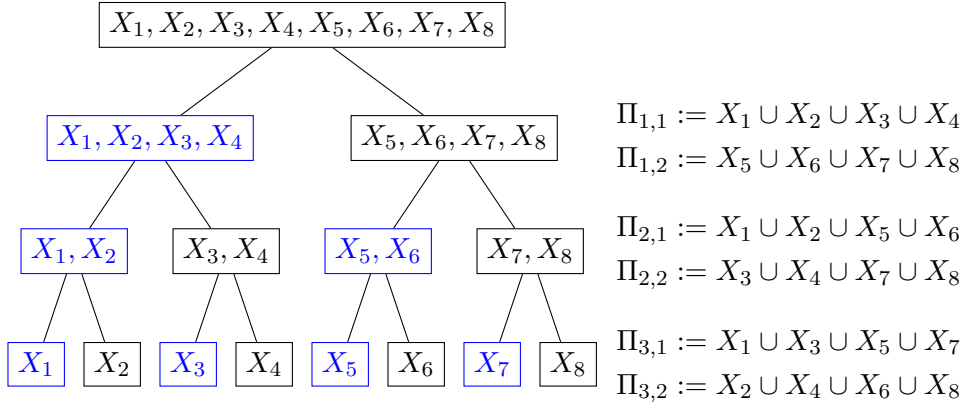
\begin{figure}[t]
	\centering
	\begin{tikzpicture}[
	every node/.style = {draw},
	grow = down,
	level distance=1.5cm,
	level 1/.style={sibling distance=4cm},
	level 2/.style={sibling distance=2cm},
	level 3/.style={sibling distance=1cm}]
	\node (root) {$ X_1, X_2, X_3, X_4, X_5, X_6, X_7, X_8 $}
	child {node [color=blue] {$ X_1, X_2, X_3, X_4 $}
		child {node [color=blue] {$ X_1, X_2 $}
			child {node[color=blue] {$ X_1 $}}
			child {node {$ X_2 $}}
		}
		child {node {$ X_3, X_4 $}
			child {node[color=blue] {$ X_3 $}}
			child {node {$ X_4 $}}
		}
	}
	child {node {$ X_5, X_6, X_7, X_8 $}
		child {node[color=blue] {$ X_5, X_6 $}
			child {node[color=blue] {$ X_5 $}}
			child {node {$ X_6 $}}
		}
		child {node {$ X_7, X_8 $}
			child {node[color=blue] {$ X_7 $}}
			child {node {$ X_8 $}}
		}
	};
	\begin{scope}[every node/.style={right}]
		% \path (root    -| root-2-2-2) ++(5mm,0) node {$\Rightarrow$} ++(5mm,0) node {test text};
		\path (root-1  -| root-2-2-2) ++(5mm,0) node {$\begin{aligned}
				\Pi_{1,1} &:= X_1 \cup X_2 \cup X_3 \cup X_4\\
				\Pi_{1,2} &:= X_5 \cup X_6 \cup X_7 \cup X_8
			\end{aligned}$};
		\path (root-1-1-| root-2-2-2) ++(5mm,0) node {$\begin{aligned}
				\Pi_{2,1} &:= X_1 \cup X_2 \cup X_5 \cup X_6\\
				\Pi_{2,2} &:= X_3 \cup X_4 \cup X_7 \cup X_8
			\end{aligned}$};
		\path (root-1-1-1-| root-2-2-2) ++(5mm,0) node {$\begin{aligned}
				\Pi_{3,1} &:= X_1 \cup X_3 \cup X_5 \cup X_7\\
				\Pi_{3,2} &:= X_2 \cup X_4 \cup X_6 \cup X_8
			\end{aligned}$};
	\end{scope}
\end{tikzpicture}
	\caption{Example of the tree construction given in the proof of Proposition~\ref{prop:reduction_to_binary_partitions}, for $ k = 8 $. The set $ \Pi_{i,1} $ is visualized using blue, whereas $ \Pi_{i,2} $ is the union of the remaining $ X_j $ sets at the $ i $-th level of the tree.}
	\label{fig:reduction_to_binary_partitions}
\end{figure}

Observe that the height\footnote{In this context, the height is the length of the longest root-to-leaf path.} of this tree is $ m $. For every $ i \in \{1, \dots, m\} $, we define $ \Pi_{i,1} $ to be the union of all elements in those nodes, which are left children at the $ i $-th level\footnote{The root is at level zero.} of the tree. Similarly, let $ \Pi_{i,2} $ be the union of all elements in those nodes, which are right children at the $ i $-th level of the tree. In Figure~\ref{fig:reduction_to_binary_partitions}, $ \Pi_{i,1} $ is visualized using blue, whereas $ \Pi_{i,2} $ is composed of the remaining children at the $ i $-th level. Finally, we set \[
S := \{\{\Pi_{i,1}, \Pi_{i,2}\} \mid i \in \{1, \dots, m\}\}
\] to be the desired family of binary partitions. Note that $ \Pi_{i,j} $ is always a union of some subset of $ \Pi $, i.e., of some $ X_l $'s. We now prove the following claim, which essentially states that for every $ X $ there exists a way of intersecting the $ \Pi_{i,j} $ sets for distinct $ i $, such that we get $ X $. Since at every step of this intersection, half of the available blocks of $ \Pi $ get ruled out, we referred to this idea as ``binary search'' above.

\begin{zbclaim}
	\label{claim:x_in_pi_corresp_path_in_partition_tree}
	For every $ X \in \Pi $, there exist $ b_1, \dots, b_m \in \{1, 2\} $ such that \[
	X = \Pi_{1,b_1} \cap \dots \cap \Pi_{m,b_m}
	\]
\end{zbclaim}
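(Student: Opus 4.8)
The plan is to trace the path from the root of the binary tree to the leaf labeled $\{X\}$, and read off the sequence $b_1, \dots, b_m$ from whether the path turns left or right at each level. First I would formalize this: since $X \in \Pi$ and the leaves of the tree are exactly the singletons $\{X_l\}$ for $X_l \in \Pi$, there is a unique leaf $N_m$ with $N_m = X$ (identifying $X$ with the node containing only $X$). Let $N_0 = \Pi, N_1, \dots, N_m$ be the nodes on the root-to-leaf path, so that $N_{i}$ is a child of $N_{i-1}$ for each $i \in \{1, \dots, m\}$. Define $b_i := 1$ if $N_i$ is the left child of $N_{i-1}$ and $b_i := 2$ if it is the right child. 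By construction, $X = N_m \subseteq N_i$ for every $i$, and $N_i$ is one of the nodes at level $i$ that is a left child (if $b_i = 1$) or a right child (if $b_i = 2$); hence $N_i \subseteq \Pi_{i, b_i}$ by the definition of $\Pi_{i,1}$ and $\Pi_{i,2}$ as unions over such nodes. Therefore $X \subseteq \Pi_{i,b_i}$ for all $i$, which gives the inclusion $X \subseteq \Pi_{1,b_1} \cap \dots \cap \Pi_{m,b_m}$.

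For the reverse inclusion, I would argue that the nodes at a fixed level $i$ have pairwise disjoint label sets (this is immediate by induction, since each split partitions a node into two disjoint parts), and that $\Pi_{i,1}$ and $\Pi_{i,2}$ are themselves disjoint and their union is all of $\bigcup \Pi$. Now suppose $x \in \Pi_{1,b_1} \cap \dots \cap \Pi_{m,b_m}$ for some element $x$ lying in a unique block $X_l \in \Pi$; I want to show $X_l = X$. Consider the root-to-leaf path $N_0', \dots, N_m'$ ending at the leaf $\{X_l\}$. I claim by induction on $i$ that $N_i' = N_i$: at the root this is trivial; for the inductive step, $x \in \Pi_{i, b_i}$ means $x$ lies in some level-$i$ node that is a left child (if $b_i=1$) or right child (if $b_i=2$), but $x$ also lies in $N_i'$, and since level-$i$ nodes are pairwise disjoint, $N_i'$ must be that node, which is a child of $N_{i-1}' = N_{i-1}$; since $N_i$ is the unique child of $N_{i-1}$ of the appropriate side (left/right according to $b_i$), we get $N_i' = N_i$. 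At $i = m$ this yields $\{X_l\} = N_m' = N_m = X$, so $x \in X$. Hence $\Pi_{1,b_1} \cap \dots \cap \Pi_{m,b_m} \subseteq X$, completing the equality.

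The main obstacle, such as it is, is being careful about the edge cases in the tree construction: when $|\Pi|$ is not a power of two, some root-to-leaf paths are shorter than $m$ (a singleton node has no children), so strictly speaking the path to a given leaf $\{X\}$ may have length $m' < m$. I would handle this by the convention that once a node is a singleton it is its own unique child at every subsequent level (equivalently, pad each short path by repeating the leaf), so that $N_i$ is well-defined for all $i \le m$ and still satisfies $X = N_m \subseteq N_i$; the choice of $b_i$ for the "padding" levels is then free (say $b_i = 1$), and the disjointness argument still goes through because a singleton node remains disjoint from the other level-$i$ nodes. With this convention in place, both inclusions above are routine, and the claim follows.
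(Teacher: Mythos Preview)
Your proof is correct and follows essentially the same idea as the paper's: both trace the root-to-leaf path to $X$ in the binary tree and use the pairwise disjointness of nodes at each level. The only cosmetic difference is in the reverse inclusion: the paper argues by induction on $i$ that the partial intersection $\bigcap_{j=1}^{i}\Pi_{j,b_j}$ coincides with (the union over) a unique level-$i$ node, whereas you pick an element $x$ in the full intersection and show inductively that the path to its block $X_l$ agrees level by level with the path to $X$. These are two phrasings of the same induction. Your remark about padding short paths when $|\Pi|$ is not a power of two is the right instinct; the paper handles the same edge case tersely with ``if $N$ is a leaf, there is nothing to prove,'' relying on its earlier assertion that $\Pi_{i,1}\cup\Pi_{i,2}$ always covers $\bigcup\Pi$.
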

\begin{proof}
	The existence of $ b_1, \dots, b_m $ satisfying $ X \subseteq \Pi_{1,b_1} \cap \dots \cap \Pi_{m,b_m} $ is clear by construction of the tree because we have ensured that $ X_1 \cup \dots \cup X_k = \Pi_{i,1} \cup \Pi_{i,2} $ holds for all $ i \in \{1, \dots, m\} $, so for every $ i $ it is possible to choose $ b_i $ such that $ X \subseteq \Pi_{i,b_i} $. Hence, it remains to show that $ X \supseteq \Pi_{1,b_1} \cap \dots \cap \Pi_{m,b_m} $. The $ b_1, \dots, b_m $ sequence can be thought of as a path in the tree, connecting its root $ \Pi $ with some leaf. We now fix $ b_1, \dots, b_m $ and argue by induction on $ i \in \{0, \dots, m\} $ that \begin{align}
		\label{eqn:pijbj_intersection_eq_n_union}
		\bigcap_{j=1}^i \Pi_{j,b_j} = \bigcup_{X \in N} X
	\end{align} where $ N $ is a unique node at the $ i $-th level of the tree. Intuitively, this means that $ \Pi_{1,b_1} \cap \dots \cap \Pi_{i,b_i} $ corresponds precisely to the node we reach after making $ i $ steps along the path defined by $ b_1, \dots, b_m $, starting at the root of the tree.
	
	\textbf{Base case}: For $ i = 0 $, the intersection is empty. Since all $ \Pi_{j,b_j} $ sets are subsets of $ \bigcup_{X \in \Pi} X $, we can treat the empty intersection as $ \bigcup_{X \in \Pi} X $ because this element is neutral.
	
	\textbf{Inductive step}: Let $ i \in \{1, \dots, m\} $ and $ N $ be the unique node at the $ (i-1) $-th level of the tree, existing by the induction hypothesis. If $ N $ is a leaf, there is nothing to prove. Otherwise, $ N $ has two children $ L $ and $ R $ such that $ \bigcup_{L' \in L} L' \subseteq \Pi_{i,1} $ and $ \bigcup_{R' \in R} R' \subseteq \Pi_{i,2} $. Let $ C \in \{L, R\} $ be the child such that $ \bigcup_{C' \in C} C' \subseteq \Pi_{i,b_i} $. Since $ C $ is non-empty and $ \Pi_{i,1} \cap \Pi_{i,2} = \varnothing $, it follows that $ C $ is unique. Moreover, by the induction hypothesis (``IH''), \begin{align*}
		\bigcap_{j=1}^i \Pi_{j,b_j} = \bigcap_{j=1}^{i-1} \Pi_{j,b_j} \cap \Pi_{i,b_i} \overset{\mathrm{IH}}{=} \Big(\bigcup_{X \in N} X\Big) \cap \Pi_{i,b_i} = \bigcup_{\substack{X \in N \\ X \subseteq \Pi_{i,b_i}}} X = \bigcup_{X \in C} X
	\end{align*} so we conclude that (\ref{eqn:pijbj_intersection_eq_n_union}) holds. Since the leaves of the tree correspond to elements of $ \Pi $, the claim follows.
\end{proof}

Since the blocks of $ \sqcap_{\Pi' \in S} \Pi' $ are non-empty intersections of the form \[
	\Pi_{1,b_1} \cap \dots \cap \Pi_{m,b_m}
\] for some $ b_1, \dots, b_m \in \{1, 2\} $, Claim~\ref{claim:x_in_pi_corresp_path_in_partition_tree} immediately implies \[
	\bigsqcap_{\Pi' \in S} \Pi' = \Pi
\] which in turn enables us to prove the correctness of the reduction:

\begin{zbclaim}
	\label{claim:phi_dec_iff_phi_dec_for_all_part_from_s}
	$ \varphi $ is $ \Pi $-decomposable if and only if $ \varphi $ is $ \Pi' $-decomposable for all $ \Pi' \in S $.
\end{zbclaim}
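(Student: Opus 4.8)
The plan is to prove the two directions of the equivalence separately, using the two propositions established earlier together with the identity $\bigsqcap_{\Pi' \in S} \Pi' = \Pi$ that was just derived from Claim~\ref{claim:x_in_pi_corresp_path_in_partition_tree}.

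For the forward direction, suppose $\varphi$ is $\Pi$-decomposable, and let $\Pi' \in S$ be arbitrary. By construction every block of $\Pi'$ is a union of blocks of $\Pi$, so $\Pi$ is a refinement of $\Pi'$, i.e.\ $\Pi \sqsubseteq \Pi'$. Thus any $\Pi$-respecting formula is automatically $\Pi'$-respecting (each predicate whose free variables lie within a block of $\Pi$ also has its free variables within the block of $\Pi'$ containing that block). Hence a $\Pi$-decomposition of $\varphi$ is already a $\Pi'$-decomposition of $\varphi$, so $\varphi$ is $\Pi'$-decomposable. This direction is essentially immediate; the only thing to spell out is the syntactic monotonicity of ``respecting'' under refinement.

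For the backward direction, suppose $\varphi$ is $\Pi'$-decomposable for every $\Pi' \in S$. Write $S = \{\Pi^{(1)}, \dots, \Pi^{(m)}\}$ (so $\Pi^{(i)} = \{\Pi_{i,1}, \Pi_{i,2}\}$). We argue by induction on $j$ that $\varphi$ is decomposable with respect to $\Pi^{(1)} \sqcap \dots \sqcap \Pi^{(j)}$. The base case $j = 1$ holds by hypothesis. For the inductive step, $\varphi$ is decomposable with respect to $\Pi^{(1)} \sqcap \dots \sqcap \Pi^{(j-1)}$ by the induction hypothesis and with respect to $\Pi^{(j)}$ by hypothesis, so Proposition~\ref{prop:pi1_pi2_dec_implies_dec_wrt_meet_of_pi1_pi2} (with the two partitions being $\Pi^{(1)} \sqcap \dots \sqcap \Pi^{(j-1)}$ and $\Pi^{(j)}$, which is legitimate since $\sqcap$ is commutative and associative) yields that $\varphi$ is decomposable with respect to $(\Pi^{(1)} \sqcap \dots \sqcap \Pi^{(j-1)}) \sqcap \Pi^{(j)} = \Pi^{(1)} \sqcap \dots \sqcap \Pi^{(j)}$. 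Taking $j = m$ gives decomposability with respect to $\bigsqcap_{\Pi' \in S} \Pi'$, which equals $\Pi$ by the identity derived above. Hence $\varphi$ is $\Pi$-decomposable.

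The main obstacle is the backward direction, and within it the crucial ingredient is Proposition~\ref{prop:pi1_pi2_dec_implies_dec_wrt_meet_of_pi1_pi2} of Cosmadakis et al., which is exactly what makes ``decomposability with respect to two partitions implies decomposability with respect to their meet'' true. Everything else is bookkeeping: checking that $\sqcap$ being associative and commutative lets us fold the meet up one partition at a time, and invoking the previously established $\bigsqcap_{\Pi' \in S}\Pi' = \Pi$. Once Claim~\ref{claim:x_in_pi_corresp_path_in_partition_tree} is in hand (as it is), the equivalence itself is short.
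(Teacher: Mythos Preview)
Your proof is correct and follows essentially the same approach as the paper: the forward direction uses that $\Pi$ refines every $\Pi' \in S$, and the backward direction applies Proposition~\ref{prop:pi1_pi2_dec_implies_dec_wrt_meet_of_pi1_pi2} to the partitions in $S$ together with $\bigsqcap_{\Pi' \in S} \Pi' = \Pi$. You merely spell out explicitly the induction that folds the binary meet of Proposition~\ref{prop:pi1_pi2_dec_implies_dec_wrt_meet_of_pi1_pi2} into an $m$-ary one, which the paper leaves implicit.
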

\begin{proof}
	The ``$ \Rightarrow $'' direction holds because $ \Pi $ is a refinement of every $ \Pi' \in S $, meaning that every $ \Pi $-decomposition is in particular a $ \Pi' $-decomposition. The converse implication follows from Proposition~\ref{prop:pi1_pi2_dec_implies_dec_wrt_meet_of_pi1_pi2} applied to partitions from $ S $, because $ \sqcap_{\Pi' \in S} \Pi' = \Pi $.
\end{proof}

Finally, observe that it is clearly possible to compute $ S $ in polynomial time. This completes the proof of Proposition~\ref{prop:reduction_to_binary_partitions}.

\begin{zbexample}
	We give a concrete example of how Proposition~\ref{prop:reduction_to_binary_partitions} can be applied to reduce the variable decomposition problem for arbitrary partitions to the same problem for binary ones.
	Let \[
		\Pi = \{\{x_1, x_2\}, \{x_3\}, \{x_4\}, \{x_5, x_6\}, \{x_7, x_8, x_9\}, \{x_{10}\}\}
	\] In order to achieve the tree construction given in the proof of Proposition~\ref{prop:reduction_to_binary_partitions}, we first assign indices to elements of $ \Pi $ as indicated in the following table: \begin{center}
		\begin{tabular}{|c||c|c|c|c|c|c|}
			\hline
			$ X \in \Pi $ & $ \{x_1, x_2\} $ & $ \{x_3\} $ & $ \{x_4\} $ & $ \{x_5, x_6\} $ & $ \{x_7, x_8, x_9\} $ & $ \{x_{10}\} $ \\
			\hline
			Index & $ 000 $ & $ 001 $ & $ 010 $ & $ 011 $ & $ 100 $ & $ 101 $ \\ \hline
		\end{tabular}
	\end{center} Note that we represent the indices in binary. Next, we look at the $ i $-th bit and set $ \Pi_{i,1} $ to be the union of those sets whose indices have $ i $-th bit equal zero. Similarly, $ \Pi_{i,2} $ is the union of those $ X \in \Pi $ which were assigned an index having a one at the $ i $-th position. That is, we obtain the collection \begin{align*}
	S = \{&\{\{x_1, x_2, x_3, x_4, x_5, x_6\}, \{x_7, x_8, x_9, x_{10}\}\},\\
		&\{\{x_1, x_2, x_3, x_7, x_8, x_9, x_{10}\}, \{x_4, x_5, x_6\}\},\\
		&\{\{x_1, x_2, x_4, x_7, x_8, x_9\}, \{x_3, x_5, x_6, x_{10}\}\}\}
	\end{align*} of binary partitions. By Proposition~\ref{prop:reduction_to_binary_partitions}, testing the $ \Pi $-decomposability of a formula $ \varphi $ is equivalent to determining whether $ \varphi $ is $ \Pi' $-decomposable for all $ \Pi' \in S $.
\end{zbexample}

\subsection{Normal form}
\label{sec:vardec:normal_form}

We now describe a simple variant of the disjunctive normal form (DNF) which we use to represent the given formula $ \varphi $. Let $ p_1, \dots, p_k $ be the predicates appearing in $ \varphi $ and define \[
	\DisjTrue := \{\{p_1^{P_1}, \dots, p_k^{P_k}\} \mid (P_1, \dots, P_k) \in \{P_<, P_>, P_=\}^k\}
\] We refer to elements of $ \DisjTrue $ as disjuncts. Clearly, $ \top \equiv \bigvee_{\Gamma \in \DisjTrue} \Gamma $. Without loss of generality, we assume the input formula to be of the form $ \varphi = \bigvee_{\Gamma \in \DisjPhi} \Gamma $ for some set $ \DisjPhi \subseteq \DisjTrue $. Define furthermore $ \DisjNegPhi := \DisjTrue \setminus \DisjPhi $. This normal form (unlike standard DNF) has the property that any two distinct disjuncts cannot agree on any model. Using this fact, we show the following property of our normal form, which we will need in subsequent proofs.

\begin{lemma}
	\label{lemma:every_disj_either_true_or_false}
	For $ \varphi \in \QFLRA $, every satisfiable disjunct $ \Theta \in \Sat(\DisjTrue) $ entails either $ \varphi $ or $ \neg\varphi $.
\end{lemma}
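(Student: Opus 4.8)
The key observation is that the normal form $\DisjTrue$ was constructed so that any two distinct disjuncts have disjoint model sets: if $\Theta_1 \neq \Theta_2$ are disjuncts, then they differ on the predicate symbol assigned to some $p_i$, and since for fixed terms $t_1, t_2$ the three predicates $P_<(t_1,t_2)$, $P_>(t_1,t_2)$, $P_=(t_1,t_2)$ are pairwise mutually exclusive, no valuation can satisfy both $\Theta_1$ and $\Theta_2$. Moreover, for any valuation $\nu$, exactly one disjunct in $\DisjTrue$ is satisfied by $\nu$ — namely the one that records, for each $p_i$, which of $<, >, =$ actually holds under $\nu$. So the model sets of the disjuncts in $\DisjTrue$ partition $\Q^n$ (with the satisfiable ones partitioning the nonempty part). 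The plan is to exploit this partition together with the fact that $\varphi \equiv \bigvee_{\Gamma \in \DisjPhi} \Gamma$ to argue that $\ModelsOf(\Theta)$ is wholly contained in either $\ModelsOf(\varphi)$ or $\ModelsOf(\neg\varphi)$.

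First I would fix a satisfiable disjunct $\Theta \in \Sat(\DisjTrue)$ and split into two cases according to whether $\Theta \in \DisjPhi$ or $\Theta \in \DisjNegPhi = \DisjTrue \setminus \DisjPhi$. In the first case, since $\Theta$ is one of the disjuncts of $\varphi$, any model of $\Theta$ is a model of $\varphi$, so $\Theta \models \varphi$ and we are done. In the second case I would show $\Theta \models \neg\varphi$: take any valuation $\nu$ with $\Mstruct_\nu \models \Theta$, and suppose toward a contradiction that $\Mstruct_\nu \models \varphi$. Then $\Mstruct_\nu \models \Gamma$ for some $\Gamma \in \DisjPhi \subseteq \DisjTrue$. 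But $\nu$ satisfies at most one disjunct of $\DisjTrue$ (by the mutual exclusivity argument above), so $\Gamma = \Theta$, contradicting $\Theta \notin \DisjPhi$. Hence $\Mstruct_\nu \models \neg\varphi$, i.e., $\Theta \models \neg\varphi$.

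The only real content is the pairwise-disjointness (equivalently, "at most one satisfied disjunct") claim, which I would state and justify explicitly: if $\nu$ satisfied two distinct disjuncts $\{p_1^{P_1}, \dots, p_k^{P_k}\}$ and $\{p_1^{Q_1}, \dots, p_k^{Q_k}\}$ with $(P_i) \neq (Q_i)$, pick $i$ with $P_i \neq Q_i$; writing $p_i = R(t_1, t_2)$, the valuation would make both $P_i(t_1, t_2)$ and $Q_i(t_1, t_2)$ true, impossible since distinct symbols among $\{<, >, =\}$ denote disjoint relations on $\Q$. I do not expect any genuine obstacle here — the lemma is essentially immediate once the defining property of the normal form is spelled out; the main thing to be careful about is cleanly handling the two cases $\Theta \in \DisjPhi$ and $\Theta \notin \DisjPhi$ and not conflating "$\Theta$ entails $\varphi$" with "$\Theta$ is a disjunct of $\varphi$" in the wrong direction.
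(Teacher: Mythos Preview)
Your proof is correct and follows essentially the same approach as the paper: both arguments hinge on the mutual exclusivity of distinct disjuncts in $\DisjTrue$ and the representation $\varphi \equiv \bigvee_{\Gamma \in \DisjPhi}\Gamma$. The only cosmetic difference is the case split---you branch on whether $\Theta \in \DisjPhi$ or $\Theta \in \DisjNegPhi$, whereas the paper branches on whether $\Theta \wedge \varphi$ or $\Theta \wedge \neg\varphi$ is satisfiable---but both routes immediately reduce to the disjointness property and are equivalent.
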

\begin{proof}
	Clearly, $ \Theta \wedge \varphi $ or $ \Theta \wedge \neg\varphi $ must be satisfiable. Assume first the existence of $ v \models \Theta \wedge \varphi $. Then \[
		v \models \varphi \equiv \bigvee_{\Gamma \in \DisjPhi} \Gamma
	\] and hence there must exist $ \Gamma \in \DisjPhi $ such that $ v \models \Gamma \models \varphi $. Since $ v \models \Theta $, it follows that $ \Gamma \wedge \Theta $ is satisfiable. By the definition of the normal form, no two disjuncts can agree on any model unless they are equal, so it follows that $ \Gamma = \Theta $. Hence, $ \Theta \models \varphi $. The remaining case, when $ \Theta \wedge \neg\varphi $ is satisfiable, analogously leads to the conclusion that $ \Theta \models \neg\varphi $.
\end{proof}

The introduced normal form also has the property that, for every $ \Gamma \in \DisjTrue $, changing predicate symbols of predicates appearing in $ \Gamma $ yields another disjunct in $ \DisjTrue $. Intuitively, we can think of this as the ability to travel between elements of $ \DisjTrue $ by changing predicate symbols of predicates. Later, we will also need some notation to be able to address the set of those disjuncts from $ \DisjTrue $ which agree on some model with a given predicate set $ \Gamma $. We capture this in the following definition.

\begin{zbdefinition}[Disjuncts of a predicate set]
	For a formula $ \varphi \in \QFLRA $ and a set of predicates $ \Gamma $, we call \[
	\DisjOf{\Gamma} := \Sat(\{\Gamma \cup \Omega \mid \Omega \in \DisjTrue\})
	\] the set of disjuncts of $ \Gamma $.
\end{zbdefinition}

\subsection{Reduction to a covering problem}
\label{sec:vardec:reduction}

Intuitively, it turns out that studying the variable decomposition problem reduces itself to the study of formulas defining sets that entail either $ \varphi $ or $ \neg\varphi $, assuming $ \varphi $ is $ \Pi $-decomposable. In other words, we want to study formulas $ \psi $ such that the $ \Pi $-decomposability of $ \varphi $ causes the existence of just a single model $ v \models \psi \wedge \varphi $ to make all models of $ \psi $ satisfy $ \varphi $. We call this effect ``model flooding'' and capture it precisely in the following definition.

\begin{zbdefinition}[$ (\varphi, \Pi) $-MFF]
	Let $ \varphi, \psi \in \QFLRA $ be formulas. We say that $ \psi $ is a $ (\varphi, \Pi) $-model-flooding formula ($ (\varphi, \Pi) $-MFF) whenever the implication \[
	\varphi\text{ is }\Pi\text{-decomposable} \Rightarrow (\psi \models \varphi \vee \psi \models \neg\varphi)
	\] holds.
	%if the satisfiability of $ \psi \wedge \varphi $ implies $ \psi \models \varphi $ assuming $ \varphi $ is $ \Pi $-decomposable.
\end{zbdefinition}

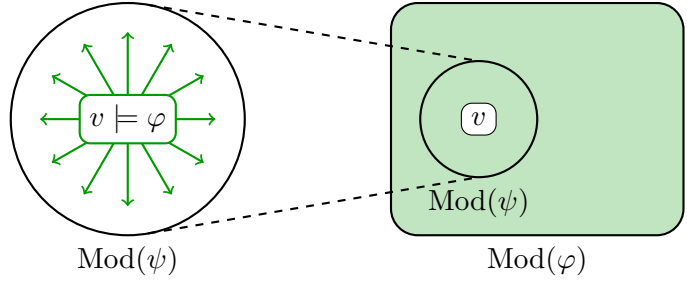
\begin{figure}[t]
	\centering
	\begin{tikzpicture}
	\begin{scope}[shift={(-4em,0)}]
		
		\coordinate (vmodphipos) at (0, 0);
		
		\node[circle, minimum size=8em, thick, draw=black, label={below:$ \ModelsOf(\psi) $}] at (vmodphipos) {};
		
		\coordinate (lefttop) at (80:4em);
		\coordinate (leftbottom) at (280:4em); % 3*90 degrees + 10
		
		\draw[->, thick, color=OliveGreen] (vmodphipos) -- (0:3em);
		\draw[->, thick, color=OliveGreen] (vmodphipos) -- (30:3em);
		\draw[->, thick, color=OliveGreen] (vmodphipos) -- (60:3em);
		\draw[->, thick, color=OliveGreen] (vmodphipos) -- (90:3em);
		\draw[->, thick, color=OliveGreen] (vmodphipos) -- (120:3em);
		\draw[->, thick, color=OliveGreen] (vmodphipos) -- (150:3em);
		\draw[->, thick, color=OliveGreen] (vmodphipos) -- (180:3em);
		\draw[->, thick, color=OliveGreen] (vmodphipos) -- (210:3em);
		\draw[->, thick, color=OliveGreen] (vmodphipos) -- (240:3em);
		\draw[->, thick, color=OliveGreen] (vmodphipos) -- (270:3em);
		\draw[->, thick, color=OliveGreen] (vmodphipos) -- (300:3em);
		\draw[->, thick, color=OliveGreen] (vmodphipos) -- (330:3em);
		
		\draw (vmodphipos) node [rounded corners, thick, draw=OliveGreen, fill=white] {$ v \models \varphi $};
	\end{scope}
	
	\begin{scope}[shift={(8em,0)}]
		% set defined by phi
		\filldraw[thick, rounded corners=10, fill=OliveGreen, fill opacity=0.25] (-3em, -4em) -- (-3em, 4em) -- (7em, 4em) -- (7em, -4em) -- cycle;
		
		\coordinate (vpos) at (0, 0);
		
		\node[circle, minimum size=4em, thick, draw=black, label={below:$ \ModelsOf(\psi) $}] at (vpos) {};
		
		\draw (vpos) node [rounded corners, draw=black, fill=white] {$ v $};
		
		\node[anchor=north,thick] at (2em, -4em) {$ \ModelsOf(\varphi) $};
	\end{scope}
	
	\coordinate (righttop) at ($ (vpos) + (0, 2em) $);
	\coordinate (rightbottom) at ($ (vpos) - (0, 2em) $);

	\draw[thick, dashed] (lefttop) -- (righttop);
	\draw[thick, dashed] (leftbottom) -- (rightbottom);
\end{tikzpicture}
	\caption{Venn diagram illustrating the behavior captured by the definition of a $ (\varphi, \Pi) $-model-flooding formula, where $ \varphi $ is $ \Pi $-decomposable. The green arrows indicate that the single model $ v \models \varphi \wedge \psi $ implies that all models of $ \psi $ must also satisfy $ \varphi $.}
	\label{fig:mff_intuition}
\end{figure} A set-theoretic way of thinking about $ (\varphi, \Pi) $-model-flooding formulas is visualized in Figure~\ref{fig:mff_intuition}. Observe that $ \psi $ is a $ (\varphi, \Pi) $-MFF if and only if $ \psi $ is a $ (\neg\varphi, \Pi) $-MFF. Building on top of this definition, we now define the central problem we will focus on in the remainder of \zbrefsec{sec:vardec} because the variable decomposition problem reduces to it.

\begin{zbproblem}[Covering problem]
	\label{problem:cover}
	Given a formula $ \varphi \in \QFLRA $, a partition $ \Pi $ and a predicate set $ \Gamma \in \DisjTrue $, compute a $ \Pi $-respecting $ (\varphi, \Pi) $-MFF $ \psi $ such that $ \Gamma \models \psi $.
\end{zbproblem}

We will later show that, for any formula $ \varphi $ and $ \Gamma \in \DisjTrue $, there indeed exists a solution $ \psi $ to the covering problem, which we call \textit{the covering} of $ \Gamma $. But first, we establish a fundamental connection between the variable decomposition and the covering problem.

\begin{theorem}
	\label{thm:reduction}
	Fix a formula $ \varphi \in \QFLRA $ and an oracle solving the covering problem. Let $ \psi_\Gamma $ be the solution to the covering problem for $ \Gamma $, produced by the oracle. Then $ \varphi $ is $ \Pi $-decomposable if and only if $ \psi_\Gamma \models \varphi $ for all $ \Gamma \in \Sat(\DisjPhi) $.
\end{theorem}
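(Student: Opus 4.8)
The plan is to route both implications through the single candidate $\Pi$-decomposition
\[
\psi \;:=\; \bigvee_{\Gamma \in \Sat(\DisjPhi)} \psi_\Gamma,
\]
which is $\Pi$-respecting: each $\psi_\Gamma$ is $\Pi$-respecting by the defining property of the covering problem, and a disjunction of $\Pi$-respecting formulas is again a Boolean combination of predicates with free variables inside blocks of $\Pi$. Restricting the disjunction to $\Sat(\DisjPhi)$ rather than all of $\DisjPhi$ is deliberate and will matter below.

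For the ``only if'' direction, I would assume $\varphi$ is $\Pi$-decomposable and fix an arbitrary $\Gamma \in \Sat(\DisjPhi)$. Since $\psi_\Gamma$ solves the covering problem for $\Gamma$, it is a $(\varphi,\Pi)$-MFF, so $\Pi$-decomposability of $\varphi$ yields the dichotomy $\psi_\Gamma \models \varphi$ or $\psi_\Gamma \models \neg\varphi$. To discard the second case, pick a valuation $v \models \Gamma$ (it exists as $\Gamma$ is satisfiable); then $v \models \psi_\Gamma$ because $\Gamma \models \psi_\Gamma$, and $v \models \varphi$ because $\Gamma \in \DisjPhi$ gives $\Gamma \models \varphi$ (recall $\varphi \equiv \bigvee_{\Gamma' \in \DisjPhi}\Gamma'$). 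Hence $\psi_\Gamma \wedge \varphi$ is satisfiable, so $\psi_\Gamma \not\models \neg\varphi$, forcing $\psi_\Gamma \models \varphi$. As $\Gamma$ was arbitrary, this proves the right-hand side.

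For the ``if'' direction, I would assume $\psi_\Gamma \models \varphi$ for every $\Gamma \in \Sat(\DisjPhi)$ and show $\varphi \equiv \psi$, which makes $\psi$ a $\Pi$-decomposition. The inclusion $\ModelsOf(\psi) \subseteq \ModelsOf(\varphi)$ is immediate: a model of $\psi$ satisfies some $\psi_\Gamma$ with $\Gamma \in \Sat(\DisjPhi)$, hence satisfies $\varphi$ by hypothesis. For the reverse inclusion, take $v \models \varphi \equiv \bigvee_{\Gamma \in \DisjPhi}\Gamma$; then $v \models \Gamma$ for some $\Gamma \in \DisjPhi$, and this $\Gamma$ is satisfiable (witnessed by $v$), so $\Gamma \in \Sat(\DisjPhi)$; now $\Gamma \models \psi_\Gamma$ gives $v \models \psi_\Gamma$, hence $v \models \psi$. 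Therefore $\ModelsOf(\varphi) = \ModelsOf(\psi)$.

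The argument is almost entirely bookkeeping once $\psi$ is identified, so I do not expect a genuine obstacle; the step I would take most care with is the satisfiability side-conditions. One must disjoin only over $\Sat(\DisjPhi)$ so that every $\psi_\Gamma$ entering $\psi$ is actually ``witnessed'' by a model of $\Gamma$, which is exactly what lets the MFF dichotomy be triggered in the first direction; this also transparently covers the degenerate case $\varphi$ unsatisfiable, where $\Sat(\DisjPhi) = \varnothing$ and $\psi$ is the empty disjunction, still $\Pi$-respecting and equivalent to $\varphi$. I would also note explicitly that the MFF property is used only in the ``only if'' direction, whereas the ``if'' direction relies solely on $\psi_\Gamma$ being $\Pi$-respecting and satisfying $\Gamma \models \psi_\Gamma$.
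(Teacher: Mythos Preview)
Your proof is correct and follows essentially the same approach as the paper: for the forward direction you use $\Gamma \models \psi_\Gamma$ together with $\Gamma \models \varphi$ to witness satisfiability of $\psi_\Gamma \wedge \varphi$ and then invoke the MFF dichotomy, and for the backward direction you exhibit $\bigvee_{\Gamma \in \Sat(\DisjPhi)} \psi_\Gamma$ as a $\Pi$-decomposition equivalent to $\varphi$. The paper's proof is terser but structurally identical; your additional remarks on the role of $\Sat(\cdot)$ and the degenerate unsatisfiable case are sound and do not deviate from the intended argument.
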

\begin{proof}
	``$ \Rightarrow $'': Since $ \Sat(\DisjPhi) \ni \Gamma \models \psi_\Gamma $, it follows that $ \psi_\Gamma \wedge \varphi $ is satisfiable for all $ \Gamma $. Hence, by definition of $ (\varphi, \Pi) $-MFF it follows that $ \psi_\Gamma \models \varphi $ for all $ \Gamma $.
	
	``$ \Leftarrow $'': The formula $ \varphi $ is $ \Pi $-decomposable because \[
		\varphi \equiv \bigvee_{\Gamma \in \Sat(\DisjPhi)} \psi_\Gamma
	\] More precisely, the $ \models $ entailment follows from the fact that $ \Gamma \models \psi_\Gamma $, whereas the converse entailment holds by the assumption that $ \psi_\Gamma \models \varphi $.
\end{proof}

Theorem~\ref{thm:reduction} gives us an alternative and novel characterization of $ \Pi $-decomposable formulas, while its proof yields the following reduction from the variable decomposition problem to the covering problem.

\begin{itemize}
	\item For each $ \Gamma \in \Sat(\DisjPhi) $, compute the covering $ \psi_\Gamma $ and check whether $ \psi_\Gamma \models \varphi $.
	\item If the entailment does not hold for some $ \psi_\Gamma $, output that the formula is not $ \Pi $-decomposable and terminate. Otherwise, $ \varphi $ is $ \Pi $-decomposable, and a possible $ \Pi $-decomposition is $ \bigvee_{\Gamma \in \Sat(\DisjPhi)} \psi_\Gamma $.
\end{itemize}

We now give a simple example illustrating how this reduction can be used to solve the variable decomposition problem. Our goal is also to give some important insight into how the covering formulas can be constructed.

\begin{zbexample}
	\label{example:reduction_ex}
	Consider $ \Pi := \{\{x\}, \{y\}\} $ and \[
		\varphi := x + y \neq 2 \vee x - y \neq 0 \in \QFLRA
	\] Bringing $ \varphi $ to the assumed normal form yields \begin{align*}
	\DisjPhi = \{\{x + y < 2, x - y < 0\}, \{x + y < 2, x - y > 0\}, \\
	\{x + y > 2, x - y < 0\}, \{x + y > 2, x - y > 0\}\}
	\end{align*} We first compute a covering for $ \Gamma := \{x + y < 2, x - y < 0\} \in \Sat(\DisjPhi) $. Since at this point we have not yet developed the necessary theory needed to prove the correctness of the covering, we first introduce a geometric way of thinking about coverings, rely on it to analyze the present formula and defer the rigorous proof to a later theorem. The set of models $ \ModelsOf(\Gamma) $ can be thought of as being the intersection of two half-spaces visualized in Figure~\ref{fig:reduction_ex:geom}. \begin{figure}[t]
	\centering
	\hfill\begin{subfigure}{0.3\textwidth}
	\centering
	\begin{tikzpicture}[scale=0.68]
	\node[blue,anchor=west,thick] at (1.5, 1.5) {$x+y<2$};
	\node[red,anchor=west,thick] at (1.5, 0.5) {$x-y<0$};
	
	% plot x from -2 to 3
	% plot y from -1 to 3
	
	\coordinate (topleft) at (-2, 3);
	\coordinate (topright) at (3, 3);
	\coordinate (botleft) at (-2, -1);
	\coordinate (botright) at (3, -1);
	
	% grid
	\draw[->] (-2,0) -- (3,0) coordinate[label = {below:$x$}] (xmax);
	\draw[->] (0,-1) -- (0,3.2) coordinate[label = {right:$y$}] (ymax);
	
	% for x + y < 2 (red)
	\coordinate (redtop) at (-1, 3);
	\coordinate (redbot) at (3, -1);
	
	\draw[thick, dashed, red] (redtop) -- (redbot);
	
	\fill[fill=red,opacity=0.2] (topleft) -- (redtop) -- (redbot) -- (botleft) -- cycle;
	
	% for x - y < 0
	\coordinate (bluebot) at (-1, -1);
	\coordinate (bluetop) at (3, 3);
	
	\draw[thick, dashed, blue] (bluebot) -- (bluetop);
	\fill[fill=blue,opacity=0.2] (topleft) -- (bluetop) -- (bluebot) -- (botleft) -- cycle;
	
	% Gamma
	\node[fill=white, rounded corners] (gamma) at (-1, 1) {$ \Gamma $};
	\end{tikzpicture}
	\caption{\label{fig:reduction_ex:geom}}
\end{subfigure} \hfill
\begin{subfigure}{0.3\textwidth}
	\centering
	\begin{tikzpicture}[scale=0.75]
	% x + y = 2 top and bottom points
	\coordinate (redtop) at (-1, 3);
	\coordinate (redbot) at (3, -1);
	
	% x - y = 0 top and bottom points
	\coordinate (bluebot) at (-1, -1);
	\coordinate (bluetop) at (3, 3);
	
	% center
	\coordinate (intpoint) at (1, 1);
	
	\draw[thick, dashed] (redtop) -- (redbot); % external lines excluding point
	\draw[thick, dashed] (bluetop) -- (bluebot);
	
	% \filldraw[blue] (intpoint) circle (0.05); % excluded point
	
	\filldraw[ultra thin,red,pattern=north east lines,pattern color=red, draw opacity=0.7] (redtop) -- (intpoint) -- (bluebot) -- cycle;
	\node[fill=white, rounded corners] (gamma) at (-0.5,1) {$ \Gamma $};
	
	% top left model flooding arrow
	\draw[->, ultra thick] (-0.5, 1.5) .. controls +(up:7mm) and +(left:7mm) .. (0.5, 2.5);
	% bottom left model flooding arrow
	\draw[->, ultra thick] (-0.5, 0.5) .. controls +(down:7mm) and +(left:7mm) .. (0.5, -0.5);
	% top right model flooding arrow
	\draw[->, ultra thick] (1.5, 2.5) .. controls +(right:7mm) and +(up:7mm) .. (2.5, 1.5);
	% buttom right model flooding arrow
	\draw[->, ultra thick] (1.5, -0.5) .. controls +(right:7mm) and +(down:7mm) .. (2.5, 0.5);
	
	\end{tikzpicture}
	\caption{\label{fig:reduction_ex:gamma}}
\end{subfigure} \hfill
\begin{subfigure}{0.3\textwidth}
	\centering
	\begin{tikzpicture}[scale=0.75]
	% x + y = 2 top and bottom points
	\coordinate (redtop) at (-1, 3);
	\coordinate (redbot) at (3, -1);
	
	% x - y = 0 top and bottom points
	\coordinate (bluebot) at (-1, -1);
	\coordinate (bluetop) at (3, 3);
	
	% center & mdf borders
	\coordinate (intpoint) at (1, 1);
	\coordinate (mfdltop) at (-1, 3);
	\coordinate (mfdlbot) at (-1, -1);
	\coordinate (mfdrtop) at (3, 3);
	\coordinate (mfdrbot) at (3, -1);
	
	\draw[thick, dashed] (redtop) -- (redbot); % external lines excluding point
	\draw[thick, dashed] (bluetop) -- (bluebot);
	
	\filldraw[ultra thin,red,pattern=north east lines,pattern color=red, draw opacity=0.7] (mfdltop) -- (mfdrtop) -- (mfdrbot) -- (mfdlbot) -- cycle;
	
	% excluded point
	\filldraw[red] (intpoint) circle (0.15);
	\filldraw[blue] (intpoint) circle (0.13);
	
	\node[fill=white, rounded corners] at (-0.25,1) {$ \psi_\Gamma $};
	% \node[anchor=east,thick] at (-0.75, 1) {$ \psi_\Gamma $};
	\end{tikzpicture}
	\caption{\label{fig:reduction_ex:covering}}
\end{subfigure}\hfill
	\caption{Figure~\ref{fig:reduction_ex:geom} is a visualization of the sets defined by $ x + y < 2 $ and $ x - y < 0 $. The set of models of $ \Gamma := \{x + y < 2, x - y < 0\} $ is visualized in Figure~\ref{fig:reduction_ex:gamma}. The arrows illustrate the model flooding $ \Gamma $ causes. Figure~\ref{fig:reduction_ex:covering} visualizes a possible correct covering $ \psi_\Gamma = x \neq 1 \vee y \neq 1 $ for $ \Gamma $, which excludes the blue point.}
	\label{fig:reduction_ex}
\end{figure} At this point, it is worth noting that, in general, a conjunction of LRA predicates defines an \textit{evenly convex polyhedral set} \cite{klee:2007:econvexsetsbasics, rodriguez:2017}; thus, the geometry we are dealing with in the present example is trivial compared to the general case. Our goal is to compute a covering $ \psi_\Gamma $ satisfying the properties listed in the following table, which also explains the way one can think about each of the properties geometrically.
	
	\begin{center}
		\begin{tabular}{|P{0.3\textwidth}|P{0.65\textwidth}|}
			\hline
			Property of $ \psi_\Gamma $ & Geometric intuition \\
			\hline\hline
			$ \psi_\Gamma $ respects $ \Pi $ & $ \ModelsOf(\psi_\Gamma) $ is a finite union of (closed or open) rectangles such that every edge is aligned along either the $ x $ or the $ y $ axis \\
			\hline
			$ \Gamma \models \psi_\Gamma $ & $ \psi_\Gamma $ defines a shape containing $ \ModelsOf(\Gamma) $ \\
			\hline
			$ \psi_\Gamma $ is a $ (\varphi, \Pi) $-MFF & If $ \varphi $ is $ \Pi $-decomposable, then the entire shape defined by $ \psi_\Gamma $ must be contained in $ \ModelsOf(\varphi) $ \\ \hline
		\end{tabular}
	\end{center}

	Clearly, $ \psi_\Gamma = \top $ satisfies the first two conditions. However, $ \top $ is not a $ (\varphi, \Pi) $-MFF because $ \varphi $ is $ \Pi $-decomposable but is not a tautology. The general idea is to start with $ \top $ and iteratively transform the current formula in a way that excludes from its set of models those ``parts'' that need not be included in $ \ModelsOf(\varphi) $ for $ \varphi $ to be $ \Pi $-decomposable. In other words, in order to obtain a correct covering $ \psi_\Gamma $, it is crucial to ensure that, if $ \psi_\Gamma $ agrees with some $ \Lambda \in \DisjTrue $ on a model, then $ \Gamma \models \varphi $ indeed implies $ \Lambda \models \varphi $ under the assumption that $ \varphi $ is $ \Pi $-decomposable. Whenever this implication \[
	\varphi\text{ is }\Pi\text{-decomposable} \wedge \Gamma \models \varphi \implies \Lambda \models \varphi
	\] holds, we say that $ \Gamma $ \textit{causes model flooding into} $ \Lambda $. Note that, by definition and Lemma~\ref{lemma:every_disj_either_true_or_false}, if $ \psi_\Gamma $ is a $ (\varphi, \Pi) $-MFF such that $ \Gamma \models \psi_\Gamma $, then $ \Gamma $ causes model flooding into all $ \Lambda \in \DisjTrue $ agreeing on some model with $ \psi_\Gamma $.
	
	Turning to the example at hand, we can geometrically observe that $ \Gamma $ causes model flooding into every $ \Lambda \in \Sat(\DisjTrue) \setminus \{\{x + y = 2, x - y = 0\}\} $, due to the following intuitive reasons. First, $ \Gamma $ causes model flooding into $ \{x + y < 2, x - y = 0\} $ and $ \{x + y = 2, x - y < 0\} $ (see Figures \ref{fig:reduction_ex:geom} and \ref{fig:reduction_ex:gamma}) because the lines defined by $ x + y = 2 $ and $ x - y = 0 $ are not aligned along the $ x $ or $ y $ axis, meaning that it is impossible to ``distinguish'' the mentioned predicate sets from $ \Gamma $ using a $ \Pi $-respecting formula, as any such formula can only define a set which is a finite union of rectangles with edges aligned either along the $ x $ axis, or along the $ y $ axis. Then, by the same argumentation, the above predicate sets cause model flooding into $ \{x + y < 2, x - y > 0\} $ and $ \{x + y > 2, x - y < 0\} $, respectively. This model flooding effect is visualized using the two black arrows in the left half of Figure~\ref{fig:reduction_ex:gamma}. By the same reasoning, model flooding continues in the right part of the plot (see the two arrows in the right half of Figure~\ref{fig:reduction_ex:gamma}). Hence, \[
		\{x + y = 2, x - y = 0\} \in \DisjTrue
	\] is the only disjunct which $ \Gamma $ does not cause model flooding into. This disjunct corresponds to the point $ (1, 1) $. Note that the $ \Pi $-decomposability of $ \varphi $ does not depend on whether $ \varphi $ evaluates to true or false under this point. In this case $ (1, 1)^\transp \not\models \varphi $, so it is particularly important to exclude this point from $ \top $, which we can do by simply writing \[
		\psi_\Gamma := x \neq 1 \vee y \neq 1
	\] Due to the simplicity of the present example, no further steps are required, and this formula is a correct covering for $ \Gamma $ (see Figure~\ref{fig:reduction_ex:covering}). Moreover, we do not need to cover disjuncts of $ \varphi $ other than $ \Gamma $ because every such disjunct already entails $ \psi_\Gamma $. Since $ \psi_\Gamma \models \varphi $, by Theorem~\ref{thm:reduction} we conclude that $ \varphi $ is $ \Pi $-decomposable and a possible $ \Pi $-decomposition is $ \psi_\Gamma = (x \neq 1 \vee y \neq 1) $.
\end{zbexample}

To sum up, Theorem~\ref{thm:reduction} gives a method for reducing the variable decomposition problem to the covering problem. Thus, in order to obtain algorithms for the former, in the remainder of \zbrefsec{sec:vardec} we mostly study the latter.

\subsection{$ \Pi $-simple and $ \Pi $-complex predicate sets}
\label{sec:vardec:pisimple_picomplex}

The easiest way to produce a correct covering $ \psi_\Gamma $ of $ \Gamma $ is to simply come up with a $ \Pi $-respecting formula defining precisely $ \Gamma $. Clearly, this is not always possible, but whenever it is, it is certainly advantageous to produce such a covering $ \psi_\Gamma \equiv \Gamma $ because in this case we do not have to perform any analysis of other disjuncts, which, as we will see later, is a nontrivial and costly operation. This motivates the following definition of a $ \Pi $-simple predicate set, which can be thought of as an under-approximation of the set of $ \Pi $-decomposable predicate sets.

\begin{zbdefinition}[$ \Pi $-simple, $ \Pi $-complex]
	Let $ \Gamma $ be a predicate set and $ \Pi = \{X, Y\} $ be a partition of $ \{x_1, \dots, x_n\} $. We say that \begin{itemize}
		\item $ \Gamma $ fixes $ x_i $ if $ \Gamma $ is satisfiable and all models of $ \Gamma $ agree on $ x_i $, that is, assign $ x_i $ the same value;
		\item $ \Gamma $ fixes a set $ Z \subseteq \{x_1, \dots, x_n\} $ of variables if $ \Gamma $ fixes every $ x_i \in Z $;
		\item $ \Gamma $ is $ \Pi $-simple if $ \Gamma $ is unsatisfiable or if $ \Gamma $ fixes $ X $ or $ Y $;
		\item $ \Gamma $ is $ \Pi $-complex if $ \Gamma $ is not $ \Pi $-simple, that is, if $ \Gamma $ is satisfiable but fixes neither $ X $ nor $ Y $;
	\end{itemize} and let $ \Fixes(\Gamma) $ denote the set of all variables fixed by $ \Gamma $, if $ \Gamma $ is satisfiable, and $ \Fixes(\Gamma) := \varnothing $ otherwise.
\end{zbdefinition}

We now argue that, as motivated above, $ \Gamma $ being $ \Pi $-simple indeed implies that $ \Gamma $ is $ \Pi $-decomposable. Moreover, it is possible to efficiently test whether $ \Gamma $ is $ \Pi $-simple or not. First check whether $ \Gamma $ is satisfiable and conclude that $ \Gamma $ is $ \Pi $-simple if not. Otherwise, let $ \Mstruct_\nu $ be an arbitrary model of $ \Gamma $. Each variable $ x $ can be identified as fixed by testing whether $ \Gamma \land x \neq \nu(x) $ is unsatisfiable, so repeating this analysis for every variable gives us a method of deciding whether $ \Gamma $ is $ \Pi $-simple or not.
We now argue that this method has a polynomial running time. Clearly, checking unsatisfiability of $ \Gamma \land x \neq \nu(x) $ is equivalent to testing whether $ \Gamma \land x < \nu(x) $ and $ \Gamma \land x > \nu(x) $ are both unsatisfiable. These conjunctions are linear programs, for which there are well-known polynomial-time algorithms based on interior-point and ellipsoid methods (see Karmarkar's algorithm \cite[Chapter 15]{karmarkar:1984, schrijver:1998:theoryofintegerandlinearprogramming} and Khachiyan's algorithm \cite[Theorem 4.18]{khachiyan:1979, korte:2012:combinatorialoptimization}). More precisely, these results talk about linear programs where only non-strict inequalities are allowed, but the stated algorithms can be adapted to support strict inequalities via, for example, the reduction given in \cite[pp. 217, 218]{bradley:2007}.

If $ \Gamma $ is $ \Pi $-simple and fixes $ X \in \Pi $ or $ Y \in \Pi $, then it is possible to compute a $ \Pi $-respecting formula $ \psi $ by substituting $ \nu(x) $ into $ x $ and adding the $ x = \nu(x) $ predicate, for every variable $ x $ fixed by $ \Gamma $. Clearly, this transformation preserves equivalence. For a $ \Pi $-simple predicate set $ \Gamma $, we define $ \decsimple(\Pi, \Gamma) $ to be the $ \Pi $-decomposition of $ \Gamma $ obtained as just described.

It follows that it suffices to study the covering problem only for the case when $ \Gamma $ is $ \Pi $-complex. Accordingly, we hereinafter assume that, unless stated otherwise, $ \Gamma $ is $ \Pi $-complex.

\subsection{Analyzing dependencies between variables}
\label{sec:vardec:analyzing_dep_between_vars}

In order to construct provably correct coverings in general, we need to study under what conditions one predicate set causes model flooding into another one. For this, we need to precisely understand what ``connections'' between variables can potentially be established by some conjunction of predicates. More precisely, the idea is to define a metric which allows us to compare arbitrary satisfiable predicate sets, say, $ \Gamma' $ and $ \Gamma'' $. This metric must have the property that $ \Gamma' $ being equivalent to $ \Gamma'' $ with respect to this metric implies that $ \Gamma' $ causes model flooding into $ \Gamma'' $ under certain additional assumptions.

% More precisely, the idea is to define a metric which allows us to compare satisfiable predicate sets, such that $ \Gamma' $ being equivalent to $ \Gamma'' $ with respect to this metric implies that $ \Gamma' $ causes model flooding into $ \Gamma'' $, under some additional, less important assumptions.

\subsubsection{Linear dependencies}

It turns out that over quantifier-free linear real arithmetic it suffices to analyze only those ``connections'' between variables which are established by equality predicates. We call these connections linear dependencies because they can be enforced by linear constraints and characterized using vectorspaces.

\begin{zbdefinition}
	\label{def:lindep}
	Let $ \Gamma $ be a satisfiable set of predicates, $ V $ be a $ \Q $-vectorspace and $ h : \Q^n \rightarrow V $ be a mapping. For every $ v \in \Q^n $ we define \[
	\LinDep_{h, v}(\Gamma) := \{U \le V \mid h(\ModelsOf(\Gamma) - v) \subseteq U^{\bot}\}
	\] where $ U^{\bot} $ is the orthogonal complement of $ U $ in $ V $.
\end{zbdefinition}

The intuition behind this definition is as follows. Think of $ \ModelsOf(\Gamma) $ as a subset of $ \Q^n $. Our goal is to analyze the linear dependencies ``contained'' in $ \Gamma $, which we do with respect to some mapping $ h $ and vector $ v $. The mapping is needed to be able to apply a transformation to the set before analyzing its linear dependencies, whereas $ v $ can be thought of as an offset, by which we translate $ \ModelsOf(\Gamma) $. 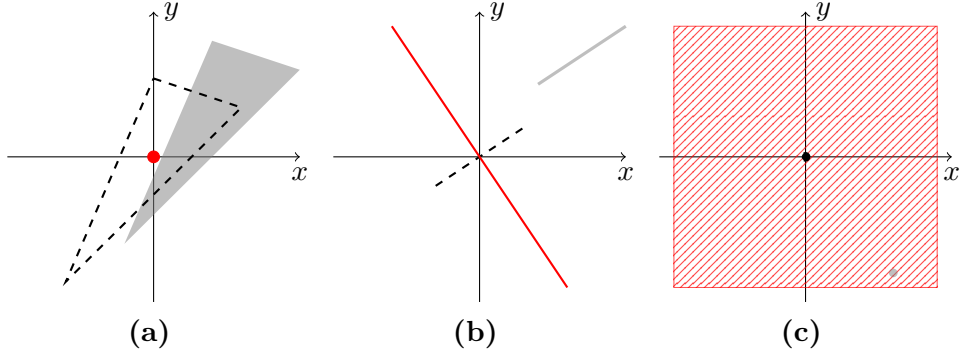
\begin{figure}[t]
	\centering
	\newcommand{\drawaxis}{
  \draw[->] (-5em,0) -- (5em,0) coordinate[label = {below:$x$}] (xmax);
  \draw[->] (0,-5em) -- (0,5em) coordinate[label = {right:$y$}] (ymax);
}

\hfill\begin{subfigure}[b]{0.3\textwidth}
\centering
\begin{tikzpicture}
	\drawaxis
	
	\fill[shapebeforeshift]
	(2em,4em) -- (5em,3em) -- (-1em, -3em) -- cycle;
	\draw[shapeaftershift,transform canvas={xshift=-2em,yshift=-1.3em}]
	(2em,4em) -- (5em,3em) -- (-1em, -3em) -- cycle;
	
	\filldraw[red] (0, 0) circle (0.2em);
\end{tikzpicture}
\caption{~}
\label{fig:lindep_def:a}
\end{subfigure} \hfill
\begin{subfigure}[b]{0.3\textwidth}
\centering
\begin{tikzpicture}
	\drawaxis
	
	\draw[very thick, shapebeforeshift]
		(2em,2.5em) -- (5em,4.5em);
	\draw[shapeaftershift, transform canvas={xshift=-3.5em,yshift=-2em}]
		(2em,1em) -- (5em,3em);
	\draw[thick, red] (3em, -4.5em) -- (-3em, 4.5em);
\end{tikzpicture}

\caption{~}
\label{fig:lindep_def:b}
\end{subfigure} \hfill
\begin{subfigure}[b]{0.3\textwidth}
\centering
\begin{tikzpicture}
	\drawaxis
	
	\filldraw[ultra thin,red,pattern=north east lines,pattern color=red, draw opacity=0.7] 
	(-4.5em, -4.5em) -- (-4.5em, 4.5em) -- (4.5em, 4.5em) -- (4.5em, -4.5em) -- cycle;
	
	\node[shapebeforeshift, draw,circle,inner sep=1,fill] at (3em,-4em) {};
	\node[shapeaftershift, draw, circle,inner sep=1,fill] at (0,0) {};
\end{tikzpicture}

\caption{~}
\label{fig:lindep_def:c}
\end{subfigure}\hfill
	\caption{Three examples showing the intuition behind the way we capture linear dependencies contained in $ \Gamma $, with respect to $ h = \id $ and $ v \models \Gamma $, when we have only two dimensions. The gray shapes represent $ \ModelsOf(\Gamma) $, while the dashed versions depict the translation $ \ModelsOf(\Gamma) - v $. The red regions visualize the largest vectorspace in $ \LinDep_{\id, v}(\Gamma) $ with respect to inclusion.}
	\label{fig:lindep_def}
\end{figure} In Figure~\ref{fig:lindep_def} we give three examples illustrating the way we capture linear dependencies with respect to $ h := \id $ and $ v \models \Gamma $. The only vectorspace containing $ \ModelsOf(\Gamma) - v $ (i.e., the dashed shape) of Figure~\ref{fig:lindep_def:a} is $ \Q^2 $, so \[
	\LinDep_{\id, v}(\Gamma) = \{\gen{0}\}
\] Intuitively, this means that the triangle in 2D does not establish any linear dependencies between variables. More precisely, we can think of $ \gen{0} \in \LinDep_{\id, v}(\Gamma) $ as an element corresponding to a predicate entailed by $ \Gamma $, which is in this case a trivial tautology predicate. Indeed, it is impossible to come up with a nontrivial equality predicate $ p \not\equiv \top $ such that $ \Gamma \models p $. Intuitively, the orthogonal complement in Definition~\ref{def:lindep} is needed to obtain this correspondence between predicates entailed by $ \Gamma $ and elements of $ \LinDep_{\id, v}(\Gamma) $.

Unlike the triangle in Figure~\ref{fig:lindep_def:a}, the line segment depicted in Figure~\ref{fig:lindep_def:b} does establish a linear dependency. More precisely, $ \ModelsOf(\Gamma) - v $ is contained in two vectorspaces -- $ \Q^2 $ and the one-dimensional vectorspace whose orthogonal complement is the red line. If we denote this orthogonal complement by $ U $, then it follows that \[
	\LinDep_{\id, v}(\Gamma) = \{\gen{0}, U\}
\] Once again, $ \gen{0} \in \LinDep_{\id, v}(\Gamma) $ corresponds to the fact that $ \Gamma $ entails any predicate that is a tautology, while $ U \in \LinDep_{\id, v}(\Gamma) $ contains the information about a nontrivial equality predicate $ q $ entailed by $ \Gamma $. Intuitively, $ q $ asserts that the gray line segment depicted in Figure~\ref{fig:lindep_def:b} is orthogonal to the red line, that is, $ U $.

Finally, the third example of Figure~\ref{fig:lindep_def:c} shows what happens when $ \Gamma $ defines a point: we translate it to the origin and, since every vectorspace contains zero, it follows that \[
	\LinDep_{\id, v}(\Gamma) = \{U \le \Q^2\}
\] To sum up, these examples show that $ \LinDep_{\id, v}(\Gamma) $ is, in essence, a measure of how ``flat'' $ \ModelsOf(\Gamma) $ is: the ``flatter'' the shape defined by $ \Gamma $ is, the more elements $ \LinDep_{\id, v}(\Gamma) $ contains.

For the sake of convenience, we define $ \LinDep_h(\Gamma) $ to be $ \LinDep_{h, v}(\Gamma) $ for some $ v \models \Gamma $ (where $ \Gamma $ is a satisfiable predicate set). Indeed, as we prove in Appendix~\ref{sec:app:lindep_facts} (Lemma~\ref{lemma:lindep_gamma_v1_equals_lindep_gamma_v2}), this definition is well-defined in the sense that $ \LinDep_{h, v}(\Gamma) $ does not depend on the particular choice of $ v \models \Gamma $.

For a block $ Z \in \Pi = \{X, Y\} $, let $ \pi_Z: \Q^n \rightarrow \Q^{|Z|} $ be the projection map outputting precisely those components of the input vector, which correspond to $ Z \in \Pi $. That is, $ \pi_X $ and $ \pi_Y $ are the homomorphisms defined by the \[
	\Pi_X := \begin{pNiceMatrix}[first-col, first-row, extra-margin=2pt, code-for-first-col=\scriptscriptstyle, code-for-first-row=\scriptscriptstyle, columns-width=2em]
		  & 1 & 2 & \cdots & \left|X\right| & 1 & \cdots & \left|Y\right| \\
		1 & 1 & 0 & \cdots & 0 & 0 & \cdots & 0 \\
		2 & 0 & 1 & \cdots & 0 & 0 & \cdots & 0 \\
		\vdots & \vdots & \vdots & \ddots & \vdots & \vdots & \ddots & \vdots \\
		\left|X\right| & 0 & 0 & \cdots & 1 & 0 & \cdots & 0
		\CodeAfter \tikz \node [highlight = (1-1) (4-4)] {} ;
	\end{pNiceMatrix} \in \Q^{\left|X\right| \times n}
\] and \[
	\Pi_Y := \begin{pNiceMatrix}[first-col, first-row, extra-margin=2pt, code-for-first-col=\scriptscriptstyle, code-for-first-row=\scriptscriptstyle, columns-width=2em]
		& 1 & \cdots & \left|X\right| & 1 & 2 & \cdots & \left|Y\right| \\
		1 & 0 & \cdots & 0 & 1 & 0 & \cdots & 0 \\
		2 & 0 & \cdots & 0 & 0 & 1 & \cdots & 0 \\
		\vdots & \vdots & \ddots & \vdots & \vdots & \vdots & \ddots & \vdots \\
		\left|Y\right| & 0 & \cdots & 0 & 0 & 0 & \cdots & 1
		\CodeAfter \tikz \node [highlight = (1-4) (4-7)] {} ;
	\end{pNiceMatrix} \in \Q^{\left|Y\right| \times n}
\] matrices, respectively. The highlighted parts are identity sub-matrices, while all remaining entries are zero. For example, if $ X = \{x_1, x_2\} $, $ Y = \{y_1, y_2, y_3\} $ and $ v := (1, -3, -7, 2, 5, 3)^\transp $, then $ \pi_X(v) = (1, -3)^\transp $ simply outputs the first and second components, whereas $ \pi_Y(v) = (-7, 2, 5, 3)^\transp $ returns the remaining entries.

Since in the following we will be extensively studying the $ \LinDep_{h}(\Gamma) $ sets for various satisfiable predicate sets $ \Gamma $ but (in most cases) restricted to $ h = \pi_Z $, it makes sense to give $ \LinDep_{\pi_Z}(\Gamma) $ a separate name: for $ Z \in \Pi $, we refer to $ \LinDep_{\pi_Z}(\Gamma) $ as the set of \textit{$ Z $-dependencies} of $ \Gamma $. Whenever \[
	\LinDep_{\pi_Z}(\Gamma) \subseteq \LinDep_{\pi_Z}(\Gamma')
\] holds for $ Z \in \Pi $ and satisfiable predicate sets $ \Gamma $ and $ \Gamma' $, we say that $ \Gamma' $ has at least as many $ Z $-dependencies as $ \Gamma $. If strict set inclusion holds, then we say that $ \Gamma' $ has more $ Z $-dependencies than $ \Gamma $. In this case we may also say that $ \Gamma $ has fewer $ Z $-dependencies than $ \Gamma' $.

% Suppose we have two satisfiable predicate sets $ \Theta_1 $ and $ \Theta_2 $ which disagree on exactly one element which is a predicate changing its symbol from equality to strict inequality as we go from $ \Theta_1 $ to $ \Theta_2 $.

\subsubsection{Overspilling}

We now describe the first application of the $ \LinDep_h $ metric. Recall that one of the central properties of the normal form we introduced is that one can always ``travel'' between any two disjuncts by merely changing predicate symbols. We now want to study what happens when we make only one step during this traveling process, that is, when we change the symbol of just a single predicate. More precisely, we determine the conditions under which one predicate set causes model flooding into another one, but do this only for predicate sets $ \Theta_1 $ and $ \Theta_2 $ differing in exactly one element. Moreover, we impose even stronger assumptions on $ \Theta_1 $ and $ \Theta_2 $ which we formalize in the following definition.

\begin{zbdefinition}[``$ p $-next to'']
	Let $ \Theta_1 $ and $ \Theta_2 $ be predicate sets. We say that $ \Theta_1 $ is $ p $-next to $ \Theta_2 $ if there exist $ p \in \Theta_1^= $ and $ P \in \{<, >\} $ such that \begin{itemize}
		\item $ \Theta_1 = \Theta \cup \{p\} $
		\item $ \Theta_2 = \Theta \cup \{p^P\} $
	\end{itemize} holds for some predicate set $ \Theta $.
\end{zbdefinition}

Roughly speaking, a predicate set $ \Theta_1 $ is $ p $-next to $ \Theta_2 $ if $ \Theta_1 $ agrees with $ \Theta_2 $ on all predicates except exactly one equality $ p \in \Theta_1^= $. If both predicate sets are satisfiable, then it is also useful to think about $ \Theta_1 $ being $ p $-next to $ \Theta_2 $ geometrically as shown in Figure~\ref{fig:overspilling_pnextintuition}.

\begin{figure}[t]
	\centering
	\input{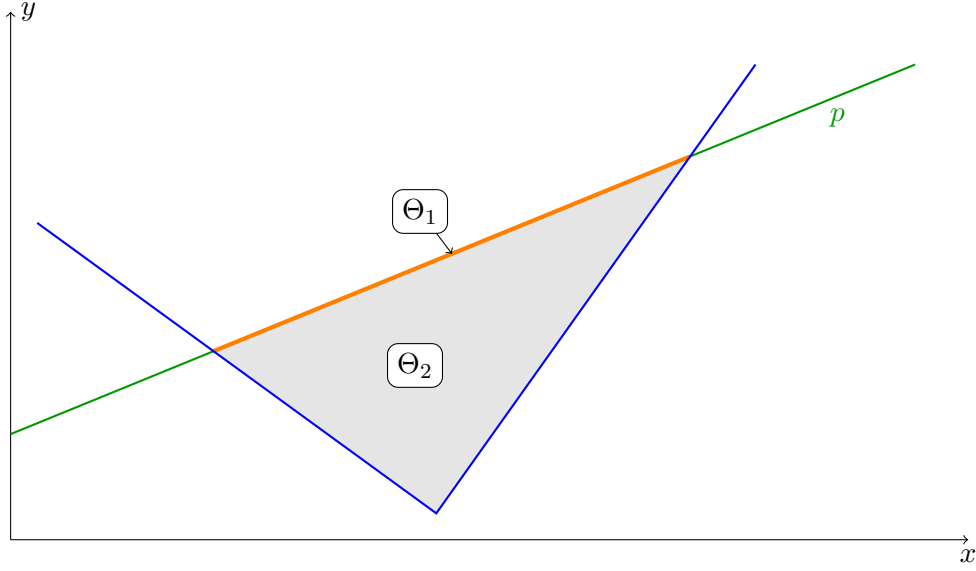}
	\caption{A two-dimensional example of how the sets defined by predicate sets $ \Theta_1 $ and $ \Theta_2 $ may look like geometrically, in the case when $ \Theta_1 $ is $ p $-next to $ \Theta_2 $. Blue lines depict the borders defined by linear constraints on which $ \Theta_1 $ and $ \Theta_2 $ agree.}
	\label{fig:overspilling_pnextintuition}
\end{figure}

Recall that in Example~\ref{example:reduction_ex}, we argued that $ \Gamma $ causes model flooding into all $ \Lambda \in \Sat(\DisjTrue) \setminus \{\{x + y = 2, x - y = 0\}\} $. In essence, our reasoning relied on the vague geometric intuition that ``non-aligned'' borders cannot be expressed using $ \Pi $-respecting formulas. At this point, we have finally introduced all the necessary concepts needed to make this intuition precise by giving the exact conditions under which one predicate set causes model flooding into another one. However, we first show a slightly different statement, which is stronger because it is independent of whether $ \Theta_1 $ and $ \Theta_2 $ are elements of $ \DisjTrue $ or not.

\begin{theorem}[Overspilling]
	\label{thm:overspilling}
	Let $ \varphi \in \QFLRA $ be a formula, $ \Pi $ be a binary partition and $ \Theta_1, \Theta_2 $ be predicate sets such that
	\begin{itemize}
		\item $ \Theta_1 $ and $ \Theta_2 $ are both $ \Pi $-complex
		\item $ \Theta_1 $ is $ p $-next to $ \Theta_2 $ on a predicate $ p \in \Theta_1^= $
		\item $ \LinDep_{\pi_Z}(\Theta_1) = \LinDep_{\pi_Z}(\Theta_2) $ holds for all $ Z \in \Pi $
	\end{itemize}
	Then $ \varphi \wedge \Theta_2 $ is satisfiable if $ \varphi $ is $ \Pi $-decomposable and $ \Theta_1 \models \varphi $.
\end{theorem}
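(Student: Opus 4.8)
The plan is to exploit $\Pi$-decomposability to fix a $\Pi$-respecting $\psi$ with $\psi \equiv \varphi$, and then to transform a carefully chosen model of $\Theta_1$ into a model of $\Theta_2$ without changing the ``$\Pi$-signature'' that controls the truth of $\psi$. First I would normalize: since every quantifier-free block-formula is itself a Boolean combination of atoms of that block, I may write $\psi = B(q_1, \dots, q_s, r_1, \dots, r_t)$ for a propositional $B$, where each $q_i$ has its variables in $X$ and each $r_j$ in $Y$. For $v \in \Q^n$ and an atom $q_i$ of the form $\mu_i(\vec{z}) \bowtie_i d_i$ let its \emph{sign at $v$} be the element of $\{-,0,+\}$ given by $\mu_i(v)-d_i$, and set $\sigma_X(v) := (\operatorname{sign}_v q_1, \dots, \operatorname{sign}_v q_s)$, $\sigma_Y(v)$ analogously. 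The elementary but crucial observation is that if two points agree on $\sigma_X$ and on $\sigma_Y$ then they agree on the truth value of every $q_i$ and $r_j$, hence on $\psi \equiv \varphi$; note that no ``product'' property of $\psi$ is being claimed here (indeed an XNOR-type $B$ kills any naive swapping argument), only this signature-equality fact.

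Next I would make the geometry of ``$p$-next to'' explicit. Normalizing $p$ to $\ell(\vec{z}) = 0$ and (after possibly replacing $\ell$ by $-\ell$) taking $p^P$ to be $\ell(\vec{z}) < 0$, we have $\ModelsOf(\Theta_1) = \ModelsOf(\Theta) \cap \{\ell = 0\}$ and $\ModelsOf(\Theta_2) = \ModelsOf(\Theta) \cap \{\ell < 0\}$, both nonempty by satisfiability of the $\Pi$-complex sets, with $\ModelsOf(\Theta)$ convex and $\ModelsOf(\Theta_1), \ModelsOf(\Theta_2)$ relatively open (over $\Q$) in their affine hulls. Then I would translate the hypothesis into affine-hull language: writing $W_Z(\Gamma)$ for the direction space of $\operatorname{aff}(\pi_Z(\ModelsOf(\Gamma)))$, one has $\LinDep_{\pi_Z}(\Gamma) = \{U \le \Q^{|Z|} \mid U \le W_Z(\Gamma)^\bot\}$, so $\LinDep_{\pi_Z}(\Theta_1) = \LinDep_{\pi_Z}(\Theta_2)$ is equivalent to $W_Z(\Theta_1) = W_Z(\Theta_2)$. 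Since moreover $\ModelsOf(\Theta_1) \subseteq \ModelsOf(\Theta)$ and $\operatorname{aff}(\ModelsOf(\Theta)) = \operatorname{aff}(\ModelsOf(\Theta_2))$, these two affine subspaces of $\Q^{|Z|}$ share a point ($\pi_Z(u)$ for any $u \models \Theta_1$) and a direction, hence coincide; call the common affine hull $H_Z$, for $Z \in \{X,Y\}$.

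The heart of the argument is the perturbation. I would pick a sufficiently generic rational $u \models \Theta_1$: one that avoids every predicate-hyperplane $\{q_i = 0\}$ (resp. $\{r_j = 0\}$) which does not already contain all of $H_X$ (resp.\ $H_Y$); such $u$ exists because we remove only finitely many proper affine subspaces from the nonempty relatively open rational set $\ModelsOf(\Theta_1)$. For this $u$, the $X$-atoms vanishing at $u$ are exactly those vanishing identically on $H_X \supseteq \pi_X(\ModelsOf(\Theta_2))$, and symmetrically for $Y$. Now fix any $w_0 \models \Theta_2$ and set $w_\lambda := \lambda u + (1-\lambda) w_0$ for $\lambda \in [0,1)\cap\Q$; convexity of $\ModelsOf(\Theta)$ gives $w_\lambda \models \Theta$, and $\ell(w_\lambda) = (1-\lambda)\,\ell(w_0) < 0$ gives $w_\lambda \models \Theta_2$. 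As $\lambda \uparrow 1$ we have $w_\lambda \to u$, so for $\lambda$ close enough to $1$ every nonvanishing $X$- and $Y$-atom keeps its sign at $w_\lambda$, while the vanishing ones stay zero there by the previous sentence; thus $\sigma_X(w_\lambda) = \sigma_X(u)$ and $\sigma_Y(w_\lambda) = \sigma_Y(u)$, whence $w_\lambda \models \psi \iff u \models \psi$. Since $u \models \Theta_1 \models \varphi \equiv \psi$ forces $u \models \psi$, we get $w_\lambda \models \varphi$, and $w_\lambda \models \Theta_2$, so $\varphi \wedge \Theta_2$ is satisfiable.

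I expect the main obstacle to be the bookkeeping around the generic choice of $u$ together with the exact use of the $\LinDep$ hypothesis: one must ensure that \emph{every} coordinate-hyperplane onto which $u$ is forced (because some block-atom vanishes on all of $\ModelsOf(\Theta_1)$) is also a hyperplane containing all of $\ModelsOf(\Theta_2)$ — this is precisely what the equalities $W_Z(\Theta_1) = W_Z(\Theta_2)$ (i.e., $H_Z$ common) deliver, and it is what lets the perturbation $w_\lambda$ preserve the entire $\Pi$-signature rather than only the part coming from non-degenerate atoms. Minor additional care is needed to keep everything rational (rational $u$, rational $\lambda$; the ``sufficiently close'' step is legitimate since all data are polyhedral over $\Q$) and to dispose of the degenerate sub-cases where $\ell$ is constant on $\operatorname{aff}(\ModelsOf(\Theta))$, which either make $\Theta_2$ unsatisfiable (excluded) or make the perturbation trivial.
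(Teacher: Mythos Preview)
Your proof is correct and takes a genuinely different route from the paper's. The paper works entirely on the algebraic side: it picks a basis $(u_1,\dots,u_k)$ of the direction space $W_1$ of $\ModelsOf(\Theta_1^=)$, chooses $u\in W_2\setminus W_1$, and from these explicitly manufactures vectors $w_1\in W_1$ and $w_2\in W_2\setminus W_1$ with $\pi_X(w_2)=0$ and $\pi_Y(w_1)=\pi_Y(w_2)$. It then brings the $\Pi$-decomposition into DNF, uses pigeonhole on the infinite segment $I=(v,v+\varepsilon w_1)\subseteq\ModelsOf(\Theta_1)$ to pin a single DNF term $\Omega$ capturing two points of $I$, proves that the direction space $W$ of $\Omega^=$ satisfies $W=\pi_X^{-1}(\pi_X(W))\cap\pi_Y^{-1}(\pi_Y(W))$ and hence contains $w_2$, and finally perturbs along $w_2$ into $\ModelsOf(\Theta_2)\cap\ModelsOf(\Omega)$.

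Your argument bypasses both the $w_1,w_2$ construction and the DNF/pigeonhole step. Instead of isolating a term, you track the full sign vector $(\sigma_X,\sigma_Y)$ of all block-atoms of $\psi$, which determines the truth value of $\psi$ regardless of the Boolean structure; this is the analogue, at the level of atoms, of the paper's product decomposition of $W$. The $\LinDep$ hypothesis is then used exactly once, and cleanly: it forces $H_X,H_Y$ to be common to $\Theta_1$ and $\Theta_2$, so that any block-atom vanishing on a generic $u\models\Theta_1$ vanishes on all of $\ModelsOf(\Theta_2)$ as well. The convex-combination perturbation $w_\lambda=\lambda u+(1-\lambda)w_0$ replaces the paper's two-stage perturbation (first along $w_1$ to catch $\Omega$, then along $w_2$ to reach $\Theta_2$). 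Your approach is shorter and more elementary, and in fact only uses satisfiability of $\Theta_1,\Theta_2$ rather than full $\Pi$-complexity; the paper's proof needs $\Pi$-complexity of $\Theta_1$ to guarantee $\dim W_1\ge 1$ so that $w_1\neq 0$ and $|I|=\infty$. What the paper's route buys is an explicit witness $\Omega$ from the decomposition and the reusable structural lemma on $W$, though neither is exploited elsewhere in the paper.
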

\begin{proof}
	See Appendix~\ref{sec:app:overspilling_proof}.
\end{proof}

Using this result and the fact that every disjunct entails either $ \varphi $ or $ \neg\varphi $ (Lemma~\ref{lemma:every_disj_either_true_or_false}), it is easy to show that $ \Theta_1 \in \DisjTrue $ causes model flooding into $ \Theta_2 \in \DisjTrue $ and vice versa if these predicate sets satisfy the requirements listed in the Overspilling Theorem~\ref{thm:overspilling}. Indeed, assuming $ \varphi $ is $ \Pi $-decomposable and $ \Theta_1 \models \varphi $ implies by the Overspilling Theorem~\ref{thm:overspilling} that $ \varphi \wedge \Theta_2 $ is satisfiable, whereupon we conclude that $ \Theta_2 \models \neg\varphi $ is impossible, so by Lemma~\ref{lemma:every_disj_either_true_or_false} $ \Theta_2 \models \varphi $. Similar reasoning is applicable in the other direction: if $ \Theta_2 \models \varphi $, then $ \Theta_1 \models \neg\varphi $ is impossible because this would imply (by the Overspilling Theorem~\ref{thm:overspilling}) the satisfiability of $ \neg\varphi \wedge \Theta_2 $ directly contradicting $ \Theta_2 \models \varphi $. Hence, $ \Theta_1 \models \varphi $ holds by Lemma~\ref{lemma:every_disj_either_true_or_false}. We conclude that the conditions listed in the Overspilling Theorem~\ref{thm:overspilling} provide a sufficient condition under which one disjunct causes model flooding into another one. This theorem is a central tool we will later use to prove that the coverings our algorithm constructs are indeed $ (\varphi, \Pi) $-model-flooding-formulas.

\subsection{Towards the covering algorithm}
\label{sec:vardec:high_level_overview}

Having done the preparatory work introducing and motivating the necessary concepts and results, we now give an intuitive high-level overview of the covering algorithm for solving the covering problem. Let $ \Gamma $ be the given $ \Pi $-complex predicate set.

\begin{enumerate}[label=(\alph*)]
	\item \label{step:cover:1}We first remove all $ \Pi $-disrespecting predicates from $ \Gamma $, which leaves us with $ \Theta := \PiSimp{\Gamma} $. In the remainder of the algorithm, we will build the covering by extending $ \Theta $ with further predicates.
	\item \label{step:cover:2}Next, we compute a set of predicates together enforcing all $ X $- and $ Y $-dependencies of $ \Gamma $ and add that set to $ \Theta $. Intuitively, this is needed to ensure that every disjunct of $ \Theta $ cannot have fewer $ X $- or $ Y $-dependencies than $ \Gamma $.
	\item \label{step:cover:3}We initialize the covering to be $ \Theta $.
	\item \label{step:cover:4}At this point, some disjunct \[
		\Omega \in \DisjOf{\Theta}
	\] may have more $ Z $-dependencies than $ \Gamma $, for some $ Z \in \Pi $. While this is the case for some $ \Omega $ and $ Z $, the current covering has ``parts'' which may be distinguished from $ \Gamma $ in the language of $ \Pi $-decompositions and consequently the covering may not be a $ (\varphi, \Pi) $-MFF as we want it to be (recall in this context Example~\ref{example:reduction_ex} where such a ``part'' was the $ \{x + y = 2, x - y = 0\} $ disjunct). To circumvent this problem, we ``separate'' $ \Omega $ from $ \Gamma $ by synthesizing a certain predicate $ q $ such that $ \Omega \models q $. We call $ q $ a \textit{separating predicate}. Hence, adding $ \neg q $ as a conjunct to the covering ensures that it cannot agree with $ \Omega $ on any model. However, the $ \neg q $ conjunct may rule out some models of $ \Gamma $, namely those under which $ \Gamma \wedge q $ is true. Thus, by adding $ \neg q $ to the covering, we have made a major step towards ensuring that in the end we indeed get a $ (\varphi, \Pi) $-MFF, but, unfortunately, the $ \neg q $ conjunct breaks the correctness of the covering in the aspect that $ \Gamma $ may no longer entail it. To compensate this loss of models, we recursively cover $ \Gamma \cup \{q\} $ and add the result as a disjunct to the final covering the algorithm returns.
	\item \label{step:cover:5}We convert the covering constructed so far into a disjunction over a set of predicate sets and remove those disjuncts which have no common models with $ \Gamma $. Intuitively, this is needed to ensure that $ \Gamma $ can indeed start model flooding into every disjunct agreeing on some model with the covering.
\end{enumerate}

The crux of the covering algorithm can be summarized as follows. First, the algorithm synthesizes a predicate set $ \Theta $ enforcing all $ Z $-dependencies of $ \Gamma $. This ensures that every disjunct of $ \Theta $ cannot have fewer $ Z $-dependencies than $ \Gamma $. Next, the algorithm ensures that the produced covering never agrees with a disjunct of $ \Theta $ having more $ Z $-dependencies than $ \Gamma $, which is needed to guarantee that model flooding indeed occurs in the produced covering. In order to achieve this, the algorithm synthesizes special separating predicates and adds them to the covering in a way which ensures that model flooding starts. However, adding such predicates breaks the correctness of the covering in the aspect that $ \Gamma $ may no longer entail it. This problem is resolved by a special recursive compensation step, which ensures that no models of $ \Gamma $ get lost.

% Intuitively, the algorithm ensures that all disjuncts agreeing on some model with the covering can be partitioned into groups, such that any two disjuncts $ \Gamma_1, \Gamma_2 $ from the same group are ``indistinguishable'' in the sense that $ \LinDep_{\pi_Z}(\Gamma_1) = \LinDep_{\pi_Z}(\Gamma_2) $ holds for all $ Z \in \Pi $. This is the most important invariant of the covering algorithm which will later allow us to apply the Overspilling Theorem~\ref{thm:overspilling} in order to prove that the covering produced is indeed a $ (\varphi, \Pi) $-MFF.

\subsection{The covering algorithm}
\label{sec:vardec:cover}

In the high-level overview of the covering algorithm above, we have omitted a lot of details about how exactly the individual steps work. In particular, we have not yet answered the following key questions about the algorithm. \begin{itemize}
	\item In step \ref{step:cover:2}, how do we actually compute the set of predicates enforcing the $ X $- and $ Y $-dependencies of $ \Gamma $?
	\item In step \ref{step:cover:4}, how do we algorithmically determine whether $ \Omega $ has more $ Z $-dependencies than $ \Gamma $?
	\item In step \ref{step:cover:4}, how do we synthesize the $ q $ predicate separating $ \Omega $ from $ \Gamma $ in the event that $ \Omega $ has more $ Z $-dependencies than $ \Gamma $?
	\item Why does the algorithm terminate?
	\item What is the running time of the algorithm?
\end{itemize} We now fill in the details by addressing these questions one-by-one in the context of a detailed specification of the covering procedure given in Algorithm~\ref{alg:cover}. \begin{algorithm}
	\caption{Covering algorithm}
	\label{alg:cover}
	\SetAlgoLined
	\KwIn{A formula $ \varphi $, partition $ \Pi $ and a predicate set $ \Gamma $}
	\KwOut{A $ \Pi $-respecting formula solving the covering problem}
	\SetKwFunction{cover}{cover}
	\SetKwProg{coverproc}{Function}{}{end}
	\coverproc{\cover{$ \Pi $, $ \Gamma $}}{
		\nl\label{alg:cover:line:pi_simple_check}\If{$ \Gamma $ is $ \Pi $-simple}{
			\nl \Return $ \decsimple(\Pi, \Gamma) $\;
		}
		\tcp{Remove all $ \Pi $-disrespecting predicates from $ \Gamma $}
		\nl\label{alg:cover:line:predresp}$ \Theta := \PiSimp{\Gamma} $\;
		\tcp{Solve the system of linear equations $ \Gamma^= $, in order to be able to analyze the $ Z $-dependencies of $ \Gamma $}
		\nl\label{alg:cover:line:gamma_lindep_repr}Compute $ a \in \Q^n $ and a basis $ A := (a_1, \dots, a_l) $ such that $ \ModelsOf(\Gamma^=) = a + \gen{a_1, \dots, a_l} $\;
		\tcp{For every $ Z \in \Pi $, synthesize a set of $ \Pi $-respecting predicates together enforcing all $ Z $-dependencies of $ \Gamma $ and add those predicates to $ \Theta $}
		\nl\label{alg:cover:line:first_loop}\ForEach{$ Z \in \Pi $}{
			\nl\label{alg:cover:line:first_loop:dep_basis}Compute a basis $ (v_1, \dots, v_s) $ of $ \ImageOf(\pi_Z \circ \lc_A)^{\bot} \le \Q^{|Z|} $\;
			\nl\label{alg:cover:line:first_loop:theta_upd}For every $ w \in \{v_1, \dots, v_s\} $, update $ \Theta := \Theta \cup \{\pi_Z(\vec{z}) \cdot w = \pi_Z(a) \cdot w\} $\;
		}
		\nl $ \Delta := \bot $\;
		\nl $ \Upsilon := \varnothing $\;
		\tcp{The goal of this loop is to ensure that the produced covering does not agree on any model with every $ \Omega \in \DisjOf{\Theta} $ having more $ Z $-dependencies than $ \Gamma $}
		\nl\label{alg:cover:line:second_loop}\ForEach{$ \Omega \in \DisjOf{\Theta} $}{
			\tcp{Solve the system of linear equations $ \Omega^= $ in order to be able to analyze $ Z $-dependencies of $ \Omega $}
			\nl\label{alg:cover:line:second_loop:dep_basis}Compute $ b \in \Q^n $ and a basis $ B := (b_1, \dots, b_r) $ such that $ \ModelsOf(\Omega^=) = b + \gen{b_1, \dots, b_r} $\;
			\tcp{Check whether $ \Omega $ has more $ Z $-dependencies than $ \Gamma $, for some $ Z \in \Pi $}
			\nl\label{alg:cover:line:second_loop:main_if}\If{$ \ImageOf(\pi_Z \circ \lc_B)^{\bot} \cap \ImageOf(\pi_Z \circ \lc_A) \neq \gen{0} $ for some $ Z \in \Pi $}{
				\tcp{$ \Omega $ has more $ Z $-dependencies than $ \Gamma $, so we synthesize an inequality separating $ \Omega $ from $ \Gamma $ and add it to $ \Upsilon $}
				\nl\label{alg:cover:line:second_loop:witness_vec}Compute a vector $ 0 \neq w \in \ImageOf(\pi_Z \circ \lc_B)^{\bot} \cap \ImageOf(\pi_Z \circ \lc_A) \le \Q^{|Z|} $\;
				\nl\label{alg:cover:line:second_loop:upsilon_upd}$ \Upsilon := \Upsilon \cup \{\pi_Z(\vec{z}) \cdot w \neq \pi_Z(b) \cdot w\} $\;
				\tcp{Due to the added inequality, the models of $ \Gamma \cup \{\pi_Z(\vec{z}) \cdot w = \pi_Z(b) \cdot w\} $ no longer satisfy the covering. To compensate for this, we need to recursively cover.}
				\nl\label{alg:cover:line:second_loop:rec_call}$ \Delta := \Delta \vee \cover(\Pi, \Gamma \cup \{\pi_Z(\vec{z}) \cdot w = \pi_Z(b) \cdot w\}) $\;
			}
		}
		\tcp{Restrict $ \Upsilon $ in a way which ensures that model flooding starts}
		\nl\label{alg:cover:line:d_set}Compute a set $ D $ of predicate sets, such that $ \Upsilon \equiv \bigvee_{\Upsilon' \in D} \Upsilon' $, by rewriting every inequality $ q \in \Upsilon $ as $ q^< \vee q^> $ and applying distributivity\;
		\nl\label{alg:cover:line:return}\Return $ \Delta \vee \Theta \wedge \bigvee \{\Upsilon' \in D \mid \Upsilon' \cup \Gamma \text{ is satisfiable}\} $\;
	}
\end{algorithm} The comments provided in the pseudocode connect the implementation of every step with the intuitions explained in Section~\ref{sec:vardec:high_level_overview} above. In particular, step \ref{step:cover:1} corresponds to Line~\ref{alg:cover:line:predresp}; the computation of a predicate set enforcing all $ X $- and $ Y $-dependencies of $ \Gamma $ (step \ref{step:cover:2}) is implemented in the first loop at Line~\ref{alg:cover:line:first_loop}; step \ref{step:cover:4} corresponds to the second loop at Line~\ref{alg:cover:line:second_loop}; the final adjustment of the covering described in step \ref{step:cover:5} is implemented at Lines \ref{alg:cover:line:d_set} and \ref{alg:cover:line:return}.

\subsubsection{Synthesis of predicates enforcing $ Z $-dependencies}

Suppose we have a satisfiable predicate set $ \Gamma $ and we want to ``extract'' its set of $ Z $-dependencies with the goal of expressing it as a finite conjunction of $ \Pi $-respecting predicates. At first sight, this may seem an infeasible task for an algorithm because $ \LinDep_{h}(\Gamma) $ is, in general, infinite\footnote{Recall in this context the example given in Figure~\ref{fig:lindep_def:c}.}. Thus, any brute-force approach is doomed to fail and it is essential to come up with a more clever way of ``extracting'' $ Z $-dependencies. Indeed, with the help of algebra, we can prove that the set of linear dependencies has a finite representation, which can be computed efficiently and be used to obtain the desired predicate set. The first step in that direction is to prove the following result.

\begin{theorem}[Algebraic characterization of $ \LinDep_{h, v} $]
	\label{thm:lindep_algebraic_char}
	Let $ \Gamma $ be a satisfiable set of predicates, $ V $ be a $ \Q $-vectorspace and $ h : \Q^n \rightarrow V $ be a $ \Q $-vectorspace homomorphism. Let furthermore $ B := (b_1, \dots, b_r) $ be a basis such that $ \ModelsOf(\Gamma^=) = b + \gen{b_1, \dots, b_r} $ for some $ b \in \Q^n $. Then \[
	\LinDep_{h, v}(\Gamma) = \{U \le V \mid U \le \ImageOf(h \circ \lc_B)^{\bot}\}
	\] holds for all $ v \models \Gamma^= $.
\end{theorem}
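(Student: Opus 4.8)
The plan is to reduce the statement to two ingredients: a geometric lemma identifying the affine hull of $\ModelsOf(\Gamma)$, and a routine orthogonality-swapping identity for subspaces of $V$. Throughout, write $W := \ModelsOf(\Gamma^=) = b + \gen{b_1, \dots, b_r}$ and fix $v \models \Gamma^=$, so $v \in W$.

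\emph{Step 1: the affine hull of $\ModelsOf(\Gamma) - v$ is $\gen{b_1, \dots, b_r}$.} Since $\Gamma$ is satisfiable we may fix $w_0 \in \ModelsOf(\Gamma)$, and since every model of $\Gamma$ satisfies $\Gamma^=$ we have $\ModelsOf(\Gamma) \subseteq W$, so the affine hull of $\ModelsOf(\Gamma)$ is contained in $W$. For the reverse inclusion, parametrize $W$ by $\lambda \mapsto b + \sum_{i} \lambda_i b_i$; each term occurring in a predicate of $\Gamma^< \cup \Gamma^>$ becomes an affine function of $\lambda \in \Q^r$, and at the parameter $\lambda_0$ representing $w_0$ all of the corresponding strict inequalities hold strictly. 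Hence they remain true after a sufficiently small rational perturbation of a single coordinate of $\lambda_0$, which shows that $w_0$ together with $w_0 + \varepsilon b_i$ (for suitably small rational $\varepsilon > 0$ and each $i \in \{1, \dots, r\}$) all lie in $\ModelsOf(\Gamma)$. The affine hull of these points is $w_0 + \gen{b_1, \dots, b_r} = W$, so $\mathrm{aff}(\ModelsOf(\Gamma)) = W$. Translating by $-v$ and using $v \in W$ gives $\mathrm{aff}(\ModelsOf(\Gamma) - v) = W - v = \gen{b_1, \dots, b_r}$. (The degenerate case $r = 0$ just says $\ModelsOf(\Gamma) - v = \{0\}$, so nothing is lost.)

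\emph{Step 2: pushing through $h$ and swapping orthogonality.} Let $U \le V$. Since $h$ is a $\Q$-vectorspace homomorphism it maps affine hulls to affine hulls, and $U^{\bot}$ is a linear, hence affine, subspace of $V$; therefore $h(\ModelsOf(\Gamma) - v) \subseteq U^{\bot}$ if and only if $h\bigl(\mathrm{aff}(\ModelsOf(\Gamma) - v)\bigr) \subseteq U^{\bot}$, because an affine subspace contains a set exactly when it contains its affine hull. By Step 1 and the fact that $\lc_B$ surjects onto $\gen{b_1, \dots, b_r}$, the left-hand side here equals $h(\gen{b_1, \dots, b_r}) = \ImageOf(h \circ \lc_B)$. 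Thus $U \in \LinDep_{h, v}(\Gamma)$ iff $\ImageOf(h \circ \lc_B) \subseteq U^{\bot}$, and since the relation ``every vector of one subspace is orthogonal to every vector of the other'' is symmetric in the two subspaces of $V$, this is equivalent to $U \subseteq \ImageOf(h \circ \lc_B)^{\bot}$. This is precisely the claimed description, and it visibly does not depend on the choice of $v \models \Gamma^=$.

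\emph{Main obstacle.} The only subtle point is the perturbation argument in Step 1: one must be careful that we are working over $\Q$ (so ``small'' means small rational and one uses that affine functions of $\lambda$ are preserved under such perturbations), and one should explicitly dispatch the degenerate cases where $\ModelsOf(\Gamma^=)$ is a single point or where $\Gamma$ has no strict inequalities. Everything after Step 1 is elementary linear algebra. I would also remark that Step 1 already yields the well-definedness statement that $\LinDep_{h, v}(\Gamma)$ is independent of the particular $v \models \Gamma^=$, matching Lemma~\ref{lemma:lindep_gamma_v1_equals_lindep_gamma_v2}.
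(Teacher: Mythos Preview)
Your proof is correct. The underlying content matches the paper's argument: the same $\varepsilon$-perturbation idea shows that strict inequalities do not constrain the relevant linear span, and the final step is the same orthogonality swap $\ImageOf(h\circ\lc_B)\subseteq U^\bot \Leftrightarrow U\le \ImageOf(h\circ\lc_B)^\bot$.

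The organization differs. The paper factors the argument through the auxiliary result $\LinDep_{h,v}(\Gamma)=\LinDep_{h,v}(\Gamma^=)$ (Theorem~\ref{thm:only_equality_predicates_can_establish_lindep_strong}, itself built from Theorem~\ref{thm:only_equality_predicates_can_establish_lindep} and Lemma~\ref{lemma:lindep_gamma_v1_equals_lindep_gamma_v2}), and then computes $\LinDep_{h,v}(\Gamma^=)$ directly from $\ModelsOf(\Gamma^=)=b+\gen{b_1,\dots,b_r}$ and $b-v\in\gen{b_1,\dots,b_r}$. You instead prove a single geometric lemma, $\mathrm{aff}(\ModelsOf(\Gamma))=\ModelsOf(\Gamma^=)$, which subsumes both auxiliary facts at once and makes the independence from the choice of $v\models\Gamma^=$ immediate. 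Your packaging is more self-contained for this one theorem; the paper's modular version has the advantage that the intermediate statements (Theorem~\ref{thm:only_equality_predicates_can_establish_lindep_strong}, Lemma~\ref{lemma:lindep_gamma_v1_equals_lindep_gamma_v2}) are reused several times elsewhere in the appendix.
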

\begin{proof}
	See Section~\ref{sec:app:lindep_facts:proof_lindep_algebraic_char} of Appendix~\ref{sec:app:lindep_facts}.
\end{proof}

In essence, Theorem~\ref{thm:lindep_algebraic_char} establishes that $ \ImageOf(h \circ \lc_B)^{\bot} $ is the supremum of $ \LinDep_{h, v}(\Gamma) $ in the lattice of $ V $'s subspaces. Intuitively, this means that $ \ImageOf(h \circ \lc_B)^{\bot} $ contains all the information about linear dependencies present in $ \Gamma $. In the context of Algorithm~\ref{alg:cover}, it follows that $ \ImageOf(\pi_Z \circ \lc_A)^{\bot} $ can be thought of an algebraic object capturing all $ Z $-dependencies of $ \Gamma $. Moreover, the same intuition applies to the basis of $ \ImageOf(\pi_Z \circ \lc_A)^{\bot} $, which is the finite representation of the set of $ Z $-dependencies we use to synthesize predicates enforcing them (see Line~\ref{alg:cover:line:first_loop:dep_basis} of the algorithm).

Observe that the vectors $ a $ and $ a_1, \dots, a_l $ satisfying \[
	\ModelsOf(\Gamma^=) = a + \gen{a_1, \dots, a_l}
\] can be computed in polynomial time via Gaussian elimination applied to $ \Gamma^= $ (see Line~\ref{alg:cover:line:gamma_lindep_repr}). Since $ \gen{a_1, \dots, a_l} = \ImageOf(\lc_A) $, it follows that \begin{align*}
	\ImageOf(\pi_Z \circ \lc_A) &= \ImageOf(\Pi_Z \cdot (a_1 \mid \dots \mid a_l)) \\
	&= \ImageOf\big(\Pi_Z \cdot a_1 \mid \dots \mid \Pi_Z \cdot a_l\big)
\end{align*} and thus \begin{align}
	\label{eqn:im_pizlca_orth_eq_ker_piz_a1al_transp}
	\ImageOf(\pi_Z \circ \lc_A)^{\bot} = \ker\big(\Pi_Z \cdot a_1 \mid \dots \mid \Pi_Z \cdot a_l\big)^\transp
\end{align} holds for all $ Z \in \Pi $. Hence, at Line~\ref{alg:cover:line:first_loop:dep_basis} the basis $ v_1, \dots, v_s $ of $ \ImageOf(\pi_Z \circ \lc_B)^{\bot} $ can be computed by applying Gaussian elimination to compute the kernel of $ (\Pi_Z \cdot a_1 \mid \dots \mid \Pi_Z \cdot a_l)^\transp $.

We now informally explain why the construction of predicates given at Line~\ref{alg:cover:line:first_loop:theta_upd} is correct (the rigorous proof is in Appendix~\ref{sec:app:cover_analysis}). Essentially, Line~\ref{alg:cover:line:first_loop:theta_upd} translates the $ v_1, \dots, v_s $ basis vectors (spanning the set of all $ Z $-dependencies of $ \Gamma $) into $ \Pi $-respecting predicates and collects them in $ \Theta $. By construction of $ \Theta $, \begin{align}
	\label{eqn:theta_pizxminusa_dot_vi_eq_zero}
	\Theta \models \pi_Z(\vec{z} - a) \cdot v_i = 0
\end{align} holds for all $ Z \in \Pi $ and $ i \in \{1, \dots, s\} $. Any $ Z $-dependency of $ \Gamma $ corresponds to a linear combination of the $ v_i $ vectors, which is enforced by $ \Theta $ because we can replace $ v_i $ in (\ref{eqn:theta_pizxminusa_dot_vi_eq_zero}) by that linear combination and the entailment would still hold by the linearity of the dot product.

\subsubsection{Comparison of dependencies in predicate sets}

In the second loop of the algorithm at Line~\ref{alg:cover:line:second_loop}, our goal is not just to ``extract'' the $ Z $-dependencies of $ \Omega $, but to compare them with those of $ \Gamma $. We have already discussed that, intuitively, the basis of $ \ImageOf(\pi_Z \circ \lc_A)^{\bot} $ contains the information about all $ Z $-dependencies present in $ \Gamma $. By the same reasoning, if $ b $ and $ B = (b_1, \dots, b_r) $ are as defined at Line~\ref{alg:cover:line:second_loop:dep_basis}, then the basis of $ \ImageOf(\pi_Z \circ \lc_B)^{\bot} $ captures the $ Z $-dependencies present in $ \Omega $. We defer the rigorous proof to Appendix~\ref{sec:app:cover_analysis} and focus here only on the high-level intuition. The main idea is to prove that the basis of $ \ImageOf(\pi_Z \circ \lc_B)^{\bot} \cap \ImageOf(\pi_Z \circ \lc_A) $ corresponds to precisely those $ Z $-dependencies which are present in $ \Omega $ but absent in $ \Gamma $. Intuitively, this is the case because the $ \ImageOf(\pi_Z \circ \lc_B)^{\bot} $ component ensures that we are looking at the $ Z $-dependencies of $ \Omega $, whereas the $ \ImageOf(\pi_Z \circ \lc_A) $ part can be thought of as the complement of $ \Gamma $'s $ Z $-dependencies: think of $ \ImageOf(\pi_Z \circ \lc_A) $ as the orthogonal complement of the $ \ImageOf(\pi_Z \circ \lc_A)^\bot $ vectorspace capturing the $ Z $-dependencies of $ \Gamma $ by Theorem~\ref{thm:lindep_algebraic_char}.

Given this intuition about the $ \ImageOf(\pi_Z \circ \lc_B)^{\bot} \cap \ImageOf(\pi_Z \circ \lc_A) $ intersection, it is not surprising that this vectorspace contains a non-zero vector if and only if $ \Omega $ has more $ Z $-dependencies than $ \Gamma $. Since $ \Omega $ is a disjunct of $ \Theta $, it follows that $ \ImageOf(\pi_Z \circ \lc_B)^{\bot} \cap \ImageOf(\pi_Z \circ \lc_A) = \gen{0} $ holds if and only if $ \LinDep_{\pi_Z}(\Gamma) = \LinDep_{\pi_Z}(\Omega) $. This is a central property we will extensively rely on in the rigorous proof of correctness, specifically in the proof that the formula produced by the covering algorithm is a $ (\varphi, \Pi) $-MFF.

We now discuss how to compute the basis of $ \ImageOf(\pi_Z \circ \lc_B)^{\bot} \cap \ImageOf(\pi_Z \circ \lc_A) $. We do this in three following steps. \begin{enumerate}
	\item Applying the same reasoning as in the derivation of (\ref{eqn:im_pizlca_orth_eq_ker_piz_a1al_transp}) above (but to $ \Omega $ instead of $ \Gamma $) yields the \begin{align}
		\label{eqn:im_pizlcb_orth_eq_ker_piz_b1br_transp}
		\ImageOf(\pi_Z \circ \lc_B)^{\bot} = \ker\big(\Pi_Z \cdot b_1 \mid \dots \mid \Pi_Z \cdot b_r\big)^\transp
	\end{align} characterization, which in turn gives us a way to compute the basis of $ \ImageOf(\pi_Z \circ \lc_B)^{\bot} $ by computing the kernel of $ (\Pi_Z \cdot b_1 \mid \dots \mid \Pi_Z \cdot b_r)^\transp $ via Gaussian elimination.
	\item It is immediate that \[
		\ImageOf(\pi_Z \circ \lc_A) = \ImageOf(\Pi_Z \cdot (a_1 \mid \dots \mid a_l))
	\] meaning that the basis of $ \ImageOf(\pi_Z \circ \lc_A) $ is $ \Pi_Z \cdot a_1, \dots, \Pi_Z \cdot a_l $.
	\item Once the bases of $ \ImageOf(\pi_Z \circ \lc_B)^{\bot} $ and $ \ImageOf(\pi_Z \circ \lc_A) $ are computed, it is a standard operation in linear algebra to compute the basis of the intersection. This can be achieved in polynomial time, for example, via the well-known Zassenhaus sum-intersection algorithm \cite[pp. 207 -- 210]{fisher:2012}.
\end{enumerate}

\subsubsection{Synthesis of a separating predicate}

We now address the question of how to synthesize a predicate separating $ \Omega $ from $ \Gamma $ in case $ \Omega $ has more $ Z $-dependencies than $ \Gamma $. The answer is given in Lines \ref{alg:cover:line:second_loop:witness_vec} and \ref{alg:cover:line:second_loop:upsilon_upd} -- our approach here is similar to the way we used $ \Theta $ to enforce all $ Z $-dependencies of $ \Gamma $ at Lines \ref{alg:cover:line:first_loop:dep_basis} and \ref{alg:cover:line:first_loop:theta_upd} (see the corresponding discussion above). The only difference is that in the second loop it suffices to compute just a single vector $ w $ and to translate it into the separating predicate via the dot product construction. Intuitively, the $ \pi_Z(\vec{z}) \cdot w = \pi_Z(b) \cdot w $ predicate indeed separates $ \Omega $ from $ \Gamma $ in the sense that \[
	\Omega \models \pi_Z(\vec{z}) \cdot w = \pi_Z(b) \cdot w
\] because this is equivalent to $ \Omega \models \pi_Z(\vec{z} - b) \cdot w = 0 $ and it holds that $ w \in \ImageOf(\pi_Z \circ \lc_B)^{\bot} $, where, as already noted above, $ \ImageOf(\pi_Z \circ \lc_B)^{\bot} $ can be thought of as the span of all $ Z $-dependencies contained in $ \Omega $. In essence, the $ \pi_Z(\vec{z} - b) \cdot w = 0 $ dot product asserts that $ w \in \ImageOf(\pi_Z \circ \lc_B)^{\bot} $ is orthogonal to $ \ImageOf(\pi_Z \circ \lc_B) $. For a full and rigorous proof of correctness, see Appendix~\ref{sec:app:cover_analysis}.

\subsubsection{Termination and time complexity}
\label{sec:vardec:cover:termination_time_complexity}

We now give an intuitive explanation of why the covering algorithm always terminates (for a detailed proof, we once again refer the reader to Appendix~\ref{sec:app:cover_analysis}). Observe that the recursive call at Line~\ref{alg:cover:line:second_loop:rec_call} is the only part of the algorithm raising termination concerns; all linear-algebraic operations trivially terminate since the domain we are working over is always a vectorspace of dimension not exceeding $ n = \left|X\right| + \left|Y\right| $. In short, the algorithm terminates because every recursive call adds a new $ Z $-dependency to $ \Gamma $ and at some point we are guaranteed to run out of $ Z $-dependencies which are not redundant. Once such a state has been reached, $ \Gamma $ is guaranteed to be $ \Pi $-simple, whereupon the nested recursion stops due to the check at Line~\ref{alg:cover:line:pi_simple_check}. 

More precisely, at Line~\ref{alg:cover:line:second_loop:rec_call} it is guaranteed that $ \Gamma \cup \{\pi_Z(\vec{z}) \cdot w = \pi_Z(b) \cdot w\} $ enforces more $ Z $-dependencies than $ \Gamma $, for some $ Z \in \Pi $. Hence, the recursion depth of the covering algorithm is bounded by the number of times it is possible to add a novel $ Z $-dependency. The idea is to argue that for a fixed $ Z \in \Pi $, this amount of times cannot exceed $ \left|Z\right| $ because any $ Z $-dependency is a subspace of the $ \left|Z\right| $-dimensional vectorspace $ \Q^{\left|Z\right|} $ and keeping adding novel $ Z $-dependencies corresponds to constructing an ascending chain of subspaces in $ \Q^{\left|Z\right|} $. Thus, the fundamental mathematical reason why we are able to obtain this upper bound $ \left|Z\right| $ is because $ \Q^{\left|Z\right|} $ as a $ \Q $-module satisfies the ascending chain condition (i.e., is Noetherian).

Overall, applying the above reasoning to both $ X $ and $ Y $ yields the upper bound $ n = \left|X\right| + \left|Y\right| $ for the recursion depth. Furthermore, since the dimensions of vectorspaces the algorithm deals with are all bounded by $ n $, linear-algebraic operations can all be implemented to run in polynomial time. Hence, the height of the tree of recursive calls is bounded by $ n $, and every node has at most exponentially many children (see the second loop at Line~\ref{alg:cover:line:second_loop}). It follows that the overall number of calls to the covering algorithm is in \[
	O\Big(\big(2^{\poly(\left|\varphi\right|)}\big)^n\Big) = O\big(2^{\poly(\left|\varphi\right|)}\big)
\] Now note that the time complexity of the algorithm without recursion is double-exponential because in the worst-case scenario there may be exponentially many disjuncts $ \Omega \in \DisjOf{\Theta} $ that yield separating predicates, meaning that $ \Upsilon $ is of exponential size and we consequently get a double-exponential blowup at Lines \ref{alg:cover:line:d_set} and \ref{alg:cover:line:return}. Accordingly, we conclude that the overall time complexity of the covering algorithm is \[
	 O\big(2^{\poly(\left|\varphi\right|)}\big) \cdot O\big(2^{2^{\poly(\left|\varphi\right|)}}\big) = O\big(2^{2^{\poly(\left|\varphi\right|)}}\big)
\]

\subsection{Correctness of the covering algorithm}
\label{sec:vardec:correctness}

We now prove that the covering algorithm indeed correctly solves the covering problem in the sense that any formula $ \psi := \cover(\Pi, \Gamma) $ produced by it is a $ \Pi $-respecting $ (\varphi, \Pi) $-MFF satisfying $ \Gamma \models \psi $, where $ \Gamma \in \Sat(\DisjPhi) $ is $ \Pi $-complex. Indeed, it can be immediately observed that, by construction of every predicate created within the algorithm, it can output only $ \Pi $-respecting formulas. In Appendix~\ref{sec:app:cover_analysis} (Lemma~\ref{lemma:gamma_entails_psi}), we show that the $ \Gamma \models \psi $ requirement also holds; intuitively, this follows from the first loop's invariant $ \Gamma \models \Theta $ and the compensation step incorporated into the second loop at line 15 (see the discussion of step \ref{step:cover:4} in Section~\ref{sec:vardec:high_level_overview} above). Thus, to establish correctness of the algorithm, it suffices to prove that $ \psi $ is a $ (\varphi, \Pi) $-MFF, which is the focus of the remainder of Section~\ref{sec:vardec:correctness}.

\subsubsection{Properties of the produced covering}
\label{sec:vardec:properties_of_produced_covering}

In order to show that model flooding indeed occurs in $ \psi $, we need to precisely understand the set of models this formula defines. We achieve this by bringing $ \psi $ into DNF and studying the properties of every term. More precisely, we can assume (due to the normal form we introduced) the existence of a set $ \Psi $ of predicate sets such that \begin{align}
	\label{eqn:psi_disj_lambda_psi_lambda}
	\psi \equiv \bigvee_{\Lambda \in \Psi} \Lambda
\end{align} In other words, every $ \Lambda $ simply corresponds to a DNF term of $ \psi $. 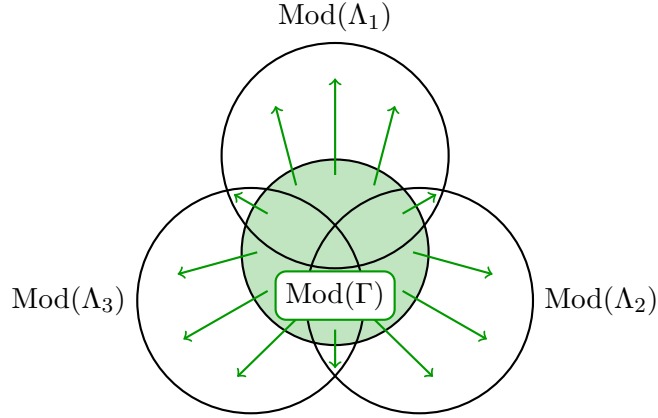
\begin{figure}[t]
	\centering
	\begin{tikzpicture}[
		scale=\zbtikznormalscaling,
		lambdaset/.style = {circle, minimum size=85, thick, draw=black}
	]
	
	\coordinate (lambda1set) at (90:1.25);
	\coordinate (lambda2set) at (330:1.25);
	\coordinate (lambda3set) at (210:1.25);
	
	% Gamma
	\node[circle, minimum size=70, thick, draw=black, fill=OliveGreen, fill opacity=0.25] at (0, 0) {};
	
	% Lambda sets
	\node[lambdaset, label={above:$ \ModelsOf(\Lambda_1) $}] at (lambda1set) {};
	\node[lambdaset, label={right:$ \ModelsOf(\Lambda_2) $}] at (lambda2set) {};
	\node[lambdaset, label={left:$ \ModelsOf(\Lambda_3) $}] at (lambda3set) {};
	
	% Model flooding arrows
	% from Gamma into Lambda_1
	\draw[->, thick, color=OliveGreen] (60:1) -- ($ (lambda1set) + (40:1) $);
	\draw[->, thick, color=OliveGreen] (90:1) -- ($ (lambda1set) + (90:1) $);
	\draw[->, thick, color=OliveGreen] (120:1) -- ($ (lambda1set) + (140:1) $);
	% from Gamma into Lambda_2
	\draw[->, thick, color=OliveGreen] (300:1) -- ($ (lambda2set) + (280:1) $);
	\draw[->, thick, color=OliveGreen] (330:1) -- ($ (lambda2set) + (330:1) $);
	\draw[->, thick, color=OliveGreen] (360:1) -- ($ (lambda2set) + (380:1) $);
	% from Gamma into Lambda_3
	\draw[->, thick, color=OliveGreen] (180:1) -- ($ (lambda3set) + (160:1) $);
	\draw[->, thick, color=OliveGreen] (210:1) -- ($ (lambda3set) + (210:1) $);
	\draw[->, thick, color=OliveGreen] (240:1) -- ($ (lambda3set) + (260:1) $);
	% from Gamma into the intersection of Lambda_1 and Lambda_2
	\draw[->, thick, color=OliveGreen] (30:1) -- (30:1.5);
	% from Gamma into the intersection of Lambda_1 and Lambda_3
	\draw[->, thick, color=OliveGreen] (150:1) -- (150:1.5);
	% from Gamma into the intersection of Lambda_2 and Lambda_3
	\draw[->, thick, color=OliveGreen] (270:1) -- (270:1.5);
	
	% Gamma label
	\draw (0, -0.56) node [rounded corners, thick, draw=OliveGreen, fill=white] {$ \ModelsOf(\Gamma) $};
	
\end{tikzpicture}
	\caption{Venn diagram illustrating the relationship between $ \ModelsOf(\Gamma) $ (the green set in the center) and $ \ModelsOf(\Lambda_i) $ for $ \Lambda_i \in \Psi = \{\Lambda_1, \Lambda_2, \Lambda_3\} $. In particular, observe that $ \Gamma \models \psi \equiv \Lambda_1 \vee \Lambda_2 \vee \Lambda_3 $. The model flooding $ \Gamma $ causes into every $ \Lambda_i \in \Psi $ (which we aim to prove) is visualized using green arrows. The region corresponding to $ \Gamma $ is colored green to illustrate that $ \Gamma \models \varphi $.}
	\label{fig:psi_domain_flower}
\end{figure} An overview of relationships between $ \Gamma $ and $ \Lambda \in \Psi $ sets as well as of the desired model flooding effect we aim to prove is given in Figure~\ref{fig:psi_domain_flower}.

Observe that due to (\ref{eqn:psi_disj_lambda_psi_lambda}), in order to prove that $ \psi $ is a $ (\varphi, \Pi) $-MFF, it suffices to show that $ \Gamma $ causes model flooding into every $ \Lambda \in \Psi $. To prove the latter, we need to study certain properties of every individual $ \Lambda \in \Psi $, which we capture in the following Lemma~\ref{lemma:psi_properties}.

\begin{lemma}[Properties of $ \psi $]
	\label{lemma:psi_properties}
	Let $ \varphi \in \QFLRA $ be a formula, $ \Pi $ be a partition, $ \Gamma \in \Sat(\DisjTrue) $ be a $ \Pi $-complex predicate set, $ \psi := \cover(\Pi, \Gamma) $ and $ \Psi $ be the set of predicate sets corresponding to the DNF terms of $ \psi $. Then for any $ \Lambda \in \Psi $ there exists a set $ \Theta $ computed in some (possibly nested) call to $ \cover $ such that: \begin{enumerate}[label=(\alph*)]
		\item\label{lemma:psi_properties:a} $ \Gamma \wedge \Lambda $ is satisfiable.
		\item\label{lemma:psi_properties:b} For every $ \Gamma_1, \Gamma_2 \in \DisjOf{\Theta} $, the satisfiability of $ \Gamma_1 \wedge \Lambda $ and $ \Gamma_2 \wedge \Lambda $ implies \[
		\LinDep_{\pi_Z}(\Gamma_1) = \LinDep_{\pi_Z}(\Gamma_2)
		\] where $ Z \in \Pi $ is arbitrary.
		\item\label{lemma:psi_properties:c} For every $ \Omega \in \DisjOf{\Theta} $, $ \Omega $ is $ \Pi $-complex if $ \Omega \wedge \Lambda $ is satisfiable.
	\end{enumerate} 
\end{lemma}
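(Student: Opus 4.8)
The statement says that for every DNF term $\Lambda$ of the output $\psi := \cover(\Pi, \Gamma)$, there is some $\Theta$ (built in some, possibly recursive, invocation of $\cover$) for which the three listed properties hold. The plan is to proceed by induction on the recursion depth of $\cover$, tracking which call produced $\Lambda$.

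First I would analyze the return statement (Line~\ref{alg:cover:line:return}): $\cover(\Pi,\Gamma)$ returns $\Delta \vee \Theta \wedge \bigvee\{\Upsilon' \in D \mid \Upsilon' \cup \Gamma \text{ satisfiable}\}$. So every DNF term $\Lambda$ of $\psi$ is either (i) a DNF term coming from the recursive disjunct $\Delta$, in which case it was produced by a nested call $\cover(\Pi, \Gamma \cup \{\pi_Z(\vec z)\cdot w = \pi_Z(b)\cdot w\})$ and the induction hypothesis directly supplies the required $\Theta$ — here I would need the easy observation that the nested $\Gamma$ is still $\Pi$-complex (if it were $\Pi$-simple the nested call returns $\decsimple$, which is fine too, since then $\Lambda$ is just a $\Pi$-decomposition of a satisfiable set and one checks the three properties hold trivially with an appropriate $\Theta$); or (ii) $\Lambda = \Theta \cup \Upsilon'$ for some $\Upsilon' \in D$ with $\Upsilon' \cup \Gamma$ satisfiable, where $\Theta$ is exactly the set built in \emph{this} call at Lines~\ref{alg:cover:line:predresp}–\ref{alg:cover:line:first_loop:theta_upd}. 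Case (ii) is the one to work out in detail; the claimed $\Theta$ is this very set.

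For case (ii): Property~\ref{lemma:psi_properties:a}, $\Gamma \wedge \Lambda$ satisfiable, is almost by construction — $\Lambda = \Theta \cup \Upsilon'$, we kept only those $\Upsilon'$ with $\Upsilon' \cup \Gamma$ satisfiable, and $\Gamma \models \Theta$ (the first-loop invariant, Lemma~\ref{lemma:gamma_entails_psi} in the appendix, or rather the intermediate fact that the $Z$-dependency predicates added to $\Theta$ are all entailed by $\Gamma$ since they come from the basis of $\ImageOf(\pi_Z\circ\lc_A)^\bot$), so $\Gamma \wedge \Theta \wedge \Upsilon' \equiv \Gamma \wedge \Upsilon'$ is satisfiable. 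Property~\ref{lemma:psi_properties:c}: suppose $\Omega \in \DisjOf{\Theta}$ and $\Omega \wedge \Lambda$ is satisfiable; I want $\Omega$ $\Pi$-complex. Observe $\Lambda \supseteq \Theta$, so $\Omega \wedge \Lambda$ satisfiable means in particular $\Omega$ is a satisfiable disjunct extending $\Theta$; moreover $\Lambda$ also contains $\Upsilon'$, which (by construction in the second loop) is chosen precisely so that it is \emph{incompatible} with every $\Omega$ that has strictly more $Z$-dependencies than $\Gamma$: for each such $\Omega$, the loop put into $\Upsilon$ a disequality $\pi_Z(\vec z)\cdot w \neq \pi_Z(b)\cdot w$ with $\Omega \models \pi_Z(\vec z)\cdot w = \pi_Z(b)\cdot w$, so any $\Upsilon'\in D$ picks, for that $q$, either $q^<$ or $q^>$, each of which contradicts $\Omega$. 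Hence $\Omega\wedge\Lambda$ satisfiable forces $\LinDep_{\pi_Z}(\Omega) = \LinDep_{\pi_Z}(\Gamma)$ for all $Z$, and since $\Gamma$ is $\Pi$-complex (fixes neither $X$ nor $Y$, which is exactly a statement about its $Z$-dependencies not being full), $\Omega$ is $\Pi$-complex too — here I must be careful to relate "$\Gamma$ fixes $Z$" to "$\LinDep_{\pi_Z}(\Gamma)$ contains the top element $\Q^{|Z|}$", which follows from Definition~\ref{def:lindep} / Theorem~\ref{thm:lindep_algebraic_char}. Property~\ref{lemma:psi_properties:b} is then almost immediate: if $\Gamma_1, \Gamma_2 \in \DisjOf{\Theta}$ both meet $\Lambda$, the argument just given yields $\LinDep_{\pi_Z}(\Gamma_i) = \LinDep_{\pi_Z}(\Gamma)$ for $i=1,2$ and all $Z$, whence $\LinDep_{\pi_Z}(\Gamma_1) = \LinDep_{\pi_Z}(\Gamma_2)$.

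\textbf{Main obstacle.} The delicate point is the bookkeeping in case (ii): correctly verifying that $\Upsilon'$ really does exclude \emph{every} over-dependent $\Omega \in \DisjOf{\Theta}$, not just the ones visited first in the loop, and that this works simultaneously for both $Z = X$ and $Z = Y$. This requires the key algebraic fact established around Equation~\eqref{eqn:im_pizlcb_orth_eq_ker_piz_b1br_transp} — namely that $\ImageOf(\pi_Z \circ \lc_B)^\bot \cap \ImageOf(\pi_Z \circ \lc_A) = \gen 0$ iff $\LinDep_{\pi_Z}(\Gamma) = \LinDep_{\pi_Z}(\Omega)$ — together with the observation that each iteration of the second loop only \emph{adds} to $\Upsilon$, so a disequality separating a given $\Omega$, once inserted, survives to the end. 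I also need to confirm the claimed $\Theta$ is the \emph{same} $\Theta$ for all $\Lambda = \Theta \cup \Upsilon'$ arising from a fixed call, which it is, since $\Theta$ is fixed once the first loop finishes. The induction for case (i) is then routine given the termination argument of Section~\ref{sec:vardec:cover:termination_time_complexity}.
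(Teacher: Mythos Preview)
Your plan is correct and matches the paper's approach: structural induction on the recursion, with the non-recursive case handled by showing that every $\Omega \in \DisjOf{\Theta}$ meeting $\Lambda$ satisfies $\LinDep_{\pi_Z}(\Omega) = \LinDep_{\pi_Z}(\Gamma)$ (one inclusion from $\Gamma \models \Theta$ via Lemma~\ref{lemma:gamma_entails_theta} and Lemma~\ref{lemma:lindep_theta_eq_lindep_gamma}, the other from the separating inequality in $\Upsilon$), whence all three properties follow. The only cosmetic difference is that the paper establishes \ref{lemma:psi_properties:b} first (as Lemma~\ref{lemma:disjuncts_touching_lambda_have_same_z_dependencies}) and then applies it with $\Gamma_1 = \Gamma \cup \Theta$, $\Gamma_2 = \Omega$ to obtain \ref{lemma:psi_properties:c}, whereas you argue \ref{lemma:psi_properties:c} directly and read off \ref{lemma:psi_properties:b} afterwards.
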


Before we prove Lemma~\ref{lemma:psi_properties}, we intuitively explain every property of $ \psi $ established by this lemma. 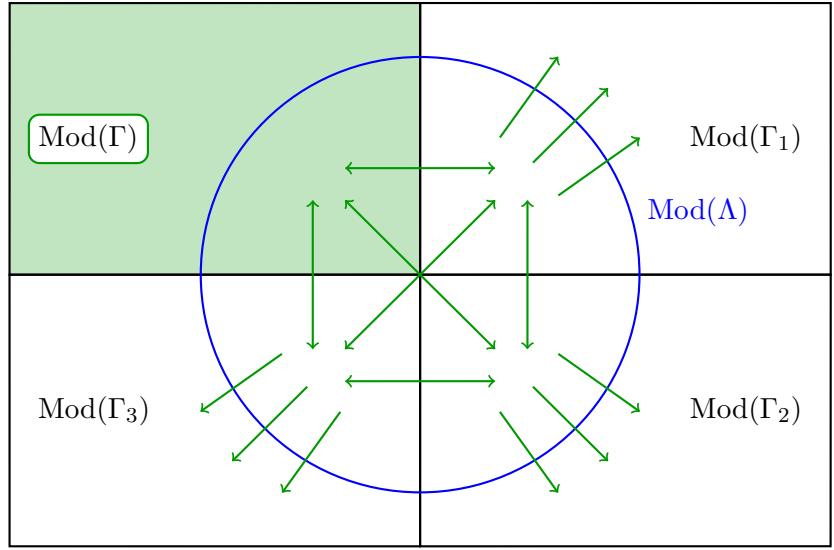
\begin{figure}[t]
	\centering
	\begin{tikzpicture}[scale=\zbtikzaggressivescaling]
	
	% top left corner of Theta region
	\coordinate (thetatl) at (-4.5, 3);
	% bottom right corner of Theta region
	\coordinate (thetabr) at (4.5, -3);
	
	% calculate the remaining corners
	% top right corner of Theta region
	\coordinate (thetatr) at ($ (thetatl-|thetabr) $);
	% bottom left corner of Theta region
	\coordinate (thetabl) at ($ (thetabr-|thetatl) $);
	
	% top center
	\coordinate (thetatc) at ($ (thetatl)!0.5!(thetatr) $);
	% bottom center
	\coordinate (thetabc) at ($ (thetabl)!0.5!(thetabr) $);
	% left center
	\coordinate (thetalc) at ($ (thetabl)!0.5!(thetatl) $);
	% right center
	\coordinate (thetarc) at ($ (thetabr)!0.5!(thetatr) $);
	
	% positions at which model flooding arrows should point
	\coordinate (arrowtl) at (-1, 1);
	\coordinate (arrowbr) at (1, -1);
	\coordinate (arrowtr) at ($ (arrowtl-|arrowbr) $);
	\coordinate (arrowbl) at ($ (arrowbr-|arrowtl) $);
	
	% Gamma region border
	\filldraw[thick, fill=OliveGreen, fill opacity=0.25] (thetatl) -- (thetatc) -- (0, 0) -- (thetalc) -- cycle;
	% Gamma_1 region border
	\draw[thick] (thetatc) -- (thetatr) -- (thetarc) -- (0, 0) -- cycle;
	% Gamma_2 region border
	\draw[thick] (0, 0) -- (thetarc) -- (thetabr) -- (thetabc) -- cycle;
	% Gamma_3 region border
	\draw[thick] (thetalc) -- (0, 0) -- (thetabc) -- (thetabl) -- cycle;
	
	% Theta region border
	% \draw[very thick, blue] (thetatl) -- (thetatr) -- (thetabr) -- (thetabl) -- cycle;
	
	\iffalse
	% Lambda
	\begin{scope}
		% add the intersection of Gamma_1 and Lambda to the region to be colored green
		\clip (thetatc) -- (thetatr) -- (thetarc) -- (0, 0) -- cycle;
		% color this region green
		\fill[OliveGreen, opacity=0.25] circle(7.5em);
	\end{scope}
	\begin{scope}
		% add the intersection of Gamma_2 and Lambda to the region to be colored green
		\clip (0, 0) -- (thetarc) -- (thetabr) -- (thetabc) -- cycle;
		% color this region green
		\fill[OliveGreen, opacity=0.25] circle(7.5em);
	\end{scope}
	\begin{scope}
		% add the intersection of Gamma_3 and Lambda to the region to be colored green
		\clip (thetalc) -- (0, 0) -- (thetabc) -- (thetabl) -- cycle;
		% color this region green
		\fill[OliveGreen, opacity=0.25] circle(7.5em);
	\end{scope}
	\fi
	\node[circle, minimum size=15em, thick, draw=blue, label={[color=blue]10:$ \ModelsOf(\Lambda) $}] at (0, 0) {};
	
	% Model flooding arrows
	% between Gamma and Gamma_1
	\draw[<->, thick, draw=OliveGreen] ($ (arrowtl) + (45:0.25) $) -- ($ (arrowtr) + (135:0.25) $);
	% between Gamma and Gamma_3
	\draw[<->, thick, draw=OliveGreen] ($ (arrowtl) + (225:0.25) $) -- ($ (arrowbl) + (135:0.25) $);
	% between Gamma_3 and Gamma_2
	\draw[<->, thick, draw=OliveGreen] ($ (arrowbl) + (315:0.25) $) -- ($ (arrowbr) + (225:0.25) $);
	% between Gamma_1 and Gamma_2
	\draw[<->, thick, draw=OliveGreen] ($ (arrowtr) + (315:0.25) $) -- ($ (arrowbr) + (45:0.25) $);
	% crossing arrows
	% between Gamma and Gamma_2
	\draw[<->, thick, draw=OliveGreen] ($ (arrowtl) + (315:0.25) $) -- ($ (arrowbr) + (135:0.25) $);
	% between Gamma_1 and Gamma_3
	\draw[<->, thick, draw=OliveGreen] ($ (arrowbl) + (45:0.25) $) -- ($ (arrowtr) + (225:0.25) $);
	% arrows from Lambda into some Gamma set
	% from Lambda into Gamma_1
	\draw[->, thick, color=OliveGreen] (30:1.75) -- ($ (arrowtr) + (20:1.5) $);
	\draw[->, thick, color=OliveGreen] (45:1.75) -- ($ (arrowtr) + (45:1.5) $);
	\draw[->, thick, color=OliveGreen] (60:1.75) -- ($ (arrowtr) + (70:1.5) $);
	% from Lambda into Gamma_2
	\draw[->, thick, color=OliveGreen] (300:1.75) -- ($ (arrowbr) + (290:1.5) $);
	\draw[->, thick, color=OliveGreen] (315:1.75) -- ($ (arrowbr) + (315:1.5) $);
	\draw[->, thick, color=OliveGreen] (330:1.75) -- ($ (arrowbr) + (340:1.5) $);
	% from Lambda into Gamma_3
	\draw[->, thick, color=OliveGreen] (210:1.75) -- ($ (arrowbl) + (200:1.5) $);
	\draw[->, thick, color=OliveGreen] (225:1.75) -- ($ (arrowbl) + (225:1.5) $);
	\draw[->, thick, color=OliveGreen] (240:1.75) -- ($ (arrowbl) + (250:1.5) $);
	
	% Gamma label
	\node[anchor=west, rounded corners, thick, draw=OliveGreen, fill=white] at ($ (thetatl)!0.5!(thetalc) + (0.2, 0) $) {$ \ModelsOf(\Gamma) $};
	% Gamma_1 label
	\node[anchor=east] at ($ (thetatr)!0.5!(thetarc) - (0.2, 0) $) {$ \ModelsOf(\Gamma_1) $};
	% Gamma_2 label
	\node[anchor=east] at ($ (thetarc)!0.5!(thetabr) - (0.2, 0) $) {$ \ModelsOf(\Gamma_2) $};
	% Gamma_3 label
	\node[anchor=west] at ($ (thetalc)!0.5!(thetabl) + (0.2, 0) $) {$ \ModelsOf(\Gamma_3) $};
	
	% Theta label (WARNING: incorrect, Theta may have more disjuncts than those illustrated)
	% \node[anchor=south west, color=blue] at ($ (thetabr) $) {$ \Theta $};
\end{tikzpicture}
	\caption{Venn diagram illustrating the relationship between $ \ModelsOf(\Lambda) $ and the sets defined by those disjuncts of $ \Theta $ (in this case $ \Gamma $, $ \Gamma_1 $, $ \Gamma_2 $, $\Gamma_3 $) which agree on at least one model with $ \Lambda $. Observe that these disjuncts define pairwise disjoint sets, and every $ v \models \Lambda \models \Theta $ must satisfy some disjunct. The model flooding that $ \Gamma $ causes into $ \Lambda $ and $ \Lambda $ subsequently into every $ \Gamma_i \in \{\Gamma_1, \Gamma_2, \Gamma_3\} $ is visualized using green arrows. The region corresponding to $ \Gamma $ is colored green to illustrate that $ \Gamma \models \varphi $.}
	\label{fig:theta_disjuncts_intersecting_lambda}
\end{figure} Fix some $ \Lambda \in \Psi $. This $ \Lambda $ must have resulted from some (possibly nested) call to the covering algorithm meaning that there must be a $ \Theta $ set computed at some point in the covering algorithm such that $ \Lambda $ is $ \Theta $ with (possibly) some additional strict inequalities, which originated in the second loop of the covering algorithm at Line~\ref{alg:cover:line:second_loop}. Think about the disjuncts of that $ \Theta $ as a grid partitioning $ \ModelsOf(\Theta) $ into (disjoint) sets. As demonstrated in Figure~\ref{fig:theta_disjuncts_intersecting_lambda}, the set defined by $ \Lambda $ can be thought of as a subset of $ \ModelsOf(\Theta) $. In essence, property \ref{lemma:psi_properties:a} says that $ \Lambda $ must agree with $ \Gamma $ on some model (note in this connection that in Figure~\ref{fig:theta_disjuncts_intersecting_lambda}, $ \ModelsOf(\Lambda) $ has a nonempty intersection with $ \ModelsOf(\Gamma) $). Intuitively, we will need this property to argue that model flooding indeed starts, in the sense that $ \Lambda $ agrees on some model with $ \Gamma $, which causes all models of $ \Lambda $ to satisfy $ \varphi $ if $ \Gamma \models \varphi $. Hence, the satisfiability of $ \Gamma \wedge \varphi $ implies that $ \psi \models \varphi $ if $ \varphi $ is $ \Pi $-decomposable.

% (note in this context that in Figure~\ref{fig:theta_disjuncts_intersecting_lambda}, the green region has a non-empty intersection with $ \ModelsOf(\Lambda) $)

Intuitively, property \ref{lemma:psi_properties:b} says that the disjuncts of $ \Theta $ agreeing on some model with $ \Lambda $ all pairwise have the same set of $ Z $-dependencies. We will need this later to be able to apply the Overspilling Theorem~\ref{thm:overspilling} which is at the core of the model flooding effect we are proving. Property \ref{lemma:psi_properties:c} is another technical statement we will need to make sure that all conditions of the Overspilling Theorem~\ref{thm:overspilling} are fulfilled.

\subsubsection{Proof of Lemma~\ref{lemma:psi_properties}}
\label{sec:vardec:lemma_psi_properties_proof}

We now prove Lemma~\ref{lemma:psi_properties}. By structural induction, we can assume that the stated properties hold if $ \Lambda $ comes from $ \Delta $, i.e., from the covering produced by any recursive call. Thus, it suffices to consider only the case when $ \Lambda $ is a DNF term of $ \Theta \wedge \bigvee \{\Upsilon' \in D \mid \Upsilon' \cup \Gamma \text{ is satisfiable}\} $ (see Line~\ref{alg:cover:line:return}).

Let $ \Upsilon' \in D $ be the set such that $ \Theta \cup \Upsilon' = \Lambda $. Since $ \Upsilon' \wedge \Gamma $ is satisfiable by construction (see Line~\ref{alg:cover:line:return}) and $ \Gamma \models \Theta $ holds by Lemma~\ref{lemma:gamma_entails_theta} (see Appendix~\ref{sec:app:cover_analysis}), it follows that $ \Gamma \wedge \Upsilon' \wedge \Theta $ is satisfiable. This shows property \ref{lemma:psi_properties:a}.

As regards property \ref{lemma:psi_properties:b}, we rigorously prove it in Appendix~\ref{sec:app:cover_analysis} (Lemma~\ref{lemma:disjuncts_touching_lambda_have_same_z_dependencies}) and discuss here only the high-level intuition. To prove that $ \Gamma_1, \Gamma_2 $ have the same set of $ Z $-dependencies for every $ Z \in \Pi $, we compare the $ Z $-dependencies of $ \Omega $ with those of $ \Gamma $, where $ \Omega \in \DisjOf{\Theta} $ is a disjunct such that $ \Omega \wedge \Lambda $ is satisfiable. After all, this is exactly what the covering algorithm essentially does in the second loop at Line~\ref{alg:cover:line:second_loop}: if $ \Omega $ has more $ Z $-dependencies than $ \Gamma $, then $ \Omega $ gets separated from $ \Gamma $. The fact that $ \Omega $ cannot have fewer $ Z $-dependencies (compared to $ \Gamma $) follows from the definition of $ \Theta $, which was constructed so that it has exactly the same $ Z $-dependencies as $ \Gamma $, meaning, in particular, that $ \Theta $ enforces all those $ Z $-dependencies present in $ \Gamma $. At the same time, $ \Omega $ cannot have more $ Z $-dependencies than $ \Gamma $ because otherwise the algorithm would have found that $ \Omega $ in the second loop at Line~\ref{alg:cover:line:second_loop} and would have ensured that no $ \Lambda \in \Psi $ agrees with $ \Omega $ on any model, which contradicts the assumption that $ \Omega \wedge \Lambda $ is satisfiable.

We now prove statement \ref{lemma:psi_properties:c}. Let $ \Omega \in \DisjOf{\Theta} $ be a disjunct such that $ \Omega \wedge \Lambda $ is satisfiable. By property \ref{lemma:psi_properties:a} and the fact that $ \Gamma \models \Theta $ holds by Lemma~\ref{lemma:gamma_entails_theta} (see Appendix~\ref{sec:app:cover_analysis}), $ \Gamma \cup \Theta \cup \Lambda $ is satisfiable. Hence, we can apply statement \ref{lemma:psi_properties:b} to $ \Gamma_1 := \Gamma \cup \Theta $ and $ \Gamma_2 := \Omega $, which gives us that \begin{align}
	\label{eqn:lindep_piz_gammauniontheta_eq_lindep_piz_omega}
	\LinDep_{\pi_Z}(\Gamma \cup \Theta) = \LinDep_{\pi_Z}(\Omega)
\end{align} holds for all $ Z \in \Pi $. Since $ \Gamma \models \Theta $, $ \Gamma \equiv \Gamma \cup \Theta $, so (\ref{eqn:lindep_piz_gammauniontheta_eq_lindep_piz_omega}) yields that \begin{align}
	\label{eqn:lindep_piz_gamma_eq_lindep_piz_omega}
	\LinDep_{\pi_Z}(\Gamma) = \LinDep_{\pi_Z}(\Omega)
\end{align} holds for all $ Z \in \Pi $. Since $ \Gamma $ is $ \Pi $-complex, it suffices to show that (\ref{eqn:lindep_piz_gamma_eq_lindep_piz_omega}) implies $ \Fixes(\Gamma) = \Fixes(\Omega) $. By (\ref{eqn:lindep_piz_gamma_eq_lindep_piz_omega}), proving $ \Fixes(\Gamma) = \Fixes(\Omega) $ reduces to showing that for any satisfiable predicate set $ \Upsilon $, $ Z \in \Pi $ and variable $ z_i \in Z $ \begin{align}
	\label{eqn:ezi_lindep_piz_upsilon_iff_zi_in_fixes_upsilon}
	\gen{e_{Z,i}} \in \LinDep_{\pi_Z}(\Upsilon) \Leftrightarrow z_i \in \Fixes(\Upsilon)
\end{align} where $ e_{Z,i} $ is the $ i $-th canonical basis vector in $ \Q^{\left|Z\right|} $. More precisely, we assume that $ z_i = x_i \in X $ if $ Z = X $ and $ z_i = y_i \in Y $ if $ Z = Y $. Indeed, for some fixed $ v \models \Upsilon $ it holds that \begin{align*}
	\gen{e_{Z,i}} \in \LinDep_{\pi_Z}(\Upsilon) &\Leftrightarrow \pi_Z(\ModelsOf(\Upsilon) - v) \subseteq \gen{e_{Z,i}}^\bot \\
	&\Leftrightarrow \pi_Z(\ModelsOf(\Upsilon) - v) \cdot e_{Z,i} = \{0\} \\
	&\Leftrightarrow \forall u \models \Upsilon : \pi_Z(u - v) \cdot e_{Z,i} = 0 \\
	&\Leftrightarrow \forall u \models \Upsilon : \pi_Z(u) \cdot e_{Z,i} = \pi_Z(v) \cdot e_{Z,i} \\
	&\Leftrightarrow z_i \in \Fixes(\Upsilon)
\end{align*} This shows (\ref{eqn:ezi_lindep_piz_upsilon_iff_zi_in_fixes_upsilon}) and thus completes the proof of property \ref{lemma:psi_properties:c}. This concludes the proof of Lemma~\ref{lemma:psi_properties}.

\subsubsection{Model flooding preliminaries}

For the proof that $ \Gamma $ causes model flooding into every $ \Lambda \in \Psi $, we will also need the following Lemma~\ref{lemma:predicate_convexity}, Theorem~\ref{thm:union_of_same_lindep_has_that_lindep} and Lemma~\ref{lemma:adding_neq_predicates_cannot_make_pi_complex_set_pi_simple}.

\begin{lemma}
	\label{lemma:predicate_convexity}
	Let $ \Lambda $ be a predicate set and $ p $ be an equality predicate. Then $ \Lambda \cup \{p\} $ is satisfiable if $ \Lambda \cup \{p^<\} $ and $ \Lambda \cup \{p^>\} $ are both satisfiable.
\end{lemma}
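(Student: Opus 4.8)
The plan is to exploit the convexity of $\ModelsOf(\Lambda)$ together with a one-dimensional intermediate-value argument. Write the equality predicate $p$ as $P_=(t_1, t_2)$ and let $f$ be the affine functional $x \mapsto t_1(x) - t_2(x)$ over $\Q^n$; then the predicates $p^<$, $p$, $p^>$ assert $f < 0$, $f = 0$, $f > 0$ respectively. The single structural fact I would use is that $\ModelsOf(\Lambda)$ is convex: every atomic predicate over $\Mstruct$ is a $<$-, $>$-, or $=$-instance and hence defines a set closed under rational convex combinations, and $\ModelsOf(\Lambda)$ is a finite intersection of such sets (this is the ``evenly convex polyhedral set'' observation already noted in the text, of which we need only plain convexity).

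Next, take $u \models \Lambda \cup \{p^<\}$ and $v \models \Lambda \cup \{p^>\}$, which exist by hypothesis, so $u, v \in \ModelsOf(\Lambda)$ with $f(u) < 0 < f(v)$. Let $w$ be the convex combination of $u$ and $v$ at which the restriction of the affine function $f$ to the segment between $u$ and $v$ vanishes; concretely, $w := \lambda u + (1-\lambda) v$ with $\lambda := \frac{f(v)}{f(v) - f(u)}$. Because $f(u), f(v) \in \Q$ (the coefficients of $f$ are rational), $\lambda$ is rational, and $f(u) < 0 < f(v)$ forces $0 < \lambda < 1$; hence $w \in \Q^n$ and in fact $w \in (u, v) \subseteq \ModelsOf(\Lambda)$ by convexity. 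Since $f$ is affine and $\lambda + (1 - \lambda) = 1$, we get $f(w) = \lambda f(u) + (1-\lambda) f(v) = 0$, i.e.\ $w \models p$. Therefore $w \models \Lambda \cup \{p\}$, which is what we wanted.

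I expect the only point needing care to be the convexity claim together with the bookkeeping that one never has to leave $\Q^n$: both are immediate once one observes that all predicates have rational coefficients, so $f$ maps $\Q^n$ into $\Q$ and the mixing coefficient $\lambda$ comes out rational automatically. There are no degenerate subcases, since the hypotheses guarantee witnesses $u$ and $v$ with strictly opposite signs of $f$, making $\lambda$ well-defined and strictly interior to $(0,1)$. One could alternatively avoid naming $\lambda$ explicitly by invoking the intermediate value theorem for the affine map $f$ along the segment $(u,v)$, but the explicit convex combination is the most direct route and keeps everything inside $\Q^n$.
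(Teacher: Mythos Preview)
Your proof is correct and follows essentially the same approach as the paper: both take witnesses on opposite sides of the hyperplane $\ModelsOf(p)$, use convexity of $\ModelsOf(\Lambda)$ to stay inside it along the connecting segment, and find the point where that segment meets the hyperplane. The only difference is presentational: the paper argues geometrically that the segment must cross the hyperplane, whereas you write down the mixing coefficient $\lambda = f(v)/(f(v)-f(u))$ explicitly, which has the minor advantage of making the rationality of the intersection point immediate.
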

\begin{proof}
	Let $ v_< \models \Lambda \cup \{p^<\} $, $ v_> \models \Lambda \cup \{p^>\} $ be models and consider the convex hull of $ \{v_<, v_>\} $, that is, the line $ L := (v_<, v_>) $ between $ v_< $ and $ v_> $. Since $ \ModelsOf(\Lambda) $ is a convex set and $ v_<, v_> \in \ModelsOf(\Lambda) $, it follows that $ L \subseteq \ModelsOf(\Lambda) $. Now observe that $ p $ defines a hyperplane (in affine space), whereas $ p^< $ and $ p^> $ define the open half-spaces of that hyperplane. 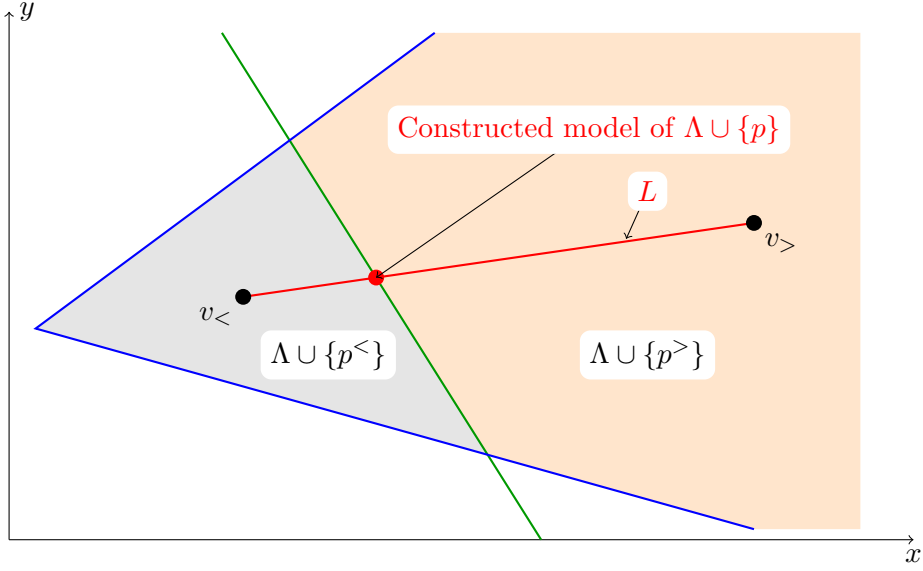
\begin{figure}[t]
		\centering
		\begin{tikzpicture}[scale=\zbtikzbiggerscaling]
	% axes
	\draw[->] (0,0) -- (8.5,0) coordinate[label = {below:$x$}] (xmax);
	\draw[->] (0,0) -- (0,5) coordinate[label = {right:$y$}] (ymax);
	
	% separating line
	\coordinate (mainlinetl) at (2, 4.8); % top left
	\coordinate (mainlinebr) at (5, 0); % bottom right
	
	% triangle
	\coordinate (trtr) at (4, 4.8); % top right
	\coordinate (trbl) at (0.25, 2); % bottom left
	\coordinate (trbr) at (7, 0.1); % bottom right
	
	% dots
	\coordinate (ltdot) at (2.2, 2.3); % <
	\coordinate (gtdot) at (7, 3); % >
	
	% fill left side
	\coordinate (intup) at (intersection of mainlinetl--mainlinebr and trtr--trbl);
	\coordinate (intdown) at (intersection of mainlinetl--mainlinebr and trbr--trbl);
	\fill[fill=gray,opacity=0.2]
	(trbl) -- (intup) -- (intdown) -- cycle;
	
	% fill right side
	\coordinate (regtr) at (8, 4.8);
	\coordinate (regbr) at (8, 0.1);
	
	\fill[fill=orange,opacity=0.2]
	(trtr) -- (regtr) -- (regbr) -- (trbr) -- (intdown) -- (intup) -- cycle;
	
	% main line
	\draw[thick, OliveGreen] (mainlinetl) -- (mainlinebr);
	
	% triangle
	\draw[thick, blue] (trtr) -- (trbl) -- (trbr);
	
	% line between dots
	\draw[red, thick] (ltdot) -- (gtdot);
	
	% draw dots
	\fill[black] (ltdot) circle (0.075);
	\fill[black] (gtdot) circle (0.075);
	
	\coordinate (lineint) at (intersection of ltdot--gtdot and mainlinetl--mainlinebr);
	\fill[red] (lineint) circle (0.075);
	
	% draw labels
	
	\node[anchor=north east,thick] at (ltdot) {$ v_< $};
	\node[anchor=north west,thick] at (gtdot) {$ v_> $};
	
	\node[rounded corners, fill=white] at (3, 1.75) {$ \Lambda \cup \{p^<\} $};
	\node[rounded corners, fill=white] at (6, 1.75) {$ \Lambda \cup \{p^>\} $};
	
	\coordinate (llabel) at (6, 3.3);
	\draw[->] (llabel) -- (5.8, 2.85);
	\node[rounded corners, red, fill=white] at (llabel) {$ L $};
	
	\coordinate (intlabel) at ($ (lineint) + (2, 1.4) $);
	\draw[->] (intlabel) -- (lineint);
	\node[rounded corners, red, fill=white] at (intlabel) {Constructed model of $ \Lambda \cup \{p\} $};
\end{tikzpicture}
		\caption{Two-dimensional geometric intuition behind the convexity argument in the proof of Lemma~\ref{lemma:predicate_convexity}. The line defined by the $ p $ predicate is colored green. The gray and orange regions together with their common border correspond to $ \ModelsOf(\Lambda) $. The half-space defined by $ p^< $ (resp. $ p^> $) is to the left (resp. right) of the green line. Blue lines depict possible borders defined by constraints in $ \Lambda $.}
		\label{fig:predicate_convexity}
	\end{figure} Since $ L $ is a line connecting two points, each of them lying in different half-spaces, $ L $ must intersect with the hyperplane defined by $ p $ (see Figure~\ref{fig:predicate_convexity}). Translating from geometry to logic, this means that $ L \cap \ModelsOf(p) \neq \varnothing $. Combining this with $ L \subseteq \ModelsOf(\Lambda) $ yields that $ \Lambda \cup \{p\} $ is satisfiable.
\end{proof}

Intuitively, the following Uniform union Theorem~\ref{thm:union_of_same_lindep_has_that_lindep} establishes that a disjunction of predicate sets having pairwise identical linear dependencies cannot be equivalent to a predicate set having a distinct set of linear dependencies. In other words, no unexpected linear dependencies can suddenly arise or vanish when considering a union of sets defined by predicate sets.

\begin{theorem}[Uniform union]
	\label{thm:union_of_same_lindep_has_that_lindep}
	Let $ \Lambda, \Lambda_1, \dots, \Lambda_k $ be satisfiable predicate sets such that $ \Lambda \equiv \bigvee_{i = 1}^k \Lambda_i $, $ V $ be a $ \Q $-vectorspace and $ h : \Q^n \rightarrow V $ be a $ \Q $-vectorspace homomorphism. Then, \[
	\LinDep_{h}(\Lambda_i) = \LinDep_{h}(\Lambda_j)
	\] for all $ i, j \in \{1, \dots, k\} $ implies \[
	\LinDep_{h}(\Lambda) = \LinDep_{h}(\Lambda_m)
	\] for all $ m \in \{1, \dots, k\} $.
\end{theorem}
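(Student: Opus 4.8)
The plan is to prove the two inclusions $\LinDep_{h}(\Lambda) \subseteq \LinDep_{h}(\Lambda_m)$ and $\LinDep_{h}(\Lambda_m) \subseteq \LinDep_{h}(\Lambda)$ separately, fixing an arbitrary $m$. The first inclusion is the easy direction: since $\Lambda_m \models \Lambda$, we have $\ModelsOf(\Lambda_m) \subseteq \ModelsOf(\Lambda)$, hence $h(\ModelsOf(\Lambda_m) - v) \subseteq h(\ModelsOf(\Lambda) - v)$ for any fixed $v$; picking $v \models \Lambda_m$ (which also satisfies $\Lambda$) and unwinding Definition~\ref{def:lindep} shows that any $U$ with $h(\ModelsOf(\Lambda)-v) \subseteq U^\bot$ also satisfies $h(\ModelsOf(\Lambda_m)-v) \subseteq U^\bot$. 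So $\LinDep_h(\Lambda) \subseteq \LinDep_{h,v}(\Lambda_m) = \LinDep_h(\Lambda_m)$, using well-definedness (Lemma~\ref{lemma:lindep_gamma_v1_equals_lindep_gamma_v2}).

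For the reverse inclusion, the natural tool is the algebraic characterization (Theorem~\ref{thm:lindep_algebraic_char}): $\LinDep_h(\Lambda)$ is exactly the set of subspaces of $\ImageOf(h \circ \lc_B)^\bot$ where $b + \gen{b_1,\dots,b_r} = \ModelsOf(\Lambda^=)$, and similarly for each $\Lambda_i$. Thus it suffices to show that $\ImageOf(h\circ\lc_B)^\bot \supseteq \ImageOf(h\circ\lc_{B_m})^\bot$, equivalently $\ImageOf(h\circ\lc_B) \subseteq \ImageOf(h\circ\lc_{B_m})$, i.e. that the affine hull of $h(\ModelsOf(\Lambda))$ (translated to pass through the origin) is contained in that of $h(\ModelsOf(\Lambda_m))$. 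Since $\ModelsOf(\Lambda) = \bigcup_i \ModelsOf(\Lambda_i)$, this reduces to: the affine span of the union $\bigcup_i h(\ModelsOf(\Lambda_i))$ equals the affine span of any single $h(\ModelsOf(\Lambda_m))$. Here the hypothesis that all the $\LinDep_h(\Lambda_i)$ coincide is what we exploit — by the easy direction applied pointwise it tells us each $\ImageOf(h\circ\lc_{B_i})^\bot$ is the same subspace $W$, hence each $h(\ModelsOf(\Lambda_i))$ lies in the same affine subspace $c_i + W^\bot$ for suitable offsets $c_i$. The remaining work is to show all these affine subspaces actually coincide (not just are parallel translates): if $h(\ModelsOf(\Lambda_i))$ and $h(\ModelsOf(\Lambda_j))$ lay in genuinely different parallel copies of $W^\bot$, their union would have strictly larger affine span, forcing $\LinDep_h(\Lambda) \subsetneq \LinDep_h(\Lambda_i)$ — but we must rule this out. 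The key observation is that $\ModelsOf(\Lambda_i)$ and $\ModelsOf(\Lambda_j)$ are both nonempty subsets of the \emph{convex} set $\ModelsOf(\Lambda)$ (intersections of evenly convex half-spaces are convex), so picking $p \in \ModelsOf(\Lambda_i)$, $q \in \ModelsOf(\Lambda_j)$ the whole segment $(p,q)$ lies in $\ModelsOf(\Lambda)$; applying Lemma~\ref{lemma:predicate_convexity}-style reasoning (convexity plus the normal-form structure of the $\Lambda_i$) shows points of this segment must fall back into some $\Lambda_k$, and tracking which $W^\bot$-coset they live in forces all the cosets to be equal.

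The main obstacle I expect is precisely this last step: arguing that the affine hulls of the pieces $h(\ModelsOf(\Lambda_i))$ literally coincide rather than merely being parallel. It is tempting to think a disjunction could "spread out" across several parallel hyperplanes, and the reason it cannot is genuinely geometric — it rests on convexity of $\ModelsOf(\Lambda)$ together with the fact that each $\Lambda_i$ is a conjunction of linear (in)equalities, so that the union covering a convex set cannot have all pieces confined to distinct parallel affine subspaces unless those subspaces all agree. I would formalize this by: (i) letting $W := \ImageOf(h\circ\lc_{B_1})^\bot$, common to all $i$ by the easy inclusion and Theorem~\ref{thm:lindep_algebraic_char}; (ii) observing that $\ModelsOf(\Lambda)$ convex and the $h$-image of its affine hull is $c + W^\bot$ for some offset, since $h$ is linear and maps affine sets to affine sets; (iii) showing $c + W^\bot$ cannot strictly contain all $c_i + W^\bot$ unless they are equal, by a dimension/convexity argument — if $c_i + W^\bot \neq c_j + W^\bot$ then their union already spans an affine set of strictly larger dimension than $W^\bot$, contradicting that $h(\ModelsOf(\Lambda))$ spans $c + W^\bot$ (this uses that a convex set whose image is covered by finitely many pieces must meet, hence be connected across, the relevant affine closure). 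Once the affine hulls agree, the kernels match, the orthogonal complements match, and Theorem~\ref{thm:lindep_algebraic_char} delivers $\LinDep_h(\Lambda) = \LinDep_h(\Lambda_m)$.
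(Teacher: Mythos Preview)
Your overall strategy matches the paper's: the easy inclusion via $\Lambda_m \models \Lambda$, then reducing the hard direction through Theorem~\ref{thm:lindep_algebraic_char} to showing that the affine pieces $h(\ModelsOf(\Lambda_i^=))$ are not merely parallel translates but literally coincide. You also correctly isolate this coincidence of cosets as the crux, and correctly bring in convexity of $\ModelsOf(\Lambda)$ and the segment between points of different pieces.

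The gap is in how you close that step. Your argument (iii) is circular: you assert that ``$h(\ModelsOf(\Lambda))$ spans $c + W^\bot$'' and then derive a contradiction from the pieces spanning something strictly larger --- but the statement that the span of $h(\ModelsOf(\Lambda))$ has dimension $\dim W^\bot$ is precisely what you are trying to prove. Nothing prior to (iii) establishes it; step (ii) only says that $h$ sends affine sets to affine sets, which gives no control on dimension. Your earlier sentence (``tracking which $W^\bot$-coset they live in forces all the cosets to be equal'') is on the right track but is not an argument.

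The paper closes this with a finiteness argument on the segment. Writing $\ModelsOf(\Lambda_i^=) = v_i + W_i$ and normalizing so that $h(v_i) \in h(W_i)^\bot$, one assumes $h(v_i) \neq h(v_j)$ for some $i,j$, picks $u_i \models \Lambda_i$ and $u_j \models \Lambda_j$, and observes that the open segment $L=(u_i,u_j)$ is infinite and, by convexity, contained in $\ModelsOf(\Lambda) \subseteq \bigcup_l (v_l + W_l)$. The key computation is that under the hypotheses each affine piece $v_l + W_l$ can meet $L$ in \emph{at most one point}: two distinct intersection points would force $h(u_i - u_j) \in h(W_l)$, hence $h(v_i - v_j) \in h(W_l)$, contradicting $h(v_i)-h(v_j) \in h(W_l)^\bot \setminus \{0\}$. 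An infinite line cannot be covered by $k$ singletons --- contradiction. This ``at most one point per piece'' step is the idea your sketch is missing; once you have it, the rest of your outline goes through verbatim.
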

\begin{proof}
	See Section~\ref{sec:app:lindep_facts:proof_uniform_union_theorem} of Appendix~\ref{sec:app:lindep_facts}.
\end{proof}

In essence, the following Lemma~\ref{lemma:adding_neq_predicates_cannot_make_pi_complex_set_pi_simple} simply states that adding inequality predicates cannot make a predicate set $ \Pi $-simple in a nontrivial manner, i.e., without simply making the set unsatisfiable.

\begin{lemma}
	\label{lemma:adding_neq_predicates_cannot_make_pi_complex_set_pi_simple}
	Let $ \Gamma $ be a $ \Pi $-complex predicate set and $ p $ be an inequality predicate such that $ \Gamma \cup \{p\} $ is satisfiable. Then $ \Gamma \cup \{p\} $ is $ \Pi $-complex.
\end{lemma}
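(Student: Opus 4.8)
The plan is to show that if $\Gamma$ is $\Pi$-complex and $p$ is an inequality predicate (i.e., a $<$- or $>$-predicate) with $\Gamma \cup \{p\}$ satisfiable, then $\Gamma \cup \{p\}$ is $\Pi$-complex. Since $\Gamma \cup \{p\}$ is assumed satisfiable, by definition it is $\Pi$-complex exactly when it fixes neither $X$ nor $Y$. So it suffices to prove that $\Fixes(\Gamma \cup \{p\}) \subseteq \Fixes(\Gamma)$: once we know this, then since $\Gamma$ (being $\Pi$-complex) fixes neither $X$ nor $Y$ in full, neither does $\Gamma \cup \{p\}$, which is what we want. Actually we need a little care: ``fixes $X$'' means fixing every variable in $X$, so it is enough to exhibit, for each block $Z \in \{X,Y\}$, some variable in $Z$ that $\Gamma \cup \{p\}$ does not fix; this follows from $\Fixes(\Gamma \cup \{p\}) \subseteq \Fixes(\Gamma)$ together with $\Gamma$ fixing neither block.

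**The geometric core.** The heart of the argument is that adding a strict inequality to a satisfiable predicate set cannot fix a previously unfixed variable. Concretely, suppose $x_i \notin \Fixes(\Gamma)$; I want to show $x_i \notin \Fixes(\Gamma \cup \{p\})$. Since $\Gamma$ is satisfiable and does not fix $x_i$, there are two models $u, u' \models \Gamma$ with $u$ and $u'$ disagreeing on the $x_i$-coordinate. Because $\ModelsOf(\Gamma)$ is convex (a conjunction of LRA predicates defines an evenly convex set, in particular convex), the whole open segment $(u, u')$ lies in $\ModelsOf(\Gamma)$, and along this segment the $x_i$-coordinate takes every value strictly between $u_i$ and $u'_i$ — so it takes infinitely many distinct values. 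Now pick any $w \models \Gamma \cup \{p\}$ (exists by hypothesis). The predicate $p$, being a strict inequality, defines an open half-space; $w$ lies in this open half-space. The plan is to perturb $w$ slightly toward $u$ or $u'$ so as to change the $x_i$-coordinate while staying inside $\ModelsOf(\Gamma)$ and inside the open half-space defined by $p$. Precisely: consider the segments $(w, u)$ and $(w, u')$; both are contained in $\ModelsOf(\Gamma)$ by convexity. At least one of $u, u'$ differs from $w$ in the $x_i$-coordinate — say $u_i \neq w_i$ — so moving a tiny amount $\delta > 0$ from $w$ toward $u$ strictly changes the $x_i$-coordinate. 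Since $w$ is in the open half-space $\ModelsOf(p)$, a sufficiently small such move stays in $\ModelsOf(p)$ (openness). Hence there is a point $w' \in \ModelsOf(\Gamma \cup \{p\})$ with $w'_i \neq w_i$, so $\Gamma \cup \{p\}$ does not fix $x_i$. This establishes $\Fixes(\Gamma \cup \{p\}) \subseteq \Fixes(\Gamma)$.

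**Assembling the conclusion.** With the inclusion $\Fixes(\Gamma \cup \{p\}) \subseteq \Fixes(\Gamma)$ in hand: since $\Gamma$ is $\Pi$-complex it fixes neither $X$ nor $Y$, so for each $Z \in \Pi$ there is a variable $z \in Z$ with $z \notin \Fixes(\Gamma)$, hence $z \notin \Fixes(\Gamma \cup \{p\})$; therefore $\Gamma \cup \{p\}$ fixes neither $X$ nor $Y$. Combined with the hypothesis that $\Gamma \cup \{p\}$ is satisfiable, we conclude $\Gamma \cup \{p\}$ is $\Pi$-complex.

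**Main obstacle.** The only subtle point is the perturbation argument: one must be careful that the small move toward $u$ genuinely changes the $x_i$-coordinate (which is guaranteed as long as we move toward whichever of $u, u'$ disagrees with $w$ on that coordinate — and at least one of them must, since $u_i \neq u'_i$), and that it stays within the open set $\ModelsOf(p)$ (guaranteed by openness of half-spaces for strict inequalities — this is precisely where we use that $p$ is an inequality and not an equality). One might alternatively phrase this more cleanly using Lemma~\ref{lemma:predicate_convexity}-style reasoning, but the direct convexity-plus-openness argument is the most transparent. An even slicker route: if $\Gamma \cup \{p\}$ fixed some $x_i$ not fixed by $\Gamma$, then $\Gamma \models x_i = c \vee x_i \ne c$ splits $\ModelsOf(\Gamma)$, and one shows $\ModelsOf(\Gamma \cup \{p^<\})$ and $\ModelsOf(\Gamma \cup \{p^>\})$ cannot both avoid changing $x_i$ unless $\ModelsOf(\Gamma)$ itself lies on the hyperplane $x_i = c$, contradicting $x_i \notin \Fixes(\Gamma)$; but I expect the direct perturbation argument to be the cleanest to write.
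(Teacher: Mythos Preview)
Your proof is correct. The approach differs from the paper's in an interesting way: the paper proves the stronger equality $\Fixes(\Gamma) = \Fixes(\Gamma \cup \{p\})$ by going through the $\LinDep$ framework, namely by showing the equivalence $\gen{e_i} \in \LinDep_{\id}(\Upsilon) \Leftrightarrow z_i \in \Fixes(\Upsilon)$ and then invoking Theorem~\ref{thm:only_equality_predicates_can_establish_lindep_strong} (only equality predicates establish linear dependencies) to conclude $\LinDep_{\id}(\Gamma) = \LinDep_{\id}(\Gamma \cup \{p\})$ from $\Gamma^= = (\Gamma \cup \{p\})^=$. You instead give a direct convexity-plus-openness perturbation argument and prove only the nontrivial inclusion $\Fixes(\Gamma \cup \{p\}) \subseteq \Fixes(\Gamma)$, which is all that is needed. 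Your route is more elementary and self-contained; the paper's route reuses the $\LinDep$ machinery already developed (and in fact the proof of Theorem~\ref{thm:only_equality_predicates_can_establish_lindep} itself contains essentially the same $\varepsilon$-perturbation idea you use, so the geometric content is the same, just packaged differently).
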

\begin{proof}
	See Section~\ref{sec:app:lindep_facts:proof_adding_neq_predicates_cannot_make_pi_complex_set_pi_simple} of Appendix~\ref{sec:app:lindep_facts}.
\end{proof}

\subsubsection{Model flooding}

Having done the above preparatory work, we now prove the covering algorithm correct by showing that $ \Gamma $ causes model flooding into every $ \Lambda \in \Psi $. We formalize this in the following Theorem~\ref{thm:model_flooding}, which captures the model flooding we visualized using green arrows in Figure~\ref{fig:theta_disjuncts_intersecting_lambda} above.

\begin{theorem}[Model flooding]
	\label{thm:model_flooding}
	Let $ \varphi \in \QFLRA $ be a formula, $ \Pi $ be a binary partition, $ \Gamma \in \Sat(\DisjPhi) $ be a $ \Pi $-complex predicate set, $ \psi := \cover(\Pi, \Gamma) $ and $ \Psi $ be the set of predicate sets corresponding to the DNF terms of $ \psi $. Then every $ \Lambda \in \Psi $ entails $ \varphi $ if $ \varphi $ is $ \Pi $-decomposable.
\end{theorem}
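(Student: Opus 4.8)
The plan is to turn the statement into a propagation argument over the disjuncts of the $\Pi$-respecting set $\Theta$ attached to $\Lambda$. Fix $\Lambda \in \Psi$ and assume $\varphi$ is $\Pi$-decomposable; let $\Theta$ be the set supplied by Lemma~\ref{lemma:psi_properties}, so that $\Lambda \supseteq \Theta$ (hence $\Lambda \models \Theta$) and properties \ref{lemma:psi_properties:a}--\ref{lemma:psi_properties:c} hold. Every model of $\Lambda$ satisfies $\Theta$, hence lies in some $\Gamma' \in \DisjOf{\Theta}$, which then agrees with $\Lambda$ on a model; consequently $\ModelsOf(\Lambda) \subseteq \bigcup \{\ModelsOf(\Gamma') : \Gamma' \in \DisjOf{\Theta},\ \Gamma' \wedge \Lambda \text{ satisfiable}\}$, so it suffices to prove that every such $\Gamma'$ entails $\varphi$. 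Each $\Gamma' \in \DisjOf{\Theta}$ is satisfiable and contains a full pattern from $\DisjTrue$, so the argument of Lemma~\ref{lemma:every_disj_either_true_or_false} shows it entails $\varphi$ or $\neg\varphi$; thus it is enough to rule out $\Gamma' \models \neg\varphi$.

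For the base point, Lemma~\ref{lemma:psi_properties:a} gives a common model of $\Gamma$ and $\Lambda$; it lies in a disjunct of $\Theta$, and since it satisfies both $\Gamma \in \DisjTrue$ and the $\DisjTrue$-part of that disjunct, the normal-form property forces the disjunct to be $\Theta \cup \Gamma$. As $\Gamma \in \Sat(\DisjPhi)$ gives $\Gamma \models \varphi$, we get $\Theta \cup \Gamma \models \varphi$, and $\Theta \cup \Gamma$ agrees with $\Lambda$. Now propagate ``$\models \varphi$'' from $\Theta \cup \Gamma$ to an arbitrary disjunct $\Gamma'$ agreeing with $\Lambda$. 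The elementary step: if $\Gamma_1, \Gamma_2 \in \DisjOf{\Theta}$ both agree with $\Lambda$ and $\Gamma_1$ is ``$p$-next to'' $\Gamma_2$, then $\Gamma_1 \models \varphi \iff \Gamma_2 \models \varphi$ --- by Lemma~\ref{lemma:psi_properties:c} both are $\Pi$-complex, by Lemma~\ref{lemma:psi_properties:b} they have the same $Z$-dependencies for every $Z \in \Pi$, so the Overspilling Theorem~\ref{thm:overspilling} applies in both directions and, combined with Lemma~\ref{lemma:every_disj_either_true_or_false}, yields the equivalence. A change between $p^{<}$ and $p^{>}$ is routed through the disjunct carrying $p^{=}$, which still agrees with $\Lambda$ by Lemma~\ref{lemma:predicate_convexity} (hence lies in $\DisjOf{\Theta}$) and is $\Pi$-complex by Lemma~\ref{lemma:psi_properties:c}; intermediate disjuncts arising when several predicates are relaxed to equalities in turn are kept $\Pi$-complex using Lemma~\ref{lemma:adding_neq_predicates_cannot_make_pi_complex_set_pi_simple}, and are shown to share the $Z$-dependencies of their endpoints by sandwiching between two disjuncts of equal $Z$-dependencies and invoking monotonicity of $\LinDep$ together with the Uniform union Theorem~\ref{thm:union_of_same_lindep_has_that_lindep}.

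It remains to connect any disjunct $\Gamma'$ agreeing with $\Lambda$ to $\Theta \cup \Gamma$ by a chain of such steps that never leaves the set of disjuncts agreeing with $\Lambda$. For this I would exploit that $\ModelsOf(\Lambda)$ is convex (indeed an evenly convex polyhedral set, and $\Pi$-respecting, hence a product of convex sets): pick models of $\Theta \cup \Gamma$ and of $\Gamma'$ inside $\ModelsOf(\Lambda)$, join them by a segment $L \subseteq \ModelsOf(\Lambda)$, and read off the finitely many sign patterns of the $\Pi$-disrespecting predicates of $\varphi$ met along $L$. Consecutive patterns are incident through the boundary pattern at the crossing point, which by the line-versus-hyperplane geometry is obtained from each neighbour by relaxing strict signs to equalities; refining this into ``$p$-next to'' steps (and, where needed, perturbing $L$ within the relative interior of $\ModelsOf(\Lambda)$ to cross one hyperplane at a time) keeps all intermediate disjuncts realized inside $\ModelsOf(\Lambda)$, hence agreeing with $\Lambda$. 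Concatenating, $\Theta \cup \Gamma \models \varphi$ forces $\Gamma' \models \varphi$ for every $\Gamma'$ agreeing with $\Lambda$, whence $\Lambda \models \varphi$. I expect the connectivity/refinement step to be the main obstacle: the delicate point is guaranteeing that the intermediate predicate sets along the walk are simultaneously satisfiable with $\Lambda$ --- so that properties \ref{lemma:psi_properties:b}, \ref{lemma:psi_properties:c} and the Overspilling Theorem apply to them --- which is precisely what the convex geometry of $\ModelsOf(\Lambda)$ and the auxiliary Lemma~\ref{lemma:predicate_convexity}, Lemma~\ref{lemma:adding_neq_predicates_cannot_make_pi_complex_set_pi_simple} and Theorem~\ref{thm:union_of_same_lindep_has_that_lindep} are there to secure; the case where $\Lambda$ stems from a recursive call is already absorbed into Lemma~\ref{lemma:psi_properties} via structural induction.
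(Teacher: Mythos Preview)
Your overall architecture matches the paper's: reduce to propagating ``$\models\varphi$'' among the disjuncts of $\Theta$ that meet $\Lambda$, use Lemma~\ref{lemma:psi_properties} for $\Pi$-complexity and equal $Z$-dependencies, and drive each elementary step with the Overspilling Theorem. The gap is in your connectivity argument. Your chain works only if every intermediate full disjunct is satisfiable (otherwise it is not in $\DisjOf{\Theta}$, $\LinDep$ is undefined, and neither Lemma~\ref{lemma:psi_properties}\ref{lemma:psi_properties:b},\ref{lemma:psi_properties:c} nor Overspilling applies). When two $\Pi$-disrespecting hyperplanes $p_i^{=}$ and $p_j^{=}$ cut the affine hull $\ModelsOf(\Theta^{=})$ of $\ModelsOf(\Lambda)$ in the same affine set, every path in $\ModelsOf(\Lambda)$ crosses them simultaneously; no perturbation of $L$ inside $\ModelsOf(\Lambda)$ separates the crossings, and the intermediate pattern with, say, $p_i^{=}$ and $p_j^{<}$ is outright unsatisfiable together with $\Theta$. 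Your appeal to Lemma~\ref{lemma:adding_neq_predicates_cannot_make_pi_complex_set_pi_simple} presupposes satisfiability rather than supplying it, and your $\LinDep$ sandwich likewise needs the middle set to be satisfiable before $\LinDep$ is even defined.

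The paper avoids this by never insisting on full disjuncts as stepping stones. It runs an induction on the number $m$ of $\Pi$-disrespecting predicates still free, and at the inductive step applies Overspilling not to two full disjuncts but to the \emph{partial} patterns $\Lambda_{=}$ and $\Lambda_{S_i}$ (where $\Lambda_S:=\Lambda\cup\{p_m^{S},p_{m+1}^{Q_{m+1}},\dots,p_k^{Q_k}\}$), which are automatically $p_m$-next to one another. Equality of their $Z$-dependencies is obtained by writing each $\Lambda_S$ as the disjunction of its full disjuncts (all of which share $\LinDep$ by Lemma~\ref{lemma:psi_properties}\ref{lemma:psi_properties:b}) and invoking the Uniform union Theorem~\ref{thm:union_of_same_lindep_has_that_lindep}; the degenerate case where $\Lambda_{=}$ is unsatisfiable is dispatched by Lemma~\ref{lemma:predicate_convexity}, which then forces $S_1=S_2$. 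In short, the paper's use of Uniform union is to compute $\LinDep$ of a \emph{union} of disjuncts so that Overspilling can be applied one predicate at a time without ever needing a possibly nonexistent intermediate full disjunct; your line-walk would have to be rebuilt along these lines to close the gap.
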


We prove Theorem~\ref{thm:model_flooding}. Let $ p_1, \dots, p_k $ be the elements of \begin{align}
	\label{eqn:gamma_disresp_eq_p1_upto_pk}
	\PiComp{\Gamma} = \{p_1, \dots, p_k\}
\end{align} and fix an arbitrary $ \Lambda \in \Psi $. Let furthermore $ \Theta $ the be corresponding set computed in the call to the covering algorithm which produced $ \Lambda $. Suppose $ \varphi $ were $ \Pi $-decomposable. Since $ \Gamma $ is $ \Pi $-complex and $ \Gamma \models \varphi $, it suffices to show that for any $ \Gamma_1, \Gamma_2 \in \DisjOf{\Lambda} $, $ \Gamma_1 \models \varphi $ is equivalent to $ \Gamma_2 \models \varphi $. To be able to prove this by induction on the number of predicates on which $ \Gamma_1 $ and $ \Gamma_2 $ agree, we parametrize the above statement and show the following Claim~\ref{claim:model_flooding} instead (note that setting $ m := k+1 $ makes both statements identical).

\begin{zbclaim}
	\label{claim:model_flooding}
	For every $ m \in \{1, \dots, k + 1\} $ and $ Q = (Q_m, \dots, Q_k) \in \mathbb{P}^{k-m+1} $, if \[
		\Gamma_1, \Gamma_2 \in \DisjOf{\Lambda \cup \{p_m^{Q_m}, \dots, p_k^{Q_k}\}}
	\] are disjuncts, then $ \Gamma_1 \models \varphi $ is equivalent to $ \Gamma_2 \models \varphi $.
\end{zbclaim}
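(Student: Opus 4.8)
The plan is to reduce Theorem~\ref{thm:model_flooding} to Claim~\ref{claim:model_flooding} instantiated at $ m := k+1 $, where the tuple $ Q $ is empty and the two statements coincide, and then to prove the Claim by induction on $ m $ (equivalently, on the number of predicates on which $ \Gamma_1 $ and $ \Gamma_2 $ are forced to agree). In the base case $ m = 1 $, the set $ \Lambda \cup \{p_1^{Q_1}, \dots, p_k^{Q_k}\} $ already pins the sign of every predicate of $ \varphi $: its $ \Pi $-respecting predicates are pinned (with their original signs) by $ \PiSimp{\Gamma} $, which is contained in the $ \Theta $ associated with $ \Lambda $ by Lemma~\ref{lemma:psi_properties}, because every (possibly nested) call of the covering algorithm only enlarges its argument by $ \Pi $-respecting equalities and starts from $ \Theta := \PiSimp{\Gamma} $ (Line~\ref{alg:cover:line:predresp}); and the $ \Pi $-disrespecting predicates of $ \varphi $, which are precisely $ p_1, \dots, p_k $ by (\ref{eqn:gamma_disresp_eq_p1_upto_pk}), are pinned by $ p_1^{Q_1}, \dots, p_k^{Q_k} $. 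Hence $ \DisjOf{\Lambda \cup \{p_1^{Q_1}, \dots, p_k^{Q_k}\}} $ has at most one element, so $ \Gamma_1 = \Gamma_2 $ and the base case is trivial.

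For the inductive step $ m-1 \to m $, I would write $ T := \Lambda \cup \{p_m^{Q_m}, \dots, p_k^{Q_k}\} $ and take $ \Gamma_1, \Gamma_2 \in \DisjOf{T} $; by the pinning observation above, $ \Gamma_1 $ and $ \Gamma_2 $ can differ only on the signs of $ p_1, \dots, p_{m-1} $. If they assign $ p_{m-1} $ the same sign $ R $, both lie in $ \DisjOf{\Lambda \cup \{p_{m-1}^R, p_m^{Q_m}, \dots, p_k^{Q_k}\}} $ and the induction hypothesis applies directly. Otherwise I would connect $ \Gamma_1 $, by a chain of one or two steps of the ``$ p $-next to'' form from the Overspilling Theorem~\ref{thm:overspilling} (each step flipping only the sign of $ p_{m-1} $, passing through the state containing $ p_{m-1}^= $ when the two strict signs are involved — that intermediate state being satisfiable by Lemma~\ref{lemma:predicate_convexity}), to a disjunct $ \Gamma_3 \in \DisjOf{T} $ that agrees with $ \Gamma_2 $ on $ p_{m-1} $, and then finish with the induction hypothesis applied to $ \Gamma_3 $ and $ \Gamma_2 $. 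Each predicate set on this chain is, up to the strict inequalities of the relevant $ \Upsilon' $, a disjunct of $ \Theta $ agreeing with $ \Lambda $ on a model, so it is $ \Pi $-complex by property~\ref{lemma:psi_properties:c} of Lemma~\ref{lemma:psi_properties} together with Lemma~\ref{lemma:adding_neq_predicates_cannot_make_pi_complex_set_pi_simple}, and all of them share the same $ \LinDep_{\pi_Z}(\cdot) $ for each $ Z \in \Pi $ by property~\ref{lemma:psi_properties:b} and Theorem~\ref{thm:lindep_algebraic_char} (adding strict inequalities leaves $ \ModelsOf(\cdot^=) $, hence the linear dependencies, unchanged). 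Thus every hypothesis of Theorem~\ref{thm:overspilling} is available at each step, and combining it with Lemma~\ref{lemma:every_disj_either_true_or_false} yields $ \Gamma_1 \models \varphi \Leftrightarrow \Gamma_3 \models \varphi $, hence $ \Gamma_1 \models \varphi \Leftrightarrow \Gamma_2 \models \varphi $.

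The step I expect to be the main obstacle is making the chain from $ \Gamma_1 $ to $ \Gamma_3 $ pass through only satisfiable predicate sets while keeping all of Overspilling's structural hypotheses intact — in particular when $ \Gamma_1 $ and $ \Gamma_2 $ disagree not just on $ p_{m-1} $ but simultaneously on several of $ p_1, \dots, p_{m-1} $, so that naively flipping a single sign of $ \Gamma_1 $ can produce an unsatisfiable set. I would deal with this via the convexity of $ \ModelsOf(T) $: along a generic segment between a model of $ \Gamma_1 $ and a model of $ \Gamma_2 $, the hyperplanes $ p_l = 0 $ ($ l < m $) are crossed one at a time, and the cell met just after the first crossing is exactly a satisfiable disjunct of $ T $ obtained from $ \Gamma_1 $ by flipping a single sign; iterating, using Lemma~\ref{lemma:predicate_convexity} to realise the intermediate equality states and the Uniform Union Theorem~\ref{thm:union_of_same_lindep_has_that_lindep} to control the linear dependencies of the unions of cells that arise, one reaches a disjunct agreeing with $ \Gamma_2 $ on $ p_{m-1} $, after which the induction hypothesis absorbs the remaining disagreements. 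Verifying the details of this chaining — in particular that the relevant cells are genuinely reachable by single-predicate ``$ p $-next to'' moves — is the technical heart of the argument.
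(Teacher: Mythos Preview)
Your overall architecture is right: induction on $m$, a trivial base case via $\PiSimp{\Gamma}\subseteq\Theta\subseteq\Lambda$, and an inductive step that bridges the two $p_{m-1}$-regions via Overspilling and then invokes the hypothesis within each region. But the level at which you apply Overspilling is the wrong one, and that is exactly where your acknowledged ``main obstacle'' bites.

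The paper never tries to build a chain of individual disjuncts. Instead it applies the Overspilling Theorem once (twice, really) to the \emph{aggregated} sets
\[
  \Lambda_S \;=\; \Lambda \cup \{p_{m-1}^S, p_m^{Q_m},\dots,p_k^{Q_k}\}\qquad (S\in\mathbb{P}),
\]
which are by construction $p_{m-1}$-next to each other. The hypotheses of Overspilling at this aggregated level are easy: $\Lambda_{S_1},\Lambda_{S_2}$ are satisfiable because $\Gamma_1,\Gamma_2$ witness them; if $\Lambda_=$ is unsatisfiable a short convexity argument forces $S_1=S_2$ and the IH already finishes; and the crucial equality $\LinDep_{\pi_Z}(\Lambda_=)=\LinDep_{\pi_Z}(\Lambda_{S_i})$ follows from Lemma~\ref{lemma:psi_properties}\ref{lemma:psi_properties:b} together with the Uniform Union Theorem applied to the decomposition $\Lambda_S\equiv\bigvee\DisjOf{\Lambda_S}$ (this is precisely why that theorem is in the paper). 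Finally the IH at level $m-1$ says all disjuncts inside each $\Lambda_S$ agree on $\varphi$, so a single model of $\varphi\wedge\Lambda_S$ upgrades to $\Lambda_S\models\varphi$. One then reads off
\[
  \Gamma_1\models\varphi\;\Leftrightarrow\;\Lambda_{S_1}\models\varphi\;\Leftrightarrow\;\Lambda_=\models\varphi\;\Leftrightarrow\;\Lambda_{S_2}\models\varphi\;\Leftrightarrow\;\Gamma_2\models\varphi.
\]

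Your microscopic plan---flip the sign of $p_{m-1}$ inside $\Gamma_1$ itself, or walk a segment and flip one $p_l$ at a time---is not wrong in spirit, but the ``generic segment'' argument is where it leaks. If two of the $p_l$'s (for $l<m$) happen to define the same hyperplane inside $\ModelsOf(T^=)$, then every segment crosses them simultaneously, no amount of genericity helps, and the resulting jump changes two predicates at once so Overspilling does not apply directly. You would have to detect and strip redundant predicates first, argue that after stripping you get genuine codimension-$1$ crossings, and still check that each intermediate cell on the walk satisfies all three structural hypotheses of Overspilling. This can be made to work, but it is considerably more delicate than what you wrote; the paper's trick of lifting Overspilling to the sets $\Lambda_S$ (and using the IH to give each $\Lambda_S$ a single truth value) is exactly what makes the whole obstacle disappear.
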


\noindent We now prove Claim~\ref{claim:model_flooding} by induction on $ m $, starting with $ m = 1 $ and going up to $ m = k + 1 $.

\textbf{Base case}: Let $ m = 1 $ and $ Q = (Q_1, \dots, Q_k) \in \mathbb{P}^k $ be arbitrary but fixed. Due to the initialization of $ \Theta $ at Line~\ref{alg:cover:line:predresp} of the covering algorithm, \[
	\PiSimp{\Gamma} \subseteq \Theta \subseteq \Lambda \subseteq \Gamma_i
\] holds for all $ i \in \{1, 2\} $. Combining this with (\ref{eqn:gamma_disresp_eq_p1_upto_pk}) and \[
	\{p_1^{Q_1}, \dots, p_k^{Q_k}\} \subseteq \Gamma_i
\] yields \[
	\DisjTrue \ni \PiSimp{\Gamma} \cup \{p_1^{Q_1}, \dots, p_k^{Q_k}\} \subseteq \Gamma_i
\] for all $ i \in \{1, 2\} $. Hence, $ \Gamma_1 $ agrees with $ \Gamma_2 $ on the disjunct from $ \DisjPhi $ which $ \Gamma_1 $ and $ \Gamma_2 $ both contain. Since $ \Gamma_1 $ and $ \Gamma_2 $ agree on all remaining elements, \[
	\Gamma_1, \Gamma_2 \in \DisjOf{\Lambda \cup \{p_1^{Q_1}, \dots, p_k^{Q_k}\}} = \{\Lambda \cup \{p_1^{Q_1}, \dots, p_k^{Q_k}\}\}
\] Hence, $ \Gamma_1 = \Gamma_2 $, so it becomes trivial that $ \Gamma_1 \models \varphi $ is equivalent to $ \Gamma_2 \models \varphi $.

\textbf{Inductive step}: Let $ Q = (Q_{m+1}, \dots, Q_k) \in \mathbb{P}^{k-m} $ and \[
	\Gamma_1, \Gamma_2 \in \DisjOf{\Lambda \cup \{p_{m+1}^{Q_{m+1}}, \dots, p_k^{Q_k}\}}
\] be disjuncts. For convenience, we define \[
	\Lambda_S := \Lambda \cup \{p_m^S, p_{m+1}^{Q_{m+1}}, \dots, p_k^{Q_k}\}
\] for every $ S \in \mathbb{P} $. Let $ S_1, S_2 \in \mathbb{P} $ be (unique) predicate symbols such that $ \Gamma_1 \in \DisjOf{\Lambda_{S_1}} $ and $ \Gamma_2 \in \DisjOf{\Lambda_{S_2}} $. \begin{figure}[t]
	\centering
	\input{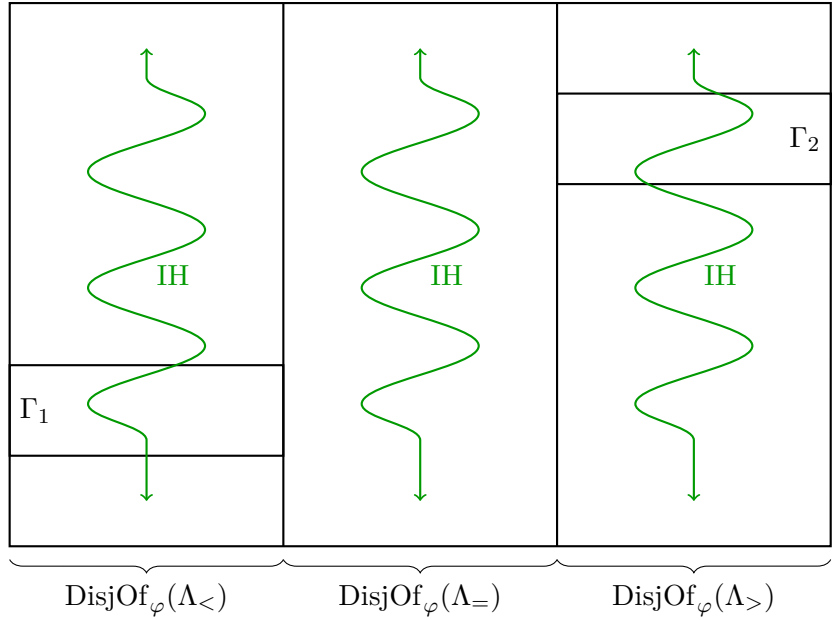}
	\caption{Venn diagram illustrating the way we partition $ \DisjOf{\Lambda \cup \{p_{m+1}^{Q_{m+1}}, \dots, p_k^{Q_k}\}} $ into groups of disjuncts. The regions labeled $ \Gamma_1 $ and $ \Gamma_2 $ in the diagram are to be interpreted as elements (i.e., not subsets) of $ \DisjOf{\Lambda_<} $ and $ \DisjOf{\Lambda_>} $, respectively. The green arrows illustrate the model flooding happening by the induction hypothesis (``IH''). More precisely, for all $ Q \in \mathbb{P} $, the disjuncts of $ \Lambda_Q $ either all entail $ \varphi $, or all entail $ \neg\varphi $. The disjuncts $ \Gamma_1 $ and $ \Gamma_2 $ are visualized for the case when $ p_m^< \in \Gamma_1 $ and $ p_m^> \in \Gamma_2 $. Thus, in this example $ S_1 = P_< $ and $ S_2 = P_> $; observe also that $ \Gamma_1 \models \Lambda_< $ and $ \Gamma_2 \models \Lambda_> $.}
	\label{fig:model_flooding_inductive_step}
\end{figure} A way of thinking about the induction hypothesis is visualized in Figure~\ref{fig:model_flooding_inductive_step}.

\begin{zbclaim}
	\label{claim:lambdas1_and_lambdas2_both_picomplex}
	$ \Lambda_{S_1} $ and $ \Lambda_{S_2} $ are both $ \Pi $-complex.
\end{zbclaim}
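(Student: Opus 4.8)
The plan is to deduce $\Pi$-complexity of $\Lambda_{S_1}$ and $\Lambda_{S_2}$ from that of their witnessing disjuncts $\Gamma_1$ and $\Gamma_2$, and then transfer the property downward along the $\subseteq$-relation on predicate sets. Satisfiability of each $\Lambda_{S_i}$ is immediate: since $\Gamma_i\in\DisjOf{\Lambda_{S_i}}$, the set $\Gamma_i$ is satisfiable and contains $\Lambda_{S_i}$, so $\varnothing\neq\ModelsOf(\Gamma_i)\subseteq\ModelsOf(\Lambda_{S_i})$. Thus the real content is to show that neither $\Lambda_{S_1}$ nor $\Lambda_{S_2}$ fixes $X$ or $Y$, and for that I would first prove that the larger sets $\Gamma_1,\Gamma_2$ are $\Pi$-complex.

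First I would expose the shape of $\Lambda$. Being a DNF term of $\cover(\Pi,\Gamma)$, it is --- modulo the structural induction that absorbs terms coming from the recursive $\Delta$ --- of the form $\Lambda=\Theta\cup\Upsilon'$, where $\Theta$ is the set from Lemma~\ref{lemma:psi_properties} computed in the call that produced $\Lambda$ and $\Upsilon'$ is one of the sets in $D$ (Line~\ref{alg:cover:line:d_set}), hence a conjunction of strict inequality predicates. Writing $\Gamma_i=\Lambda_{S_i}\cup\Omega_i$ for some $\Omega_i\in\DisjTrue$ and using that each $p_j\in\PiComp{\Gamma}$ is a predicate of $\varphi$, satisfiability of $\Gamma_i$ forces $\{p_m^{S_i},p_{m+1}^{Q_{m+1}},\dots,p_k^{Q_k}\}\subseteq\Omega_i$ (otherwise $\Gamma_i$ would contain two incompatible predicate-symbol variants of the same $\varphi$-predicate, which is unsatisfiable), so that $\Gamma_i=\Theta\cup\Upsilon'\cup\Omega_i$. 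Now $\Theta\cup\Omega_i\subseteq\Gamma_i$ is satisfiable, hence $\Theta\cup\Omega_i\in\DisjOf{\Theta}$, and $(\Theta\cup\Omega_i)\wedge\Lambda$ is satisfiable because $(\Theta\cup\Omega_i)\cup\Lambda=\Gamma_i$. Lemma~\ref{lemma:psi_properties}, property~\ref{lemma:psi_properties:c}, then yields that $\Theta\cup\Omega_i$ is $\Pi$-complex.

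Next I would re-introduce the strict inequalities of $\Upsilon'$ into $\Theta\cup\Omega_i$ one predicate at a time. Every intermediate predicate set lies between $\Theta\cup\Omega_i$ and $\Gamma_i$ and is therefore satisfiable, so Lemma~\ref{lemma:adding_neq_predicates_cannot_make_pi_complex_set_pi_simple} preserves $\Pi$-complexity at each step, giving that $\Gamma_i$ itself is $\Pi$-complex. Finally, since $\Lambda_{S_i}\subseteq\Gamma_i$ we have $\ModelsOf(\Gamma_i)\subseteq\ModelsOf(\Lambda_{S_i})$, so if $\Lambda_{S_i}$ fixed every variable of $X$ (resp.\ of $Y$), then $\Gamma_i$ would fix that block too, contradicting the $\Pi$-complexity of $\Gamma_i$ just established. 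Hence $\Lambda_{S_i}$ is satisfiable and fixes neither $X$ nor $Y$, i.e.\ is $\Pi$-complex.

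I expect the main difficulty here to be bookkeeping rather than a new idea: carefully isolating, inside $\Gamma_i$, the part $\Omega_i$ inherited from the predicates of $\varphi$ (which is exactly what makes $\Theta\cup\Omega_i$ a genuine disjunct of $\Theta$ to which property~\ref{lemma:psi_properties:c} applies) from the freshly synthesized separating and strict-inequality predicates in $\Upsilon'$, and justifying the normal form $\Lambda=\Theta\cup\Upsilon'$ uniformly across the recursion. One must also rule out the degenerate situation where the call producing $\Lambda$ returned $\decsimple(\Pi,\Gamma^{\ast})$ directly for a $\Pi$-simple input $\Gamma^{\ast}$: there $\Lambda\equiv\Gamma^{\ast}\models\varphi$ (as $\Gamma^{\ast}$ is a refinement of the disjunct $\Gamma$ of $\varphi$), so the enclosing Claim~\ref{claim:model_flooding} already holds for this $\Lambda$ and the present claim is not needed. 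Once these points are dispatched, the combination of Lemma~\ref{lemma:psi_properties} property~\ref{lemma:psi_properties:c}, Lemma~\ref{lemma:adding_neq_predicates_cannot_make_pi_complex_set_pi_simple}, and the monotonicity of ``fixes'' under adding predicates closes the argument.
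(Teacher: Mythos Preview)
Your proposal is correct and follows essentially the same route as the paper: decompose each $\Gamma_i$ as a disjunct of $\Theta$ (which is $\Pi$-complex by Lemma~\ref{lemma:psi_properties}~\ref{lemma:psi_properties:c}) together with strict inequalities, invoke Lemma~\ref{lemma:adding_neq_predicates_cannot_make_pi_complex_set_pi_simple} iteratively to conclude $\Gamma_i$ is $\Pi$-complex, and then transfer this to $\Lambda_{S_i}$ via $\Gamma_i\models\Lambda_{S_i}$. The paper's proof is terser and omits the explicit bookkeeping you spell out (the isolation of $\Omega_i$ and the handling of the $\Pi$-simple degenerate case), but the argument is the same.
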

\begin{proof}
	Note that $ \Gamma_1 $ and $ \Gamma_2 $ can each be written as a union of a predicate set without equalities and a disjunct of $ \Theta $, which is $ \Pi $-complex by Lemma~\ref{lemma:psi_properties}~\ref{lemma:psi_properties:c}. Since adding strict inequalities cannot make a predicate set $ \Pi $-simple if the resulting set is satisfiable (Lemma~\ref{lemma:adding_neq_predicates_cannot_make_pi_complex_set_pi_simple}), applying this lemma inductively yields that $ \Gamma_1 $ and $ \Gamma_2 $ are both $ \Pi $-complex. Since $ \Gamma_1 \models \Lambda_{S_1} $ and $ \Gamma_2 \models \Lambda_{S_2} $, the claim follows.
\end{proof}

We now show that, in essence, it is easy to handle the case when $ \Lambda_= $ is unsatisfiable.

\begin{zbclaim}
	\label{claim:gammaeq_unsat_implies_s1_eq_s2}
	If $ \Lambda_= $ is unsatisfiable, then $ S_1 = S_2 $.
\end{zbclaim}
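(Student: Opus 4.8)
The plan is to argue by contradiction, using the convexity Lemma~\ref{lemma:predicate_convexity} as the sole non-trivial ingredient. First I would record two immediate facts. Since $\Gamma_1,\Gamma_2 \in \DisjOf{\Lambda \cup \{p_{m+1}^{Q_{m+1}},\dots,p_k^{Q_k}\}}$ are disjuncts, they are in particular satisfiable; and because $\Lambda_{S_1}\subseteq\Gamma_1$ and $\Lambda_{S_2}\subseteq\Gamma_2$, both $\Lambda_{S_1}$ and $\Lambda_{S_2}$ are satisfiable as well. Since $\Lambda_=$ is assumed unsatisfiable, this already forces $S_1\neq P_=$ and $S_2\neq P_=$, hence $S_1,S_2\in\{P_<,P_>\}$.

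Now suppose, towards a contradiction, that $S_1\neq S_2$; by the symmetry between $\Gamma_1$ and $\Gamma_2$ we may assume $S_1 = P_<$ and $S_2 = P_>$, so that $\Lambda_<$ and $\Lambda_>$ are both satisfiable. Write $\Lambda' := \Lambda \cup \{p_{m+1}^{Q_{m+1}},\dots,p_k^{Q_k}\}$, so that $\Lambda_S = \Lambda' \cup \{p_m^S\}$ for every $S\in\mathbb{P}$. Since $p_m$ is one of the binary predicates occurring in $\varphi$, the predicate $p_m^{P_=}$ is an equality predicate, and the predicate-substitution convention gives $(p_m^{P_=})^< = p_m^{P_<}$ and $(p_m^{P_=})^> = p_m^{P_>}$. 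Hence $\Lambda' \cup \{(p_m^{P_=})^<\} = \Lambda_<$ and $\Lambda' \cup \{(p_m^{P_=})^>\} = \Lambda_>$ are both satisfiable, so Lemma~\ref{lemma:predicate_convexity}, applied to the predicate set $\Lambda'$ and the equality predicate $p_m^{P_=}$, yields that $\Lambda' \cup \{p_m^{P_=}\} = \Lambda_=$ is satisfiable, contradicting the hypothesis. Therefore $S_1 = S_2$, as claimed.

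I do not expect any real obstacle here: the claim is essentially a single invocation of Lemma~\ref{lemma:predicate_convexity}, which was stated among the ``model flooding preliminaries'' precisely for this purpose. The only points requiring a little care are verifying that $p_m^{P_=}$ is genuinely an equality predicate of the kind the lemma speaks about (immediate, since $p_m$ is one of $\varphi$'s binary predicates) and the routine identifications $(p_m^{P_=})^< = p_m^{P_<}$ and $(p_m^{P_=})^> = p_m^{P_>}$ coming from the definition of predicate substitution; everything else is just unfolding the definitions of $\DisjOf{\cdot}$ (all its elements are satisfiable) and of $\Lambda_S$.
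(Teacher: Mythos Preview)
Your proposal is correct and follows essentially the same route as the paper: first rule out $S_i = P_=$ using the satisfiability of $\Gamma_i \supseteq \Lambda_{S_i}$, then assume $\{S_1,S_2\}=\{P_<,P_>\}$ and invoke Lemma~\ref{lemma:predicate_convexity} to derive the contradiction that $\Lambda_=$ is satisfiable. The paper's proof is just a terser version of the same argument.
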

\begin{proof}
	Suppose $ \Lambda_= $ were unsatisfiable. First observe that $ S_1 \neq P_= $ and $ S_2 \neq P_= $. This is because $ S_i = P_= $ would imply $ \Gamma_i \models \Lambda_{S_i} \equiv \Lambda_= $ which, given that $ \Gamma_i $ is satisfiable, contradicts the unsatisfiability of $ \Lambda_= $. Now suppose $ S_1 \neq S_2 $ or, in other words, $ \{S_1, S_2\} = \{<, >\} $. Hence, the satisfiability of both $ \Lambda_{S_1} $ and $ \Lambda_{S_2} $ implies, by Lemma~\ref{lemma:predicate_convexity}, that $ \Lambda_= $ is satisfiable, which is a contradiction.
\end{proof}

\noindent Note that if $ \Lambda_= $ is unsatisfiable, then by the above Claim~\ref{claim:gammaeq_unsat_implies_s1_eq_s2} we have \[
\Gamma_1, \Gamma_2 \in \DisjOf{\Lambda_{S_1}} = \DisjOf{\Lambda_{S_2}}
\] and hence Claim~\ref{claim:model_flooding} holds by the induction hypothesis.

Thus, in the remainder of the proof, we can without loss of generality focus only on the case when $ \Lambda_= $ is satisfiable. It follows that $ \Lambda_= $ must have at least one satisfiable disjunct that must be $ \Pi $-complex by Lemma~\ref{lemma:psi_properties} \ref{lemma:psi_properties:c}. This is because any satisfiable disjunct of $ \Lambda_= $ implies (due to $ \Theta \subseteq \Lambda \subseteq \Lambda_= $) the existence of a disjunct of $ \Theta $, which is satisfiable when taken in conjunction with $ \Lambda $, and adding strict inequalities cannot make a predicate set $ \Pi $-simple (Lemma~\ref{lemma:adding_neq_predicates_cannot_make_pi_complex_set_pi_simple}). Hence, $ \Lambda_= $ is $ \Pi $-complex.

At this point, the overall idea is to apply the Overspilling Theorem~\ref{thm:overspilling} to the disjuncts of $ \Lambda_S $ visualized as columns in Figure~\ref{fig:model_flooding_inductive_step}. For this we need to establish that the $ \Lambda_S $ predicate sets have the same $ Z $-dependencies for all $ Z \in \Pi $ and $ S \in \mathbb{P} $ (for which $ \Lambda_S $ is satisfiable), which is precisely what we now focus on. Intuitively, the Overspilling Theorem~\ref{thm:overspilling} taken together with the induction hypothesis implies that a green arrow illustrating model flooding can also be drawn between the columns in Figure~\ref{fig:model_flooding_inductive_step}. In the remainder of the proof, we rigorously verify this intuition.

\begin{zbclaim}
	\label{claim:lindep_lambdaeq_eq_lindep_lambdasi}
	For every $ i \in \{1, 2\} $, it holds that \[
		\LinDep_{\pi_Z}(\Lambda_=) = \LinDep_{\pi_Z}(\Lambda_{S_i})
	\] where $ Z \in \Pi $ is arbitrary.
\end{zbclaim}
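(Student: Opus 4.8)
The plan is to pin down, for each $Z \in \Pi$, a single value $L_Z$ of $\LinDep_{\pi_Z}$ that is shared by all disjuncts of $\Theta$ agreeing with $\Lambda$ on a model, and then to show that $\LinDep_{\pi_Z}(\Lambda_=)$, $\LinDep_{\pi_Z}(\Lambda_{S_1})$ and $\LinDep_{\pi_Z}(\Lambda_{S_2})$ all coincide with $L_Z$. Recall from the reduction at the start of the proof of Lemma~\ref{lemma:psi_properties} that $\Lambda = \Theta \cup \Upsilon'$ for a predicate set $\Upsilon'$ chosen out of $D$, and that every member of $D$ is formed at Line~\ref{alg:cover:line:d_set} by replacing each disequality $q \in \Upsilon$ with $q^<$ or $q^>$; hence $\Upsilon'$ consists only of strict inequalities and contributes no equality predicates. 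Thus $\Lambda_S = \Theta \cup \Upsilon' \cup \{p_m^S, p_{m+1}^{Q_{m+1}}, \dots, p_k^{Q_k}\}$ for every $S \in \mathbb{P}$.

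The first step is to describe $\DisjOf{\Lambda_S}$ concretely. Fix a disjunct $\Gamma' = \Lambda_S \cup \Omega \in \DisjOf{\Lambda_S}$ with $\Omega \in \DisjTrue$. Since each of $p_m, \dots, p_k$ is an instance of a predicate of $\varphi$ and $\Omega$ fixes exactly one relation symbol for that predicate, satisfiability of $\Gamma'$ forces $\Omega$ to assign those predicates the symbols $S, Q_{m+1}, \dots, Q_k$, i.e. $\{p_m^S, p_{m+1}^{Q_{m+1}}, \dots, p_k^{Q_k}\} \subseteq \Omega$, so that $\Gamma' = \Theta \cup \Upsilon' \cup \Omega$. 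In particular $\Theta \cup \Omega$ is satisfiable (being a subset of $\Gamma'$) and hence lies in $\DisjOf{\Theta}$, while $\Gamma' \models \Theta \cup \Omega$ and $\Gamma' \models \Lambda$ show that $(\Theta \cup \Omega) \wedge \Lambda$ is satisfiable. Because $\Upsilon'$ has no equality predicates, $(\Gamma')^= = (\Theta \cup \Omega)^=$, so by the algebraic characterization Theorem~\ref{thm:lindep_algebraic_char} (applied with $h := \pi_Z$) we get $\LinDep_{\pi_Z}(\Gamma') = \LinDep_{\pi_Z}(\Theta \cup \Omega)$ for every $Z \in \Pi$; informally, adjoining strict inequalities alters neither the equality part of a predicate set nor its linear dependencies.

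It remains to assemble the pieces. By property~\ref{lemma:psi_properties:b} of Lemma~\ref{lemma:psi_properties}, any two disjuncts of $\Theta$ that meet $\Lambda$ have the same $\LinDep_{\pi_Z}$; call this common value $L_Z$ (for each $Z \in \Pi$). The previous step then gives $\LinDep_{\pi_Z}(\Gamma') = L_Z$ for every $\Gamma'$ in $\DisjOf{\Lambda_=}$, $\DisjOf{\Lambda_{S_1}}$ or $\DisjOf{\Lambda_{S_2}}$ — here $\Lambda_=$ is satisfiable by the standing case assumption and $\Lambda_{S_i}$ is satisfiable since $\Gamma_i \models \Lambda_{S_i}$. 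Now $\Lambda_S \equiv \bigvee_{\Gamma' \in \DisjOf{\Lambda_S}} \Gamma'$ with all disjuncts on the right sharing the $\LinDep_{\pi_Z}$ value $L_Z$, so the Uniform union Theorem~\ref{thm:union_of_same_lindep_has_that_lindep} (with $h := \pi_Z$), applied in turn to $\Lambda_=$, $\Lambda_{S_1}$ and $\Lambda_{S_2}$, yields $\LinDep_{\pi_Z}(\Lambda_=) = \LinDep_{\pi_Z}(\Lambda_{S_1}) = \LinDep_{\pi_Z}(\Lambda_{S_2}) = L_Z$, which is Claim~\ref{claim:lindep_lambdaeq_eq_lindep_lambdasi}.

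The step I expect to require the most care is the explicit description of the disjuncts of $\Lambda_S$: one must verify that the $p_j$ really are instances of predicates of $\varphi$ (so that a disjunct's symbol assignment $\Omega$ already determines them, forcing the decomposition $\Gamma' = \Theta \cup \Upsilon' \cup \Omega$ with $\Theta \cup \Omega$ a genuine disjunct of $\Theta$ meeting $\Lambda$), and that $\Upsilon'$ contributes no equalities, so that Theorem~\ref{thm:lindep_algebraic_char} transports $\LinDep_{\pi_Z}$ from $\Gamma'$ to $\Theta \cup \Omega$ verbatim. Once this bookkeeping is settled, the claim drops out of a double application of Lemma~\ref{lemma:psi_properties} and the Uniform union theorem.
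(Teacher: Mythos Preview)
Your proof is correct and follows essentially the same route as the paper: decompose each disjunct of $\Lambda_S$ as a disjunct of $\Theta$ (meeting $\Lambda$) plus strict inequalities, invoke Lemma~\ref{lemma:psi_properties}\ref{lemma:psi_properties:b} to pin all such $\Theta$-disjuncts to a common $\LinDep_{\pi_Z}$ value, and then apply the Uniform union Theorem~\ref{thm:union_of_same_lindep_has_that_lindep}. The only cosmetic difference is that the paper cites Theorem~\ref{thm:only_equality_predicates_can_establish_lindep_strong} for the ``strict inequalities don't change $\LinDep$'' step, whereas you go through Theorem~\ref{thm:lindep_algebraic_char} via $(\Gamma')^= = (\Theta \cup \Omega)^=$; both are valid and amount to the same observation.
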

\begin{proof}
	Let $ \Omega_= \in \DisjOf{\Lambda_=} $ and $ \Omega_{S_i} \in \DisjOf{\Lambda_{S_i}} $ be arbitrary. Note that these disjuncts can be written as \begin{align*}
		\Omega_= &= \Omega'_= \cup (\Omega_= \setminus \Omega'_=) \\
		\Omega_{S_i} &= \Omega'_{S_i} \cup (\Omega_{S_i} \setminus \Omega'_{S_i})
	\end{align*} for some $ \Omega'_=, \Omega'_{S_i} \in \DisjOf{\Theta} $. Since $ \Omega_= \setminus \Omega'_= $ and $ \Omega_{S_i} \setminus \Omega'_{S_i} $ contain only strict inequality predicates, applying Lemma~\ref{lemma:psi_properties} \ref{lemma:psi_properties:b} and Theorem~\ref{thm:only_equality_predicates_can_establish_lindep_strong} (see Appendix~\ref{sec:app:lindep_facts}) yields that \[
		\LinDep_{\pi_Z}(\Omega_=) \overset{\ref{thm:only_equality_predicates_can_establish_lindep_strong}}{=} \LinDep_{\pi_Z}(\Omega'_=) \overset{\ref{lemma:psi_properties} \ref{lemma:psi_properties:b}}{=} \LinDep_{\pi_Z}(\Omega'_{S_i}) \overset{\ref{thm:only_equality_predicates_can_establish_lindep_strong}}{=} \LinDep_{\pi_Z}(\Omega_{S_i})
	\] holds for all $ Z \in \Pi $. Since $ \Omega_= $ and $ \Omega_{S_i} $ were chosen arbitrarily, it follows that the disjuncts of $ \Lambda_= $ and $ \Lambda_{S_i} $ all have the same set of $ Z $-dependencies. Hence, applying the Uniform union Theorem~\ref{thm:union_of_same_lindep_has_that_lindep} to both $ \Lambda_= \equiv \bigvee \DisjOf{\Lambda_=} $ and $ \Lambda_{S_i} \equiv \bigvee \DisjOf{\Lambda_{S_i}} $ yields the claim.
\end{proof}

Having done this preparatory work, we can now apply the Overspilling Theorem~\ref{thm:overspilling} to prove the following central claim establishing that ``model flooding'' between the columns in Figure~\ref{fig:model_flooding_inductive_step} indeed occurs. In other words, we now show the correctness of green arrows inside the circle corresponding to $ \Lambda $ in Figure~\ref{fig:theta_disjuncts_intersecting_lambda}.

\begin{zbclaim}
	\label{claim:model_flooding_overspilling}
	For every $ i \in \{1, 2\} $, $ \Lambda_{S_i} \models \varphi $ is equivalent to $ \Lambda_= \models \varphi $.
\end{zbclaim}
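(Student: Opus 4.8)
The plan is to derive the two-sided equivalence from the Overspilling Theorem~\ref{thm:overspilling}, applied once to $ \varphi $ and once to $ \neg\varphi $, combined with a dichotomy supplied by the induction hypothesis. Fix $ i \in \{1, 2\} $. If $ S_i = P_= $ then $ \Lambda_{S_i} = \Lambda_= $ and there is nothing to prove, so assume $ S_i \in \{P_<, P_>\} $. Put $ \Theta_0 := \Lambda \cup \{p_{m+1}^{Q_{m+1}}, \dots, p_k^{Q_k}\} $, so that $ \Lambda_= = \Theta_0 \cup \{p_m^=\} $ and $ \Lambda_{S_i} = \Theta_0 \cup \{(p_m^=)^{S_i}\} $ with $ p_m^= $ an equality predicate belonging to $ \Lambda_= $; hence $ \Lambda_= $ is $ p_m^= $-next to $ \Lambda_{S_i} $. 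Furthermore, $ \Lambda_= $ is $ \Pi $-complex (established just above), $ \Lambda_{S_i} $ is $ \Pi $-complex by Claim~\ref{claim:lambdas1_and_lambdas2_both_picomplex}, and $ \LinDep_{\pi_Z}(\Lambda_=) = \LinDep_{\pi_Z}(\Lambda_{S_i}) $ for all $ Z \in \Pi $ by Claim~\ref{claim:lindep_lambdaeq_eq_lindep_lambdasi}. Thus all three hypotheses of Theorem~\ref{thm:overspilling} hold with $ \Theta_1 := \Lambda_= $ and $ \Theta_2 := \Lambda_{S_i} $, and this remains true with any formula substituted for $ \varphi $.

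I next record the dichotomy. For $ S \in \mathbb{P} $, instantiating the first coordinate $ Q_m := S $ in Claim~\ref{claim:model_flooding} shows that the induction hypothesis applies to $ \Lambda_S = \Lambda \cup \{p_m^S, p_{m+1}^{Q_{m+1}}, \dots, p_k^{Q_k}\} $, so all members of $ \DisjOf{\Lambda_S} $ agree on whether they entail $ \varphi $; since each such member is a superset of some element of $ \DisjTrue $, Lemma~\ref{lemma:every_disj_either_true_or_false} shows each of them entails $ \varphi $ or $ \neg\varphi $, whence either all entail $ \varphi $ or all entail $ \neg\varphi $. By the normal form, $ \Lambda_S \equiv \bigvee \DisjOf{\Lambda_S} $, so $ \Lambda_S \models \varphi $ or $ \Lambda_S \models \neg\varphi $; and since $ \Lambda_S $ is satisfiable (for $ S = P_= $ by the standing assumption, and for $ S = S_i $ because $ \Gamma_i \in \DisjOf{\Lambda_{S_i}} $), exactly one of the two holds.

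Now I assemble the two implications. If $ \Lambda_= \models \varphi $, then, $ \varphi $ being $ \Pi $-decomposable, Theorem~\ref{thm:overspilling} gives that $ \varphi \wedge \Lambda_{S_i} $ is satisfiable, so $ \Lambda_{S_i} \not\models \neg\varphi $, and the dichotomy forces $ \Lambda_{S_i} \models \varphi $. Conversely, if $ \Lambda_= \not\models \varphi $, then $ \Lambda_= \models \neg\varphi $ by the dichotomy; applying Theorem~\ref{thm:overspilling} to the formula $ \neg\varphi $ — which is $ \Pi $-decomposable exactly when $ \varphi $ is, because a $ \Pi $-respecting formula stays $ \Pi $-respecting under negation — yields that $ \neg\varphi \wedge \Lambda_{S_i} $ is satisfiable, so $ \Lambda_{S_i} \not\models \varphi $. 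Combining the cases, $ \Lambda_{S_i} \models \varphi $ is equivalent to $ \Lambda_= \models \varphi $, which is the claim.

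The only subtle point I anticipate is the asymmetry of the ``$ p $-next to'' relation: Theorem~\ref{thm:overspilling} can be invoked only \emph{from} the equality side $ \Lambda_= $, so the reverse implication genuinely requires the detour through $ \neg\varphi $ rather than a symmetric re-application. The rest is bookkeeping — observing that $ \Lambda_= $ and $ \Lambda_{S_i} $ differ only in the symbol of the single equality $ p_m^= $, and instantiating the induction hypothesis with its first coordinate set to the appropriate symbol.
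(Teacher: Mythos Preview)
Your proof is correct and follows essentially the same approach as the paper's: both apply the Overspilling Theorem~\ref{thm:overspilling} with $\Theta_1 = \Lambda_=$ and $\Theta_2 = \Lambda_{S_i}$, once to $\varphi$ and once to $\neg\varphi$, and both use the induction hypothesis (via Lemma~\ref{lemma:every_disj_either_true_or_false}) to establish that each $\Lambda_S$ entails either $\varphi$ or $\neg\varphi$. The only cosmetic difference is that the paper organizes the argument as two cases on whether $\varphi \wedge \Lambda_=$ or $\neg\varphi \wedge \Lambda_=$ is satisfiable, whereas you first record the dichotomy explicitly and then prove the two implications; the content is the same.
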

\begin{proof}
	We assume $ S_i \in \{<, >\} $, because the case when $ S_i = P_= $ is trivial. First, note that, as discussed above, $ \Lambda_{S_i} $ and $ \Lambda_= $ are both $ \Pi $-complex (see, in particular, Claim~\ref{claim:lambdas1_and_lambdas2_both_picomplex}). Next, observe that $ \Lambda_= $ is $ p_m^= $-next to $ \Lambda_{S_i} $ by definition. Furthermore, as we have shown in Claim~\ref{claim:lindep_lambdaeq_eq_lindep_lambdasi} above, both predicate sets have the same set of $ Z $-dependencies. Hence, all conditions of the Overspilling Theorem~\ref{thm:overspilling} (except $ \Theta_1 \models \varphi $) are satisfied. We consider the following two cases separately.
	
	\textbf{Case 1.} Suppose $ \varphi \wedge \Lambda_= $ were satisfiable. This means that at least one disjunct of $ \Lambda_= $ cannot entail $ \neg\varphi $, so by Lemma~\ref{lemma:every_disj_either_true_or_false}, $ \Lambda_= $ must have a disjunct entailing $ \varphi $. Since, by the induction hypothesis, all disjuncts of $ \Lambda_= $ entail $ \varphi $ if and only if at least one of them does so, it follows that $ \Lambda_= \models \varphi $. Hence, by the Overspilling Theorem~\ref{thm:overspilling}, $ \varphi \wedge \Lambda_{S_i} $ is satisfiable. We again argue that there must exist a disjunct of $ \Lambda_{S_i} $ which does not entail $ \neg\varphi $. Consequently, by Lemma~\ref{lemma:every_disj_either_true_or_false}, that disjunct must entail $ \varphi $. By the induction hypothesis, it follows that every disjunct of $ \Lambda_{S_i} $ must entail $ \varphi $ and hence $ \Lambda_{S_i} \models \varphi $. To sum up, we have shown the following implication chain. \begin{align}
		\label{eqn:phi_thetaeq_sat_imp_chain2}
		\varphi \wedge \Lambda_= \text{ is satisfiable} \Rightarrow \Lambda_= \models \varphi \Rightarrow \Lambda_{S_i} \models \varphi
	\end{align}
	
	\textbf{Case 2.} Suppose $ \neg\varphi \wedge \Lambda_= $ were satisfiable. Since the set of $ \Pi $-decomposable formulas is closed under negation and the definition of $ \DisjTrue $ remains equivalent when negating $ \varphi $, the same reasoning as for Case 1 above, but applied to $ \neg\varphi $ instead of $ \varphi $, yields the \begin{align}
		\label{eqn:negphi_thetaeq_sat_imp_chain2}
		\neg\varphi \wedge \Lambda_= \text{ is satisfiable} \Rightarrow \Lambda_= \models \neg\varphi \Rightarrow \Lambda_{S_i} \models \neg\varphi
	\end{align} implication chain.
	
	Since $ \Lambda_= $ is satisfiable, we are guaranteed to land in one of the two cases above. Hence, by (\ref{eqn:phi_thetaeq_sat_imp_chain2}) and (\ref{eqn:negphi_thetaeq_sat_imp_chain2}), we have $ \Lambda_= \models \varphi $ and $ \Lambda_S \models \varphi $, or it holds that $ \Lambda_= \models \neg\varphi $ and $ \Lambda_S \models \neg\varphi $. We conclude that in both cases $ \Lambda_{S_i} \models \varphi $ is equivalent to $ \Lambda_= \models \varphi $.
\end{proof}

Finally, note that by the induction hypothesis (``IH''), all satisfiable disjuncts of $ \Lambda_{S_i} $ entail $ \varphi $ if and only if at least one of them does so (as visualized in Figure~\ref{fig:model_flooding_inductive_step} using green arrows). In particular, $ \Gamma_i \models \varphi $ holds if and only if $ \Lambda_{S_i} \models \varphi $. Combining this with Claim~\ref{claim:model_flooding_overspilling} yields \begin{align*}
	\Gamma_1 \models \varphi \overset{\mathrm{IH}}{\Leftrightarrow} \Lambda_{S_1} \models \varphi \overset{\ref{claim:model_flooding_overspilling}}{\Leftrightarrow} \Lambda_= \models \varphi \overset{\ref{claim:model_flooding_overspilling}}{\Leftrightarrow} \Lambda_{S_2} \models \varphi \overset{\mathrm{IH}}{\Leftrightarrow} \Gamma_2 \models \varphi
\end{align*} \begin{figure}[t]
	\centering
	\input{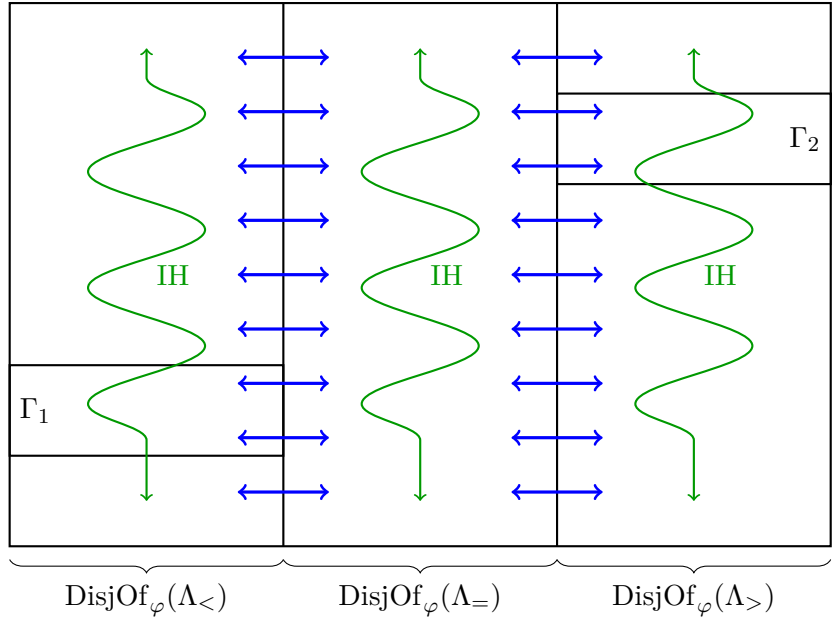}
	\caption{Figure~\ref{fig:model_flooding_inductive_step} extended with additional blue arrows illustrating the model flooding we have proven in the inductive step.}
	\label{fig:model_flooding_inductive_step_done}
\end{figure} These equivalences establish the desired model flooding (see Figure~\ref{fig:model_flooding_inductive_step_done}). This completes the proof of Claim~\ref{claim:model_flooding} and hence of Theorem~\ref{thm:model_flooding}. Thus, the formulas produced by the covering algorithm are indeed $ (\varphi, \Pi) $-MFF, so we have shown that the covering algorithm correctly solves the covering problem:

\begin{theorem}[Correctness of the covering algorithm]
	\label{thm:cover_alg_correctness}
	For a formula $ \varphi \in \QFLRA $, a binary partition $ \Pi $ and a predicate set $ \Gamma \in \DisjPhi $, the covering algorithm runs in time double-exponential in the size of $ \varphi $, and any formula $ \psi := \cover(\Pi, \Gamma) $ produced by it correctly solves the covering problem.
\end{theorem}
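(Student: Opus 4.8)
The plan is to treat Theorem~\ref{thm:cover_alg_correctness} as the assembly of results already in place, splitting the argument into a complexity part and a correctness part. For the running time, I would simply invoke the analysis of Section~\ref{sec:vardec:cover:termination_time_complexity}: the tree of recursive calls to $\cover$ has height at most $n = |X| + |Y|$, since each recursive call at Line~\ref{alg:cover:line:second_loop:rec_call} adds a genuinely new $Z$-dependency and a strictly ascending chain of subspaces of $\Q^{|Z|}$ has length at most $|Z|$; each node has at most $2^{\poly(|\varphi|)}$ children and all linear-algebraic subroutines run in polynomial time, so there are $O(2^{\poly(|\varphi|)})$ calls in total; and since the non-recursive work per call is dominated by the potentially exponentially large set $\Upsilon$ and the distributivity/filtering at Lines~\ref{alg:cover:line:d_set}--\ref{alg:cover:line:return}, it is at most $O(2^{2^{\poly(|\varphi|)}})$, the product of the two bounds being again $O(2^{2^{\poly(|\varphi|)}})$.

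For correctness I must check the three defining requirements of the covering problem for $\psi := \cover(\Pi,\Gamma)$ with $\Gamma \in \DisjPhi$: that $\psi$ respects $\Pi$, that $\Gamma \models \psi$, and that $\psi$ is a $(\varphi,\Pi)$-MFF. First I would dispose of the $\Pi$-simple case: if $\Gamma$ is $\Pi$-simple --- which includes the unsatisfiable case, where $\decsimple$ is read as producing $\bot$ --- the algorithm returns $\psi = \decsimple(\Pi,\Gamma) \equiv \Gamma$ at Line~\ref{alg:cover:line:pi_simple_check}, and this is $\Pi$-respecting by construction, trivially satisfies $\Gamma \models \psi$, and is a $(\varphi,\Pi)$-MFF because $\Gamma \in \DisjPhi \subseteq \DisjTrue$ and Lemma~\ref{lemma:every_disj_either_true_or_false} gives that a satisfiable $\Gamma$ entails $\varphi$ or $\neg\varphi$ (an unsatisfiable $\Gamma$ entails $\varphi$ vacuously). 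From then on I may assume $\Gamma$ is $\Pi$-complex, hence satisfiable, hence $\Gamma \in \Sat(\DisjPhi)$ and in particular $\Gamma \models \varphi$.

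In the $\Pi$-complex case the first two requirements are light. The formula $\psi$ is $\Pi$-respecting by inspection of Algorithm~\ref{alg:cover}: the base of the construction $\PiSimp{\Gamma}$ respects $\Pi$ by definition, every predicate added to $\Theta$ or $\Upsilon$ at Lines~\ref{alg:cover:line:first_loop:theta_upd}, \ref{alg:cover:line:second_loop:upsilon_upd} and \ref{alg:cover:line:second_loop:rec_call} is built from the projections $\pi_Z(\vec z)$ with $Z\in\Pi$ and therefore respects $\Pi$, and the outputs of the recursive calls are $\Pi$-respecting by structural induction. The entailment $\Gamma\models\psi$ is exactly Lemma~\ref{lemma:gamma_entails_psi}, which I would cite; it comes from the invariant $\Gamma\models\Theta$ maintained by the first loop together with the recursive compensation step at Line~\ref{alg:cover:line:second_loop:rec_call}. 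The remaining and only substantive requirement is the $(\varphi,\Pi)$-MFF property: assuming $\varphi$ is $\Pi$-decomposable, I would use the normal form of Section~\ref{sec:vardec:normal_form} to write $\psi \equiv \bigvee_{\Lambda\in\Psi}\Lambda$ with each $\Lambda$ a DNF term of $\psi$, apply the Model flooding Theorem~\ref{thm:model_flooding} --- whose hypotheses $\Gamma\in\Sat(\DisjPhi)$ and $\Gamma$ $\Pi$-complex are now available --- to conclude $\Lambda \models \varphi$ for every $\Lambda\in\Psi$, and hence obtain $\psi \models \varphi$, which in particular yields $\psi\models\varphi$ or $\psi\models\neg\varphi$, as the definition of a $(\varphi,\Pi)$-MFF demands.

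The step carrying all the difficulty is the Model flooding Theorem~\ref{thm:model_flooding} (which in turn rests on the Overspilling Theorem~\ref{thm:overspilling}, Lemma~\ref{lemma:psi_properties}, the Uniform union Theorem~\ref{thm:union_of_same_lindep_has_that_lindep}, and the termination/recursion-depth analysis), but that is already proved, so at the level of Theorem~\ref{thm:cover_alg_correctness} I expect no genuine obstacle; the residual care is purely bookkeeping: being explicit that $\Gamma \in \DisjPhi$ is what supplies the implicit hypothesis $\Gamma\models\varphi$ used inside Theorem~\ref{thm:model_flooding}; getting the direction of the implication in the $(\varphi,\Pi)$-MFF definition right; verifying that the passage to $\bigvee_{\Lambda\in\Psi}\Lambda$ is legitimate (it is, because the normal form behaves like a disjunctive normal form); and confirming that the $\Pi$-simple branch genuinely covers the unsatisfiable $\Gamma$.
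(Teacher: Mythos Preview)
Your proposal is correct and mirrors the paper's own argument: the paper treats Theorem~\ref{thm:cover_alg_correctness} as a summary statement, observing at the start of Section~\ref{sec:vardec:correctness} that $\Pi$-respectingness is immediate by construction, that $\Gamma\models\psi$ is Lemma~\ref{lemma:gamma_entails_psi}, and that the $(\varphi,\Pi)$-MFF property is exactly what the Model flooding Theorem~\ref{thm:model_flooding} delivers, while the double-exponential bound is the content of Section~\ref{sec:vardec:cover:termination_time_complexity}. Your explicit treatment of the $\Pi$-simple base case via $\decsimple$ and Lemma~\ref{lemma:every_disj_either_true_or_false} is a small addition the paper leaves implicit (it assumes $\Gamma$ $\Pi$-complex from Section~\ref{sec:vardec:pisimple_picomplex} onward), but it is correct and does no harm.
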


Combining Theorem~\ref{thm:cover_alg_correctness} with Theorem~\ref{thm:reduction} reducing the variable decomposition problem to the covering problem (see Section~\ref{sec:vardec:reduction} above) yields the following result.

\begin{theorem}
	\label{thm:vardec_double_exponential}
	Over quantifier-free linear real arithmetic, the variable decomposition problem (for binary partitions) can be solved in time double-exponential in the size of the given formula $ \varphi $. In particular, if $ \Pi $ is a binary partition, then it follows that any $ \Pi $-decomposable formula $ \varphi \in \QFLRA $ has a $ \Pi $-decomposition in DNF of double-exponential size.
\end{theorem}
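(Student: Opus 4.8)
The plan is to compose the covering algorithm of Theorem~\ref{thm:cover_alg_correctness} with the reduction of Theorem~\ref{thm:reduction}, keeping the complexity accounting double-exponential throughout. First I would bring $\varphi$ into the normal form of Section~\ref{sec:vardec:normal_form}: with $k := |\Pred(\varphi)| \le |\varphi|$ we have $|\DisjTrue| = 3^k = 2^{O(|\varphi|)}$, and the sets $\DisjPhi$, $\DisjNegPhi$ and $\Sat(\DisjPhi)$ can all be computed in exponential time, since deciding, for each of the $3^k$ sign patterns $\Omega \in \DisjTrue$, whether $\Omega$ is satisfiable and --- via Lemma~\ref{lemma:every_disj_either_true_or_false} --- whether $\Omega \models \varphi$, amounts to polynomial-time linear-programming feasibility tests.

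Next, for each $\Gamma \in \Sat(\DisjPhi)$ I would call the covering algorithm to obtain $\psi_\Gamma := \cover(\Pi, \Gamma)$. By Theorem~\ref{thm:cover_alg_correctness} each call runs in time double-exponential in $|\varphi|$, and $\psi_\Gamma$ is a $\Pi$-respecting $(\varphi,\Pi)$-MFF with $\Gamma \models \psi_\Gamma$; moreover, unrolling the recursion in Algorithm~\ref{alg:cover} presents $\psi_\Gamma$ directly in DNF --- its return value $\Delta \vee \Theta \wedge \bigvee\{\Upsilon' \in D \mid \Upsilon' \cup \Gamma \text{ satisfiable}\}$ is a disjunction of the predicate sets $\Theta \cup \Upsilon'$ together with the (recursively DNF) disjunction $\Delta$ --- say $\psi_\Gamma \equiv \bigvee_{\Lambda \in \Psi} \Lambda$, where $\Psi$ consists of at most double-exponentially many predicate sets, each of at most exponential size. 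By Theorem~\ref{thm:reduction}, $\varphi$ is $\Pi$-decomposable if and only if $\psi_\Gamma \models \varphi$ holds for every $\Gamma \in \Sat(\DisjPhi)$, and in that case $\bigvee_{\Gamma \in \Sat(\DisjPhi)} \psi_\Gamma$ is a $\Pi$-decomposition. So the algorithm computes all the $\psi_\Gamma$, checks all these entailments, and accordingly either outputs $\bigvee_{\Gamma \in \Sat(\DisjPhi)} \psi_\Gamma$ or reports that $\varphi$ is not $\Pi$-decomposable.

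The one step that requires care --- and where the naive approach fails --- is verifying $\psi_\Gamma \models \varphi$: because $\psi_\Gamma$ may contain double-exponentially many distinct atoms, a black-box $\coNP$ validity test for $\psi_\Gamma \rightarrow \varphi$ would cost time exponential in that number, i.e. triple-exponential in $|\varphi|$. Instead I would use the DNF form of $\psi_\Gamma$ together with $\neg\varphi \equiv \bigvee_{\Omega \in \DisjNegPhi}\Omega$: then $\psi_\Gamma \models \varphi$ holds if and only if $\Lambda \cup \Omega$ is unsatisfiable for every $\Lambda \in \Psi$ and every $\Omega \in \DisjNegPhi$, each such test being a polynomial-time feasibility check on a conjunction of size $O(|\Lambda| + |\varphi|) = 2^{O(|\varphi|)}$. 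Summing over the $2^{O(|\varphi|)}$ choices of $\Gamma$, the at most double-exponentially many $\Lambda \in \Psi$, and the $3^k$ choices of $\Omega$, the total cost stays double-exponential, since a polynomial of a double-exponential, and a product of double-exponentially many double-exponentials, are again double-exponential (indeed $\big(2^{2^{\poly(n)}}\big)^c = 2^{c\cdot 2^{\poly(n)}} = 2^{2^{\poly(n)}}$).

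Finally, the size claim comes out of the same construction: $\bigvee_{\Gamma \in \Sat(\DisjPhi)} \psi_\Gamma$ is a disjunction of $2^{O(|\varphi|)}$ DNF formulas, each of double-exponential size, hence a DNF of double-exponential size, and it is a genuine $\Pi$-decomposition of $\varphi$ precisely by the ``$\Leftarrow$'' direction of Theorem~\ref{thm:reduction}. The main obstacle is therefore nothing more than the complexity bookkeeping around the entailment tests; once the coverings' DNF structure is exploited, the rest is a direct composition of Theorems~\ref{thm:cover_alg_correctness} and~\ref{thm:reduction}.
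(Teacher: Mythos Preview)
Your proposal is correct and follows essentially the same approach as the paper: compose Theorem~\ref{thm:reduction} with Theorem~\ref{thm:cover_alg_correctness}, and handle the entailment check $\psi_\Gamma \models \varphi$ by exploiting the DNF structure of $\psi_\Gamma$ against a disjunctive representation of $\neg\varphi$ so that everything reduces to polynomially-solvable feasibility tests on conjunctions. The only cosmetic difference is that the paper converts $\varphi$ to CNF and tests each $\psi_i \wedge \neg C_j$, whereas you use the already-computed $\DisjNegPhi$ and test each $\Lambda \cup \Omega$; both are equivalent ways of decomposing the same entailment, and your variant is arguably the more natural one given that the normal form is already in hand.
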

\begin{proof}
	We solve the variable decomposition problem via the reduction of Theorem~\ref{thm:reduction} discussed in Section~\ref{sec:vardec:reduction}. That is, we use the covering algorithm to compute $ \psi_\Gamma := \cover(\Pi, \Gamma) $ for every $ \Gamma \in \Sat(\DisjPhi) $ whereupon the algorithm outputs that $ \varphi $ is $ \Pi $-decomposable whenever $ \psi_\Gamma \models \varphi $ holds for all $ \Gamma \in \Sat(\DisjPhi) $. The correctness of this procedure immediately follows from Theorems \ref{thm:cover_alg_correctness} and \ref{thm:reduction}, so it remains to prove that the running time of the obtained algorithm is indeed at most double-exponential in the size of $ \varphi $. We have already shown that the covering algorithm runs in double-exponential time (see Section~\ref{sec:vardec:cover:termination_time_complexity} and Theorem~\ref{thm:cover_alg_correctness}), so proving the stated complexity bound reduces to showing that testing $ \psi_\Gamma \models \varphi $ is decidable in double-exponential time. Indeed, this can be done in the following way. \begin{enumerate}
		\item Compute a CNF representation of $ \varphi \equiv C_1 \wedge \dots \wedge C_k $. Here, every $ C_k $ is a clause, i.e., a disjunction of literals.
		\item Let the DNF representation of $ \psi_\Gamma $ be $ \psi_\Gamma \equiv \psi_1 \vee \dots \vee \psi_l $ (note that $ \psi_\Gamma = \cover(\Pi, \Gamma) $ is by construction already in DNF).
		\item Output that $ \psi_\Gamma \models \varphi $ whenever \[
			\gamma_{i, j} := \psi_i \wedge \neg C_j
		\] is unsatisfiable for all $ i \in \{1, \dots, l\} $ and $ j \in \{1, \dots, k\} $.
	\end{enumerate} We now prove that this method for testing $ \psi_\Gamma \models \varphi $ is correct: \begin{align*}
		\psi_\Gamma \models \varphi &\Leftrightarrow \forall i \in \{1, \dots, l\} : \psi_i \models \varphi \\
		&\Leftrightarrow \forall i \in \{1, \dots, l\} : \psi_i \wedge \neg\varphi\text{ is unsatisfiable} \\
		&\Leftrightarrow \forall i \in \{1, \dots, l\} : \psi_i \wedge (\neg C_1 \vee \dots \vee \neg C_k) \text{ is unsatisfiable} \\
		&\Leftrightarrow \forall i \in \{1, \dots, l\} : (\psi_i \wedge \neg C_1) \vee \dots \vee (\psi_i \wedge \neg C_k)\text{ is unsatisfiable} \\
		&\Leftrightarrow \forall i \in \{1, \dots, l\}, j \in \{1, \dots, k\} : \gamma_{i, j}\text{ is unsatisfiable}
	\end{align*} Regarding the running time, the first step takes $ O(2^{\poly(n)}) $ time. The second step is trivial because $ \psi_\Gamma $ is already a DNF formula. As regards the third step, note that since $ C_j $ is a disjunction of literals, $ \neg C_j $ can be rewritten as a conjunction of literals (via De Morgan laws). Hence, the satisfiability of $ \gamma_{i, j} $ can be decided in time polynomial in the size of $ \gamma_{i, j} $, that is, in $ O(2^{2^{\poly(n)}}) $ time, where $ n $ is the size of $ \varphi $. This is because conjunctions of linear real arithmetic predicates are linear programs, for which there are well-known polynomial-time algorithms based on interior-point and ellipsoid methods (see Karmarkar's algorithm \cite[Chapter 15]{karmarkar:1984, schrijver:1998:theoryofintegerandlinearprogramming} and Khachiyan's algorithm \cite[Theorem 4.18]{khachiyan:1979, korte:2012:combinatorialoptimization}). More precisely, these results talk about linear programs where only non-strict inequalities are allowed, but the stated algorithms can be adapted to support strict inequalities via, for example, the reduction given in \cite[pp. 217, 218]{bradley:2007}. Hence, the time complexity of the third step is \[
		l \cdot k \cdot O(2^{2^{\poly(n)}}) \subseteq O(2^{2^{\poly(n)}}) \cdot O(2^{\poly(n)}) \cdot O(2^{2^{\poly(n)}}) = O(2^{2^{\poly(n)}})
	\] We conclude that the described algorithm for testing $ \psi_\Gamma \models \varphi $ runs in time double-exponential in the size of $ \varphi $.
\end{proof}

\subsubsection{Minimality of decompositions}

One important feature of the covering algorithm is that it produces coverings which are in a certain sense best-possible and minimal. This ensures that if $ \psi_\Gamma $ is the solution to the covering problem for $ \Gamma \in \Sat(\DisjPhi) $ and partition $ \Pi $ produced by the covering algorithm, then $ \bigvee_{\Gamma \in \Sat(\DisjPhi)} \psi_\Gamma $ is the best-possible approximation of a $ \Pi $-decomposition of $ \varphi $ in the sense that it must entail any other $ \Pi $-decomposition of any superset of $ \ModelsOf(\varphi) $ definable using only predicates appearing in $ \varphi $, or, to be more precise, as a disjunction of some subset of $ \DisjTrue $. As we demonstrated in Example~\ref{example:partial_variable_independence_application}, this allows us to use algorithms for the variable decomposition problem to speed up query evaluation even if $ \varphi $ is not $ \Pi $-decomposable. This minimality property follows from the fact that our proof of Theorem~\ref{thm:model_flooding} actually establishes a stronger model-flooding effect formulated in the following Theorem~\ref{thm:model_flooding_strong}.

\begin{theorem}[Strong model flooding]
	\label{thm:model_flooding_strong}
	Let $ \varphi \in \QFLRA $ be a formula, $ \Pi $ be a binary partition, $ \Gamma \in \Sat(\DisjPhi) $ be a $ \Pi $-complex predicate set, $ \psi := \cover(\Pi, \Gamma) $, $ \Psi $ be the set of predicate sets corresponding to the DNF terms of $ \psi $ and $ \gamma $ be a disjunction of the form $ \gamma = \bigvee_{\Upsilon \in S} \Upsilon $ for some $ S \subseteq \DisjTrue $ such that $ \Gamma \in S $. Then every $ \Lambda \in \Psi $ entails $ \gamma $ if $ \gamma $ is $ \Pi $-decomposable.
\end{theorem}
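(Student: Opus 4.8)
The plan is to observe that the proof of Theorem~\ref{thm:model_flooding} never uses any structural feature of $\varphi$ except three facts, each of which holds equally for $\gamma$; hence the whole argument transfers verbatim once we substitute $\gamma$ for $\varphi$. The three facts are: \emph{(i)} $\varphi\in\QFLRA$ and $\varphi$ is equivalent to a disjunction of elements of $\DisjTrue$ — this is exactly what makes the analogue of Lemma~\ref{lemma:every_disj_either_true_or_false} available, and, since $\top\equiv\bigvee_{\Omega\in\DisjTrue}\Omega$ and distinct disjuncts share no model, it also makes $\neg\varphi$ a disjunction of elements of $\DisjTrue$ (all that Case~2 of Claim~\ref{claim:model_flooding_overspilling} needs); \emph{(ii)} $\Gamma\models\varphi$, which is used exactly once, at the top of the proof, to conclude from ``all disjuncts of $\Lambda$ have the same $\models\varphi$-status'' that $\Lambda\models\varphi$; and \emph{(iii)} $\varphi$ is $\Pi$-decomposable (a hypothesis), which by closure of the class of $\Pi$-decomposable formulas under negation also yields $\Pi$-decomposability of $\neg\varphi$. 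For $\gamma=\bigvee_{\Upsilon\in S}\Upsilon$ with $S\subseteq\DisjTrue$ and $\Gamma\in S$, fact~\emph{(i)} holds by the definition of $\gamma$ (with $\neg\gamma\equiv\bigvee_{\Upsilon\in\DisjTrue\setminus S}\Upsilon$), fact~\emph{(ii)} holds because $\Gamma\in S$ gives $\Gamma\models\gamma$, and fact~\emph{(iii)} is the hypothesis of Theorem~\ref{thm:model_flooding_strong}.

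\textbf{Execution.} I would therefore repeat the proof of Theorem~\ref{thm:model_flooding} essentially word for word, replacing $\varphi$ by $\gamma$, $\DisjPhi$ by $S$, and $\DisjNegPhi$ by $\DisjTrue\setminus S$. In particular: the analogue of Lemma~\ref{lemma:every_disj_either_true_or_false} (``every $\Theta\in\Sat(\DisjTrue)$ entails $\gamma$ or $\neg\gamma$'') is proved by the same one-line argument — if $v\models\Theta\wedge\gamma$ then $v\models\Upsilon$ for some $\Upsilon\in S$, so $\Upsilon\wedge\Theta$ is satisfiable and $\Upsilon=\Theta$ by the normal-form property, whence $\Theta\models\gamma$, and symmetrically in the other case; the Overspilling Theorem~\ref{thm:overspilling} is already stated for an arbitrary $\QFLRA$ formula, so it applies to $\gamma$ and to $\neg\gamma$ unchanged; and Lemma~\ref{lemma:psi_properties}, Lemma~\ref{lemma:predicate_convexity}, the Uniform union Theorem~\ref{thm:union_of_same_lindep_has_that_lindep}, Lemma~\ref{lemma:adding_neq_predicates_cannot_make_pi_complex_set_pi_simple} and Theorem~\ref{thm:only_equality_predicates_can_establish_lindep_strong} do not mention $\varphi$ at all. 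The induction of Claim~\ref{claim:model_flooding} on $m$ and the two-case analysis of Claim~\ref{claim:model_flooding_overspilling} go through without change, because the only way $\varphi$ enters is through appeals to the generic results just listed and to the induction hypothesis — which, being phrased in terms of $\models\gamma$, is automatically equivalent to its $\models\neg\gamma$ counterpart by the analogue of Lemma~\ref{lemma:every_disj_either_true_or_false}. A cleaner alternative for the write-up would be to state once a parametric version — ``for any $\theta\in\QFLRA$ equivalent to a disjunction of a subset $S$ of $\DisjTrue$ with $\Gamma\in S$, if $\theta$ is $\Pi$-decomposable then every $\Lambda\in\Psi$ entails $\theta$'' — of which Theorem~\ref{thm:model_flooding} is the instance $\theta=\varphi$, $S=\DisjPhi$, and Theorem~\ref{thm:model_flooding_strong} is the instance $\theta=\gamma$.

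\textbf{Main obstacle.} There is no new mathematical content; the only work is bookkeeping. One must keep $\DisjTrue$ fixed — it is the set of sign-assignments to the predicates of $\varphi$, not re-derived from the syntax of $\gamma$ — so that $\gamma$ and $\neg\gamma$ are both disjunctions over subsets of this same $\DisjTrue$, and one must check that nowhere in the proof (particularly in Case~2 of Claim~\ref{claim:model_flooding_overspilling}) is a special property of $\varphi$ invoked that would fail for $\gamma$, e.g.\ $\Gamma\models\neg\varphi$, which is false in general. Both verifications are routine, which is precisely why the appendix proof can ultimately just point back to the proof of Theorem~\ref{thm:model_flooding}. Finally, I would note the intended corollary: if a $\Pi$-decomposable $\gamma$ is a disjunction over a subset of $\DisjTrue$ containing all of $\Sat(\DisjPhi)$ (so $\varphi\models\gamma$), then $\psi_\Gamma\models\gamma$ for every $\Gamma\in\Sat(\DisjPhi)$, hence $\bigvee_{\Gamma\in\Sat(\DisjPhi)}\psi_\Gamma\models\gamma$, establishing the claimed minimality of the computed approximation.
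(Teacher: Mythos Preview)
Your proposal is correct and matches the paper's approach exactly: the paper does not give a separate proof of Theorem~\ref{thm:model_flooding_strong} at all, but simply states that ``our proof of Theorem~\ref{thm:model_flooding} actually establishes a stronger model-flooding effect formulated in the following Theorem~\ref{thm:model_flooding_strong}.'' Your careful isolation of the three properties of $\varphi$ that the proof uses, your observation that $\DisjTrue$ must remain the one derived from $\varphi$, and your derivation of the minimality corollary are all more explicit than what the paper writes, but they are precisely the verifications one would carry out to justify the paper's one-line remark.
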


\begin{corollary}[Minimality of decompositions]
	\label{cor:minimality_of_dec_for_double_exponential_algo}
	Let $ \varphi \in \QFLRA $ be a formula and $ \Pi $ be a binary partition. For every $ \Gamma \in \Sat(\DisjPhi) $, let $ \psi_\Gamma := \cover(\Pi, \Gamma) $ denote the covering of $ \Gamma $. Let $ \gamma $ be a $ \Pi $-decomposable disjunction of the form $ \gamma = \bigvee_{\Upsilon \in S} \Upsilon $ for some $ S \subseteq \DisjTrue $ such that $ \varphi \models \gamma $. Then \[
		\bigvee_{\Gamma \in \Sat(\DisjPhi)} \psi_\Gamma \models \gamma
	\] In words, the decomposition $ \psi := \bigvee_{\Gamma \in \Sat(\DisjPhi)} \psi_\Gamma $ produced by the variable decomposition algorithm of Theorem~\ref{thm:vardec_double_exponential} is minimal in the sense that $ \psi \models \gamma $.
\end{corollary}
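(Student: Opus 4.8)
The plan is to establish the stronger disjunct-wise claim that $\psi_\Gamma \models \gamma$ holds for every $\Gamma \in \Sat(\DisjPhi)$; since $\psi := \bigvee_{\Gamma \in \Sat(\DisjPhi)}\psi_\Gamma$ is a finite disjunction, this at once yields $\psi \models \gamma$, which is the assertion. So fix $\Gamma \in \Sat(\DisjPhi)$. As $\Gamma \in \DisjPhi$ we have $\Gamma \models \varphi$ (recall $\varphi = \bigvee_{\Gamma' \in \DisjPhi}\Gamma'$ by the chosen normal form), so $\Gamma \models \gamma$ by the hypothesis $\varphi \models \gamma$; this will be used in both of the cases I would distinguish, according to whether $\Gamma$ is $\Pi$-simple or $\Pi$-complex. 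If $\Gamma$ is $\Pi$-simple, then, being satisfiable, it fixes $X$ or $Y$, so the covering algorithm returns $\psi_\Gamma = \decsimple(\Pi, \Gamma)$, and this formula is equivalent to $\Gamma$ because the substitution performed by $\decsimple$ preserves equivalence; hence $\psi_\Gamma \equiv \Gamma \models \gamma$ and this case is finished.

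For the $\Pi$-complex case the idea is to invoke the Strong model flooding Theorem~\ref{thm:model_flooding_strong} for the disjunction $\gamma = \bigvee_{\Upsilon \in S}\Upsilon$. All hypotheses of that theorem are already among the hypotheses of the corollary, except the requirement $\Gamma \in S$, which I would verify as follows: $\Gamma \in \DisjPhi \subseteq \DisjTrue$ and $\Gamma$ is satisfiable, so fixing a model $v \models \Gamma$ and using $\Gamma \models \gamma$ we obtain $v \models \Upsilon$ for some $\Upsilon \in S \subseteq \DisjTrue$; since two disjuncts of $\DisjTrue$ sharing a model must coincide (the defining property of the normal form, Section~\ref{sec:vardec:normal_form}), it follows that $\Gamma = \Upsilon \in S$. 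Now Theorem~\ref{thm:model_flooding_strong} applies and gives that every $\Lambda$ in the set $\Psi$ of predicate sets corresponding to the DNF terms of $\psi_\Gamma = \cover(\Pi, \Gamma)$ entails $\gamma$, using that $\gamma$ is $\Pi$-decomposable. Since $\psi_\Gamma \equiv \bigvee_{\Lambda \in \Psi}\Lambda$ is a disjunction of formulas each entailing $\gamma$, we get $\psi_\Gamma \models \gamma$, which finishes this case and hence the argument.

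I do not expect a genuine obstacle here: the substantive content is contained in Theorem~\ref{thm:model_flooding_strong}, which I am taking as given, and what remains is bookkeeping. The one point I would be careful not to overlook is that the corollary ranges over \emph{all} $\Gamma \in \Sat(\DisjPhi)$, so the $\Pi$-simple case must be handled separately (there Theorem~\ref{thm:model_flooding_strong} does not literally apply and one instead reads the claim off from $\decsimple(\Pi, \Gamma) \equiv \Gamma$); the remaining two ingredients — the membership $\Gamma \in S$, forced by the ``no common model'' property of $\DisjTrue$, and the passage from the DNF terms of $\psi_\Gamma$ back to $\psi_\Gamma$ itself — are entirely routine.
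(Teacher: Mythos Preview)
Your proposal is correct and follows the same approach as the paper: invoke Theorem~\ref{thm:model_flooding_strong} for each $\Gamma$ to get $\Lambda\models\gamma$ for every DNF term $\Lambda$ of $\psi_\Gamma$, then take the disjunction. You are in fact more careful than the paper's own proof, which applies Theorem~\ref{thm:model_flooding_strong} directly without explicitly verifying the hypothesis $\Gamma\in S$ or separating out the $\Pi$-simple case; your argument for $\Gamma\in S$ via the no-common-model property of $\DisjTrue$ and your handling of the $\Pi$-simple branch via $\decsimple(\Pi,\Gamma)\equiv\Gamma$ are exactly the missing bookkeeping.
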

\begin{proof}
	For every $ \Gamma \in \Sat(\DisjPhi) $, let $ \Psi_\Gamma $ denote the set of predicate sets corresponding to the DNF terms of $ \psi_\Gamma $. By Theorem~\ref{thm:model_flooding_strong}, $ \Lambda \models \gamma $ holds for all $ \Gamma \in \Sat(\DisjPhi) $ and $ \Lambda \in \Psi_\Gamma $. Hence, \[
		\bigvee_{\Gamma \in \Sat(\DisjPhi)} \psi_\Gamma \equiv \bigvee_{\Gamma \in \Sat(\DisjPhi)} \bigvee_{\Lambda \in \Psi_\Gamma} \Lambda \models \gamma
	\]
\end{proof}

\subsubsection{Example}

\begin{zbexample}
	\label{example:reduction_ex_continued}
	Consider the formula $ \varphi $ and partition $ \Pi $ from Example~\ref{example:reduction_ex}. We analyze the $ \Pi $-decomposability of $ \varphi $ via the algorithm of Theorem~\ref{thm:vardec_double_exponential}. Let \[
		\Gamma := \{x + y < 2, x - y < 0\} \in \Sat(\DisjPhi)
	\] be the first disjunct to be covered. We execute $ \cover(\Pi, \Gamma) $. Since $ \Gamma $ is $ \Pi $-complex, we directly start with the construction of $ \Theta $, which in this case is $ \Theta = \varnothing $ because $ \Gamma $ contains only strict inequality predicates. This corresponds to a covering of $ \Theta \equiv \top $ that is not a $ (\varphi, \Pi) $-MFF because $ \varphi $ is $ \Pi $-decomposable but is not a tautology. In order to obtain a correct covering, we, therefore, start with the analysis of $ \DisjOf{\Theta} $ (see the second \texttt{foreach} loop at Line~\ref{alg:cover:line:second_loop} of the algorithm). As already discussed in Section~\ref{sec:vardec:high_level_overview}, the goal of this analysis is to rule out from the resulting covering those $ \Omega \in \DisjOf{\Theta} $ which have more $ Z $-dependencies than $ \Gamma $. In the present example, it is not hard to see that \[
		\Omega := \{x + y = 2, x - y = 0\} \in \DisjOf{\Theta}
	\] is the only disjunct of $ \Theta $ having strictly more $ X $- or $ Y $-dependencies. \begin{figure}[t]
		\centering
		\hfill\begin{subfigure}{0.3\textwidth}
	\centering
	\begin{tikzpicture}[scale=0.68]
		\node[blue,anchor=west,thick] at (1.5, 1.5) {$x+y<2$};
		\node[red,anchor=west,thick] at (1.5, 0.5) {$x-y<0$};
		
		% plot x from -2 to 3
		% plot y from -1 to 3
		
		\coordinate (topleft) at (-2, 3);
		\coordinate (topright) at (3, 3);
		\coordinate (botleft) at (-2, -1);
		\coordinate (botright) at (3, -1);
		
		% grid
		\draw[->] (-2,0) -- (3,0) coordinate[label = {below:$x$}] (xmax);
		\draw[->] (0,-1) -- (0,3.2) coordinate[label = {right:$y$}] (ymax);
		
		% for x + y < 2 (red)
		\coordinate (redtop) at (-1, 3);
		\coordinate (redbot) at (3, -1);
		
		\draw[thick, dashed, red] (redtop) -- (redbot);
		
		\fill[fill=red,opacity=0.2] (topleft) -- (redtop) -- (redbot) -- (botleft) -- cycle;
		
		% for x - y < 0
		\coordinate (bluebot) at (-1, -1);
		\coordinate (bluetop) at (3, 3);
		
		\draw[thick, dashed, blue] (bluebot) -- (bluetop);
		\fill[fill=blue,opacity=0.2] (topleft) -- (bluetop) -- (bluebot) -- (botleft) -- cycle;
		
		% Gamma
		\node[fill=white, rounded corners] (gamma) at (-1, 1) {$ \Gamma $};
	\end{tikzpicture}
	\caption{\label{fig:reduction_ex_continued:geom}}
\end{subfigure} \hfill
\begin{subfigure}{0.3\textwidth}
	\centering
	\begin{tikzpicture}[scale=0.75]
		% x + y = 2 top and bottom points
		\coordinate (redtop) at (-1, 3);
		\coordinate (redbot) at (3, -1);
		
		% x - y = 0 top and bottom points
		\coordinate (bluebot) at (-1, -1);
		\coordinate (bluetop) at (3, 3);
		
		% center
		\coordinate (intpoint) at (1, 1);
		
		\draw[thick, dashed] (redtop) -- (redbot); % external lines excluding point
		\draw[thick, dashed] (bluetop) -- (bluebot);
		
		% \filldraw[blue] (intpoint) circle (0.05); % excluded point
		
		\filldraw[ultra thin,red,pattern=north east lines,pattern color=red, draw opacity=0.7] (redtop) -- (intpoint) -- (bluebot) -- cycle;
		\node[fill=white, rounded corners] (gamma) at (-0.5,1) {$ \Gamma $};
		
	\end{tikzpicture}
	\caption{\label{fig:reduction_ex_continued:gamma}}
\end{subfigure} \hfill
\begin{subfigure}{0.3\textwidth}
	\centering
	\begin{tikzpicture}[scale=0.75]
		% x + y = 2 top and bottom points
		\coordinate (redtop) at (-1, 3);
		\coordinate (redbot) at (3, -1);
		
		% x - y = 0 top and bottom points
		\coordinate (bluebot) at (-1, -1);
		\coordinate (bluetop) at (3, 3);
		
		% center & mdf borders
		\coordinate (mfdltop) at (-1, 3);
		\coordinate (mfdlbot) at (-1, -1);
		\coordinate (mfdrtop) at (1, 3);
		\coordinate (mfdrbot) at (1, -1);
		
		\draw[thick, dashed] (redtop) -- (redbot); % external lines excluding point
		\draw[thick, dashed] (bluetop) -- (bluebot);
		
		\filldraw[ultra thin,red,pattern=north east lines,pattern color=red, draw opacity=0.7] (mfdltop) -- (mfdrtop) -- (mfdrbot) -- (mfdlbot) -- cycle;
		
		\node[fill=white, rounded corners] at (-0.25,1) {$ \psi_\Gamma $};
		% \node[anchor=east,thick] at (-0.75, 1) {$ \psi_\Gamma $};
	\end{tikzpicture}
	\caption{\label{fig:reduction_ex_continued:covering}}
\end{subfigure}\hfill
		\caption{Figure~\ref{fig:reduction_ex_continued:geom} is a visualization of the sets defined by $ x + y < 2 $ and $ x - y < 0 $. The set of models of $ \Gamma $ is visualized in Figure~\ref{fig:reduction_ex_continued:gamma}. Figure~\ref{fig:reduction_ex_continued:covering} depicts the covering $ \psi_\Gamma = x < 1 $ for $ \Gamma $.}
		\label{fig:reduction_ex_continued}
	\end{figure} Indeed, the only solution to $ \Omega $ is $ b := (1, 1)^\transp $ (see Figure~\ref{fig:reduction_ex_continued}); the $ B $ basis computed at Line~\ref{alg:cover:line:second_loop:dep_basis} spans the zero vectorspace. Hence, \[
		\ImageOf(\pi_Z \circ \lc_B)^\bot \cap \ImageOf(\pi_Z \circ \lc_A) = \Q \cap \Q = \Q
	\] holds for all $ Z \in \Pi $. Consequently, we need to choose a concrete $ Z \in \Pi $ and synthesize a separating predicate with respect to that $ Z $. Choosing $ Z := X := \{x\} $ yields $ w = 1 $ and the separating predicate \[
		\underbrace{\pi_X(\vec{z}) \cdot w}_{x} \neq \underbrace{\pi_X(b) \cdot w}_{1}
	\] Since $ \Gamma \wedge x = 1 $ is unsatisfiable (as can immediately be observed geometrically, see Figure~\ref{fig:reduction_ex_continued}), the recursive call at Line~\ref{alg:cover:line:second_loop:rec_call} immediately returns the covering $ \bot $ that does not affect the covering of $ \Gamma $ we are constructing. Overall, the second loop of the covering algorithm produces $ \Upsilon = \{x \neq 1\} $. Rewriting the inequalities in $ \Upsilon $ yields $ D = \{x < 1, x > 1\} $ (see Line~\ref{alg:cover:line:d_set}). Since $ \Gamma \wedge x < 1 $ is satisfiable, but $ \Gamma \wedge x > 1 $ is not, the algorithm outputs $ \psi_\Gamma := x < 1 $ as the resulting covering of $ \Gamma $. As specified in the reduction of Theorem~\ref{thm:reduction}, we check whether $ \psi_\Gamma \models \varphi $. In the present example, the entailment holds, so we proceed with the computation of coverings for other disjuncts.
	
	The coverings of other disjuncts $ \Gamma' \in \Sat(\DisjPhi) $ can be computed in a very similar way; once they are computed, it remains to check that they all entail $ \varphi $. In the present example this is the case, so we conclude that $ \varphi $ is $ \Pi $-decomposable by Theorem~\ref{thm:reduction}; a possible $ \Pi $-decomposition of $ \varphi $ we can obtain via the covering algorithm is \[
		x < 1 \vee x > 1 \vee y < 1 \vee y > 1
	\]
\end{zbexample}

\subsection{Example}
\label{sec:vardec:example}

We give a more detailed example illustrating how the algorithm of Theorem~\ref{thm:vardec_double_exponential} can be used to analyze the $ \Pi $-decomposability of formulas and to construct $ \Pi $-decompositions whenever they exist. Let $ X := \{x_1, x_2, x_3\} $, $ Y := \{y_1, y_2, y_3\} $ and $ \Pi := \{X, Y\} $. Define predicates \begin{align*}
	p_1 &:= 12x_1 - 2x_2 + 16x_3 + 28y_1 + 35y_2 - 7y_3 = 0 \\
	p_2 &:= 6x_1 - x_2 + 8x_3 + 16y_1 + 20y_2 - 4y_3 = 0 \\
	p_3 &:= 4x_1 - x_2 + 4x_3 + y_1 + y_2 = 1 \\
	p_4 &:= 6x_1 - 3x_2 + 2y_1 + y_2 + y_3 = 0 \\
	p_5 &:= 3x_1 - x_2 + 2x_3 + 2y_1 + y_2 + y_3 = 0
\end{align*} and consider the formula \[
\varphi := p_1^= \wedge p_2^= \wedge (p_3^< \vee p_4^>) \wedge (p_3^= \rightarrow p_5^< \vee p_5^>)
\] We analyze the $ \Pi $-decomposability of $ \varphi $. First, note that every disjunct in $ \DisjTrue $ can by addressed by providing the five predicate symbols to be put instead of equalities in $ p_1, p_2, p_3, p_4, p_5 $. In order to make our notation for addressing disjuncts simple and succinct, it makes sense to define \[
\varphi[P_1P_2P_3P_4P_5] := \{p_1^{P_1}, p_2^{P_2}, p_3^{P_3}, p_4^{P_4}, p_5^{P_5}\} \in \DisjTrue
\] for all $ P_1, \dots, P_5 \in \{=, <, >\} $. We now start computing coverings of various disjuncts in $ \Sat(\DisjPhi) $. \begin{figure}[t]
	\centering
	\includegraphics[width=\textwidth]{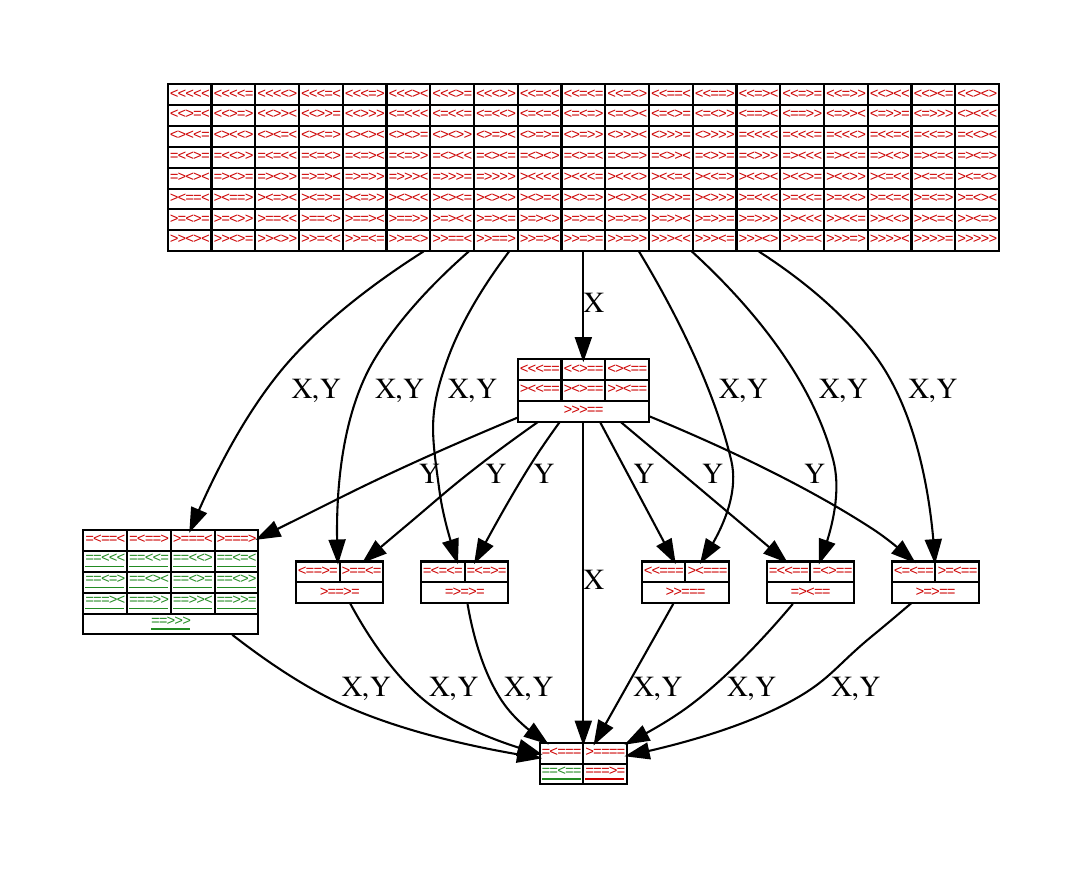}
	\caption{Illustration of $ \Sat(\DisjTrue) $. Every disjunct $ \Gamma \in \Sat(\DisjTrue) $ is identified by five predicate symbols $ P_1, \dots, P_2 $ such that $ \Gamma = \varphi[P_1P_2P_3P_4P_5] $. Disjuncts entailing $ \varphi $, that is, elements of $ \DisjPhi $, are colored green, whereas predicate sets in $ \DisjNegPhi $ have red labels. Some disjuncts are grouped into a single block of disjuncts, so that for $ Z \in \Pi $ there exists a path from $ \Gamma $ to $ \Gamma' $ with all edges labeled $ Z $ if and only if $ \Gamma' $ has more $ Z $-dependencies than $ \Gamma $.}
	\label{fig:cover_ex:disj_all}
\end{figure} The set of all disjuncts $ \Sat(\DisjPhi) $ and the relationships between their sets of $ X $- and $ Y $-dependencies are illustrated in Figure~\ref{fig:cover_ex:disj_all}.

\subsubsection{First covering}

We choose \[
\Gamma_1 := \varphi[==>><] \in \Sat(\DisjPhi)
\] as the first disjunct to be covered and execute the covering algorithm on $ \Gamma_1 $ while connecting every step with the corresponding line of Algorithm~\ref{alg:cover}.

\paragraph{Preliminary check.} We first check whether $ \Gamma_1 $ is $ \Pi $-simple (Line~\ref{alg:cover:line:pi_simple_check}) because then it would be possible to immediately return a correct covering. Unfortunately, $ \Gamma_1 $ is $ \Pi $-complex.

\paragraph{Computation of $ \Theta $.} Next, our goal is to synthesize a set of predicates $ \Theta $ enforcing all $ X $- and $ Y $-dependencies of $ \Gamma_1 $. As a first step, simply removing $ \Pi $-disrespecting predicates from $ \Gamma_1 $ yields $ \Theta := \varnothing $. Next, we solve the system of equations $ \Gamma_1^= = \{p_1, p_2\} $ by writing the coefficients of $ p_1 $ and $ p_2 $ in the rows of a matrix and computing its kernel: \begin{align}
	\label{eqn:hard_ex:gamma_1_eq_sol}
	\ker \begin{pNiceMatrix}[first-row, extra-margin=2pt, code-for-first-row=\scriptscriptstyle, columns-width=1.25em]
		x_1 & x_2 & x_3 & y_1 & y_2 & y_3 \\
		12 & -2 & 16 & 28 & 35 & -7 \\
		6 & -1 & 8 & 16 & 20 & -4
	\end{pNiceMatrix} = \ImageOf \begin{pNiceMatrix}[first-col, extra-margin=2pt, code-for-first-col=\scriptscriptstyle, columns-width=1.25em, vlines]
		x_1 & 1 & -4 &  0 &  0 \\
		x_2 & 6 &  0 &  0 &  0 \\
		x_3 & 0 &  3 &  0 &  0 \\
		y_1 & 0 &  0 & -5 &  1 \\
		y_2 & 0 &  0 &  4 &  0 \\
		y_3 & 0 &  0 &  0 &  4
		\CodeAfter
		\UnderBrace[yshift=3pt]{1-1}{last-1}{\scriptstyle a_1}
		\UnderBrace[yshift=3pt]{1-2}{last-2}{\scriptstyle a_2}
		\UnderBrace[yshift=3pt]{1-3}{last-3}{\scriptstyle a_3}
		\UnderBrace[yshift=3pt]{1-4}{last-4}{\scriptstyle a_4}
		% X components
		\tikz \node [highlight = (1-1) (3-1)] {};
		\tikz \node [highlight = (1-2) (3-2)] {};
		% Y components
		\tikz \node [highlightblue = (4-3) (6-3)] {};
		\tikz \node [highlightblue = (4-4) (6-4)] {};
	\end{pNiceMatrix}
\end{align}\[ \\[12pt] \] This step corresponds to Line~\ref{alg:cover:line:gamma_lindep_repr}; in this case $ a := 0 $ and the basis $ A := (a_1, a_2, a_3, a_4) $ consists of the columns of the solution matrix in (\ref{eqn:hard_ex:gamma_1_eq_sol}). We now use the characterization (\ref{eqn:im_pizlca_orth_eq_ker_piz_a1al_transp}) to obtain a representation of the set of all $ X $- and $ Y $-dependencies of $ \Gamma_1 $: \begin{align*}
	\ImageOf(\pi_X \circ \lc_A)^\bot &\overset{(\ref{eqn:im_pizlca_orth_eq_ker_piz_a1al_transp})}{=} \ker\big(\Pi_X \cdot a_1 \mid \Pi_X \cdot a_2 \mid \Pi_X \cdot a_3 \mid \Pi_X \cdot a_4\big)^\transp \\
	&= \ker \begin{pNiceMatrix}[first-row, extra-margin=2pt, code-for-first-row=\scriptscriptstyle, columns-width=1.5em]
		x_1 & x_2 & x_3 \\
		1 & 6 & 0 \\
		-4 & 0 & 3
		\CodeAfter
		% X components
		\tikz \node [highlight = (1-1) (1-3)] {};
		\tikz \node [highlight = (2-1) (2-3)] {};
	\end{pNiceMatrix} = \ImageOf \begin{pNiceMatrix}[first-col, extra-margin=2pt, code-for-first-col=\scriptscriptstyle, columns-width=1.25em, vlines]
		x_1 &  6 \\
		x_2 & -1 \\
		x_3 &  8
		\CodeAfter
		\UnderBrace[yshift=3pt]{1-1}{last-1}{\scriptstyle v_1}
	\end{pNiceMatrix}
\end{align*}\[ \\[12pt] \]\begin{align*}
	\ImageOf(\pi_Y \circ \lc_A)^\bot &\overset{(\ref{eqn:im_pizlca_orth_eq_ker_piz_a1al_transp})}{=} \ker\big(\Pi_Y \cdot a_1 \mid \Pi_Y \cdot a_2 \mid \Pi_Y \cdot a_3 \mid \Pi_Y \cdot a_4\big)^\transp \\
	&= \ker \begin{pNiceMatrix}[first-row, extra-margin=2pt, code-for-first-row=\scriptscriptstyle, columns-width=1.5em]
		y_1 & y_2 & y_3 \\
		-5 & 4 & 0 \\
		1 & 0 & 4
		\CodeAfter
		% Y components
		\tikz \node [highlightblue = (1-1) (1-3)] {};
		\tikz \node [highlightblue = (2-1) (2-3)] {};
	\end{pNiceMatrix} = \ImageOf \begin{pNiceMatrix}[first-col, extra-margin=2pt, code-for-first-col=\scriptscriptstyle, columns-width=1.25em, vlines]
		y_1 &  -4 \\
		y_2 & -5 \\
		y_3 &  1
		\CodeAfter
		\UnderBrace[yshift=3pt]{1-1}{last-1}{\scriptstyle v_1}
	\end{pNiceMatrix}
\end{align*}\[ \\[12pt] \] Translating the vectors generating $ \ImageOf(\pi_X \circ \lc_A)^\bot $ and $ \ImageOf(\pi_Y \circ \lc_A)^\bot $ into predicates via the dot product construction (see Line~\ref{alg:cover:line:first_loop:theta_upd}) yields \[
\Theta = \{6x_1 - x_2 + 8x_3 = 0, -4y_1 - 5y_2 + y_3 = 0\}
\]

\subsubsection{First covering: analysis of $ \Theta $'s disjuncts}

We now analyze the disjuncts of $ \Theta $, i.e., those elements of $ \DisjTrue $ which share at least one model with $ \Theta $. Precisely those disjuncts are underlined in Figure~\ref{fig:cover_ex:disj_all}. As we have already discussed in Section~\ref{sec:vardec:high_level_overview}, the goal of this analysis is to rule out from the resulting covering those $ \Omega \in \DisjOf{\Theta} $ which have more $ Z $-dependencies than $ \Gamma_1 $. Indeed, since $ \Theta $ enforces all $ X $- and $ Y $-dependencies present in $ \Gamma_1 $, every its disjunct $ \Omega \in \DisjOf{\Theta} $ must also do so. Hence, if we rule out all $ \Omega \in \DisjOf{\Theta} $ having more $ Z $-dependencies than $ \Gamma_1 $, then all $ \Omega \in \DisjOf{\Theta} $ agree on the set of $ Z $-dependencies (see Lemma~\ref{lemma:psi_properties} \ref{lemma:psi_properties:b}), which, as we have seen in Section~\ref{sec:vardec:correctness}, is crucial for the resulting covering to be a $ (\varphi, \Pi) $-MFF.

Turning back to the present example, we study the \[
\Omega := \varphi[===>=] \cup \Theta \in \DisjOf{\Theta}
\] disjunct, because it has more both $ X $- and $ Y $-dependencies than $ \Gamma_1 $ (see Figure~\ref{fig:cover_ex:disj_all}, in particular, the edge between $ \Gamma_1 = \varphi[==>><] $ and $ \Omega = \varphi[===>=] $). Before we proceed with the computation of a separating predicate, we explain how exactly this can be determined computationally. Note that, in general, we are interested in all disjuncts that have more $ X $- or $ Y $-dependencies compared to $ \Gamma_1 $.

\paragraph{Solving $ \Omega^= $.} We consider the equalities in $ \Omega^= = \{p_1, p_2, p_3, p_5\} \cup \Theta $, write the corresponding coefficients in the rows of a matrix and compute its kernel: \begin{align}
	\label{eqn:hard_ex:gamma_1_omega_1_eq_sol}
	\ker \begin{pNiceMatrix}[first-row, extra-margin=2pt, code-for-first-row=\scriptscriptstyle, columns-width=1.5em]
		x_1  & x_2 & x_3 & y_1 & y_2 & y_3 \\
		12  &  -2 &  16 &  28 &  35 &  -7 \\
		6  &  -1 &   8 &  16 &  20 &  -4 \\
		4  &  -1 &   4 &   1 &   1 &   0 \\
		3  &  -1 &   2 &   2 &   1 &   1 \\
		6  &  -1 &   8 &   0 &   0 &   0 \\
		0  &   0 &   0 &  -4 &  -5 &   1
	\end{pNiceMatrix} = \ImageOf \begin{pNiceMatrix}[first-col, extra-margin=2pt, code-for-first-col=\scriptscriptstyle, columns-width=1.25em, vlines]
		x_1 & -2 & 0 \\
		x_2 & -4 & 0 \\
		x_3 & 1  & 0 \\
		y_1 & 0  & -1 \\
		y_2 & 0  & 1 \\
		y_3 & 0  & 1
		\CodeAfter
		\UnderBrace[yshift=3pt]{1-1}{last-1}{\scriptstyle b_1}
		\UnderBrace[yshift=3pt]{1-2}{last-2}{\scriptstyle b_2}
		% X components
		\tikz \node [highlight = (1-1) (3-1)] {};
		% Y components
		\tikz \node [highlightblue = (4-2) (6-2)] {};
	\end{pNiceMatrix}
\end{align}\[ \\[12pt] \] Note that $ \Omega^= $ is an inhomogeneous system of linear equations, so the above image does not yet precisely equal $ \ModelsOf(\Omega^=) $; we have to figure out what the value of a correct affine offset is, which can most easily be achieved by just computing a model of $ \Omega^= $, such as \[
b := (-2/3, -4, 0, 0, -1/3, -5/3)^\transp
\] Hence, $ \ModelsOf(\Omega^=) = b + \gen{b_1, b_2} $, where $ b_1 $ and $ b_2 $ are the columns of the solution matrix in (\ref{eqn:hard_ex:gamma_1_omega_1_eq_sol}). These steps correspond to Line~\ref{alg:cover:line:second_loop:dep_basis} of the covering algorithm, where $ B = (b_1, b_2) $.

\paragraph{Comparing $ \Omega $'s $ X $- and $ Y $-dependencies with those of $ \Gamma_1 $.} Having computed the above solution to $ \Omega^= $, we can now compare the $ X $- and $ Y $-dependencies of $ \Omega $ with those of $ \Gamma_1 $. We use (\ref{eqn:im_pizlcb_orth_eq_ker_piz_b1br_transp}) to obtain a representation of the set of all $ X $- and $ Y $-dependencies of $ \Omega $: \begin{align}
	\label{eqn:hard_ex:gamma_1_omega_1_lindep_x}
	\ImageOf(\pi_X \circ \lc_B)^\bot &\overset{(\ref{eqn:im_pizlcb_orth_eq_ker_piz_b1br_transp})}{=} \ker\big(\Pi_X \cdot b_1 \mid \Pi_X \cdot b_2\big)^\transp \\
	\label{eqn:hard_ex:gamma_1_omega_1_lindep_x_cont}
	&= \ker \begin{pNiceMatrix}[first-row, extra-margin=2pt, code-for-first-row=\scriptscriptstyle, columns-width=1.5em]
		x_1 & x_2 & x_3 \\
		-2 & -4 & 1
		\CodeAfter
		% X components
		\tikz \node [highlight = (1-1) (1-3)] {};
	\end{pNiceMatrix} = \ImageOf \begin{pNiceMatrix}[first-col, extra-margin=2pt, code-for-first-col=\scriptscriptstyle, columns-width=1.25em, vlines]
		x_1 & -2 & 1 \\
		x_2 &  1 & 0 \\
		x_3 &  0 & 2
	\end{pNiceMatrix} \\
	\label{eqn:hard_ex:gamma_1_omega_1_lindep_y}
	\ImageOf(\pi_Y \circ \lc_B)^\bot &\overset{(\ref{eqn:im_pizlcb_orth_eq_ker_piz_b1br_transp})}{=} \ker\big(\Pi_Y \cdot b_1 \mid \Pi_Y \cdot b_2\big)^\transp \\
	\label{eqn:hard_ex:gamma_1_omega_1_lindep_y_cont}
	&= \ker \begin{pNiceMatrix}[first-row, extra-margin=2pt, code-for-first-row=\scriptscriptstyle, columns-width=1.5em]
		y_1 & y_2 & y_3 \\
		-1 & 1 & 1
		\CodeAfter
		% Y components
		\tikz \node [highlightblue = (1-1) (1-3)] {};
	\end{pNiceMatrix} = \ImageOf \begin{pNiceMatrix}[first-col, extra-margin=2pt, code-for-first-col=\scriptscriptstyle, columns-width=1.25em, vlines]
		y_1 & 1 & 1 \\
		y_2 & 1 & 0 \\
		y_3 & 0 & 1
	\end{pNiceMatrix}
\end{align} Now, we can combine (\ref{eqn:hard_ex:gamma_1_eq_sol}) with (\ref{eqn:hard_ex:gamma_1_omega_1_lindep_x}--\ref{eqn:hard_ex:gamma_1_omega_1_lindep_x_cont}), (\ref{eqn:hard_ex:gamma_1_omega_1_lindep_y}--\ref{eqn:hard_ex:gamma_1_omega_1_lindep_y_cont}) to get the following basis of $ \ImageOf(\pi_X \circ \lc_B)^{\bot} \cap \ImageOf(\pi_X \circ \lc_A) $ (see Line~\ref{alg:cover:line:second_loop:main_if}). \begin{align*}
	\ImageOf(\pi_X \circ \lc_B)^{\bot} \cap \ImageOf(\pi_X \circ \lc_A) &= \ImageOf \begin{pNiceMatrix}[first-col, extra-margin=2pt, code-for-first-col=\scriptscriptstyle, columns-width=1.25em, vlines]
		x_1 & -2 & 1 \\
		x_2 &  1 & 0 \\
		x_3 &  0 & 2
	\end{pNiceMatrix} \cap \ImageOf \begin{pNiceMatrix}[first-col, extra-margin=2pt, code-for-first-col=\scriptscriptstyle, columns-width=1.25em, vlines]
		x_1 & 1 & -4 \\
		x_2 & 6 &  0 \\
		x_3 & 0 &  3
	\end{pNiceMatrix} \\
	&= \ImageOf \begin{pNiceMatrix}[first-col, extra-margin=2pt, code-for-first-col=\scriptscriptstyle, columns-width=1.25em, vlines]
		x_1 & -\frac{31}{22} \\
		x_2 & 1 \\
		x_3 & \frac{13}{11}
	\end{pNiceMatrix}
\end{align*} This computation shows that $ \Omega $ has more $ X $-dependencies than $ \Gamma_1 $. At this point, $ \Omega $ actually also has more $ Y $-dependencies than $ \Gamma_1 $, but we leave it as an exercise for a curious reader to verify this. After all, in order to synthesize a separating predicate, it suffices to find just a single $ Z \in \Pi $ such that $ \Omega $ has more $ Z $-dependencies than $ \Gamma_1 $.

\paragraph{Synthesis of a separating predicate.} We pick (see Line~\ref{alg:cover:line:second_loop:witness_vec}) \[
w := \begin{pNiceMatrix}[first-col, extra-margin=2pt, code-for-first-col=\scriptscriptstyle, columns-width=1.25em, vlines]
	x_1 & -\frac{31}{22} \\
	x_2 & 1 \\
	x_3 & \frac{13}{11}
\end{pNiceMatrix} \in \ImageOf(\pi_X \circ \lc_B)^{\bot} \cap \ImageOf(\pi_X \circ \lc_A)
\] to be the vector containing information about a certain $ X $-dependency which is present in $ \Omega $ but absent in $ \Gamma $. Thus, we can separate $ \Omega $ from $ \Gamma $ by adding \[
\pi_Z(\vec{z}) \cdot w \neq \pi_Z(b) \cdot w
\] to $ \Upsilon $ (see Line~\ref{alg:cover:line:second_loop:upsilon_upd}). By computing the dot products, we obtain \[
-\frac{31}{22}x_1 + x_2 + \frac{13}{11}x_3 \neq -\frac{101}{33}
\] It can be verified that this inequality indeed ensures that $ \Upsilon $ (and consequently the resulting covering) cannot agree with $ \Omega $ on any model because \[
\Omega \models -\frac{31}{22}x_1 + x_2 + \frac{13}{11}x_3 = -\frac{101}{33}
\] In this case, \[
	\Gamma_1 \cup \{-\frac{31}{22}x_1 + x_2 + \frac{13}{11}x_3 = -\frac{101}{33}\}
\] in unsatisfiable and consequently $ \Pi $-simple, so the recursive call at Line~\ref{alg:cover:line:second_loop:rec_call} immediately returns $ \bot $. At this point, we are going to slightly deviate from the pseudocode of the covering algorithm because the following simple observation can spare us the analysis of all other disjuncts of $ \Theta $. Note that \[
	\Gamma_1 \models -\frac{31}{22}x_1 + x_2 + \frac{13}{11}x_3 < -\frac{101}{33}
\] holds, meaning that \[
	\psi_{\Gamma_1} := \Theta \wedge -\frac{31}{22}x_1 + x_2 + \frac{13}{11}x_3 < -\frac{101}{33}
\] is a good candidate for being a correct covering of $ \Gamma_1 $ because $ \Gamma_1 \models \Theta $ taken together with the above entailment automatically ensures that $ \Gamma_1 \models \psi_{\Gamma_1} $. Hence, it only remains to check whether $ \psi_{\Gamma_1} \models \varphi $. Indeed, in this case we are lucky and $ \psi_{\Gamma_1} $ is a correct covering of $ \Gamma_1 $ because $ \psi_{\Gamma_1} \models \varphi $ holds. In general, one could integrate this test into the algorithm and simply resume the execution of the second loop at Line~\ref{alg:cover:line:second_loop} if we are not lucky and the covering obtained as just described does not entail $ \varphi $. We will further develop and implement this optimization idea in \zbrefsec{sec:experiments}.

\subsubsection{Second and third coverings}

In order to construct a $ \Pi $-decomposition for $ \varphi $, one would now proceed with the computation of a covering for some further disjunct in $ \Sat(\DisjPhi) $. For example, one could choose \[
	\Gamma_2 := \varphi[==>>>] \in \Sat(\DisjPhi)
\] to be the next disjunct to be covered. Since we have already shown how all major steps of the covering algorithm work on a concrete example, and the computations of coverings for the remaining disjuncts do not involve any fundamentally new ideas, we omit the details. It is worth noting, however, that for the analysis of the present formula $ \varphi $ it suffices to call the covering algorithm only five times, including recursive calls. This performance can be achieved by choosing only those disjuncts in $ \Sat(\DisjPhi) $ which do not entail the disjunction of the already computed coverings. For example, suppose $ \psi_{\Gamma_2} $ is a correct covering of $ \Gamma_2 $. Then, we need to execute the covering algorithm for a further disjunct $ \Gamma_3 \in \Sat(\DisjPhi) $ only if $ \Gamma_3 \not\models \psi_{\Gamma_1} \vee \psi_{\Gamma_2} $. In the present example, we could choose $ \Gamma_3 := \varphi[==<><] \in \Sat(\DisjPhi) $ and compute a covering $ \psi_{\Gamma_3} $ so that no further calls to the covering algorithm are necessary. We would then observe that $ \psi_{\Gamma_i} \models \varphi $ holds for all $ i \in \{1, 2, 3\} $, so by Theorem~\ref{thm:reduction} we would conclude that $ \varphi $ is $ \Pi $-decomposable, and a possible $ \Pi $-decomposition is $ \psi_{\Gamma_1} \vee \psi_{\Gamma_2} \vee \psi_{\Gamma_3} $.

\subsection{Exponential upper bound}
\label{sec:vardec:exponential_upper_bound}

% Although the proof technique we used is novel, the result of Theorem~\ref{thm:vardec_double_exponential} is not; it is already known that deciding variable decomposability is possible in double-exponential time by adapting Libkin's polynomial-time algorithm \cite[Theorem 12]{libkin:2003} to work in the setting where the variables are not fixed\footnote{More precisely, Libkin's algorithm runs in time double-exponential in the number of free variables. The first exponential blowup comes from the proof of uniform decidability of the variable partition problem (see \cite[Proposition 9]{libkin:2003}). More precisely, the $ \Phi_P $ sentence \cite[pp. 442, 443]{libkin:2003} asserting that the formula is decomposable with respect to the partition $ P $, is of exponential size. The second exponential blowup is caused by the subsequent application of quantifier elimination on $ \Phi_P $ (see \cite[Corollary 10]{libkin:2003}).}.

Recall that in Section~\ref{sec:vardec:correctness} we have, in particular, proven that the covering algorithm of Section~\ref{sec:vardec:cover} runs in double-exponential time (Theorem~\ref{thm:cover_alg_correctness}). We then used this result to obtain a double-exponential time algorithm for the variable decomposition problem (Theorem~\ref{thm:vardec_double_exponential}). However, it turns out that this upper bound is not tight. In this section, we provide a more fine-grained analysis of the covering algorithm and show that it actually runs in exponential time if slightly refined. As a corollary, we obtain an exponential-time algorithm for the variable decomposition problem. This running time is optimal in the setting of deterministic algorithms because, as we will see in \zbrefsec{sec:lower_bounds}, deciding variable decomposability is $ \coNP $-hard.

% , meaning that any deterministic sub-exponential time\footnote{In this context, ``sub-exponential time'' means a running time in $ 2^{n^{o(1)}} \poly(n) $.} algorithm for the variable decomposition problem would imply sub-exponential time algorithms for all problems in $ \coNP $, which would be quite surprising and would refute the Exponential Time Hypothesis \cite{ImpagliazzoPaturi:2001}, which is a long-standing conjecture in complexity theory.

\subsubsection{Tackling the double-exponential blowup}

Observe that almost all steps of the covering algorithm trivially run in exponential time unless they are recursive calls themselves. The only exception is the computation of the $ D $ set at Line~\ref{alg:cover:line:d_set}. In other words, the only actual source of the double-exponential blowup is the need to consider all possible combinations of ways to replace the inequalities in elements of $ \Upsilon $ by the $ P_< $ and $ P_> $ predicate symbols, when computing the set $ D $ as specified at Line~\ref{alg:cover:line:d_set}. Indeed, this enumeration of possibilities requires time exponential in the size of $ \Upsilon $, whereas at this point we only have an exponential upper bound for $ \left|\Upsilon\right| $ arising from the observation that the second loop at Line~\ref{alg:cover:line:second_loop} makes at most exponentially many iterations. We now give a high-level overview of the approach we use to avoid this double-exponential blowup. \begin{itemize}
	\item First, we prove the existence of an exponential-size set $ D $ satisfying $ \Upsilon \equiv \bigvee_{\Upsilon' \in D} \Upsilon' $ as specified at Line~\ref{alg:cover:line:d_set}. We achieve this result by constructing $ D $ in the obvious way and then deleting all unsatisfiable predicate sets therein; relying on discrete geometry, we prove an exponential upper bound for the size of the resulting set.
	\item Then, we show that it is possible to compute $ D $ in time polynomial in the value of the upper bound for a certain compressed, but still correct, version of $ D $. Consequently, we get an exponential-time algorithm for computing $ D $ and fully avoid the double-exponential blowup in the running time of the covering algorithm.
\end{itemize}

\subsubsection{Upper bound for the size of $ D $}

For the sake of convenience, until the end of Section~\ref{sec:vardec:exponential_upper_bound}, we will include inequalities of the form $ t_1 \neq t_2 $ into the notion of a predicate and use the superscript notation $ q^Q $ to denote the inequality $ q $ with the predicate symbol replaced by $ Q $. For example, $ (2x-z \neq y)^< = (2x - z < y) $. Let $ p_1, \dots, p_n $ be some linearization of $ \Upsilon = \{p_1, \dots, p_n\} $. Clearly, \[
	D := \{\{p_1^{P_1}, \dots, p_n^{P_n}\} \mid (P_1, \dots, P_n) \in \{<, >\}^n\}
\] is a correct definition of $ D $ within the meaning of Line~\ref{alg:cover:line:d_set}. Although the size of $ D $ is $ 2^n $, where $ n $ is exponential in the size of $ \varphi $, it turns out that the size of \begin{align}
	\label{eqn:d_set_compressed_def}
	D := \Sat(\{\{p_1^{P_1}, \dots, p_n^{P_n}\} \mid (P_1, \dots, P_n) \in \{<, >\}^n\})
\end{align} is actually polynomial in $ n $ and consequently exponential in the size of $ \varphi $, whereas this set $ D $ is obviously also correct within the meaning of Line~\ref{alg:cover:line:d_set}. To prove this upper bound, we will need the following Lemma~\ref{lemma:sat_reg_count_upper_bound}.

\begin{lemma}
	\label{lemma:sat_reg_count_upper_bound}
	Let $ n, d \in \Npos $, $ p_1, \dots, p_n $ be predicates each having their set of free variables among $ x_1, \dots, x_d $. Then \[
		\left|\Big\{(P_1, \dots, P_n) \in \{<, >\}^n \mid \{p_1^{P_1}, \dots, p_n^{P_n}\} \text{ is satisfiable}\Big\}\right| \le n^{d+1} + d + 1
	\]
\end{lemma}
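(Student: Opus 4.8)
The plan is to recognize the left-hand side as an upper bound on the number of full-dimensional cells of a hyperplane arrangement in $\Q^d$ and then invoke the classical region-counting bound. Concretely, for each $i \in \{1,\dots,n\}$ write $p_i^<$ as $f_i < 0$ and $p_i^>$ as $f_i > 0$, where $f_i\colon \Q^d \to \Q$ is the affine form built from the two terms of $p_i$ (so $f_i$ has its variables among $x_1,\dots,x_d$ and rational coefficients). First I would dispose of degenerate cases: if some $f_i$ is the zero form then no tuple $(P_1,\dots,P_n) \in \{<,>\}^n$ is satisfiable and the claim is trivial; and any $f_i$ that is a nonzero constant merely forces the value of $P_i$ and contributes no hyperplane, so such indices may be discarded. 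After this reduction we are left with an index set $I$ with $\left|I\right| \le n$ of genuine affine forms, each defining a hyperplane $H_i := \{x \in \Q^d : f_i(x) = 0\}$.

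The key step is the observation that the satisfiable tuples are in bijection with the non-empty connected components (``chambers'') of $\Q^d \setminus \bigcup_{i \in I} H_i$: on each chamber every $f_i$ has constant sign, which determines the tuple, and conversely a realizable strict sign vector $\sigma = (\sigma_i)_{i \in I} \in \{-1,+1\}^I$ selects exactly one chamber, namely the convex set $\bigcap_{i \in I}\{x : \sigma_i f_i(x) > 0\}$, which is non-empty because, having non-empty open interior over $\R$, it contains a rational point (alternatively, one appeals to the elementary equivalence of $\Mstruct$ with $\R_{\mathrm{lin}}$). Hence the left-hand side is at most the number of chambers of an arrangement of at most $n$ hyperplanes in dimension $d$, which by the classical bound on arrangements (see, e.g., \cite{ziegler:2012:lecturesonpolytopes}) is at most $\sum_{k=0}^{d}\binom{n}{k}$. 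If a self-contained proof is preferred, this bound follows by induction on $n$ from the recurrence $R(n,d) \le R(n-1,d) + R(n-1,d-1)$ with $R(0,d) = R(n,0) = 1$: adjoining $H_n$ leaves a chamber of the smaller arrangement untouched or cuts it into exactly two pieces, and the number of chambers it cuts is bounded by the number of chambers of the $(d{-}1)$-dimensional arrangement induced on $H_n$; the recurrence solves to $\sum_{k=0}^{d}\binom{n}{k}$ by Pascal's rule.

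It then remains to verify the purely arithmetic inequality $\sum_{k=0}^{d}\binom{n}{k} \le n^{d+1} + d + 1$. Using $\binom{n}{k} \le n^k$ gives $\sum_{k=0}^{d}\binom{n}{k} \le \sum_{k=0}^{d} n^k$; for $n \ge 2$ this geometric sum equals $\frac{n^{d+1}-1}{n-1} \le n^{d+1}$, and for $n = 1$ it equals $d+1$, so in either case the bound $n^{d+1} + d + 1$ holds comfortably. The only point demanding real care is the chamber correspondence in the middle paragraph -- in particular the handling of constant or degenerate forms and the existence of rational (rather than merely real) witnesses -- whereas the region-counting estimate itself is entirely standard; I expect that bookkeeping to be the main, though mild, obstacle.
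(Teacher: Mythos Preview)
Your proposal is correct and follows essentially the same route as the paper: identify the satisfiable sign tuples with regions of a hyperplane arrangement of at most $n$ hyperplanes in $\Q^d$, invoke the classical bound $\sum_{k=0}^{d}\binom{n}{k}$, and then estimate this sum by $n^{d+1}+d+1$. You add some extra care the paper omits (degenerate forms, rational versus real witnesses, and an optional self-contained recurrence for the region count), but the argument is the same in substance.
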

\begin{proof}
	We start by briefly recalling some well-known concepts from discrete geometry. A \textit{hyperplane arrangement} $ \mathcal{A} $ is a finite set of affine hyperplanes in $ \R^d $ \cite{matousekjiri:2013, stanley:2004:hyperplane_arrangements, miller:2005:combinatorialcommalg}. A \textit{region} of a hyperplane arrangement $ \mathcal{A} $ is a maximal\footnote{With respect to inclusion.} convex subset of \[
	\R^d \setminus \bigcup_{H \in \mathcal{A}} H
	\] (see \cite[p. 126]{matousekjiri:2013}, \cite[p. 3]{stanley:2004:hyperplane_arrangements} and \cite[Chapter 18]{gruenbaum:2003}). 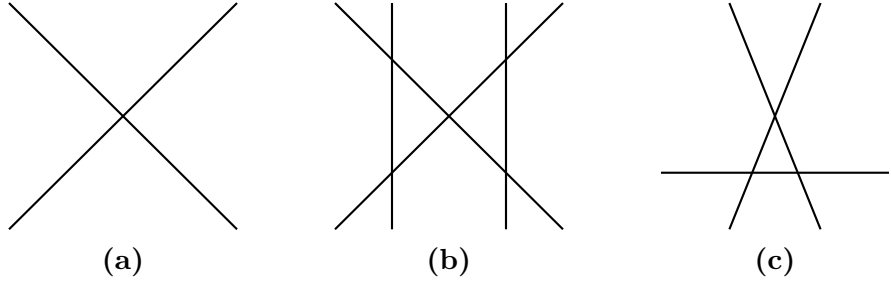
\begin{figure}[t]
		\centering
		\hfill\begin{subfigure}{0.3\textwidth}
	\centering
	\begin{tikzpicture}[scale=0.75]
		\coordinate (tl) at (-2, 2);
		\coordinate (tr) at (2, 2);
		\coordinate (bl) at (-2, -2);
		\coordinate (br) at (2, -2);
		
		\draw[thick] (tl) -- (br);
		\draw[thick] (bl) -- (tr);
	\end{tikzpicture}
	\caption{\label{fig:hyperplane_arrangement_examples:ex1}}
\end{subfigure} \hfill
\begin{subfigure}{0.3\textwidth}
	\centering
	\begin{tikzpicture}[scale=0.75]
		\coordinate (tl) at (-2, 2);
		\coordinate (tr) at (2, 2);
		\coordinate (bl) at (-2, -2);
		\coordinate (br) at (2, -2);
		
		\draw[thick] (tl) -- (br);
		\draw[thick] (bl) -- (tr);
		\draw[thick] (-1, 2) -- (-1, -2);
		\draw[thick] (1, 2) -- (1, -2);
	\end{tikzpicture}
	\caption{\label{fig:hyperplane_arrangement_examples:ex2}}
\end{subfigure} \hfill
\begin{subfigure}{0.3\textwidth}
	\centering
	\begin{tikzpicture}[scale=0.75]
		\coordinate (tl) at (-2, 2);
		\coordinate (tr) at (2, 2);
		\coordinate (bl) at (-2, -2);
		\coordinate (br) at (2, -2);
		
		%\draw[thick] ($ (tl)!0.5!(tr) $) -- ($ (bl)!0.5!(br) $);
		%\draw[thick] (bl) -- (tr);
		%\draw[thick] ($ (tl)!0.6!(bl) $) -- ($ (tr)!0.9!(br) $);
		\draw[thick] ($ (tl)!0.75!(bl) $) -- ($ (tr)!0.75!(br) $);
		\draw[thick] ($ (tl)!0.7!(tr) $) -- ($ (bl)!0.3!(br) $);
		\draw[thick] ($ (tl)!0.3!(tr) $) -- ($ (bl)!0.7!(br) $);
	\end{tikzpicture}
	\caption{\label{fig:hyperplane_arrangement_examples:ex3}}
\end{subfigure}\hfill
		\caption{Three examples of hyperplane arrangements in $ \R^2 $.}
		\label{fig:hyperplane_arrangement_examples}
	\end{figure} Three concrete examples of hyperplane arrangements are visualized in Figure~\ref{fig:hyperplane_arrangement_examples}. More precisely, the arrangements shown in Figures \ref{fig:hyperplane_arrangement_examples:ex1}, \ref{fig:hyperplane_arrangement_examples:ex2} and \ref{fig:hyperplane_arrangement_examples:ex3} have 4, 10 and 7 regions, respectively. It is well-known in discrete geometry that the maximum number of regions in an affine arrangement of size $ n $ is achieved when it is in the so-called \textit{general position}\footnote{Also known as \textit{generic position} \cite{edelsbrunner:1993:zonetheorem}.} \cite{orlik:1992:arrangementsofhyperplanes, matousekjiri:2013, stanley:2004:hyperplane_arrangements, edelsbrunner:1993:zonetheorem, miller:2005:combinatorialcommalg}. Since $ R := \{p_1^{P_1}, \dots, p_n^{P_n}\} $ defines a single region of some arrangement of $ n $ hyperplanes if $ R $ is satisfiable, using known formulas for the number of regions in general position (see \cite[Proposition 6.1.1]{matousekjiri:2013}, \cite[Proposition 2.4]{stanley:2004:hyperplane_arrangements} and \cite[p. 1]{orlik:1992:arrangementsofhyperplanes}) yields the \begin{align*}
		\left|\Big\{(P_1, \dots, P_n) \in \{<, >\}^n \mid \{p_1^{P_1}, \dots, p_n^{P_n}\} \text{ is satisfiable}\Big\}\right| \le \sum_{i=0}^d \binom{n}{i}
	\end{align*} upper bound. Further bounding \[
		\sum_{i=0}^d \binom{n}{i} \le \sum_{i=0}^d n^i \le n^{d+1} + d + 1
	\] by a geometric series yields the result.

%Hence, to prove the lemma it suffices to show \begin{align}
%	\label{eqn:sum0dbinomni_one_plus_enpowd}
%	\sum_{i=0}^d \binom{n}{i} \le 1 + d(en)^d
%\end{align} Since $ e^x = \sum_{n=0}^{\infty} \frac{x^n}{n!} $, it follows that $ \frac{k^k}{k!} \le e^k $ holds for all $ k \in \N $, meaning that $ \frac{1}{k!} \le (e/k)^k $. Hence, for every $ i \in \{1, \dots, n\} $ we get an upper bound \begin{align}
%	\label{eqn:binom_ni_le_endivi_pow_i}
%	\binom{n}{i} = \frac{n!}{(n - i)! \cdot i!} \le \frac{n^i}{i!} \le n^i \cdot \Big(\frac{e}{i}\Big)^i = \Big(\frac{en}{i}\Big)^i
%\end{align} We now use (\ref{eqn:binom_ni_le_endivi_pow_i}) to bound the sum \begin{align*}
%	\sum_{i=0}^d \binom{n}{i} = 1 + \sum_{i=1}^d \binom{n}{i} \overset{(\ref{eqn:binom_ni_le_endivi_pow_i})}{\le} 1 + \sum_{i=1}^d {\underbrace{\Big(\frac{en}{i}\Big)}_{> 1}}^i \le 1 + \sum_{i=1}^d (en)^d = 1 + d(en)^d
%\end{align*} This proves (\ref{eqn:sum0dbinomni_one_plus_enpowd}) and thus completes the proof of Lemma~\ref{lemma:sat_reg_count_upper_bound}.
\end{proof}

Let $ D $ be as defined in (\ref{eqn:d_set_compressed_def}), $ d $ be the number of free variables in $ \varphi $ and $ m $ be the size of $ \varphi $. By Lemma~\ref{lemma:sat_reg_count_upper_bound}, \begin{align}
	\label{eqn:d_size_le_onepluedenpowd}
	\left|D\right| \le n^{d+1} + d + 1
\end{align} Since $ \Upsilon $ is of size exponential in $ m $, there exists a constant $ c \in \Npos $ such that $ n \le 2^{m^c} + c $. Substituting this upper bound for $ n $ and the trivial upper bound $ d \le m $ in (\ref{eqn:d_size_le_onepluedenpowd}) yields \[
	\left|D\right| \le (2^{m^c} + c)^{m+1} + m + 1
\] Hence, we obtain the desired exponential upper bound \begin{align}
	\label{eqn:d_size_in_o_two_pow_polym}
	\left|D\right| \in O(2^{\poly(m)})
\end{align} for the size of $ D $.

\subsubsection{Designing an algorithm for computing $ D $ efficiently}
\label{sec:vardec:exponential_upper_bound:towards_algo_for_computing_d}

Having established an $ O(2^{\poly(m)}) $ upper bound for the size of $ D $ as defined in (\ref{eqn:d_set_compressed_def}), we now start working towards showing that, surprisingly, $ D $ can also be computed in $ O(2^{\poly(m)}) $ time, notwithstanding the fact that there are exponentially many predicates in $ \Upsilon $. Clearly, every element of the search space \begin{align*}
	S := \{\{p_1^{P_1}, \dots, p_n^{P_n}\} \mid (P_1, \dots, P_n) \in \{<, >\}^n\}
\end{align*} corresponds to a unique sequence $ (P_1, \dots, P_n) \in \{<, >\}^n $ of predicate symbols. Every such sequence can be thought of as the result of $ n $ \textit{decisions} to replace the predicate symbol of $ p_i $ by either $ P_< $ or $ P_> $, where $ i \in \{1, \dots, n\} $. 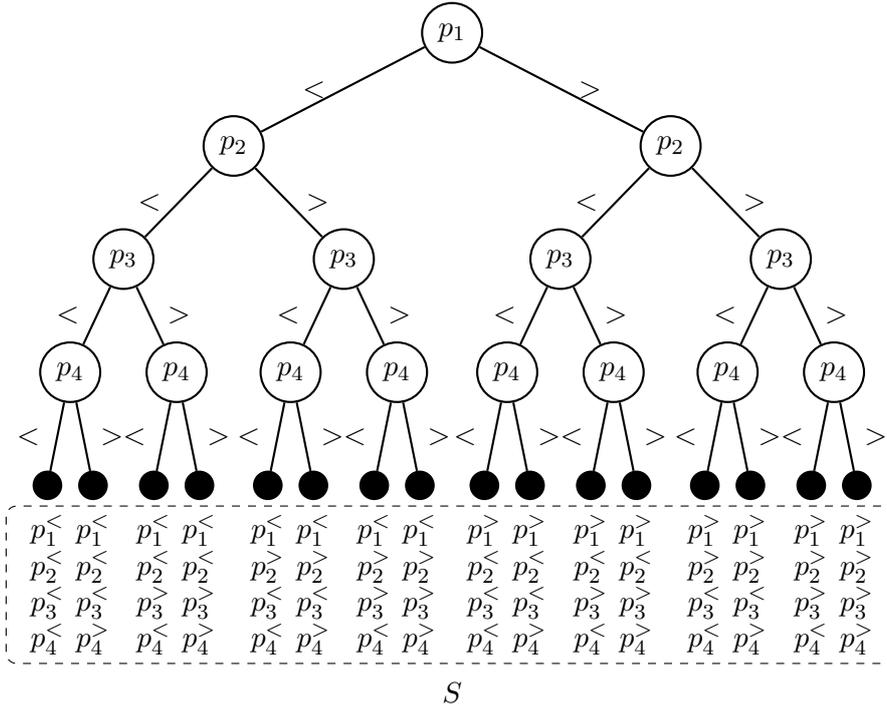
\begin{figure}[t]
	\centering
	\begin{tikzpicture}[
	every node/.style = {draw, circle, thick},
	arrdown/.style={thick},
	arrup/.style={thick},
	normaledge/.style={thick},
	leaf/.style={fill},
	configsat/.style={},
	configunsat/.style={},
	grow = down,
	level distance=1.5cm,
	level 1/.style={sibling distance=5.75cm},
	level 2/.style={sibling distance=2.9cm},
	level 3/.style={sibling distance=1.4cm},
	level 4/.style={sibling distance=0.6cm}]
	\node[configsat] (root) {$ p_1 $}
	child {node[configsat] {$ p_2 $}
		child {node[configunsat] {$ p_3 $}
			child {node[configunsat] {$ p_4 $}
				child {node[leaf, configunsat, label={[align=center, name=leftmostleaf]below:$ p_1^< $\\$ p_2^< $\\$ p_3^< $\\$ p_4^< $}] {} edge from parent [arrdown] node[left,draw=none] {$ < $}}
				child {node[leaf, configunsat, label={[align=center]below:$ p_1^< $\\$ p_2^< $\\$ p_3^< $\\$ p_4^> $}] {} edge from parent [arrdown] node[right,draw=none] {$ > $}}
				edge from parent [arrdown] node[left,draw=none] {$ < $}
			}
			child {node[configunsat] {$ p_4 $}
				child {node[leaf, configunsat, label={[align=center]below:$ p_1^< $\\$ p_2^< $\\$ p_3^> $\\$ p_4^< $}] {} edge from parent [arrdown] node[left,draw=none] {$ < $}}
				child {node[leaf, configunsat, label={[align=center]below:$ p_1^< $\\$ p_2^< $\\$ p_3^> $\\$ p_4^> $}] {} edge from parent [arrdown] node[right,draw=none] {$ > $}}
				edge from parent [arrdown] node[right,draw=none] {$ > $}
			}
			edge from parent [normaledge] node[left,draw=none] {$ < $}
		}
		child {node[configsat] {$ p_3 $}
			child {node[configunsat] {$ p_4 $}
				child {node[leaf, configunsat, label={[align=center]below:$ p_1^< $\\$ p_2^> $\\$ p_3^< $\\$ p_4^< $}] {} edge from parent [arrdown] node[left,draw=none] {$ < $}}
				child {node[leaf, configunsat, label={[align=center]below:$ p_1^< $\\$ p_2^> $\\$ p_3^< $\\$ p_4^> $}] {} edge from parent [arrdown] node[right,draw=none] {$ > $}}
				edge from parent [normaledge] node[left,draw=none] {$ < $}
			}
			child {node[configsat] {$ p_4 $}
				child {node[leaf, configsat, label={[align=center]below:$ p_1^< $\\$ p_2^> $\\$ p_3^> $\\$ p_4^< $}] {} edge from parent [arrup] node[left,draw=none] {$ < $}}
				child {node[leaf, configunsat, label={[align=center]below:$ p_1^< $\\$ p_2^> $\\$ p_3^> $\\$ p_4^> $}] {} edge from parent [normaledge] node[right,draw=none] {$ > $}}
				edge from parent [arrup] node[right,draw=none] {$ > $}
			}
			edge from parent [arrup] node[right,draw=none] {$ > $}
		}
		edge from parent [arrup] node[left,draw=none] {$ < $}
	}
	child {node[configsat] {$ p_2 $}
		child {node[configsat] {$ p_3 $}
			child {node[configsat] {$ p_4 $}
				child {node[leaf, configsat, label={[align=center]below:$ p_1^> $\\$ p_2^< $\\$ p_3^< $\\$ p_4^< $}] {} edge from parent [arrup] node[left,draw=none] {$ < $}}
				child {node[leaf, configunsat, label={[align=center]below:$ p_1^> $\\$ p_2^< $\\$ p_3^< $\\$ p_4^> $}] {} edge from parent [normaledge] node[right,draw=none] {$ > $}}
				edge from parent [arrup] node[left,draw=none] {$ < $}
			}
			child {node[configsat] {$ p_4 $}
				child {node[leaf, configunsat, label={[align=center]below:$ p_1^> $\\$ p_2^< $\\$ p_3^> $\\$ p_4^< $}] {} edge from parent [normaledge] node[left,draw=none] {$ < $}}
				child {node[leaf, configsat, label={[align=center]below:$ p_1^> $\\$ p_2^< $\\$ p_3^> $\\$ p_4^> $}] {} edge from parent [arrup] node[right,draw=none] {$ > $}}
				edge from parent [arrup] node[right,draw=none] {$ > $}
			}
			edge from parent [arrup] node[left,draw=none] {$ < $}
		}
		child {node[configsat] {$ p_3 $}
			child {node[configsat] {$ p_4 $}
				child {node[leaf, configsat, label={[align=center]below:$ p_1^> $\\$ p_2^> $\\$ p_3^< $\\$ p_4^< $}] {} edge from parent [arrup] node[left,draw=none] {$ < $}}
				child {node[leaf, configsat, label={[align=center]below:$ p_1^> $\\$ p_2^> $\\$ p_3^< $\\$ p_4^> $}] {} edge from parent [arrup] node[right,draw=none] {$ > $}}
				edge from parent [arrup] node[left,draw=none] {$ < $}
			}
			child {node[configunsat] {$ p_4 $}
				child {node[leaf, configunsat, label={[align=center]below:$ p_1^> $\\$ p_2^> $\\$ p_3^> $\\$ p_4^< $}] {} edge from parent [arrdown] node[left,draw=none] {$ < $}}
				child {node[leaf, configunsat, label={[align=center, name=rightmostleaf]below:$ p_1^> $\\$ p_2^> $\\$ p_3^> $\\$ p_4^> $}] {} edge from parent [arrdown] node[right,draw=none] {$ > $}}
				edge from parent [normaledge] node[right,draw=none] {$ > $}
			}
			edge from parent [arrup] node[right,draw=none] {$ > $}
		}
		edge from parent [arrup] node[right,draw=none] {$ > $}
	};
	\coordinate (leafrecttl) at ($ (leftmostleaf.north west) + (0.25, 0.25) $);
	\coordinate (leafrectbr) at ($ (rightmostleaf.south east) -(0.25, 0.25) $);
	\coordinate (leafrectbl) at (leafrectbr-|leafrecttl);
	\draw[dashed, rounded corners] (leafrecttl) rectangle (leafrectbr);
	\node[anchor=north, draw=none] at ($ (leafrectbl)!0.5!(leafrectbr) $) {$ S $};
	
	% for debugging
	% \draw (current bounding box.north east) -- (current bounding box.north west) -- (current bounding box.south west) -- (current bounding box.south east) -- cycle;
\end{tikzpicture}
	\caption{Illustration of the decision tree $ T $ corresponding to $ S $ for $ n = 4 $. Every leaf of the tree has a column underneath it, containing elements of the corresponding set in $ S $.}
	\label{fig:d_disjunct_combination_space_tree}
\end{figure} We visualize the set of all possible sequences of such decisions using a tree $ T $ with nodes labeled $ p_1, \dots, p_n $ (see Figure~\ref{fig:d_disjunct_combination_space_tree}). The labeling of edges between a node labeled $ p_i $ and its children indicates the two possible decisions we can make about the predicate symbol by which the inequality in $ p_i \in \Upsilon $ is to be replaced. In other words and as illustrated in Figure~\ref{fig:d_disjunct_combination_space_tree}, every element of $ S $ corresponds to precisely one leaf of the tree $ T $.

\begin{figure}[t]
	\centering
	\begin{tikzpicture}[
	every node/.style = {draw, circle, thick},
	arrreddown/.style={thick, dashed, -latex},
	arrgreendown/.style={ultra thick, -latex},
	normaledge/.style={thick},
	leaf/.style={fill, solid},
	configsat/.style={color=OliveGreen},
	configunsat/.style={color=red},
	grow = down,
	level distance=1.5cm,
	level 1/.style={sibling distance=6.25cm},
	level 2/.style={sibling distance=3cm},
	level 3/.style={sibling distance=1.5cm},
	level 4/.style={sibling distance=0.6cm}]
	\node[configsat] (root) {$ p_1 $}
	child {node[configsat] {$ p_2 $}
		child {node[configunsat] {$ p_3 $}
			child {node[configunsat] {$ p_4 $}
				child {node[leaf, configunsat] {} edge from parent [arrreddown] node[left,draw=none] {$ < $}}
				child {node[leaf, configunsat] {} edge from parent [arrreddown] node[right,draw=none] {$ > $}}
				edge from parent [arrreddown] node[left,draw=none] {$ < $}
			}
			child {node[configunsat] {$ p_4 $}
				child {node[leaf, configunsat] {} edge from parent [arrreddown] node[left,draw=none] {$ < $}}
				child {node[leaf, configunsat] {} edge from parent [arrreddown] node[right,draw=none] {$ > $}}
				edge from parent [arrreddown] node[right,draw=none] {$ > $}
			}
			edge from parent [normaledge] node[left,draw=none] {$ < $}
		}
		child {node[configsat] {$ p_3 $}
			child {node[configunsat] {$ p_4 $}
				child {node[leaf, configunsat] {} edge from parent [arrreddown] node[left,draw=none] {$ < $}}
				child {node[leaf, configunsat] {} edge from parent [arrreddown] node[right,draw=none] {$ > $}}
				edge from parent [normaledge] node[left,draw=none] {$ < $}
			}
			child {node[configsat] {$ p_4 $}
				child {node[leaf, configsat] {} edge from parent [arrgreendown] node[left,draw=none] {$ < $}}
				child {node[leaf, configunsat] {} edge from parent [normaledge] node[right,draw=none] {$ > $}}
				edge from parent [arrgreendown] node[right,draw=none] {$ > $}
			}
			edge from parent [arrgreendown] node[right,draw=none] {$ > $}
		}
		edge from parent [arrgreendown] node[left,draw=none] {$ < $}
	}
	child {node[configsat] {$ p_2 $}
		child {node[configsat] {$ p_3 $}
			child {node[configsat] {$ p_4 $}
				child {node[leaf, configsat] {} edge from parent [arrgreendown] node[left,draw=none] {$ < $}}
				child {node[leaf, configunsat] {} edge from parent [normaledge] node[right,draw=none] {$ > $}}
				edge from parent [arrgreendown] node[left,draw=none] {$ < $}
			}
			child {node[configsat] {$ p_4 $}
				child {node[leaf, configunsat] {} edge from parent [normaledge] node[left,draw=none] {$ < $}}
				child {node[leaf, configsat] {} edge from parent [arrgreendown] node[right,draw=none] {$ > $}}
				edge from parent [arrgreendown] node[right,draw=none] {$ > $}
			}
			edge from parent [arrgreendown] node[left,draw=none] {$ < $}
		}
		child {node[configsat] {$ p_3 $}
			child {node[configsat] {$ p_4 $}
				child {node[leaf, configsat] {} edge from parent [arrgreendown] node[left,draw=none] {$ < $}}
				child {node[leaf, configsat] {} edge from parent [arrgreendown] node[right,draw=none] {$ > $}}
				edge from parent [arrgreendown] node[left,draw=none] {$ < $}
			}
			child {node[configunsat] {$ p_4 $}
				child {node[leaf, configunsat] {} edge from parent [arrreddown] node[left,draw=none] {$ < $}}
				child {node[leaf, configunsat] {} edge from parent [arrreddown] node[right,draw=none] {$ > $}}
				edge from parent [normaledge] node[right,draw=none] {$ > $}
			}
			edge from parent [arrgreendown] node[right,draw=none] {$ > $}
		}
		edge from parent [arrgreendown] node[right,draw=none] {$ > $}
	};
	% for debugging
	% \draw (current bounding box.north east) -- (current bounding box.north west) -- (current bounding box.south west) -- (current bounding box.south east) -- cycle;
\end{tikzpicture}
	\caption{Illustration of the decision tree $ T $ corresponding to $ S $ for $ n = 4 $. The coloring depicts a possible way the satisfiability of predicates sets of the form $ \{p_1^{P_1}, \dots, p_i^{P_i}\} $ may behave, for various $ i $ and $ P_1, \dots, P_i \in \{<, >\} $. More precisely, a node reachable from the root via a path labeled $ P_1 P_2 \dots P_i $ is colored green if $ \{p_1^{P_1}, \dots, p_i^{P_i}\} $ is satisfiable and red if it is unsatisfiable. In particular, it follows that the leaves colored green correspond precisely to elements of $ D \subseteq S $.}
	\label{fig:d_disjunct_combination_space}
	% Arrows depict the property that any green node implies that its parent is green, and, accordingly, that every red node can only have red children.
\end{figure}
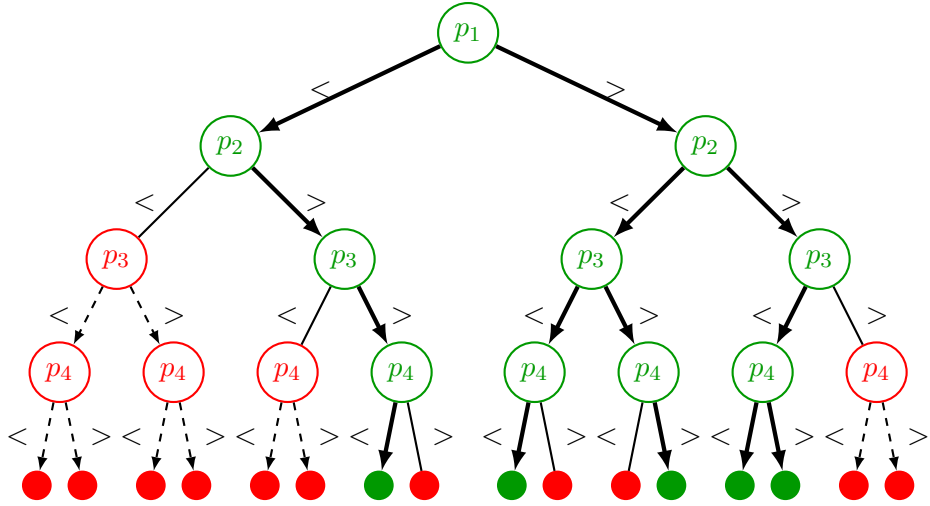

Clearly, the set $ D \subseteq S $ corresponds to a subset of the leaves of $ T $ (see Figure~\ref{fig:d_disjunct_combination_space}). Since our goal is to compute $ D $ in $ O(2^{\poly(m)}) $ time, it is essential to come up with a clever way of traversing only those parts of the tree $ T $ which lead to leaves corresponding to $ D $. We cannot afford simply traversing all possible leaves of the tree, as this would take $ O(2^{2^{\poly(m)}}) $ time because the height of $ T $ is $ n \in O(2^{\poly(m)}) $. We start by explaining the key ideas behind the approach we use to efficiently traverse $ T $.

\begin{itemize}
	\item First, think of every node in the tree $ T $ as a partial assignment of either $ P_< $ or $ P_> $ to every $ p_i $, but possibly not for all $ i \in \{1, \dots, n\} $. That is, intuitively, every node of the tree corresponds to a partial decision to replace the predicate symbol of $ p_i $ by either $ P_< $ or $ P_> $. More precisely, we regard a node as corresponding to $ \{p_1^{P_1}, \dots, p_i^{P_i}\} $ whenever the path from the root to that node is labeled $ P_1 P_2 \dots P_i $. For the sake of simplicity, we will no longer distinguish between the node reachable from the root via a path labeled $ P_1 P_2 \dots P_i $ and the set $ \{p_1^{P_1}, \dots, p_i^{P_i}\} $. We can do this because there is obviously a one-to-one correspondence between paths starting in the root and sets of the form $ \{p_1^{P_1}, \dots, p_i^{P_i}\} $.
	\item Next, we study how the satisfiability of a node $ N := \{p_1^{P_1}, \dots, p_i^{P_i}\} $ in $ T $ influences the satisfiability of $ N $'s children and vice versa. Obviously, the satisfiability of $ N $ implies the satisfiability of all $ N $'s ancestors (see Figure~\ref{fig:d_disjunct_combination_space} and observe that every ancestor of a green node is also green). It is also immediate that the unsatisfiability of $ N $ implies the unsatisfiability of all descendants of $ N $ (likewise note that it is not a coincidence that in Figure~\ref{fig:d_disjunct_combination_space} all descendants of every red node are also red; this behavior is visualized using dashed arrows). Based on these observations, we conclude that every node on a path from the root to an element of $ D $ is colored green. Hence, it is always possible to compute $ D $ by traversing only the green nodes of the tree $ T $.
	\item The above observation that it suffices to traverse only green nodes of the tree in order to compute $ D $, unfortunately, is not enough to construct an $ O(2^{\poly(m)}) $-time algorithm for computing $ D $ by traversing $ T $. 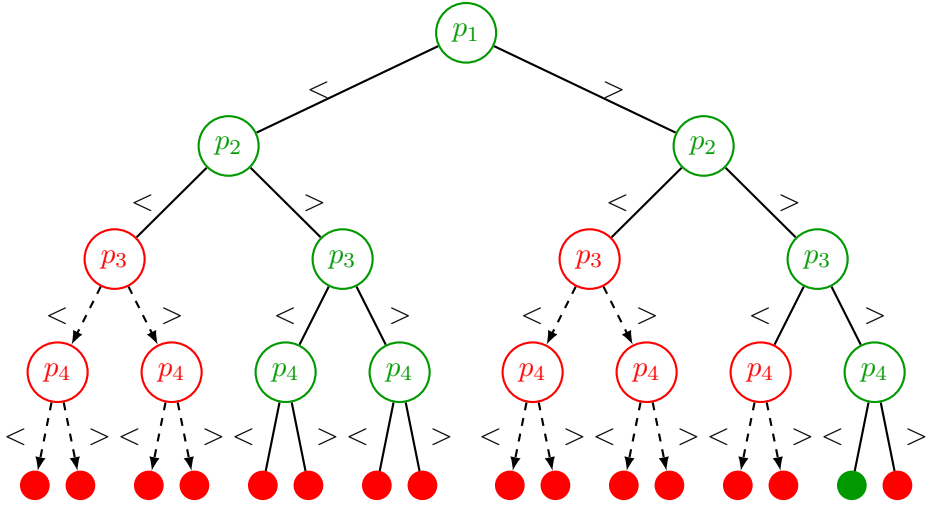
\begin{figure}[t]
		\centering
		\begin{tikzpicture}[
	every node/.style = {draw, circle, thick},
	arrreddown/.style={thick, dashed, -latex},
	arrgreendown/.style={thick},
	normaledge/.style={thick},
	leaf/.style={fill, solid},
	configsat/.style={solid, color=OliveGreen},
	configunsat/.style={solid, color=red},
	grow = down,
	level distance=1.5cm,
	level 1/.style={sibling distance=6.25cm},
	level 2/.style={sibling distance=3cm},
	level 3/.style={sibling distance=1.5cm},
	level 4/.style={sibling distance=0.6cm}]
	\node[configsat] (root) {$ p_1 $}
	child {node[configsat] {$ p_2 $}
		child {node[configunsat] {$ p_3 $}
			child {node[configunsat] {$ p_4 $}
				child {node[leaf, configunsat] {} edge from parent [arrreddown] node[left,draw=none] {$ < $}}
				child {node[leaf, configunsat] {} edge from parent [arrreddown] node[right,draw=none] {$ > $}}
				edge from parent [arrreddown] node[left,draw=none] {$ < $}
			}
			child {node[configunsat] {$ p_4 $}
				child {node[leaf, configunsat] {} edge from parent [arrreddown] node[left,draw=none] {$ < $}}
				child {node[leaf, configunsat] {} edge from parent [arrreddown] node[right,draw=none] {$ > $}}
				edge from parent [arrreddown] node[right,draw=none] {$ > $}
			}
			edge from parent [normaledge] node[left,draw=none] {$ < $}
		}
		child {node[configsat] {$ p_3 $}
			child {node[configsat] {$ p_4 $}
				child {node[leaf, configunsat] {} edge from parent [normaledge] node[left,draw=none] {$ < $}}
				child {node[leaf, configunsat] {} edge from parent [normaledge] node[right,draw=none] {$ > $}}
				edge from parent [normaledge] node[left,draw=none] {$ < $}
			}
			child {node[configsat] {$ p_4 $}
				child {node[leaf, configunsat] {} edge from parent [normaledge] node[left,draw=none] {$ < $}}
				child {node[leaf, configunsat] {} edge from parent [normaledge] node[right,draw=none] {$ > $}}
				edge from parent [arrgreendown] node[right,draw=none] {$ > $}
			}
			edge from parent [arrgreendown] node[right,draw=none] {$ > $}
		}
		edge from parent [arrgreendown] node[left,draw=none] {$ < $}
	}
	child {node[configsat] {$ p_2 $}
		child {node[configunsat] {$ p_3 $}
			child {node[configunsat] {$ p_4 $}
				child {node[leaf, configunsat] {} edge from parent [arrreddown] node[left,draw=none] {$ < $}}
				child {node[leaf, configunsat] {} edge from parent [arrreddown] node[right,draw=none] {$ > $}}
				edge from parent [arrreddown] node[left,draw=none] {$ < $}
			}
			child {node[configunsat] {$ p_4 $}
				child {node[leaf, configunsat] {} edge from parent [arrreddown] node[left,draw=none] {$ < $}}
				child {node[leaf, configunsat] {} edge from parent [arrreddown] node[right,draw=none] {$ > $}}
				edge from parent [arrreddown] node[right,draw=none] {$ > $}
			}
			edge from parent [normaledge] node[left,draw=none] {$ < $}
		}
		child {node[configsat] {$ p_3 $}
			child {node[configunsat] {$ p_4 $}
				child {node[leaf, configunsat] {} edge from parent [arrreddown] node[left,draw=none] {$ < $}}
				child {node[leaf, configunsat] {} edge from parent [arrreddown] node[right,draw=none] {$ > $}}
				edge from parent [normaledge] node[left,draw=none] {$ < $}
			}
			child {node[configsat] {$ p_4 $}
				child {node[leaf, configsat] {} edge from parent [normaledge] node[left,draw=none] {$ < $}}
				child {node[leaf, configunsat] {} edge from parent [normaledge] node[right,draw=none] {$ > $}}
				edge from parent [normaledge] node[right,draw=none] {$ > $}
			}
			edge from parent [normaledge] node[right,draw=none] {$ > $}
		}
		edge from parent [normaledge] node[right,draw=none] {$ > $}
	};
	% for debugging
	% \draw (current bounding box.north east) -- (current bounding box.north west) -- (current bounding box.south west) -- (current bounding box.south east) -- cycle;
\end{tikzpicture}
		\caption{Illustration of an impossible situation when the decision tree $ T $ contains a green dead end, that is, a green node with no green leaf descendants. This particular tree contains four green dead ends: $ \{p_1^<\} $, $ \{p_1^<, p_2^>\} $, $ \{p_1^<, p_2^>, p_3^<\} $ and $ \{p_1^<, p_2^>, p_3^>\} $. All other green nodes are not green dead ends because they have a green leaf descendant $ \{p_1^>, p_2^>, p_3^>, p_4^<\} $.}
		\label{fig:d_disjunct_combination_space_tree_green_dead_end}
		% Arrows depict the property that any green node implies that its parent is green, and, accordingly, that every red node can only have red children.
	\end{figure} The problem is that, even if we only traverse green nodes, there may exist green dead ends in $ T $, i.e., nodes $ N $ without green leaf descendants (see Figure~\ref{fig:d_disjunct_combination_space_tree_green_dead_end} for a concrete example). In these green dead ends the traversing algorithm would have to backtrack, which we cannot afford doing because this would make the algorithm possibly spend double-exponential time computing even some single $ \Upsilon' \in D $.
	\item We tackle the above problem by proving that, in fact, no green dead ends are possible (see Figure~\ref{fig:d_disjunct_combination_space_tree_green_dead_end}). In other words, every green node of the tree is guaranteed to lie on a path to a green leaf (see Figure~\ref{fig:d_disjunct_combination_space} for a visualization of this property using bold arrows). Consequently, the standard depth-first search (DFS) traversal of the green nodes of $ T $ is guaranteed to visit a green leaf (or to terminate) after visiting at most $ n $ many green non-leaf nodes following the visit of the previous green leaf. In other words, due to the fact that no green dead ends are possible, the DFS traversal will never get stuck searching for a leaf. Here, being ``stuck'' means visiting more than $ n $ nodes in a row, with none of them being a green leaf. Hence, if $ f(m) $ is the time complexity of visiting a single node of $ T $, then computing $ D $ via the DFS traversal of the green nodes of $ T $ takes \[
		O(\left|D\right| \cdot n \cdot f(m))
	\] time. Using the fact that $ n \in O(2^{\poly(m)}) $ and substituting the upper bound for the size of $ D $ derived in (\ref{eqn:d_size_in_o_two_pow_polym}), we obtain the \begin{align}
		\label{eqn:generic_complexity_of_dfs_algo_computing_d}
		O(2^{\poly(m)} \cdot 2^{\poly(m)} \cdot f(m)) = O(2^{\poly(m)} \cdot f(m))
	\end{align} time complexity for the DFS algorithm computing $ D $.
\end{itemize}

\subsubsection{No green dead ends are possible}

We now start filling in the missing details in the above reasoning by formulating the property that no green dead ends are possible and proving it.

\begin{zbclaim}[No green dead ends are possible]
	Let $ i \in \{1, \dots, n-1\} $ and $ (P_1, \dots, P_i) \in \{<, >\}^i $ be such that $ \{p_1^{P_1}, \dots, p_i^{P_i}\} $ is satisfiable. Then, there exists $ P_{i+1} \in \{<, >\} $ such that $ \{p_1^{P_1}, \dots, p_{i+1}^{P_{i+1}}\} $ is satisfiable.
\end{zbclaim}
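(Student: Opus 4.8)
The plan is to argue geometrically, exploiting the structural fact that every predicate in the linearization $p_1, \dots, p_n$ of $\Upsilon$ is an inequality whose two sides differ only by a linear form with a \emph{nonzero} coefficient vector. Indeed, by Lines~\ref{alg:cover:line:second_loop:witness_vec} and~\ref{alg:cover:line:second_loop:upsilon_upd} of Algorithm~\ref{alg:cover}, each element of $\Upsilon$ has the shape $\pi_Z(\vec{z}) \cdot w \neq \pi_Z(b) \cdot w$ with $w \neq 0$, so I would first rewrite $p_j$ as $\ell_j(\vec{z}) \neq c_j$, where $\ell_j$ is a non-degenerate linear form and $c_j \in \Q$; correspondingly $p_j^<$ and $p_j^>$ are the two genuine open half-spaces $\ell_j(\vec{z}) < c_j$ and $\ell_j(\vec{z}) > c_j$.

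Next I would observe that $M := \{x \in \R^d \mid x \models \{p_1^{P_1}, \dots, p_i^{P_i}\}\}$ (with $d$ the number of free variables of $\varphi$) is a \emph{nonempty open} subset of $\R^d$: it is a finite intersection of open half-spaces, and it is nonempty because $\{p_1^{P_1}, \dots, p_i^{P_i}\}$ is assumed satisfiable over $\Q^d \subseteq \R^d$. Fixing a rational point $v \in M$ together with a rational radius $r > 0$ such that the ball $B(v, r) \subseteq M$, I would then evaluate $\ell := \ell_{i+1}$ at $v$. If $\ell(v) \neq c_{i+1}$, then $v$ already witnesses satisfiability of $\{p_1^{P_1}, \dots, p_i^{P_i}, p_{i+1}^{P_{i+1}}\}$ with $P_{i+1}$ the relation holding between $\ell(v)$ and $c_{i+1}$. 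Otherwise $\ell(v) = c_{i+1}$, and I would perturb $v$ by a small rational step $t > 0$ along the (nonzero, rational) coefficient vector $u$ of $\ell$: for $t$ small enough, $v + tu$ stays in $B(v, r) \subseteq M$ and is rational, while $\ell(v + tu) = c_{i+1} + t\,(u \cdot u) > c_{i+1}$, so $v + tu$ witnesses satisfiability with $P_{i+1} := {>}$. Either way the desired extension is satisfiable, which is exactly the claim.

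The step I expect to need the most care is verifying that each $p_j^{P_j}$ defines a \emph{non-degenerate} open half-space rather than a trivial one ($\R^d$ or $\varnothing$); this is precisely where the invariant $w \neq 0$ maintained by the covering algorithm enters, and it is also what rules out pathological would-be counterexamples such as $p_{i+1}$ being a constant inequality like $3 \neq 3$. A slightly more algebraic alternative would phrase the obstruction as: a conjunction of strict inequalities with non-degenerate linear parts cannot entail the nontrivial equality $\ell_{i+1} = c_{i+1}$ (which, via Theorem~\ref{thm:only_equality_predicates_can_establish_lindep_strong}, would force a linear dependency absent from $\top$) — but the elementary topological argument above seems cleaner and fully self-contained, so that is the route I would take.
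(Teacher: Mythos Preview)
Your argument is correct. Both proofs hinge on the same structural fact---that each $p_j \in \Upsilon$ has a nonzero coefficient vector $w$, so that $p_j^{P_j}$ defines a genuine open half-space---but you exploit it differently from the paper. The paper argues by contradiction: if both extensions were unsatisfiable then $\{p_1^{P_1}, \dots, p_i^{P_i}\} \models p_{i+1}^=$, and this entailment would force a nontrivial element into $\LinDep_{\id, v}(\{p_1^{P_1}, \dots, p_i^{P_i}\})$, contradicting (via Theorem~\ref{thm:only_equality_predicates_can_establish_lindep_strong}) the fact that a conjunction of strict inequalities has $\LinDep_{\id, v} = \{\gen{0}\}$. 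Your route is the direct topological one: the satisfiable conjunction of strict inequalities defines a nonempty open set, so you can perturb a witness along the (nonzero) coefficient vector of $p_{i+1}$ while staying inside. This is more elementary and self-contained; the paper's route is less direct but reuses the $\LinDep$ machinery already developed. Amusingly, the ``slightly more algebraic alternative'' you sketch in your last paragraph \emph{is} the paper's proof.
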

\begin{proof}
	Suppose, for the sake of contradiction, that $ \{p_1^{P_1}, \dots, p_{i+1}^{P_{i+1}}\} $ were unsatisfiable for all $ P_{i+1} \in \{<, >\} $. Since \[
	\{p_1^{P_1}, \dots, p_i^{P_i}\} \models \top \equiv p_{i+1}^< \vee p_{i+1}^= \vee p_{i+1}^>
	\] it follows that $ \{p_1^{P_1}, \dots, p_i^{P_i}\} \models p_{i+1}^= $. Let $ v + W $ be the affine vectorspace of solutions to $ p_{i+1}^= $ such that \begin{align}
		\label{eqn:v_entails_p1p1uptopipi_entails_piplusoneeq}
		v \models \{p_1^{P_1}, \dots, p_i^{P_i}\} \models p_{i+1}^=
	\end{align} Clearly, due to $ p_{i+1}^= \not\equiv \top $, we have $ \dim(W) = d - 1 $. Hence, \[
	\dim(\Q^d / W) = d - (d - 1) = 1
	\] Since $ \Q^d / W \cong W^\bot $, it follows that $ \dim(W^\bot) = 1 $ and hence there exists a vector $ 0 \neq w \in W^\bot $ satisfying $ W \le \gen{w}^\bot $. Expressing this in terms of $ \LinDep_{\id, v} $ yields \begin{align}
		\label{eqn:gen0_neq_genw_in_lindep_idv_piplusoneeq}
		\gen{0} \neq \gen{w} \in \LinDep_{\id, v}(\{p_{i+1}^=\})
	\end{align} The entailment in (\ref{eqn:v_entails_p1p1uptopipi_entails_piplusoneeq}) immediately implies that \begin{align}
		\label{eqn:lindep_idv_piplusoneeq_subset_lindep_idv_p1p1uptopipi}
		\LinDep_{\id, v}(\{p_{i+1}^=\}) \subseteq \LinDep_{\id, v}(\{p_1^{P_1}, \dots, p_i^{P_i}\})
	\end{align} Since $ P_i $ is not an equality predicate for every $ i $, by Theorem~\ref{thm:only_equality_predicates_can_establish_lindep_strong} (see Appendix~\ref{sec:app:lindep_facts}) it follows that \[
	\LinDep_{\id, v}(\{p_1^{P_1}, \dots, p_i^{P_i}\}) \overset{\ref{thm:only_equality_predicates_can_establish_lindep_strong}}{=} \LinDep_{\id, v}(\varnothing) = \{\gen{0}\}
	\] Combining this with (\ref{eqn:lindep_idv_piplusoneeq_subset_lindep_idv_p1p1uptopipi}) yields \[
	\LinDep_{\id, v}(\{p_{i+1}^=\}) = \{\gen{0}\}
	\] This is a contradiction to (\ref{eqn:gen0_neq_genw_in_lindep_idv_piplusoneeq}).
\end{proof}

\subsubsection{Algorithm for computing $ D $ efficiently}

\begin{algorithm}
	\caption{A procedure for computing the set $ D $ at Line~\ref{alg:cover:line:d_set} of the covering algorithm (Algorithm~\ref{alg:cover})}
	\label{alg:compute_d}
	\SetAlgoLined
	\KwIn{A formula $ \varphi $, a partition $ \Pi $ and predicate sets $ \Upsilon $, $ N $}
	\KwOut{A set $ D $ of predicate sets}
	\SetKwFunction{computed}{compute\_D}
	\SetKwProg{computedproc}{Function}{}{end}
	\computedproc{\computed{$ \Upsilon $, $ N $}}{
		\nl \If{$ \Upsilon = \varnothing $}{
			\tcp{We have reached a leaf of the decision tree $ T $}
			\nl \Return $ \{N\} $\;
		}
		\nl $ D := \varnothing $\;
		\nl Pick $ (t_1 \neq t_2) \in \Upsilon $\;
		\If{$ N \cup \{t_1 < t_2\} $ is satisfiable}{
			\tcp{The left child of $ N $ in the decision tree $ T $ is green, so we need to traverse the left subtree}
			\nl $ D := D \cup \computed(\Upsilon \setminus \{t_1 \neq t_2\}, N \cup \{t_1 < t_2\}) $\;
		}
		\If{$ N \cup \{t_1 > t_2\} $ is satisfiable}{
			\tcp{The right child of $ N $ in the decision tree $ T $ is green, so we need to traverse the right subtree}
			\nl $ D := D \cup \computed(\Upsilon \setminus \{t_1 \neq t_2\}, N \cup \{t_1 > t_2\}) $\;
		}
		\nl \Return $ D $\;
	}
\end{algorithm}

In Algorithm~\ref{alg:compute_d}, we provide the pseudocode explaining how to compute $ D $ via the DFS traversal of the tree $ T $.  Intuitively, the algorithm uses the $ N $ set to track the green node currently being visited; visiting a child of the current node $ N $ corresponds to removing an inequality from $ \Upsilon $ and adding one to $ N $; once $ N $ contains $ n $ predicates and $ \Upsilon $ becomes empty, this means that we have reached a leaf. The comments in the pseudocode connect the implementation with the ideas explained in Section~\ref{sec:vardec:exponential_upper_bound:towards_algo_for_computing_d}. The $ \computed $ procedure outputting the desired predicate set $ D $ can easily be integrated into the covering algorithm by replacing Line~\ref{alg:cover:line:d_set} with \[
	D := \computed(\Upsilon, \varnothing)
\] Clearly, the time complexity of $ \computed $ without recursion is $ O(2^{\poly(m)}) $ because checking the satisfiability of a conjunction of strict inequalities can be done in time polynomial in the size of the conjunction, which in this case is $ n \in O(2^{\poly(m)}) $. Hence, substituting a suitable bound $ f(m) \in O(2^{\poly(m)}) $ into (\ref{eqn:generic_complexity_of_dfs_algo_computing_d}) yields that the overall running time of the $ \computed $ algorithm is \[
	O(2^{\poly(m)} \cdot 2^{\poly(m)}) = O(2^{\poly(m)})
\] We conclude that the set $ D $ at Line~\ref{alg:cover:line:d_set} of the covering algorithm can be computed in time exponential in the size of $ \varphi $.

\subsubsection{Putting everything together}

Note that, as we have already shown in Section~\ref{sec:vardec:cover:termination_time_complexity} above, the covering algorithm makes $ O(2^{\poly(m)}) $ recursive calls. We have just shown that the time complexity of the covering algorithm without recursion is exponential, so we conclude that the overall running time of the covering algorithm is also exponential in the size of $ \varphi $:

\begin{theorem}
	\label{thm:refined_cover_alg_correctness}
	For a formula $ \varphi \in \QFLRA $, a binary partition $ \Pi $ and a predicate set $ \Gamma \in \DisjPhi $, the covering algorithm with the $ \computed $ procedure integrated at Line~\ref{alg:cover:line:d_set} runs in time exponential in the size of $ \varphi $ and any formula $ \psi := \cover(\Pi, \Gamma) $ produced by it correctly solves the covering problem.
\end{theorem}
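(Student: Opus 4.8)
The plan is to prove Theorem~\ref{thm:refined_cover_alg_correctness} by combining two facts that have already been established: the correctness of the covering algorithm (Theorem~\ref{thm:cover_alg_correctness}), which does not depend on \emph{how} the set $D$ at Line~\ref{alg:cover:line:d_set} is computed but only on $D$ satisfying $\Upsilon \equiv \bigvee_{\Upsilon' \in D} \Upsilon'$, and the analysis of the preceding subsections showing that replacing the naive computation of $D$ by the $\computed$ procedure of Algorithm~\ref{alg:compute_d} preserves this defining property while running in exponential time. So the proof naturally splits into a \emph{correctness} part and a \emph{complexity} part.

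For correctness, I would first observe that the only change from the original covering algorithm (Algorithm~\ref{alg:cover}) is at Line~\ref{alg:cover:line:d_set}, where $D$ is now computed as $\computed(\Upsilon, \varnothing)$. Hence it suffices to verify that $\computed(\Upsilon, \varnothing)$ returns a set $D$ with $\Upsilon \equiv \bigvee_{\Upsilon' \in D} \Upsilon'$. The argument is that $\computed$ performs a depth-first traversal of the decision tree $T$ (as defined in Section~\ref{sec:vardec:exponential_upper_bound:towards_algo_for_computing_d}), descending into a child only when the corresponding partial predicate set $N \cup \{t_1 < t_2\}$ (resp.\ $N \cup \{t_1 > t_2\}$) is satisfiable. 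By the ``No green dead ends are possible'' claim, every satisfiable partial node lies on a path to a satisfiable leaf, and by the downward-closure of unsatisfiability, every path reaching a satisfiable leaf passes only through satisfiable nodes. Therefore the leaves collected by $\computed$ are exactly the satisfiable leaves of $T$, i.e.\ exactly the set $D$ of~(\ref{eqn:d_set_compressed_def}); and this set satisfies $\Upsilon \equiv \bigvee_{\Upsilon' \in D} \Upsilon'$ because the unsatisfiable leaves we discard contribute nothing to the disjunction $\bigvee_{\Upsilon' \in S} \Upsilon' \equiv \top \wedge \Upsilon \equiv \Upsilon$. Plugging this $D$ back into Theorem~\ref{thm:cover_alg_correctness} gives that the output $\psi := \cover(\Pi, \Gamma)$ still correctly solves the covering problem.

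For the complexity part, I would assemble the bounds already derived. First, the number of recursive calls to $\cover$ is $O(2^{\poly(m)})$ by the recursion-depth-$n$ argument of Section~\ref{sec:vardec:cover:termination_time_complexity}. Second, all linear-algebraic steps in a single (non-recursive) invocation run in polynomial time since the ambient vectorspaces have dimension at most $n = |X| + |Y|$, and the second loop at Line~\ref{alg:cover:line:second_loop} makes at most exponentially many iterations. Third — the crux — computing $D$ via $\computed$ takes $O(2^{\poly(m)})$ time rather than the double-exponential time of the naive enumeration: this follows from the bound $|D| \in O(2^{\poly(m)})$ of~(\ref{eqn:d_size_in_o_two_pow_polym}), the fact that no green dead ends forces the DFS to visit at most $n \in O(2^{\poly(m)})$ non-leaf nodes between consecutive leaves, and the fact that each node visit costs a satisfiability check on a conjunction of at most $n$ strict inequalities, which is a polynomial-size linear program and hence decidable in $\poly(n) \subseteq O(2^{\poly(m)})$ time; substituting $f(m) \in O(2^{\poly(m)})$ into~(\ref{eqn:generic_complexity_of_dfs_algo_computing_d}) yields $O(2^{\poly(m)})$ overall for computing $D$. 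Multiplying the per-call cost $O(2^{\poly(m)})$ by the $O(2^{\poly(m)})$ recursive calls gives the claimed exponential running time for the whole algorithm.

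The main obstacle is really concentrated in the $D$-computation analysis, and specifically in the ``no green dead ends'' property — everything else is bookkeeping over bounds that are already in hand. That property is what rules out backtracking and guarantees the DFS never wanders more than $n$ steps before committing to an output element; without it the traversal could spend double-exponential time exploring a subtree with no satisfiable leaf. Since that claim and its proof (via $\LinDep_{\id, v}$ and Theorem~\ref{thm:only_equality_predicates_can_establish_lindep_strong}) are already supplied in the excerpt, the remaining work for Theorem~\ref{thm:refined_cover_alg_correctness} is to cleanly cite it together with Theorem~\ref{thm:cover_alg_correctness} and the size/time bounds~(\ref{eqn:d_size_in_o_two_pow_polym}) and~(\ref{eqn:generic_complexity_of_dfs_algo_computing_d}), and to note that the correctness argument of Theorem~\ref{thm:cover_alg_correctness} is insensitive to the particular algorithm used to produce a valid $D$.
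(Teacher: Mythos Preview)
Your proposal is correct and follows essentially the same approach as the paper: the paper does not give a separate proof environment for this theorem but states it as the conclusion of the preceding subsections, noting that the covering algorithm makes $O(2^{\poly(m)})$ recursive calls (Section~\ref{sec:vardec:cover:termination_time_complexity}) and that the per-call cost, dominated by the new computation of $D$, is now $O(2^{\poly(m)})$ thanks to (\ref{eqn:d_size_in_o_two_pow_polym}), (\ref{eqn:generic_complexity_of_dfs_algo_computing_d}) and the ``no green dead ends'' claim, with correctness inherited from Theorem~\ref{thm:cover_alg_correctness}. One small remark: in your correctness paragraph, the ``no green dead ends'' claim is not actually needed---that $\computed(\Upsilon,\varnothing)$ returns exactly the satisfiable leaves follows already from the upward-closure of satisfiability (every ancestor of a green leaf is green) together with the fact that the algorithm recurses into \emph{every} satisfiable child; the ``no green dead ends'' property is used only in the complexity analysis.
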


Combining Theorem~\ref{thm:refined_cover_alg_correctness} with Theorem~\ref{thm:reduction} reducing the variable decomposition problem to the covering problem yields the following result.

\begin{theorem}
	\label{thm:vardec_exponential}
	Over quantifier-free linear real arithmetic, the variable decomposition problem (for binary partitions) can be solved in time exponential in the size of the given formula $ \varphi $. In particular, if $ \Pi $ is a binary partition, then it follows that any $ \Pi $-decomposable formula $ \varphi \in \QFLRA $ has a $ \Pi $-decomposition in DNF of exponential size.
\end{theorem}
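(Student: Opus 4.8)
The plan is to mirror the proof of Theorem~\ref{thm:vardec_double_exponential} verbatim, but with two changes: (i) replace the double-exponential covering procedure by its refined exponential-time version from Theorem~\ref{thm:refined_cover_alg_correctness}, and (ii) redo the running-time analysis of the final entailment test so that it, too, stays within exponential time. Concretely, I would solve the variable decomposition problem via the reduction of Theorem~\ref{thm:reduction}: for each $\Gamma \in \Sat(\DisjPhi)$ compute $\psi_\Gamma := \cover(\Pi, \Gamma)$ with the refined covering algorithm, test whether $\psi_\Gamma \models \varphi$, and output that $\varphi$ is $\Pi$-decomposable (with witness $\bigvee_{\Gamma \in \Sat(\DisjPhi)} \psi_\Gamma$) exactly when all these entailments hold. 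Correctness is immediate from Theorems~\ref{thm:refined_cover_alg_correctness} and~\ref{thm:reduction}, so everything reduces to bounding the running time.

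For the time analysis, first observe that $\left|\Sat(\DisjPhi)\right| \le \left|\DisjTrue\right| = 3^{\left|\Pred(\varphi)\right|} \in O(2^{\poly(m)})$, where $m$ is the size of $\varphi$; hence only exponentially many coverings are computed, each in time $O(2^{\poly(m)})$ by Theorem~\ref{thm:refined_cover_alg_correctness}. It then remains to show that each test $\psi_\Gamma \models \varphi$ runs in $O(2^{\poly(m)})$ time. I would reuse the reformulation from the proof of Theorem~\ref{thm:vardec_double_exponential}: writing $\varphi \equiv C_1 \wedge \dots \wedge C_k$ in CNF and $\psi_\Gamma \equiv \psi_1 \vee \dots \vee \psi_l$ in the DNF produced by $\cover$, we have $\psi_\Gamma \models \varphi$ iff $\gamma_{i,j} := \psi_i \wedge \neg C_j$ is unsatisfiable for all $i \in \{1,\dots,l\}$, $j \in \{1,\dots,k\}$. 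The CNF conversion gives $k \in O(2^{\poly(m)})$; the key point is that, by Theorem~\ref{thm:refined_cover_alg_correctness}, $\psi_\Gamma$ now has size only $O(2^{\poly(m)})$, so $l \in O(2^{\poly(m)})$ and every $\psi_i$ has size $O(2^{\poly(m)})$ as well. Since $\neg C_j$ is a conjunction of literals, each $\gamma_{i,j}$ is a conjunction of LRA predicates of size $O(2^{\poly(m)})$, i.e., a linear program, whose (un)satisfiability is decidable in time polynomial in its size by Karmarkar's or Khachiyan's algorithm (with the strict-inequality adaptation used earlier). Multiplying out, the total cost of all entailment tests is $\left|\Sat(\DisjPhi)\right| \cdot l \cdot k \cdot O(2^{\poly(m)}) = O(2^{\poly(m)})$, so the overall algorithm runs in exponential time. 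The ``in particular'' clause then follows at once: the witness $\psi := \bigvee_{\Gamma \in \Sat(\DisjPhi)} \psi_\Gamma$ is a disjunction of exponentially many coverings, each already in DNF and of exponential size (its size is bounded by the exponential running time of $\cover$), so flattening the nested disjunction yields a $\Pi$-decomposition of $\varphi$ in DNF of exponential size.

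The main obstacle I anticipate is bookkeeping in the time analysis of the entailment test rather than any new idea: one must confirm that the exponential size bound on $\psi_\Gamma$ coming from Theorem~\ref{thm:refined_cover_alg_correctness} simultaneously bounds the number $l$ of DNF terms \emph{and} the size of each individual term $\psi_i$, and that converting $\varphi$ to CNF together with the subsequent linear-programming satisfiability checks does not smuggle in a second exponential. Once these size bounds are pinned down, the remainder is just a product of exponential quantities and hence exponential, exactly as in Theorem~\ref{thm:vardec_double_exponential} but one exponential lower.
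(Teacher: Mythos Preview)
Your proposal is correct and follows essentially the same approach as the paper: it reuses the reduction and entailment-test framework from Theorem~\ref{thm:vardec_double_exponential}, substituting the refined exponential-time covering algorithm of Theorem~\ref{thm:refined_cover_alg_correctness} and re-doing the size bookkeeping so that the product $l \cdot k \cdot O(2^{\poly(m)})$ collapses to a single exponential. If anything, your write-up is more explicit than the paper's own proof, which simply says ``analogous to the proof of Theorem~\ref{thm:vardec_double_exponential}'' and notes that the exponential size of $\psi_\Gamma$ is the only change needed.
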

\begin{proof}
	Analogous to the proof of Theorem~\ref{thm:vardec_double_exponential}. The only difference is that since the size of the computed covering $ \psi_\Gamma := \cover(\Pi, \Gamma) $ of $ \Gamma $ is exponential in the size of $ \varphi $, the satisfiability of $ \gamma_{i, j} $ can also be decided in $ O(2^{\poly(n)}) $ time, where $ n $ is the size of $ \varphi $. Hence, the time complexity of deciding whether $ \psi_\Gamma \models \varphi $ holds is \[
		l \cdot k \cdot O(2^{\poly(n)}) \subseteq O(2^{\poly(n)}) \cdot O(2^{\poly(n)}) \cdot O(2^{\poly(n)}) = O(2^{\poly(n)})
	\] (see the proof of Theorem~\ref{thm:vardec_double_exponential}). Consequently, the variable decomposition problem can be solved in $ O(2^{\poly(n)}) $ time.
\end{proof}

Note that the integration of the $ \computed $ procedure at Line~\ref{alg:cover:line:d_set} of the covering algorithm does not break any technical properties of the algorithm we have established previously. More precisely, Lemma~\ref{lemma:psi_properties}, Theorems \ref{thm:model_flooding}, \ref{thm:cover_alg_correctness}, \ref{thm:model_flooding_strong}, and Corollary~\ref{cor:minimality_of_dec_for_double_exponential_algo} hold also for the algorithm of Theorem~\ref{thm:vardec_exponential}. In particular, it follows that the algorithm of Theorem~\ref{thm:vardec_exponential} produces decompositions that are minimal within the meaning of Corollary~\ref{cor:minimality_of_dec_for_double_exponential_algo}.

\subsection{Towards a $ \coNP $ upper bound}
\label{sec:vardec:towards_conp_upper_bound}

Theorems \ref{thm:vardec_double_exponential} and \ref{thm:vardec_exponential} establish complexity bounds for the variable decomposition problem where, by definition, we additionally have to output a $ \Pi $-decomposition whenever the input formula $ \varphi $ is $ \Pi $-decomposable. In Sections \ref{sec:vardec:nondec_proof_system} and \ref{sec:vardec:proof_compression} below, we will see that computing decompositions is a bottleneck of the covering-based algorithm presented above, in the sense that Theorem~\ref{thm:vardec_exponential} can be further strengthened if we are interested only in deciding variable decomposability. Concretely, in the remainder of \zbrefsec{sec:vardec} we aim to prove that, surprisingly, the problem of deciding variable decomposability is in $ \coNP $. We achieve this complexity result in two high-level steps:

% However, surprisingly, the covering algorithm can be used to construct a co-nondeterministic polynomial time algorithm for answering a decomposability questions, which is significantly faster compared to the double-exponential time algorithms of Theorem~\ref{thm:vardec_double_exponential} and \cite{libkin:2003}.

\begin{itemize}
	\item First, in Section~\ref{sec:vardec:nondec_proof_system} below, we construct a sound and complete proof system for establishing non-$ \Pi $-decomposability, where proofs are encodable using exponentially many bits and verifiable in time polynomial in the size of the alleged proof. That is, we describe a witness for the fact that a formula $ \varphi $ is not $ \Pi $-decomposable and show that such a witness exists if and only if the formula is indeed not $ \Pi $-decomposable.
	\item Due to the exponential length of the proofs, from the previous step we get only a $ \coNEXP $ upper bound for variable decomposability (which is not so interesting, given that a much stronger $ \EXP $ upper bound already follows from Theorem~\ref{thm:vardec_exponential}). In order to strengthen this result, in Section~\ref{sec:vardec:proof_compression} below, we show how to compress the proofs of non-$ \Pi $-decomposability so that they become of polynomial size. Consequently, we obtain the desired $ \coNP $ upper bound. The central idea behind our compression technique is to construct for an appropriate predicate set $ \Lambda $ of exponential size, another predicate set $ \Lambda' $ that is of polynomial size and which preserves all necessary properties. The main property to be guaranteed is the entailment $ \Lambda' \models \Lambda $. Overall, our compression technique is completely novel (to the best of our knowledge) and may be of independent interest.
\end{itemize}

\newpage

\subsection{Proving non-decomposability}
\label{sec:vardec:nondec_proof_system}

\subsubsection{Overspilling-based proof system}

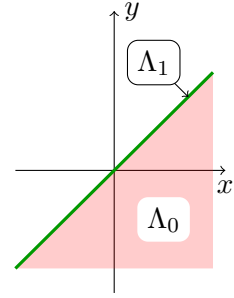
\begin{wrapfigure}{r}{0.25\textwidth}
	\centering
	\if\zbthesismode1
	\vspace{-15pt}
	\else
	\vspace{-20pt}
	\fi
	\begin{tikzpicture}[scale=0.65]	
	% axes
	\draw[->] (-2,0) -- (2.25,0) coordinate[label = {below:$x$}] (xmax);
	\draw[->] (0,-2.5) -- (0,3.25) coordinate[label = {right:$y$}] (ymax);
	
	\coordinate (bottoml) at (-2, -2);
	\coordinate (bottomr) at (2, -2);
	\coordinate (topr) at (2, 2);
	
	% Lambda_0
	\fill[fill=red,opacity=0.2] (bottoml) -- (bottomr) -- (topr) -- cycle;
	\node[fill=white, rounded corners] (gamma) at (1, -1) {$ \Lambda_0 $};
	
	% Lambda_1
	\draw[very thick, OliveGreen] (bottoml) -- (topr);
	
	\coordinate (lambda1labelendpoint) at (1.5, 1.5);
	\coordinate (lambda1label) at ($ (lambda1labelendpoint) + (-0.7, 0.7) $);
	\draw[->] (lambda1label) -- (lambda1labelendpoint);
	\draw (lambda1label) node [rounded corners, draw=black, fill=white] {$ \Lambda_1 $};
\end{tikzpicture}
	\caption{Illustration of the sets defined by the $ \Lambda_0 $ and $ \Lambda_1 $ predicate sets we use to argue that $ \varphi $ is not monadically decomposable. $ \Lambda_1 $ is colored green to illustrate that $ \Lambda_1 \models \varphi $, while the red color depicts the $ \Lambda_0 \models \neg\varphi $ entailment.}
	\label{fig:xeqy_nondec_proof}
	\if\zbthesismode1
	\vspace{-10pt}
	\else
	\vspace{-20pt}
	\fi
\end{wrapfigure}

We start by explaining the intuitions leading to the proof system. Consider the formula $ \varphi := x = y $. It is easy to see that $ \varphi $ is not monadically decomposable, but how can we capture this rigorously? In this case it is trivial that for any distinct $ a, b \in \Q $ and any $ \{\{x\}, \{y\}\} $-respecting conjunction of literals $ \psi $ having models $ (a, a)^\transp \models \psi $ and $ (b, b)^\transp \models \psi $, $ \psi $ must also have some model (point) it is actually not allowed to have, such as $ (a, b)^\transp \models \psi $. Thus, assuming towards a contradiction the existence of a monadic decomposition (without loss of generality) brought to DNF implies that any DNF term can express only at most one point on the line defined by $ x = y $.
This line obviously contains infinitely many points, so we get a contradiction to the correctness of the allegedly existing monadic decomposition. We now use this example to show that the Overspilling Theorem~\ref{thm:overspilling} can be thought of as providing a framework for reasoning similar to (but significantly stronger than) this argumentation. Clearly, $ \Lambda_1 := \{x = y\} $ is $ (x=y) $-next to $ \Lambda_0 := \{x > y\} $, those predicate sets are both $ \Pi := \{\{x\}, \{y\}\} $-complex and \begin{align*}
	\LinDep_{\pi_X}(\Lambda_1) &= \{\gen{0}\} = \LinDep_{\pi_X}(\Lambda_0) \\
	\LinDep_{\pi_Y}(\Lambda_1) &= \{\gen{0}\} = \LinDep_{\pi_Y}(\Lambda_0)
\end{align*} holds where $ X := \{x\} $ and $ Y := \{y\} $ (see Figure~\ref{fig:xeqy_nondec_proof}). Hence, $ \varphi $ is not $ \Pi $-decomposable because assuming the contrary would imply by the Overspilling Theorem~\ref{thm:overspilling} the satisfiability of \[
	\varphi \wedge \Lambda_0 \equiv x = y \wedge x > y
\] which is a contradiction.

We now generalize this reasoning with the goal of constructing an encodable and verifiable proof of the fact that some formula is not $ \Pi $-decomposable. Observe that the above argumentation depends only on the two predicate sets $ \Lambda_0 $ and $ \Lambda_1 $. It is trivial to algorithmically check that $ \Lambda_1 $ is $ p $-next to $ \Lambda_0 $ on some $ p \in \Lambda_1^= $ and, as we have already discussed in Section~\ref{sec:vardec:pisimple_picomplex}, it can be tested in polynomial time whether a predicate set (in this case $ \Lambda_0 $ and $ \Lambda_1 $) is $ \Pi $-complex. We now argue that it is also possible to efficiently verify that \begin{align}
	\label{eqn:lindep_piz_lambda0_eq_lindep_piz_lambda1}
	\LinDep_{\pi_Z}(\Lambda_0) = \LinDep_{\pi_Z}(\Lambda_1)
\end{align} holds for all $ Z \in \Pi $. Let $ A := (a_1, \dots, a_l) $ and $ B := (b_1, \dots, b_r) $ be bases such that \begin{align*}
	\ModelsOf(\Lambda_0^=) &= a + \gen{a_1, \dots, a_l} \\
	\ModelsOf(\Lambda_1^=) &= b + \gen{b_1, \dots, b_r}
\end{align*} holds for some affine offsets $ a, b \in \Q^n $. In other words, $ B_0 $ and $ B_1 $ are bases of the affine vectorspaces corresponding to $ \Lambda_0 $ and $ \Lambda_1 $, respectively. These bases can be easily computed in polynomial time via Gaussian elimination. By Theorem~\ref{thm:lindep_algebraic_char}, checking (\ref{eqn:lindep_piz_lambda0_eq_lindep_piz_lambda1}) reduces to verifying whether \begin{align}
	\label{eqn:im_piz_lca_eq_im_piz_lcb}
	\ImageOf(\pi_Z \circ \lc_A) = \ImageOf(\pi_Z \circ \lc_B)
\end{align} holds for all $ Z \in \Pi $. As we have already seen in the discussion of the covering algorithm in Section~\ref{sec:vardec:cover}, \begin{align*}
	\ImageOf(\pi_Z \circ \lc_A) &= \ImageOf\big(\Pi_Z \cdot a_1 \mid \dots \mid \Pi_Z \cdot a_l\big) \\
	\ImageOf(\pi_Z \circ \lc_B) &= \ImageOf\big(\Pi_Z \cdot b_1 \mid \dots \mid \Pi_Z \cdot b_r\big)
\end{align*} so (\ref{eqn:im_piz_lca_eq_im_piz_lcb}) reduces to checking \begin{align}
	\label{eqn:im_piz_as_eq_im_piz_bs}
	\ImageOf\big(\Pi_Z \cdot a_1 \mid \dots \mid \Pi_Z \cdot a_l\big) = \ImageOf\big(\Pi_Z \cdot b_1 \mid \dots \mid \Pi_Z \cdot b_r\big)
\end{align} Indeed, (\ref{eqn:im_piz_as_eq_im_piz_bs}) holds if and only if every $ \Pi_Z \cdot a_i $ is expressible as a linear combination of $ \Pi_Z \cdot b_1, \dots, \Pi_Z \cdot b_r $, and vice versa, every $ \Pi_Z \cdot b_j $ is some linear combination of $ \Pi_Z \cdot a_1, \dots, \Pi_Z \cdot a_l $. It is trivial that this can be decided via Gaussian elimination in polynomial time.

Now observe that if $ \Lambda_0, \Lambda_1 \in \DisjTrue $ are disjuncts of a formula $ \varphi $ as in the running example, then it is easy to check whether $ \Lambda_i $ entails $ \varphi $ or $ \neg\varphi $ by computing a model $ v_i \models \Lambda_i $ and then checking whether $ v_i \models \varphi $. If $ v_i \models \varphi $, then $ \Lambda_i \models \varphi $ by Lemma~\ref{lemma:every_disj_either_true_or_false}, otherwise $ v_i \models \neg\varphi $ meaning that (again by Lemma~\ref{lemma:every_disj_either_true_or_false}) $ \Lambda_i \models \neg\varphi $. Thus, it can be checked in polynomial time whether the Overspilling Theorem~\ref{thm:overspilling} indeed yields a contradiction assuming that $ \varphi $ is $ \Pi $-decomposable.

We conclude that it can be verified in polynomial time whether certain $ \Lambda_0, \Lambda_1 \in \DisjTrue $ witness the non-$ \Pi $-decomposability of $ \varphi $ in the sense that applying the above reasoning to $ \Lambda_0 $ and $ \Lambda_1 $ indeed yields the conclusion that $ \varphi $ is not $ \Pi $-decomposable due to the Overspilling Theorem~\ref{thm:overspilling}. Whenever this is the case, we call $ (\Lambda_0, \Lambda_1) $ an \textit{overspilling proof} of non-$ \Pi $-decomposability. Note that overspilling proofs are sound in the sense that the existence of such a proof indeed implies by the Overspilling Theorem~\ref{thm:overspilling} that $ \varphi $ is not $ \Pi $-decomposable. However, in order to achieve the desired $ \coNP $ upper bound for variable decomposability, we need to construct a sound and complete proof system for non-decomposability, where proofs are encodable using polynomially many bits and verifiable in polynomial time. We have already observed soundness and the polynomial size of the proofs and time to verify them. Hence, the key question to ask oneself at this point is whether the proof system we just constructed is complete. In other words, does there always exist an overspilling proof of non-$ \Pi $-decomposability of a formula $ \varphi $ if $ \varphi $ is not $ \Pi $-decomposable?

Unfortunately, the answer turns out to be negative. For example, consider the formula \[
	\varphi := x = y \wedge 2x = 2y
\] and the partition $ \Pi := \{\{x\}, \{y\}\} $. Clearly, $ \varphi \equiv x = y $ is not $ \Pi $-decomposable. The only satisfiable disjuncts in $ \DisjTrue $ are \[
	\Sat(\DisjTrue) = \{\{x < y, 2x < 2y\}, \{x = y, 2x = 2y\}, \{x > y, 2x > 2y\}\}
\] but all of them are pairwise not $ p $-next to each other for every $ p $. Hence, the Overspilling Theorem~\ref{thm:overspilling} is simply not applicable to any pair of disjuncts from $ \Sat(\DisjTrue) $. As a result, $ \varphi $ has no overspilling proof of non-$ \Pi $-decomposability.

\subsubsection{A better proof system}

In the present example, we immediately see that the above problem arises due to the fact that the $ 2x > 2y $ (resp. $ 2x = 2y $, $ 2x < 2y $) predicate is redundant in the sense that it is implied by $ x > y $ (resp. $ x = y $, $ x < y $). This leads us to the idea of circumventing this problem by adjusting overspilling proofs to no longer require $ \Lambda_0 $ and $ \Lambda_1 $ to be disjuncts and defining $ \Lambda_0' $ and $ \Lambda_1' $ so that these predicate sets are disjuncts equivalent to $ \Lambda_0 $ and $ \Lambda_1 $, respectively. In other words, the idea is to defer the task of being disjuncts to separate predicate sets $ \Lambda_0' $ and $ \Lambda_1' $, whose goal is to ensure the possibility of verifying in polynomial time whether $ \Lambda_i' \models \varphi $ or $ \Lambda_i' \models \neg\varphi $ holds by computing just a single model and applying Lemma~\ref{lemma:every_disj_either_true_or_false}. More precisely, we define \begin{align*}
	\Lambda_0' &:= \{x > y, 2x > 2y\} \in \Sat(\DisjTrue) \\
	\Lambda_1' &:= \{x = y, 2x = 2y\} \in \Sat(\DisjTrue) \\
	\Lambda_0 &:= \{x > y\} \\
	\Lambda_1 &:= \{x = y\}
\end{align*} and argue that $ \varphi $ is not $ \Pi $-decomposable as follows. Note that $ \Lambda_0 $ and $ \Lambda_1 $ are both $ \Pi $-complex, $ \Lambda_1 $ is $ (x=y) $-next to $ \Lambda_0 $ and, just as in the $ x=y $ example above, $ \Lambda_0 $ has the same $ Z $-dependencies as $ \Lambda_1 $ for all $ Z \in \Pi $. Towards a contradiction, assume that $ \varphi $ were $ \Pi $-decomposable. Since $ \Lambda_1' \in \DisjPhi $ and $ \Lambda_1 \equiv \Lambda_1' $, it follows that $ \Lambda_1 \models \varphi $. Hence, the Overspilling Theorem~\ref{thm:overspilling} yields that $ \varphi \wedge \Lambda_0 $ is satisfiable. Since $ \Lambda_0 \equiv \Lambda_0' $, $ \varphi \wedge \Lambda_0' $ is satisfiable. However, by computing any model $ v \models \Lambda_0' $ (e.g., $ v = (2, 1)^\transp $), checking whether $ v \models \neg\varphi $ holds and applying Lemma~\ref{lemma:every_disj_either_true_or_false}, we conclude that $ \Lambda_0' \models \neg\varphi $. This is a contradiction. Generalizing this reasoning to arbitrary formulas leads us to the following notion of a $ \Lambda $-proof of non-$ \Pi $-decomposability, which rigorously captures all the information needed to apply the above argumentation, but additionally allows the $ \Lambda_0 $, $ \Lambda_1 $, $ \Lambda_0' $, $ \Lambda_1' $ sets to have further predicates from a certain fixed set $ \Lambda $, which is a convenient feature that we will use shortly.

\begin{zbdefinition}[$ \Lambda $-proof of non-$ \Pi $-decomposability]
	Let $ \varphi \in \QFLRA $ be a formula, $ \Pi $ be a partition, $ \Lambda $ be a predicate set and $ \Lambda_0', \Lambda_1' \in \DisjOf{\Lambda} $. Let furthermore $ \Lambda_0 $ and $ \Lambda_1 $ be predicate sets such that \begin{itemize}
		\item $ \Lambda_0' \equiv \Lambda_0 \subseteq \Lambda_0' $ and $ \Lambda_1' \equiv \Lambda_1 \subseteq \Lambda_1' $
		\item $ \Lambda_0 $ and $ \Lambda_1 $ are both $ \Pi $-complex
		\item For all $ Z \in \Pi $ it holds that $ \LinDep_{\pi_Z}(\Lambda_0) = \LinDep_{\pi_Z}(\Lambda_1) $
		\item $ \Lambda_1 $ is $ p $-next to $ \Lambda_0 $ on some predicate $ p \in \Lambda_1^= $
		\item $ \Lambda_i' \models \varphi $ and $ \Lambda_{1-i}' \models \neg\varphi $ for some $ i \in \{0, 1\} $
	\end{itemize} We call the tuple $ (\Lambda_0, \Lambda_1, \Lambda_0', \Lambda_1') $ a $ \Lambda $-proof of non-$ \Pi $-decomposability of $ \varphi $.
\end{zbdefinition}

Observe that the above argumentation showing that $ \varphi $ is not $ \Pi $-decomposable immediately generalizes to showing that the existence of a $ \Lambda $-proof implies that $ \varphi $ is indeed not $ \Pi $-decomposable. Hence, $ \Lambda $-proofs are sound. \begin{algorithm}
	\caption{A verifier for $ \Lambda $-proofs}
	\label{alg:lambda_proof_verifier}
	\SetAlgoLined
	\KwIn{A formula $ \varphi $, a partition $ \Pi $ and predicate sets $ \Lambda_0, \Lambda_1, \Lambda_0', \Lambda_1' $}
	\KwOut{``$ \Accept $'' if $ (\Lambda_0, \Lambda_1, \Lambda_0', \Lambda_1') $ is a $ \Lambda $-proof of non-$ \Pi $-decomposability of $ \varphi $ and ``$ \Reject $'' otherwise}
	\SetKwFunction{lambdaproofverifier}{verify\_lambda\_proof}
	\SetKwProg{lambdaproofverifierproc}{Function}{}{end}
	\lambdaproofverifierproc{\lambdaproofverifier{$ \Lambda_0 $, $ \Lambda_1 $, $ \Lambda_0' $, $ \Lambda_1' $}}{
		\tcp{Verify that $ \Lambda_0' $ and $ \Lambda_1' $ are indeed disjuncts of $ \Lambda $}
		\nl \If{for some $ i \in \{0, 1\} $, there exists $ p \in \Pred(\varphi) $ such that $ \{p^<, p^=, p^>\} \cap \Lambda_i' = \varnothing $}{
			\nl \Return \Reject\;
		}
		\tcp{Verify that $ \Lambda_0' \equiv \Lambda_0 \subseteq \Lambda_0' $ and $ \Lambda_1' \equiv \Lambda_1 \subseteq \Lambda_1' $}
		\nl \If{$ \Lambda_0 \subsetneq \Lambda_0' $ or $ \Lambda_1 \subsetneq \Lambda_1' $}{
			\nl \Return \Reject\;
		}
		\nl\label{alg:lambda_proof_verifier:line:equiv_check}\If{$ \Lambda_0 \not\equiv \Lambda_0' $ or $ \Lambda_1 \not\equiv \Lambda_1' $}{
			\nl \Return \Reject\;
		}
		\tcp{Verify that $ \Lambda_0 $ and $ \Lambda_1 $ fulfill the requirements of the Overspilling Theorem~\ref{thm:overspilling}}
		\nl \If{$ \Lambda_1 $ is not $ p $-next to $ \Lambda_0 $ on some predicate $ p \in \Lambda_1^= $}{
			\nl \Return \Reject\;
		}
		\nl \If{$ \Lambda_0 $ or $ \Lambda_1 $ is $ \Pi $-simple}{
			\nl \Return \Reject\;
		}
		\tcp{Solve the systems of linear equations $ \Lambda_0^= $ and $ \Lambda_1^= $, in order to be able to analyze the $ Z $-dependencies of $ \Lambda_0 $ and $ \Lambda_1 $}
		\nl Compute $ a \in \Q^n $ and a basis $ A := (a_1, \dots, a_l) $ such that $ \ModelsOf(\Lambda_0^=) = a + \gen{a_1, \dots, a_l} $\;
		\nl Compute $ b \in \Q^n $ and a basis $ B := (a_1, \dots, a_l) $ such that $ \ModelsOf(\Lambda_1^=) = b + \gen{b_1, \dots, b_r} $\;
		\tcp{Reject if $ \LinDep_{\pi_Z}(\Lambda_0) \neq \LinDep_{\pi_Z}(\Lambda_1) $}
		\nl\label{alg:lambda_proof_verifier:line:lindep_check}\If{for some $ Z \in \Pi $, $ \ImageOf(\Pi_Z(a_1) \mid \dots \mid \Pi_Z(a_l)) \neq \ImageOf(\Pi_Z(b_1) \mid \dots \mid \Pi_Z(b_r)) $}{
			\nl \Return \Reject\;
		}
		\tcp{Check whether $ \Lambda_i' \models \varphi $ and $ \Lambda_{1-i}' \models \neg\varphi $ holds for some $ i \in \{0, 1\} $}
		\nl Compute models $ v_0 \models \Lambda_0 $ and $ v_1 \models \Lambda_1 $\;
		\nl\label{alg:lambda_proof_verifier:line:final_check}\eIf{for some $ i \in \{0, 1\} $, $ v_i \models \varphi $ and $ v_{1-i} \models \neg\varphi $}{
			\tcp{$ \Lambda_0 $ and $ \Lambda_1 $ indeed show that $ \varphi $ is not $ \Pi $-decomposable}
			\nl \Return \Accept;
		}{
			\nl \Return \Reject\;
		}
	}
\end{algorithm} In Algorithm~\ref{alg:lambda_proof_verifier}, we give the pseudocode explaining how to verify alleged $ \Lambda $-proofs. In essence, the verification procedure extends and refines the one for overspilling proofs discussed above. In particular, we have already proven the correctness of the $ \LinDep_{\pi_Z}(\Lambda_0) = \LinDep_{\pi_Z}(\Lambda_1) $ check at Line~\ref{alg:lambda_proof_verifier:line:lindep_check}. Furthermore, note that if the check at Line~\ref{alg:lambda_proof_verifier:line:final_check} succeeds, then it is guaranteed that $ \Lambda_i' \models \varphi $ and $ \Lambda_{1-i}' \models \neg\varphi $. Indeed, this follows from the fact that \begin{align*}
	v_i &\models \Lambda_i \wedge \varphi \equiv \Lambda_i' \wedge \varphi \\
	v_{1-i} &\models \Lambda_{1-i} \wedge \neg\varphi \equiv \Lambda_{1-i}' \wedge \neg\varphi
\end{align*} which, by Lemma~\ref{lemma:every_disj_either_true_or_false}, is equivalent to saying that $ \Lambda_i' \models \varphi $ and $ \Lambda_{1-i}' \models \neg\varphi $, as it is stated in the definition of a $ \Lambda $-proof. More precisely, we can apply Lemma~\ref{lemma:every_disj_either_true_or_false} because $ \Lambda_0', \Lambda_1' \in \DisjOf{\Lambda} $ are disjuncts. The correctness of all remaining checks is trivial. As regards the running time of the verification procedure, it is easy to see that it is polynomial in the size of the alleged proof; the only part where this is not immediate is the \texttt{if} statement at Line~\ref{alg:lambda_proof_verifier:line:equiv_check}. This equivalence check can be implemented efficiently by testing (for $ i \in \{0, 1\} $) that $ \Lambda_i \models p $ for all $ p \in \Lambda_i' $ and vice versa, $ \Lambda_i' \models q $ for all $ q \in \Lambda_i $. Since checking these entailments is equivalent to testing the unsatisfiability of conjunctions $ \Lambda_i \wedge \neg p $ and $ \Lambda_i' \wedge \neg q $, respectively, we conclude that the check at Line~\ref{alg:lambda_proof_verifier:line:equiv_check} can be implemented to run in polynomial time.

\subsubsection{Soundness and completeness of $ \Lambda $-proofs}

Having shown that $ \Lambda $-proofs are sound and verifiable in polynomial time, we now ask ourselves whether this newly constructed proof system is complete. It turns out that, unlike in the case of overspilling proofs, any non-$ \Pi $-decomposable formula $ \varphi $ has a $ \Lambda $-proof of non-$ \Pi $-decomposability. We rigorously capture this fact in the following Theorem~\ref{thm:lambda_proof_soundness_completeness}.

\begin{theorem}[Soundness and Completeness]
	\label{thm:lambda_proof_soundness_completeness}
	Let $ \varphi \in \QFLRA $ be a formula, $ \Pi $ be a binary partition and $ \Psi_\Gamma $ be the set of predicate sets corresponding to the DNF terms of $ \psi_\Gamma := \cover(\Pi, \Gamma) $. Then $ \varphi $ is not $ \Pi $-decomposable if and only if there exists a $ \Lambda $-proof of non-$ \Pi $-decomposability of $ \varphi $ for some $ \Gamma \in \Sat(\DisjPhi) $ and $ \Lambda \in \Psi_\Gamma $.
\end{theorem}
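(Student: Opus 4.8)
The plan is to prove the two directions of the equivalence separately. The soundness direction (existence of a $\Lambda$-proof implies $\varphi$ is not $\Pi$-decomposable) has essentially already been established in the discussion preceding the theorem: a $\Lambda$-proof $(\Lambda_0,\Lambda_1,\Lambda_0',\Lambda_1')$ packages exactly the hypotheses of the Overspilling Theorem~\ref{thm:overspilling} together with the equivalences $\Lambda_i \equiv \Lambda_i'$ and the membership $\Lambda_0',\Lambda_1' \in \DisjOf{\Lambda}$. Assuming $\varphi$ were $\Pi$-decomposable and (say) $\Lambda_i' \models \varphi$, Lemma~\ref{lemma:every_disj_either_true_or_false} gives $\Lambda_i \models \varphi$, so the Overspilling Theorem yields that $\varphi \wedge \Lambda_{1-i}$ is satisfiable, hence $\varphi \wedge \Lambda_{1-i}'$ is satisfiable, contradicting $\Lambda_{1-i}' \models \neg\varphi$ (which follows again from Lemma~\ref{lemma:every_disj_either_true_or_false} since $\Lambda_{1-i}'$ is a disjunct of $\Lambda$). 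So I would dispatch this direction quickly by citing the already-given argument.

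The completeness direction is the substantive one, and here I would exploit the very specific form of $\Lambda$ promised in the statement, namely $\Lambda \in \Psi_\Gamma$ for some $\Gamma \in \Sat(\DisjPhi)$, i.e.\ $\Lambda$ is a DNF term of a covering $\psi_\Gamma = \cover(\Pi,\Gamma)$. The key observation is that if $\varphi$ is \emph{not} $\Pi$-decomposable, then by Theorem~\ref{thm:reduction} there must exist some $\Gamma \in \Sat(\DisjPhi)$ with $\psi_\Gamma \not\models \varphi$; since $\Gamma \models \psi_\Gamma$ and $\Gamma \models \varphi$, there is a DNF term $\Lambda \in \Psi_\Gamma$ with $\Lambda \not\models \varphi$, so $\Lambda$ is satisfiable in conjunction with $\neg\varphi$. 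On the other hand, Lemma~\ref{lemma:psi_properties}\ref{lemma:psi_properties:a} guarantees $\Gamma \wedge \Lambda$ is satisfiable, so $\Lambda$ is also satisfiable in conjunction with $\varphi$ (via $\Gamma$, which entails $\varphi$). Thus $\Lambda$ has at least one disjunct entailing $\varphi$ and at least one entailing $\neg\varphi$ (using Lemma~\ref{lemma:every_disj_either_true_or_false}). Now I would take a shortest possible ``path'' in $\DisjOf{\Lambda}$ — in the sense of the inductive argument in the proof of Theorem~\ref{thm:model_flooding}, where disjuncts are connected by changing one predicate symbol at a time — between a disjunct $\Lambda_0'$ entailing $\neg\varphi$ and one $\Lambda_1'$ entailing $\varphi$; along such a path the truth value of $\varphi$ flips at some single step, which (since it cannot flip between two disjuncts that are $p$-next on a strict-inequality predicate, by the model-flooding reasoning combined with the Overspilling Theorem) must occur at a step where the changed predicate is an equality. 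This produces two disjuncts $\Lambda_0', \Lambda_1' \in \DisjOf{\Lambda}$ that are $p$-next on an equality $p$, one entailing $\varphi$ and one $\neg\varphi$.

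The remaining task is then to extract from $\Lambda_0', \Lambda_1'$ the ``reduced'' sets $\Lambda_0 \subseteq \Lambda_0'$, $\Lambda_1 \subseteq \Lambda_1'$ satisfying the $\Lambda$-proof requirements: both $\Pi$-complex, $\Lambda_1$ is $p$-next to $\Lambda_0$ on an equality $p$, and $\LinDep_{\pi_Z}(\Lambda_0) = \LinDep_{\pi_Z}(\Lambda_1)$ for all $Z \in \Pi$. For the equal-$Z$-dependencies condition I would invoke Lemma~\ref{lemma:psi_properties}\ref{lemma:psi_properties:b}: since $\Lambda_0'$ and $\Lambda_1'$ are both disjuncts of $\Theta$ (the set computed in the call to $\cover$ that produced $\Lambda$) that agree on a model with $\Lambda$ — indeed with the specific term $\Lambda$ — they have the same $Z$-dependencies, and using Theorem~\ref{thm:only_equality_predicates_can_establish_lindep_strong} one can discard the strict-inequality predicates and keep the equality parts. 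For $\Pi$-complexity I would use Lemma~\ref{lemma:psi_properties}\ref{lemma:psi_properties:c} (disjuncts of $\Theta$ agreeing with $\Lambda$ are $\Pi$-complex) together with Lemma~\ref{lemma:adding_neq_predicates_cannot_make_pi_complex_set_pi_simple} (dropping strict inequalities from a $\Pi$-complex set, or keeping a subset that is still satisfiable and equivalent, preserves $\Pi$-complexity). The delicate point — and the one I expect to be the main obstacle — is arranging simultaneously that $\Lambda_0$ is $p$-next to $\Lambda_1$ on a \emph{single} equality predicate \emph{and} that $\Lambda_0 \equiv \Lambda_0'$, $\Lambda_1 \equiv \Lambda_1'$: one has to choose $\Lambda_0$ and $\Lambda_1$ to differ only in the symbol of the one equality $p$ while still capturing (up to equivalence) all of the ``geometric content'' of $\Lambda_0'$ and $\Lambda_1'$. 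Concretely I would set $\Lambda_1 := \Lambda_0 \cup \{p\}$ where $\Lambda_0$ is the equality-and-necessary-inequality core of $\Lambda_0'$ with the symbol of $p$ set to $<$ or $>$ as appropriate, and then verify the equivalences by a careful satisfiability/entailment argument using that $\Lambda_0', \Lambda_1'$ are genuine disjuncts (so their equality-parts already determine their affine hulls) — this is exactly the manoeuvre illustrated in the $\varphi := x=y \wedge 2x=2y$ example before the definition, and I would lean on that example as the template while making the general construction precise. Finally, I would double-check that the verifier (Algorithm~\ref{alg:lambda_proof_verifier}) accepts the tuple so constructed, which is immediate once all the defining conditions of a $\Lambda$-proof are met.
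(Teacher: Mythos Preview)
Your soundness direction is fine and matches the paper. The gap is in completeness, specifically in the ``shortest path'' step. You assume that in the graph on $\DisjOf{\Lambda}$ whose edges connect disjuncts differing in a single predicate symbol, there is a path from a $\varphi$-disjunct to a $\neg\varphi$-disjunct, and hence an adjacent pair $\Lambda_0',\Lambda_1'$ that are $p$-next and on which $\varphi$ flips. But this graph can be disconnected. The paper's own motivating example $\varphi=(x{=}y)\wedge(2x{=}2y)$ with $\Pi=\{\{x\},\{y\}\}$ already shows this: the covering algorithm produces $\Lambda=\top$, and $\DisjOf{\Lambda}=\{\{x{<}y,2x{<}2y\},\{x{=}y,2x{=}2y\},\{x{>}y,2x{>}2y\}\}$. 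Any two of these differ in \emph{two} symbols, and every single-symbol neighbour is unsatisfiable, so there is no $p$-next pair in $\DisjOf{\Lambda}$ at all. Your extraction of $\Lambda_0',\Lambda_1'$ therefore fails before you reach the reduction step, and allowing the path to pass through unsatisfiable nodes does not help, since the ``flip'' of truth value is then undefined.

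The paper does \emph{not} require $\Lambda_0'$ and $\Lambda_1'$ to be adjacent. It runs an induction on $m$ (Claim~\ref{claim:existence_of_lambda_proof_conditional}) that mirrors the model-flooding induction and, at the key step, uses the Convexity Lemma~\ref{lemma:convexity} to drive a greedy construction of three auxiliary sets $\Upsilon^m,\Upsilon_P^m,\Upsilon_=^m$ maintaining the invariants that $\Upsilon^m\wedge p_m^P$ and $\Upsilon^m\wedge p_m^=$ stay satisfiable and that $\Upsilon^m\wedge p_m^P\models\Upsilon_P^m$, $\Upsilon^m\wedge p_m^=\models\Upsilon_=^m$. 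The resulting $\Lambda_0':=\Upsilon^m\cup\Upsilon_P^m$ and $\Lambda_1':=\Upsilon^m\cup\Upsilon_=^m$ may differ in many predicates (precisely those handled by Case~2.2), while the subsets $\Lambda_0:=\Upsilon^m\cup\{p_m^P\}$ and $\Lambda_1:=\Upsilon^m\cup\{p_m^=\}$ differ in one and are equivalent to $\Lambda_0',\Lambda_1'$ by invariant~(I2). This simultaneous ``differ-in-many but equivalent-to-differ-in-one'' construction is exactly what the $x{=}y\wedge 2x{=}2y$ example forces and what a path-based argument cannot supply; it is the missing idea in your proposal.
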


\subsubsection{Convexity}

For the proof of Theorem~\ref{thm:lambda_proof_soundness_completeness} below, we will need the following fact about linear real arithmetic, which we call \textit{convexity} because it is, in essence, a modified and strengthened version of the well-known notion of convexity introduced by Nelson and Oppen in the context of constructing decision procedures for a combination of first-order theories \cite{nelson_oppen:1979, oppen:1980, bruttomesso:2009:nelson_oppen, kroening:2016}.

\begin{lemma}[Convexity]
	\label{lemma:convexity}
	Let $ \Gamma $ be a predicate set, $ p $ be a predicate and $ Q, R $ be predicate symbols such that \[
		\Gamma \models p^Q \vee p^R
	\] Then $ \Gamma \models p^Q $ or $ \Gamma \models p^R $.
\end{lemma}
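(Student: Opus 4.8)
The plan is to distinguish the possibilities for the (unordered) pair $\{Q,R\}$ of predicate symbols. If $Q=R$ the statement is trivial, so assume $\{Q,R\}$ is one of $\{<,>\}$, $\{<,=\}$, $\{>,=\}$. Writing $p$ in the form $L\bowtie 0$ for an affine form $L$ with rational coefficients, so that $p^{<}$, $p^{=}$, $p^{>}$ denote $L<0$, $L=0$, $L>0$, we may also assume $\Gamma$ is satisfiable (otherwise $\Gamma$ entails everything). The one structural fact I would isolate at the outset is that $\ModelsOf(\Gamma)$ is relatively open in the affine subspace $F:=\ModelsOf(\Gamma^=)$ cut out by the equality predicates of $\Gamma$: this is because the remaining (strict) predicates of $\Gamma$ define open sets, so at any $v\models\Gamma$ all of them hold with strict slack, and hence $v$ can be perturbed a little within $F$ without leaving $\ModelsOf(\Gamma)$.

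For $\{Q,R\}=\{<,>\}$ I would just apply Lemma~\ref{lemma:predicate_convexity}. If neither $\Gamma\models p^{<}$ nor $\Gamma\models p^{>}$, then a model of $\Gamma$ that fails $p^{>}$ must satisfy $p^{<}$ (since $\Gamma\models p^{<}\vee p^{>}$), and symmetrically, so $\Gamma\cup\{p^{<}\}$ and $\Gamma\cup\{p^{>}\}$ are both satisfiable; Lemma~\ref{lemma:predicate_convexity} then yields a model of $\Gamma\cup\{p^{=}\}$, which satisfies $\Gamma$ together with $L=0$ — contradicting $\Gamma\models p^{<}\vee p^{>}$.

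For $\{Q,R\}=\{<,=\}$ (the case $\{>,=\}$ being the same argument with $<$ and $>$ interchanged) I would argue by contradiction: assume $\Gamma\not\models p^{<}$ and $\Gamma\not\models p^{=}$. A model $v_1\models\Gamma$ with $v_1\not\models p^{=}$ then satisfies $p^{<}$ (by $\Gamma\models p^{<}\vee p^{=}$), so $L(v_1)<0$; and a model $v_0\models\Gamma$ with $v_0\not\models p^{<}$ has $L(v_0)\ge 0$, which together with $v_0\models p^{<}\vee p^{=}$ forces $L(v_0)=0$. Both $v_0,v_1$ lie in $F$, so $w:=v_0-v_1$ is a nonzero vector in the direction space of $F$, and by the relative-openness observation $v_0+tw\models\Gamma$ for all sufficiently small rational $t>0$. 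But $v_0+tw=(1+t)v_0-tv_1$ is an affine combination of $v_0$ and $v_1$, so $L(v_0+tw)=(1+t)L(v_0)-tL(v_1)=-tL(v_1)>0$; hence $v_0+tw$ satisfies neither $p^{<}$ nor $p^{=}$, contradicting $\Gamma\models p^{<}\vee p^{=}$.

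The two contradiction arguments are routine once the right models are produced; the genuinely new ingredient is the relative-openness of $\ModelsOf(\Gamma)$ inside the equality subspace $F$, and the main point to handle carefully is that the perturbation direction $w$ must be chosen inside $F$ — so that the equality predicates are preserved automatically and only the strict ones need slack. I expect this to be the crux: Lemma~\ref{lemma:predicate_convexity} by itself does not settle the mixed cases $\{<,=\}$ and $\{>,=\}$, since it only converts satisfiability of the two strict extensions into satisfiability of the equality extension, whereas here one needs to push a model sitting on $\{L=0\}$ outward into $\{L>0\}$.
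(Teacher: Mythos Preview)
Your proof is correct and follows essentially the same approach as the paper: the same three-way case split on $\{Q,R\}$, the same appeal to Lemma~\ref{lemma:predicate_convexity} for $\{<,>\}$, and the same relative-openness-plus-perturbation argument for the mixed cases. Your treatment of Case $\{<,=\}$ is in fact slightly cleaner than the paper's, since by writing $p$ as $L\bowtie 0$ and computing $L(v_0+tw)=-tL(v_1)>0$ directly you avoid the paper's more roundabout geometric argument (Claim~\ref{claim:veq_plus_sigmaepsilonvltminusveq_disjoint_peqmodels} and the subsequent convexity discussion) establishing which half-segment lands in $\ModelsOf(p^>)$.
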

\begin{proof}
	Without loss of generality, we assume that $ Q \neq R $ and that $ \Gamma $ is satisfiable. We distinguish between the following cases.
	
	\textbf{Case 1.} Suppose $ \{Q, R\} = \{<, >\} $ and, for the sake of contradiction, that $ \Gamma \wedge p^< $ and $ \Gamma \wedge p^> $ are both satisfiable. Then, by Lemma~\ref{lemma:predicate_convexity}, $ \Gamma \wedge p^= $ is satisfiable, which contradicts the assumption $ \Gamma \models p^Q \vee p^R $.
	
	\textbf{Case 2.} Suppose $ \{Q, R\} = \{<, =\} $. Towards a contradiction, assume that $ \Gamma \wedge p^< $ and $ \Gamma \wedge p^= $ are both satisfiable and consider models $ v_< \models \Gamma \wedge p^< $, $ v_= \models \Gamma \wedge p^= $. 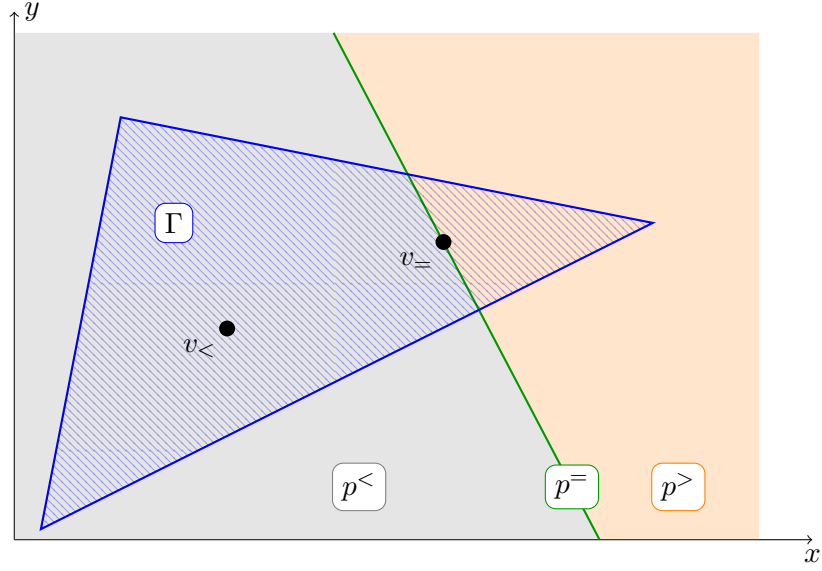
\begin{figure}[t]
		\centering
		\begin{tikzpicture}[scale=\zbtikzbiggerscaling]
	% axes
	\draw[->] (0,0) -- (7.5,0) coordinate[label = {below:$x$}] (xmax);
	\draw[->] (0,0) -- (0,5) coordinate[label = {right:$y$}] (ymax);
	
	% separating line
	\coordinate (mainlinetl) at (3, 4.8); % top left
	\coordinate (mainlinebr) at (5.5, 0); % bottom right
	
	% triangle
	\coordinate (trtl) at (1, 4); % top left
	\coordinate (trright) at (6, 3); % right-side
	\coordinate (trbl) at (0.25, 0.1); % bottom left
	
	% dots
	\coordinate (ltdot) at (2, 2); % <
	\coordinate (eqdotinterpstart) at (intersection of trtl--trright and mainlinetl--mainlinebr);
	\coordinate (eqdotinterpend) at (intersection of trbl--trright and mainlinetl--mainlinebr);
	\coordinate (eqdot) at ($ (eqdotinterpstart)!0.5!(eqdotinterpend) $); % =
	
	% region label height
	\coordinate (reglabellefty) at (0, 0.5);
	\coordinate (reglabelrighty) at (7, 0.5);
	
	% bounding box
	\coordinate (regtl) at (0, 4.8);
	\coordinate (regbl) at (0, 0);
	\coordinate (regtr) at (7, 4.8);
	\coordinate (regbr) at (7, 0);
	
	% region coloring
	\fill[fill=gray,opacity=0.2] (regtl) -- (mainlinetl) -- (mainlinebr) -- (regbl) -- cycle;
	\fill[fill=orange,opacity=0.2] (mainlinetl) -- (regtr) -- (regbr) -- (mainlinebr) -- cycle;
	
	% main line
	\draw[thick, OliveGreen] (mainlinetl) -- (mainlinebr);
	
	% Gamma region
	\filldraw[ultra thin,pattern=north west lines,pattern color=blue, draw opacity=0.3] (trtl) -- (trright) -- (trbl) -- cycle;
	\draw[thick, blue] (trtl) -- (trright) -- (trbl) -- cycle;
	
	% draw dots
	\fill[black] (ltdot) circle (0.075);
	\fill[black] (eqdot) circle (0.075);
	
	% draw labels
	\node[anchor=north east,thick] at (ltdot) {$ v_< $};
	\node[anchor=north east,thick] at (eqdot) {$ v_= $};
	
	% Gamma region label
	\draw (1.5, 3) node [rounded corners, draw=blue, fill=white] {$ \Gamma $};
	
	% other region labels
	\coordinate (reglabelmainlineanchor) at (intersection of reglabellefty--reglabelrighty and mainlinetl--mainlinebr);
	
	\draw ($ (reglabelmainlineanchor) - (2, 0) $) node [rounded corners, draw=gray, fill=white] {$ p^< $};
	\draw (reglabelmainlineanchor) node [rounded corners, draw=OliveGreen, fill=white] {$ p^= $};
	\draw ($ (reglabelmainlineanchor) + (1, 0) $) node [rounded corners, draw=orange, fill=white] {$ p^> $};
\end{tikzpicture}
		\caption{An illustration of how the sets defined by $ \Gamma $, $ p^< $, $ p^= $ and $ p^> $ may look like geometrically in two dimensions, assuming the existence of $ v_< \models \Gamma \wedge p^< $ and $ v_= \models \Gamma \wedge p^= $. The blue region is defined by $ \Gamma $, whereas the green line and its half-spaces colored gray and orange are defined by $ p^= $, $ p^< $ and $ p^> $, respectively. Note that the region defined by $ \Gamma $ is open, i.e., it does not include the points on the blue border.}
		\label{fig:convexity_case_2}
	\end{figure} For a visualization of how the sets defined by $ \Gamma $, $ p^< $, $ p^= $ and $ p^> $ may look like geometrically, see Figure~\ref{fig:convexity_case_2}. Let $ v_= + W $ be the affine vectorspace of solutions to $ \Gamma^= $ and let $ V $ be a vectorspace such that $ \ModelsOf(p^=) = v_= + V $. Note that \[
		\gen{0} \neq \gen{v_= - v_<} = \gen{v_< - v_=} \le W
	\] holds because $ v_< \in v_= + W $ and $ v_< \neq v_= $. Since $ \Gamma \setminus \Gamma^= $ contains only strict inequality predicates, there exists an $ \varepsilon > 0 $ such that \begin{align}
		\label{eqn:veq_minus_epsvltveq_veq_plus_epsvltveq_subset_gamma}
		(v_= - \varepsilon \cdot (v_< - v_=), v_= + \varepsilon \cdot (v_< - v_=)) \subseteq \ModelsOf(\Gamma)
	\end{align} Now our plan is to obtain a contradiction by showing that $ \Gamma \wedge p^> $ is satisfiable. In order to show this, by (\ref{eqn:veq_minus_epsvltveq_veq_plus_epsvltveq_subset_gamma}) it suffices to prove the existence of $ \sigma \in \{1, -1\} $ such that \begin{align}
		\label{eqn:veq_veqplussigmaepsilonvltminusveq_subset_pgt}
		(v_=, v_= + \sigma \cdot \varepsilon \cdot (v_< - v_=)) \subseteq \ModelsOf(p^>)
	\end{align} which we do in the remainder of the proof. First observe the following.
	
	\begin{zbclaim}
		\label{claim:veq_plus_sigmaepsilonvltminusveq_disjoint_peqmodels}
		For every $ \sigma \in \{1, -1\} $, \[
			(v_=, v_= + \sigma \cdot \varepsilon \cdot (v_< - v_=)) \cap \ModelsOf(p^=) = \varnothing
		\]
	\end{zbclaim}
	\begin{proof}
		Towards a contradiction, suppose there exists $ 0 < \delta < \varepsilon $ such that \[
			v_= + \sigma \cdot \delta \cdot (v_< - v_=) \in \underbrace{v_= + V}_{\ModelsOf(p^=)}
		\] is true for some $ \sigma \in \{1, -1\} $. This means that $ \gen{v_< - v_=} \le V $ and thus \[
			v_< = v_= + (\underbrace{v_< - v_=}_{\in V}) \in v_= + V = \ModelsOf(p^=)
		\] This is a contradiction to $ v^< \models p^< $.
	\end{proof}

	\noindent By Claim~\ref{claim:veq_plus_sigmaepsilonvltminusveq_disjoint_peqmodels}, \begin{align}
		\label{eqn:veqcveqsigmaepsilonvltveq_subset_modplt_union_modpgt}
		(v_=, v_= + \sigma \cdot \varepsilon \cdot (v_< - v_=)) \subseteq \ModelsOf(p^<) \cup \ModelsOf(p^>)
	\end{align} holds for every $ \sigma \in \{1, -1\} $. Observe that (for every $ \sigma \in \{1, -1\} $), \[
		(v_=, v_= + \sigma \cdot \varepsilon \cdot (v_< - v_=)) \subseteq \ModelsOf(p^Q)
	\] must hold for some $ Q \in \{<, >\} $, because $ (v_=, v_= + \sigma \cdot \varepsilon \cdot (v_< - v_=)) $ is a convex set and thus assuming the contrary would imply by (\ref{eqn:veqcveqsigmaepsilonvltveq_subset_modplt_union_modpgt}) that $ \ModelsOf(p^<) \cup \ModelsOf(p^>) $ is closed under taking the convex hull of a point from $ \ModelsOf(p^<) $ and from $ \ModelsOf(p^>) $, but this is not true due to $ v_= \models p^= $. Hence, in order to show (\ref{eqn:veq_veqplussigmaepsilonvltminusveq_subset_pgt}) and thus complete the proof of the present case, it suffices to show that \begin{align*}
		(v_=, v_= + \varepsilon \cdot (v_< - v_=)) &\subseteq \ModelsOf(p^Q) \\
		(v_=, v_= - \varepsilon \cdot (v_< - v_=)) &\subseteq \ModelsOf(p^Q)
	\end{align*} is impossible for every $ Q \in \{<, >\} $. Assume it is possible for some $ Q \in \{<, >\} $ and observe that indeed, since $ p^Q $ defines a convex set, the convex hull of $ (v_=, v_= + \varepsilon \cdot (v_< - v_=)) $ and $ (v_=, v_= - \varepsilon \cdot (v_< - v_=)) $ must be contained in $ \ModelsOf(p^Q) $. That is, \[
		v_= \in (v_= - \varepsilon \cdot (v_< - v_=), v_= + \varepsilon \cdot (v_< - v_=)) \subseteq \ModelsOf(p^Q)
	\] Hence, $ v_= \models p^Q $, which is a contradiction.
	
	\textbf{Case 3.} Suppose $ \{Q, R\} = \{>, =\} $. This case is completely analogous to Case 2 (replace in the above reasoning every ``$ < $'' by ``$ > $'' and vice versa).
\end{proof}

\subsubsection{Completeness of $ \Lambda $-proofs}

Having done the above preparatory work, we now prove Theorem~\ref{thm:lambda_proof_soundness_completeness}. We have already shown soundness, so it suffices to prove completeness, i.e., the ``$ \Rightarrow $'' direction. Suppose $ \varphi $ were not $ \Pi $-decomposable. Combining Theorem~\ref{thm:reduction} with Theorem~\ref{thm:cover_alg_correctness} yields the existence of $ \Gamma \in \Sat(\DisjPhi) $ such that \[
\bigvee_{\Lambda \in \Psi_\Gamma} \Lambda \equiv \psi_\Gamma \not\models \varphi
\] Hence, there exists $ \Lambda \in \Psi_\Gamma $ such that $ \Lambda \not\models \varphi $.

We now use an inductive proof technique similar to the one we used in the proof of Theorem~\ref{thm:model_flooding}. Let $ p_1, \dots, p_k $ be some linearization of \[
	\PiComp{\Gamma} = \{p_1, \dots, p_k\}
\]

\begin{zbclaim}
	\label{claim:existence_of_lambda_proof_conditional}
	Let $ m \in \{1, \dots, k + 1\} $ and $ Q = (Q_m, \dots, Q_k) \in \mathbb{P}^{k-m+1} $ be arbitrary. Suppose there exist \[
	\Gamma_0, \Gamma_1 \in \DisjOf{\Lambda \cup \{p_m^{Q_m}, \dots, p_k^{Q_k}\}}
	\] such that $ \Gamma_i \models \varphi $ and $ \Gamma_{1-i} \models \neg\varphi $ for some $ i \in \{0, 1\} $. Then there exists a $ \Lambda $-proof of non-$ \Pi $-decomposability of $ \varphi $.
\end{zbclaim}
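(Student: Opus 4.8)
The plan is to prove Claim~\ref{claim:existence_of_lambda_proof_conditional} by induction on $m$, running from $m=1$ up to $m=k+1$, in close analogy with the proof of Claim~\ref{claim:model_flooding}; the only genuinely new ingredient will be reading the Overspilling Theorem~\ref{thm:overspilling} as a \emph{dichotomy} rather than as an implication. For the base case $m=1$: since $\PiSimp{\Gamma}\subseteq\Theta\subseteq\Lambda$ (for the $\Theta$ supplied by Lemma~\ref{lemma:psi_properties}, using that every recursive call only adds $\Pi$-respecting equalities) and $\PiSimp{\Gamma}\cup\{p_1,\dots,p_k\}=\Gamma$ already mentions every predicate of $\varphi$, the set $\Lambda\cup\{p_1^{Q_1},\dots,p_k^{Q_k}\}$ has at most one disjunct, so $\Gamma_0=\Gamma_1$; but then $\Gamma_0$ would be a satisfiable predicate set entailing both $\varphi$ and $\neg\varphi$, which is impossible, so the hypothesis of the claim is vacuous. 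For the inductive step (parameter $m+1$) write $\Lambda^+:=\Lambda\cup\{p_{m+1}^{Q_{m+1}},\dots,p_k^{Q_k}\}$ and $\Lambda_S:=\Lambda^+\cup\{p_m^S\}$ for $S\in\mathbb{P}$, and let $S_0,S_1$ be the unique symbols with $\Gamma_0\in\DisjOf{\Lambda_{S_0}}$, $\Gamma_1\in\DisjOf{\Lambda_{S_1}}$. If $S_0=S_1$, then $\Gamma_0,\Gamma_1\in\DisjOf{\Lambda_{S_0}}$ and the induction hypothesis for parameter $m$ (with $Q':=(S_0,Q_{m+1},\dots,Q_k)$) directly yields the desired $\Lambda$-proof; and if $\Lambda_=$ is unsatisfiable then Lemma~\ref{lemma:predicate_convexity} forces $S_0=S_1$ exactly as in Claim~\ref{claim:gammaeq_unsat_implies_s1_eq_s2}, so we are again done. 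Hence we may assume $S_0\neq S_1$ and $\Lambda_=$ satisfiable.

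In this remaining situation I would first port over the structural facts from the model-flooding proof: $\Lambda_=$ is $\Pi$-complex (a satisfiable disjunct of it is $\Pi$-complex by Lemma~\ref{lemma:psi_properties}~\ref{lemma:psi_properties:c} and Lemma~\ref{lemma:adding_neq_predicates_cannot_make_pi_complex_set_pi_simple}, and $\Fixes(\Lambda_=)$ is contained in the fixed set of any disjunct); and, exactly as in Claim~\ref{claim:lindep_lambdaeq_eq_lindep_lambdasi} (via Lemma~\ref{lemma:psi_properties}~\ref{lemma:psi_properties:b}, Theorem~\ref{thm:only_equality_predicates_can_establish_lindep_strong} and the Uniform union Theorem~\ref{thm:union_of_same_lindep_has_that_lindep}), all disjuncts of $\Lambda_<$, $\Lambda_=$ and $\Lambda_>$ share one and the same value of $\LinDep_{\pi_Z}$ for every $Z\in\Pi$, and are all $\Pi$-complex (via the $\LinDep$--$\Fixes$ correspondence~(\ref{eqn:ezi_lindep_piz_upsilon_iff_zi_in_fixes_upsilon})). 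Next, if $\Lambda_=$ has a disjunct entailing $\varphi$ and a disjunct entailing $\neg\varphi$, both lie in $\DisjOf{\Lambda_=}$ and the induction hypothesis for parameter $m$ (with $Q':=(P_=,Q_{m+1},\dots,Q_k)$) again produces a $\Lambda$-proof. Otherwise all disjuncts of $\Lambda_=$ entail the same formula; since the class of non-$\Pi$-decomposable formulas is closed under negation and the definition of a $\Lambda$-proof is symmetric in $\varphi$ and $\neg\varphi$, we may assume without loss of generality $\Lambda_=\models\varphi$. Then $\Gamma_1\models\neg\varphi$ forces $S_1\neq P_=$, so $S_1\in\{<,>\}$ and $\Lambda_=$ is $p_m^=$-next to $\Lambda_{S_1}$, with both sets $\Pi$-complex and sharing all their $Z$-dependencies.

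It remains to manufacture a $\Lambda$-proof from this configuration, and here the Overspilling Theorem~\ref{thm:overspilling} is used in reverse: if $\varphi$ were $\Pi$-decomposable, Overspilling applied to $\Lambda_=$ (as $\Theta_1$, with $\Lambda_=\models\varphi$) and $\Lambda_{S_1}$ (as $\Theta_2$) would force $\varphi\wedge\Lambda_{S_1}$ satisfiable. So either this satisfiability genuinely holds — in which case $\Lambda_{S_1}$ has a disjunct entailing $\varphi$, which together with $\Gamma_1$ (entailing $\neg\varphi$) is handled by the induction hypothesis for parameter $m$ and $Q':=(S_1,Q_{m+1},\dots,Q_k)$ — or it fails, i.e.\ $\Lambda_{S_1}\models\neg\varphi$, and then the pair $(\Lambda_=,\Lambda_{S_1})$ is precisely the witness a $\Lambda$-proof packages. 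To formalize the second case one picks a disjunct $\Lambda_1'$ of $\Lambda_=$ (so $\Lambda_1'\models\varphi$) and the disjunct $\Lambda_0'$ of $\Lambda_{S_1}$ obtained by replacing $p_m^=$ with $p_m^{S_1}$ inside $\Lambda_1'$; when this neighbour is satisfiable it entails $\neg\varphi$, is $p_m^=$-next to $\Lambda_1'$, and inherits $\Pi$-complexity and the $Z$-dependencies, giving the $\Lambda$-proof $(\Lambda_0',\Lambda_1',\Lambda_0',\Lambda_1')$.

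The main obstacle I anticipate is the residual degenerate case in which this naive neighbour is unsatisfiable: by the convexity Lemma~\ref{lemma:convexity} this means $p_m^=$ is redundant in $\Lambda_1'$ (and, symmetrically, $p_m^{S_1}$ is redundant in $\Gamma_1$), so the full disjuncts are not $p$-next, and one must descend to their redundancy-free reducts — exactly the scenario the reduced components $\Lambda_0,\Lambda_1$ in the definition of a $\Lambda$-proof were introduced to accommodate (cf.\ the $x=y\wedge 2x=2y$ example). The clean resolution should argue that, after stripping redundant predicates, a pair of $p$-next reducts separating $\varphi$ from $\neg\varphi$ always exists: using Lemma~\ref{lemma:predicate_convexity} one shows that the walk from a $\varphi$-disjunct of $\Lambda^+$ to a $\neg\varphi$-disjunct, changing one predicate symbol at a time between adjacent values $<,=,>$, never has to jump over an equality, so at the step where the entailed formula flips the two consecutive disjuncts are $p$-next on some equality predicate; stripping the (by convexity) redundant copies on both sides then delivers $\Lambda_0,\Lambda_1$ of the required form. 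Everything else is a faithful re-run of the model-flooding induction.
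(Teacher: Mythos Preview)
Your inductive skeleton, the base case, and the reductions ``$S_0=S_1$'', ``$\Lambda_=$ unsatisfiable'', ``$\Lambda_=$ has disjuncts of both kinds'' are all the same as in the paper. The two points where you diverge are the appeal to Overspilling and the construction of the $\Lambda$-proof in the remaining case, and the second of these is a genuine gap.

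First a minor point: your ``Overspilling in reverse'' step is idle. The dichotomy \emph{``either $\varphi\wedge\Lambda_{S_1}$ is satisfiable, or $\Lambda_{S_1}\models\neg\varphi$''} is a tautology; Overspilling is not invoked anywhere in the paper's completeness argument (it is used only for soundness).

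The real issue is the construction of $\Lambda_0',\Lambda_1'$ once you have reduced to $\Lambda_=\models\varphi$ and $\Lambda_{S_1}\models\neg\varphi$. Your plan is to take some $\Lambda_1'\in\DisjOf{\Lambda_=}$ and its $p_m$-neighbour in $\Lambda_{S_1}$; you correctly anticipate that this neighbour can be unsatisfiable, but your proposed fix --- a walk through $\DisjOf{\Lambda^+}$ changing one symbol at a time, finding a $p$-next pair where the entailment flips --- does not work. Take $p_1=(x=y)$, $p_2=(2x=2y)$ and $\Lambda^+=\varnothing$: the only satisfiable disjuncts are $\{p_1^<,p_2^<\}$, $\{p_1^=,p_2^=\}$, $\{p_1^>,p_2^>\}$, and \emph{every} single-symbol neighbour of each of them is unsatisfiable. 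There is no walk at all, let alone a flip between $p$-next satisfiable disjuncts. Stripping redundant predicates afterwards cannot repair this, because you never reach a satisfiable intermediate set to strip.

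The paper avoids the problem by building in the opposite direction: instead of walking between full disjuncts, it starts from $\Upsilon^0=\Lambda^+$ and adds the predicates $p_1,\dots,p_m$ one at a time, maintaining the invariants that $\Upsilon^i\wedge p_m^P$ and $\Upsilon^i\wedge p_m^=$ are both satisfiable and that $\Upsilon^i\wedge p_m^P\models\Upsilon_P^i$, $\Upsilon^i\wedge p_m^=\models\Upsilon_=^i$. At each step a four-way case split (driven by the Convexity Lemma~\ref{lemma:convexity}) decides whether $p_i$ can be added with a common symbol to $\Upsilon^i$ or must go with different symbols into the side bins $\Upsilon_P^i,\Upsilon_=^i$; the bin contents are then provably redundant. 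The output $\Lambda_0:=\Upsilon^m\cup\{p_m^P\}$ and $\Lambda_1:=\Upsilon^m\cup\{p_m^=\}$ are $p_m$-next by construction, satisfiable by invariant, and equivalent to the full disjuncts $\Lambda_0':=\Upsilon^m\cup\Upsilon_P^m$ and $\Lambda_1':=\Upsilon^m\cup\Upsilon_=^m$. In the example above this yields exactly $\Lambda_0=\{p_2^<\}$, $\Lambda_1=\{p_2^=\}$ with $\Lambda_0'=\{p_1^<,p_2^<\}$, $\Lambda_1'=\{p_1^=,p_2^=\}$ --- the redundant copies are placed in the bins rather than discovered by a walk. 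This greedy bottom-up construction is the missing idea in your proposal.
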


In essence, Claim~\ref{claim:existence_of_lambda_proof_conditional} provides a condition which, if fulfilled, implies the desired existence of a $ \Lambda $-proof. Thus, in order to reduce the proof to showing Claim~\ref{claim:existence_of_lambda_proof_conditional}, we need to prove that the condition holds, i.e., that there exist $ \Gamma_0, \Gamma_1 \in \DisjOf{\Lambda} $ such that $ \Gamma_i \models \varphi $ and $ \Gamma_{1-i} \models \neg\varphi $ for some $ i \in \{0, 1\} $. Indeed, define $ \Gamma_1 := \Gamma \cup \Lambda $ and observe that by Lemma~\ref{lemma:psi_properties} \ref{lemma:psi_properties:a}, \[
\DisjOf{\Lambda} \overset{\ref{lemma:psi_properties} \ref{lemma:psi_properties:a}}{\ni} \Gamma_1 \models \Gamma \models \varphi
\] Since $ \Lambda \not\models \varphi $, there must also exist a disjunct $ \Gamma_0 \in \DisjOf{\Lambda} $ such that $ \Gamma_0 \not\models \varphi $. Hence, by Lemma~\ref{lemma:every_disj_either_true_or_false}, $ \Gamma_0 \models \neg\varphi $.

For this reason, to prove the theorem it suffices to show Claim~\ref{claim:existence_of_lambda_proof_conditional}, which we do in the remainder of the proof. More precisely, we prove Claim~\ref{claim:existence_of_lambda_proof_conditional} by induction on $ m $, starting with $ m = 1 $ and going up to $ m = k + 1 $.

\textbf{Base case}: If $ m = 1 $, then by the same reasoning as in the proof of Claim~\ref{claim:model_flooding} above, there is only one disjunct of $ \Lambda \cup \{p_m^{Q_m}, \dots, p_k^{Q_k}\} $, so $ \Gamma_0 = \Gamma_1 $, meaning that (for every $ i $) it is impossible that $ \Gamma_i \models \neg\varphi $ and $ \Gamma_{1-i} \models \varphi $. Hence, the claim for $ m = 1 $ is trivial.

\textbf{Inductive step}: Let $ Q = (Q_{m+1}, \dots, Q_k) \in \mathbb{P}^{k-m} $ and \[
\Gamma_0, \Gamma_1 \in \DisjOf{\Lambda \cup \{p_{m+1}^{Q_{m+1}}, \dots, p_k^{Q_k}\}}
\] be disjuncts such that $ \Gamma_i \models \varphi $ and $ \Gamma_{1-i} \models \neg\varphi $ for some $ i \in \{0, 1\} $. 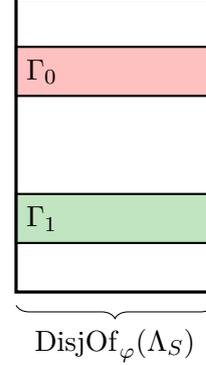
\begin{wrapfigure}{r}{0.3\textwidth}
	\centering
	\if\zbthesismode1
	\vspace{-10pt}
	\else
	\vspace{-20pt}
	\fi
	\begin{tikzpicture}[scale=0.65]
	% top left and bottom right coordinates of the box
	\coordinate (tl) at (-2, 3);
	\coordinate (br) at (2, -3);
	
	% Gamma_0
	\coordinate (gamma0topy) at (0, 2);
	\coordinate (gamma0boty) at (0, 1);
	% Gamma_1
	\coordinate (gamma1topy) at (0, -1);
	\coordinate (gamma1boty) at (0, -2);
	
	% compute remaining corners
	\coordinate (tr) at ($ (tl-|br) $);
	\coordinate (bl) at ($ (br-|tl) $);
	
	% draw Gamma_0 and Gamma_1 regions
	% Gamma_0
	\coordinate (gamma0tl) at ($ (gamma0topy-|tl) $);
	\coordinate (gamma0tr) at ($ (gamma0topy-|tr) $);
	\coordinate (gamma0br) at ($ (gamma0boty-|tr) $);
	\coordinate (gamma0bl) at ($ (gamma0boty-|tl) $);
	\filldraw[thick, fill=red, fill opacity=0.25] (gamma0tl) -- (gamma0tr) -- (gamma0br) -- (gamma0bl) -- cycle;
	% Gamma_1
	\coordinate (gamma1tl) at ($ (gamma1topy-|tl) $);
	\coordinate (gamma1tr) at ($ (gamma1topy-|tr) $);
	\coordinate (gamma1br) at ($ (gamma1boty-|tr) $);
	\coordinate (gamma1bl) at ($ (gamma1boty-|tl) $);
	\filldraw[thick, fill=OliveGreen, fill opacity=0.25] (gamma1tl) -- (gamma1tr) -- (gamma1br) -- (gamma1bl) -- cycle;
	
	% Lambda_S region
	\draw[very thick] (tl) -- (tr) -- (br) -- (bl) -- cycle;
	
	% labels
	\node[anchor=west] at ($ (gamma0tl)!0.5!(gamma0bl) $) {$ \Gamma_0 $};
	\node[anchor=west] at ($ (gamma1tl)!0.5!(gamma1bl) $) {$ \Gamma_1 $};
	
	% brace
	\draw [decorate, decoration = {brace, raise=5pt, amplitude=5pt, mirror}] (bl) --  (br) node[pos=0.5,below=10pt,black]{$ \DisjOf{\Lambda_S} $};
\end{tikzpicture}
	\caption{Illustration of a situation when we can immediately conclude that by the induction hypothesis, there exists a $ \Lambda $-proof of non-$ \Pi $-decomposability of $ \varphi $. The green (resp. red) color is used to illustrate that $ \Gamma_1 $ (resp. $ \Gamma_0 $) entails $ \varphi $ (resp. $ \neg\varphi $).}
	\label{fig:existence_of_lambda_proof_inductive_step_lambda_s_entailment}
	\if\zbthesismode1
	\vspace{-50pt}
	\else
	\vspace{-20pt}
	\fi
\end{wrapfigure} For convenience, we define \[
\Lambda_S := \Lambda \cup \{p_m^S, p_{m+1}^{Q_{m+1}}, \dots, p_k^{Q_k}\}
\] for every $ S \in \mathbb{P} $.

\begin{zbclaim}
	\label{claim:wlog_lambda_s_phi_true_or_false}
	At this point, we can without loss of generality assume that for every $ S \in \mathbb{P} $, $ \Lambda_S \models \varphi $ or $ \Lambda_S \models \neg\varphi $ holds.
\end{zbclaim}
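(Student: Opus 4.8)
The plan is to establish the claim as a harmless reduction \emph{within} the current inductive step: I will show that if the asserted dichotomy fails for some $S \in \mathbb{P}$, then a $\Lambda$-proof of non-$\Pi$-decomposability of $\varphi$ already exists, so the goal of the inductive step (the existence of such a proof) is already met and we may freely add the dichotomy as a working hypothesis.

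So suppose there is an $S \in \mathbb{P}$ with $\Lambda_S \not\models \varphi$ and $\Lambda_S \not\models \neg\varphi$. First I would note that $\Lambda_S$ is then satisfiable (an unsatisfiable set entails everything), that $\varphi \wedge \Lambda_S$ is satisfiable (otherwise $\Lambda_S \models \neg\varphi$), and that $\neg\varphi \wedge \Lambda_S$ is satisfiable (otherwise $\Lambda_S \models \varphi$). Fix models $v_0 \models \varphi \wedge \Lambda_S$ and $v_1 \models \neg\varphi \wedge \Lambda_S$, and let $\Omega_0, \Omega_1 \in \Sat(\DisjTrue)$ be the disjuncts with $v_0 \models \Omega_0$ and $v_1 \models \Omega_1$. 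By Lemma~\ref{lemma:every_disj_either_true_or_false}, $\Omega_0 \models \varphi$ (since $v_0 \models \Omega_0 \wedge \varphi$ rules out $\Omega_0 \models \neg\varphi$) and, symmetrically, $\Omega_1 \models \neg\varphi$. Hence $\Gamma_0' := \Lambda_S \cup \Omega_0$ and $\Gamma_1' := \Lambda_S \cup \Omega_1$ are satisfiable, so $\Gamma_0', \Gamma_1' \in \DisjOf{\Lambda_S}$, and $\Gamma_0' \models \varphi$ while $\Gamma_1' \models \neg\varphi$.

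Now recall that $\Lambda_S = \Lambda \cup \{p_m^S, p_{m+1}^{Q_{m+1}}, \dots, p_k^{Q_k}\}$, so $\DisjOf{\Lambda_S} = \DisjOf{\Lambda \cup \{p_m^S, p_{m+1}^{Q_{m+1}}, \dots, p_k^{Q_k}\}}$, and the sequence $(S, Q_{m+1}, \dots, Q_k)$ lies in $\mathbb{P}^{k-m+1}$. Thus $\Gamma_0', \Gamma_1'$ witness exactly the hypothesis of Claim~\ref{claim:existence_of_lambda_proof_conditional} at index $m$ (with $i := 0$), and the induction hypothesis delivers a $\Lambda$-proof of non-$\Pi$-decomposability of $\varphi$ -- which is precisely what the inductive step sets out to produce. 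Therefore we may assume that no such $S$ exists, i.e. that $\Lambda_S \models \varphi$ or $\Lambda_S \models \neg\varphi$ holds for every $S \in \mathbb{P}$, as claimed.

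I expect the only delicate point to be the bookkeeping of indices and strata: one has to check that $(S, Q_{m+1}, \dots, Q_k)$ is a legitimate instance of the index-$m$ claim (so that it is the induction hypothesis, and not a yet-unproven instance, that is being invoked), and that the passage from a model back to a disjunct is justified by Lemma~\ref{lemma:every_disj_either_true_or_false} even though $\Lambda_S$ itself need not belong to $\DisjTrue$; this is harmless because only the enclosed $\Omega_j \in \Sat(\DisjTrue)$ need satisfy the hypothesis of that lemma, and then $\Gamma_j' \models \Omega_j$ propagates the entailment.
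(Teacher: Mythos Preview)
Your proof is correct and follows essentially the same argument as the paper: assume the dichotomy fails for some $S$, extract two disjuncts of $\Lambda_S$ entailing $\varphi$ and $\neg\varphi$ respectively via Lemma~\ref{lemma:every_disj_either_true_or_false}, and observe that these witness the index-$m$ instance of Claim~\ref{claim:existence_of_lambda_proof_conditional}, so the induction hypothesis already yields the $\Lambda$-proof. Your explicit verification that $(S,Q_{m+1},\dots,Q_k)\in\mathbb{P}^{k-m+1}$ invokes the induction hypothesis (not the instance being proved) is a welcome clarification that the paper leaves implicit.
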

\begin{proof}
	Suppose that for some $ S \in \mathbb{P} $ we have $ \Lambda_S \not\models \varphi $ and $ \Lambda_S \not\models \neg\varphi $. Then there exist disjuncts $ \Gamma_0, \Gamma_1 \in \DisjOf{\Lambda_S} $ such that $ \Gamma_1 \wedge \varphi $ and $ \Gamma_0 \wedge \neg\varphi $ are both satisfiable. Hence, by Lemma~\ref{lemma:every_disj_either_true_or_false}, $ \Gamma_1 \models \varphi $ and $ \Gamma_0 \models \neg\varphi $ (see Figure~\ref{fig:existence_of_lambda_proof_inductive_step_lambda_s_entailment}), so Claim~\ref{claim:existence_of_lambda_proof_conditional} holds by the induction hypothesis.
\end{proof}

\begin{zbclaim}
	\label{claim:wlog_lambda_i_disj_lambdaeq_lambda_1mi_disj_lambdap}
	Furthermore, we can without loss of generality assume that $ \Gamma_1 \in \DisjOf{\Lambda_=} $ and $ \Gamma_0 \in \DisjOf{\Lambda_P} $ holds for some $ P \in \{<, >\} $.
\end{zbclaim}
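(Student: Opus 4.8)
The plan is to record, for each $j\in\{0,1\}$, which symbol the disjunct $\Gamma_j$ assigns to the predicate $p_m$. Since $p_m$ is a predicate occurring in $\varphi$, every element of $\DisjTrue$ contains exactly one of $p_m^<,p_m^=,p_m^>$; and because $\Gamma_j\in\DisjOf{\Lambda\cup\{p_{m+1}^{Q_{m+1}},\dots,p_k^{Q_k}\}}$ already contains $p_{m+1}^{Q_{m+1}},\dots,p_k^{Q_k}$, there is a unique $S_j\in\mathbb{P}$ with $\Gamma_j\in\DisjOf{\Lambda_{S_j}}$. I would then run a case analysis on $(S_0,S_1)$ which in each case either discharges Claim~\ref{claim:existence_of_lambda_proof_conditional} outright or justifies the claimed reduction. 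Throughout I will freely rename $\Gamma_0$ and $\Gamma_1$, which is harmless because both the hypothesis of Claim~\ref{claim:existence_of_lambda_proof_conditional} (``$\Gamma_i\models\varphi$ and $\Gamma_{1-i}\models\neg\varphi$ for some $i$'') and its conclusion (the existence of a $\Lambda$-proof) are symmetric in the two disjuncts.

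The two easy cases come first. If $S_0=S_1$, then $\Gamma_0,\Gamma_1\in\DisjOf{\Lambda_{S_0}}$, and since $\Lambda_{S_0}$ has exactly the shape to which Claim~\ref{claim:existence_of_lambda_proof_conditional} applies with parameter $m$ (its leading fixed symbol being $S_0$), the induction hypothesis already furnishes a $\Lambda$-proof and there is nothing more to do in this branch. If instead $P_=\in\{S_0,S_1\}$, then after possibly swapping the names of $\Gamma_0$ and $\Gamma_1$ we may take $S_1=P_=$ and $S_0=P$ with $P\in\{<,>\}$, which is precisely the asserted conclusion.

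The substance is in the remaining case $\{S_0,S_1\}=\{P_<,P_>\}$. Here I would first observe that $\Lambda_=$ is satisfiable: satisfiability of $\Gamma_0$ and $\Gamma_1$ makes $\Lambda\cup\{p_{m+1}^{Q_{m+1}},\dots,p_k^{Q_k}\}\cup\{p_m^<\}$ and $\Lambda\cup\{p_{m+1}^{Q_{m+1}},\dots,p_k^{Q_k}\}\cup\{p_m^>\}$ both satisfiable, so Lemma~\ref{lemma:predicate_convexity}, applied with its predicate set taken to be $\Lambda\cup\{p_{m+1}^{Q_{m+1}},\dots,p_k^{Q_k}\}$ and $p:=p_m^=$, yields satisfiability of $\Lambda_=$. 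By Claim~\ref{claim:wlog_lambda_s_phi_true_or_false}, $\Lambda_=\models\varphi$ or $\Lambda_=\models\neg\varphi$, and since $\Lambda_=$ is satisfiable we may fix a disjunct $\Omega\in\DisjOf{\Lambda_=}$, which inherits that entailment. After renaming so that $\Gamma_1\models\varphi$ and $\Gamma_0\models\neg\varphi$, I would replace $\Gamma_1$ by $\Omega$ when $\Lambda_=\models\varphi$ and replace $\Gamma_0$ by $\Omega$ when $\Lambda_=\models\neg\varphi$; in either event the new pair still lies in $\DisjOf{\Lambda\cup\{p_{m+1}^{Q_{m+1}},\dots,p_k^{Q_k}\}}$ (because $\DisjOf{\Lambda_=}$ is contained in it), still satisfies the entailment hypothesis, and now consists of one disjunct in $\DisjOf{\Lambda_=}$ and one in $\DisjOf{\Lambda_P}$ for some $P\in\{<,>\}$ --- i.e., the reduction claimed holds.

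The only obstacle I anticipate is bookkeeping: verifying after each renaming and each replacement that the data ``$\Gamma_i\models\varphi$, $\Gamma_{1-i}\models\neg\varphi$ for some $i$'' survives, and that the substituted $\Omega$ is genuinely a disjunct of the outer predicate set $\Lambda\cup\{p_{m+1}^{Q_{m+1}},\dots,p_k^{Q_k}\}$ and not merely of $\Lambda_=$. There is essentially no geometric content beyond Lemma~\ref{lemma:predicate_convexity}: it is exactly what prevents the strict-strict configuration $\{S_0,S_1\}=\{P_<,P_>\}$ from ``jumping over'' the equality case $\Lambda_=$, which is what lets us always pull one of the two disjuncts onto the equality hyperplane and thereby reduce to the case $P_=\in\{S_0,S_1\}$.
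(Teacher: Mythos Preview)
Your proof is correct and follows essentially the same route as the paper: introduce $S_0,S_1$ with $\Gamma_j\in\DisjOf{\Lambda_{S_j}}$, dispose of $S_0=S_1$ via the induction hypothesis, handle $P_=\in\{S_0,S_1\}$ by a swap, and in the remaining case $\{S_0,S_1\}=\{P_<,P_>\}$ invoke Lemma~\ref{lemma:predicate_convexity} to get a satisfiable disjunct of $\Lambda_=$ which replaces one of $\Gamma_0,\Gamma_1$. The only cosmetic difference is that the paper justifies the entailment of the new disjunct via Lemma~\ref{lemma:every_disj_either_true_or_false} rather than via Claim~\ref{claim:wlog_lambda_s_phi_true_or_false}, but either works.
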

\begin{proof}
	Let $ S_0, S_1 $ be predicate symbols such that $ \Gamma_0 \in \DisjOf{\Lambda_{S_0}} $ and $ \Gamma_1 \in \DisjOf{\Lambda_{S_1}} $. We can assume that $ S_0 \neq S_1 $ because otherwise Claim~\ref{claim:existence_of_lambda_proof_conditional} immediately follows from the induction hypothesis. If $ \{S_0, S_1\} = \{<, >\} $, then $ \Lambda_< $ and $ \Lambda_> $ are both satisfiable (because $ \Gamma_0 \models \Lambda_{S_0} $ and $ \Gamma_1 \models \Lambda_{S_1} $), meaning that by Lemma~\ref{lemma:predicate_convexity}, $ \Lambda_= $ is satisfiable. Hence, $ \Lambda_= $ must have a satisfiable disjunct. Depending on whether this disjunct entails $ \varphi $ or $ \neg\varphi $ (no other scenario is possible by Lemma~\ref{lemma:every_disj_either_true_or_false}), we redefine $ \Lambda_0 $ or $ \Lambda_1 $ to be that disjunct, while ensuring that $ \Gamma_i \models \neg\varphi $ and $ \Gamma_{1-i} \models \varphi $ remains true (for some $ i \in \{0, 1\} $).
\end{proof}

Let $ P \in \{<, >\} $ be the predicate and $ \Gamma_1 \in \DisjOf{\Lambda_=} $, $ \Gamma_0 \in \DisjOf{\Lambda_P} $ be the disjuncts provided by Claim~\ref{claim:wlog_lambda_i_disj_lambdaeq_lambda_1mi_disj_lambdap}. \begin{figure}[t]
	\centering
	\begin{tikzpicture}[scale=\zbtikzaggressivescaling]
	
	% top left corner of Theta region
	\coordinate (thetatl) at (-4.5, 3);
	% bottom right corner of Theta region
	\coordinate (thetabr) at (4.5, -3);
	
	% calculate the remaining corners
	% top right corner of Theta region
	\coordinate (thetatr) at ($ (thetatl-|thetabr) $);
	% bottom left corner of Theta region
	\coordinate (thetabl) at ($ (thetabr-|thetatl) $);
	
	% left and right separating line positions
	\coordinate (sepleftposx) at (-1.5, 0);
	\coordinate (seprightposx) at (1.5, 0);
	
	\coordinate (septl) at ($ (thetatl-|sepleftposx) $);
	\coordinate (sepbl) at ($ (thetabl-|sepleftposx) $);
	
	\coordinate (septr) at ($ (thetatl-|seprightposx) $);
	\coordinate (sepbr) at ($ (thetabl-|seprightposx) $);
	
	% Gamma_0
	\coordinate (gamma0topy) at (0, 2);
	\coordinate (gamma0boty) at (0, 1);
	% Gamma_1
	\coordinate (gamma1topy) at (0, -1);
	\coordinate (gamma1boty) at (0, -2);
	
	% green and red backgrounds
	\fill[OliveGreen, opacity=0.25] (thetatl) -- (septl) -- (sepbl) -- (thetabl) -- cycle;
	\fill[red, opacity=0.25] (septl) -- (septr) -- (sepbr) -- (sepbl) -- cycle;
	\fill[OliveGreen, opacity=0.25] (septr) -- (thetatr) -- (thetabr) -- (sepbr) -- cycle;
	
	% separating lines
	\draw[thick] (septl) -- (sepbl);
	\draw[thick] (septr) -- (sepbr);
	
	% overall border
	\draw[thick] (thetatl) -- (thetatr) -- (thetabr) -- (thetabl) -- cycle;
	
	% Gamma_0
	\coordinate (gamma0tl) at ($ (gamma0topy-|thetatl) $);
	\coordinate (gamma0tr) at ($ (gamma0topy-|septl) $);
	\coordinate (gamma0br) at ($ (gamma0boty-|septl) $);
	\coordinate (gamma0bl) at ($ (gamma0boty-|thetatl) $);
	\draw[thick] (gamma0tl) -- (gamma0tr) -- (gamma0br) -- (gamma0bl) -- cycle;
	% Gamma_1
	\coordinate (gamma1tl) at ($ (gamma1topy-|septl) $);
	\coordinate (gamma1tr) at ($ (gamma1topy-|septr) $);
	\coordinate (gamma1br) at ($ (gamma1boty-|septr) $);
	\coordinate (gamma1bl) at ($ (gamma1boty-|septl) $);
	\draw[thick] (gamma1tl) -- (gamma1tr) -- (gamma1br) -- (gamma1bl) -- cycle;
	
	% Gamma_0 and Gamma_1 labels
	\node[anchor=west] at ($ (gamma0tl)!0.5!(gamma0bl) $) {$ \Gamma_0 $};
	\node[anchor=east] at ($ (gamma1tr)!0.5!(gamma1br) $) {$ \Gamma_1 $};
	
	% entailment labels
	\coordinate(lambdalttc) at ($ (thetatl)!0.5!(septl) $);
	\coordinate(lambdaltbc) at ($ (thetabl)!0.5!(sepbl) $);
	\coordinate(lambdaeqtc) at ($ (septl)!0.5!(septr) $);
	\coordinate(lambdaeqbc) at ($ (sepbl)!0.5!(sepbr) $);
	\coordinate(lambdagttc) at ($ (septr)!0.5!(thetatr) $);
	\coordinate(lambdagtbc) at ($ (sepbr)!0.5!(thetabr) $);
	\draw ($ (lambdalttc)!0.5!(lambdaltbc) $) node [rounded corners, draw=OliveGreen, fill=white] {$ \Lambda_< \models \varphi $};
	\draw ($ (lambdaeqtc)!0.5!(lambdaeqbc) $) node [rounded corners, draw=red, fill=white] {$ \Lambda_= \models \neg\varphi $};
	\draw ($ (lambdagttc)!0.5!(lambdagtbc) $) node [rounded corners, draw=OliveGreen, fill=white] {$ \Lambda_> \models \varphi $};
	
	% braces and labels
	\draw [decorate, decoration = {brace, raise=5pt, amplitude=5pt, mirror}] (thetabl) --  (sepbl) node[pos=0.5,below=10pt,black]{$ \DisjOf{\Lambda_<} $};
	\draw [decorate, decoration = {brace, raise=5pt, amplitude=5pt, mirror}] (sepbl) --  (sepbr) node[pos=0.5,below=10pt,black]{$ \DisjOf{\Lambda_=} $};
	\draw [decorate, decoration = {brace, raise=5pt, amplitude=5pt, mirror}] (sepbr) --  (thetabr) node[pos=0.5,below=10pt,black]{$ \DisjOf{\Lambda_>} $};
	
\end{tikzpicture}
	\caption{Venn diagram illustrating the way we partition $ \DisjOf{\Lambda \cup \{p_{m+1}^{Q_{m+1}}, \dots, p_k^{Q_k}\}} $ into groups of disjuncts and the assumption we without loss of generality make about $ \Gamma_0 $ and $ \Gamma_1 $ in Claim~\ref{claim:wlog_lambda_i_disj_lambdaeq_lambda_1mi_disj_lambdap}. The regions labeled $ \Gamma_0 $ and $ \Gamma_1 $ in the diagram are to be interpreted as elements (i.e., not subsets) of $ \DisjOf{\Lambda_<} $ and $ \DisjOf{\Lambda_=} $, respectively. The colors depict a possible way disjuncts of $ \Lambda $ can entail $ \varphi $ or $ \neg\varphi $, in case Claim~\ref{claim:existence_of_lambda_proof_conditional} does not already follow from the induction hypothesis by Claim~\ref{claim:wlog_lambda_s_phi_true_or_false}. More precisely, a disjunct entails $ \varphi $ (resp. $ \neg\varphi $) if it is located in a region colored green (resp. red).}
	\label{fig:existence_of_lambda_proof_inductive_step}
\end{figure} A way of thinking about the disjuncts of $ \Lambda_S $ and about what Claims \ref{claim:wlog_lambda_s_phi_true_or_false} and \ref{claim:wlog_lambda_i_disj_lambdaeq_lambda_1mi_disj_lambdap} intuitively say is visualized in Figure~\ref{fig:existence_of_lambda_proof_inductive_step}. Applying Claim~\ref{claim:wlog_lambda_s_phi_true_or_false} yields the following property, which is crucial for the correctness of the $ \Lambda $-proof we are about to construct.

\begin{zbclaim}
	\label{claim:lambda_eq_disagrees_with_lambda_p_on_phi_entailment}
	Precisely one of the following two statements is true. \begin{itemize}
		\item $ \Lambda_= \models \varphi $ and $ \Lambda_P \models \neg\varphi $
		\item $ \Lambda_= \models \neg\varphi $ and $ \Lambda_P \models \varphi $
	\end{itemize}
\end{zbclaim}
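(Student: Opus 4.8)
The plan is to deduce the claim directly from Claim~\ref{claim:wlog_lambda_s_phi_true_or_false} together with the structural hypotheses on $ \Gamma_0 $ and $ \Gamma_1 $ coming from Claim~\ref{claim:wlog_lambda_i_disj_lambdaeq_lambda_1mi_disj_lambdap}. First I would record that $ \Lambda_= $ and $ \Lambda_P $ are both satisfiable: indeed, $ \Gamma_1 \models \Lambda_= $ and $ \Gamma_0 \models \Lambda_P $, and $ \Gamma_0, \Gamma_1 $ are disjuncts, hence satisfiable by definition. Consequently neither $ \Lambda_= $ nor $ \Lambda_P $ can entail both $ \varphi $ and $ \neg\varphi $; this already secures the ``precisely one'' part of the statement, so it will remain only to show that one of the two listed alternatives holds.

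Next I would invoke Claim~\ref{claim:wlog_lambda_s_phi_true_or_false}, which lets us assume without loss of generality that each of $ \Lambda_= $ and $ \Lambda_P $ entails $ \varphi $ or entails $ \neg\varphi $. Combined with the satisfiability observed above, this means that each of $ \Lambda_= $, $ \Lambda_P $ entails \emph{exactly} one of $ \varphi $, $ \neg\varphi $. Thus the whole task reduces to showing that $ \Lambda_= $ and $ \Lambda_P $ entail different ones, and this is where the hypotheses $ \Gamma_i \models \varphi $, $ \Gamma_{1-i} \models \neg\varphi $ enter, together with $ \Gamma_1 \models \Lambda_= $ and $ \Gamma_0 \models \Lambda_P $.

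I would then split on $ i \in \{0, 1\} $. If $ i = 1 $, then $ \Gamma_1 \models \Lambda_= \wedge \varphi $ shows that $ \Lambda_= \wedge \varphi $ is satisfiable, so $ \Lambda_= \not\models \neg\varphi $, and hence $ \Lambda_= \models \varphi $ by the previous paragraph; symmetrically $ \Gamma_0 \models \Lambda_P \wedge \neg\varphi $ forces $ \Lambda_P \models \neg\varphi $, yielding the first listed alternative. If $ i = 0 $ the argument is completely symmetric and yields the second alternative. In both cases, together with the satisfiability of $ \Lambda_= $, exactly one of the two statements holds. I do not anticipate a genuine obstacle here: the proof is essentially careful bookkeeping of what the two preceding claims deliver, and the only point requiring attention is not to confuse ``$ \Lambda_= \not\models \neg\varphi $'' with ``$ \Lambda_= \models \varphi $'' without explicitly appealing to Claim~\ref{claim:wlog_lambda_s_phi_true_or_false}.
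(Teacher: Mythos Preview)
Your proof is correct and follows essentially the same approach as the paper: both use Claim~\ref{claim:wlog_lambda_s_phi_true_or_false} together with $\Gamma_1 \in \DisjOf{\Lambda_=}$, $\Gamma_0 \in \DisjOf{\Lambda_P}$ and the hypothesis $\Gamma_i \models \varphi$, $\Gamma_{1-i} \models \neg\varphi$ to conclude that $\Lambda_=$ and $\Lambda_P$ entail opposite formulas. You are slightly more explicit than the paper about the ``precisely one'' part (via satisfiability of $\Lambda_=$ and $\Lambda_P$), but the core argument is identical.
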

\begin{proof}
	By Claim~\ref{claim:wlog_lambda_s_phi_true_or_false}, $ \Gamma_1 \models \varphi $ is equivalent to $ \Lambda_= \models \varphi $ and, similarly, $ \Gamma_1 \models \neg\varphi $ is equivalent to $ \Lambda_= \models \neg\varphi $. Likewise, $ \Gamma_0 \models \varphi $ is equivalent to $ \Lambda_P \models \varphi $ and $ \Gamma_0 \models \neg\varphi $ is equivalent to $ \Lambda_P \models \neg\varphi $. Since $ \Gamma_i \models \neg\varphi $ and $ \Gamma_{1-i} \models \varphi $ holds for some $ i \in \{0, 1\} $, the claim follows.
\end{proof}

Having done the above preparatory work, we now start with the actual construction of the $ \Lambda $-proof of non-$ \Pi $-decomposability. The general idea is to define the two disjuncts $ \Lambda_0' $ and $ \Lambda_1' $ in a bottom-up and greedy manner, by starting with $ \Lambda \cup \{p_{m+1}^{Q_{m+1}}, \dots, p_k^{Q_k}\} $ and iteratively adding certain predicates from $ \Gamma $ with the predicate symbol replaced in a way which guarantees that in the next iteration we will still be able to construct the $ \Lambda $-proof. More precisely, in order to obtain a disjunct of $ \Lambda \cup \{p_{m+1}^{Q_{m+1}}, \dots, p_k^{Q_k}\} $, by definition it suffices to keep adding $ p_i^{Q_i'} $ to $ \Lambda $ for all $ i \in \{1, \dots, m\} $, where $ Q_i' \in \mathbb{P} $ is some suitable predicate.

We now give the details of the construction. Define \begin{align*}
	\Upsilon^0 &:= \Lambda \cup \{p_{m+1}^{Q_{m+1}}, \dots, p_k^{Q_k}\} \\
	\Upsilon_P^0 &:= \varnothing \\
	\Upsilon_=^0 &:= \varnothing
\end{align*} For $ i \in \{1, \dots, m\} $, we construct $ \Upsilon^i $ assuming $ \Upsilon^{i-1} $ is already defined, so that the following invariants \ref{claim:existence_of_lambda_proof_conditional:inv_a} and \ref{claim:existence_of_lambda_proof_conditional:inv_b} hold: \begin{enumerate}[label=\textbf{(I\arabic*)}, leftmargin=*]
	\item \label{claim:existence_of_lambda_proof_conditional:inv_a} $ \Upsilon^i \wedge p_m^P $ and $ \Upsilon^i \wedge p_m^= $ are both satisfiable
	% \vspace{1pt}
	\item \label{claim:existence_of_lambda_proof_conditional:inv_b} $ \Upsilon^i \wedge p_m^P \models \Upsilon_P^i $ and $ \Upsilon^i \wedge p_m^= \models \Upsilon_=^i $
\end{enumerate}

By Claim~\ref{claim:wlog_lambda_i_disj_lambdaeq_lambda_1mi_disj_lambdap}, invariant \ref{claim:existence_of_lambda_proof_conditional:inv_a} holds for the base case $ i = 0 $. It is trivial that $ \Upsilon_P^0 = \varnothing $ and $ \Upsilon_=^0 = \varnothing $ establish invariant \ref{claim:existence_of_lambda_proof_conditional:inv_b}.

We now give and discuss the inductive construction of $ \Upsilon^i $, $ \Upsilon_P^i $ and $ \Upsilon_=^i $ for $ i \in \{1, \dots, m\} $, assuming these predicate sets are already constructed for $ i - 1 $ so that the invariants hold. \begin{figure}[t]
	\if\zbthesismode1
	\newenvironment{lambdaproofcasesplitfigure}{\begin{tabular}{|P{2.75em}|P{6em}|P{6em}|}}{\end{tabular}}
	\else
	\newenvironment{lambdaproofcasesplitfigure}{\begin{tabular}{|P{3em}|P{7em}|P{7em}|}}{\end{tabular}}
	\fi
	\centering
	\begin{subfigure}{0.49\textwidth}
		\centering
		\begin{lambdaproofcasesplitfigure}
			\hline
			\multirow{2}*{$ Q \in \mathbb{P} $} & \multicolumn{2}{|c|}{Satisfiability} \\ \cline{2-3}
			\rule{0pt}{3ex} & $ \Upsilon^{i-1} \wedge p_i^Q \wedge p_m^P $ & $ \Upsilon^{i-1} \wedge p_i^Q \wedge p_m^= $ \\
			\hline\hline
			$ < $ & \cellcolor{LightCyan}sat & \cellcolor{LightCyan}sat \\
			$ = $ & unsat & sat \\
			$ > $ & sat & unsat \\ \hline
		\end{lambdaproofcasesplitfigure}
		\caption{\label{fig:lambda_proof_case_split:case_1}}
	\end{subfigure} \hfill
	\begin{subfigure}{0.49\textwidth}
		\centering
		\begin{lambdaproofcasesplitfigure}
			\hline
			\multirow{2}*{$ Q \in \mathbb{P} $} & \multicolumn{2}{|c|}{Satisfiability} \\ \cline{2-3}
			\rule{0pt}{3ex} & $ \Upsilon^{i-1} \wedge p_i^Q \wedge p_m^P $ & $ \Upsilon^{i-1} \wedge p_i^Q \wedge p_m^= $ \\
			\hline\hline
			$ < $ & sat & \cellcolor{LightCyan}unsat \\
			$ = $ & \cellcolor{LightCyan}unsat & sat \\
			$ > $ & unsat & sat \\ \hline
		\end{lambdaproofcasesplitfigure}
		\caption{\label{fig:lambda_proof_case_split:case_2}}
	\end{subfigure} \\[1ex]
	\begin{subfigure}{0.49\textwidth}
		\centering
		\begin{lambdaproofcasesplitfigure}
			\hline
			\multirow{2}*{$ Q \in \mathbb{P} $} & \multicolumn{2}{|c|}{Satisfiability} \\ \cline{2-3}
			\rule{0pt}{3ex} & $ \Upsilon^{i-1} \wedge p_i^Q \wedge p_m^P $ & $ \Upsilon^{i-1} \wedge p_i^Q \wedge p_m^= $ \\
			\hline\hline
			$ < $ & \cellcolor{LightCyan}unsat & sat \\
			$ = $ & \cellcolor{LightCyan}unsat & sat \\
			$ > $ & \cellcolor{LightCyan}unsat & sat \\ \hline
		\end{lambdaproofcasesplitfigure}
		\caption{\label{fig:lambda_proof_case_split:case_3}}
	\end{subfigure} \hfill
	\begin{subfigure}{0.49\textwidth}
		\centering
		\begin{lambdaproofcasesplitfigure}
			\hline
			\multirow{2}*{$ Q \in \mathbb{P} $} & \multicolumn{2}{|c|}{Satisfiability} \\ \cline{2-3}
			\rule{0pt}{3ex} & $ \Upsilon^{i-1} \wedge p_i^Q \wedge p_m^P $ & $ \Upsilon^{i-1} \wedge p_i^Q \wedge p_m^= $ \\
			\hline\hline
			$ < $ & sat & \cellcolor{LightCyan}unsat \\
			$ = $ & sat & \cellcolor{LightCyan}unsat \\
			$ > $ & sat & \cellcolor{LightCyan}unsat \\ \hline
		\end{lambdaproofcasesplitfigure}
		\caption{\label{fig:lambda_proof_case_split:case_4}}
	\end{subfigure}
	\caption{Four examples of possible ways $ \Upsilon^{i-1} \wedge p_i^Q \wedge p_m^P $ and $ \Upsilon^{i-1} \wedge p_i^Q \wedge p_m^= $ may be satisfiable (``sat'') or unsatisfiable (``unsat'') for various $ Q \in \mathbb{P} $, illustrating the intuition behind the proof by cases. More precisely, examples \ref{fig:lambda_proof_case_split:case_1}, \ref{fig:lambda_proof_case_split:case_2}, \ref{fig:lambda_proof_case_split:case_3} and \ref{fig:lambda_proof_case_split:case_4} fall into cases 1, 2, 3 and 4, respectively. The highlighted cells are the ones that determine the case handling the corresponding example.}
	\label{fig:lambda_proof_case_split}
\end{figure} We distinguish between the following four cases; for each of them, an example of a scenario handled by it is given in Figure~\ref{fig:lambda_proof_case_split}.

\begin{enumerate}[label=\textbf{Case \arabic*.}, leftmargin=*]
	\item Suppose there exists $ Q \in \mathbb{P} $ such that $ \Upsilon^{i-1} \wedge p_i^Q \wedge p_m^P $ and $ \Upsilon^{i-1} \wedge p_i^Q \wedge p_m^= $ are both satisfiable. For a concrete example, see Figure~\ref{fig:lambda_proof_case_split:case_1} wherein this case is visualized for $ Q = P_< $. We define \[
		\Upsilon^i := \Upsilon^{i-1} \cup \{p_i^Q\}
	\] and leave $ \Upsilon_P^i := \Upsilon_P^{i-1} $, $ \Upsilon_=^i := \Upsilon_=^{i-1} $ unchanged. Observe that this maintains the invariants \ref{claim:existence_of_lambda_proof_conditional:inv_a} and \ref{claim:existence_of_lambda_proof_conditional:inv_b}.
\end{enumerate}

Note that if we did not fall into case 1, then for all $ Q \in \mathbb{P} $, $ \Upsilon^{i-1} \wedge p_i^Q \wedge p_m^P $ or $ \Upsilon^{i-1} \wedge p_i^Q \wedge p_m^= $ must be unsatisfiable. We split this case into the following three subcases.

\begin{enumerate}[label=\textbf{Case \arabic*.}, leftmargin=*]
	\setcounter{enumi}{1}
	\item Suppose there exist $ Q_P, Q_= \in \mathbb{P} $ such that $ \Upsilon^{i-1} \wedge p_i^{Q_P} \wedge p_m^P $ and $ \Upsilon^{i-1} \wedge p_i^{Q_=} \wedge p_m^= $ are both unsatisfiable. For a concrete example, see Figure~\ref{fig:lambda_proof_case_split:case_2} wherein this case is visualized for $ Q_P = P_= $ and $ Q_= = P_< $. Expressing unsatisfiability as an entailment yields \begin{align*}
		\Upsilon^{i-1} \cup \{p_m^P\} &\models \neg p_i^{Q_P} \equiv \bigvee_{Q_P' \in \mathbb{P} \setminus \{Q_P\}} p_i^{Q_P'} \\
		\Upsilon^{i-1} \cup \{p_m^=\} &\models \neg p_i^{Q_=} \equiv \bigvee_{Q_=' \in \mathbb{P} \setminus \{Q_=\}} p_i^{Q_='}
	\end{align*} Hence, by the Convexity Lemma~\ref{lemma:convexity} there exist predicates $ Q_P \neq Q_P' \in \mathbb{P} $ and $ Q_= \neq Q_=' \in \mathbb{P} $ such that \begin{align}
		\label{eqn:upsiloniminus1_union_pmp_entails_piqpprime}
		\Upsilon^{i-1} \cup \{p_m^P\} &\models p_i^{Q_P'} \\
		\label{eqn:upsiloniminus1_union_pmeq_entails_piqeqprime}
		\Upsilon^{i-1} \cup \{p_m^=\} &\models p_i^{Q_='}
	\end{align} \begin{enumerate}[label=\textbf{Case 2.\arabic*.}, leftmargin=*]
		\item Suppose $ Q_P' = Q_=' $. Then we define \[
			\Upsilon^i := \Upsilon^{i-1} \cup \{p_i^{Q_P'}\}
		\] and leave $ \Upsilon_P^i := \Upsilon_P^{i-1} $, $ \Upsilon_=^i := \Upsilon_=^{i-1} $ unchanged.
		\item Suppose $ Q_P' \neq Q_=' $. Then we set \begin{align*}
			\Upsilon^i &:= \Upsilon^{i-1} \\
			\Upsilon_P^i &:= \Upsilon_P^{i-1} \cup \{p_i^{Q_P'}\} \\
			\Upsilon_=^i &:= \Upsilon_=^{i-1} \cup \{p_i^{Q_='}\}
		\end{align*}
	\end{enumerate} Clearly, (\ref{eqn:upsiloniminus1_union_pmp_entails_piqpprime}) and (\ref{eqn:upsiloniminus1_union_pmeq_entails_piqeqprime}) ensure that in both cases the invariants \ref{claim:existence_of_lambda_proof_conditional:inv_a} and \ref{claim:existence_of_lambda_proof_conditional:inv_b} are maintained.
\end{enumerate}

In essence, case 1 handles all the possible satisfiability tables (see Figure~\ref{fig:lambda_proof_case_split}) containing a row where both entries are ``sat''. If this is not the case, then every row must have an ``unsat'' entry. Case 2 handles the event when a pair of these ``unsat'' entries is in different columns. Hence, the only scenarios remaining to be handled are those where all ``unsat'' entries are in the same column (see Figures \ref{fig:lambda_proof_case_split:case_3} and \ref{fig:lambda_proof_case_split:case_4}). These scenarios are handled by the following cases 3 and 4.

\begin{enumerate}[label=\textbf{Case \arabic*.}, leftmargin=*]
	\setcounter{enumi}{2}
	\item Suppose that for all $ Q_P \in \mathbb{P} $, $ \Upsilon^{i-1} \wedge p_i^{Q_P} \wedge p_m^P $ is unsatisfiable (see Figure~\ref{fig:lambda_proof_case_split:case_3}). But then $ \Upsilon^{i-1} \wedge p_m^P $ is unsatisfiable, which contradicts the assumption that invariant \ref{claim:existence_of_lambda_proof_conditional:inv_a} holds after the $ (i - 1) $-th step of the construction. Hence, this case is not possible.
	\item Suppose that for all $ Q_= \in \mathbb{P} $, $ \Upsilon^{i-1} \wedge p_i^{Q_=} \wedge p_m^= $ is unsatisfiable (see Figure~\ref{fig:lambda_proof_case_split:case_4}). Then, similarly to case 3 above, $ \Upsilon^{i-1} \wedge p_m^= $ is unsatisfiable, which is a contradiction to invariant \ref{claim:existence_of_lambda_proof_conditional:inv_a}, so this case cannot occur.
\end{enumerate}

To sum up, at this point we have shown the existence of $ \Upsilon^m, \Upsilon_P^m $ and $ \Upsilon_=^m $ satisfying invariants \ref{claim:existence_of_lambda_proof_conditional:inv_a} and \ref{claim:existence_of_lambda_proof_conditional:inv_b}. Moreover, it follows from the above construction that the predicate sets\begin{align}
	\label{eqn:suff_cond_true_false_disj_imply_nondec_proof_p}
	\{p_m^P\} \cup \Upsilon^m \cup \Upsilon_P^m &\in \DisjOf{\Lambda_P} \subseteq \DisjOf{\Lambda} \\
	\label{eqn:suff_cond_true_false_disj_imply_nondec_proof_eq}
	\{p_m^=\} \cup \Upsilon^m \cup \Upsilon_=^m &\in \DisjOf{\Lambda_=} \subseteq \DisjOf{\Lambda}
\end{align} are indeed disjuncts of $ \Lambda $. More precisely, this is the case because we started with $ \Upsilon^0 = \Lambda \cup \{p_{m+1}^{Q_{m+1}}, \dots, p_k^{Q_k}\} $ and for all $ i \in \{1, \dots, m\} $ we added $ p_i^{Q_i'} $ for some $ Q_i' \in \mathbb{P} $ either to $ \Upsilon^i $ or to both $ \Upsilon_=^i $ and $ \Upsilon_P^i $ depending on the case we considered. All other predicates needed to obtain a disjunct are already present in $ \PiSimp{\Gamma} \subseteq \Theta \subseteq \Lambda $ and $ \{p_{m+1}^{Q_{m+1}}, \dots, p_k^{Q_k}\} $. The satisfiability of both $ \{p_m^P\} \cup \Upsilon^m \cup \Upsilon_P^m $ and $ \{p_m^=\} \cup \Upsilon^m \cup \Upsilon_=^m $ follows from the \ref{claim:existence_of_lambda_proof_conditional:inv_a} and \ref{claim:existence_of_lambda_proof_conditional:inv_b} invariants we established. Observe also that \ref{claim:existence_of_lambda_proof_conditional:inv_a} and \ref{claim:existence_of_lambda_proof_conditional:inv_b} guarantee that $ p_m^P $ (resp. $ p_m^= $) must have been added to $ \Upsilon_P^m $ (resp. $ \Upsilon_=^m $), so in (\ref{eqn:suff_cond_true_false_disj_imply_nondec_proof_p}) and (\ref{eqn:suff_cond_true_false_disj_imply_nondec_proof_eq}) the union with $ \{p_m^P\} $ and $ \{p_m^=\} $ is redundant and can be removed.

We are now ready to construct the following $ \Lambda $-proof $ (\Lambda_0, \Lambda_1, \Lambda_0', \Lambda_1') $ of non-$ \Pi $-decomposability of $ \varphi $: \begin{align}
	\label{eqn:completeness_proof_def_lambda0_prime}
	\Lambda_0' &:= \Upsilon^m \cup \Upsilon_P^m \in \DisjOf{\Lambda_P} \subseteq \DisjOf{\Lambda} \\
	\label{eqn:completeness_proof_def_lambda1_prime}
	\Lambda_1' &:= \Upsilon^m \cup \Upsilon_=^m \in \DisjOf{\Lambda_=} \subseteq \DisjOf{\Lambda} \\
	\label{eqn:completeness_proof_def_lambda0}
	\Lambda_0 &:= \Upsilon^m \cup \{p_m^P\} \subseteq \Lambda_0' \models \Lambda_0 \\
	\label{eqn:completeness_proof_def_lambda1}
	\Lambda_1 &:= \Upsilon^m \cup \{p_m^=\} \subseteq \Lambda_1' \models \Lambda_1
\end{align} By invariant \ref{claim:existence_of_lambda_proof_conditional:inv_b} (and the fact that $ \Lambda_0' \models \Lambda_0 $, $ \Lambda_1' \models \Lambda_1 $), it holds that $ \Lambda_0 \equiv \Lambda_0' $ and $ \Lambda_1 \equiv \Lambda_1' $. We now prove that all remaining properties of a $ \Lambda $-proof $ (\Lambda_0, \Lambda_1, \Lambda_0', \Lambda_1') $ are indeed satisfied.

\begin{zbclaim}
	\label{claim:lambda_0_lambda_1_pi_complex}
	$ \Lambda_0 $ and $ \Lambda_1 $ are both $ \Pi $-complex.
\end{zbclaim}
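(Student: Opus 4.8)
The plan is to reduce $\Pi$-complexity of $\Lambda_0$ and $\Lambda_1$ to that of the disjuncts $\Lambda_0'$ and $\Lambda_1'$, and then treat $\Lambda_0'$, $\Lambda_1'$ exactly as in the proof of Claim~\ref{claim:lambdas1_and_lambdas2_both_picomplex}. First I would recall that, immediately before the present claim, we have already observed $\Lambda_0 \equiv \Lambda_0'$ and $\Lambda_1 \equiv \Lambda_1'$ (using invariant \ref{claim:existence_of_lambda_proof_conditional:inv_b} together with $\Lambda_0' \models \Lambda_0$ and $\Lambda_1' \models \Lambda_1$). Being $\Pi$-complex is a purely semantic property: it requires satisfiability and that neither $X$ nor $Y$ is fixed, and "$\Gamma$ fixes a variable" is defined via $\ModelsOf(\Gamma)$. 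Hence $\Pi$-complexity is invariant under logical equivalence, and it suffices to show that $\Lambda_0'$ and $\Lambda_1'$ are $\Pi$-complex.

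Next I would unfold the structure of $\Lambda_0'$ and $\Lambda_1'$. By (\ref{eqn:completeness_proof_def_lambda0_prime}) and (\ref{eqn:completeness_proof_def_lambda1_prime}), both $\Lambda_0'$ and $\Lambda_1'$ belong to $\DisjOf{\Lambda}$, so each is satisfiable and has the form $\Lambda \cup \Omega$ for some $\Omega \in \DisjTrue$. By Lemma~\ref{lemma:psi_properties}, the set $\Lambda$ is of the form $\Theta \cup \Xi$, where $\Theta$ is the predicate set computed in the (possibly nested) call to $\cover$ that produced $\Lambda$, and $\Xi$ consists solely of strict inequality predicates (the separating inequalities split off from $\Upsilon$ when forming $D$ at Line~\ref{alg:cover:line:d_set}). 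Consequently, writing $\Xi' := \Xi \setminus (\Theta \cup \Omega)$, we have $\Lambda_0' = (\Theta \cup \Omega) \cup \Xi'$ with $\Xi'$ a set of strict inequalities, and $\Theta \cup \Omega$ is a disjunct of $\Theta$: it is a subset of the satisfiable set $\Lambda_0'$, hence satisfiable, hence $\Theta \cup \Omega \in \DisjOf{\Theta}$; moreover $(\Theta \cup \Omega) \wedge \Lambda$ is satisfiable because $(\Theta \cup \Omega) \cup \Lambda = \Lambda_0'$.

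Then I would apply Lemma~\ref{lemma:psi_properties}~\ref{lemma:psi_properties:c} to conclude that $\Theta \cup \Omega$ is $\Pi$-complex, and add the predicates of $\Xi'$ to it one at a time: every intermediate set is a subset of the satisfiable set $\Lambda_0'$ and is therefore satisfiable, so Lemma~\ref{lemma:adding_neq_predicates_cannot_make_pi_complex_set_pi_simple} applies at each step and, inductively, yields that $\Lambda_0'$ is $\Pi$-complex. The identical argument applied to $\Lambda_1'$ completes the proof. The only point that needs a little care — and which I expect to be the main (modest) obstacle — is checking that $\Lambda$ really splits as $\Theta$ together with a set of strict inequalities and that the $\Theta$ witnessing this split is exactly the one for which Lemma~\ref{lemma:psi_properties}~\ref{lemma:psi_properties:c} is stated for this particular $\Lambda$; this is precisely what makes the hypothesis of \ref{lemma:psi_properties:c} (that $\Theta \cup \Omega$ agrees on a model with $\Lambda$) available, since $\Theta \cup \Omega \subseteq \Lambda_0'$ and $\Lambda \subseteq \Lambda_0'$ with $\Lambda_0'$ satisfiable.
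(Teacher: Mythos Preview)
Your proposal is correct and follows essentially the same route as the paper's proof: reduce to $\Lambda_0',\Lambda_1'$ via the equivalences $\Lambda_i\equiv\Lambda_i'$, write each $\Lambda_i'\in\DisjOf{\Lambda}$ as a disjunct of $\Theta$ together with a set of strict inequalities, invoke Lemma~\ref{lemma:psi_properties}\ref{lemma:psi_properties:c} for the disjunct, and then apply Lemma~\ref{lemma:adding_neq_predicates_cannot_make_pi_complex_set_pi_simple} inductively. The paper is slightly terser about the decomposition of $\Lambda$ (it just notes $\Theta\subseteq\Lambda$ and that the remaining predicates are strict inequalities), but the argument is the same.
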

\begin{proof}
	Let $ \Theta $ be the corresponding set computed in the call to the covering algorithm which produced $ \Lambda $. Since $ \Lambda_0', \Lambda_1' \in \DisjOf{\Lambda} $ and $ \Theta \subseteq \Lambda $, observe that $ \Lambda_0' $ and $ \Lambda_1' $ can each be written as a union of a predicate set without equalities and a disjunct of $ \Theta $, which is $ \Pi $-complex by Lemma~\ref{lemma:psi_properties} \ref{lemma:psi_properties:c}. Since adding strict inequalities cannot make a predicate set $ \Pi $-simple if the resulting set is satisfiable (Lemma~\ref{lemma:adding_neq_predicates_cannot_make_pi_complex_set_pi_simple}), applying this lemma inductively yields that $ \Lambda_0' $ and $ \Lambda_1' $ are both $ \Pi $-complex. Hence, the claim follows (since $ \Lambda_0 \equiv \Lambda_0' $ and $ \Lambda_1 \equiv \Lambda_1' $).
\end{proof}

\noindent We now show that $ \Lambda_0 $ and $ \Lambda_1 $ have the same set of $ Z $-dependencies.

\begin{zbclaim}
	\label{claim:lambda_0_lindep_eq_lambda_1_lindep}
	For all $ Z \in \Pi $ it holds that \[
		\LinDep_{\pi_Z}(\Lambda_0) = \LinDep_{\pi_Z}(\Lambda_1)
	\]
\end{zbclaim}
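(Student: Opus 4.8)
The plan is to transfer the claim to the primed disjuncts $\Lambda_0'$ and $\Lambda_1'$ and then re-run the argument already carried out inside the proof of Claim~\ref{claim:lindep_lambdaeq_eq_lindep_lambdasi}. First, since $\LinDep_{\pi_Z}(\cdot)$ is, by Definition~\ref{def:lindep}, a function of the set of models of its argument only (well-definedness being guaranteed by Lemma~\ref{lemma:lindep_gamma_v1_equals_lindep_gamma_v2}), equivalent predicate sets have the same $Z$-dependencies. Combining this with $\Lambda_0 \equiv \Lambda_0'$ and $\Lambda_1 \equiv \Lambda_1'$ (established right after (\ref{eqn:completeness_proof_def_lambda1})), it suffices to prove $\LinDep_{\pi_Z}(\Lambda_0') = \LinDep_{\pi_Z}(\Lambda_1')$ for every $Z \in \Pi$.

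By (\ref{eqn:completeness_proof_def_lambda0_prime}) and (\ref{eqn:completeness_proof_def_lambda1_prime}) we have $\Lambda_0' \in \DisjOf{\Lambda_P}$ with $P \in \{<, >\}$ and $\Lambda_1' \in \DisjOf{\Lambda_=}$, where the predicate sets $\Lambda_S := \Lambda \cup \{p_m^S, p_{m+1}^{Q_{m+1}}, \dots, p_k^{Q_k}\}$ are exactly those occurring in the proof of Claim~\ref{claim:lindep_lambdaeq_eq_lindep_lambdasi}, with the same $\Lambda \in \Psi_\Gamma$ and the same corresponding set $\Theta$. Since $\Lambda_0'$ and $\Lambda_1'$ are disjuncts they are satisfiable, hence $\Lambda_P$ and $\Lambda_=$ are satisfiable. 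I would then invoke the reasoning inside the proof of Claim~\ref{claim:lindep_lambdaeq_eq_lindep_lambdasi}, which establishes more than its statement: for any $\Omega_= \in \DisjOf{\Lambda_=}$ and any $\Omega_P \in \DisjOf{\Lambda_P}$ one has $\LinDep_{\pi_Z}(\Omega_=) = \LinDep_{\pi_Z}(\Omega_P)$ for all $Z \in \Pi$. Concretely, write $\Omega_= = \Omega'_= \cup (\Omega_= \setminus \Omega'_=)$ and $\Omega_P = \Omega'_P \cup (\Omega_P \setminus \Omega'_P)$ with $\Omega'_=, \Omega'_P \in \DisjOf{\Theta}$ the $\Theta$-disjuncts agreeing with $\Omega_=$ and $\Omega_P$ on the predicates of $\varphi$; then $\Omega_= \setminus \Omega'_=$ and $\Omega_P \setminus \Omega'_P$ contain only strict inequalities, so Theorem~\ref{thm:only_equality_predicates_can_establish_lindep_strong} gives $\LinDep_{\pi_Z}(\Omega_=) = \LinDep_{\pi_Z}(\Omega'_=)$ and $\LinDep_{\pi_Z}(\Omega_P) = \LinDep_{\pi_Z}(\Omega'_P)$, while Lemma~\ref{lemma:psi_properties}~\ref{lemma:psi_properties:b} (applicable because $\Omega'_= \cup \Lambda \subseteq \Omega_=$ and $\Omega'_P \cup \Lambda \subseteq \Omega_P$ are satisfiable) gives $\LinDep_{\pi_Z}(\Omega'_=) = \LinDep_{\pi_Z}(\Omega'_P)$. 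Specializing to $\Omega_= := \Lambda_1'$ and $\Omega_P := \Lambda_0'$ yields the desired equality.

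The only step requiring genuine care — and where I expect to spend the most effort — is verifying that $\Omega_= \setminus \Omega'_=$ and $\Omega_P \setminus \Omega'_P$ consist of strict inequalities only. This rests on two structural facts: (i) the covering $\Lambda$ returned by Algorithm~\ref{alg:cover} has the form $\Theta$ augmented with strict inequalities drawn from some $\Upsilon'$ (Line~\ref{alg:cover:line:return}), so $\Lambda \setminus \Theta$ contains only strict inequalities; and (ii) each forced literal $p_m^S, p_{m+1}^{Q_{m+1}}, \dots, p_k^{Q_k}$ is a choice of predicate symbol for one of the predicates of $\varphi$ — since $p_1, \dots, p_k$ enumerate $\PiComp{\Gamma} \subseteq \Gamma$ and $\Gamma \in \DisjTrue$ — and hence already belongs to the chosen $\Theta$-disjunct $\Omega'_S$. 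Together with $\Theta \subseteq \Lambda$, these imply $\Omega_S \setminus \Omega'_S \subseteq \Lambda \setminus \Theta$ for $S \in \{=, P\}$, which is exactly the decomposition used above. Apart from this bookkeeping the argument is a direct reuse of already-proved machinery, so no fundamentally new obstacle arises; in effect, whoever carried out Claim~\ref{claim:lindep_lambdaeq_eq_lindep_lambdasi} has already done the essential work.
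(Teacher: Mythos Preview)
Your proposal is correct and follows essentially the same approach as the paper: reduce to $\Lambda_0',\Lambda_1'$ via the equivalences, decompose each as a $\Theta$-disjunct together with strict inequalities, and then combine Theorem~\ref{thm:only_equality_predicates_can_establish_lindep_strong} with Lemma~\ref{lemma:psi_properties}~\ref{lemma:psi_properties:b}. The only difference is cosmetic: the paper skips the intermediate framing through $\Lambda_P$ and $\Lambda_=$ and works directly with $\Lambda_0',\Lambda_1'\in\DisjOf{\Lambda}$, but your bookkeeping argument for why the leftover predicates are strict inequalities is exactly what the paper relies on implicitly.
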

\begin{proof}
	This can be shown in a way similar to how we proved Claim~\ref{claim:lindep_lambdaeq_eq_lindep_lambdasi} above. Let $ \Gamma_0, \Gamma_1 \in \DisjOf{\Theta} $ be disjuncts such that \begin{align*}
		\Lambda_0' = \Gamma_0 \cup (\Lambda_0' \setminus \Gamma_0) \\
		\Lambda_1' = \Gamma_1 \cup (\Lambda_1' \setminus \Gamma_1)
	\end{align*} Since $ \Lambda_0' \setminus \Gamma_0 $ and $ \Lambda_1' \setminus \Gamma_1 $ contain only strict inequality predicates, applying Lemma~\ref{lemma:psi_properties} \ref{lemma:psi_properties:b} and Theorem~\ref{thm:only_equality_predicates_can_establish_lindep_strong} (see Appendix~\ref{sec:app:lindep_facts}) yields that \[
		\LinDep_{\pi_Z}(\Lambda_0') \overset{\ref{thm:only_equality_predicates_can_establish_lindep_strong}}{=} \LinDep_{\pi_Z}(\Gamma_0) \overset{\ref{lemma:psi_properties} \ref{lemma:psi_properties:b}}{=} \LinDep_{\pi_Z}(\Gamma_1) \overset{\ref{thm:only_equality_predicates_can_establish_lindep_strong}}{=} \LinDep_{\pi_Z}(\Lambda_1')
	\] holds for all $ Z \in \Pi $. Since $ \Lambda_0 \equiv \Lambda_0' $ and $ \Lambda_1 \equiv \Lambda_1' $, the claim follows.
\end{proof}

Furthermore, note that by construction $ \Lambda_1 $ is $ p_m^= $-next to $ \Lambda_0 $. Finally, by Claim~\ref{claim:lambda_eq_disagrees_with_lambda_p_on_phi_entailment} we are guaranteed to land into precisely one of the following two cases: \begin{itemize}
	\item $ \Lambda_0' \models \Lambda_P \models \varphi $ and $ \Lambda_1' \models \Lambda_= \models \neg\varphi $
	\item $ \Lambda_0' \models \Lambda_P \models \neg\varphi $ and $ \Lambda_1' \models \Lambda_= \models \varphi $
\end{itemize} Hence, for some $ i \in \{0, 1\} $ it must be the case that $ \Lambda_i' \models \varphi $ and $ \Lambda_{1-i}' \models \neg\varphi $. Overall, we conclude that $ (\Lambda_0, \Lambda_1, \Lambda_0', \Lambda_1') $ is a $ \Lambda $-proof of non-$ \Pi $-decomposability of $ \varphi $. This completes the proof of Claim~\ref{claim:existence_of_lambda_proof_conditional} and hence of Theorem~\ref{thm:lambda_proof_soundness_completeness}.

Since $ \Psi_\Gamma $ is a set of DNF terms of $ \psi_\Gamma $, every $ \Lambda \in \Psi_\Gamma $ is a result of exactly one (possibly recursive) call to the covering algorithm. Furthermore, observe that every predicate in $ \Lambda $ either originates in $ \Theta $ or is a result of some iteration of the second loop of the covering algorithm (see Line~\ref{alg:cover:line:second_loop}). Since this loop can make at most exponentially many iterations and $ \Theta $'s representation has polynomial size regardless of the recursive call depth at which $ \Theta $ was computed, it follows that $ \Lambda $ can be encoded using at most exponentially (in the size of $ \varphi $) many bits. Hence, every $ \Lambda $-proof can also be encoded as a string of exponential length. We have shown that $ \Lambda $-proofs are sound and complete (Theorem~\ref{thm:lambda_proof_soundness_completeness}) and verifiable in time polynomial in the size of the proof, so we obtain a $ \NEXP $ upper bound for deciding non-$ \Pi $-decomposability and consequently a $ \coNEXP $ algorithm for determining variable decomposability:

\begin{theorem}
	\label{thm:conexp_upper_bound}
	Given a formula $ \varphi \in \QFLRA $ and a partition $ \Pi $, deciding the $ \Pi $-decomposability of $ \varphi $ is in $ \coNEXP $.
\end{theorem}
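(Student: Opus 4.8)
The plan is to establish the $\coNEXP$ upper bound for the $\Pi$-decomposability problem by combining the proof reduction of Proposition~\ref{prop:reduction_to_binary_partitions} with the soundness and completeness of $\Lambda$-proofs established in Theorem~\ref{thm:lambda_proof_soundness_completeness}. Concretely, I would show that non-$\Pi$-decomposability is in $\NEXP$, which immediately yields that $\Pi$-decomposability is in $\coNEXP$.

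First I would handle the reduction to binary partitions. Given $\varphi$ and an arbitrary partition $\Pi$, apply Proposition~\ref{prop:reduction_to_binary_partitions} to compute in polynomial time a collection $S$ of $O(\log|\Pi|)$ binary partitions such that $\varphi$ is $\Pi$-decomposable if and only if $\varphi$ is $\Pi'$-decomposable for all $\Pi' \in S$. Consequently, $\varphi$ is \emph{not} $\Pi$-decomposable if and only if there exists some $\Pi' \in S$ such that $\varphi$ is not $\Pi'$-decomposable. A nondeterministic machine can guess this $\Pi'$ (there are only polynomially many, so this is cheap), reducing the problem to deciding non-$\Pi'$-decomposability for a binary partition. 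Thus it suffices to give a $\NEXP$ procedure for the binary case.

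Next I would invoke Theorem~\ref{thm:lambda_proof_soundness_completeness}: for a binary partition $\Pi$, the formula $\varphi$ is not $\Pi$-decomposable if and only if there exist $\Gamma \in \Sat(\DisjPhi)$ and $\Lambda \in \Psi_\Gamma$ (where $\Psi_\Gamma$ is the set of DNF terms of $\psi_\Gamma := \cover(\Pi, \Gamma)$) admitting a $\Lambda$-proof $(\Lambda_0, \Lambda_1, \Lambda_0', \Lambda_1')$ of non-$\Pi$-decomposability. The $\NEXP$ machine guesses such a tuple $(\Gamma, \Lambda, \Lambda_0, \Lambda_1, \Lambda_0', \Lambda_1')$ directly and then runs the verifier of Algorithm~\ref{alg:lambda_proof_verifier}. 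The key size bound is that every $\Lambda \in \Psi_\Gamma$ can be encoded in exponentially many bits: each predicate in $\Lambda$ either originates in the polynomial-size set $\Theta$ computed in some (possibly nested) recursive call, or arises from one of at most exponentially many iterations of the second loop of the covering algorithm (Line~\ref{alg:cover:line:second_loop}); since the recursion depth is bounded by $n$ and $\Theta$ always has polynomial-size representation, the total encoding length of $\Lambda$ is at most exponential in $|\varphi|$. The same bound applies to $\Gamma$, $\Lambda_0$, $\Lambda_1$, $\Lambda_0'$, $\Lambda_1'$, so the entire guessed certificate has exponential length, which is permissible for an $\NEXP$ machine. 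The verifier of Algorithm~\ref{alg:lambda_proof_verifier} runs in time polynomial in the size of the alleged proof (as already argued when the verifier was introduced, including the nontrivial equivalence checks at Line~\ref{alg:lambda_proof_verifier:line:equiv_check}), hence in time at most exponential in $|\varphi|$. Soundness and completeness of $\Lambda$-proofs (Theorem~\ref{thm:lambda_proof_soundness_completeness}) guarantee that this nondeterministic procedure accepts exactly the non-$\Pi$-decomposable formulas.

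I do not expect a serious obstacle here, since all the heavy machinery — the reduction to binary partitions, the construction and correctness of the covering algorithm, and the soundness/completeness of $\Lambda$-proofs — has already been established. The one point requiring care is a clean accounting of the certificate size: one must confirm that guessing $\Gamma$, $\Lambda$ and the four $\Lambda$-proof components does not secretly require more than exponential space, and that the verifier genuinely runs in time polynomial in the (exponential) input certificate rather than merely polynomial in $|\varphi|$. Both facts follow from the structural observations above about how predicates in $\Psi_\Gamma$ are generated by the covering algorithm. Assembling these pieces gives the $\NEXP$ bound for non-$\Pi$-decomposability and therefore the claimed $\coNEXP$ bound for $\Pi$-decomposability.
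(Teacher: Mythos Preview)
Your proposal is correct and follows essentially the same route as the paper: show that $\Lambda$-proofs are sound and complete (Theorem~\ref{thm:lambda_proof_soundness_completeness}), that they have at most exponential encoding length (because each $\Lambda \in \Psi_\Gamma$ consists of the polynomial-size $\Theta$ plus at most exponentially many separating predicates from the second loop), and that the verifier of Algorithm~\ref{alg:lambda_proof_verifier} runs in time polynomial in the certificate, yielding $\NEXP$ for non-$\Pi$-decomposability and hence $\coNEXP$ for $\Pi$-decomposability. You are slightly more explicit than the paper in invoking Proposition~\ref{prop:reduction_to_binary_partitions} to pass from arbitrary to binary partitions, which is a fair point since Theorem~\ref{thm:lambda_proof_soundness_completeness} is stated only for binary $\Pi$.
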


\subsection{Compressing non-decomposability proofs}
\label{sec:vardec:proof_compression}

The only disadvantage of the $ \Lambda $-proof based system we have constructed is the exponential worst-case size of the proofs. In this section, we analyze the reason behind this exponential blowup and show how to compress $ \Lambda $-proofs of non-$ \Pi $-decomposability into polynomial size without losing soundness, completeness and the ability to verify an alleged proof in polynomial time. This allows us to significantly strengthen the $ \coNEXP $ upper bound of Theorem~\ref{thm:conexp_upper_bound} by designing a $ \coNP $ algorithm for deciding variable decomposability.

We resume the above proof of completeness that has lead to the $ \Lambda $-proof defined in (\ref{eqn:completeness_proof_def_lambda0_prime}--\ref{eqn:completeness_proof_def_lambda1}) and analyze its structure. Let $ \Theta $ be the corresponding set computed in the call to the covering algorithm which produced $ \Lambda $. Since $ \Lambda_0', \Lambda_1' \in \DisjOf{\Lambda} $ and $ \Theta \subseteq \Lambda $, there exist $ \Gamma_0, \Gamma_1 \in \DisjOf{\Theta} $ such that $ \Lambda_0' $ and $ \Lambda_1' $ can be written as \begin{align*}
	\Lambda_0' &= \Gamma_0 \cup (\Lambda_0' \setminus \Gamma_0) \\
	\Lambda_1' &= \Gamma_1 \cup (\Lambda_1' \setminus \Gamma_1)
\end{align*} where $ \Lambda_0' \setminus \Gamma_0 $ and $ \Lambda_1' \setminus \Gamma_1 $ contain only strict inequality predicates. \begin{figure}[t]
	\centering
	\input{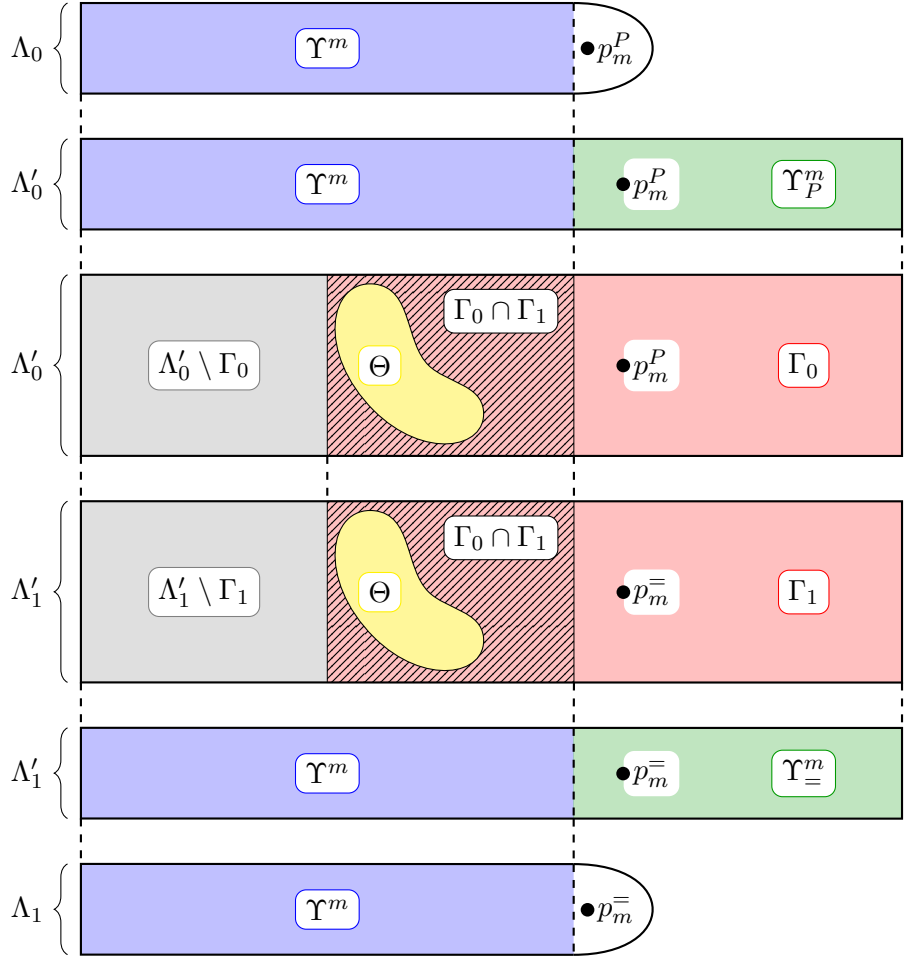}
	\caption{Six Venn diagrams illustrating the inner structure of the components $ \Lambda_0 $, $ \Lambda_1 $, $ \Lambda_0' $ and $ \Lambda_1' $ of a $ \Lambda $-proof and the relationships between specified subsets of those components. Pairs of dashed lines indicate equal subsets in different diagrams. No elements can exist where the background is white.}
	\label{fig:constructed_lambda_proof}
\end{figure} The relationships between $ \Gamma_0 $, $ \Gamma_1 $, the components of the $ \Lambda $-proof we have constructed and other relevant predicate sets are visualized in Figure~\ref{fig:constructed_lambda_proof}. We now explain and discuss these relationships in more detail.

\subsubsection{Partitioning of $ \Lambda $-proof components}

First observe that $ \Upsilon^m $ cannot agree with $ \Upsilon_P^m $ or $ \Upsilon_=^m $ on any predicate because while constructing these sets we either added $ p_i^Q $ for some $ Q \in \mathbb{P} $ to $ \Upsilon^i $, or added $ p_i $ (with the predicate symbol changed accordingly) both to $ \Upsilon_P^i $ and $ \Upsilon_=^i $. Since for every $ i $ exactly one of these decisions has been taken and $ \Upsilon_P^m $ clearly cannot agree with $ \Lambda $ on any element (because $ \Lambda $ contains only $ \Pi $-respecting predicates, whereas all elements of $ \Upsilon_P^m $ do not respect $ \Pi $), it follows that $ \Upsilon^m \cap \Upsilon_P^m = \varnothing $. Analogous reasoning also shows that $ \Upsilon^m \cap \Upsilon_=^m = \varnothing $. Note also that $ \Theta \subseteq \Gamma_0 \cap \Gamma_1 $ follows immediately from the fact that $ \Gamma_0 $ and $ \Gamma_1 $ are both disjuncts of $ \Theta $. We now use these observations to prove the following property that will play an important role later when we will be proving the correctness of the compressed proof of non-$ \Pi $-decomposability.

\begin{zbclaim}
	\label{claim:upsilonm_lamda0primeexceptgamma0_union_gamma0gamma1}
	$ \Upsilon^m = (\Lambda_0' \setminus \Gamma_0) \cup (\Gamma_0 \cap \Gamma_1) = (\Lambda_1' \setminus \Gamma_1) \cup (\Gamma_0 \cap \Gamma_1) $
\end{zbclaim}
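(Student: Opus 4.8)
The plan is to prove the two equalities by a direct set-theoretic argument, unfolding the definitions of $\Upsilon^m$, $\Lambda_0'$, $\Gamma_0$, $\Lambda_1'$, $\Gamma_1$ from the completeness proof and using the partitioning observations just established (namely $\Upsilon^m \cap \Upsilon_P^m = \varnothing$, $\Upsilon^m \cap \Upsilon_=^m = \varnothing$, $\Theta \subseteq \Gamma_0 \cap \Gamma_1$, and the case-by-case construction of $\Upsilon^i$, $\Upsilon_P^i$, $\Upsilon_=^i$). Since the two stated equalities are symmetric (swap the roles of $\Gamma_0, \Lambda_0', \Upsilon_P^m$ with $\Gamma_1, \Lambda_1', \Upsilon_=^m$), I would prove only $\Upsilon^m = (\Lambda_0' \setminus \Gamma_0) \cup (\Gamma_0 \cap \Gamma_1)$ in full and remark that the other follows identically.

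First I would recall from (\ref{eqn:completeness_proof_def_lambda0_prime}) that $\Lambda_0' = \Upsilon^m \cup \Upsilon_P^m$, and that $\Gamma_0, \Gamma_1 \in \DisjOf{\Theta}$ are the disjuncts of $\Theta$ such that $\Lambda_0' = \Gamma_0 \cup (\Lambda_0' \setminus \Gamma_0)$ with $\Lambda_0' \setminus \Gamma_0$ containing only strict inequalities. The key structural facts I would extract from the construction are: (i) every predicate added at step $i$ was placed \emph{either} in $\Upsilon^i$ (with some symbol $Q$) \emph{or} in both $\Upsilon_P^i$ and $\Upsilon_=^i$; hence $\Upsilon^m$, $\Upsilon_P^m$, $\Upsilon_=^m$ are built from disjoint "slots" indexed by $i \in \{1,\dots,m\}$ together with the predicates already in $\Upsilon^0 = \Lambda \cup \{p_{m+1}^{Q_{m+1}}, \dots, p_k^{Q_k}\}$; (ii) $\Upsilon_P^m$ consists exclusively of $\Pi$-disrespecting predicates (the $p_i^{Q_i'}$ with $p_i \in \PiComp{\Gamma}$), so $\Upsilon_P^m$ is disjoint from $\Lambda$ and in particular from $\Theta$; (iii) $\Gamma_0 \cap \Gamma_1 \supseteq \Theta$ and, because $\Gamma_0$ and $\Gamma_1$ are disjuncts sharing the non-equality part of $\Lambda_0', \Lambda_1'$ only through $\Theta$ and the common $\{p_{m+1}^{Q_{m+1}}, \dots, p_k^{Q_k}\}$ block, $\Gamma_0 \cap \Gamma_1$ is precisely the portion of $\Upsilon^m$ lying inside $\Theta$-disjuncts.

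For the inclusion $\subseteq$: take $q \in \Upsilon^m$. Either $q \in \Gamma_0$ or $q \notin \Gamma_0$. If $q \notin \Gamma_0$, then since $\Upsilon^m \subseteq \Lambda_0'$ we get $q \in \Lambda_0' \setminus \Gamma_0$, done. If $q \in \Gamma_0$, I must show $q \in \Gamma_1$; this is where the construction of $\Upsilon^m$ matters — every element of $\Upsilon^m$ was either inherited from $\Upsilon^0$ (hence lies in $\Lambda$, hence in both disjuncts $\Gamma_0$ and $\Gamma_1$ once we know $\Theta \subseteq \Gamma_0, \Gamma_1$ and the $\{p_j^{Q_j}\}$ block is common) or was added as $p_i^Q$ which then sits identically in both $\Lambda_0'$ and $\Lambda_1'$ (the only asymmetry between $\Lambda_0'$ and $\Lambda_1'$ is the $\Upsilon_P^m$ vs. $\Upsilon_=^m$ part and the $p_m^P$ vs. $p_m^=$ slot, neither of which is in $\Upsilon^m$), so $q \in \Lambda_1'$; combined with $q$ being an equality-or-shared predicate forced to lie in the disjunct $\Gamma_1$ of $\Theta$, we get $q \in \Gamma_1$. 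For the reverse inclusion $\supseteq$: if $q \in \Lambda_0' \setminus \Gamma_0$ then $q \in \Lambda_0' = \Upsilon^m \cup \Upsilon_P^m$; since $\Upsilon_P^m \subseteq \Gamma_0$ (as $\Upsilon_P^m \cup \Upsilon^m \cup \{p_m^P\}$ is the disjunct and the $\Pi$-disrespecting predicates of $\Lambda_0'$ beyond $\Lambda_0' \setminus \Gamma_0$ must be in $\Gamma_0$, or more directly since $\Gamma_0 = \Lambda_0' \setminus (\Lambda_0' \setminus \Gamma_0)$ and $\Lambda_0' \setminus \Gamma_0$ is strict-inequality-only while $\Upsilon_P^m$ need not be), the only way $q \notin \Gamma_0$ is $q \in \Upsilon^m$. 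And if $q \in \Gamma_0 \cap \Gamma_1 \subseteq \Gamma_0 \subseteq \Lambda_0'$, then again $q \in \Upsilon^m \cup \Upsilon_P^m$, and $q \in \Upsilon_P^m$ is impossible because $\Upsilon_P^m$ is disjoint from $\Gamma_1$-type disjuncts (recall $\Upsilon_P^m$ and $\Upsilon_=^m$ never agree, and $q \in \Gamma_1$ would force $q$ into the $\Upsilon_=^m$-compatible side), so $q \in \Upsilon^m$.

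The main obstacle I anticipate is bookkeeping precision: making rigorous the informal phrase "the only asymmetry between $\Lambda_0'$ and $\Lambda_1'$ lies in $\Upsilon_P^m$ vs. $\Upsilon_=^m$ and the $p_m$ slot" — i.e., carefully arguing, by induction on the construction steps $i = 1, \dots, m$, that $\Upsilon^i$ is a common subset of both $\Lambda_0'$ and $\Lambda_1'$ and that $\Upsilon_P^i, \Upsilon_=^i$ are pairwise predicate-disjoint from $\Upsilon^i$ and from each other on the relevant atoms. I would handle this by a short induction tracking the four cases (Case 1, Case 2.1, Case 2.2, and the impossible Cases 3, 4) of the completeness construction and verifying in each that the claimed disjointness and common-subset invariants are preserved; Figure~\ref{fig:constructed_lambda_proof} already depicts exactly the partition we need to justify, so the proof is essentially reading off that figure and confirming each white region is genuinely empty.
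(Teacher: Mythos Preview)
Your overall direction—set-theoretic bookkeeping using the disjointness facts $\Upsilon^m \cap \Upsilon_P^m = \varnothing$, $\Upsilon^m \cap \Upsilon_=^m = \varnothing$ and $\Theta \subseteq \Gamma_0 \cap \Gamma_1$—matches the paper's, and the symmetry reduction is exactly what the paper does. But the paper's execution is sharper in two ways, and your sketch has a real gap in the place you flagged as ``bookkeeping precision''.

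First, the paper does \emph{not} prove the two inclusions of $\Upsilon^m = (\Lambda_0' \setminus \Gamma_0) \cup (\Gamma_0 \cap \Gamma_1)$ directly. Instead it (a) isolates the auxiliary lemma $\Upsilon^m = \Lambda_0' \cap \Lambda_1'$ (immediate from the definitions once you also check $\Upsilon_P^m \cap \Upsilon_=^m = \varnothing$, which follows from Case~2.2 being the only case that touches these sets), and then (b) proves the \emph{complement} equality $\Upsilon_P^m = \Gamma_0 \setminus \Gamma_1$ inside $\Lambda_0'$. This is cleaner because $\Upsilon_P^m$ consists entirely of $\Pi$-disrespecting predicates $p_i^{Q}$, so the arguments become short. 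No induction over the construction steps is needed.

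Second, the decisive structural observation is not ``strict-inequality-only'' versus otherwise, nor ``equality-or-shared predicate'', but \emph{$\Pi$-respecting versus $\Pi$-disrespecting}. Concretely: every predicate in $\Lambda_1' \setminus \Gamma_1$ respects $\Pi$ (because $\Lambda_1' \in \DisjOf{\Lambda}$ and $\Lambda$ is $\Pi$-respecting), whereas every predicate in $\Gamma_0 \setminus \Gamma_1$ disrespects $\Pi$ (because any $\Pi$-respecting element of $\Gamma_0$ already lies in $\Theta \subseteq \Gamma_1$). These two facts are what make both inclusions of $\Upsilon_P^m = \Gamma_0 \setminus \Gamma_1$ go through immediately. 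Your attempts at the same steps (``$q$ being an equality-or-shared predicate forced to lie in $\Gamma_1$'', and the parenthetical noting that $\Upsilon_P^m$ need not be strict-inequality-only) miss this and would not close as written; replacing them with the $\Pi$-respecting/disrespecting dichotomy fixes the argument without any induction.
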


For the proof of Claim~\ref{claim:upsilonm_lamda0primeexceptgamma0_union_gamma0gamma1}, we will need the following observation.

\begin{zbclaim}
	\label{claim:upsilonm_eq_lambda0prime_intersection_lambda1prime}
	$ \Upsilon^m = \Lambda_0' \cap \Lambda_1' $
\end{zbclaim}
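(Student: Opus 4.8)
**Proof plan for Claim~\ref{claim:upsilonm_eq_lambda0prime_intersection_lambld_... wait — for $\Upsilon^m = \Lambda_0' \cap \Lambda_1'$.**

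\textbf{Overall approach.} The plan is to prove the two inclusions $\Upsilon^m \subseteq \Lambda_0' \cap \Lambda_1'$ and $\Lambda_0' \cap \Lambda_1' \subseteq \Upsilon^m$ separately, exploiting the explicit inductive construction of $\Upsilon^i$, $\Upsilon_P^i$, $\Upsilon_=^i$ and the definitions $\Lambda_0' := \Upsilon^m \cup \Upsilon_P^m$, $\Lambda_1' := \Upsilon^m \cup \Upsilon_=^m$ from (\ref{eqn:completeness_proof_def_lambda0_prime})--(\ref{eqn:completeness_proof_def_lambda1_prime}). The forward inclusion is immediate from these definitions since $\Upsilon^m \subseteq \Lambda_0'$ and $\Upsilon^m \subseteq \Lambda_1'$ by construction, hence $\Upsilon^m \subseteq \Lambda_0' \cap \Lambda_1'$. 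The substance is therefore entirely in the reverse inclusion.

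\textbf{Key steps for the reverse inclusion.} First I would recall the disjointness facts established just above the claim: $\Upsilon^m \cap \Upsilon_P^m = \varnothing$ and $\Upsilon^m \cap \Upsilon_=^m = \varnothing$, which hold because in each step $i \in \{1, \dots, m\}$ of the construction we either add $p_i^Q$ to $\Upsilon^i$ alone (Cases 1 and 2.1), or we add the appropriate instances of $p_i$ to \emph{both} $\Upsilon_P^i$ and $\Upsilon_=^i$ but to \emph{neither} $\Upsilon^i$ (Case 2.2) — and additionally $\Upsilon_P^m$, $\Upsilon_=^m$ consist only of $\Pi$-disrespecting predicates while $\Upsilon^m \supseteq \Lambda \supseteq \PiSimp{\Gamma}$ together with $\{p_{m+1}^{Q_{m+1}}, \dots, p_k^{Q_k}\}$, and the two sides never share a predicate symbol on a common predicate. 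Then I would argue that $\Upsilon_P^m \cap \Upsilon_=^m = \varnothing$ as well: whenever Case 2.2 fires we have $Q_P' \neq Q_='$, so the predicate $p_i$ enters $\Upsilon_P^m$ with symbol $Q_P'$ and $\Upsilon_=^m$ with symbol $Q_='$, and since these sets contain at most one instance of $p_i$ each and the instances carry distinct predicate symbols, no element is common. Now take $q \in \Lambda_0' \cap \Lambda_1'$. Since $\Lambda_0' = \Upsilon^m \sqcup \Upsilon_P^m$ (disjoint union) and $\Lambda_1' = \Upsilon^m \sqcup \Upsilon_=^m$, if $q \notin \Upsilon^m$ then $q \in \Upsilon_P^m$ and $q \in \Upsilon_=^m$, contradicting $\Upsilon_P^m \cap \Upsilon_=^m = \varnothing$. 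Hence $q \in \Upsilon^m$, proving $\Lambda_0' \cap \Lambda_1' \subseteq \Upsilon^m$.

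\textbf{Main obstacle.} The delicate point — and the one I would spell out most carefully — is ruling out a common predicate \emph{symbol} on a shared underlying predicate across the three pieces $\Upsilon^m$, $\Upsilon_P^m$, $\Upsilon_=^m$; that is, verifying these are genuinely disjoint as \emph{sets of predicates with attached symbols}, not merely at the level of underlying predicate templates $p_i$. One has to track that for each index $i$ exactly one of the three mutually exclusive assignment decisions is taken (Case 1 / Case 2.1 put $p_i$ into $\Upsilon^i$; Case 2.2 puts it into $\Upsilon_P^i$ and $\Upsilon_=^i$ only; Cases 3 and 4 are impossible), and that the predicate symbols chosen in Case 2.2 are distinct by hypothesis, so even the two ``disrespecting'' sets $\Upsilon_P^m$ and $\Upsilon_=^m$ are disjoint. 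Everything else is a short set-theoretic manipulation once these disjointness properties are in hand.
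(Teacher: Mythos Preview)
Your proposal is correct and follows essentially the same approach as the paper: the forward inclusion is immediate from the definitions $\Lambda_0' = \Upsilon^m \cup \Upsilon_P^m$ and $\Lambda_1' = \Upsilon^m \cup \Upsilon_=^m$, and the reverse inclusion is obtained from the three pairwise disjointness facts $\Upsilon^m \cap \Upsilon_P^m = \varnothing$, $\Upsilon^m \cap \Upsilon_=^m = \varnothing$, and $\Upsilon_P^m \cap \Upsilon_=^m = \varnothing$, the last of which the paper also derives directly from Case~2.2 (the only case modifying $\Upsilon_P^i$, $\Upsilon_=^i$) via $Q_P' \neq Q_='$. Your discussion of the ``main obstacle'' is more detailed than the paper's terse treatment, but the underlying argument is the same.
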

\begin{proof}
	The ``$ \subseteq $'' inclusion is clear by definition of $ \Lambda_0' $ and $ \Lambda_1' $ in (\ref{eqn:completeness_proof_def_lambda0_prime}) and (\ref{eqn:completeness_proof_def_lambda1_prime}). Now observe that the inductive construction of $ \Upsilon^i $, $ \Upsilon_P^i $ and $ \Upsilon_=^i $ above explicitly ensures that $ \Upsilon_P^m $ and $ \Upsilon_=^m $ are disjoint (see case 2.2 and note that no other case modifies $ \Upsilon_P^i $ or $ \Upsilon_=^i $). This observation taken together with $ \Upsilon^m \cap \Upsilon_P^m = \varnothing $ and $ \Upsilon^m \cap \Upsilon_=^m = \varnothing $ yields the converse inclusion.
\end{proof}
\begin{proof}[Proof of Claim~\ref{claim:upsilonm_lamda0primeexceptgamma0_union_gamma0gamma1}]
	We only prove the $ \Upsilon^m = (\Lambda_0' \setminus \Gamma_0) \cup (\Gamma_0 \cap \Gamma_1) $ equality because the second equality $ \Upsilon^m = (\Lambda_1' \setminus \Gamma_1) \cup (\Gamma_0 \cap \Gamma_1) $ is symmetric and can be proven by the same reasoning (replace in the argumentation below $ \Lambda_0' $, $ \Gamma_0 $ and $ \Upsilon_P^m $ by $ \Lambda_1' $, $ \Gamma_1 $ and $ \Upsilon_=^m $, respectively). We establish $ \Upsilon^m = (\Lambda_0' \setminus \Gamma_0) \cup (\Gamma_0 \cap \Gamma_1) $ by showing that the complements of the left and right hand sides in $ \Lambda_0' $ are equal. In other words, it suffices to prove $ \Upsilon_P^m = \Gamma_0 \setminus \Gamma_1 $.
	
	For the ``$ \subseteq $'' inclusion, assume that $ p_i^Q \in \Upsilon_P^m $ holds for some $ i \in \{1, \dots, m\} $ and $ Q \in \mathbb{P} $. Clearly, since $ \Gamma_0 \in \DisjOf{\Theta} $, $ p_i^R $ must appear in $ \Gamma_0 $ for some $ R \in \mathbb{P} $. Note that $ R $ must be equal to $ Q $ because otherwise $ \Lambda_0' $ would contain contradicting predicates and be consequently unsatisfiable, which is a contradiction to Claim~\ref{claim:lambda_0_lambda_1_pi_complex}. Hence, $ p_i^Q \in \Gamma_0 $. Now suppose, for the sake of contradiction, that $ p_i^Q \in \Gamma_1 $. Then applying Claim~\ref{claim:upsilonm_eq_lambda0prime_intersection_lambda1prime} yields \[
		p_i^Q \in \Gamma_0 \cap \Gamma_1 \subseteq \Lambda_0' \cap \Lambda_1' \overset{\ref{claim:upsilonm_eq_lambda0prime_intersection_lambda1prime}}{=} \Upsilon^m
	\] and hence $ p_i^Q \in \Upsilon^m \cap \Upsilon_P^m = \varnothing $, which is a contradiction.
	
	Turning to the ``$ \supseteq $'' inclusion, suppose that $ p \in \Gamma_0 \setminus \Gamma_1 $ and, for the sake of contradiction, that $ p \in \Upsilon^m $. This immediately implies $ p \in \Lambda_1' $. Since $ p $ disrespects $ \Pi $ (because otherwise $ p \in \Theta \subseteq \Gamma_0 \cap \Gamma_1 $) but all predicates in $ \Lambda_1' \setminus \Gamma_1 $ respect $ \Pi $ (because $ \Lambda_1' \in \DisjOf{\Lambda} $), it follows that $ p \in \Gamma_1 $, which is a contradiction.
\end{proof}

Note that Claim~\ref{claim:upsilonm_lamda0primeexceptgamma0_union_gamma0gamma1} implies, in particular, that $ \Lambda_0' \setminus \Gamma_0 = \Lambda_1' \setminus \Gamma_1 $. Overall, at this point we have proven all relationships visualized in Figure~\ref{fig:constructed_lambda_proof}.

% It is not hard to see that the construction of $ \Theta $ in the covering algorithm guarantees that this set can always be represented as a string of polynomial size. Indeed, the covering algorithm can make at most $ n := \left|X\right| + \left|Y\right| $ many nested recursive calls (Lemma~\ref{lemma:cover_equifix_rec_depth_bounded_by_n}, see Appendix~\ref{sec:app:cover_analysis}) each leading to one additional predicate being added to $ \Gamma $, so the size of the $ \PiSimp{\Gamma} \subseteq \Theta $ part of $ \Theta $ can be bounded by $ \left|\Gamma\right| + n $, which is polynomial in the size of $ \varphi $. As regards the predicates which get added to $ \Theta $ in the body of the first \texttt{foreach} loop at Line~\ref{alg:cover:line:first_loop}, it is clear that there are only polynomially many of them due to the (trivial) upper bound \[
%\dim \ImageOf(\pi_Z \circ \lc_A)^\bot \le \left|Z\right|
%\] on the dimension of $ \ImageOf(\pi_Z \circ \lc_A)^\bot $.

\subsubsection{Reason behind the exponential blowup}

We now analyze the sizes of various subsets of the constructed $ \Lambda $-proof components with the primary goal of understanding in detail what the actual cause of the exponential blowup is. More precisely, we study the subsets of $ \Lambda_0' $ and $ \Lambda_1' $ where exponentially many predicates from $ \Lambda \subseteq \Upsilon^m $ land. Clearly, $ \Theta $ and consequently also $ \Gamma_0 $ and $ \Gamma_1 $ all have polynomial size, so the exponential blowup cannot occur inside $ \Gamma_0 \cap \Gamma_1 \subseteq \Lambda_0' \cap \Lambda_1' \overset{\ref{claim:upsilonm_eq_lambda0prime_intersection_lambda1prime}}{=} \Upsilon^m $. Hence, the exponentially many predicates can appear only in the complement of $ \Gamma_0 \cap \Gamma_1 $, that is, in $ \Upsilon^m \setminus (\Gamma_0 \cap \Gamma_1) $. By Claim~\ref{claim:upsilonm_lamda0primeexceptgamma0_union_gamma0gamma1}, this set is equal to $ \Lambda_0' \setminus \Gamma_0 = \Lambda_1' \setminus \Gamma_1 $.

This leads us to the following idea. In order to tackle this exponential blowup, we need to find a certain kind of polynomial-size representation of $ \Lambda_0' \setminus \Gamma_0 $. Intuitively, we achieve this by constructing a predicate set entailing $ \Lambda_0' \setminus \Gamma_0 $ and then obtain a compressed version of the $ \Lambda $-proof by replacing $ \Lambda_0' \setminus \Gamma_0 = \Lambda_1' \setminus \Gamma_1 $ with that set in $ \Lambda_0 $, $ \Lambda_1 $, $ \Lambda_0' $ and $ \Lambda_1' $. Geometrically, the problem of constructing such a predicate set entailing $ \Lambda_0' \setminus \Gamma_0 $ is equivalent to defining a shape contained in $ \ModelsOf(\Lambda_0' \setminus \Gamma_0) $, which we will refer to by the name \textit{compressing shape}. Observe that $ \Lambda_0' $ is a disjunct of $ \Lambda $, therefore $ \Lambda_0' \setminus \Gamma_0 $ can contain only strict inequality predicates. Since this set is satisfiable, it can be thought of as a collection of constraints defining an open polyhedron with at most exponentially many faces. Thus, it should not be too hard to define a shape lying inside $ \ModelsOf(\Lambda_0' \setminus \Gamma_0) $ because this open polyhedron is clearly closed under taking a sufficiently small $ \varepsilon $-neighborhood of every point and any bounded shape can in principle be defined within that $ \varepsilon $-neighborhood. However, as we will see shortly, we have to be very careful when defining this shape to make sure that it can be used in a $ \Lambda $-proof without breaking any of its properties.

\subsubsection{Compressing shape}

We start turning the above intuitions into a rigorous proof by first defining a suitable compressing shape defined by a predicate set entailing $ \Lambda_0' \setminus \Gamma_0 $. We choose the open $ n $-dimensional cube (hereinafter: ``$ n $-cube'') to be the shape type. Clearly, such a cube can be characterized by a center point and a side length; it can also easily be defined using strict inequalities:

\begin{zbdefinition}[Cube around a point]
	Let $ \varphi(x_1, \dots, x_n) \in \QFLRA $ be a formula, $ v = (v_1, \dots, v_n)^\transp \in \Q^n $ be a point and $ 0 < \varepsilon \in \Q $ be a constant. We define \[
	\OpenCube{v}{\varepsilon} := \{x_i < v_i + \varepsilon/2 \mid 1 \le i \le n\} \cup \{x_i > v_i - \varepsilon/2 \mid 1 \le i \le n\}
	\] to be the set of strict inequality predicates defining an open cube around $ v $ with side length $ \varepsilon $.
\end{zbdefinition}

We now define separately the center point $ v $ and the side length $ \varepsilon $ of the $ n $-cube. Intuitively, these parameters need to be chosen so that on the one hand $ \OpenCube{v}{\varepsilon} \models \Lambda_0' \setminus \Gamma_0 $ and on the other hand the predicates in $ \OpenCube{v}{\varepsilon} $ all have a short (i.e., polynomial) encoding length.

\paragraph{Center of the cube.} As the center of the $ n $-cube we choose an arbitrary model $ v \models \Lambda_1 $ encodable using polynomially many bits. Such a model exists because, as we now argue, quantifier-free linear real arithmetic satisfies the small model property \cite{ebbinghaus:2005:finitemodeltheory, tent:2012:modeltheory, marker:2006:modeltheory, kroening:2016}. It is well-known in mathematical optimization that any polyhedron contains a point whose encoding is polynomial in the size of the string representation of the system of inequalities defining that polyhedron (see \cite[Theorem 4.4]{korte:2012:combinatorialoptimization}, \cite[Theorem 10.1]{schrijver:1998:theoryofintegerandlinearprogramming} and \cite[Lemma 5.9]{paffenholz:2010:polyhedralgeometryandlinearoptimization}). Unfortunately, these results talk only about closed polyhedra, but as we now argue, we can obtain analogous results for our setting, i.e., when strict inequalities are allowed. Given a satisfiable predicate set $ \Gamma $, we run Gaussian elimination on $ \Gamma^= $ and obtain a solution represented as a set of equations connecting leading variables with trailing ones. Similar to how variable substitution works in the well-known Fourier-Motzkin elimination algorithm \cite{kroening:2016, paffenholz:2010:polyhedralgeometryandlinearoptimization, schrijver:1998:theoryofintegerandlinearprogramming}, we let $ \Gamma' $ be obtained from $ \Gamma \setminus \Gamma^= $ by replacing all occurrences of leading variables by the corresponding linear combination of the trailing ones, which clearly guarantees a one-to-one correspondence between the solutions to $ \Gamma $ and to $ \Gamma' $. This transformation also ensures that $ \Gamma' $ defines a full-dimensional open convex polyhedron, for which it is already known that it must contain an interior point whose representation is of polynomial size (see, \textit{mutatis mutandis}, \cite[Lemma 6.2.6]{grotschel:2012:combinatorialoptimization}). Clearly, translating this point backwards into a solution to $ \Gamma $ preserves the existence of a polynomial upper bound for its size. Here it is important that the upper bound on the solution size depends only on the encoding length of any predicate appearing in $ \Lambda_0' \setminus \Gamma_0 $ and not on the size of $ \Lambda_0' \setminus \Gamma_0 $. Indeed, the size of $ \Lambda_0' \setminus \Gamma_0 $ may be exponential, but any single predicate from $ \Lambda_0' \setminus \Gamma_0 $ originates in the second \texttt{foreach} loop at Line~\ref{alg:cover:line:second_loop} of the covering algorithm as a result of algebraic computations depending only on $ \Theta $ and on the predicates appearing in the original formula $ \varphi $. We have already discussed that $ \Theta $'s representation is of polynomial size, so we conclude that the encoding of every predicate in $ \Lambda_0' \setminus \Gamma_0 $ is also of size bounded by a polynomial in the size of $ \varphi $. Hence, the small model property follows.

% The idea is to make use of this property by showing that $ \ModelsOf(\Lambda_0' \setminus \Gamma_0) $ contains an $ n $-dimensional cube ($ n $-cube) definable by a set of predicates having a polynomial-size encoding. This implies that we can use this cube as a component to build the compressed version of the $ \Lambda $-proof.

\paragraph{Side length of the cube.} The second parameter to be defined is the cube side length $ \varepsilon $. Since the predicates in $ \OpenCube{v}{\varepsilon} $ use $ \varepsilon $, it is important to choose $ \varepsilon $ to be a rational number having a polynomial-length encoding. On the other hand, $ \varepsilon > 0 $ should be chosen small enough so that the cube $ \ModelsOf(\OpenCube{v}{\varepsilon}) $ fits inside $ \ModelsOf(\Lambda_0' \setminus \Gamma_0) $. Let $ q_1, \dots, q_k $ be the predicates in $ \Lambda_0' \setminus \Gamma_0 $ and let $ w_1 + W_1, \dots, w + W_k $ be the affine vectorspaces of solutions to $ q_1^=, \dots, q_k^= $, respectively. 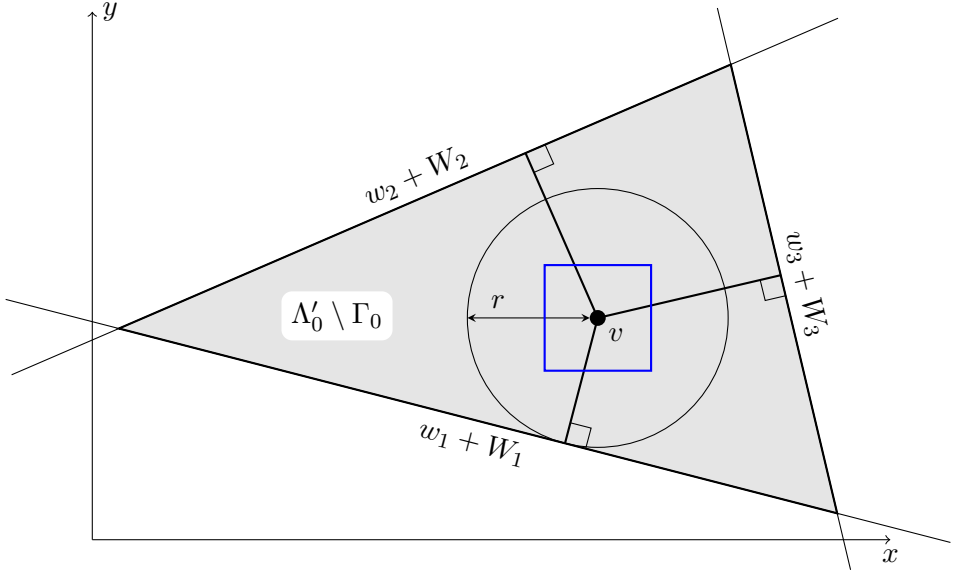
\begin{figure}[t]
	\centering
	\def\orthrectsidesize{2mm}
\begin{tikzpicture}[scale=\zbtikzbiggerscaling]
	
	\coordinate (origin) at (0, 0);
	
	% axes
	\draw[->] (origin) -- (7.5,0) coordinate[label = {below:$x$}] (xmax);
	\draw[->] (origin) -- (0,5) coordinate[label = {right:$y$}] (ymax);
	
	% ====================[ Config ]====================
	
	% triangle
	\coordinate (trtr) at (6, 4.5); % top right
	\coordinate (trbr) at (7, 0.25); % bottom right
	\coordinate (trleft) at (0.25, 2); % left vertex
	
	% dots
	\coordinate (vpoint) at (4.75, 2.1);
	
	% projections
	% ($(A)!(P)!(B)$) yields the projection of (P) on the line from (A) to (B)
	\coordinate (vprojtop) at ($ (trleft)!(vpoint)!(trtr) $);
	\coordinate (vprojright) at ($ (trtr)!(vpoint)!(trbr) $);
	\coordinate (vprojdown) at ($ (trleft)!(vpoint)!(trbr) $);
	
	% projection with minimal distance
	\coordinate (vprojmin) at (vprojdown);
	
	% inscribed square
	\coordinate (inscribedsquarehalfwidth) at (0.5, 0);
	\coordinate (inscribedsquarehalfheight) at (0, 0.5);
	
	% ====================[ Actual Drawing ]====================
	% orthogonal rectangles
	% top edge
	\coordinate (vprojtoporthrectline) at ($ (vprojtop)!\orthrectsidesize!(vpoint) $);
	\coordinate (vprojtoporthrectedge) at ($ (vprojtop)!\orthrectsidesize!(trtr) $);
	\coordinate (vprojtoporthrectcorner) at ($ (vprojtoporthrectline) + (vprojtoporthrectedge) - (vprojtop) $);
	% bottom edge
	\coordinate (vprojbottomorthrectline) at ($ (vprojdown)!\orthrectsidesize!(vpoint) $);
	\coordinate (vprojbottomorthrectedge) at ($ (vprojdown)!\orthrectsidesize!(trbr) $);
	\coordinate (vprojbottomorthrectcorner) at ($ (vprojbottomorthrectline) + (vprojbottomorthrectedge) - (vprojdown) $);
	% right edge
	\coordinate (vprojrightorthrectline) at ($ (vprojright)!\orthrectsidesize!(vpoint) $);
	\coordinate (vprojrightorthrectedge) at ($ (vprojright)!\orthrectsidesize!(trbr) $);
	\coordinate (vprojrightorthrectcorner) at ($ (vprojrightorthrectline) + (vprojrightorthrectedge) - (vprojright) $);
	
	% we now want to compute the coordinates of the corners of the inscribed square
	% edge centers
	\coordinate (inscribedsquarer) at ($ (vpoint) + (inscribedsquarehalfwidth) $); % right
	\coordinate (inscribedsquareu) at ($ (vpoint) + (inscribedsquarehalfheight) $); % up
	\coordinate (inscribedsquared) at ($ (vpoint) - (inscribedsquarehalfheight) $); % down
	\coordinate (inscribedsquarel) at ($ (vpoint) - (inscribedsquarehalfwidth) $); % left
	% corners
	\coordinate (inscribedsquaretl) at ($ (inscribedsquareu-|inscribedsquarel) $); % top left
	\coordinate (inscribedsquaretr) at ($ (inscribedsquareu-|inscribedsquarer) $); % top right
	\coordinate (inscribedsquarebl) at ($ (inscribedsquared-|inscribedsquarel) $); % bottom left
	\coordinate (inscribedsquarebr) at ($ (inscribedsquared-|inscribedsquarer) $); % bottom right
	
	% extended lines
	\draw [shorten >=-4em,shorten <=-4em] (trleft) -- (trtr); % node[at end,below]{$ t $}
	\draw [shorten >=-4em,shorten <=-4em] (trleft) -- (trbr);
	\draw [shorten >=-2em,shorten <=-2em] (trtr) -- (trbr);
	
	% \Lambda_0' \setminus \Gamma_0 region
	\filldraw[thick, black, fill=gray, fill opacity=0.2] (trtr) -- (trbr) -- (trleft) -- cycle;
	
	% inscribed circle
	\node [draw] at (vpoint) [circle through={(vprojmin)}] (mincirc) {};
	
	% projection lines
	\draw[thick] (vpoint) -- (vprojtop);
	\draw[thick] (vpoint) -- (vprojright);
	\draw[thick] (vpoint) -- (vprojdown);
	
	% orthogonal rectangles
	\draw (vprojtoporthrectline) -- (vprojtoporthrectcorner) -- (vprojtoporthrectedge);
	\draw (vprojbottomorthrectline) -- (vprojbottomorthrectcorner) -- (vprojbottomorthrectedge);
	\draw (vprojrightorthrectline) -- (vprojrightorthrectcorner) -- (vprojrightorthrectedge);
	
	% radius label
	\coordinate (mincircleft) at (mincirc.180);
	\draw[stealth-stealth] ($ (vpoint) - (0.075, 0) $) -- (mincircleft) node[near end,above]{$ r $};
	
	% draw dots
	\fill[black] (vpoint) circle (0.075);
	
	% inscribed square
	\draw[thick, blue] (inscribedsquaretl) -- (inscribedsquaretr) -- (inscribedsquarebr) -- (inscribedsquarebl) -- cycle;
	
	% draw labels
	\node[anchor=north west,thick] at (vpoint) {$ v $};
	
	% \Lambda_0' \setminus \Gamma_0 region label
	\draw (2.3, 2.12) node [rounded corners, fill=white] {$ \Lambda_0' \setminus \Gamma_0 $};
	
	% Affine vectorspace labels
	\path (trleft) edge node [sloped, below, midway] {$ w_1 + W_1 $} (trbr);
	\path (trleft) edge node [sloped, above, midway] {$ w_2 + W_2 $} (trtr);
	\path (trtr) edge node [sloped, above, midway] {$ w_3 + W_3 $} (trbr);
\end{tikzpicture}
	\caption{Example illustrating the way we construct the inscribed cube (colored blue) in two dimensions. Here, $ \Lambda_0' \setminus \Gamma_0 $ consists of three predicates $ p_1, p_2, p_3 $; solving $ p_i^= $ yields $ \ModelsOf(p_i^=) = w_i + W_i $.}
	\label{fig:inscribed_polynomial_size_cube}
\end{figure} Now we describe the way we construct $ \varepsilon $, a geometric visualization of our argumentation is given in Figure~\ref{fig:inscribed_polynomial_size_cube}. \begin{itemize}
	\item First, we project $ v $ onto every hyperplane $ w_i + W_i $. Since the predicates in $ \Lambda_0' \setminus \Gamma_0 $ are all strict inequalities, no projection can coincide with $ v $. In other words, the Euclidean distance between $ v $ and every projection must be strictly positive.
	\item Next, we fix $ w_i + W_i $ to be the affine vectorspace such that the Euclidean distance between $ v $ and its orthogonal projection onto $ w_i + W_i $ is minimal, and let $ r $ be that distance.
	\item Then, we consider the open $ n $-dimensional ball (hereinafter: ``$ n $-ball'') with center at $ v $ and radius $ r $. Clearly, all points lying on the interior of this $ n $-ball are models of $ \Lambda_0' \setminus \Gamma_0 $.
	\item Next, we argue that $ r \in \Q $ has a polynomial upper bound on the encoding length because measuring the distance between $ v $ and $ w_i + W_i $ preserves the polynomial bound on the encoding of occurring numbers and coefficients. Indeed, the construction of $ v $ ensures that it has a short encoding; $ w_i $ and vectors in the basis of $ W_i $ all have short encodings because they can be computed via Gaussian elimination in polynomial time based on the $ q_i \in \Lambda_0' \setminus \Gamma_0 $ predicate which, as already discussed, has a polynomial-size encoding.
	\item Finally, since the $ n $-ball with center at $ v $ and radius $ r $ trivially contains a cube centered around $ v $ with side length $ r/3 $, choosing $ \varepsilon := r/3 $ clearly ensures that $ \OpenCube{v}{\varepsilon} $ can be represented as a string of polynomial length and $ \OpenCube{v}{\varepsilon} \models \Lambda_0' \setminus \Gamma_0 $.
\end{itemize}

\subsubsection{Compression of the $ \Lambda $-proof}

Having defined the compressing shape, i.e., the cube, we now use it to construct the following $ (\Theta \cup \OpenCube{v}{\varepsilon}) $-proof $ (\Theta_0, \Theta_1, \Theta_0', \Theta_1') $ of non-$ \Pi $-decomposability of $ \varphi $, which is, in fact, a compressed version of the $ \Lambda $-proof we have constructed in Section~\ref{sec:vardec:nondec_proof_system}. \begin{align*}
	\Theta_0' &:= (\Gamma_0 \cap \Gamma_1) \cup \OpenCube{v}{\varepsilon} \cup \Upsilon_P^m \in \DisjOf{\Theta \cup \OpenCube{v}{\varepsilon}} \\
	\Theta_1' &:= (\Gamma_0 \cap \Gamma_1) \cup \OpenCube{v}{\varepsilon} \cup \Upsilon_=^m \in \DisjOf{\Theta \cup \OpenCube{v}{\varepsilon}} \\
	\Theta_0 &:= (\Gamma_0 \cap \Gamma_1) \cup \OpenCube{v}{\varepsilon} \cup \{p_m^P\} \subseteq \Theta_0' \models \Theta_0 \\
	\Theta_1 &:= (\Gamma_0 \cap \Gamma_1) \cup \OpenCube{v}{\varepsilon} \cup \{p_m^=\} \subseteq \Theta_1' \models \Theta_1
\end{align*}

We now verify that $ (\Theta_0, \Theta_1, \Theta_0', \Theta_1') $ indeed meets all the criteria set out in the definition of a $ \Lambda $-proof. Observe that by Claim~\ref{claim:upsilonm_lamda0primeexceptgamma0_union_gamma0gamma1}, \begin{align*}
	\Theta_0 &= (\Gamma_0 \cap \Gamma_1) \cup \OpenCube{v}{\varepsilon} \cup \{p_m^P\}
	\models (\Gamma_0 \cap \Gamma_1) \cup (\Lambda_0' \setminus \Gamma_0) \cup \{p_m^P\} \\ &\overset{\ref{claim:upsilonm_lamda0primeexceptgamma0_union_gamma0gamma1}}{\equiv} \Upsilon^m \cup \{p_m^P\} \equiv \Lambda_0 \equiv \Lambda_0' \models \Upsilon_P^m
\end{align*} and, analogously, \begin{align*}
\Theta_1 &= (\Gamma_0 \cap \Gamma_1) \cup \OpenCube{v}{\varepsilon} \cup \{p_m^=\}
\models (\Gamma_0 \cap \Gamma_1) \cup (\Lambda_0' \setminus \Gamma_0) \cup \{p_m^=\} \\ &\overset{\ref{claim:upsilonm_lamda0primeexceptgamma0_union_gamma0gamma1}}{\equiv} \Upsilon^m \cup \{p_m^=\} \equiv \Lambda_1 \equiv \Lambda_1' \models \Upsilon_P^=
\end{align*} It follows that $ \Theta_0 \models \Theta_0' $, $ \Theta_1 \models \Theta_1' $, so we conclude that $ \Theta_0 \equiv \Theta_0' $ and $ \Theta_1 \equiv \Theta_1' $ (the converse entailments are trivial).

\begin{zbclaim}
	\label{claim:theta0_theta1_pi_complex}
	$ \Theta_0 $ and $ \Theta_1 $ are both $ \Pi $-complex.
\end{zbclaim}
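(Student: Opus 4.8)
The plan is to deduce $\Pi$-complexity of $\Theta_0$ and $\Theta_1$ from that of the two disjuncts $\Gamma_0,\Gamma_1\in\DisjOf{\Theta}$ appearing in their construction, essentially reusing the argument of Claim~\ref{claim:lambda_0_lambda_1_pi_complex}. First I would note that whether a predicate set is $\Pi$-complex depends only on its set of models, so this property is invariant under logical equivalence; since we have just established $\Theta_0\equiv\Theta_0'$ and $\Theta_1\equiv\Theta_1'$, it suffices to prove that $\Theta_0'$ and $\Theta_1'$ are $\Pi$-complex.

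Next I would rewrite $\Theta_0'$ and $\Theta_1'$ in a more convenient form. The proof of Claim~\ref{claim:upsilonm_lamda0primeexceptgamma0_union_gamma0gamma1} establishes $\Upsilon_P^m=\Gamma_0\setminus\Gamma_1$, and the symmetric argument gives $\Upsilon_=^m=\Gamma_1\setminus\Gamma_0$; these follow from the identity $\Upsilon^m=(\Lambda_0'\setminus\Gamma_0)\cup(\Gamma_0\cap\Gamma_1)$ of Claim~\ref{claim:upsilonm_lamda0primeexceptgamma0_union_gamma0gamma1} together with the disjointness of $\Upsilon^m$ and $\Upsilon_P^m$ and the inclusion $\Gamma_0\subseteq\Lambda_0'$. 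Hence $\Gamma_0=(\Gamma_0\cap\Gamma_1)\cup\Upsilon_P^m$ and $\Gamma_1=(\Gamma_0\cap\Gamma_1)\cup\Upsilon_=^m$, so that $\Theta_0'=\Gamma_0\cup\OpenCube{v}{\varepsilon}$ and $\Theta_1'=\Gamma_1\cup\OpenCube{v}{\varepsilon}$.

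Now $\Gamma_0$ and $\Gamma_1$ are disjuncts of $\Theta$ (the set computed in the call to the covering algorithm which produced $\Lambda$), and each is satisfiable in conjunction with $\Lambda$: indeed, $\Lambda_i'\in\DisjOf{\Lambda}$ is satisfiable and entails both $\Lambda$ and $\Gamma_i$ (since $\Gamma_i\subseteq\Lambda_i'$), so $\Gamma_i\wedge\Lambda$ is satisfiable, and Lemma~\ref{lemma:psi_properties}~\ref{lemma:psi_properties:c} yields that $\Gamma_0$ and $\Gamma_1$ are $\Pi$-complex. Since $\OpenCube{v}{\varepsilon}$ contains only strict inequality predicates, $\Theta_0'=\Gamma_0\cup\OpenCube{v}{\varepsilon}$ and $\Theta_1'=\Gamma_1\cup\OpenCube{v}{\varepsilon}$ arise from $\Pi$-complex sets by adjoining inequalities, so applying Lemma~\ref{lemma:adding_neq_predicates_cannot_make_pi_complex_set_pi_simple} one inequality at a time (every intermediate set is satisfiable, lying between $\Gamma_i$ and the satisfiable $\Theta_i'$) shows that $\Theta_0'$ and $\Theta_1'$, and therefore also $\Theta_0$ and $\Theta_1$, are $\Pi$-complex.

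The step needing genuine care — and the main obstacle — is verifying that $\Theta_0'=\Gamma_0\cup\OpenCube{v}{\varepsilon}$ and $\Theta_1'=\Gamma_1\cup\OpenCube{v}{\varepsilon}$ are satisfiable, as $\Pi$-complexity includes satisfiability by definition. For $\Theta_1'$ this is immediate: the cube centre satisfies $v\models\Lambda_1\equiv\Lambda_1'\models\Gamma_1$ and trivially $v\models\OpenCube{v}{\varepsilon}$, so $v\models\Theta_1'$. For $\Theta_0'$ the same point fails because $v\models p_m^=$ forces $v\notin\ModelsOf(\Gamma_0)$; here I would invoke convexity. Fix $u\models\Lambda_0$ (possible since $\Lambda_0\equiv\Lambda_0'$ is satisfiable); as $v\models\Lambda_1=\Upsilon^m\cup\{p_m^=\}$ and $u\models\Lambda_0=\Upsilon^m\cup\{p_m^P\}$, the open segment $(v,u)$ lies in the convex set $\ModelsOf(\Upsilon^m)$, and the affine functional underlying $p_m$ is zero at $v$ and has the sign prescribed by $p_m^P$ at $u$ (in particular $v\neq u$), hence that sign throughout $(v,u)$; therefore $(v,u)\subseteq\ModelsOf(\Upsilon^m\cup\{p_m^P\})=\ModelsOf(\Lambda_0)\subseteq\ModelsOf(\Gamma_0)$. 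Since $\ModelsOf(\OpenCube{v}{\varepsilon})$ is an open neighbourhood of $v$, it meets $(v,u)$, so $\Gamma_0\cup\OpenCube{v}{\varepsilon}$ is satisfiable. (This satisfiability is in any case implicit in the assertion $\Theta_0'\in\DisjOf{\Theta\cup\OpenCube{v}{\varepsilon}}$ made when the compressed proof was introduced.)
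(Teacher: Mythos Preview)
Your proof is correct and follows essentially the same approach as the paper: establish a $\Pi$-complex base set, verify satisfiability after adjoining the cube predicates (trivially for the $p_m^=$ side, and via the convexity argument on the segment $(v,u)$ for the $p_m^P$ side), then apply Lemma~\ref{lemma:adding_neq_predicates_cannot_make_pi_complex_set_pi_simple} inductively. The only cosmetic differences are that you pass to $\Theta_0',\Theta_1'$ and take $\Gamma_0,\Gamma_1$ (via Lemma~\ref{lemma:psi_properties}~\ref{lemma:psi_properties:c}) as the $\Pi$-complex starting sets, whereas the paper works directly with $\Theta_0,\Theta_1$ and uses $(\Gamma_0\cap\Gamma_1)\cup\{p_m^P\}$ and $(\Gamma_0\cap\Gamma_1)\cup\{p_m^=\}$ (shown $\Pi$-complex by entailment from $\Lambda_0,\Lambda_1$); the convexity argument you give is the content of the paper's Claim~\ref{claim:lambda_0_and_cube_sat}.
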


For the proof of Claim~\ref{claim:theta0_theta1_pi_complex}, we will need the following fact.

\begin{zbclaim}
	\label{claim:lambda_0_and_cube_sat}
	For all $ \varepsilon > 0 $, $ \Lambda_0 \wedge \OpenCube{v}{\varepsilon} $ is satisfiable.
\end{zbclaim}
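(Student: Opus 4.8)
The plan is to prove Claim~\ref{claim:lambda_0_and_cube_sat}, i.e., that $\Lambda_0 \wedge \OpenCube{v}{\varepsilon}$ is satisfiable for every $\varepsilon > 0$. Recall from the construction that $v$ was chosen to be a model of $\Lambda_1$, and that $\Lambda_1 = \Upsilon^m \cup \{p_m^=\}$ while $\Lambda_0 = \Upsilon^m \cup \{p_m^P\}$ where $P \in \{<, >\}$. The two predicate sets agree on all of $\Upsilon^m$ and differ only in that $\Lambda_1$ carries the equality $p_m^=$ whereas $\Lambda_0$ carries the strict inequality $p_m^P$. So $v$ already satisfies every predicate in $\Upsilon^m$; the only obstruction is that $v \models p_m^=$ rather than $v \models p_m^P$. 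The key observation is that $\OpenCube{v}{\varepsilon}$ is an open neighbourhood of $v$, and $\Lambda_0$ is $\Pi$-complex hence in particular satisfiable.

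First I would recall that $\Lambda_0$ is satisfiable (this follows from Claim~\ref{claim:lambda_0_lambda_1_pi_complex}, which states $\Lambda_0$ is $\Pi$-complex, and $\Pi$-complex sets are by definition satisfiable). Fix some $u \models \Lambda_0$. Now $v \models \Upsilon^m$ and $u \models \Upsilon^m$, and $\ModelsOf(\Upsilon^m)$ is a convex set (an intersection of half-spaces and hyperplanes defined by linear predicates), so the open segment $(v, u) \subseteq \ModelsOf(\Upsilon^m)$. I would then argue that every point on $(v, u)$ sufficiently close to $v$ satisfies $p_m^P$: since $u \models p_m^P$ and $p_m^P$ defines an open half-space whose boundary hyperplane is $\ModelsOf(p_m^=)$, and since $v$ lies on that boundary hyperplane ($v \models p_m^=$), the entire open segment $(v, u)$ lies in the open half-space $\ModelsOf(p_m^P)$ — this is the standard fact that a segment from a boundary point into the interior of a half-space lies (except for its boundary endpoint) entirely in the open half-space. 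Hence every point of $(v,u)$ is a model of $\Lambda_0 = \Upsilon^m \cup \{p_m^P\}$.

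Next I would use that $\OpenCube{v}{\varepsilon}$ defines an open neighbourhood of $v$ (by the definition of ``Cube around a point'', $\ModelsOf(\OpenCube{v}{\varepsilon})$ is the open box $\{w \mid |w_i - v_i| < \varepsilon/2 \text{ for all } i\}$, which contains $v$ in its interior). Therefore, for a point $w \in (v, u)$ chosen close enough to $v$ — concretely, taking $w = v + \lambda(u - v)$ for a small enough rational $\lambda \in (0,1)$ — we have $w \in \ModelsOf(\OpenCube{v}{\varepsilon})$ as well as $w \models \Lambda_0$. Such a $\lambda$ exists because the box has positive side length $\varepsilon$ and $u - v$ is a fixed vector, so any $\lambda < \frac{\varepsilon}{2\|u - v\|_\infty}$ (when $u \neq v$; the case $u = v$ is trivial since then $v \models \Lambda_0$ directly) works. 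This $w$ witnesses the satisfiability of $\Lambda_0 \wedge \OpenCube{v}{\varepsilon}$, completing the proof.

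I do not expect any serious obstacle here; the argument is a routine convexity-plus-open-neighbourhood continuity argument. The only point requiring a little care is making explicit that $p_m^P$ is a strict (open) inequality and that $v$ lies exactly on its boundary hyperplane $\ModelsOf(p_m^=)$, so that the half-open segment argument applies cleanly — but this is immediate from the definitions of $\Lambda_0$, $\Lambda_1$ and the ``$p$-next to'' relation. If one wants to avoid picking an auxiliary model $u$ and the segment argument, an alternative is to note directly that $\ModelsOf(\Lambda_0)$, being the model set of a satisfiable $\Pi$-complex predicate set one of whose defining predicates is the strict inequality $p_m^P$ with the rest being $\Upsilon^m$, has $v$ in its topological closure (since $v \models \Upsilon^m$ and $v$ is a limit of points satisfying $p_m^P$), and $\OpenCube{v}{\varepsilon}$ is an open set containing $v$, so the two sets must intersect; this is essentially the same argument packaged differently.
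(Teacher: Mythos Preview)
Your proposal is correct and follows essentially the same approach as the paper: fix a model $u \models \Lambda_0$, use convexity of $\ModelsOf(\Upsilon^m)$ together with the fact that $v$ lies on the boundary hyperplane $\ModelsOf(p_m^=)$ while $u$ lies in the open half-space $\ModelsOf(p_m^P)$ to conclude that the open segment $(v,u)$ is contained in $\ModelsOf(\Lambda_0)$, and then pick a point on this segment close enough to $v$ to also lie in the open cube $\ModelsOf(\OpenCube{v}{\varepsilon})$. The paper spells out the half-space argument in slightly more detail (showing $(v,u)$ misses $\ModelsOf(p_m^=)$ and then invoking convexity), but the substance is identical.
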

\begin{proof}
	Let $ u \models \Lambda_0 $ be arbitrary but fixed ($ u $ exists by Claim~\ref{claim:lambda_0_lambda_1_pi_complex}). \begin{figure}[t]
		\centering
		\begin{tikzpicture}[scale=\zbtikzbiggerscaling]
	
	\coordinate (origin) at (0, 0);
	
	% axes
	\draw[->] (origin) -- (7.5,0) coordinate[label = {below:$x$}] (xmax);
	\draw[->] (origin) -- (0,5) coordinate[label = {right:$y$}] (ymax);
	
	% ====================[ Config ]====================
	% separating line
	\coordinate (mainlinetl) at (2, 4.8); % top left
	\coordinate (mainlinebr) at (5.5, 0); % bottom right
	
	% triangle
	\coordinate (trtr) at (3, 4.5); % top right
	\coordinate (trbr) at (7, 0.25); % bottom right
	\coordinate (trleft) at (0.5, 1); % left vertex
	
	% intersection of main line with the triangle
	\coordinate (mainlinetrinttop) at (intersection of mainlinetl--mainlinebr and trleft--trtr);
	\coordinate (mainlinetrintbot) at (intersection of mainlinetl--mainlinebr and trleft--trbr);
	
	% dots
	\coordinate (vpoint) at ($ (mainlinetrinttop)!0.5!(mainlinetrintbot) $);
	\coordinate (upoint) at (2, 2);
	
	% inscribed square
	\coordinate (inscribedsquarehalfwidth) at (0.75, 0);
	\coordinate (inscribedsquarehalfheight) at (0, 0.75);
	
	% ====================[ Actual Drawing ]====================
	% coordinates of the corners of the inscribed square
	% edge centers
	\coordinate (inscribedsquarer) at ($ (vpoint) + (inscribedsquarehalfwidth) $); % right
	\coordinate (inscribedsquareu) at ($ (vpoint) + (inscribedsquarehalfheight) $); % up
	\coordinate (inscribedsquared) at ($ (vpoint) - (inscribedsquarehalfheight) $); % down
	\coordinate (inscribedsquarel) at ($ (vpoint) - (inscribedsquarehalfwidth) $); % left
	% corners
	\coordinate (inscribedsquaretl) at ($ (inscribedsquareu-|inscribedsquarel) $); % top left
	\coordinate (inscribedsquaretr) at ($ (inscribedsquareu-|inscribedsquarer) $); % top right
	\coordinate (inscribedsquarebl) at ($ (inscribedsquared-|inscribedsquarel) $); % bottom left
	\coordinate (inscribedsquarebr) at ($ (inscribedsquared-|inscribedsquarer) $); % bottom right
	
	% Lambda_0 region
	\fill[fill=gray,opacity=0.2] (trleft) -- (mainlinetrinttop) -- (mainlinetrintbot) -- cycle;
	
	% main line
	\draw[thick, OliveGreen] (mainlinetl) -- (mainlinebr);
	
	% region surrounded by the triangle
	\draw[thick, black] (trtr) -- (trleft) -- (trbr);
	
	% line between u and v
	\draw[thick, red] (upoint) -- (vpoint);
	
	% draw dots
	\fill[black] (vpoint) circle (0.075);
	\fill[black] (upoint) circle (0.075);
	
	% square around v
	\draw[thick, blue] (inscribedsquaretl) -- (inscribedsquaretr) -- (inscribedsquarebr) -- (inscribedsquarebl) -- cycle;
	
	% draw labels
	\node[anchor=north east,thick] at (vpoint) {$ v $};
	\node[anchor=north east,thick] at (upoint) {$ u $};
	
	% \Lambda_0 region label
	\draw (2.75, 1.25) node [rounded corners, fill=white] {$ \Lambda_0 $};
	
	% cube label
	\draw [decorate, decoration = {brace, raise=5pt, amplitude=5pt}] (inscribedsquaretr) --  (inscribedsquarebr) node[pos=0.5,right=10pt,black]{$ \OpenCube{v}{\varepsilon} $};

	\node[anchor=north west,thick, OliveGreen] at ($ (mainlinetrinttop) + (0.15, 0.1) $) {$ p_m^= $};
\end{tikzpicture}
		\caption{A two-dimensional example of how the sets defined by $ \Lambda_0 $ and $ \Lambda_1 $ may look like geometrically. Black lines depict the borders defined by linear constraints in $ \Upsilon^m $. The green line is defined by the $ p_m^= $ predicate, so $ \Lambda_1 $ can be thought of as defining the set of points lying on the green line where it is the border of the gray region corresponding to $ \Lambda_0 $. Edges of the open cube defined by $ \OpenCube{v}{\varepsilon} $ are drawn in blue.}
		\label{fig:cube_lambda0_sat}
	\end{figure}
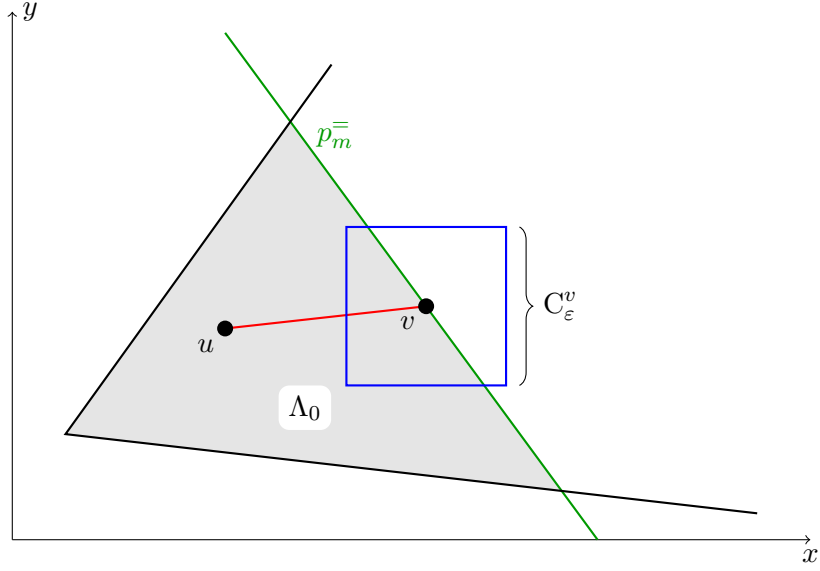 The overall idea is to prove that the line $ L := (v, u) $ must agree with the cube $ \OpenCube{v}{\varepsilon} $ on some model, our argumentation is visualized in Figure~\ref{fig:cube_lambda0_sat}. We start by showing that $ L \subseteq \ModelsOf(\Lambda_0) $. Let $ v + W $ be the affine vectorspace of solutions to $ p_m^= $. Note that $ L $ cannot agree with $ \ModelsOf(p_m^=) $ on any point because otherwise $ v + \varepsilon \cdot (u - v) \in v + W $ would hold for some $ 0 < \varepsilon < 1 $, meaning that $ u - v \in W $ and hence $ u \in v + W = \ModelsOf(p_m^=) $, which is a contradiction to $ u \models p_m^P $. In other words, we can write $ L \subseteq \ModelsOf(p_m^<) \cup \ModelsOf(p_m^>) $. Moreover, $ L $ must be a subset of $ \ModelsOf(p_m^P) $. This is because assuming the contrary would imply that $ \ModelsOf(p_m^<) \cup \ModelsOf(p_m^>) $ is closed under taking the convex hull of a point from $ \ModelsOf(p_m^<) $ and from $ \ModelsOf(p_m^>) $, which is not true due to $ v \models p_m^= $. Since $ \Upsilon^m $ defines a convex set and $ u \models \Upsilon^m $, $ v \models \Upsilon^m $, it holds that \[
		L \subseteq \ModelsOf(\Upsilon^m) \cap \ModelsOf(p_m^P) = \ModelsOf(\Lambda_0)
	\] Due to the fact that $ \OpenCube{v}{\varepsilon} $ contains only strict inequality predicates, there exists an $ 0 < \varepsilon < 1 $ such that all points in the $ \varepsilon $-neighborhood of $ v $ are still lying within the cube $ \ModelsOf(\OpenCube{v}{\varepsilon}) $. In particular, it follows that \[
		v + \varepsilon \cdot (u - v) \models \Lambda_0 \wedge \OpenCube{v}{\varepsilon}
	\]
\end{proof}
\begin{proof}[Proof of Claim~\ref{claim:theta0_theta1_pi_complex}]
	Note that $ (\Gamma_0 \cap \Gamma_1) \cup \{p_m^P\} $ and $ (\Gamma_0 \cap \Gamma_1) \cup \{p_m^=\} $ are both $ \Pi $-complex because \begin{align}
		\label{eqn:lambda_0_entails_gamma0intersectiongamma_1_union_pmp}
		\Lambda_0 &\models (\Gamma_0 \cap \Gamma_1) \cup \{p_m^P\} \\
		\label{eqn:lambda_1_entails_gamma0intersectiongamma_1_union_pmeq}
		\Lambda_1 &\models (\Gamma_0 \cap \Gamma_1) \cup \{p_m^=\}
	\end{align} and $ \Lambda_0 $, $ \Lambda_1 $ are both $ \Pi $-complex by Claim~\ref{claim:lambda_0_lambda_1_pi_complex}. Furthermore, $ \Lambda_0 \wedge \OpenCube{v}{\varepsilon} $ is satisfiable by Claim~\ref{claim:lambda_0_and_cube_sat}. Also, $ \Lambda_1 \wedge \OpenCube{v}{\varepsilon} $ is satisfiable (which is immediate by the choice of $ v $). Hence, applying the entailments in (\ref{eqn:lambda_0_entails_gamma0intersectiongamma_1_union_pmp}) and (\ref{eqn:lambda_1_entails_gamma0intersectiongamma_1_union_pmeq}) to the models of $ \Lambda_0 \wedge \OpenCube{v}{\varepsilon} $ and $ \Lambda_1 \wedge \OpenCube{v}{\varepsilon} $, respectively, yields that $ \Theta_0 $ and $ \Theta_1 $ are both satisfiable. By Lemma~\ref{lemma:adding_neq_predicates_cannot_make_pi_complex_set_pi_simple}, adding strict inequalities cannot make a predicate set $ \Pi $-simple if the resulting set is satisfiable. Hence, applying this lemma inductively yields the claim.
\end{proof}

\noindent We now show that $ \Theta_0 $ and $ \Theta_1 $ agree on the set of $ Z $-dependencies.

\begin{zbclaim}
	For all $ Z \in \Pi $ it holds that \[
	\LinDep_{\pi_Z}(\Theta_0) = \LinDep_{\pi_Z}(\Theta_1)
	\]
\end{zbclaim}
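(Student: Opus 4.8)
The plan is to mirror the structure of Claim~\ref{claim:lambda_0_lindep_eq_lambda_1_lindep}, where we established $\LinDep_{\pi_Z}(\Lambda_0) = \LinDep_{\pi_Z}(\Lambda_1)$, and to carry the argument over to $\Theta_0$ and $\Theta_1$ by exploiting that $\Theta_0 \equiv \Theta_0'$ and $\Theta_1 \equiv \Theta_1'$ (which we have already verified above) together with the structural relationship $\Theta_0 \equiv \Lambda_0$ and $\Theta_1 \equiv \Lambda_1$. Indeed, we showed just above that $\Theta_0 \models \Lambda_0$ and $\Lambda_0 \equiv \Lambda_0' \models \Theta_0$ (similarly for the index $1$), so $\Theta_0 \equiv \Lambda_0$ and $\Theta_1 \equiv \Lambda_1$. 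Since $\LinDep_{\pi_Z}$ depends only on the set of models (it is defined via $\ModelsOf(\Gamma) - v$), equivalent predicate sets have identical $Z$-dependencies, whence $\LinDep_{\pi_Z}(\Theta_i) = \LinDep_{\pi_Z}(\Lambda_i)$ for $i \in \{0,1\}$.

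Having reduced the statement to the already-proven Claim~\ref{claim:lambda_0_lindep_eq_lambda_1_lindep}, the proof would then just be a two-line chain:
\begin{align*}
	\LinDep_{\pi_Z}(\Theta_0) = \LinDep_{\pi_Z}(\Lambda_0) \overset{\ref{claim:lambda_0_lindep_eq_lambda_1_lindep}}{=} \LinDep_{\pi_Z}(\Lambda_1) = \LinDep_{\pi_Z}(\Theta_1)
\end{align*}
holding for every $Z \in \Pi$. The only subtlety to flag explicitly is well-definedness of the first and last equalities: one must invoke the fact (Lemma~\ref{lemma:lindep_gamma_v1_equals_lindep_gamma_v2} in the appendix, or more directly the observation that $\LinDep_{\pi_Z}$ is a function of $\ModelsOf(\cdot)$ alone) that logically equivalent predicate sets yield the same $\LinDep_{\pi_Z}$ value. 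This is immediate from Definition~\ref{def:lindep} since $\ModelsOf(\Theta_0) = \ModelsOf(\Lambda_0)$ forces $\LinDep_{h,v}(\Theta_0) = \LinDep_{h,v}(\Lambda_0)$ for any common model $v$, and the value does not depend on the choice of $v$.

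I do not expect any real obstacle here; the work has already been done in establishing $\Theta_i \equiv \Lambda_i$ in the paragraph preceding this claim, and Claim~\ref{claim:lambda_0_lindep_eq_lambda_1_lindep} supplies the matching $Z$-dependency statement for $\Lambda_0, \Lambda_1$. If one wanted a self-contained argument not routing through the $\Lambda$-version, the alternative is to repeat the reasoning of Claim~\ref{claim:lambda_0_lindep_eq_lambda_1_lindep} directly: pick disjuncts $\Gamma_0', \Gamma_1' \in \DisjOf{\Theta}$ with $\Theta_0' = \Gamma_0' \cup (\Theta_0' \setminus \Gamma_0')$ and likewise for $\Theta_1'$, where the differences contain only strict inequalities, then apply Theorem~\ref{thm:only_equality_predicates_can_establish_lindep_strong} to strip the inequalities and Lemma~\ref{lemma:psi_properties}~\ref{lemma:psi_properties:b} to equate the $Z$-dependencies of the two $\Theta$-disjuncts — but this is strictly more work than the reduction above, so I would go with the short chain through $\Lambda_i$.
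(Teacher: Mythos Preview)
Your short route through $\Theta_i \equiv \Lambda_i$ does not work: the equivalence is false. What was shown in the paragraph before this claim is only the one-way entailment $\Theta_0 \models \Lambda_0$ (via $\OpenCube{v}{\varepsilon} \models \Lambda_0' \setminus \Gamma_0$). The converse $\Lambda_0' \models \Theta_0$ that you invoke was never established, and it fails in general: $\Theta_0$ contains the cube predicates $\OpenCube{v}{\varepsilon}$, which cut the model set down to a small bounded neighbourhood of $v$, whereas $\Lambda_0 \equiv \Lambda_0'$ typically has models far outside that cube. So $\ModelsOf(\Theta_0) \subsetneq \ModelsOf(\Lambda_0)$, and you cannot conclude $\LinDep_{\pi_Z}(\Theta_0) = \LinDep_{\pi_Z}(\Lambda_0)$ from model-set equality.

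The argument is easily repaired, and the fix is precisely the ``alternative'' you dismissed. The point is that although $\Theta_0 \not\equiv \Lambda_0$, the two sets have the \emph{same equality predicates}: both $\OpenCube{v}{\varepsilon}$ and $\Lambda_0' \setminus \Gamma_0$ consist solely of strict inequalities, so $(\Theta_0)^= = (\Gamma_0 \cap \Gamma_1)^= = (\Lambda_0)^=$, and Theorem~\ref{thm:only_equality_predicates_can_establish_lindep_strong} then gives $\LinDep_{\pi_Z}(\Theta_0) = \LinDep_{\pi_Z}(\Lambda_0)$ after all. This is essentially what the paper does: it writes $\Theta_i' = \Gamma_i \cup \OpenCube{v}{\varepsilon}$, strips the cube via Theorem~\ref{thm:only_equality_predicates_can_establish_lindep_strong}, invokes $\LinDep_{\pi_Z}(\Gamma_0) = \LinDep_{\pi_Z}(\Gamma_1)$ (same reasoning as Claim~\ref{claim:lambda_0_lindep_eq_lambda_1_lindep}), and finishes with $\Theta_i \equiv \Theta_i'$. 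Your instinct to reduce to Claim~\ref{claim:lambda_0_lindep_eq_lambda_1_lindep} is fine, but the bridge must be ``same equalities'' via Theorem~\ref{thm:only_equality_predicates_can_establish_lindep_strong}, not ``same models''.
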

\begin{proof}
	By the same reasoning as in the proof of Claim~\ref{claim:lambda_0_lindep_eq_lambda_1_lindep} above, \[
	\LinDep_{\pi_Z}(\Gamma_0) = \LinDep_{\pi_Z}(\Gamma_1)
	\] holds for all $ Z \in \Pi $. Note that \begin{align}
		\label{eqn:theta_0_prime_eq_gamma_0_union_cube}
		\Theta_0' &= \Gamma_0 \cup \OpenCube{v}{\varepsilon} \\
		\label{eqn:theta_1_prime_eq_gamma_1_union_cube}
		\Theta_1' &= \Gamma_1 \cup \OpenCube{v}{\varepsilon}
	\end{align} Since $ \OpenCube{v}{\varepsilon} $ contains only strict inequality predicates, applying Theorem~\ref{thm:only_equality_predicates_can_establish_lindep_strong} (see Appendix~\ref{sec:app:lindep_facts}) to (\ref{eqn:theta_0_prime_eq_gamma_0_union_cube}) and (\ref{eqn:theta_1_prime_eq_gamma_1_union_cube}) yields \[
	\LinDep_{\pi_Z}(\Theta_0') = \LinDep_{\pi_Z}(\Theta_1')
	\] for all $ Z \in \Pi $. Hence, the claim follows (because $ \Theta_0 \equiv \Theta_0' $ and $ \Theta_1 \equiv \Theta_1' $).
\end{proof}

Next, observe that $ \Theta_1 $ is clearly $ p_m^= $-next to $ \Theta_0 $. The final property of a $ \Lambda $-proof, namely that $ \Theta_i' \models \varphi $ and $ \Theta_i' \models \neg\varphi $ must hold for some $ i \in \{0, 1\} $, immediately follows from the same property but for $ \Lambda_i' $ instead of $ \Theta_i' $ because \begin{align*}
	\Theta_0' \equiv \Theta_0 \models \Lambda_0' \\
	\Theta_1' \equiv \Theta_1 \models \Lambda_1'
\end{align*} We conclude that $ (\Theta_0, \Theta_1, \Theta_0', \Theta_1') $ is indeed a correct $ (\Theta \cup \OpenCube{v}{\varepsilon}) $-proof of non-$ \Pi $-decomposability of $ \varphi $. Hence, any non-$ \Pi $-decomposable formula $ \varphi $ has a $ (\Theta \cup \OpenCube{v}{\varepsilon}) $-proof of non-$ \Pi $-decomposability that can be encoded as a string of length bounded by a polynomial in the size of $ \varphi $. Combining this with the soundness of $ \Lambda $-proofs yields an $ \NP $ algorithm for determining non-$ \Pi $-decomposability. Consequently, we obtain a $ \coNP $ upper bound for deciding variable decomposability:

\begin{theorem}
	\label{thm:conp_upper_bound}
	Given a formula $ \varphi \in \QFLRA $ and a partition $ \Pi $, deciding the $ \Pi $-decomposability of $ \varphi $ is in $ \coNP $.
\end{theorem}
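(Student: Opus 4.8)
The plan is to derive the $\coNP$ upper bound of Theorem~\ref{thm:conp_upper_bound} as a direct consequence of the compression argument developed in this section, essentially by assembling the pieces that are already in place. The key observation is that we have constructed a proof system for non-$\Pi$-decomposability — the $\Lambda$-proof system — that is sound (the existence of a $\Lambda$-proof implies non-$\Pi$-decomposability by the Overspilling Theorem~\ref{thm:overspilling}, as argued after the definition of a $\Lambda$-proof), complete (Theorem~\ref{thm:lambda_proof_soundness_completeness}), and verifiable in polynomial time (Algorithm~\ref{alg:lambda_proof_verifier} together with the discussion of its running time). The only remaining defect, which prevents us from immediately concluding $\coNEXP$ improves to $\coNP$, is that the $\Lambda$ in a $\Lambda$-proof — specifically $\Lambda_0'$ and $\Lambda_1'$ — may have exponential size because $\Lambda \in \Psi_\Gamma$ can contain exponentially many predicates coming from the second loop of the covering algorithm.

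First I would invoke the compression construction just completed: given a non-$\Pi$-decomposable $\varphi$, Theorem~\ref{thm:lambda_proof_soundness_completeness} produces a $\Lambda$-proof $(\Lambda_0, \Lambda_1, \Lambda_0', \Lambda_1')$ for some $\Gamma \in \Sat(\DisjPhi)$ and $\Lambda \in \Psi_\Gamma$, and the analysis of this section (culminating in the verification that $(\Theta_0, \Theta_1, \Theta_0', \Theta_1')$ is a correct $(\Theta \cup \OpenCube{v}{\varepsilon})$-proof) replaces it by a $\Lambda$-proof whose $\Lambda$ is $\Theta \cup \OpenCube{v}{\varepsilon}$. Since $\Theta$'s representation has polynomial size (as established in the termination analysis of the covering algorithm and reiterated when discussing $\Lambda$-proof encoding), $\OpenCube{v}{\varepsilon}$ has polynomial size by the small-model-property argument for the center $v$ and the explicit polynomial bound on $\varepsilon = r/3$, and $\Gamma_0 \cap \Gamma_1 \subseteq \Theta$-disjuncts together with $\Upsilon_P^m, \Upsilon_=^m$ all have polynomial size, every component $\Theta_i, \Theta_i'$ of the compressed proof is polynomially bounded in $|\varphi|$. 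Hence the entire compressed proof can be encoded as a string of length $\poly(|\varphi|)$.

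Next I would assemble the nondeterministic algorithm: on input $(\varphi, \Pi)$, guess a candidate tuple $(\Theta_0, \Theta_1, \Theta_0', \Theta_1')$ of polynomial size, then run the verifier of Algorithm~\ref{alg:lambda_proof_verifier} (with $\Lambda := \Theta_0' \cup \Theta_1'$ or, more carefully, treating $\Theta_0', \Theta_1'$ as the disjuncts to be checked against $\Pred(\varphi)$ extended with the cube and $\Theta$ predicates — this matches the definition of a $\Lambda$-proof, where $\Lambda_0', \Lambda_1' \in \DisjOf{\Lambda}$ for whatever $\Lambda$ the proof is relative to). The verifier accepts in polynomial time if and only if the guessed tuple is a genuine $\Lambda$-proof; by soundness, acceptance implies $\varphi$ is not $\Pi$-decomposable, and by completeness-plus-compression, if $\varphi$ is not $\Pi$-decomposable then some polynomial-size accepting tuple exists. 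This places non-$\Pi$-decomposability in $\NP$, hence $\Pi$-decomposability in $\coNP$. Finally, since Theorem~\ref{thm:reduction} and Proposition~\ref{prop:reduction_to_binary_partitions} reduce decomposability for arbitrary $\Pi$ to the binary case in polynomial time with only $O(\log|\Pi|)$ many binary subproblems (and $\coNP$ is closed under such conjunctive truth-table reductions), the bound holds for general partitions as stated.

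The main obstacle I anticipate is bookkeeping rather than conceptual: one must be careful that the verifier of Algorithm~\ref{alg:lambda_proof_verifier} genuinely certifies \emph{all} the defining clauses of a $\Lambda$-proof relative to the \emph{correct} ambient $\Lambda = \Theta \cup \OpenCube{v}{\varepsilon}$ — in particular that $\Theta_0', \Theta_1' \in \DisjOf{\Theta \cup \OpenCube{v}{\varepsilon}}$ (which the verifier checks via the $\Pred(\varphi)$-coverage test, appropriately generalized), and that the equivalence checks $\Theta_i \equiv \Theta_i'$ and the $\LinDep$ equality at Line~\ref{alg:lambda_proof_verifier:line:lindep_check} are all still decidable in time polynomial in the now-polynomial proof size. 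A secondary subtlety is that the verifier does not need to know $\Theta$, $v$, or $\varepsilon$ separately — it only sees the four predicate sets — so I would make explicit that all required checks (next-to relation, $\Pi$-complexity, $\LinDep$ agreement via Gaussian elimination, the model-based $\varphi$/$\neg\varphi$ entailment test via Lemma~\ref{lemma:every_disj_either_true_or_false}) depend only on the guessed sets. Once this is spelled out, the theorem follows immediately.
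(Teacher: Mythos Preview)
Your proposal is correct and follows essentially the same route as the paper: the theorem is stated immediately after the compression construction and is derived exactly as you describe, by combining soundness of $\Lambda$-proofs, completeness (Theorem~\ref{thm:lambda_proof_soundness_completeness}), polynomial-time verifiability (Algorithm~\ref{alg:lambda_proof_verifier}), and the polynomial size of the compressed $(\Theta \cup \OpenCube{v}{\varepsilon})$-proof to place non-$\Pi$-decomposability in $\NP$. Your handling of arbitrary partitions via Proposition~\ref{prop:reduction_to_binary_partitions} and closure of $\coNP$ under conjunctive truth-table reductions is also the intended step; the paper leaves this implicit by working under the standing binary-partition assumption of the chapter.
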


\begin{corollary}
	\label{cor:variable_independence_sigma2}
	Given a formula $ \varphi(x_1, \dots, x_n) \in \QFLRA $ and two of its free variables $ x_i $ and $ x_j $, deciding the independence of $ x_i $ and $ x_j $ is in $ \NP^\NP $.
\end{corollary}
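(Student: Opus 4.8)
The plan is to derive Corollary~\ref{cor:variable_independence_sigma2} directly from Theorem~\ref{thm:conp_upper_bound} by exhibiting a $\Sigma_2^{\P} = \NP^{\NP}$ procedure for the variable independence problem. Recall that, by definition, $x_i$ and $x_j$ are independent in $\varphi(x_1, \dots, x_n)$ if and only if there exists some partition $\Pi$ of $\{x_1, \dots, x_n\}$ placing $x_i$ and $x_j$ in distinct blocks such that $\varphi$ is $\Pi$-decomposable. So the obvious algorithm is: nondeterministically guess such a partition $\Pi$, then verify that $\varphi$ is $\Pi$-decomposable using the $\coNP$ oracle furnished by Theorem~\ref{thm:conp_upper_bound}. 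The guess of $\Pi$ has size polynomial in $n$, and a single $\coNP$ oracle query completes the verification, so this places the problem in $\NP^{\coNP} = \NP^{\NP} = \Sigma_2^{\P}$.

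The one subtlety worth addressing carefully is why it suffices to guess a \emph{single} partition rather than to reason about all of them, and why the guessed partition can be taken to be small. Here I would invoke Proposition~\ref{prop:pi1_pi2_dec_implies_dec_wrt_meet_of_pi1_pi2}: if $\varphi$ is $\Pi$-decomposable for \emph{some} $\Pi$ separating $x_i$ and $x_j$, then it is in particular $\Pi'$-decomposable for the coarsest such partition of the form $\Pi' = \{B, \{x_1,\dots,x_n\}\setminus B\}$ where $B$ is a block of $\Pi$ containing exactly one of $x_i, x_j$; indeed any $\Pi$-decomposition is automatically a $\Pi'$-decomposition whenever $\Pi$ refines $\Pi'$. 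Thus $x_i$ and $x_j$ are independent if and only if $\varphi$ is $\Pi'$-decomposable for some \emph{binary} partition $\Pi'$ separating $x_i$ and $x_j$. This lets the nondeterministic machine guess just a bipartition $\{B, \overline{B}\}$ with $x_i \in B$, $x_j \in \overline{B}$ (encodable in $n$ bits), which also dovetails with the fact that our $\coNP$ decomposability test was developed primarily for binary partitions (Theorem~\ref{thm:conp_upper_bound} is stated for arbitrary $\Pi$, but Proposition~\ref{prop:reduction_to_binary_partitions} already reduces the general case to the binary one in any event).

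Concretely, the procedure is: on input $(\varphi, x_i, x_j)$, nondeterministically guess a subset $B \subseteq \Free(\varphi)$ with $x_i \in B$ and $x_j \notin B$; form $\Pi' := \{B, \Free(\varphi) \setminus B\}$; query the $\coNP$ oracle of Theorem~\ref{thm:conp_upper_bound} on $(\varphi, \Pi')$; accept iff the oracle reports that $\varphi$ is $\Pi'$-decomposable. Correctness in the ``accept'' direction is immediate from the definition of independence. For the converse, if $x_i$ and $x_j$ are independent, pick a witnessing partition $\Pi$ and let $B$ be the block of $\Pi$ containing $x_i$ (which, by independence, does not contain $x_j$); then $\Pi$ refines $\{B, \overline{B}\}$, so $\varphi$ is $\{B, \overline{B}\}$-decomposable and this branch accepts. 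The resource bound is clear: the guess is polynomial-size and there is a single oracle call, giving membership in $\NP^{\coNP} = \NP^{\NP}$.

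I do not anticipate a genuine obstacle here — the corollary is a routine oracle-composition argument riding on the substantive content of Theorem~\ref{thm:conp_upper_bound}. The only point requiring a line of care is the reduction from arbitrary witnessing partitions to binary ones, which is handled cleanly by Proposition~\ref{prop:pi1_pi2_dec_implies_dec_wrt_meet_of_pi1_pi2} (or, alternatively, by noting that refinements of a decomposing partition also decompose, which needs no citation). One should also remark that $\coNP$ oracles and $\NP$ oracles are interchangeable for an $\NP$ base machine since a machine can negate the answer, so writing the bound as $\NP^{\NP}$ is legitimate.
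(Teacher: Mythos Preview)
Your proposal is correct and takes essentially the same approach as the paper: guess a partition $\Pi$ separating $x_i$ and $x_j$, then invoke the $\coNP$ decomposability test of Theorem~\ref{thm:conp_upper_bound}. Your additional reduction to binary partitions is harmless but unnecessary, since Theorem~\ref{thm:conp_upper_bound} already covers arbitrary $\Pi$.
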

\begin{proof}
	Use existential alternation to guess the partition $ \Pi $ of $ \{x_1, \dots, x_n\} $ and then use universal alternation to determine the $ \Pi $-decomposability of $ \varphi $ via the $ \coNP $ algorithm of Theorem~\ref{thm:conp_upper_bound}.
\end{proof}

\section{Lower bounds}
\label{sec:lower_bounds}

In this \zbsectionname/, we show that deciding variable decomposability is $ \coNP $-hard, even if we only consider the case when $ \Pi $ is a binary partition. We also prove a $ \coNP $ lower bound for the variable independence problem.

\subsection{Monadic decomposability}

We start by establishing $ \coNP $-hardness of deciding monadic decomposability.

\begin{theorem}
	\label{thm:mondec_conp_hard}
	Given a formula $ \varphi \in \QFLRA $, it is $ \coNP $-hard to decide the monadic decomposability of $ \varphi $.
\end{theorem}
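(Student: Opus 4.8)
The plan is to reduce from a known $\coNP$-complete problem, the natural choice being the complement of a standard $\NP$-complete problem such as \textsc{3-SAT} (or equivalently \textsc{Tautology}/\textsc{Validity} for propositional DNF). Given a propositional formula $\phi$ over variables $b_1, \dots, b_n$, I would construct in polynomial time an LRA formula $\varphi$ (over real variables, plus perhaps two extra ``gadget'' variables $x, y$) such that $\varphi$ is monadically decomposable if and only if $\phi$ is unsatisfiable (or a tautology, depending on the direction chosen). The key design principle: encode the boolean structure so that a ``bad'' witness for $\phi$ forces a non-monadically-decomposable entanglement between $x$ and $y$ — reusing the observation from the introduction that $x = y$ is the prototypical non-monadically-decomposable formula — while the absence of such a witness collapses $\varphi$ to something trivially monadic (e.g. $\top$, $\bot$, or a Boolean combination of single-variable constraints).

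The encoding I would try: represent each boolean variable $b_i$ by a real variable $z_i$ constrained to $\{0,1\}$ via $z_i(z_i-1)=0$ — wait, that is not linear. Instead, since we are in \emph{quantifier-free} LRA and can use arbitrary Boolean combinations of linear predicates, I would represent the truth value of $b_i$ directly by the \emph{predicate} $z_i > 0$ versus $z_i < 0$ (or $z_i \le 0$), so that the literal $b_i$ becomes $z_i > 0$ and $\neg b_i$ becomes $z_i < 0$. A clause becomes a disjunction of such predicates, and $\phi$ itself becomes a Boolean combination $\widehat\phi(z_1,\dots,z_n)$ which is already monadic (each predicate mentions one $z_i$). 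The whole point is then to attach the $(x,y)$-gadget conditionally: define
\[
\varphi := \big(\widehat\phi(z_1,\dots,z_n) \;\wedge\; x = y\big) \;\vee\; \big(\neg\widehat\phi(z_1,\dots,z_n) \;\wedge\; \theta(x,y)\big),
\]
where $\theta$ is chosen so that on the ``$\neg\widehat\phi$'' branch the $x,y$-relation is monadic and, crucially, so that the two branches glue together into a monadic formula exactly when the $x=y$ branch is never activated, i.e. when $\widehat\phi$ is unsatisfiable. One must be careful that $\widehat\phi$ being satisfiable really does leave a genuinely non-monadic residue; this may require a more careful gadget than the naive one above, perhaps along the lines of $\varphi := (x = y) \leftrightarrow \widehat\phi$ or a cylinder construction $\varphi := \bigvee_i (\ell_i(\vec z) \wedge x = y) \vee (\ldots)$ so that the ``diagonal'' $x=y$ survives over an entire open region of $\vec z$-space rather than a measure-zero set — the latter matters because over a single point the restriction could be absorbed monadically.

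I would then prove the two directions. For soundness ($\phi$ unsat $\Rightarrow$ monadically decomposable): if $\widehat\phi$ is unsatisfiable then the first disjunct vanishes and $\varphi \equiv \neg\widehat\phi \wedge \theta \equiv \theta$ up to the monadic factor $\widehat\phi$, which is manifestly a Boolean combination of monadic predicates; alternatively invoke Theorem~\ref{thm:conp_upper_bound}'s machinery is unnecessary here since we just exhibit a monadic decomposition directly. For completeness (monadically decomposable $\Rightarrow$ $\phi$ unsat): suppose $\phi$ is satisfiable, witnessed by an assignment inducing a sign pattern $(s_1,\dots,s_n)$; then on the open orthant $\{z_i : \operatorname{sign}(z_i) = s_i\}$ the formula $\varphi$ restricts to $x = y$ (an entire positive-dimensional cylinder worth of copies of the diagonal), and I would argue — mimicking the ``overspilling''/non-decomposability argument sketched just before Theorem~\ref{thm:mondec_conp_hard} in the excerpt, or simply the elementary projection argument — that any monadic decomposition of $\varphi$ would yield a monadic decomposition of $x=y$, a contradiction. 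The main obstacle I anticipate is precisely this completeness direction: ensuring the gadget forces a \emph{robust} (full-dimensional, hence genuinely non-monadic) occurrence of the diagonal rather than something that a clever monadic formula could still capture by exploiting the freedom in the other coordinates; getting the disjunction/conjunction structure of $\varphi$ exactly right so that both directions go through simultaneously, and keeping the reduction polynomial-time (no blow-up from converting $\phi$ to any normal form), is where the real care is needed.
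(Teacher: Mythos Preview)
Your approach is correct and is essentially the paper's first proof: reduce from propositional unsatisfiability, encode each Boolean variable by a monadic LRA predicate to obtain a monadic $\widehat\phi$, and conjoin the gadget $x=y$. The paper simply takes $\theta:=\bot$, i.e.\ $\varphi:=\widehat\phi\wedge z_1=z_2$, which dissolves the design worries you raise (no second disjunct is needed); the completeness direction is exactly the pigeonhole/``swap one coordinate'' argument you sketch---infinitely many models on the diagonal, finitely many DNF terms in any monadic decomposition, hence two models share a term and swapping $z_2$ yields a contradiction.
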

\begin{proof}
	We construct a polynomial-time many-one reduction from the $ \coNP $-complete problem of deciding the unsatisfiability of a propositional formula $ \varphiprop(x_1, \dots, x_n) $. The reduction is similar to the one showing $ \coNP $-hardness of deciding monadic decomposability over Presburger arithmetic \cite[Lemma 3.1]{hague:2020}. Define $ \varphi(y_1, \dots, y_n, z_1, z_2) \in \QFLRA $ to be a formula obtained from $ \varphiprop $ by replacing every variable $ x_i $ with $ y_i \neq 0 $. We show that $ \varphiprop $ is unsatisfiable if and only if $ \varphi \wedge z_1 = z_2 $ is monadically decomposable. Indeed, if $ \varphiprop $ is unsatisfiable, then $ \varphi $ is equivalent to $ \bot $ meaning that $ \varphi \wedge z_1 = z_2 $ is trivially monadically decomposable.
	
	For the converse direction, suppose that $ \varphiprop $ is satisfiable and, for the sake of contradiction, that $ \varphi \wedge z_1 = z_2 $ is monadically decomposable. Since $ \varphi $ is also satisfiable and for every model of $ \varphi $ there exist infinitely many ways to assign $ z_1 $ and $ z_2 $ a value so that $ \varphi \wedge z_1 = z_2 $ is satisfied by that model, there exists an infinite family $ v_1, v_2, v_3, \dots $ of models of $ \varphi \wedge z_1 = z_2 $, such that they pairwise agree on the values they assign to the $ y_1, \dots, y_n $ variables but disagree on the values assigned to $ z_1 $ and $ z_2 $. Without loss of generality, let the decomposition of $ \varphi \wedge z_1 = z_2 $ be given as a disjunction $ \varphi \wedge z_1 = z_2 \equiv \bigvee_{\Lambda \in \Psi} \Lambda $ over a set $ \Psi $ of predicate sets. By the pigeonhole principle, there exists $ \Lambda \in \Psi $ such that for some $ i \neq j $, $ v_i \models \Lambda $ and $ v_j \models \Lambda $. Let $ v_i[z_2 \mapsto v_j(z_2)] $ denote the model obtained from $ v_i $ by changing the value $ z_2 $ gets assigned to $ v_j(z_2) $, i.e., to the value $ v_j $ assigns the variable $ z_2 $. Since $ v_i(z_1) = v_i(z_2) \neq v_j(z_2) $, it follows that $ v_i[z_2 \mapsto v_j(z_2)] \not\models z_1 = z_2 $. But on the other hand, since $ \Lambda $ is a set of monadic predicates satisfied by both $ v_i $ and $ v_j $, it follows that $ v_i[z_2 \mapsto v_j(z_2)] \models \Lambda $. Since $ \Lambda \models \varphi \wedge z_1 = z_2 \models z_1 = z_2 $, we arrive at a contradiction.
\end{proof}

We also give an alternative proof of Theorem~\ref{thm:mondec_conp_hard} showcasing how the Overspilling Theorem~\ref{thm:overspilling} can be used to obtain lower bounds.

\begin{proof}[Alternative proof of Theorem~\ref{thm:mondec_conp_hard}]
	We construct a polynomial-time many-one reduction from the $ \coNP $-complete problem of deciding the validity of a propositional formula given in DNF. Let $ \varphiprop(x_1, \dots, x_n) $ be such a formula in DNF. Without loss of generality, it can be assumed that $ \varphiprop $ is satisfiable. We define $ \varphi(x, y_1, \dots, y_n) \in \QFLRA $ to be a formula obtained from $ \varphiprop $ by replacing every variable $ x_i $ with $ y_i \ge x $.
	
	Note that if $ \varphiprop $ is valid, then so is $ \varphi $, meaning that $ \varphi \equiv \top $ is trivially monadically decomposable. Hence, to prove the theorem, it suffices to show that $ \varphi $ is not monadically decomposable if $ \varphiprop $ is not valid. As $ \varphiprop $ is satisfiable but not valid, it is possible to choose a model and an unsatisfying assignment such that they agree on the truth values they assign to all variables except exactly one, say, $ x_j $. Since $ y_i \ge x $ simulates $ x_i $ being set to true and $ y_i < x $ corresponds to $ x_i $ being set to false, it follows that there exist $ \circ_i \in \{<, \ge\} $ such that for \begin{align*}
	\psi_0 &:= \bigwedge_{\substack{i = 1 \\ i \neq j}}^n y_i \circ_i x \wedge y_j < x \\
	\psi_1 &:= \bigwedge_{\substack{i = 1 \\ i \neq j}}^n y_i \circ_i x \wedge y_j \ge x
	\end{align*} it holds that $ \psi_b \models \varphi $ and $ \psi_{1-b} \models \neg\varphi $ for some $ b \in \{0, 1\} $. Let $ \Theta_0, \Theta_1 $ be predicate sets obtained from the conjuncts of $ \psi_0 $ and $ \psi_1 $ by replacing every ``$ \ge $'' with ``$ = $'', respectively. Suppose, by way of contradiction, that $ \varphi $ is monadically decomposable. This implies, in particular, that $ \varphi $ and $ \neg\varphi $ are both $ \Pi $-decomposable for $ \Pi := \{X, Y\} $ where $ X := \{x, y_1, \dots, y_{j-1}, y_{j+1}, \dots, y_n\} $ and $ Y := \{y_j\} $.
	
	Note that $ \Theta_0 $ and $ \Theta_1 $ are both $ \Pi $-complex and it remains true that $ \Theta_b \models \varphi $ and $ \Theta_{1-b} \models \neg\varphi $ for some $ b \in \{0, 1\} $. Furthermore, by construction, $ \Theta_1 $ is $ (y_j = x) $-next to $ \Theta_0 $. We now show that $ \Theta_0 $ and $ \Theta_1 $ have the same set of $ Z $-dependencies for all $ Z \in \Pi $ because then we can apply the Overspilling Theorem~\ref{thm:overspilling} and get a contradiction.
	
	\begin{zbclaim}
		For all $ Z \in \Pi $ it holds that $ \LinDep_{\pi_Z}(\Theta_0) = \LinDep_{\pi_Z}(\Theta_1) $.
	\end{zbclaim}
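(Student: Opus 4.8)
The plan is to treat the two blocks $Z=X$ and $Z=Y$ separately, using the observation that $\Theta_0$ and $\Theta_1$ agree on all predicates except the single one constraining $y_j$ (it is $y_j < x$ in $\Theta_0$ and $y_j = x$ in $\Theta_1$), and that this is the only $\Pi$-disrespecting predicate present in either set. Recall that we may assume (as already noted above) that $\Theta_0$ and $\Theta_1$ are both $\Pi$-complex.

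First I would dispose of $Z = Y$. Since $Y = \{y_j\}$ is a singleton, $\Q^{|Y|} = \Q$ has exactly two subspaces, $\gen{0}$ and $\Q$; as $\gen{0}$ lies in every $\LinDep_{\pi_Y}(\cdot)$, it suffices to show that neither $\LinDep_{\pi_Y}(\Theta_0)$ nor $\LinDep_{\pi_Y}(\Theta_1)$ contains $\Q = \gen{e_{Y,1}}$. By the computation carried out in (\ref{eqn:ezi_lindep_piz_upsilon_iff_zi_in_fixes_upsilon}) applied to the canonical basis vector $e_{Y,1}$, the membership $\gen{e_{Y,1}} \in \LinDep_{\pi_Y}(\Theta_b)$ is equivalent to $y_j \in \Fixes(\Theta_b)$. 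But $\Theta_0$ and $\Theta_1$ are $\Pi$-complex, hence fix neither $X$ nor $Y$; in particular $y_j \notin \Fixes(\Theta_b)$ for $b \in \{0,1\}$. Therefore $\LinDep_{\pi_Y}(\Theta_0) = \{\gen{0}\} = \LinDep_{\pi_Y}(\Theta_1)$.

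For $Z = X$ I would first strip the strict inequalities: by Theorem~\ref{thm:only_equality_predicates_can_establish_lindep_strong} we have $\LinDep_{\pi_X}(\Theta_b) = \LinDep_{\pi_X}(\Theta_b^=)$, where $\Theta_0^=$ consists of the equalities $y_i = x$ arising from those $i \ne j$ with $\circ_i = {\ge}$ (the predicate $y_j < x$ is dropped), and $\Theta_1^= = \Theta_0^= \cup \{y_j = x\}$. Both $\ModelsOf(\Theta_0^=)$ and $\ModelsOf(\Theta_1^=)$ are linear subspaces of $\Q^{n+1}$ through the origin, so I can pick bases $A_0, A_1$ with $\ModelsOf(\Theta_b^=) = \ImageOf(\lc_{A_b})$. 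The key step is then to show $\ImageOf(\pi_X \circ \lc_{A_0}) = \ImageOf(\pi_X \circ \lc_{A_1})$, i.e. $\pi_X(\ModelsOf(\Theta_0^=)) = \pi_X(\ModelsOf(\Theta_1^=))$: the inclusion ``$\supseteq$'' is immediate since $\ModelsOf(\Theta_1^=) \subseteq \ModelsOf(\Theta_0^=)$, and for ``$\subseteq$'' one notes that any $(x, y_1, \dots, y_n) \in \ModelsOf(\Theta_0^=)$ yields, upon overwriting the $y_j$-coordinate by $x$, a vector in $\ModelsOf(\Theta_1^=)$ with the same $X$-coordinates. Consequently the orthogonal complements $\ImageOf(\pi_X \circ \lc_{A_0})^\bot = \ImageOf(\pi_X \circ \lc_{A_1})^\bot$ agree, and Theorem~\ref{thm:lindep_algebraic_char} gives $\LinDep_{\pi_X}(\Theta_0^=) = \LinDep_{\pi_X}(\Theta_1^=)$, hence $\LinDep_{\pi_X}(\Theta_0) = \LinDep_{\pi_X}(\Theta_1)$.

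I expect the only mildly delicate point to be this last step for $Z = X$: one has to see clearly that the extra equality $y_j = x$ in $\Theta_1^=$ is genuinely invisible to the projection $\pi_X$, which works precisely because $\Theta_0^=$ places no constraint on $y_j$ at all, so $y_j = x$ can always be satisfied without altering the $X$-coordinates. The remainder is bookkeeping about which conjuncts of $\psi_0$ and $\psi_1$ become equalities under the substitution ${\ge} \mapsto {=}$.
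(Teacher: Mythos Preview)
Your argument is correct and follows the same overall strategy as the paper: reduce to equalities via Theorem~\ref{thm:only_equality_predicates_can_establish_lindep_strong}, show that the $\pi_X$-images of $\ModelsOf(\Theta_0^=)$ and $\ModelsOf(\Theta_1^=)$ coincide, invoke Theorem~\ref{thm:lindep_algebraic_char}, and handle $Z=Y$ separately using that $|Y|=1$.

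The execution differs. For $Z=X$, the paper reorders the variables, writes down explicit solution matrices $M_0,M_1$ for $\Theta_0^=,\Theta_1^=$, and observes that the rows corresponding to $X$ agree up to a zero column, hence $\ImageOf(\Pi_X M_0)=\ImageOf(\Pi_X M_1)$. Your overwrite-$y_j$-by-$x$ argument achieves the same equality of $\pi_X$-images without fixing a basis; it is shorter and makes transparent \emph{why} the extra equality $y_j=x$ is invisible to $\pi_X$ (namely, $y_j$ is the unique coordinate outside $X$ and is unconstrained by $\Theta_0^=$). For $Z=Y$, the paper reads off from the first row of $M_i$ that $\ImageOf(\Pi_Y M_i)=\Q$, while you instead use $\Pi$-complexity together with the equivalence~(\ref{eqn:ezi_lindep_piz_upsilon_iff_zi_in_fixes_upsilon}) to conclude $y_j\notin\Fixes(\Theta_b)$ and hence $\LinDep_{\pi_Y}(\Theta_b)=\{\gen{0}\}$. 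Both routes are valid; yours avoids the bookkeeping of an explicit basis at the cost of importing the $\Pi$-complexity assumption (which the surrounding text has already established).
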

	\begin{proof}
		In the proof, we rely on some utility lemmas we introduce in the Appendix. By Theorem~\ref{thm:only_equality_predicates_can_establish_lindep_strong}, it suffices to show that \[
			\LinDep_{\pi_Z}(\Theta_0^=) = \LinDep_{\pi_Z}(\Theta_1^=)
		\] Since $ \Theta_1 $ is $ (y_j = x) $-next to $ \Theta_0 $, \begin{align}
			\label{eqn:theta0_eq_union_yjx_eq_theta1_eq}
			\Theta_0^= \cup \{y_j = x\} = \Theta_1^=
		\end{align} For the sake of simplicity and convenience, we (without loss of generality) reorder the components of vectors in all vectorspaces so that \begin{itemize}
			\item $ y_j $ corresponds to the first component;
			\item precisely those $ y_i $, for which $ (y_i = x) \in \Theta_0^= $ holds, correspond to components at positions between $ 2 $ and $ k $, for some $ k \in \N $;
			\item $ x $ corresponds to the $ (n+1) $-th component;
			\item the remaining variables correspond to components at positions between $ k+1 $ and $ n $.
		\end{itemize} Then, solving $ \Theta_0^= $, $ \Theta_1^= $ and using (\ref{eqn:theta0_eq_union_yjx_eq_theta1_eq}) yields that $ \ModelsOf(\Theta_0^=) = \ImageOf(M_0) $ and $ \ModelsOf(\Theta_1^=) = \ImageOf(M_1) $, where \begin{align}
			\label{eqn:m0_mat_def}
			M_0 := \begin{pNiceMatrix}[first-col, extra-margin=2pt, code-for-first-col=\scriptscriptstyle, columns-width=2em]
				1 & 0 & 0 & 0 & \cdots & 0 & 1\\
				2 & 1 & 0 & 0 & \cdots & 0 & 0\\
				\vdots & \vdots & \vdots & \vdots & \ddots & \vdots & \vdots \\
				k   & 1 & 0 & 0 & \cdots & 0 & 0 \\
				k+1 & 0 & 1 & 0 & \cdots & 0 & 0 \\
				k+2 & 0 & 0 & 1 & \cdots & 0 & 0 \\
				\vdots & \vdots & \vdots & \vdots & \ddots & \vdots & \vdots \\
				n & 0 & 0 & 0 & \cdots & 1 & 0 \\
				n+1 & 1 & 0 & 0 & \cdots & 0 & 0
				\CodeAfter \tikz \node [highlight = (2-1) (9-5)] {};
			\end{pNiceMatrix}
		\end{align} and \begin{align}
			\label{eqn:m1_mat_def}
			M_1 := \begin{pNiceMatrix}[first-col, extra-margin=2pt, code-for-first-col=\scriptscriptstyle, columns-width=2em]
				1   & 1 & 0 & 0 & \cdots & 0\\
				2 & 1 & 0 & 0 & \cdots & 0 \\
				\vdots & \vdots & \vdots & \vdots & \ddots & \vdots \\
				k   & 1 & 0 & 0 & \cdots & 0 \\
				k+1 & 0 & 1 & 0 & \cdots & 0 \\
				k+2 & 0 & 0 & 1 & \cdots & 0 \\
				\vdots & \vdots & \vdots & \vdots & \ddots & \vdots \\
				n & 0 & 0 & 0 & \cdots & 1 \\
				n+1 & 1 & 0 & 0 & \cdots & 0
				\CodeAfter \tikz \node [highlight = (2-1) (9-5)] {} ;
			\end{pNiceMatrix}
		\end{align} Let $ \pi_X $ and $ \pi_Y $ be the projection homomorphisms outputting the components corresponding to $ X $ and $ Y $, respectively. Let furthermore $ \Pi_X $ and $ \Pi_Y $ be the matrix representations of $ \pi_X $ and $ \pi_Y $, respectively. Note that projecting onto $ Y $ corresponds to looking at the image of the first row of $ M_i $, while the projection onto $ X $ is determined by the image of the remaining $ n $ rows. More precisely, by Theorem~\ref{thm:lindep_algebraic_char}, \begin{align*}
			\LinDep_{\pi_Y}(\Theta_i^=) &\overset{\ref{thm:lindep_algebraic_char}}{=} \{U \le \Q \mid U \le \ImageOf(\Pi_Y \cdot M_i)^\bot\} \\
			&= \{U \le \Q \mid U \le \ImageOf((1))^\bot\} \\
			&= \{U \le \Q \mid U \le \gen{0}\} \\
			&= \{\gen{0}\}
		\end{align*} holds for all $ i \in \{0, 1\} $ because the first row of $ M_i $ contains a non-zero entry. Similarly, since $ M_1 $ agrees (up to zero entries) with $ M_2 $ on all rows except the first one (see (\ref{eqn:m0_mat_def}) and (\ref{eqn:m1_mat_def}) where the agreeing parts are highlighted), \[
			\ImageOf(\Pi_X \cdot M_0) = \ImageOf(\Pi_X \cdot M_1)
		\] Hence, by Theorem~\ref{thm:lindep_algebraic_char} we get \begin{align*}
			\LinDep_{\pi_X}(\Theta_0^=) &\overset{\ref{thm:lindep_algebraic_char}}{=} \{U \le \Q^n \mid U \le \ImageOf(\Pi_X \cdot M_0)^\bot\} \\
			&= \{U \le \Q^n \mid U \le \ImageOf(\Pi_X \cdot M_1)^\bot\} \\
			&\overset{\ref{thm:lindep_algebraic_char}}{=} \LinDep_{\pi_X}(\Theta_1^=)
		\end{align*} and conclude that $ \LinDep_{\pi_Z}(\Theta_0^=) = \LinDep_{\pi_Z}(\Theta_1^=) $ holds for all $ Z \in \Pi $.
	\end{proof}

	If $ \Theta_1 \models \varphi $, then $ \varphi \wedge \Theta_0 $ is satisfiable by the Overspilling Theorem~\ref{thm:overspilling}, which is a contradiction to $ \Theta_0 \models \neg\varphi $. Otherwise, $ \Theta_1 \models \neg\varphi $ must hold by Lemma~\ref{lemma:every_disj_either_true_or_false}, so applying the Overspilling Theorem~\ref{thm:overspilling} to $ \neg\varphi $ yields that $ \neg\varphi \wedge \Theta_0 $ is satisfiable, which similarly contradicts the assumption $ \Theta_0 \models \varphi $. In both cases, we have arrived at a contradiction.
\end{proof}

\subsection{Variable independence}

Since monadic decomposability is equivalent to all pairs of distinct variables being independent, we can use the hardness we have just derived in Theorem~\ref{thm:mondec_conp_hard} to show that deciding variable independence is also hard.

\begin{corollary}
	\label{cor:variable_collection_independence_conp_hard}
	Given a formula $ \varphi(x_1, \dots, x_n) \in \QFLRA $ and a set $ S \subseteq \{x_1, \dots, x_n\}^2 $, it is $ \coNP $-hard to decide whether for all $ (x_i, x_j) \in S $ the variables $ x_i $ and $ x_j $ are independent.
\end{corollary}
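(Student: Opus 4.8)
\textbf{Proof proposal for Corollary~\ref{cor:variable_collection_independence_conp_hard}.}

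The plan is to reduce from the monadic decomposability problem, whose $\coNP$-hardness was just established in Theorem~\ref{thm:mondec_conp_hard}. The key observation is that a formula $\varphi(x_1, \dots, x_n)$ is monadically decomposable if and only if it is $\Pi$-decomposable for $\Pi = \{\{x_1\}, \dots, \{x_n\}\}$ (the partition into singletons), and by the definition of independence combined with Proposition~\ref{prop:pi1_pi2_dec_implies_dec_wrt_meet_of_pi1_pi2}, this in turn is equivalent to the condition that \emph{every} pair of distinct variables $x_i, x_j$ is independent. More precisely, if every pair $(x_i, x_j)$ with $i \neq j$ is independent, then for each such pair there is a partition $\Pi_{i,j}$ separating $x_i$ and $x_j$ with respect to which $\varphi$ is decomposable; taking the meet of all these partitions and applying Proposition~\ref{prop:pi1_pi2_dec_implies_dec_wrt_meet_of_pi1_pi2} repeatedly (using associativity and commutativity of $\sqcap$) yields that $\varphi$ is decomposable with respect to a partition refining the meet, and since that meet is exactly $\{\{x_1\}, \dots, \{x_n\}\}$, the formula is monadically decomposable. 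The converse direction is immediate: a monadic decomposition is in particular a $\Pi$-decomposition for any partition $\Pi$ separating a given pair.

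First I would make the reduction explicit. Given an instance $\varphi(x_1, \dots, x_n)$ of monadic decomposability, I output the same formula $\varphi$ together with the set $S := \{(x_i, x_j) \mid 1 \le i, j \le n, \, i \neq j\}$. This is clearly computable in polynomial time. Then I would argue the correctness of the reduction in the two directions sketched above: $\varphi$ is monadically decomposable $\iff$ every pair in $S$ is independent. The forward direction uses that a monadic decomposition respects any partition, so in particular it witnesses independence of each pair. The backward direction is the one requiring the meet argument via Proposition~\ref{prop:pi1_pi2_dec_implies_dec_wrt_meet_of_pi1_pi2}.

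The main obstacle is the backward direction, and specifically making sure the iterated application of Proposition~\ref{prop:pi1_pi2_dec_implies_dec_wrt_meet_of_pi1_pi2} is carried out cleanly. One must check that the meet $\bigsqcap_{(x_i,x_j) \in S} \Pi_{i,j}$, where $\Pi_{i,j}$ is any binary partition putting $x_i$ and $x_j$ in different blocks, actually equals the singleton partition $\{\{x_1\}, \dots, \{x_n\}\}$: indeed, for any two distinct variables $x_i \neq x_j$, the block of $\Pi_{i,j}$ containing $x_i$ does not contain $x_j$, so in the meet no block can contain both $x_i$ and $x_j$; hence every block of the meet is a singleton. Then Proposition~\ref{prop:pi1_pi2_dec_implies_dec_wrt_meet_of_pi1_pi2}, applied inductively over the (finitely many) partitions $\Pi_{i,j}$, gives that $\varphi$ is decomposable with respect to this meet, i.e., monadically decomposable. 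I would also note the degenerate case $n \le 1$, where $S = \varnothing$ and every formula is trivially monadically decomposable, so the reduction is still correct (though vacuous); hardness of course only bites for $n \ge 2$. The rest is routine, and the $\coNP$-hardness of the target problem follows directly from that of monadic decomposability via this many-one reduction.
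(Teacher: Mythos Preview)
Your proof is correct and follows exactly the same approach as the paper: reduce from monadic decomposability by taking $S$ to be the set of all pairs of distinct variables, and use that monadic decomposability is equivalent to every such pair being independent. The paper dismisses this equivalence with a single ``Clearly,'' whereas you spell out the backward direction via Proposition~\ref{prop:pi1_pi2_dec_implies_dec_wrt_meet_of_pi1_pi2}; one small slip is that the witnessing partitions $\Pi_{i,j}$ need not be binary (independence only requires \emph{some} partition separating $x_i$ and $x_j$), but your meet argument goes through unchanged.
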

\begin{proof}
	We reduce from the $ \coNP $-hard problem of deciding monadic decomposability (see Theorem~\ref{thm:mondec_conp_hard}) by setting \[
		S := \{(x_i, x_j) \in \{x_1, \dots, x_n\}^2 \mid i \neq j\}
	\] and leaving the given formula $ \varphi $ unchanged. Clearly, $ \varphi $ is monadically decomposable if and only if $ x_i $ and $ x_j $ are independent for all $ i \neq j $.
\end{proof}

\begin{theorem}
	\label{thm:variable_independence_conp_hard}
	Given a formula $ \varphi(x_1, \dots, x_n) \in \QFLRA $ and two of its free variables $ x_i $ and $ x_j $, the problem of deciding the independence of $ x_i $ and $ x_j $ is $ \coNP $-hard with respect to conjunctive truth-table reductions.
\end{theorem}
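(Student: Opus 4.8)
The plan is to combine the $\coNP$-hardness of monadic decomposability (Theorem~\ref{thm:mondec_conp_hard}) with the observation that monadic decomposability of a formula is equivalent to the conjunction, taken over all pairs of distinct free variables, of the statements that those two variables are independent. This equivalence turns the many-one reduction underlying Theorem~\ref{thm:mondec_conp_hard} into a conjunctive truth-table reduction to the single-pair independence problem.

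Concretely, I would start from the reduction in the proof of Theorem~\ref{thm:mondec_conp_hard}: given a propositional formula $\varphiprop(x_1,\dots,x_n)$, it produces in polynomial time a formula $\chi(y_1,\dots,y_n,z_1,z_2) := \varphi \wedge z_1 = z_2 \in \QFLRA$ such that $\varphiprop$ is unsatisfiable if and only if $\chi$ is monadically decomposable. Writing $v_1,\dots,v_{n+2}$ for the free variables of $\chi$, the conjunctive truth-table reduction simply outputs the list of queries $(\chi, v_i, v_j)$ for all $1 \le i < j \le n+2$; this is clearly polynomial-time computable (there are only $O(n^2)$ such queries). It then remains to verify that $\varphiprop$ is unsatisfiable if and only if $v_i$ and $v_j$ are independent for every pair $i \neq j$.

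The key step — and I expect the only nontrivial one — is the claim that $\chi$ is monadically decomposable if and only if all pairs of distinct free variables of $\chi$ are independent. The forward implication is immediate, since the all-singletons partition places any two variables in distinct blocks. For the converse, suppose $v_i$ and $v_j$ are independent for every $i \neq j$; by definition there is a partition $\Pi_{ij}$ of $\{v_1,\dots,v_{n+2}\}$ with $v_i$ and $v_j$ in distinct blocks such that $\chi$ is $\Pi_{ij}$-decomposable. Applying Proposition~\ref{prop:pi1_pi2_dec_implies_dec_wrt_meet_of_pi1_pi2} inductively (the meet operation $\sqcap$ is commutative and associative), $\chi$ is decomposable with respect to $\Pi := \bigsqcap_{i \neq j} \Pi_{ij}$. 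Since $\Pi$ is a refinement of each $\Pi_{ij}$, and a refinement cannot merge two variables lying in distinct blocks, the meet $\Pi$ separates every pair of distinct variables; hence every block of $\Pi$ is a singleton, i.e.\ $\Pi$ is the monadic partition, so $\chi$ is monadically decomposable.

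Putting these together, $\varphiprop$ is unsatisfiable $\iff$ $\chi$ is monadically decomposable $\iff$ $v_i$ and $v_j$ are independent for all $i \neq j$, which is exactly the acceptance condition (\ref{eqn:ctt_acceptance}) of a conjunctive truth-table reduction from the $\coNP$-complete propositional unsatisfiability problem to the variable independence problem. As the reduction function is polynomial-time computable, this establishes $\coNP$-hardness with respect to conjunctive truth-table reductions. I would also remark that this argument is a refinement of Corollary~\ref{cor:variable_collection_independence_conp_hard}, now phrased so that each oracle query concerns only a single pair of variables, and that $\coNP$ being closed under conjunctive truth-table reductions (noted in Section~\ref{sec:preliminaries:complexitytheory}) makes this the appropriate notion of hardness here.
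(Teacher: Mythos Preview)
Your proposal is correct and follows essentially the same route as the paper: the paper factors the argument through Corollary~\ref{cor:variable_collection_independence_conp_hard} (reducing monadic decomposability to the ``all pairs in $S$ independent'' problem) and then trivially truth-table reduces that to single-pair independence, whereas you compose both steps into one direct reduction. Your explicit justification of the equivalence ``monadically decomposable $\Leftrightarrow$ every pair of variables is independent'' via Proposition~\ref{prop:pi1_pi2_dec_implies_dec_wrt_meet_of_pi1_pi2} is in fact more detailed than the paper's, which simply asserts this as ``clearly'' in the proof of Corollary~\ref{cor:variable_collection_independence_conp_hard}.
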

\begin{proof}
	We reduce from the problem of deciding whether $ x_i $ and $ x_j $ are independent for all $ (x_i, x_j) \in S $, where $ S $ is a given collection of variable pairs (Corollary \ref{cor:variable_collection_independence_conp_hard} establishes the $ \coNP $-hardness of this problem). The conjunctive truth-table reduction calls the oracle deciding the independence of a single pair of variables for every such pair from $ S $ and then returns the conjunction of the results. Since the set of conjunctive truth-table reductions is closed under composition with many-one reductions, the reducibility of every language in $ \coNP $ to the present problem follows.
\end{proof}

\subsection{Variable decomposability}

Similar to how we used hardness of monadic decomposability to prove lower bounds for deciding variable independence, we can construct reductions proving lower bounds for variable decomposability.

\begin{lemma}
	\label{lemma:vardec_collection_conp_hard}
	Given a formula $ \varphi \in \QFLRA $ and a collection $ S $ of partitions, it is $ \coNP $-hard to decide whether $ \varphi $ is $ \Pi' $-decomposable for all $ \Pi' \in S $.
\end{lemma}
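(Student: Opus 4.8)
The plan is to reduce from the $\coNP$-hard monadic decomposability problem (Theorem~\ref{thm:mondec_conp_hard}), essentially mirroring the argument used for variable independence in Corollary~\ref{cor:variable_collection_independence_conp_hard} but now phrased in terms of $\Pi$-decomposability instead of pairwise independence. Given a formula $\varphi(x_1, \dots, x_n) \in \QFLRA$ for which we wish to decide monadic decomposability, I would leave $\varphi$ unchanged and set $S$ to be the singleton collection $S := \{\{\{x_1\}, \dots, \{x_n\}\}\}$ consisting of just the unary partition. Then $\varphi$ is $\Pi'$-decomposable for all $\Pi' \in S$ if and only if $\varphi$ is decomposable with respect to the unary partition, which by definition is exactly monadic decomposability. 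This is a polynomial-time many-one reduction, so $\coNP$-hardness transfers.

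Alternatively, and perhaps more in the spirit of later using this lemma together with Proposition~\ref{prop:reduction_to_binary_partitions}, one could make $S$ a collection of binary partitions: reduce from monadic decomposability, and using the construction in the proof of Proposition~\ref{prop:reduction_to_binary_partitions} (applied to the unary partition $\Pi = \{\{x_1\}, \dots, \{x_n\}\}$), compute the collection $S$ of $O(\log n)$ binary partitions whose meet equals $\Pi$. By Proposition~\ref{prop:reduction_to_binary_partitions}, $\varphi$ is monadically decomposable if and only if $\varphi$ is $\Pi'$-decomposable for all $\Pi' \in S$, and $S$ is computable in polynomial time. Either formulation works; I would state the simpler singleton version in the main proof and perhaps remark that the binary-partition strengthening follows immediately from Proposition~\ref{prop:reduction_to_binary_partitions}.

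The steps, in order, would be: (1) recall that deciding monadic decomposability of a $\QFLRA$ formula is $\coNP$-hard by Theorem~\ref{thm:mondec_conp_hard}; (2) describe the reduction: on input $\varphi$, output $(\varphi, S)$ where $S := \{\{\{x_1\}, \dots, \{x_n\}\}\}$; (3) observe the reduction is computable in polynomial time (it is essentially the identity together with writing down one partition); (4) verify correctness: by definition of monadic decomposability, $\varphi$ is monadically decomposable iff $\varphi$ is $\Pi$-decomposable for the unique $\Pi \in S$, iff $\varphi$ is $\Pi'$-decomposable for all $\Pi' \in S$; (5) conclude that every language in $\coNP$ reduces to the problem of deciding whether $\varphi$ is $\Pi'$-decomposable for all $\Pi' \in S$, hence that problem is $\coNP$-hard.

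I do not expect any genuine obstacle here — the lemma is a direct corollary of Theorem~\ref{thm:mondec_conp_hard} by a trivial reduction, just as Corollary~\ref{cor:variable_collection_independence_conp_hard} was. The only thing to be slightly careful about is ensuring the problem statement ("$\varphi$ is $\Pi'$-decomposable for all $\Pi' \in S$") genuinely subsumes monadic decomposability, which it does precisely because the unary partition is itself a legal value for $S$ to contain. If one wanted $S$ to avoid the degenerate singleton case (to match the non-unary hypothesis of Proposition~\ref{prop:reduction_to_binary_partitions}), the binary-partition variant via that proposition handles it cleanly, so no real difficulty arises.
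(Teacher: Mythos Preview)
Your proposal is correct. The paper's one-line proof takes exactly your ``alternative'' route: it invokes Proposition~\ref{prop:reduction_to_binary_partitions} to produce the collection $S$ of binary partitions from the unary partition, so that monadic decomposability of $\varphi$ is equivalent to $\Pi'$-decomposability for all $\Pi' \in S$. Your primary singleton reduction ($S := \{\{\{x_1\},\dots,\{x_n\}\}\}$) is simpler and equally valid for the lemma as stated; the paper's choice of the binary-partition variant is presumably motivated by the downstream use in Theorem~\ref{thm:vardec_conp_hard}, where hardness ``even if $\Pi$ is binary'' is claimed, and you correctly anticipated this.
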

\begin{proof}
	Proposition~\ref{prop:reduction_to_binary_partitions} immediately yields the desired reduction from the problem of deciding the monadic decomposability of $ \varphi $ that we already know to be $ \coNP $-hard (Theorem~\ref{thm:mondec_conp_hard}).
\end{proof}

\begin{theorem}
	\label{thm:vardec_conp_hard}
	Given a formula $ \varphi \in \QFLRA $ and a partition $ \Pi $, it is $ \coNP $-hard to decide the $ \Pi $-decomposability of $ \varphi $, even if $ \Pi $ is a binary partition.
\end{theorem}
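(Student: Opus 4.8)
The plan is to reduce from the $\coNP$-hard problem of Lemma~\ref{lemma:vardec_collection_conp_hard}: deciding, given $\varphi \in \QFLRA$ and a collection $S$ of partitions, whether $\varphi$ is $\Pi'$-decomposable for all $\Pi' \in S$. However, that formulation allows arbitrary partitions, whereas the present theorem concerns a single binary partition. The key tool to bridge this gap is Proposition~\ref{prop:reduction_to_binary_partitions}, which lets us replace any non-unary partition by a collection of binary ones, but we still need to collapse a \emph{collection} of binary decomposability queries into a \emph{single} binary decomposability query. First I would therefore show how to encode ``$\varphi$ is $\Pi_1'$-decomposable and $\Pi_2'$-decomposable'' (for binary $\Pi_1', \Pi_2'$) as a single binary decomposability instance, by taking disjoint copies of the variable sets, using fresh variables, and combining the two conditions so that $\Pi$-decomposability of the combined formula on the combined variable set is equivalent to the conjunction of the two original conditions. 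Iterating (or batching logarithmically, as in Theorem~\ref{thm:contrib:reduction}) then reduces a whole collection $S$ of binary partitions to one.

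More concretely, I would proceed as follows. Starting from the monadic-decomposability instance $\varphi(x_1,\dots,x_n)$ (which is $\coNP$-hard by Theorem~\ref{thm:mondec_conp_hard}), apply Proposition~\ref{prop:reduction_to_binary_partitions} to obtain in polynomial time a collection $S = \{\Pi_1, \dots, \Pi_m\}$ of binary partitions with $m = O(\log n)$ such that $\varphi$ is monadically decomposable iff $\varphi$ is $\Pi_i$-decomposable for all $i$. The crux is then a gadget: introduce $m$ disjoint fresh copies $\varphi^{(1)}, \dots, \varphi^{(m)}$ of $\varphi$ on pairwise disjoint variable sets, where $\varphi^{(i)}$ uses the variable names renamed according to the block structure of $\Pi_i$. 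Form a new formula $\Phi$ (for instance a disjunction of the $\varphi^{(i)}$'s guarded by fresh Boolean-like selector variables, or a suitable conjunction/sum construction) on the union $V$ of all these variable sets, together with a single binary partition $\widehat{\Pi}$ of $V$ whose two blocks are the unions of the respective ``$X$-sides'' and ``$Y$-sides'' of the $\Pi_i$. The goal is that $\Phi$ is $\widehat{\Pi}$-decomposable iff every $\varphi^{(i)}$ is $\Pi_i$-decomposable, which by the choice of $S$ is iff $\varphi$ is monadically decomposable. Decomposability along disjoint variable sets is well-behaved because a Boolean combination respecting $\widehat{\Pi}$ restricts to Boolean combinations respecting each $\Pi_i$ on its own block, and conversely decompositions can be combined; the selector variables let us make the equivalence tight in both directions.

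I expect the main obstacle to be getting the gadget exactly right so that the equivalence holds in \emph{both} directions with a single binary partition. The ``$\Leftarrow$'' direction (if each $\varphi^{(i)}$ is $\Pi_i$-decomposable then $\Phi$ is $\widehat{\Pi}$-decomposable) is the easy one, since one can take the disjunction/conjunction of the individual decompositions. The ``$\Rightarrow$'' direction is more delicate: from a $\widehat{\Pi}$-decomposition of $\Phi$ one must \emph{project out} the other variable copies to recover a $\Pi_i$-decomposition of $\varphi^{(i)}$, and this requires that fixing the other copies' variables to a particular value (or marginalizing) does not destroy the decomposition structure, which is where the disjointness of the variable sets and the selector mechanism must be exploited carefully. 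An alternative, possibly cleaner route is to first prove a ``conjunction'' lemma — that for binary $\Pi_1, \Pi_2$ on disjoint variable sets $V_1, V_2$, the formula $\psi_1 \wedge \psi_2$ is $\{X_1 \cup X_2,\, Y_1 \cup Y_2\}$-decomposable iff $\psi_1$ is $\Pi_1$-decomposable and $\psi_2$ is $\Pi_2$-decomposable — and then apply it inductively. Either way, once the gadget is established, the reduction is clearly polynomial-time and many-one, and combined with the $\coNP$-hardness of monadic decomposability (Theorem~\ref{thm:mondec_conp_hard}) and Lemma~\ref{lemma:vardec_collection_conp_hard} it yields $\coNP$-hardness of $\Pi$-decomposability for binary $\Pi$. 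Together with the $\coNP$ upper bound of Theorem~\ref{thm:conp_upper_bound}, this establishes $\coNP$-completeness.
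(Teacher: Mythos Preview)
Your approach is far more elaborate than the paper's. The paper's proof is three lines: reduce from Lemma~\ref{lemma:vardec_collection_conp_hard} by setting $\Pi := \bigsqcap_{\Pi' \in S} \Pi'$; then $\varphi$ is $\Pi$-decomposable iff $\varphi$ is $\Pi'$-decomposable for every $\Pi' \in S$, where the forward direction holds because $\Pi$ refines each $\Pi'$, and the backward direction is precisely Proposition~\ref{prop:pi1_pi2_dec_implies_dec_wrt_meet_of_pi1_pi2}. No fresh variable copies, selector gadgets, or conjunction lemma are needed --- the meet in the partition lattice already collapses the whole collection $S$ into a single decomposability query on the \emph{same} formula $\varphi$.

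Regarding the ``even if $\Pi$ is binary'' clause: your instinct that the meet construction does not obviously deliver a binary $\Pi$ is correct (the meet of the binary partitions that Proposition~\ref{prop:reduction_to_binary_partitions} produces from the monadic partition is just the monadic partition again). But the shortest route to a binary hardness instance is not a multi-copy gadget; it is to look back at the actual formula built in the proof of Theorem~\ref{thm:mondec_conp_hard}. There $\varphi$ is a Boolean combination of the monadic atoms $y_i \neq 0$, so $z_1 = z_2$ is the \emph{only} predicate linking variables across any partition that separates $z_1$ from $z_2$. The very same pigeonhole argument already given there shows that $\varphiprop$ is unsatisfiable iff $\varphi \wedge z_1 = z_2$ is $\{\{y_1,\dots,y_n,z_1\},\{z_2\}\}$-decomposable, and this partition is binary. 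Your conjunction lemma could be pushed through (with the caveat that the stated equivalence fails when one conjunct is unsatisfiable, though in your particular reduction all copies of $\varphi$ share satisfiability status so this does not bite), but it is considerably more machinery than either the paper's meet argument or this direct observation.
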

\begin{proof}
	We reduce from the problem of deciding whether $ \varphi $ is $ \Pi' $-decomposable for all $ \Pi' \in S $, where $ S $ is some given collection of partitions (Lemma~\ref{lemma:vardec_collection_conp_hard} establishes the $ \coNP $-hardness of this problem). We set \[
		\Pi := \bigsqcap_{\Pi' \in S} \Pi'
	\] and argue that $ \varphi $ is $ \Pi' $-decomposable for all $ \Pi' \in S $ if and only if $ \varphi $ is $ \Pi $-decomposable. Indeed, the ``$ \Rightarrow $'' direction follows from Proposition~\ref{prop:pi1_pi2_dec_implies_dec_wrt_meet_of_pi1_pi2}, while the converse implication is true because $ \Pi $ is a refinement of every $ \Pi' \in S $.
\end{proof}

\section{Experiments}
\label{sec:experiments}

\subsection{Preliminaries}

Like \cite{veanes:2017} and \cite{markgraf:2021}, we provide an implementation of the covering-based variable decomposition algorithm introduced in \zbrefsec{sec:vardec}. The implementation is in python, relies on the Z3 theorem prover \cite{z3:2008, bjorner_programming_z3:2019} and is available in the \begin{center}
	\url{https://github.com/ZeroBone/Cover}
\end{center} GitHub repository.

\subsubsection{Heuristics and optimizations}

To achieve better performance, we equipped our implementation of the covering algorithm (see Section~\ref{sec:vardec:cover} above) with the following heuristics designed to reduce both the number of disjuncts the algorithm needs to consider and the amount of recursive calls. \begin{enumerate}[label=(\alph*)]
	\item After the first loop (Line~\ref{alg:cover:line:first_loop}), the algorithm additionally tests whether $ \Theta \models \varphi $. If this is the case, it simply outputs $ \Theta $ to be the covering and proceeds normally otherwise.
	\item In the second loop, instead of enumerating all disjuncts of $ \Theta $, it suffices to consider only those which agree with $ \Upsilon $ on at least one model. Intuitively, the reason is that $ \Upsilon $ is our domain which we will use for our covering, and only disjuncts falling into that domain need to be ruled out in order to get a correct covering.
	\item We further optimize the second loop by iterating not over individual disjuncts $ \Omega \in \DisjOf{\Theta} $, but over groups of such disjuncts which agree on the equality predicates. Indeed, this does not change the semantics of the algorithm because all computations depend only on $ \Omega^= $.
	\item In the body of the second loop, at Line~\ref{alg:cover:line:second_loop:witness_vec}, instead of computing just one vector $ w $, we compute the entire basis $ w_1, \dots, w_d $ of \[
		\gen{w_1, \dots, w_d} = \ImageOf(\pi_Z \circ \lc_B)^{\bot} \cap \ImageOf(\pi_Z \circ \lc_A)
	\] Then the idea is to choose $ w $ from the basis in a way that will not lead to recursive calls capable of causing further nested recursion. More precisely, for each $ w_i $ we test whether $ \Gamma \models \pi_Z(\vec{z}) \cdot w_i \circ \pi_Z(b) \cdot w_i $ holds for some $ \circ \in \{<, >\} $. If this is the case for some $ \circ $, then we choose $ w := w_i $, set \[
		\Upsilon := \Upsilon \cup \{\pi_Z(\vec{z}) \cdot w \circ \pi_Z(b) \cdot w\}
	\] and continue searching for the next disjunct of $ \Theta $. Note that we have avoided doing a nontrivial recursive call because $ \Gamma \cup \{\pi_Z(\vec{z}) \cdot w = \pi_Z(b) \cdot w\} $ is unsatisfiable and hence $ \Pi $-simple.
	\item It may sometimes be the case that almost all $ \Omega \in \DisjOf{\Theta} $ entail $ \varphi $. Thus, even if $ \Omega $ has more $ Z $-dependencies than $ \Gamma $ and is consequently potentially distinguishable in the language of $ \Pi $-decompositions, doing the actual separation of $ \Gamma $ and $ \Omega $ via the $ \pi_Z(\vec{z}) \cdot w = \pi_Z(b) \cdot w $ predicate may be completely redundant if $ \Omega \models \varphi $. Hence, it is a good idea to first consider those $ \Omega \in \DisjOf{\Theta} $ which entail $ \neg\varphi $. Once the second loop has iterated over all such $ \Omega $, we test whether $ \Delta \vee \Theta \wedge \Upsilon \models \varphi $ and if yes, return the covering $ \Delta \vee \Theta \wedge \Upsilon $. Otherwise, we simply continue executing the second loop.
\end{enumerate}

\subsubsection{Tool comparison}

We evaluate the performance of our tool against a generic semi-decision procedure
\texttt{mondec$_1$} by Veanes et al. \cite{veanes:2017}. Their algorithm works over an arbitrary base theory and attempts to construct a monadic decomposition by splitting the space of models using two equivalence relations and maintaining path conditions, which then yield a monadic decomposition represented as an if-then-else formula. In order to be able to compare both algorithms for arbitrary partitions and not just for singleton ones, we adjusted \texttt{mondec$_1$} algorithm slightly so that it supports general partitions. The essence of our adaptation is to create a fresh variable for every block of $ \Pi $, which simulates tuples of original variables. We refer to the adapted version of the algorithm by \texttt{vardec$_1$}. Since all implementations are in python and rely on the same \texttt{z3-solver} library interacting with z3, it is fair to compare their \textit{de facto} performance.

In the remainder of this \zbsectionname/, the abbreviation \texttt{cover} denotes the double-exponential time algorithm deciding $ \Pi $-decomposability and outputting $ \Pi $-decompositions whenever they exist (see \zbrefsec{sec:vardec}, Theorem~\ref{thm:vardec_double_exponential}). We furthermore use the shorthand name \texttt{cover\_mondec} to denote the same algorithm composed with the reduction of Proposition~\ref{prop:reduction_to_binary_partitions}, allowing us to decide monadic decomposability using a method supporting only binary partitions. That is, \texttt{cover\_mondec} is a decision procedure for monadic decomposability, obtained by composing the reduction of Proposition~\ref{prop:reduction_to_binary_partitions} and the covering-based method of solving the variable decomposition problem, discussed in \zbrefsec{sec:vardec}.

\subsection{Benchmark suite}

The goal of our benchmark suite is to stress-test \texttt{cover}, \texttt{cover\_mondec}, \texttt{mondec$_1$} and \texttt{vardec$_1$} against various kinds of tricky or nontrivial instances. Since the two last algorithms do not terminate unless the formula is $ \Pi $-decomposable, we restrict the suite to certain classes of decomposable formulas we now define. Let \[
	\varphi^{\mathrm{add}}_n := \sum_{i=1}^n y_i < x < \sum_{i=1}^n y_i + 1 \wedge \bigwedge_{i=1}^n y_i = 0 \vee y_i = 2^i
\] Intuitively, $ \varphi^{\mathrm{add}}_n $ uses $ y_i $ variables to express all possible binary representations of $ \left\lfloor x \right\rfloor $, if $ \left\lfloor x \right\rfloor $ is even. Since the representation of any number as a sum of powers of two is unique, it follows that $ \varphi^{\mathrm{add}}_n $ is monadically decomposable. % Since there are finitely many possible combinations of $ y_i $ values satisfying $ \varphi^{\mathrm{add}}_n $, each of which corresponds to a unique interval $ x $ must belong to, it follows that $ \varphi^{\mathrm{add}}_n $ is monadically decomposable.

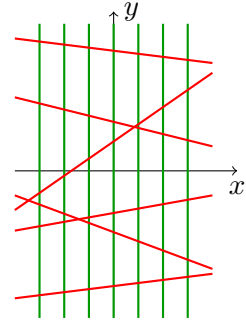
\begin{wrapfigure}{r}{0.25\textwidth}
	\if\zbthesismode1
	\vspace{-15pt}
	\else
	\vspace{-20pt}
	\fi
	\centering
	\begin{tikzpicture}[scale=0.65]	
	% axes
	\draw[->] (-2,0) -- (2.5,0) coordinate[label = {below:$x$}] (xmax);
	\draw[->] (0,-3) -- (0,3.25) coordinate[label = {right:$y$}] (ymax);
	
	% grid lines
	\draw[thick, OliveGreen] (1.5, -3) -- (1.5, 3);
	\draw[thick, OliveGreen] (1, -3) -- (1, 3);
	\draw[thick, OliveGreen] (.5, -3) -- (.5, 3);
	\draw[thick, OliveGreen] (0, -3) -- (0, 3);
	\draw[thick, OliveGreen] (-.5, -3) -- (-.5, 3);
	\draw[thick, OliveGreen] (-1, -3) -- (-1, 3);
	\draw[thick, OliveGreen] (-1.5, -3) -- (-1.5, 3);
	
	% red lines intersecting with the grid
	\draw[thick, red] (2, -2) -- (-2, -0.5);
	\draw[thick, red] (2, 2) -- (-2, -0.8);
	\draw[thick, red] (2, -0.5) -- (-2, -1.22);
	\draw[thick, red] (2, 0.5) -- (-2, 1.5);
	\draw[thick, red] (2, 2.2) -- (-2, 2.7);
	\draw[thick, red] (2, -2.1) -- (-2, -2.6);
	
	\end{tikzpicture}
	\caption{The set defined by $ \varphi^{\mathrm{grid2d}}_{6, 7} $ shown with green vertical lines. The formula is false at the intersections of green and red (sloping) lines.}
	\label{fig:exp:grid_formula}
	\if\zbthesismode1
	\vspace{-15pt}
	\else
	\vspace{-20pt}
	\fi
\end{wrapfigure}

We now define a new class of formulas $ \varphi^{\mathrm{grid2d}}_{n, k} $ over variables $ x $ and $ y $, whose goal is to require the inference of as many new predicates in the monadic decomposition as possible. We achieve this by generating $ k $ lines aligned along the $ y $ axis and $ n $ random lines intersecting\footnote{With probability approaching 1.} the aligned lines. Then, we construct the formula $ \varphi^{\mathrm{grid2d}}_{n, k} $ so that it is true at every point lying on an aligned line, but not on any non-aligned one (see Figure~\ref{fig:exp:grid_formula}). Thus, in order to construct a monadic decomposition, the algorithm has to somehow compute the points at which the aligned lines intersect with the non-aligned ones, which is what we aim to stress-test.

In two dimensions, all segments of the aligned lines are defined by $ \Pi $-simple predicate sets. Therefore, the $ \varphi^{\mathrm{grid}}_{n, k} $ formulas test only the effectiveness of the $ \Pi $-simple check integrated into the algorithm, whereas other parts of the covering algorithm are not tested. In order to assess, in a similar context, the performance of the covering algorithm when the predicate sets to be covered are $ \Pi $-complex, we define $ \varphi^{\mathrm{grid3d}}_k $ to be the similar class of formulas as $ \varphi^{\mathrm{grid2d}}_{n, k} $, but in three dimensions instead of two, meaning that every line becomes a 2D plane in 3D space. More precisely, $ \varphi^{\mathrm{grid3d}}_k $ defines $ k $ aligned 2D planes and is true at every point of a plane except exactly one (for every plane), which is where the plane gets intersected by two other randomly generated non-aligned planes. This construction ensures that the covering algorithm will have to execute both loops for every $ \Gamma \in \DisjPhi $. Furthermore, the second loop will have to infer a predicate separating $ \Omega $ from $ \Gamma $ and make a recursive call at least once.

\subsection{Results}

The experiments were conducted on an Intel(R) Core(TM) i7-10510U 1.80 GHz CPU with 4 cores and 16 GiB of RAM, running Ubuntu 22.04.1 LTS. The versions of \texttt{python3} and the \texttt{z3-solver} library are 3.10.6 and 4.12.1.0, respectively. The results are presented in Figure~\ref{fig:benchmark_results}.

\begin{figure}
	\centering
	\def\scale{1}
\def\scalesub{0.48}
\def\gridwidth{\textwidth}
\def\gridheight{4.5cm}
\def\colspace{0.01\textwidth}
% define colors for various algorithms
% our algorithm
\def\algpresvardec{red}
\def\algpresvardecnoheuristics{purple} % , opacity=0.5, only marks, mark size = 2
\def\algpresvardecmd{orange}
\def\algpresvardecmdnoheuristics{gray}
% algorithm by Veanes et al.
\def\algveanesvardec{blue}
\def\algveanesmondec{OliveGreen}
% add formula
\begin{subfigure}[t]{\scalesub\linewidth}
	\pgfplotstableread{experiments/add.dat}{\table}
	\begin{tikzpicture}[scale=\scale]
		\begin{axis}[
			xmin = 1, xmax = 8,
			ymin = 0, ymax = 30,
			xtick distance = 1,
			ytick distance = 5,
			scaled y ticks = false,
			grid = none,
			minor y tick num = 5,
			minor x tick num = 5,
			major grid style = {lightgray},
			minor grid style = {lightgray!25},
			width = \gridwidth,
			height = \gridheight,
			x label style={at={(axis description cs:0.5,0.1)},anchor=north},
			y label style={at={(axis description cs:0.12,0.5)},anchor=south},
			xlabel={Parameter $ n $},
			ylabel={Time (s)},
			label style={font=\tiny},
			tick label style={font=\tiny},
			legend cell align = {left},
			legend pos = south east,
			legend style={
				at={(0,0)},
				anchor=north,
				at={(axis description cs:0.3,-0.1)}
			}
			]
			
			% vardec
			\addplot[\algveanesvardec] table [x = {n}, y = {veanes_perf_s}] {\table};
			% mondec
			\addplot[\algveanesmondec] table [x = {n}, y = {veanes_perf_s_md}] {\table};
			
			% vardec
			\addplot[\algpresvardec] table [x = {n}, y = {presvardec_perf_s}] {\table};
			% \addplot[\algpresvardecnoheuristics] table [x = {n}, y = {presvardec_nh_perf_s}] {\table};
			% mondec
			\addplot[\algpresvardecmd] table [x = {n}, y = {presvardec_perf_s_md}] {\table};
			% \addplot[\algpresvardecmdnoheuristics] table [x = {n}, y = {presvardec_nh_perf_s_md}] {\table};
			
		\end{axis}
	\end{tikzpicture}
	\caption{Benchmark on $ \varphi^{\mathrm{add}}_n $}
	\label{fig:benchmark_results:add_time}
\end{subfigure}
\begin{subfigure}[t]{\scalesub\linewidth}
	\pgfplotstableread{experiments/add.dat}{\table}
	\begin{tikzpicture}[scale=\scale]
		\begin{axis}[
			xmin = 1, xmax = 8,
			ymin = 0, ymax = 4200,
			xtick distance = 1,
			ytick distance = 500,
			scaled y ticks = false,
			grid = none,
			minor y tick num = 5,
			minor x tick num = 5,
			major grid style = {lightgray},
			minor grid style = {lightgray!25},
			width = \gridwidth,
			height = \gridheight,
			x label style={at={(axis description cs:0.5,0.1)},anchor=north},
			y label style={at={(axis description cs:0.1,0.5)},anchor=south},
			xlabel={Parameter $ n $},
			ylabel={Decomposition size},
			label style={font=\tiny},
			tick label style={font=\tiny},
			legend cell align = {left},
			legend pos = south east,
			legend style={
				at={(0,0)},
				anchor=north,
				at={(axis description cs:0.3,-0.1)}
			}
			]
			
			\addplot[\algveanesvardec] table [x = {n}, y = {veanes_size}] {\table};
			\addplot[\algpresvardec] table [x = {n}, y = {presvardec_size}] {\table};
			% \addplot[\algpresvardecnoheuristics] table [x = {n}, y = {presvardec_nh_size}] {\table};
			
		\end{axis}
	\end{tikzpicture}
	\caption{Benchmark on $ \varphi^{\mathrm{add}}_n $}
	\label{fig:benchmark_results:add_size}
\end{subfigure}
% grid formula
\begin{subfigure}[t]{\scalesub\linewidth}
	\pgfplotstableread{experiments/grid.dat}{\table}
	\begin{tikzpicture}[scale=\scale]
		\begin{axis}[
			xmin = 1, xmax = 64,
			ymin = 0, ymax = 2200,
			xtick distance = 5,
			ytick distance = 200,
			scaled y ticks = false,
			grid = none,
			minor y tick num = 5,
			minor x tick num = 4,
			major grid style = {lightgray},
			minor grid style = {lightgray!25},
			width = \gridwidth,
			height = \gridheight,
			x label style={at={(axis description cs:0.5,0.1)},anchor=north},
			y label style={at={(axis description cs:0.12,0.5)},anchor=south},
			xlabel={Parameter $ n $},
			ylabel={Time (s)},
			label style={font=\tiny},
			tick label style={font=\tiny},
			legend cell align = {left},
			legend pos = south east,
			legend style={
				at={(0,0)},
				anchor=north,
				at={(axis description cs:0.3,-0.1)}
			}
			]
			
			\addplot[\algveanesmondec] table [x = {napc}, y = {veanes_perf_s}] {\table};
			
			% vardec
			\addplot[\algpresvardec] table [x = {napc}, y = {presvardec_perf_s}] {\table};
			% \addplot[\algpresvardecnoheuristics] table [x = {napc}, y = {presvardec_nh_perf_s}] {\table};
			
		\end{axis}
	\end{tikzpicture}
	\caption{Benchmark on $ \varphi^{\mathrm{grid2d}}_{n, 32} $}
	\label{fig:benchmark_results:grid2d_time}
\end{subfigure}
\begin{subfigure}[t]{\scalesub\linewidth}
	\pgfplotstableread{experiments/grid.dat}{\table}
	\begin{tikzpicture}[scale=\scale]
		\begin{axis}[
			xmin = 1, xmax = 64,
			ymin = 0, ymax = 55000,
			xtick distance = 5,
			ytick distance = 5000,
			scaled y ticks = false,
			grid = none,
			minor y tick num = 5,
			minor x tick num = 4,
			major grid style = {lightgray},
			minor grid style = {lightgray!25},
			width = \gridwidth,
			height = \gridheight,
			x label style={at={(axis description cs:0.5,0.1)},anchor=north},
			y label style={at={(axis description cs:0.1,0.5)},anchor=south},
			xlabel={Parameter $ n $},
			ylabel={Decomposition size},
			label style={font=\tiny},
			tick label style={font=\tiny},
			legend cell align = {left},
			legend pos = south east,
			legend style={
				at={(0,0)},
				anchor=north,
				at={(axis description cs:0.3,-0.1)}
			}
			]
			
			\addplot[\algveanesmondec] table [x = {napc}, y = {veanes_size}] {\table};
			\addplot[\algpresvardec] table [x = {napc}, y = {presvardec_size}] {\table};
			% \addplot[\algpresvardecnoheuristics] table [x = {napc}, y = {presvardec_nh_size}] {\table};
			
		\end{axis}
	\end{tikzpicture}
	\caption{Benchmark on $ \varphi^{\mathrm{grid2d}}_{n, 32} $}
	\label{fig:benchmark_results:grid2d_size}
\end{subfigure}
% spaces formula
\begin{subfigure}[t]{\scalesub\linewidth}
	\pgfplotstableread{experiments/spaces.dat}{\table}
	\begin{tikzpicture}[scale=\scale]
		\begin{axis}[
			xmin = 1, xmax = 53,
			ymin = 0, ymax = 2200,
			xtick distance = 5,
			ytick distance = 200,
			scaled y ticks = false,
			grid = none,
			minor y tick num = 5,
			minor x tick num = 4,
			major grid style = {lightgray},
			minor grid style = {lightgray!25},
			width = \gridwidth,
			height = \gridheight,
			x label style={at={(axis description cs:0.5,0.1)},anchor=north},
			y label style={at={(axis description cs:0.12,0.5)},anchor=south},
			xlabel={Parameter $ k $},
			ylabel={Time (s)},
			label style={font=\tiny},
			tick label style={font=\tiny},
			legend cell align = {left},
			legend pos = south east,
			legend style={
				at={(0,0)},
				anchor=north,
				at={(axis description cs:0.3,-0.1)}
			}
			]
			
			\addplot[\algveanesvardec] table [x = {spaces}, y = {veanes_perf_s}] {\table};
			\addplot[\algpresvardec] table [x = {spaces}, y = {presvardec_perf_s}] {\table};
			
		\end{axis}
	\end{tikzpicture}
	\caption{Benchmark on $ \varphi^{\mathrm{grid3d}}_k $}
	\label{fig:benchmark_results:grid3d_time}
\end{subfigure}
\begin{subfigure}[t]{\scalesub\linewidth}
	\pgfplotstableread{experiments/spaces.dat}{\table}
	\begin{tikzpicture}[scale=\scale]
		\begin{axis}[
			xmin = 1, xmax = 53,
			ymin = 0, ymax = 45000,
			xtick distance = 5,
			ytick distance = 5000,
			scaled y ticks = false,
			grid = none,
			minor y tick num = 5,
			minor x tick num = 4,
			major grid style = {lightgray},
			minor grid style = {lightgray!25},
			width = \gridwidth,
			height = \gridheight,
			x label style={at={(axis description cs:0.5,0.1)},anchor=north},
			y label style={at={(axis description cs:0.1,0.5)},anchor=south},
			xlabel={Parameter $ k $},
			ylabel={Decomposition size},
			label style={font=\tiny},
			tick label style={font=\tiny},
			legend cell align = {left},
			legend pos = south east,
			legend style={
				at={(0,0)},
				anchor=north,
				at={(axis description cs:0.3,-0.1)}
			}
			]
			
			\addplot[\algveanesvardec] table [x = {spaces}, y = {veanes_size}] {\table};
			\addplot[\algpresvardec] table [x = {spaces}, y = {presvardec_size}] {\table};
			
		\end{axis}
	\end{tikzpicture}
	\caption{Benchmark on $ \varphi^{\mathrm{grid3d}}_k $}
	\label{fig:benchmark_results:grid3d_size}
\end{subfigure}
% legend
\begin{center}
	\begin{tikzpicture}
		\begin{axis}[
			hide axis,
			xmin=50,
			xmax=50,
			ymin=0,
			ymax=0,
			legend columns=4,
			legend style={/tikz/every even column/.append style={column sep=0.5cm}}]
			]
			\addplot[\algpresvardec] {x};
			% \addplot[\algpresvardecnoheuristics] {x};
			\addplot[\algpresvardecmd] {x};
			% \addplot[\algpresvardecmdnoheuristics] {x};
			\addplot[\algveanesvardec] {x};
			\addplot[\algveanesmondec] {x};
			\legend{
				\texttt{cover},
				% \texttt{cover\_noheuristics},
				\texttt{cover\_mondec},
				% \texttt{cover\_mondec\_noheur},
				\texttt{vardec$_1$},
				\texttt{mondec$_1$}
			}
		\end{axis}
	\end{tikzpicture}
	\if\zbthesismode1
	\vspace{-13.5em}
	\else
	\vspace{-16em}
	\fi
\end{center}
	\caption{Benchmark results. The size of a decomposition is the number of non-leaf nodes in the abstract syntax tree of the formula.}
	\label{fig:benchmark_results}
\end{figure}
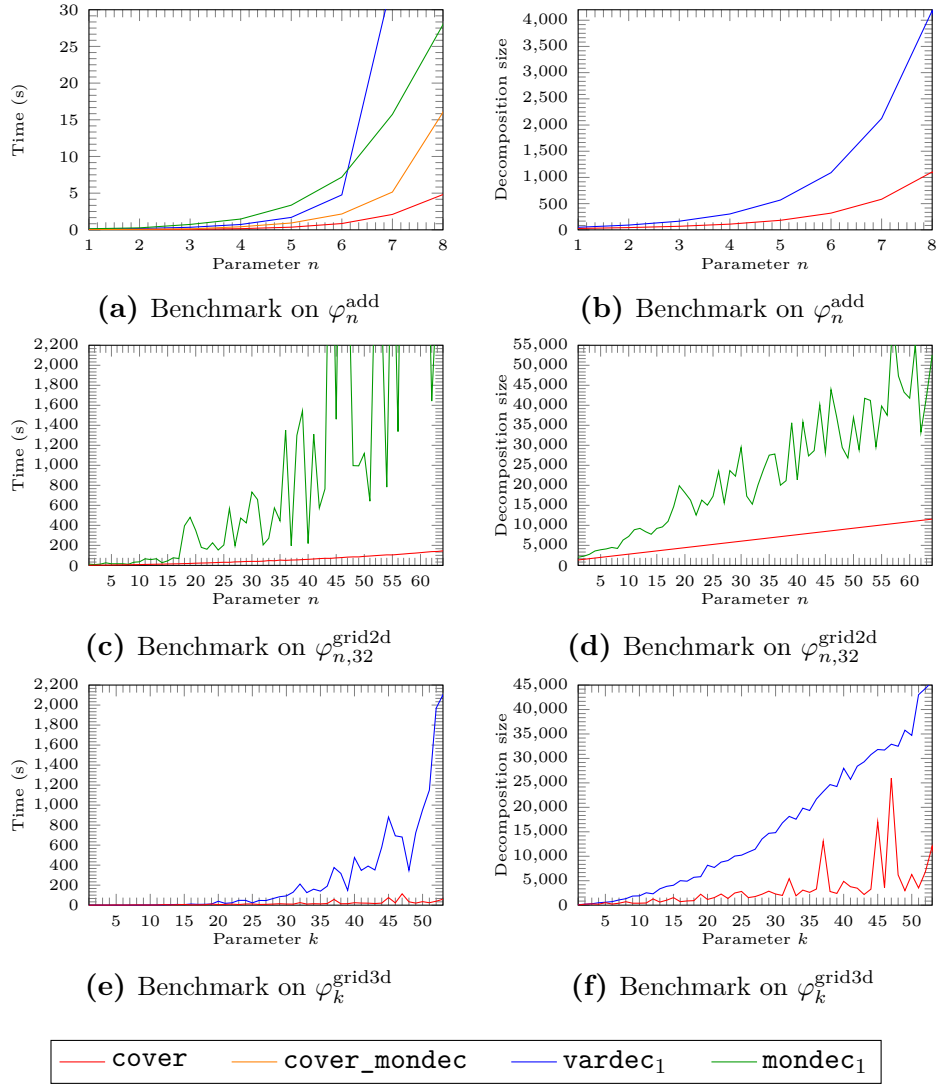

During benchmarking, \texttt{cover} and \texttt{cover\_mondec} were each called twice -- with heuristics enabled and without them. However, we observed that on the $ \varphi^{\mathrm{add}}_n $ and $ \varphi^{\mathrm{grid2d}}_{n, 32} $ families of formulas, heuristics cause hardly any improvement or degradation of running time or decomposition size (which is not surprising, given the tricky nature of the examples constructed). Thus, the red and orange lines in Figures \ref{fig:benchmark_results:add_time}, \ref{fig:benchmark_results:add_size}, \ref{fig:benchmark_results:grid2d_time} and \ref{fig:benchmark_results:grid2d_size} reflect the performance of the algorithm also with heuristics completely disabled. However, we observed that the situation in Figures \ref{fig:benchmark_results:grid3d_time} and \ref{fig:benchmark_results:grid3d_size} is substantially different in that without heuristics \texttt{cover} takes over one hour to determine the $ \Pi $-decomposability of $ \varphi^{\mathrm{grid3d}}_k $ for $ k \ge 20 $. Hence, $ \varphi^{\mathrm{grid3d}}_k $ is an example of a family of formulas demonstrating that the proposed heuristics can cause an enormous performance boost. The performance of \texttt{cover\_mondec} in Figure~\ref{fig:benchmark_results:add_time} gives encouraging evidence showing that the reduction of Proposition~\ref{prop:reduction_to_binary_partitions} is efficient in practice. Overall, the benchmarking results show that our covering-based algorithm equipped with heuristics beats the algorithm of \cite{veanes:2017} in both time and decomposition size for the considered classes of formulas.

\section{Conclusions and future work}
\label{sec:conclusions}

In this \zbdocname/, we have presented a novel model-flooding-based technique for solving the variable decomposition and independence problems over quantifier-free linear real arithmetic. We have applied this technique to derive algorithms for monadic/variable decomposability, which are optimal in theory and efficient in practice. We have also proven upper and lower bounds for the variable independence problem and have seen a way of efficiently constructing $ \Pi $-decompositions whenever they exist. For monadic and variable decomposability, our algorithms are exponentially faster compared to the best algorithms known for these problems. We also improved the current state-of-the-art algorithm for computing $ \Pi $-decompositions whenever they exist. Based on the way we obtained these results, we conclude:

\begin{center}
	\fbox{
		\parbox{0.85\textwidth}{
			Nontrivial topological properties of the set of logical statements (formulas) can be used to significantly speed up algorithms.
		}
	}
\end{center}

\subsection{Generalization and other theories}
\label{sec:conclusions:generalization}

The model-flooding-based technique we used relies at its core on quite fundamental logical properties of formulas. Thus, it has a great potential of being applicable to many other fragments of first-order logic because only two major steps are needed to adapt our proof to produce similar results. They are as follows.

\begin{itemize}
	\item Firstly and most importantly, one has to define a metric enabling us to compare satisfiable predicate sets by analyzing what dependencies between variables expressible as $ \Pi $-respecting predicates they enforce. For this, it is crucial to understand precisely what kind of connections between variables could, in principle, be established by a conjunction of predicates in the quantifier-free fragment of the given first-order theory of $ \Mstruct $. This is absolutely not surprising because monadic/variable decomposability problems are, in essence, questions about what can possibly be expressed as a Boolean combination of $ \Pi $-respecting predicates. The desired metric should behave similarly to $ \LinDep_{h} $ in the aspects that the dependencies between variables it describes should have a finite and computable basis and should satisfy the ascending chain condition in the sense that no matter which dependencies we are iteratively enforcing using predicates, we must after finitely many steps run out of dependencies which are not redundant. In the case of linear real arithmetic, this property follows from the fact that $ \Q^n $ is finitely dimensional and Noetherian. Due to the simple and fundamental nature of the said properties, we conjecture that they are satisfied by many other quantifier-free fragments of first-order logic if the metric for measuring dependencies in predicate sets has been chosen correctly.
	\item The second major step is to prove an analogue of the Overspilling Theorem~\ref{thm:overspilling}, which should establish the impossibility of separating one predicate set from another in the language of $ \Pi $-decompositions, assuming that the metric cannot distinguish the given predicate sets. Moreover, this result may impose further requirements on the given predicate sets, such as the property that $ \Theta_1 $ is $ p $-next to $ \Theta_2 $ on a predicate $ p \in \Theta_1^= $ (see Theorem~\ref{thm:overspilling}). However, it should be possible to reduce the induction we used to establish model flooding and completeness of $ \Lambda $-proofs (see Theorems \ref{thm:model_flooding} and \ref{thm:lambda_proof_soundness_completeness}) to the case when it can be without loss of generality assumed that the corresponding predicate sets satisfy the additional assumptions.
\end{itemize}

All other required changes are of technical nature and may decisively influence only the time complexity of the algorithms obtained.

\subsection{Future work}

We close this \zbdocname/ by discussing several further possible research directions:

\begin{itemize}
	\item \textit{Find new applications of variable decomposability and independence.} Although the study of these concepts is an active research topic in the field, there is a lack of applications, especially those which are not of prevailing theoretical nature. This cannot but have a demotivating effect on the community, decreasing the overall research effort.
	
	\item \textit{Establish the precise complexity of deciding variable independence over $ \QFLRA $.} In this \zbdocname/, we have established an $ \NP^\NP $ upper bound for the variable independence problem for quantifier-free linear real arithmetic (Corollary \ref{cor:variable_independence_sigma2}). We have also shown that the same problem is $ \coNP $-hard with respect to conjunctive truth-table reductions (Theorem~\ref{thm:variable_independence_conp_hard}). The upper bound does not meet the lower bound, so it remains an open problem to establish the precise complexity of determining variable independence.
	
	\item \textit{Derive an optimal generic algorithm for variable decomposability.} Our results and known results (see, e.g., \cite{hague:2020}) establish optimal algorithms for variable decomposability only for fixed fragments of first-order logic. By contrast, known algorithms that work regardless of the base theory (see, e.g., \cite{veanes:2017, libkin:2003}) are either not known to or do not achieve optimal complexity in theory. Thus, it remains an open problem to derive an optimal generic algorithm for deciding variable decomposability. A possible first step towards a solution would be to follow the ideas outlined in Section~\ref{sec:conclusions:generalization} above.
	
	\item \textit{Investigate whether parallelization can make the covering-based algorithm for producing decompositions even more efficient.} Recall that the covering-based algorithm for constructing $ \Pi $-decompositions we derived in \zbrefsec{sec:vardec} just computes coverings of disjuncts in $ \Sat(\DisjPhi) $ and then outputs their disjunction. The calls to the covering algorithm are entirely independent of each other and thus have the potential of being parallelized. Since the computation of certain coverings may well make the computation of other coverings completely unnecessary (e.g., it is not necessary to compute the covering of $ \Gamma' $ if $ \Gamma' $ entails the covering of a different $ \Gamma \in \Sat(\DisjPhi) $, as we have seen in Example~\ref{example:reduction_ex}), there is even more room for optimization. Furthermore, the iterations of the second loop of the covering algorithm (at Line~\ref{alg:cover:line:second_loop}) are also independent and thus may be parallelized. Overall, it is an exciting task to try and further optimize the algorithm by executing the covering algorithm and the iterations of its second loop in parallel.
\end{itemize}

\bibliography{lra}

\appendix

\section{Facts about linear dependencies}
\label{sec:app:lindep_facts}

In this \zbsectionname/, we establish facts about the $ \LinDep_h $ metric we rely on in the proofs of other results.

\subsection{Closure properties}

We start by showing some basic closure properties of $ \LinDep_{h, v} $.

\begin{lemma}
	\label{lemma:lindep_closed_under_addition}
	Let $ \Gamma $ be a satisfiable set of predicates, $ V $ be a $ \Q $-vectorspace and $ h : \Q^n \rightarrow V $ be a mapping. Then, for any $ v \in \Q^n $, $ \LinDep_{h, v}(\Gamma) $ is closed under addition and taking subspaces.
\end{lemma}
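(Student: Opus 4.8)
The plan is to deduce both closure properties from two elementary facts about the orthogonal complement operation on subspaces of $V$: that $U \mapsto U^\bot$ is inclusion-reversing, and that $(U_1 + U_2)^\bot = U_1^\bot \cap U_2^\bot$. Both are immediate from the definition: $u \in U^\bot$ means $u \cdot w = 0$ for every $w \in U$, so enlarging $U$ can only shrink $U^\bot$; and $u$ is orthogonal to every element of $U_1 + U_2$ exactly when $u \cdot w_1 = 0$ for all $w_1 \in U_1$ and $u \cdot w_2 = 0$ for all $w_2 \in U_2$, by setting one summand to zero and using bilinearity. Neither fact depends on the bilinear form being non-degenerate, so the argument is insensitive to whether one reads $U^\bot$ as taken inside $V$ or inside $\Q^n$.

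For closure under taking subspaces I would fix $U \in \LinDep_{h, v}(\Gamma)$ and an arbitrary subspace $U' \le U$. By the inclusion-reversing property, $U^\bot \subseteq (U')^\bot$, and since $h(\ModelsOf(\Gamma) - v) \subseteq U^\bot$ by hypothesis, we get $h(\ModelsOf(\Gamma) - v) \subseteq (U')^\bot$, which is exactly the statement that $U' \in \LinDep_{h, v}(\Gamma)$.

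For closure under addition I would fix $U_1, U_2 \in \LinDep_{h, v}(\Gamma)$, so that $h(\ModelsOf(\Gamma) - v)$ is contained both in $U_1^\bot$ and in $U_2^\bot$, hence in $U_1^\bot \cap U_2^\bot = (U_1 + U_2)^\bot$. Since $U_1 + U_2$ is again a subspace of $V$, this shows $U_1 + U_2 \in \LinDep_{h, v}(\Gamma)$, as required.

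There is no genuine obstacle in this lemma; it is a direct bookkeeping argument. The only point worth a moment of care is making the identity $(U_1+U_2)^\bot = U_1^\bot \cap U_2^\bot$ explicit rather than appealing to it implicitly, since the rest of the proof hinges entirely on it.
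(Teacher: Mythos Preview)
Your proof is correct and follows essentially the same approach as the paper: both arguments rest on the identity $(U_1+U_2)^\bot = U_1^\bot \cap U_2^\bot$ for closure under addition and on the inclusion-reversing property of $U \mapsto U^\bot$ for closure under subspaces. The only difference is that you spell out the justification of these two facts explicitly, whereas the paper uses them without comment.
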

\begin{proof}
	Let $ U_1, U_2 \in \LinDep_{h, v}(\Gamma) $ and let $ U \le U_1 $. Since \[
	h(\ModelsOf(\Gamma) - v) \subseteq U_1^\bot \cap U_2^\bot = (U_1 + U_2)^\bot
	\] we have $ U_1 + U_2 \in \LinDep_{h, v}(\Gamma) $. Closure under taking subspaces, that is, $ U \in \LinDep_{h, v}(\Gamma) $, follows from $ h(\ModelsOf(\Gamma) - v) \subseteq U_1^\bot \le U^\bot $.
\end{proof}

\subsection{Only equality predicates can establish linear dependencies}

We now show that only equality predicates can establish linear dependencies. This fact is very powerful because it allows us to completely ignore strict inequality predicates and focus only on the equalities every time we study some properties of $ \LinDep_{h, v}(\Gamma) $. Consequently, this allows us to study only the relevant properties of affine vectorspaces instead of evenly convex polyhedral sets \cite{rodriguez:2017}, which are significantly more complicated algebraic structures.

\begin{theorem}
	\label{thm:only_equality_predicates_can_establish_lindep}
	Let $ \Gamma $ be a satisfiable set of predicates, $ V $ be a $ \Q $-vectorspace and $ h : \Q^n \rightarrow V $ be a $ \Q $-vectorspace homomorphism. Then \[
	\LinDep_{h, v}(\Gamma) = \LinDep_{h, v}(\Gamma^=)
	\] holds for all $ v \models \Gamma $.
\end{theorem}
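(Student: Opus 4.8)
The plan is to prove the inclusion $\LinDep_{h,v}(\Gamma) \subseteq \LinDep_{h,v}(\Gamma^=)$ directly from the definition, and then obtain the reverse inclusion as the genuinely substantive part. For the easy direction, note that since $\Gamma^= \subseteq \Gamma$, every model of $\Gamma$ is a model of $\Gamma^=$, so $\ModelsOf(\Gamma) \subseteq \ModelsOf(\Gamma^=)$ and hence $h(\ModelsOf(\Gamma) - v) \subseteq h(\ModelsOf(\Gamma^=) - v)$; therefore any $U \le V$ with $h(\ModelsOf(\Gamma^=) - v) \subseteq U^{\bot}$ also satisfies $h(\ModelsOf(\Gamma) - v) \subseteq U^{\bot}$, giving $\LinDep_{h,v}(\Gamma^=) \subseteq \LinDep_{h,v}(\Gamma)$ — wait, this is actually the reverse of what I want, so let me be careful: the containment $\ModelsOf(\Gamma) \subseteq \ModelsOf(\Gamma^=)$ gives $\LinDep_{h,v}(\Gamma^=) \subseteq \LinDep_{h,v}(\Gamma)$ for free. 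The nontrivial direction is thus $\LinDep_{h,v}(\Gamma) \subseteq \LinDep_{h,v}(\Gamma^=)$: we must show that adding strict inequalities to $\Gamma^=$ cannot create any new linear dependencies.

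The key geometric fact I would exploit is that $\ModelsOf(\Gamma^=)$ is an affine subspace $a + W$ (for $W$ a $\Q$-vectorspace), while $\ModelsOf(\Gamma)$ is a relatively open convex subset of that affine subspace — namely $\ModelsOf(\Gamma^=)$ intersected with finitely many open half-spaces, and this intersection, when nonempty, is full-dimensional \emph{within} the affine hull $a + W$. Concretely, I would argue: since $\Gamma$ is satisfiable, pick $v_0 \models \Gamma$; then for any $w \in W$ there exists $\varepsilon > 0$ with $v_0 + \varepsilon w \models \Gamma$ (small perturbations inside the affine subspace stay inside the open half-spaces). This shows that the affine hull of $\ModelsOf(\Gamma)$ equals $a + W = \ModelsOf(\Gamma^=)$, equivalently that $\ModelsOf(\Gamma) - v$ and $\ModelsOf(\Gamma^=) - v$ span the same subspace of $\Q^n$ (translating by a fixed $v$; using $v \models \Gamma$ is convenient). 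Now suppose $U \in \LinDep_{h,v}(\Gamma)$, i.e. $h(\ModelsOf(\Gamma) - v) \subseteq U^{\bot}$. Since $U^{\bot}$ is a subspace and $h$ is a vectorspace homomorphism, $h$ of the \emph{span} of $\ModelsOf(\Gamma) - v$ also lies in $U^{\bot}$; but that span equals the span of $\ModelsOf(\Gamma^=) - v$, so $h(\ModelsOf(\Gamma^=) - v) \subseteq h(\operatorname{span}(\ModelsOf(\Gamma^=) - v)) \subseteq U^{\bot}$, which is exactly $U \in \LinDep_{h,v}(\Gamma^=)$.

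The main obstacle I anticipate is the translation bookkeeping: $v$ in the definition of $\LinDep_{h,v}$ is an arbitrary point of $\Q^n$, not necessarily a model of $\Gamma$ or $\Gamma^=$, so "span of $\ModelsOf(\Gamma^=) - v$" need not be a subspace through the origin in the naive sense. The clean way around this is to first invoke Lemma~\ref{lemma:lindep_gamma_v1_equals_lindep_gamma_v2} (well-definedness of $\LinDep_h$) so that we may assume $v \models \Gamma^=$ — actually, better, assume $v \models \Gamma$, which is legitimate since $\Gamma$ is satisfiable and both $\LinDep_{h,v}(\Gamma)$ and $\LinDep_{h,v}(\Gamma^=)$ are independent of the choice of $v$ among models of $\Gamma^=$. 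With $v \models \Gamma$, the sets $\ModelsOf(\Gamma) - v$ and $\ModelsOf(\Gamma^=) - v$ both contain $0$, and $\ModelsOf(\Gamma^=) - v = W$ is literally a subspace, so the perturbation argument shows $\operatorname{span}(\ModelsOf(\Gamma) - v) = W$ and the homomorphism argument closes the proof. A secondary technical point worth stating carefully is the elementary linear-algebra lemma that if $h$ is linear and $S$ is any subset with $h(S) \subseteq U^{\bot}$, then $h(\operatorname{span}(S)) \subseteq U^{\bot}$ — immediate since $h(\operatorname{span}(S)) = \operatorname{span}(h(S))$ and $U^{\bot}$ is closed under linear combinations. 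Once these pieces are assembled the result follows, and it should also be noted that this theorem is a weaker companion of the "strong" version \ref{thm:only_equality_predicates_can_establish_lindep_strong} referenced elsewhere in the paper.
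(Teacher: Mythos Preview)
Your proposal is correct and rests on the same geometric idea as the paper's proof: since $\Gamma \setminus \Gamma^=$ consists only of strict inequalities, every direction $w$ in the linear part $W$ of $\ModelsOf(\Gamma^=)$ is realised inside $\ModelsOf(\Gamma)$ as $v + \varepsilon w$ for some small $\varepsilon > 0$. The paper packages the nontrivial inclusion as a contrapositive, reducing an arbitrary $\gen{w_1,\dots,w_d} \notin \LinDep_{h,v}(\Gamma^=)$ to a single generator $\gen{w_i}$ via the closure properties of Lemma~\ref{lemma:lindep_closed_under_addition} and then exhibiting a witness in $\ModelsOf(\Gamma)$; you instead argue directly that $\operatorname{span}(\ModelsOf(\Gamma)-v)=W$ and push this equality through the linear map $h$ into the subspace $U^\bot$. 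These are equivalent organisations of the same perturbation argument, yours being slightly more conceptual and the paper's slightly more explicit. One minor correction: your detour through Lemma~\ref{lemma:lindep_gamma_v1_equals_lindep_gamma_v2} is unnecessary, because the theorem statement already quantifies over $v \models \Gamma$, so the translation bookkeeping you worry about never arises (and in the paper's logical ordering that lemma is established only after the present theorem).
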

\begin{proof}
	``$ \supseteq $'': Let $ U \in \LinDep_{h, v}(\Gamma^=) $. Since $ \Gamma \models \Gamma^= $, we have \[
	h(\ModelsOf(\Gamma) - v) \subseteq h(\ModelsOf(\Gamma^=) - v) \subseteq U^\bot
	\] Hence, $ U \in \LinDep_{h, v}(\Gamma) $.
	
	``$ \subseteq $'': We show the contrapositive of $ \LinDep_{h, v}(\Gamma) \subseteq \LinDep_{h, v}(\Gamma^=) $. It is clear that $ \gen{0} $ is in both sets, so we can, without loss of generality, consider only elements distinct from $ \gen{0} $. Let $ \gen{0} \neq \gen{w_1, \dots, w_d} \notin \LinDep_{h, v}(\Gamma^=) $. Then there exists a single vector $ w_i \in \{w_1, \dots, w_d\} $ such that $ \gen{0} \neq \gen{w_i} \notin \LinDep_{h, v}(\Gamma^=) $ because if $ \gen{w_i} \in \LinDep_{h, v}(\Gamma^=) $ was true for all $ i \in \{1, \dots, d\} $, then \[
	\sum_{i=1}^d \gen{w_i} = \gen{w_1, \dots, w_d} \in \LinDep_{h, v}(\Gamma^=)
	\] would hold since $ \LinDep_{h, v}(\Gamma^=) $ is closed under addition (Lemma~\ref{lemma:lindep_closed_under_addition}). By definition of $ \LinDep_{h, v} $, there exists $ u \models \Gamma^= $ such that \begin{align}
		\label{eqn:h_ui_minus_v_dot_w_nonzero}
		h(u - v) \cdot w_i \neq 0
	\end{align} Let $ v + W $ be the affine vectorspace of solutions to $ \Gamma^= $. Note that $ u - v \in W $ and $ \gen{0} \neq \gen{u - v} \le W $.
	Since $ \Gamma \setminus \Gamma^= $ contains only strict inequality predicates, there exists some $ \varepsilon > 0 $ such that \[
	p := v + \varepsilon \cdot \underbrace{(u - v)}_{\neq 0} \models \Gamma
	\] More precisely, $ p $ is a model of $ \Gamma $ because if some finite conjunction of strict inequalities is satisfied by $ v $, then there exists an $ \varepsilon $-neighborhood of that point $ v $, such that all those strict inequalities are still satisfied within that $ \varepsilon $-neighborhood. By linearity of $ h $ and (\ref{eqn:h_ui_minus_v_dot_w_nonzero}), we have \[
	h(p - v) \cdot w_i = \varepsilon \cdot \underbrace{(h(u - v) \cdot w_i)}_{\neq 0} \neq 0
	\] Hence, $ \gen{w_i} \notin \LinDep_{h, v}(\Gamma) $. Consequently, $ \gen{w_1, \dots, w_d} \notin \LinDep_{h, v}(\Gamma) $ because otherwise $ \gen{w_i} \in \LinDep_{h, v}(\Gamma) $ would hold since $ \LinDep_{h, v}(\Gamma) $ is closed under taking subspaces (Lemma~\ref{lemma:lindep_closed_under_addition}).
\end{proof}

\subsection{Invariance of linear dependencies under change of offset}

Now, we complement the above facts by proving that $ \LinDep_{h, v}(\Gamma) $ does not depend on the particular choice of $ v \models \Gamma^= $.

\begin{lemma}
	\label{lemma:lindep_gamma_v1_equals_lindep_gamma_v2}
	Let $ \Gamma $ be a satisfiable set of predicates, $ V $ be a $ \Q $-vectorspace and $ h : \Q^n \rightarrow V $ be a $ \Q $-vectorspace homomorphism. Then \[
	\LinDep_{h, v_1}(\Gamma) = \LinDep_{h, v_2}(\Gamma)
	\] holds for all $ v_1, v_2 \models \Gamma^= $.
\end{lemma}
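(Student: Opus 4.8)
The plan is to show that for any two offsets $v_1, v_2 \models \Gamma^=$, translating the model set by $v_1$ versus $v_2$ produces the same collection of admissible subspaces. The key observation is that $v_1$ and $v_2$ both lying in the affine solution set $\ModelsOf(\Gamma^=)$ means that $v_1 - v_2$ belongs to the underlying linear space $W$, where $\ModelsOf(\Gamma^=) = v_1 + W = v_2 + W$. Since only equality predicates matter for linear dependencies (this is exactly Theorem~\ref{thm:only_equality_predicates_can_establish_lindep}), I would first reduce to the case $\Gamma = \Gamma^=$, i.e., work directly with the affine set $v_1 + W$.

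The core computation is then straightforward. For a subspace $U \le V$, membership $U \in \LinDep_{h, v_1}(\Gamma)$ means $h(\ModelsOf(\Gamma^=) - v_1) \subseteq U^\bot$, i.e., $h(W) \subseteq U^\bot$ (since $\ModelsOf(\Gamma^=) - v_1 = (v_1 + W) - v_1 = W$). But this condition is visibly independent of which point of the affine set we subtracted: $\ModelsOf(\Gamma^=) - v_2 = (v_2 + W) - v_2 = W$ as well, so $U \in \LinDep_{h, v_1}(\Gamma)$ if and only if $h(W) \subseteq U^\bot$ if and only if $U \in \LinDep_{h, v_2}(\Gamma)$. Here I use that $h$ is a $\Q$-vectorspace homomorphism so that $h$ applied to the translated set is well-behaved, but actually the point is even simpler: the \emph{set} $\ModelsOf(\Gamma^=) - v_i$ is literally the same set $W$ for both choices of $i$, so no linearity of $h$ is even needed for this reduction; linearity is only needed to invoke Theorem~\ref{thm:only_equality_predicates_can_establish_lindep} in the first step.

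So the proof structure would be: (1) apply Theorem~\ref{thm:only_equality_predicates_can_establish_lindep} to replace $\Gamma$ by $\Gamma^=$ on both sides, reducing to showing $\LinDep_{h, v_1}(\Gamma^=) = \LinDep_{h, v_2}(\Gamma^=)$; (2) write $\ModelsOf(\Gamma^=) = v_1 + W$ for an appropriate linear subspace $W \le \Q^n$ obtained by Gaussian elimination; (3) observe $v_2 \in v_1 + W$ forces $\ModelsOf(\Gamma^=) = v_2 + W$ too, whence $\ModelsOf(\Gamma^=) - v_1 = W = \ModelsOf(\Gamma^=) - v_2$; (4) conclude that the defining condition $h(\ModelsOf(\Gamma^=) - v) \subseteq U^\bot$ is identical for $v = v_1$ and $v = v_2$, so the two $\LinDep$ sets coincide.

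There is essentially no main obstacle here — this lemma is a routine sanity check that the notation $\LinDep_h(\Gamma)$ introduced earlier (as $\LinDep_{h,v}(\Gamma)$ for ``some'' $v \models \Gamma$) is well-defined. The only thing to be slightly careful about is the first reduction step: the statement of Theorem~\ref{thm:only_equality_predicates_can_establish_lindep} requires $v \models \Gamma$, whereas here we are given $v_1, v_2 \models \Gamma^=$, which is a weaker hypothesis. I would handle this by noting that both $\LinDep_{h, v_1}$ and $\LinDep_{h, v_2}$ can in fact be compared directly at the level of $\Gamma^=$ without passing through $\Gamma$ at all — since for $\Gamma^=$ the model set is exactly the affine space and the argument in steps (2)--(4) applies verbatim with $\Gamma^=$ in place of $\Gamma$ — and then separately record that $\LinDep_{h,v}(\Gamma) = \LinDep_{h,v}(\Gamma^=)$ whenever $v \models \Gamma$ (which covers the intended use). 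Alternatively, and more cleanly, one simply proves the affine-space fact as the real content and remarks that it immediately gives the lemma as stated once combined with Theorem~\ref{thm:only_equality_predicates_can_establish_lindep}.
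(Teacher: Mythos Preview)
Your plan has a genuine gap at the reduction step, and you have correctly put your finger on it but not closed it. Theorem~\ref{thm:only_equality_predicates_can_establish_lindep} only gives $\LinDep_{h,v}(\Gamma)=\LinDep_{h,v}(\Gamma^=)$ for $v\models\Gamma$, whereas the lemma is stated for $v_1,v_2\models\Gamma^=$. Your proposed workarounds do not prove the lemma as stated: proving the ``affine-space fact'' $\LinDep_{h,v_1}(\Gamma^=)=\LinDep_{h,v_2}(\Gamma^=)$ and then ``combining with Theorem~\ref{thm:only_equality_predicates_can_establish_lindep}'' still leaves you unable to pass from $\LinDep_{h,v_i}(\Gamma^=)$ back to $\LinDep_{h,v_i}(\Gamma)$ when $v_i\not\models\Gamma$. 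And you cannot appeal to the strengthened version (Theorem~\ref{thm:only_equality_predicates_can_establish_lindep_strong}) to close this, because in the paper that theorem is \emph{derived from} the present lemma --- doing so would be circular. Nor is the full generality dispensable ``for the intended use'': the very proof of Theorem~\ref{thm:only_equality_predicates_can_establish_lindep_strong} applies the present lemma with one offset $v\models\Gamma^=$ that need not satisfy $\Gamma$.

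The paper's proof avoids the reduction entirely and works directly with $\ModelsOf(\Gamma)$ (not $\ModelsOf(\Gamma^=)$). The missing idea is this: from $h(\ModelsOf(\Gamma)-v_1)\subseteq U^\bot$ one cannot immediately read off $h(W)\subseteq U^\bot$, because $\ModelsOf(\Gamma)-v_1$ is only a \emph{subset} of $W$ (strict inequalities carve out an open piece). The paper fixes an actual model $v\models\Gamma$ and uses an $\varepsilon$-argument: for any $w\in W$ there is $\varepsilon>0$ with $v+\varepsilon w\models\Gamma$, so $\varepsilon\,h(w)+h(v-v_1)\in U^\bot$; since also $h(v-v_1)\in U^\bot$, one gets $h(w)\in U^\bot$. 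With $h(W)\subseteq U^\bot$ in hand, $h(v_1-v_2)\in U^\bot$ follows (as $v_1-v_2\in W$), and then $h(\ModelsOf(\Gamma)-v_2)=h(\ModelsOf(\Gamma)-v_1)+h(v_1-v_2)\subseteq U^\bot$. This $\varepsilon$-step is precisely the content your argument is missing.
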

\begin{proof}
	We only prove the ``$ \subseteq $'' inclusion because the proof of the converse direction is symmetric. Let $ v \models \Gamma $ and $ v + W $ be the affine vectorspace of solutions to $ \Gamma^= $. Let furthermore $ U \in \LinDep_{h, v_1}(\Gamma) $. By definition of $ \LinDep_{h, v_1} $, \begin{align}
		\label{eqn:h_modgamma_minus_v1_subsetof_ubot}
		h(\ModelsOf(\Gamma) - v_1) \subseteq U^\bot
	\end{align}
	\begin{zbclaim}
		\label{claim:hW_subsetof_Ubot}
		$ h(W) \subseteq U^\bot $
	\end{zbclaim}
	\begin{proof}
		Let $ w \in W $ be arbitrary. Since $ \Gamma \setminus \Gamma^= $ contains only strict inequality predicates and $ v \models \Gamma $, there exists some $ \varepsilon > 0 $ such that $ v + \varepsilon \cdot w \models \Gamma $. By linearity of $ h $ and (\ref{eqn:h_modgamma_minus_v1_subsetof_ubot}), \[
		h(v + \varepsilon \cdot w - v_1) = \varepsilon \cdot h(w) + h(v - v_1) \in U^\bot
		\] Note that (\ref{eqn:h_modgamma_minus_v1_subsetof_ubot}) also implies $ h(v - v_1) \in U^\bot $. Hence, $ h(w) \in U^\bot $.
	\end{proof}
	
	\noindent Claim~\ref{claim:hW_subsetof_Ubot} immediately implies $ h(v_1 - v_2) \in h(W) \subseteq U^\bot $ because \[
	v_1 - v_2 \in (v + W) - (v + W) = W - W = W
	\] We conclude that \[
	h(\ModelsOf(\Gamma) - v_2) = h(\ModelsOf(\Gamma) - v_1 + v_1 - v_2) = \underbrace{h(\ModelsOf(\Gamma) - v_1)}_{\subseteq U^\bot} + \underbrace{h(v_1 - v_2)}_{\subseteq U^\bot} \subseteq U^\bot
	\] and hence $ U \in \LinDep_{h, v_2}(\Gamma) $.
\end{proof}

Note that Lemma~\ref{lemma:lindep_gamma_v1_equals_lindep_gamma_v2} implies, in particular, that $ \LinDep_h(\Gamma) $ is well-defined.

\subsection{Stronger version of Theorem~\ref{thm:only_equality_predicates_can_establish_lindep}}

Relying on the above Theorem~\ref{thm:only_equality_predicates_can_establish_lindep} and Lemma~\ref{lemma:lindep_gamma_v1_equals_lindep_gamma_v2}, we now derive the following strengthened version of Theorem~\ref{thm:only_equality_predicates_can_establish_lindep}.

\begin{theorem}
	\label{thm:only_equality_predicates_can_establish_lindep_strong}
	Let $ \Gamma $ be a satisfiable set of predicates, $ V $ be a $ \Q $-vectorspace and $ h : \Q^n \rightarrow V $ be a $ \Q $-vectorspace homomorphism. Then \[
	\LinDep_{h, v}(\Gamma) = \LinDep_{h, v}(\Gamma^=)
	\] holds for all $ v \models \Gamma^= $.
\end{theorem}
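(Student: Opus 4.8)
The statement differs from Theorem~\ref{thm:only_equality_predicates_can_establish_lindep} only in that the displayed equality is now required to hold for every $v \models \Gamma^=$ rather than only for every $v \models \Gamma$; since being a model of $\Gamma^=$ is a weaker condition, this is a genuine strengthening, but it is one that follows purely formally from the two results already proved just above it. The plan is to reduce to Theorem~\ref{thm:only_equality_predicates_can_establish_lindep} by shifting the offset.

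First I would fix an arbitrary $v \models \Gamma^=$ and, using the hypothesis that $\Gamma$ is satisfiable, pick some $v' \models \Gamma$; note $v' \models \Gamma^=$ as well, since $\Gamma \models \Gamma^=$. Then the chain is:
\begin{align*}
	\LinDep_{h, v}(\Gamma) &= \LinDep_{h, v'}(\Gamma) && \text{(Lemma~\ref{lemma:lindep_gamma_v1_equals_lindep_gamma_v2}, as } v, v' \models \Gamma^=\text{)} \\
	&= \LinDep_{h, v'}(\Gamma^=) && \text{(Theorem~\ref{thm:only_equality_predicates_can_establish_lindep}, applied at } v' \models \Gamma\text{)} \\
	&= \LinDep_{h, v}(\Gamma^=) && \text{(Lemma~\ref{lemma:lindep_gamma_v1_equals_lindep_gamma_v2} for } \Gamma^=\text{, as } v, v' \models (\Gamma^=)^= = \Gamma^=\text{)}
\end{align*}
The only mild point to get right is that Lemma~\ref{lemma:lindep_gamma_v1_equals_lindep_gamma_v2} is applied twice with different predicate sets: once to $\Gamma$ (legitimate because both $v$ and $v'$ are models of $\Gamma^=$, which is exactly the quantifier range in that lemma) and once to $\Gamma^=$ (legitimate because $(\Gamma^=)^= = \Gamma^=$, so the quantifier range there is again membership in $\ModelsOf(\Gamma^=)$, and both $v, v'$ lie in it). Satisfiability of $\Gamma^=$, needed to invoke Lemma~\ref{lemma:lindep_gamma_v1_equals_lindep_gamma_v2} on it, is immediate from satisfiability of $\Gamma$.

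There is essentially no hard part: the whole argument is a three-step rewriting using results stated earlier in the excerpt, and no new geometric or algebraic idea is required. If anything, the only thing to be careful about is bookkeeping of which offsets are being used where, so I would present the proof exactly as the displayed chain above with the three justifications spelled out, and nothing more.
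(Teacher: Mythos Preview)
Your proof is correct and is essentially identical to the paper's own proof: the paper also fixes an auxiliary model $u \models \Gamma$ and writes the same three-step chain $\LinDep_{h,v}(\Gamma) = \LinDep_{h,u}(\Gamma) = \LinDep_{h,u}(\Gamma^=) = \LinDep_{h,v}(\Gamma^=)$, citing Lemma~\ref{lemma:lindep_gamma_v1_equals_lindep_gamma_v2} for the outer equalities and Theorem~\ref{thm:only_equality_predicates_can_establish_lindep} for the middle one. Your additional remark that the second application of Lemma~\ref{lemma:lindep_gamma_v1_equals_lindep_gamma_v2} uses $(\Gamma^=)^= = \Gamma^=$ is a nice clarification that the paper leaves implicit.
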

\begin{proof}
	Fix some $ u \models \Gamma $. Then \[
	\LinDep_{h, v}(\Gamma) \overset{\ref{lemma:lindep_gamma_v1_equals_lindep_gamma_v2}}{=} \LinDep_{h, u}(\Gamma) \overset{\ref{thm:only_equality_predicates_can_establish_lindep}}{=} \LinDep_{h, u}(\Gamma^=) \overset{\ref{lemma:lindep_gamma_v1_equals_lindep_gamma_v2}}{=} \LinDep_{h, v}(\Gamma^=)
	\] holds by Theorem~\ref{thm:only_equality_predicates_can_establish_lindep} and Lemma~\ref{lemma:lindep_gamma_v1_equals_lindep_gamma_v2}.
\end{proof}

\subsection{Linear dependencies and arbitrary offsets}

Next, we establish another fact about offsets with respect to which we analyze linear dependencies. Intuitively, the following lemma says that no predicate set $ \Gamma $ can contain more linear dependencies when analyzed with respect to an arbitrary offset, compared to the dependencies obtained by analyzing with respect to offset $ v \models \Gamma $.

\begin{lemma}
	\label{lemma:arbitrary_offset_less_lindep_compared_to_good_offset}
	Let $ \Gamma $ be a satisfiable set of predicates, $ V $ be a $ \Q $-vectorspace and $ h : \Q^n \rightarrow V $ be a $ \Q $-vectorspace homomorphism. Then \[
	\LinDep_{h, u}(\Gamma) \subseteq \LinDep_{h, v}(\Gamma)
	\] holds for all $ u \in \Q^n $ and $ v \models \Gamma $.
\end{lemma}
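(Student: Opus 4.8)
The plan is to unwind the definition of $\LinDep_{h,\cdot}$ and verify the inclusion element-wise. Fix $u \in \Q^n$ and $v \models \Gamma$, and take an arbitrary $U \in \LinDep_{h,u}(\Gamma)$, so that by definition $h(\ModelsOf(\Gamma) - u) \subseteq U^{\bot}$. I want to show $U \in \LinDep_{h,v}(\Gamma)$, i.e., $h(\ModelsOf(\Gamma) - v) \subseteq U^{\bot}$. Let $w \in \ModelsOf(\Gamma)$ be arbitrary; it suffices to prove $h(w - v) \in U^{\bot}$.

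The key observation is that $v$ is \emph{itself} a model of $\Gamma$, so both $w - u$ and $v - u$ lie in $\ModelsOf(\Gamma) - u$. Hence $h(w - u) \in U^{\bot}$ and $h(v - u) \in U^{\bot}$ by the assumption $U \in \LinDep_{h,u}(\Gamma)$. Now write $w - v = (w - u) - (v - u)$; applying linearity of $h$ gives $h(w - v) = h(w - u) - h(v - u)$, and since $U^{\bot}$ is a subspace (hence closed under subtraction), we conclude $h(w - v) \in U^{\bot}$. As $w \in \ModelsOf(\Gamma)$ was arbitrary, this yields $h(\ModelsOf(\Gamma) - v) \subseteq U^{\bot}$, and therefore $U \in \LinDep_{h,v}(\Gamma)$, completing the proof of the inclusion.

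There is no real obstacle here: the statement is essentially a translation-invariance fact, and the only thing that makes it work is that the ``good'' offset $v$ belongs to $\ModelsOf(\Gamma)$ (whereas $u$ need not), which is precisely what lets us express $w - v$ as a difference of two vectors whose $h$-images are already known to lie in $U^{\bot}$. The proof uses only the linearity of $h$ and the fact that an orthogonal complement is a linear subspace; no appeal to the structure of $\ModelsOf(\Gamma)$ as an evenly convex polyhedral set (or to strict-inequality perturbation arguments of the kind used in Theorem~\ref{thm:only_equality_predicates_can_establish_lindep}) is needed.
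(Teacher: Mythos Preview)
Your proof is correct and in fact more elementary than the paper's. The paper argues indirectly: it first shows $h(W) \subseteq U^\bot$ where $v + W = \ModelsOf(\Gamma^=)$, using an $\varepsilon$-perturbation argument (that $v + \varepsilon w \models \Gamma$ for small $\varepsilon$ since the non-equality predicates are strict), concludes $U \in \LinDep_{h,v}(\Gamma^=)$, and then invokes Theorem~\ref{thm:only_equality_predicates_can_establish_lindep} to pass back to $\LinDep_{h,v}(\Gamma)$. Your argument bypasses all of this by observing directly that $v \in \ModelsOf(\Gamma)$ forces $h(v-u) \in U^\bot$, so $h(w-v) = h(w-u) - h(v-u)$ lies in the subspace $U^\bot$ for every $w \models \Gamma$. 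The advantage of your route is that it uses nothing beyond the definition and linearity; the paper's detour through $\Gamma^=$ is unnecessary here (though that machinery is genuinely needed elsewhere, e.g.\ in Lemma~\ref{lemma:lindep_gamma_v1_equals_lindep_gamma_v2}, where neither offset is assumed to satisfy $\Gamma$ itself).
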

\begin{proof}
	Let $ v + W $ be the affine vectorspace of solutions to $ \Gamma^= $ and let $ U \in \LinDep_{h, u}(\Gamma) $. By definition of $ \LinDep_{h, u}(\Gamma) $, \begin{align}
		\label{eqn:h_modgamma_minus_u_subsetof_ubot}
		h(\ModelsOf(\Gamma) - u) \subseteq U^\bot
	\end{align}
	\begin{zbclaim}
		\label{claim:hW_subsetof_Ubot_two}
		$ h(W) \subseteq U^\bot $
	\end{zbclaim}
	\begin{proof}
		Let $ w \in W $ be arbitrary. Since $ \Gamma \setminus \Gamma^= $ contains only strict inequality predicates and $ v \models \Gamma $, there exists some $ \varepsilon > 0 $ such that $ v + \varepsilon \cdot w \models \Gamma $. By linearity of $ h $ and (\ref{eqn:h_modgamma_minus_u_subsetof_ubot}), \[
		h(v + \varepsilon \cdot w - u) = \varepsilon \cdot h(w) + h(v - u) \in U^\bot
		\] Note that (\ref{eqn:h_modgamma_minus_u_subsetof_ubot}) also implies $ h(v - u) \in U^\bot $. Hence, $ h(w) \in U^\bot $.
	\end{proof}
	
	\noindent By Claim~\ref{claim:hW_subsetof_Ubot_two}, \[
	h(\ModelsOf(\Gamma^=) - v) = h(v + W - v) = h(W) \overset{\ref{claim:hW_subsetof_Ubot_two}}{\subseteq} U^\bot
	\] Hence, $ U \in \LinDep_{h, v}(\Gamma^=) $. By Theorem~\ref{thm:only_equality_predicates_can_establish_lindep}, $ U \in \LinDep_{h, v}(\Gamma) $.
\end{proof}

Relying on the facts derived above, we now prove Theorems \ref{thm:lindep_algebraic_char}, \ref{thm:union_of_same_lindep_has_that_lindep} and Lemma~\ref{lemma:adding_neq_predicates_cannot_make_pi_complex_set_pi_simple}.

\subsection{Proof of Theorem~\ref{thm:lindep_algebraic_char}}
\label{sec:app:lindep_facts:proof_lindep_algebraic_char}

We start by showing the ``$ \subseteq $'' inclusion. Let $ U \in \LinDep_{h, v}(\Gamma) $. By Theorem~\ref{thm:only_equality_predicates_can_establish_lindep_strong}, $ U \in \LinDep_{h, v}(\Gamma^=) $, so by definition of $ \LinDep_{h, v} $ and the linearity of $ h $ we have \[
h(b) - h(v) + h(\gen{b_1, \dots, b_r}) \subseteq U^\bot
\] Since $ h(\gen{b_1, \dots, b_r}) = \ImageOf(h \circ \lc_B) $, we can also write \[
h(b) - h(v) + \ImageOf(h \circ \lc_B) \subseteq U^\bot
\] Note that $ 0 \in \ImageOf(h \circ \lc_B) $ implies $ h(b) - h(v) \in U^\bot $ and hence \[
\ImageOf(h \circ \lc_B) \subseteq U^\bot - \underbrace{h(b) + h(v)}_{\in U^\bot} = U^\bot
\] Since $ h \circ \lc_B $ is a homomorphism, $ \ImageOf(h \circ \lc_B) \le U^\bot $. Hence, $ U \le \ImageOf(h \circ \lc_B)^{\bot} $.

For the converse ``$ \supseteq $'' inclusion, let $ U \le \ImageOf(h \circ \lc_B)^{\bot} $. This implies $ \ImageOf(h \circ \lc_B) \le U^{\bot} $. By Theorem~\ref{thm:only_equality_predicates_can_establish_lindep_strong} it suffices to show that $ U \in \LinDep_{h, v}(\Gamma^=) $, that is, \[
h(b - v) + h(\gen{b_1, \dots, b_r}) \subseteq U^\bot
\] Indeed, \[
h(b - v) + h(\gen{b_1, \dots, b_r}) \subseteq h(\gen{b_1, \dots, b_r}) = \ImageOf(h \circ \lc_B) \subseteq U^\bot
\] follows from $ b - v \in \gen{b_1, \dots, b_r} $ and $ \ImageOf(h \circ \lc_B) \le U^{\bot} $. This completes the proof of Theorem~\ref{thm:lindep_algebraic_char}.

\subsection{Proof of the Uniform union Theorem~\ref{thm:union_of_same_lindep_has_that_lindep}}
\label{sec:app:lindep_facts:proof_uniform_union_theorem}

Since $ \Lambda_m \models \Lambda $ for all $ m $, by Lemma~\ref{lemma:lindep_gamma_v1_equals_lindep_gamma_v2} we get \[
\LinDep_{h}(\Lambda) \overset{\ref{lemma:lindep_gamma_v1_equals_lindep_gamma_v2}}{=} \LinDep_{h, v_m}(\Lambda) \subseteq \LinDep_{h, v_m}(\Lambda_m) = \LinDep_{h}(\Lambda_m)
\] for all $ v_m \models \Lambda_m $. This proves the ``$ \subseteq $'' direction.

For the converse ``$ \supseteq $'' inclusion, let $ v_1 + W_1, \dots, v_k + W_k $ be affine vectorspaces such that $ \ModelsOf(\Lambda_i^=) = v_i + W_i $ for all $ i \in \{1, \dots, k\} $.

\begin{zbclaim}
	\label{claim:wlog_hvprime_i_in_hW_i_orth}
	We can, without loss of generality, assume that $ h(v_i) \in h(W_i)^\bot $ holds for all $ i \in \{1, \dots, k\} $.
\end{zbclaim}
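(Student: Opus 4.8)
\textbf{Proof plan for Claim~\ref{claim:wlog_hvprime_i_in_hW_i_orth}.} The plan is to exploit the fact that the offset $v_i$ is merely one representative of the affine solution set $\ModelsOf(\Lambda_i^=) = v_i + W_i$, and that replacing it by a better-chosen representative changes nothing that matters. Concretely, if $v_i' \in v_i + W_i$, then $\ModelsOf(\Lambda_i^=) = v_i' + W_i$ as sets, so the equivalence $\Lambda \equiv \bigvee_{i=1}^k \Lambda_i$ and the hypothesis $\LinDep_h(\Lambda_i) = \LinDep_h(\Lambda_j)$ are left untouched; moreover $v_i, v_i' \models \Lambda_i^=$, so Lemma~\ref{lemma:lindep_gamma_v1_equals_lindep_gamma_v2} gives $\LinDep_{h, v_i'}(\Lambda_i) = \LinDep_{h, v_i}(\Lambda_i)$. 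Hence it suffices to show that, for each $i$, the affine space $v_i + W_i$ contains a representative $v_i'$ with $h(v_i') \in h(W_i)^\bot$; having shown this, we rename $v_i'$ to $v_i$ and continue the argument.

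First I would fix $i$ and put $S := h(W_i) \le V$. Since $S$ is contained in $\ImageOf(h)$, which is finite-dimensional (being a homomorphic image of $\Q^n$), the subspace $S$ is itself finite-dimensional. The dot product on $V$ is positive definite (it is a sum of squares, so it has no isotropic vectors), hence applying Gram--Schmidt to a basis of $S$ produces a $\Q$-rational orthogonal basis of $S$, and projecting off it yields the orthogonal decomposition $V = S \oplus S^\bot$. I would then write $-h(v_i) = s + t$ with $s \in S$ and $t \in S^\bot$. Since $s \in S = h(W_i)$, there exists $w_i \in W_i$ with $h(w_i) = s$; setting $v_i' := v_i + w_i \in v_i + W_i = \ModelsOf(\Lambda_i^=)$ gives $h(v_i') = h(v_i) + h(w_i) = h(v_i) + s = -t \in S^\bot = h(W_i)^\bot$, as required. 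Carrying this out independently for every $i$ establishes the claim.

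The only delicate point is the splitting $V = S \oplus S^\bot$: over $\Q$ one cannot appeal to any completeness or compactness argument, but this is not needed here, because $S = h(W_i)$ is finite-dimensional and the form is a positive-definite sum of squares, so the usual Gram--Schmidt / orthogonal-projection construction goes through verbatim with rational coefficients. Everything else is routine bookkeeping with Lemma~\ref{lemma:lindep_gamma_v1_equals_lindep_gamma_v2} and the trivial observation that only the affine set $v_i + W_i$, and not the particular choice of offset $v_i$, enters the statement of the Uniform union Theorem.
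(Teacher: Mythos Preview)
Your proof is correct and follows essentially the same approach as the paper: both replace each $v_i$ by another representative $v_i'$ of the coset $v_i + W_i$ chosen so that $h(v_i')$ is $h(v_i)$ minus its orthogonal projection onto $h(W_i)$, and both note that since only the affine set $v_i + W_i$ matters, nothing downstream is affected. Your version is in fact slightly more careful than the paper's, which writes the projection as $\sum_j \frac{h(v_i)\cdot h(w_j)}{\|h(w_j)\|^2}\,h(w_j)$ without explicitly arranging for the $h(w_j)$ to be orthogonal; your appeal to Gram--Schmidt on the finite-dimensional space $S=h(W_i)$ makes this step airtight.
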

\begin{proof}
	Let $ w_1, \dots, w_r \in W_i $ be vectors such that $ h(W_i) = \gen{h(w_1), \dots, h(w_r)} $. Redefine $ v_i $ to be \begin{align}
		\label{eqn:vprimeprime_constr}
		v'_i := v_i - \sum_{j=1}^{r} \frac{h(v_i) \cdot h(w_j)}{\norm{h(w_j)}^2} \cdot w_j
	\end{align} Since we are subtracting off a linear combination of $ w_1, \dots, w_r \in W_i $, the affine space $ v_i + W_i = v'_i + W_i $ does not change under the above transformation and hence $ \ModelsOf(\Lambda_i^=) = v_i + W_i $ remains true. The fact that (\ref{eqn:vprimeprime_constr}) enforces the claimed property $ h(v'_i) \in h(W_i)^\bot $ follows by construction of $ v'_i $ because \begin{align*}
		h(v'_i) = h(v_i) - \sum_{j=1}^{r} \frac{h(v_i) \cdot h(w_j)}{\norm{h(w_j)}^2} \cdot h(w_j)
	\end{align*} and $ \sum_{j=1}^{r} \frac{h(v_i) \cdot h(w_j)}{\norm{h(w_j)}^2} \cdot h(w_j) $ is the orthogonal projection of $ h(v_i) $ onto $ h(W_i) $.
\end{proof}

\begin{zbclaim}
	\label{claim:lambda_i_eq_entails_lambda_eq}
	For every $ i \in \{1, \dots, k\} $ it holds that $ \Lambda_i^= \models \Lambda^= $.
\end{zbclaim}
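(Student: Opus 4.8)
The plan is to reduce $\Lambda_i^= \models \Lambda^=$ to the statement, for a single equality predicate $p \in \Lambda^=$, that $\Lambda_i^= \models p$, and then to prove the latter by the standard $\varepsilon$-neighbourhood argument (equivalently, by appealing to Theorem~\ref{thm:only_equality_predicates_can_establish_lindep}). First I would record the easy starting point: since $\Lambda \equiv \bigvee_{j=1}^k \Lambda_j$, every disjunct entails the disjunction, so $\Lambda_i \models \Lambda$; together with $\Lambda \models \Lambda^= \models p$ this gives $\Lambda_i \models p$. Hence the whole content of the claim is that deleting the strict inequality predicates $\Lambda_i \setminus \Lambda_i^=$ from $\Lambda_i$ does not destroy the entailment of the equality $p$.

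For the main step I would fix a model $v \models \Lambda_i$ (hence also $v \models \Lambda_i^=$) and let $v + W := \ModelsOf(\Lambda_i^=)$ be the affine solution space of the equalities of $\Lambda_i$. Writing the coefficient vector of $p$ as $c \in \Q^n$ and its constant term as $d \in \Q$, note first that $\Lambda_i \models p$ forces $c \cdot v = d$, so that $\ModelsOf(p) = \{u : c \cdot u = d\} = v + \gen{c}^\bot$ is a hyperplane through $v$. Now take an arbitrary $w \in W$: since $\Lambda_i \setminus \Lambda_i^=$ consists only of strict inequality predicates and $v \models \Lambda_i$, there is some $\varepsilon > 0$ with $v + \varepsilon w \models \Lambda_i$ (the same local-satisfiability observation used throughout this appendix, e.g.\ in the proof of Theorem~\ref{thm:only_equality_predicates_can_establish_lindep}). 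Then $v + \varepsilon w \models p$, and since $c \cdot v = d$ this forces $c \cdot w = 0$, i.e.\ $w \in \gen{c}^\bot$. As $w \in W$ was arbitrary, $W \subseteq \gen{c}^\bot$, whence $\ModelsOf(\Lambda_i^=) = v + W \subseteq v + \gen{c}^\bot = \ModelsOf(p)$, that is, $\Lambda_i^= \models p$. Running this over all $p \in \Lambda^=$ and recalling that $\Lambda^=$ is the conjunction of its members yields $\Lambda_i^= \models \Lambda^=$. (Equivalently: since $c \cdot v = d$, the entailment $\Lambda_i \models p$ is just $\gen{c} \in \LinDep_{\id, v}(\Lambda_i)$, which by Theorem~\ref{thm:only_equality_predicates_can_establish_lindep} equals $\LinDep_{\id, v}(\Lambda_i^=)$, and $\gen{c} \in \LinDep_{\id, v}(\Lambda_i^=)$ is exactly $\Lambda_i^= \models p$.)

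The only point requiring care is the reformulation $\ModelsOf(p) = v + \gen{c}^\bot$, which rests on first deducing $c \cdot v = d$ from $\Lambda_i \models p$ and on choosing $v$ to be a model of $\Lambda_i$ rather than an arbitrary point; without the hyperplane passing through $v$ the translated inclusion is not the right reformulation of the entailment. The degenerate case where $p$ is a tautology (so $c = 0$) is immediate, and $p$ cannot be unsatisfiable since $\Lambda_i$ is satisfiable, so no further difficulty is expected. Finally I would note that Claim~\ref{claim:lambda_i_eq_entails_lambda_eq} is precisely what closes the remaining ``$\supseteq$'' inclusion in the proof of the Uniform union Theorem~\ref{thm:union_of_same_lindep_has_that_lindep}: combined with Claim~\ref{claim:wlog_hvprime_i_in_hW_i_orth} and the invariance Lemma~\ref{lemma:lindep_gamma_v1_equals_lindep_gamma_v2}, it lets one relate $\LinDep_h(\Lambda)$ to the affine spaces $\ModelsOf(\Lambda_i^=)$ and conclude $\LinDep_h(\Lambda_m) \subseteq \LinDep_h(\Lambda)$ for every $m$.
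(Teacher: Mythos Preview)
Your proof is correct and uses essentially the same $\varepsilon$-neighbourhood argument as the paper. The only cosmetic difference is that the paper works with the whole system $\Lambda^=$ at once (letting $w+W=\ModelsOf(\Lambda^=)$ and showing $W_i\le W$), whereas you decompose into individual equalities $p\in\Lambda^=$ and show $W\subseteq\gen{c}^\bot$ for each; the underlying idea and the key step are identical.
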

\begin{proof}
	Let $ w \models \Lambda_i \models \Lambda \models \Lambda^= $ and let $ w + W $ be the affine vectorspace of solutions to $ \Lambda^= $. Since $ w \models \Lambda_i^= $, it follows that $ \ModelsOf(\Lambda_i^=) = w + W_i $. Hence, it suffices to show $ W_i \le W $ because then we get \[
	\ModelsOf(\Lambda_i^=) = w + W_i \subseteq w + W = \ModelsOf(\Lambda^=)
	\] as desired. Let $ w' \in W_i $ be arbitrary. Since $ \Lambda_i \setminus \Lambda_i^= $ contains only strict inequality predicates, there exists an $ \varepsilon > 0 $ such that \[
	w + \varepsilon \cdot w' \models \Lambda_i \models \Lambda \models \Lambda^=
	\] Hence, $ w + \varepsilon \cdot w' \in w + W $ and thus $ w' \in W $, so we have shown $ W_i \le W $.
\end{proof}

\begin{zbclaim}
	\label{claim:h_w1_eq_h_w2_eq_dots_eq_h_wk}
	$ h(W_1) = h(W_2) = \dots = h(W_k) $
\end{zbclaim}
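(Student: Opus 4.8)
The plan is to obtain this claim directly from the algebraic characterization of linear dependencies (Theorem~\ref{thm:lindep_algebraic_char}) together with the standing hypothesis of the Uniform union Theorem that $\LinDep_{h}(\Lambda_i) = \LinDep_{h}(\Lambda_j)$ for all $i,j \in \{1, \dots, k\}$ (recall also that $h$ is a $\Q$-vectorspace homomorphism). Concretely, for each $i$ I would fix a basis $B_i$ of $W_i$; since $\ModelsOf(\Lambda_i^=) = v_i + W_i$ and $\lc_{B_i}$ is the linear-combination isomorphism onto $W_i$, we have $\ImageOf(h \circ \lc_{B_i}) = h(W_i)$, so Theorem~\ref{thm:lindep_algebraic_char} gives
\[
\LinDep_{h}(\Lambda_i) = \{U \le V \mid U \le h(W_i)^\bot\},
\]
a collection whose supremum (equivalently, maximum with respect to inclusion) is $h(W_i)^\bot$ itself.

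Next, from $\LinDep_{h}(\Lambda_i) = \LinDep_{h}(\Lambda_j)$ I would simply read off that the suprema of these collections agree, i.e. $h(W_i)^\bot = h(W_j)^\bot$ for all $i, j$. Finally, taking orthogonal complements of both sides yields $h(W_i) = (h(W_i)^\bot)^\bot = (h(W_j)^\bot)^\bot = h(W_j)$, which is precisely the claim $h(W_1) = h(W_2) = \dots = h(W_k)$.

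The one step requiring a moment of care — and the main (though minor) obstacle — is justifying the involution $(U^\bot)^\bot = U$ for $U = h(W_i)$, since $V$ is not assumed finite-dimensional. This is unproblematic because $h(W_i)$ is finite-dimensional, being the image of the finite-dimensional space $W_i \le \Q^n$: for a finite-dimensional subspace spanned by linearly independent vectors, the Gram matrix of the standard dot product over $\Q$ is invertible, so $V = h(W_i) \oplus h(W_i)^\bot$ and hence $(h(W_i)^\bot)^\bot = h(W_i)$. I also note that, unlike the remaining portion of the Uniform union proof, this claim uses neither Claim~\ref{claim:wlog_hvprime_i_in_hW_i_orth} nor Claim~\ref{claim:lambda_i_eq_entails_lambda_eq}; those auxiliary facts will be needed only afterwards, to push through the converse inclusion $\LinDep_{h}(\Lambda) \supseteq \LinDep_{h}(\Lambda_m)$.
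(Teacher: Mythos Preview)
Your argument is correct and essentially identical to the paper's: apply Theorem~\ref{thm:lindep_algebraic_char} to rewrite each $\LinDep_{h}(\Lambda_i)$ as $\{U \le V \mid U \le h(W_i)^\bot\}$, read off $h(W_i)^\bot = h(W_j)^\bot$ from equality of these collections, and take orthogonal complements. The paper dispatches the last step with a bare ``and the claim follows'', whereas you spell out why $(h(W_i)^\bot)^\bot = h(W_i)$ via finite-dimensionality of $h(W_i)$; this extra care is harmless (and indeed the paper's notion of orthogonal complement is only defined inside some $\Q^m$, so the concern about infinite-dimensional $V$ does not really arise).
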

\begin{proof}
	By Theorem~\ref{thm:lindep_algebraic_char}, we have \begin{center}
		\begin{tabular}{c c c}
			$ \LinDep_{h}(\Lambda_1) $ & $ = \dots = $ & $ \LinDep_{h}(\Lambda_k) $ \\
			\rotatebox[origin=c]{90}{$ \overset{\ref{thm:lindep_algebraic_char}}{=} $} && \rotatebox[origin=c]{90}{$ \overset{\ref{thm:lindep_algebraic_char}}{=} $} \\
			$ \{U \le V \mid U \le h(W_1)^\bot\} $ & $ = \dots = $ & $ \{U \le V \mid U \le h(W_k)^\bot\} $
		\end{tabular}
	\end{center} Hence, $ h(W_1)^\bot = h(W_2)^\bot = \dots = h(W_k)^\bot $ and the claim follows.
\end{proof}

\begin{zbclaim}
	\label{claim:h_vprime_ij_all_equal}
	$ h(v_1) = h(v_2) = \dots = h(v_k) $
\end{zbclaim}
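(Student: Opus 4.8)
The plan is to establish that $v_1,\dots,v_k$ all have the same $h$-image, working with the common direction space $H := h(W_1) = \dots = h(W_k)$ furnished by Claim~\ref{claim:h_w1_eq_h_w2_eq_dots_eq_h_wk} and with the normalization $h(v_l) \in h(W_l)^{\bot} = H^{\bot}$ furnished by Claim~\ref{claim:wlog_hvprime_i_in_hW_i_orth}. Since it suffices to prove $h(v_i) = h(v_j)$ for an arbitrary pair of indices $i,j$, I would argue by contradiction and assume $h(v_i) \neq h(v_j)$. The key observation is that for every $l$ the $h$-image of the ``$l$-th piece'' of $\Lambda$ lies in a single coset of $H$: indeed $\ModelsOf(\Lambda_l) \subseteq \ModelsOf(\Lambda_l^{=}) = v_l + W_l$, so by linearity of $h$ we get $h(\ModelsOf(\Lambda_l)) \subseteq h(v_l) + H$, and that coset is pinned down by its unique representative $h(v_l)$ lying in $H^{\bot}$.

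Next I would pick models $p \models \Lambda_i$ and $q \models \Lambda_j$ (they exist by satisfiability of $\Lambda_i, \Lambda_j$). Because $\Lambda_i \models \Lambda$ and $\Lambda_j \models \Lambda$ and $\ModelsOf(\Lambda)$ is convex (being the solution set of a conjunction of linear equalities and strict inequalities), the open segment $(p,q)$ is contained in $\ModelsOf(\Lambda)$. Fix a rational $t$ with $0 < t < 1$ and put $p_t := (1-t)p + t q$. On the one hand $p_t \models \Lambda \equiv \bigvee_l \Lambda_l$, so $p_t \models \Lambda_{l(t)}$ for some index $l(t)$, whence $h(p_t) \in h(v_{l(t)}) + H$. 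On the other hand, writing $h(p) = h(v_i) + \eta_p$ and $h(q) = h(v_j) + \eta_q$ with $\eta_p, \eta_q \in H$ (possible by the observation above), linearity gives $h(p_t) = (1-t)h(v_i) + t\,h(v_j) + \eta$ where $\eta := (1-t)\eta_p + t\eta_q \in H$. Comparing the two descriptions of $h(p_t)$ shows $(1-t)h(v_i) + t\,h(v_j) - h(v_{l(t)}) \in H$; but both $(1-t)h(v_i) + t\,h(v_j)$ and $h(v_{l(t)})$ lie in the subspace $H^{\bot}$, so this difference lies in $H \cap H^{\bot} = \{0\}$, forcing $(1-t)h(v_i) + t\,h(v_j) = h(v_{l(t)})$ for every rational $t$ with $0 < t < 1$. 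Since $h(v_i) \neq h(v_j)$, the left-hand side runs through infinitely many distinct vectors as $t$ varies, whereas the right-hand side always lies in the finite set $\{h(v_1),\dots,h(v_k)\}$ -- a contradiction. Hence $h(v_i) = h(v_j)$, which proves the claim.

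The step I expect to require the most care is the passage ``modulo $H$'', which rests on two points that should be made explicit: that the standard dot product on the ambient $\Q^m \supseteq V$ is anisotropic, so that $H$ and $H^{\bot}$ intersect trivially, and that $h(v_l)$ is genuinely the $H^{\bot}$-component of every vector in $h(\ModelsOf(\Lambda_l))$ -- which is exactly what Claims~\ref{claim:wlog_hvprime_i_in_hW_i_orth} and \ref{claim:h_w1_eq_h_w2_eq_dots_eq_h_wk} were arranged to guarantee. Once the claim is available, the Uniform union Theorem~\ref{thm:union_of_same_lindep_has_that_lindep} closes quickly: $h(\ModelsOf(\Lambda))$ is contained in the single coset $h(v_1) + H$, and since for $v_\Lambda \models \Lambda$ the direction space $W_\Lambda$ of $\ModelsOf(\Lambda^{=})$ is spanned by $\ModelsOf(\Lambda) - v_\Lambda$ (using that $\Lambda \setminus \Lambda^{=}$ consists of strict inequalities), we obtain $h(W_\Lambda) \subseteq H = h(W_m)$; combined with the already-proven ``$\subseteq$'' inclusion, which via Theorem~\ref{thm:lindep_algebraic_char} yields $h(W_m) \le h(W_\Lambda)$, this gives $h(W_\Lambda) = h(W_m)$, and Theorem~\ref{thm:lindep_algebraic_char} then delivers $\LinDep_{h}(\Lambda) = \LinDep_{h}(\Lambda_m)$ for all $m$.
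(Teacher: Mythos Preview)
Your argument is correct and follows essentially the same route as the paper: assume $h(v_i)\neq h(v_j)$, pick models in $\Lambda_i$ and $\Lambda_j$, use convexity of $\ModelsOf(\Lambda)$ to obtain a segment inside $\ModelsOf(\Lambda)$, and then exploit the decomposition $V = H \oplus H^\bot$ (with $H := h(W_1)=\dots=h(W_k)$ and each $h(v_l)\in H^\bot$) to derive a finiteness contradiction. The only presentational difference is that the paper phrases the pigeonhole step in the domain---it shows that each affine space $v_l + W_l$ can meet the segment in at most one point, so the infinite segment cannot be covered by the $k$ pieces---whereas you push everything through $h$ first and compare $H^\bot$-components directly; the underlying computation (that $h(v_i)-h(v_j)\in H\cap H^\bot$ would follow from two points of the segment lying in the same piece) is identical. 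One small point you glossed over but which is automatic here: $p\neq q$ is guaranteed, since a common model of $\Lambda_i$ and $\Lambda_j$ would already force $h(v_i)-h(v_j)\in H\cap H^\bot=\{0\}$ by your coset observation.
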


Before we prove Claim~\ref{claim:h_vprime_ij_all_equal}, we show that this claim implies the desired inclusion. Let $ U \in \LinDep_{h, v}(\Lambda_m) $ for some $ v \models \Lambda_m $. By Theorem~\ref{thm:only_equality_predicates_can_establish_lindep_strong}, $ U \in \LinDep_{h, v}(\Lambda_m^=) $, so by Lemma~\ref{lemma:lindep_gamma_v1_equals_lindep_gamma_v2} we have $ U \in \LinDep_{h, v_m}(\Lambda_m^=) $. Hence, it follows from the definition of $ \LinDep_{h, v_m}(\Lambda_m^=) $ that $ h(W_m) \subseteq U^\bot $. Applying $ h $ to \[
\ModelsOf(\Lambda) = \bigcup_{i=1}^k \ModelsOf(\Lambda_i) \subseteq \bigcup_{i=1}^k \ModelsOf(\Lambda_i^=) = \bigcup_{i=1}^k (v_i + W_i)
\] yields, by Claim~\ref{claim:h_w1_eq_h_w2_eq_dots_eq_h_wk}, \begin{align}
	\label{eqn:h_modlambda_subsetof_union_wm}
	h(\ModelsOf(\Lambda)) \subseteq \bigcup_{i=1}^k \big(h(v_i) + h(W_i)\big) \overset{\ref{claim:h_w1_eq_h_w2_eq_dots_eq_h_wk}}{=} \{h(v_1), \dots, h(v_k)\} + h(W_m)
\end{align} More precisely, here we are allowed to bring $ h(W_m) $ out of the union, simply because the $ h(W_i) $ sets are all equal by Claim~\ref{claim:h_w1_eq_h_w2_eq_dots_eq_h_wk}. By adding an offset $ v_m $ to (\ref{eqn:h_modlambda_subsetof_union_wm}) and applying Claim~\ref{claim:h_vprime_ij_all_equal}, we get \[
	h(\ModelsOf(\Lambda) - v_m) \overset{(\ref{eqn:h_modlambda_subsetof_union_wm})}{\subseteq} \{h(v_1 - v_m), \dots, h(v_k - v_m)\} + h(W_m) \overset{\ref{claim:h_vprime_ij_all_equal}}{=} h(W_m) \subseteq U^\bot
\] Hence, $ U \in \LinDep_{h, v_m}(\Lambda) $. Now observe that by Claim~\ref{claim:lambda_i_eq_entails_lambda_eq}, $ v_m \models \Lambda_m^= \models \Lambda^= $, so we can apply Lemma~\ref{lemma:lindep_gamma_v1_equals_lindep_gamma_v2} and obtain $ U \in \LinDep_{h, v}(\Lambda) $. Hence, we have proven the desired inclusion $ \LinDep_{h, v}(\Lambda_m) \subseteq \LinDep_{h, v}(\Lambda) $ assuming Claim~\ref{claim:h_vprime_ij_all_equal}, so in the remainder of the proof we show this claim.

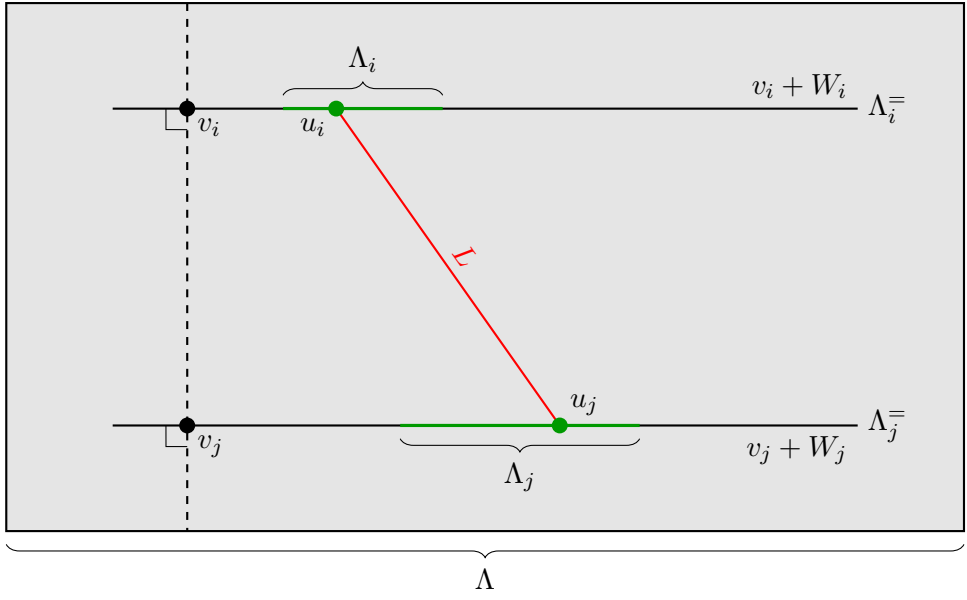
\begin{figure}[t]
	\centering
	\def\orthrectsidesize{2mm}
\begin{tikzpicture}[scale=\zbtikzbiggerscaling]
	
	% ====================[ Setup ]====================
	% corners of the rectangle representing Lambda
	\coordinate (lambdatl) at (0, 5);
	\coordinate (lambdatr) at (9, 5);
	\coordinate (lambdabl) at (0, 0);
	\coordinate (lambdabr) at (9, 0);
	
	% height at which to draw Lambda_i
	\coordinate (lambdaih) at (0, 4);
	% height at which to draw Lambda_j
	\coordinate (lambdajh) at (0, 1);
	
	% (p-|q) syntax means: the intersection of a horizontal line through p and a vertical line through q
	% left endpoint of Lambda_i
	\coordinate (lambdail) at ($ (lambdaih-|lambdatl) + (1, 0) $);
	% right endpoint of Lambda_i
	\coordinate (lambdair) at ($ (lambdaih-|lambdatr) - (1, 0) $);
	
	% left endpoint of Lambda_j
	\coordinate (lambdajl) at ($ (lambdajh-|lambdatl) + (1, 0) $);
	% right endpoint of Lambda_j
	\coordinate (lambdajr) at ($ (lambdajh-|lambdatr) - (1, 0) $);
	
	% v_i and v_j points
	\coordinate (vipoint) at ($ (lambdail)!0.1!(lambdair) $);
	\coordinate (vjpoint) at ($ (lambdajl)!0.1!(lambdajr) $);
	
	% dashed line through v_i and v_j
	\coordinate (orthlinetop) at ($ (lambdatl-|vipoint) $);
	\coordinate (orthlinebot) at ($ (lambdabl-|vjpoint) $);
	
	% rectangles illustrating that the dashed line through v_i and v_j is orthogonal
	% orthogonal rectangles corresponding to Lambda_i
	\coordinate (orthlinerecttb) at ($ (vipoint)!\orthrectsidesize!(orthlinebot) $); % top bottom
	\coordinate (orthlinerecttl) at ($ (vipoint)!\orthrectsidesize!(lambdail) $); % top left
	\coordinate (orthlinerecttc) at ($ (orthlinerecttl) + (orthlinerecttb) - (vipoint) $);
	% orthogonal rectangles corresponding to Lambda_j
	\coordinate (orthlinerectbb) at ($ (vjpoint)!\orthrectsidesize!(orthlinebot) $); % bottom bottom
	\coordinate (orthlinerectbl) at ($ (vjpoint)!\orthrectsidesize!(lambdajl) $); % bottom left
	\coordinate (orthlinerectbc) at ($ (orthlinerectbl) + (orthlinerectbb) - (vjpoint) $);
	
	% u_i and u_j points
	\coordinate (uipoint) at ($ (lambdail)!0.3!(lambdair) $);
	\coordinate (ujpoint) at ($ (lambdajl)!0.6!(lambdajr) $);
	
	% true region around u_i
	\coordinate (uitruel) at ($ (uipoint) - (0.5, 0) $);
	\coordinate (uitruer) at ($ (uipoint) + (1, 0) $);
	
	% true region around u_j
	\coordinate (ujtruel) at ($ (ujpoint) - (1.5, 0) $);
	\coordinate (ujtruer) at ($ (ujpoint) + (0.75, 0) $);

	% ====================[ Actual Drawing ]====================
	
	% rectangle representing Lambda
	\filldraw[thick, black, fill=gray, fill opacity=0.2] (lambdabl) -- (lambdatl) -- (lambdatr) -- (lambdabr) -- cycle;
	\draw [decorate, decoration = {brace, raise=5pt, amplitude=5pt, mirror}] (lambdabl) --  (lambdabr) node[pos=0.5,below=10pt,black]{$ \Lambda $};
	% \node[anchor=north] at ($ (lambdabl)!0.5!(lambdabr) $) {$ \Lambda $};
	
	% line representing Lambda_i
	\draw[thick] (lambdail) -- (lambdair);
	\node[anchor=west] at (lambdair) {$ \Lambda_i^= $};
	\node[anchor=south east] at (lambdair) {$ v_i + W_i $};
	% line representing Lambda_j
	\draw[thick] (lambdajl) -- (lambdajr);
	\node[anchor=west] at (lambdajr) {$ \Lambda_j^= $};
	\node[anchor=north east] at (lambdajr) {$ v_j + W_j $};
	
	% dashed line through v_i and v_j
	\draw[thick, dashed] (orthlinetop) -- (orthlinebot);
	
	% v_i point
	\fill[black] (vipoint) circle (0.075);
	\node[anchor=north west] at (vipoint) {$ v_i $};
	
	% v_j point
	\fill[black] (vjpoint) circle (0.075);
	\node[anchor=north west] at (vjpoint) {$ v_j $};
	
	% orthogonal rectangles
	\draw (orthlinerecttb) -- (orthlinerecttc) -- (orthlinerecttl);
	\draw (orthlinerectbb) -- (orthlinerectbc) -- (orthlinerectbl);
	
	% true region around u_i
	\draw[very thick, OliveGreen] (uitruel) -- (uitruer);
	\draw [decorate, decoration = {brace, raise=5pt, amplitude=5pt}] (uitruel) -- (uitruer) node[pos=0.5,above=10pt,black]{$ \Lambda_i $};
	
	% true region around u_j
	\draw[very thick, OliveGreen] (ujtruel) -- (ujtruer);
	\draw [decorate, decoration = {brace, raise=5pt, amplitude=5pt, mirror}] (ujtruel) -- (ujtruer) node[pos=0.5,below=10pt,black]{$ \Lambda_j $};
	
	% red line between u_i and u_j
	\path[thick, red] (uipoint) edge node [sloped, above, midway] {$ L $} (ujpoint);
	
	% u_i point
	\fill[OliveGreen] (uipoint) circle (0.075);
	\node[anchor=north east] at (uipoint) {$ u_i $};
	
	% u_j point
	\fill[OliveGreen] (ujpoint) circle (0.075);
	\node[anchor=south west] at (ujpoint) {$ u_j $};
	
\end{tikzpicture}
	\caption{Illustration of the geometric intuition behind the assumptions leading to a contradiction, for the special case when $ h = \id $ is the identity mapping. Horizontal black lines depict the affine vectorspaces $ v_i + W_i $ and $ v_j + W_j $ corresponding to solutions of $ \Lambda_i^= $ and $ \Lambda_j^= $, respectively. The dashed line orthogonal to both $ v_i + W_i $ and $ v_j + W_j $ illustrates the orthogonality assumption $ v_i \in W_i^\bot $, $ v_j \in W_j^\bot $ of Claim~\ref{claim:wlog_hvprime_i_in_hW_i_orth}.}
	\label{fig:same_lindep_union_contradiction}
\end{figure}

Suppose, for the sake of contradiction, that $ h(v_i) \neq h(v_j) $ holds for some $ i, j \in \{1, \dots, k\} $. Let $ u_i \models \Lambda_i $, $ u_j \models \Lambda_j $ and consider the line $ L := (u_i, u_j) $ between $ u_i $ and $ u_j $ (see Figure~\ref{fig:same_lindep_union_contradiction} for a visualization). Without loss of generality, we assume that $ u_i \neq u_j $ and consequently $ \left|L\right| = \infty $. If this is not the case, we simply pick another pair of models. If this is impossible, then it follows, in particular, that $ \Lambda_i \equiv \Lambda_j $, but it can be assumed, without loss of generality, that all such redundant disjuncts have already been removed from $ \{\Lambda_1, \dots, \Lambda_k\} $.

Intuitively, we now show that line $ L $ agrees with every space of solutions to $ \Lambda_l^= $ on at most one point. Then, we observe that for $ \Lambda $ to be correct, the set of its solutions must contain $ L $. Since $ \Lambda $ is expressible as a finite disjunction $ \Lambda \equiv \bigvee_{i = 1}^k \Lambda_i $, whereas every $ \Lambda_i $ can agree with $ L \subseteq \ModelsOf(\Lambda) $ on at most one element, we obtain a contradiction. Speaking in terms of the geometric intuition of Figure~\ref{fig:same_lindep_union_contradiction}, this argumentation rigorously proves that in order for $ \ModelsOf(\Lambda) $ to contain $ L $, we need infinitely many disjuncts $ \Lambda_l $, because every one of them can only define a space which is parallel to $ \ModelsOf(\Lambda_i^=) $ and $ \ModelsOf(\Lambda_j^=) $. We now make these vague intuitions precise.

\begin{zbclaim}
	\label{claim:vlprime_plus_Wl_int_L_at_most_one_elem}
	For every $ l \in \{1, \dots, k\} $ it holds that \[
		\left|(v_l + W_l) \cap L\right| \le 1
	\]
\end{zbclaim}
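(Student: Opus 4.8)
The plan is to prove Claim~\ref{claim:vlprime_plus_Wl_int_L_at_most_one_elem} by contradiction. Suppose $(v_l + W_l) \cap L$ contained two distinct points $p \neq q$. Since $L = (u_i, u_j)$ is a subset of the one-dimensional affine line $\ell$ through the distinct points $u_i$ and $u_j$, both $p$ and $q$ lie on $\ell$, so the affine span of $\{p, q\}$ is exactly $\ell$. As $v_l + W_l$ is an affine subspace containing $p$ and $q$, it contains this affine span, hence all of $\ell$ (an affine subspace that contains two distinct points of a line contains the entire line). In particular $u_i \in v_l + W_l$ and $u_j \in v_l + W_l$.

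Next I would transport the situation through $h$. Recall $u_i \models \Lambda_i$ gives $u_i \in \ModelsOf(\Lambda_i^=) = v_i + W_i$, and likewise $u_j \in v_j + W_j$; combined with the previous paragraph this yields $h(u_i) \in \bigl(h(v_l) + h(W_l)\bigr) \cap \bigl(h(v_i) + h(W_i)\bigr)$ and $h(u_j) \in \bigl(h(v_l) + h(W_l)\bigr) \cap \bigl(h(v_j) + h(W_j)\bigr)$. By Claim~\ref{claim:h_w1_eq_h_w2_eq_dots_eq_h_wk} we have $h(W_l) = h(W_i) = h(W_j)$, so each of the four translates above is a coset of the \emph{same} subspace $h(W_i)$ of $V$; two cosets sharing a point coincide, so $h(v_l) - h(v_i) \in h(W_i)$ and $h(v_l) - h(v_j) \in h(W_i)$, and subtracting gives $h(v_i) - h(v_j) \in h(W_i)$. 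On the other hand, Claim~\ref{claim:wlog_hvprime_i_in_hW_i_orth} applied to $i$ and to $j$ (using again $h(W_j)^\bot = h(W_i)^\bot$) gives $h(v_i), h(v_j) \in h(W_i)^\bot$, hence $h(v_i) - h(v_j) \in h(W_i)^\bot$. Therefore $h(v_i) - h(v_j) \in h(W_i) \cap h(W_i)^\bot = \gen{0}$, i.e. $h(v_i) = h(v_j)$, contradicting the standing assumption $h(v_i) \neq h(v_j)$ under which the ``$\supseteq$'' direction of Theorem~\ref{thm:union_of_same_lindep_has_that_lindep} is being argued. This contradiction establishes $\left|(v_l + W_l) \cap L\right| \le 1$.

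With Claim~\ref{claim:vlprime_plus_Wl_int_L_at_most_one_elem} available, closing out the ``$\supseteq$'' direction is routine: $\ModelsOf(\Lambda) = \bigcup_{l=1}^k \ModelsOf(\Lambda_l) \subseteq \bigcup_{l=1}^k (v_l + W_l)$, while $L \subseteq \ModelsOf(\Lambda)$ by convexity of $\ModelsOf(\Lambda)$ together with $u_i, u_j \models \Lambda$, so the infinite set $L$ is covered by the $k$ sets $v_l + W_l$, each meeting it in at most one point, whence $\left|L\right| \le k < \infty$ — impossible. I expect the only subtle points to be (a) the clean geometric observation forcing the endpoints $u_i, u_j$ into $v_l + W_l$ once the intersection has two points, and (b) keeping track of the fact that the orthogonality normalization of Claim~\ref{claim:wlog_hvprime_i_in_hW_i_orth} lives in the image space $V$, so the whole argument must be pushed through $h$ and can only be completed after invoking $h(W_1) = \dots = h(W_k)$ to turn the relevant affine translates into cosets of one fixed subspace; the rest is bookkeeping with cosets and orthogonal complements.
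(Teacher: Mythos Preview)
Your proof is correct and reaches the same contradiction as the paper, namely $h(v_i)-h(v_j)\in h(W_l)\cap h(W_l)^\bot=\gen{0}$. The route is a mild variant: the paper parameterizes the two intersection points as $v_l+w_1=\lambda_1 u_i+(1-\lambda_1)u_j$ and $v_l+w_2=\lambda_2 u_i+(1-\lambda_2)u_j$, subtracts to get $0\neq w_2-w_1=(\lambda_2-\lambda_1)(u_i-u_j)\in W_l$, and then pushes through $h$ to obtain $h(v_i-v_j)\in h(W_l)$ directly. You instead use the cleaner geometric observation that an affine subspace containing two points of a line contains the whole line, hence $u_i,u_j\in v_l+W_l$, and then argue via coincidence of cosets of the single subspace $h(W_l)=h(W_i)=h(W_j)$. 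Both arguments are equivalent in content; your packaging avoids the explicit convex-combination bookkeeping at the cost of the extra (but standard) affine-closure step, and your concluding paragraph matches the paper's wrap-up exactly.
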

\begin{proof}
	Suppose, for the sake of contradiction, that there exist distinct $ w_1, w_2 \in W_l $ for some $ l $, such that $ v_l + w_1 \in L $ and $ v_l + w_2 \in L $. Then there exist distinct $ 0 < \lambda_1, \lambda_2 < 1 $ such that \begin{align}
		\label{eqn:vprime_l_plus_w1}
		v_l + w_1 &= \lambda_1 \cdot u_i + (1 - \lambda_1) \cdot u_j \\
		\label{eqn:vprime_l_plus_w2}
		v_l + w_2 &= \lambda_2 \cdot u_i + (1 - \lambda_2) \cdot u_j
	\end{align} Subtracting (\ref{eqn:vprime_l_plus_w1}) from (\ref{eqn:vprime_l_plus_w2}) yields \[
	0 \neq w_2 - w_1 = (\lambda_2 - \lambda_1) \cdot u_i + (\lambda_1 - \lambda_2) \cdot u_j = (\lambda_2 - \lambda_1) \cdot (u_i - u_j)
	\] Note that $ u_i \in v_i + W_i $ and $ u_j \in v_j + W_j $. Hence, by Claim~\ref{claim:h_w1_eq_h_w2_eq_dots_eq_h_wk}, \begin{align*}
		h(W_l) &\ni h(w_2 - w_1) \\
		&= h((\lambda_2 - \lambda_1) \cdot (u_i - u_j)) \\
		&= (\lambda_2 - \lambda_1) \cdot (h(u_i) - h(u_j)) \\
		&\in (\lambda_2 - \lambda_1) \cdot \big(h(v_i + W_i) - h(v_j + W_j)\big) \\
		&= (\lambda_2 - \lambda_1) \cdot \big(h(v_i - v_j) + h(W_i) - h(W_j)\big) \\
		&\overset{\ref{claim:h_w1_eq_h_w2_eq_dots_eq_h_wk}}{=} (\lambda_2 - \lambda_1) \cdot \big(h(v_i - v_j) + h(W_l)\big) \\
		&= \big((\lambda_2 - \lambda_1) \cdot h(v_i - v_j)\big) + h(W_l)
	\end{align*} To put it into words, we have shown that $ h(W_l) $ must agree with $ ((\lambda_2 - \lambda_1) \cdot h(v_i - v_j)) + h(W_l) $ on some element. This implies \[
	\underbrace{(\lambda_2 - \lambda_1)}_{\neq 0} \cdot h(v_i - v_j) \in h(W_l)
	\] and hence $ h(v_i - v_j) \in h(W_l) $. In terms of subspaces, we have \begin{align}
		\label{eqn:h_viprime_minus_vjprime_subspace_hWl}
		\gen{0} \neq \gen{h(v_i - v_j)} \le h(W_l)
	\end{align} However, by Claims \ref{claim:wlog_hvprime_i_in_hW_i_orth} and \ref{claim:h_w1_eq_h_w2_eq_dots_eq_h_wk}, \[
	h(v_i - v_j) = h(v_i) - h(v_j) \overset{\ref{claim:wlog_hvprime_i_in_hW_i_orth}}{\in} h(W_i)^\bot - h(W_j)^\bot \overset{\ref{claim:h_w1_eq_h_w2_eq_dots_eq_h_wk}}{=} h(W_l)^\bot
	\] This implies $ \gen{0} \neq \gen{h(v_i - v_j)} \le h(W_l)^\bot $, which is a contradiction to (\ref{eqn:h_viprime_minus_vjprime_subspace_hWl}).
\end{proof}

\noindent Since $ \ModelsOf(\Lambda) $ is a convex set containing $ u_i $ and $ u_j $, $ L \subseteq \ModelsOf(\Lambda) $. Hence, \begin{align}
	\label{eqn:L_subsetof_bounded_set}
	L \subseteq L \cap \ModelsOf(\Lambda) \subseteq L \cap \bigcup_{i=1}^k (v_i + W_i) = \bigcup_{i=1}^k \big((v_i + W_i) \cap L\big)
\end{align} Combining (\ref{eqn:L_subsetof_bounded_set}) with Claim~\ref{claim:vlprime_plus_Wl_int_L_at_most_one_elem} yields a contradiction \[
\infty = |L| \overset{(\ref{eqn:L_subsetof_bounded_set})}{\le} \Bigg|\bigcup_{i=1}^k \big((v_i + W_i) \cap L\big)\Bigg| \overset{\ref{claim:vlprime_plus_Wl_int_L_at_most_one_elem}}{\le} k \in \N
\] This completes the proof of the Uniform union Theorem~\ref{thm:union_of_same_lindep_has_that_lindep}.

\subsection{Proof of Lemma~\ref{lemma:adding_neq_predicates_cannot_make_pi_complex_set_pi_simple}}
\label{sec:app:lindep_facts:proof_adding_neq_predicates_cannot_make_pi_complex_set_pi_simple}

It suffices to prove \begin{align}
	\label{eqn:fixed_gamma_eq_fixed_gamma_union_p}
	\Fixes(\Gamma) = \Fixes(\Gamma \cup \{p\})
\end{align} because then $ \Gamma $ being $ \Pi $-complex implies that $ \Gamma \cup \{p\} $ is also $ \Pi $-complex. Let $ z_i $ denote the variable corresponding to the $ i $-th component in $ \Q^n $ and let $ e_i $ be the $ i $-th canonical basis vector. Fix some $ v \models \Gamma \cup \{p\} $. Since \begin{align*}
	\gen{e_i} \in \LinDep_{\id}(\Gamma) &\Leftrightarrow \ModelsOf(\Gamma) - v \subseteq \gen{e_i}^\bot \\
	&\Leftrightarrow \forall u \models \Gamma : (u - v) \in \gen{e_i}^\bot \\
	&\Leftrightarrow \forall u \models \Gamma : (u - v) \cdot e_i = 0 \\
	&\Leftrightarrow \forall u \models \Gamma : u \cdot e_i = v \cdot e_i \\
	&\Leftrightarrow z_i \in \Fixes(\Gamma)
\end{align*} and, analogously, $ \gen{e_i} \in \LinDep_{\id}(\Gamma \cup \{p\}) $ is equivalent to $ z_i \in \Fixes(\Gamma \cup \{p\}) $, in order to show (\ref{eqn:fixed_gamma_eq_fixed_gamma_union_p}), it suffices to prove $ \LinDep_{\id}(\Gamma) = \LinDep_{\id}(\Gamma \cup \{p\}) $. Indeed, this follows from Theorem~\ref{thm:only_equality_predicates_can_establish_lindep} because $ \Gamma^= = (\Gamma \cup \{p\})^= $. This completes the proof of Lemma~\ref{lemma:adding_neq_predicates_cannot_make_pi_complex_set_pi_simple}.

\section{Analysis of the covering algorithm}
\label{sec:app:cover_analysis}

\subsection{Recursion depth and termination}

We start by showing an upper bound for the recursion depth, which, as already discussed in Section~\ref{sec:vardec:cover}, implies that the covering algorithm terminates and runs in double-exponential time.

\begin{lemma}[Recursion depth and termination]
	\label{lemma:cover_equifix_rec_depth_bounded_by_n}
	The maximum recursion depth of the covering algorithm is bounded by the number of free variables appearing in $ \varphi $. In particular, it follows that the covering algorithm terminates.
\end{lemma}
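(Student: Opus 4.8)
The plan is to establish the recursion-depth bound by a potential-function argument keyed on the $Z$-dependencies of the predicate set passed to each recursive invocation. Concretely, I would track, for a given binary partition $\Pi = \{X, Y\}$ and a satisfiable predicate set $\Gamma$, the pair of vector spaces $\ImageOf(\pi_X \circ \lc_A)^{\bot}$ and $\ImageOf(\pi_Y \circ \lc_A)^{\bot}$, where $A$ is a basis with $\ModelsOf(\Gamma^=) = a + \gen{a_1, \dots, a_l}$ (as computed at Line~\ref{alg:cover:line:gamma_lindep_repr}). By Theorem~\ref{thm:lindep_algebraic_char}, these two spaces are exactly the suprema of $\LinDep_{\pi_X}(\Gamma)$ and $\LinDep_{\pi_Y}(\Gamma)$ in the respective subspace lattices, so they faithfully encode the set of $X$- and $Y$-dependencies of $\Gamma$. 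The key claim is that whenever $\cover(\Pi, \Gamma)$ makes a recursive call $\cover(\Pi, \Gamma')$ at Line~\ref{alg:cover:line:second_loop:rec_call} with $\Gamma' = \Gamma \cup \{\pi_Z(\vec z) \cdot w = \pi_Z(b) \cdot w\}$, the space $\ImageOf(\pi_Z \circ \lc_{A'})^{\bot}$ for $\Gamma'$ strictly contains the corresponding space for $\Gamma$ (for that particular $Z$), while the space for the other block does not shrink.

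First I would verify this strict-containment claim. The vector $w$ chosen at Line~\ref{alg:cover:line:second_loop:witness_vec} satisfies $0 \neq w \in \ImageOf(\pi_Z \circ \lc_B)^{\bot} \cap \ImageOf(\pi_Z \circ \lc_A)$, where $B$ is a basis for $\ModelsOf(\Omega^=)$. The new equality predicate $\pi_Z(\vec z) \cdot w = \pi_Z(b) \cdot w$ enforces, over $\Gamma'$, that $w$ lies in the $Z$-dependency space of $\Gamma'$; since $w \in \ImageOf(\pi_Z \circ \lc_A)$ and $\ImageOf(\pi_Z \circ \lc_A) = \big(\ImageOf(\pi_Z \circ \lc_A)^{\bot}\big)^{\bot}$, the vector $w$ is by construction \emph{not} in $\ImageOf(\pi_Z \circ \lc_A)^{\bot}$ (unless $w = 0$, which is excluded). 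Hence $w$ is a genuinely new element of the $Z$-dependency space, giving strict growth; and adding an equality constraint can never remove existing dependencies, so the other block's space is non-decreasing. This is essentially the content already sketched in Section~\ref{sec:vardec:cover:termination_time_complexity}, but I would make the algebra precise here, appealing to the characterizations $(\ref{eqn:im_pizlca_orth_eq_ker_piz_a1al_transp})$ and $(\ref{eqn:im_pizlcb_orth_eq_ker_piz_b1br_transp})$ and to the fact that $\ModelsOf(\Gamma'^=) \subseteq \ModelsOf(\Gamma^=)$ yields $\ImageOf(\pi_Z \circ \lc_{A'})^{\bot} \supseteq \ImageOf(\pi_Z \circ \lc_A)^{\bot}$.

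Next I would assemble the chain argument. Along any root-to-leaf path of recursive calls, each step strictly increases the dimension of either $\ImageOf(\pi_X \circ \lc_A)^{\bot} \le \Q^{|X|}$ or $\ImageOf(\pi_Y \circ \lc_A)^{\bot} \le \Q^{|Y|}$ by at least one, and never decreases either. Since these dimensions are bounded by $|X|$ and $|Y|$ respectively and are monotone nondecreasing along the path, the total number of recursive calls on the path is at most $|X| + |Y| = n$. This uses exactly that $\Q^{|X|}$ and $\Q^{|Y|}$ are finite-dimensional and hence Noetherian as $\Q$-modules, ruling out infinite ascending chains. Termination of the whole algorithm then follows because the recursion tree has finite depth ($\le n$) and finite branching (the loop at Line~\ref{alg:cover:line:second_loop} ranges over the finite set $\DisjOf{\Theta}$), and all non-recursive steps — Gaussian elimination, kernel and intersection computations, satisfiability checks of linear programs — manifestly terminate.

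The main obstacle I anticipate is making the ``other block does not shrink'' and ``strict growth for block $Z$'' statements fully rigorous at the level of the bases actually computed inside the algorithm, since $A$ is recomputed from scratch at the top of each call rather than being incrementally updated; I must therefore argue purely semantically, in terms of $\ModelsOf(\Gamma^=)$ and the images of the projection-composed linear-combination maps, that passing from $\Gamma$ to $\Gamma \cup \{\text{new equality}\}$ behaves as claimed, and then invoke Theorem~\ref{thm:lindep_algebraic_char} to translate this back into a statement about $\LinDep_{\pi_Z}$. A secondary subtlety is confirming that the recursive call is reached only when $\Omega$ genuinely has more $Z$-dependencies than $\Gamma$ — i.e., that the condition at Line~\ref{alg:cover:line:second_loop:main_if} correctly detects this — but this is already justified by the discussion following Theorem~\ref{thm:lindep_algebraic_char} in the body of the paper, so I would cite it rather than reprove it.
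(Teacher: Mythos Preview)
Your proposal is correct and follows essentially the same approach as the paper's proof: both arguments show that the recursive call at Line~\ref{alg:cover:line:second_loop:rec_call} strictly enlarges the supremum of $\LinDep_{\pi_Z}(\Gamma)$ for the chosen $Z$ (using that $0 \neq w \in \ImageOf(\pi_Z \circ \lc_A)$ forces $w \notin \ImageOf(\pi_Z \circ \lc_A)^\bot$, hence $\gen{w}$ is a new $Z$-dependency), while $\Gamma' \models \Gamma$ guarantees monotonicity for both blocks, and then bound the total depth by $|X| + |Y| = n$ via the ascending chain condition in $\Q^{|X|}$ and $\Q^{|Y|}$. The paper carries out the strict-growth step through the $\LinDep_{\pi_Z,v}$ formalism, invoking Lemmas~\ref{lemma:lindep_gamma_v1_equals_lindep_gamma_v2} and~\ref{lemma:arbitrary_offset_less_lindep_compared_to_good_offset} to manage the offset (since $b$ need not be a model of $\Gamma'$), whereas you work directly with the orthogonal complements $\ImageOf(\pi_Z \circ \lc_A)^\bot$ and appeal to $\ModelsOf(\Gamma'^=) \subseteq \ModelsOf(\Gamma^=)$; via Theorem~\ref{thm:lindep_algebraic_char} these are equivalent routes to the same inequality.
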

\begin{proof}
	Consider the situation when during the computation of $ \cover(\Pi, \Gamma) $, the algorithm makes a recursive call $ \cover(\Pi, \Gamma') $ at Line~\ref{alg:cover:line:second_loop:rec_call} with \[
	\Gamma' := \Gamma \cup \{(\pi_Z(\vec{z}) \cdot w = \pi_Z(b) \cdot w\}
	\] If $ \Gamma' $ is unsatisfiable, then $ \Gamma' $ is $ \Pi $-simple and hence the recursive call would immediately return $ \bot $. Thus, for the analysis of the recursion depth, we assume that $ \Gamma' $ is satisfiable and fix some model $ v \models \Gamma' $. Observe that \[
	\gen{w} \in \LinDep_{\pi_Z, b}(\Gamma')
	\] holds because $ \Gamma' \models \pi_Z(\vec{z} - b) \cdot w = 0 $. Since $ \ImageOf(\pi_Z \circ \lc_A) \cap \ImageOf(\pi_Z \circ \lc_A)^\bot = \gen{0} $ and $ \gen{0} \neq \gen{w} \le \ImageOf(\pi_Z \circ \lc_A) $, it follows that $ \gen{w} \not\le \ImageOf(\pi_Z \circ \lc_A)^\bot $. By Theorem~\ref{thm:lindep_algebraic_char}, this is equivalent to $ \gen{w} \notin \LinDep_{\pi_Z, a}(\Gamma) $. Next, observe that $ \Gamma' \models \Gamma $ implies \begin{align}
		\label{eqn:lindep_gamma_v_subset_lindep_gammaprime_v}
		\LinDep_{\pi_Z, v}(\Gamma) \subseteq \LinDep_{\pi_Z, v}(\Gamma')
	\end{align} We now apply Lemmas \ref{lemma:lindep_gamma_v1_equals_lindep_gamma_v2} and \ref{lemma:arbitrary_offset_less_lindep_compared_to_good_offset} to establish \begin{center}
		\begin{tabular}{c c c c c}
			$ \LinDep_{\pi_Z, a}(\Gamma) $ & $ \overset{\ref{lemma:lindep_gamma_v1_equals_lindep_gamma_v2}}{=} \LinDep_{\pi_Z, v}(\Gamma) \overset{(\ref{eqn:lindep_gamma_v_subset_lindep_gammaprime_v})}{\subseteq} $ & $ \LinDep_{\pi_Z, v}(\Gamma') $ & $ \overset{\ref{lemma:arbitrary_offset_less_lindep_compared_to_good_offset}}{\supseteq} $ & $ \LinDep_{\pi_Z, b}(\Gamma') $ \\
			\rotatebox[origin=c]{90}{$ \notin $} && \rotatebox[origin=c]{90}{$ \in $} && \rotatebox[origin=c]{90}{$ \in $} \\
			$ \gen{w} $ && $ \gen{w} $ && $ \gen{w} $
		\end{tabular}
	\end{center} Hence, $ \LinDep_{\pi_Z, v}(\Gamma) \subsetneq \LinDep_{\pi_Z, v}(\Gamma') $ because these sets disagree on $ \gen{w} $. By Theorem~\ref{thm:lindep_algebraic_char}, there exist unique subspaces $ V_Z, V'_Z $ of $ \Q^{\left|Z\right|} $ such that \begin{center}
		\begin{tabular}{c c c}
			$ \LinDep_{\pi_Z, v}(\Gamma) $ & $ \subsetneq $ & $ \LinDep_{\pi_Z, v}(\Gamma') $ \\
			\rotatebox[origin=c]{90}{$ \overset{\ref{thm:lindep_algebraic_char}}{=} $} && \rotatebox[origin=c]{90}{$ \overset{\ref{thm:lindep_algebraic_char}}{=} $} \\
			$ \{U \mid U \le V_Z\} $ & $ \subsetneq $ & $ \{U' \mid U' \le V'_Z\} $
		\end{tabular}
	\end{center} Hence, $ V_Z \lneq V'_Z $ and consequently $ \dim{V_Z} < \dim{V'_Z} $. This shows that for every recursive call capable of causing further nested recursion, the maximum dimension of a $ Z $-dependency of $ \Gamma $, namely $ \dim{V_Z} $, is guaranteed to increase for at least one $ Z \in \Pi $ and become $ \dim{V'_Z} > \dim{V_Z} $. Note that this argument implicitly relies on the fact that, by Lemma~\ref{lemma:lindep_gamma_v1_equals_lindep_gamma_v2}, the set of linear dependencies remains the same regardless of the offset $ v \models \Gamma' $ we pick. Clearly, we have an upper bound $ \dim{V'_Z} \le |Z| $ simply because $ V'_Z \le \Q^{\left|Z\right|} $. Hence, after at most $ n = \left|X\right| + \left|Y\right| $ nested recursive calls, we are guaranteed to arrive at a $ \Gamma' $ such that the corresponding space $ V'_Z $ has dimension $ \dim{V'_Z} = |Z| $ for every $ Z \in \Pi $ unless at some point $ \Gamma' $ becomes $ \Pi $-simple whereupon no further nested recursive calls are possible. Suppose we have reached a satisfiable $ \Gamma' $, such that $ V'_Z = \Q^{\left|Z\right|} $ holds for all $ Z \in \Pi $. By definition of $ \LinDep_{\pi_Z, v}(\Gamma') \ni V'_Z $, it follows that for all $ Z \in \Pi $, \[
	\pi_Z(\ModelsOf(\Gamma') - v) \subseteq (V'_Z)^\bot = \big(\Q^{\left|Z\right|}\big)^\bot = \gen{0}
	\] Hence, $ \Gamma' $ fixes every $ Z \in \Pi $, so we conclude that $ \Gamma' $ is $ \Pi $-simple and thus no recursive calls at levels exceeding $ n $ are possible.
\end{proof}

\subsection{Invariants of loops}

Next, we prove two important invariants the covering algorithm maintains.

\begin{lemma}[Invariant of the first loop]
	\label{lemma:gamma_entails_theta}
	Let $ \Pi $ be a partition, $ \Gamma $ be a $ \Pi $-complex predicate set and $ \psi := \cover(\Pi, \Gamma) $. Then $ \Gamma \models \Theta $ is an invariant of the \texttt{foreach} loop at Line~\ref{alg:cover:line:first_loop}.
\end{lemma}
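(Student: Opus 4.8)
Looking at this lemma, I need to prove that $\Gamma \models \Theta$ is maintained as an invariant throughout the execution of the first loop (Line 5) of the covering algorithm.

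The plan is to prove this by tracking how $\Theta$ evolves through the algorithm and showing the entailment is preserved at each modification point. Initially, $\Theta$ is set to $\PiSimp{\Gamma}$ at Line 3, so trivially $\Gamma \models \Theta$ since $\Theta \subseteq \Gamma$ at that point. The only place $\Theta$ changes during the first loop is Line 7, where for each $Z \in \Pi$ and each basis vector $w \in \{v_1, \dots, v_s\}$ of $\ImageOf(\pi_Z \circ \lc_A)^{\bot}$, we add the predicate $\pi_Z(\vec{z}) \cdot w = \pi_Z(a) \cdot w$ to $\Theta$. Thus the key step is to verify that $\Gamma$ entails each such predicate, i.e., $\Gamma \models \pi_Z(\vec{z}) \cdot w = \pi_Z(a) \cdot w$ whenever $w \in \ImageOf(\pi_Z \circ \lc_A)^{\bot}$.

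To establish this entailment, I would argue as follows. Recall that $a$ and $A = (a_1, \dots, a_l)$ were computed at Line 4 so that $\ModelsOf(\Gamma^=) = a + \gen{a_1, \dots, a_l} = a + \ImageOf(\lc_A)$. Take any model $u \models \Gamma$; then $u \models \Gamma^=$, so $u - a \in \ImageOf(\lc_A)$, which gives $\pi_Z(u - a) \in \ImageOf(\pi_Z \circ \lc_A)$. Since $w \in \ImageOf(\pi_Z \circ \lc_A)^{\bot}$ by the construction at Line 6, we have $w \cdot \pi_Z(u - a) = 0$, i.e., $\pi_Z(u) \cdot w = \pi_Z(a) \cdot w$. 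Hence $u$ satisfies the predicate $\pi_Z(\vec{z}) \cdot w = \pi_Z(a) \cdot w$. As this holds for every model $u$ of $\Gamma$, the entailment follows. By induction on the number of predicates added during the loop, the invariant $\Gamma \models \Theta$ is preserved throughout.

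I do not anticipate a serious obstacle here; the argument is essentially a direct unwinding of the definition of the orthogonal complement together with the fact that $\ModelsOf(\Gamma^=)$ is the affine space $a + \ImageOf(\lc_A)$ and that $\Gamma \models \Gamma^=$. The one point requiring a little care is making sure the entailment is stated with respect to all models of $\Gamma$ (not just models of $\Gamma^=$), but since any model of $\Gamma$ is in particular a model of $\Gamma^=$, this is immediate. The inductive bookkeeping over the nested iteration (over $Z \in \Pi$ and over the basis vectors $v_1, \dots, v_s$) is routine once the single-step entailment is in hand.
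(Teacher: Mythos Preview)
Your proposal is correct and follows essentially the same approach as the paper: induction on the loop iterations, with the base case given by $\Theta_0 = \PiSimp{\Gamma} \subseteq \Gamma$ and the inductive step reducing to the entailment $\Gamma \models \pi_Z(\vec{z}-a)\cdot v_j = 0$. The paper establishes this entailment by invoking the $\LinDep$ machinery (Theorem~\ref{thm:lindep_algebraic_char} and closure under subspaces, Lemma~\ref{lemma:lindep_closed_under_addition}) to conclude $\gen{v_j}\in\LinDep_{\pi_Z,a}(\Gamma)$ and then unfolds the definition, whereas you argue directly from $u\models\Gamma\Rightarrow u-a\in\ImageOf(\lc_A)\Rightarrow \pi_Z(u-a)\in\ImageOf(\pi_Z\circ\lc_A)\perp v_j$; this is the same computation with the $\LinDep$ layer stripped away, and is arguably cleaner for this particular lemma.
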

\begin{proof}
	Let $ m $ be the number of iterations the \texttt{foreach} loop at Line~\ref{alg:cover:line:first_loop} makes and let $ \Theta_i $ denote the state of $ \Theta $ after the $ i $-th iteration of the loop, for $ 0 \le i \le m $. We show $ \Gamma \models \Theta_i $ by induction on $ i $.
	
	\textbf{Base case}: For $ i = 0 $, $ \Gamma \models \Theta_0 $ follows from $ \Theta_0 = \PiSimp{\Gamma} \subseteq \Gamma $.
	
	\textbf{Inductive step}: Let $ i \le m $ be fixed. By the induction hypothesis, it suffices to show that for all $ j \in \{1, \dots, s\} $, \begin{align}
		\label{eqn:gamma_entails_pix_minus_pixa_dot_w_eq_zero}
		\Gamma \models \pi_Z(\vec{z} - a) \cdot v_j = 0
	\end{align} By definition, $ \gen{v_1, \dots, v_s} = \ImageOf(\pi_Z \circ \lc_A)^\bot $, so by Theorem~\ref{thm:lindep_algebraic_char} we have $ \gen{v_1, \dots, v_s} \in \LinDep_{\pi_Z, a}(\Gamma) $. In particular, $ \gen{v_j} \in \LinDep_{\pi_Z, a}(\Gamma) $ holds for every $ 1 \le j \le s $ because $ \LinDep_{\pi_Z, a}(\Gamma) $ is closed under taking subspaces (Lemma~\ref{lemma:lindep_closed_under_addition}). Hence, $ \pi_Z(\ModelsOf(\Gamma) - a) \subseteq \gen{v_j}^\bot $ holds for every $ j \in \{1 \dots, s\} $. This immediately implies (\ref{eqn:gamma_entails_pix_minus_pixa_dot_w_eq_zero}).
\end{proof}

\begin{lemma}[Invariant of the second loop]
	\label{lemma:gamma_entails_psi}
	Let $ \Pi $ be a partition, $ \Gamma $ be a $ \Pi $-complex predicate set and $ \psi := \cover(\Pi, \Gamma) $. Then $ \Gamma \models \Delta \vee \Theta \wedge \Upsilon $ is an invariant of the \texttt{foreach} loop at Line~\ref{alg:cover:line:second_loop}. In particular, it follows that $ \Gamma \models \psi $.
\end{lemma}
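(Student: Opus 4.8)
The plan is to prove the invariant by a well-founded induction on the (finite, by Lemma~\ref{lemma:cover_equifix_rec_depth_bounded_by_n}) tree of recursive calls to \cover, and, inside a fixed call, by a nested induction on the number of iterations of the \texttt{foreach} loop at Line~\ref{alg:cover:line:second_loop}. For the base case of the inner induction, observe that when the second loop begins we have $\Delta \equiv \bot$ and $\Upsilon = \varnothing$, hence $\Delta \vee \Theta \wedge \Upsilon \equiv \Theta$, and $\Gamma \models \Theta$ holds by the invariant of the first loop (Lemma~\ref{lemma:gamma_entails_theta}), since $\Theta$ is already fully constructed by this point.

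For the inductive step, consider one iteration processing a disjunct $\Omega \in \DisjOf{\Theta}$. If the test at Line~\ref{alg:cover:line:second_loop:main_if} fails, then $\Delta$ and $\Upsilon$ are unchanged and the invariant is preserved verbatim. Otherwise, write $q := (\pi_Z(\vec z) \cdot w = \pi_Z(b) \cdot w)$ for the block $Z$ and vector $w$ from Line~\ref{alg:cover:line:second_loop:witness_vec}; the iteration extends $\Upsilon$ by $\neg q$ (Line~\ref{alg:cover:line:second_loop:upsilon_upd}) and $\Delta$ by the disjunct $\cover(\Pi, \Gamma \cup \{q\})$ (Line~\ref{alg:cover:line:second_loop:rec_call}). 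Fix $v \models \Gamma$; by the inner induction hypothesis either $v \models \Delta$ — in which case we are done, as $\Delta$ only grew — or $v \models \Theta \wedge \Upsilon$. In the latter case split on $q$: if $v \not\models q$, then $v \models \neg q$, so $v$ satisfies $\Theta$ together with the enlarged $\Upsilon$; if $v \models q$, then $v \models \Gamma \cup \{q\}$, and since the recursive call $\cover(\Pi, \Gamma \cup \{q\})$ either falls under the outer induction hypothesis (when $\Gamma \cup \{q\}$ is $\Pi$-complex, noting it is a strictly deeper node of the recursion tree) or returns $\decsimple(\Pi, \Gamma \cup \{q\}) \equiv \Gamma \cup \{q\}$ (when $\Gamma \cup \{q\}$ is $\Pi$-simple), in both cases $\Gamma \cup \{q\} \models \cover(\Pi, \Gamma \cup \{q\})$, so $v$ satisfies the enlarged $\Delta$. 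Either way the updated invariant holds.

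It then remains to derive $\Gamma \models \psi$ from the invariant at loop termination. The algorithm returns $\psi = \Delta \vee \Theta \wedge \bigvee\{\Upsilon' \in D \mid \Upsilon' \cup \Gamma \text{ satisfiable}\}$ (Line~\ref{alg:cover:line:return}), where $D$ is chosen so that $\Upsilon \equiv \bigvee_{\Upsilon' \in D} \Upsilon'$ (Line~\ref{alg:cover:line:d_set}). Given $v \models \Gamma$, the invariant yields $v \models \Delta$ or $v \models \Theta \wedge \Upsilon$; in the first case $v \models \psi$ immediately, and in the second case choose $\Upsilon' \in D$ with $v \models \Upsilon'$ — since $v$ also satisfies $\Gamma$, the set $\Upsilon' \cup \Gamma$ is satisfiable, so this $\Upsilon'$ is retained in the returned disjunction and $v \models \Theta \wedge \Upsilon' \models \psi$.

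The one genuinely delicate point is the interleaving of the two inductions: the correctness of the compensating recursive call at Line~\ref{alg:cover:line:second_loop:rec_call} is an instance of the very claim being proved on a different input, so the argument must be structured as a well-founded induction over the recursion tree (with the loop-iteration induction sitting inside each call), its well-foundedness supplied by the recursion-depth bound of Lemma~\ref{lemma:cover_equifix_rec_depth_bounded_by_n}. Once this skeleton is in place, the base case and the $q$-case analysis are entirely routine.
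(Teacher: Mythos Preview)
Your proposal is correct and follows essentially the same approach as the paper: a well-founded (structural) induction over recursive calls justified by the recursion-depth bound, with a nested induction on loop iterations whose base case appeals to the first-loop invariant and whose inductive step performs the same $q$ vs.\ $\neg q$ case split. You additionally spell out the derivation of $\Gamma \models \psi$ from the loop invariant at termination (handling the filtering of $D$ at Line~\ref{alg:cover:line:return}), which the paper leaves implicit in its ``in particular'' clause; this extra step is correct and welcome.
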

\begin{proof}
	By Lemma~\ref{lemma:cover_equifix_rec_depth_bounded_by_n}, the covering algorithm terminates and thus makes only finitely many recursive calls. Because of this, it can be assumed that Lemma~\ref{lemma:gamma_entails_psi} holds for any recursive call to $ \cover $.
	
	Let $ m $ be the number of iterations the \texttt{foreach} loop at Line~\ref{alg:cover:line:second_loop} makes and let $ \Delta_i $ and $ \Upsilon_i $ denote the values of the corresponding variables after the $ i $-th iteration of the second loop, for some $ i \in \{0, \dots, m\} $. We prove by induction on $ i $ that $ \Gamma \models \Delta_i \vee \Theta \wedge \Upsilon_i $ holds for all $ i \in \{0, \dots, m\} $.
	
	\textbf{Base case}: If $ i = 0 $, then $ \Gamma \models \Theta \equiv (\Delta_0 \vee \Theta) \wedge \Upsilon_0 $ by Lemma~\ref{lemma:gamma_entails_theta}.
	
	\textbf{Inductive step}: Let $ 0 < i \le m $ be fixed. By the induction hypothesis, we have $ \Gamma \models \Delta_{i-1} \vee \Theta \wedge \Upsilon_{i-1} $. If the condition of the \texttt{if} statement at Line~\ref{alg:cover:line:second_loop:main_if} does not hold, then neither $ \Delta_{i-1} $ nor $ \Upsilon_{i-1} $ get modified, so the invariant is trivially maintained. Otherwise, let \[
		\psi := \cover(\Pi, \Gamma \cup \{\pi_Z(\vec{z}) \cdot w = \pi_Z(b) \cdot w\})
	\] be the result of the recursive call. It suffices to show \begin{align}
		\label{eqn:gamma_entails_delta_psi_theta_neqzero}
		\Gamma \models \underbrace{\Delta_{i-1} \vee \psi \vee \Theta \wedge \Upsilon_{i-1} \wedge \pi_Z(\vec{z}) \cdot w \neq \pi_Z(b) \cdot w}_{\equiv \Delta_i \vee \Theta \wedge \Upsilon_i}
	\end{align} We prove (\ref{eqn:gamma_entails_delta_psi_theta_neqzero}) by showing that an arbitrary $ v \models \Gamma $ must be a model of the right-hand side. We distinguish between the following cases.
	
	\textbf{Case 1.} Suppose $ v \models \pi_Z(\vec{z}) \cdot w = \pi_Z(b) \cdot w $. Then \[
		v \models \Gamma \cup \{\pi_Z(\vec{z}) \cdot w = \pi_Z(b) \cdot w\} \models \psi
	\] by the assumption that Lemma~\ref{lemma:gamma_entails_psi} holds for any recursive call.
	
	\textbf{Case 2.} Suppose $ v \models \pi_Z(\vec{z}) \cdot w \neq \pi_Z(b) \cdot w $. Then \begin{align*}
		v &\models (\Delta_{i-1} \vee \Theta \wedge \Upsilon_{i-1}) \wedge \pi_Z(\vec{z}) \cdot w \neq \pi_Z(b) \cdot w \\
		&\models \Delta_{i-1} \vee \Theta \wedge \Upsilon_{i-1} \wedge \pi_Z(\vec{z}) \cdot w \neq \pi_Z(b) \cdot w
	\end{align*} holds because $ v \models \Gamma \models \Delta_{i-1} \vee \Theta \wedge \Upsilon_{i-1} $.
\end{proof}

\subsection{Properties of the produced covering}

Intuitively, we now show that the first loop of the covering algorithm at Line~\ref{alg:cover:line:first_loop} not only enforces all $ Z $-dependencies of $ \Gamma $ via predicates it adds to $ \Theta $, but also does not enforce any $ Z $-dependency absent in $ \Gamma $. In other words, $ \Theta $ has precisely the same $ Z $-dependencies as $ \Gamma $, for all $ Z \in \Pi $.

\begin{lemma}[$ \Theta $ has same $ Z $-dependencies as $ \Gamma $]
	\label{lemma:lindep_theta_eq_lindep_gamma}
	Let $ \varphi \in \QFLRA $ be a formula, $ \Pi $ be a partition, $ \Gamma \in \Sat(\DisjTrue) $ be a $ \Pi $-complex predicate set and $ \Theta $ be the corresponding set computed in $ \cover(\Pi, \Gamma) $. Then \[
		\LinDep_{\pi_Z}(\Gamma) = \LinDep_{\pi_Z}(\Theta)
	\] holds for all $ Z \in \Pi $.
\end{lemma}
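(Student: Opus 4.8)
The plan is to prove the two inclusions of $\LinDep_{\pi_Z}(\Gamma) = \LinDep_{\pi_Z}(\Theta)$ separately. Write $a$, $A = (a_1,\dots,a_l)$ for the offset and basis computed at Line~\ref{alg:cover:line:gamma_lindep_repr}, so that $\ModelsOf(\Gamma^=) = a + \gen{a_1,\dots,a_l}$, and recall that for each $Z \in \Pi$ the first loop adds to $\Theta$ exactly the predicates $\pi_Z(\vec{z}) \cdot w = \pi_Z(a) \cdot w$ as $w$ ranges over a basis of $\ImageOf(\pi_Z \circ \lc_A)^{\bot} \le \Q^{|Z|}$. Since $\Gamma$ is $\Pi$-complex it is satisfiable, and by Lemma~\ref{lemma:gamma_entails_theta} we have $\Gamma \models \Theta$; in particular $\Theta$ is satisfiable and every $v \models \Gamma$ also satisfies $\Theta$, hence $\Theta^=$. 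Fixing one such $v$, Lemma~\ref{lemma:lindep_gamma_v1_equals_lindep_gamma_v2} lets us compute both $\LinDep_{\pi_Z}(\Gamma) = \LinDep_{\pi_Z, v}(\Gamma)$ and $\LinDep_{\pi_Z}(\Theta) = \LinDep_{\pi_Z, v}(\Theta)$ with respect to this common offset. The inclusion $\LinDep_{\pi_Z}(\Theta) \subseteq \LinDep_{\pi_Z}(\Gamma)$ is then immediate from $\ModelsOf(\Gamma) \subseteq \ModelsOf(\Theta)$ and the definition of $\LinDep_{\pi_Z, v}$.

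For the reverse inclusion $\LinDep_{\pi_Z}(\Gamma) \subseteq \LinDep_{\pi_Z}(\Theta)$, I would first invoke Theorem~\ref{thm:lindep_algebraic_char}, which shows that $\LinDep_{\pi_Z}(\Gamma)$ consists of all subspaces of $U_Z := \ImageOf(\pi_Z \circ \lc_A)^{\bot}$. Because $\LinDep_{\pi_Z}(\Theta)$ is closed under passing to subspaces (Lemma~\ref{lemma:lindep_closed_under_addition}), it suffices to establish the single membership $U_Z \in \LinDep_{\pi_Z}(\Theta)$. Using Theorem~\ref{thm:only_equality_predicates_can_establish_lindep_strong} (applicable since $v \models \Theta^=$), this reduces further to showing $\pi_Z(\ModelsOf(\Theta^=) - v) \subseteq U_Z^{\bot}$; and since the dot product on $\Q^{|Z|}$ is non-degenerate, $U_Z^{\bot} = (\ImageOf(\pi_Z \circ \lc_A)^{\bot})^{\bot} = \ImageOf(\pi_Z \circ \lc_A)$.

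The core verification is then short. Any $u \models \Theta^=$ satisfies every predicate $\pi_Z(\vec{z}) \cdot w = \pi_Z(a) \cdot w$ added by the first loop, i.e. $\pi_Z(u - a) \cdot w = 0$ for every $w$ in the chosen basis of $\ImageOf(\pi_Z \circ \lc_A)^{\bot}$; as those $w$ span $U_Z$, this gives $\pi_Z(u - a) \in U_Z^{\bot} = \ImageOf(\pi_Z \circ \lc_A)$. Likewise $v \models \Gamma^=$ forces $v - a \in \gen{a_1,\dots,a_l}$, hence $\pi_Z(v - a) \in \ImageOf(\pi_Z \circ \lc_A)$. Subtracting, $\pi_Z(u - v) \in \ImageOf(\pi_Z \circ \lc_A) = U_Z^{\bot}$, which is exactly the required containment. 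This yields $U_Z \in \LinDep_{\pi_Z}(\Theta)$ and completes both inclusions, for every $Z \in \Pi$.

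I do not expect a genuine obstacle here: the only points needing care are bookkeeping ones — ensuring that $\LinDep_{\pi_Z}(\Gamma)$, $\LinDep_{\pi_Z}(\Theta)$ and $\LinDep_{\pi_Z}(\Theta^=)$ are all evaluated relative to a single legitimate offset (handled by Lemma~\ref{lemma:lindep_gamma_v1_equals_lindep_gamma_v2} and Theorem~\ref{thm:only_equality_predicates_can_establish_lindep_strong}), and the observation that the predicates emitted by the first loop are engineered precisely so that their equality part pins $\pi_Z(\vec{z} - a)$ into $\ImageOf(\pi_Z \circ \lc_A)$, the double orthogonal complement of $U_Z$. The reduction to the single maximal element $U_Z$, supplied by the algebraic characterization in Theorem~\ref{thm:lindep_algebraic_char}, is what turns an a priori statement about infinitely many subspaces into the one-line computation above.
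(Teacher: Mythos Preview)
Your proof is correct and follows essentially the same route as the paper: both directions rest on $\Gamma \models \Theta$ (Lemma~\ref{lemma:gamma_entails_theta}) and on the fact that the predicates inserted at Line~\ref{alg:cover:line:first_loop:theta_upd} force the maximal element $U_Z = \ImageOf(\pi_Z \circ \lc_A)^{\bot}$ of $\LinDep_{\pi_Z}(\Gamma)$ (via Theorem~\ref{thm:lindep_algebraic_char}) into $\LinDep_{\pi_Z}(\Theta)$, after which closure under subspaces (Lemma~\ref{lemma:lindep_closed_under_addition}) finishes. The only cosmetic difference is in the offset bookkeeping: the paper works relative to $a$ and shows $\gen{v_i} \in \LinDep_{\pi_Z,a}(\Theta)$ for each basis vector, sums them, and then passes from $\LinDep_{\pi_Z,a}(\Theta)$ to $\LinDep_{\pi_Z}(\Theta)$ via Lemma~\ref{lemma:arbitrary_offset_less_lindep_compared_to_good_offset}; you instead fix $v \models \Gamma$ as a common offset for both sides, invoke Theorem~\ref{thm:only_equality_predicates_can_establish_lindep_strong} to pass to $\Theta^=$, and absorb the shift $a \to v$ by the observation $\pi_Z(v-a) \in \ImageOf(\pi_Z \circ \lc_A)$. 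Both choices work and neither buys anything substantial over the other.
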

\begin{proof}
	By Lemma~\ref{lemma:gamma_entails_theta}, $ \Gamma \models \Theta $, so $ \LinDep_{\pi_Z}(\Theta) \subseteq \LinDep_{\pi_Z}(\Gamma) $. For the other inclusion, note that by construction, \[
		\Theta \models \pi_Z(\vec{z} - a) \cdot v_i = 0
	\] holds for every $ i \in \{1, \dots, s\} $, where $ \gen{v_1, \dots, v_s} = \ImageOf(\pi_Z \circ \lc_A)^{\bot} $. Hence, $ \pi_Z(\ModelsOf(\Theta) - a) \subseteq \gen{v_i}^\bot $ and thus $ \gen{v_i} \in \LinDep_{\pi_Z, a}(\Theta) $ is true for all $ i $. Since $ \LinDep_{\pi_Z, a}(\Theta) $ is closed under addition (Lemma~\ref{lemma:lindep_closed_under_addition}), \[
	\ImageOf(\pi_Z \circ \lc_A)^{\bot} = \gen{v_1, \dots, v_s} = \sum_{i=1}^s \gen{v_i} \in \LinDep_{\pi_Z, a}(\Theta)
	\] Hence, by Theorem~\ref{thm:lindep_algebraic_char} and the fact that $ \LinDep_{\pi_Z, a}(\Theta) $ is closed under taking subspaces (Lemma~\ref{lemma:lindep_closed_under_addition}), \begin{align*}
		\LinDep_{\pi_Z}(\Gamma) &\overset{\ref{thm:lindep_algebraic_char}}{=} \{U \le \Q^{|Z|} \mid U \le \ImageOf(\pi_Z \circ \lc_A)^{\bot}\} \\
		&= \{U \le \gen{v_1, \dots, v_s}\} \\
		&\overset{\ref{lemma:lindep_closed_under_addition}}{\subseteq} \LinDep_{\pi_Z, a}(\Theta)
	\end{align*} Finally, note that \[
		\LinDep_{\pi_Z, a}(\Theta) \subseteq \LinDep_{\pi_Z}(\Theta)
	\] holds by Lemma~\ref{lemma:arbitrary_offset_less_lindep_compared_to_good_offset}, so we conclude that $ \LinDep_{\pi_Z}(\Gamma) \subseteq \LinDep_{\pi_Z}(\Theta) $.
\end{proof}

We now rigorously prove Lemma~\ref{lemma:psi_properties} \ref{lemma:psi_properties:b}. For the intuition behind the proof, see the corresponding discussion in Sections \ref{sec:vardec:properties_of_produced_covering} and \ref{sec:vardec:lemma_psi_properties_proof} above.

\begin{lemma}[Lemma~\ref{lemma:psi_properties} \ref{lemma:psi_properties:b}]
	\label{lemma:disjuncts_touching_lambda_have_same_z_dependencies}
	Let $ \varphi \in \QFLRA $ be a formula, $ \Pi $ be a partition, $ \Gamma \in \Sat(\DisjTrue) $ be a $ \Pi $-complex predicate set, $ \psi := \cover(\Pi, \Gamma) $ and $ \Psi $ be the set of predicate sets corresponding to the DNF terms of $ \psi $. Then for any $ \Lambda \in \Psi $ there exists a set $ \Theta $ computed in some (possibly nested) call to $ \cover $ such that for all $ \Gamma_1, \Gamma_2 \in \DisjOf{\Theta} $, the satisfiability of $ \Gamma_1 \wedge \Lambda $ and $ \Gamma_2 \wedge \Lambda $ implies \[
	\LinDep_{\pi_Z}(\Gamma_1) = \LinDep_{\pi_Z}(\Gamma_2)
	\] where $ Z \in \Pi $ is arbitrary.
\end{lemma}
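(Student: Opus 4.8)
The plan is to prove Lemma~\ref{lemma:disjuncts_touching_lambda_have_same_z_dependencies} by structural induction on the (possibly nested) recursive calls to \cover, mirroring the structure of the proof of Lemma~\ref{lemma:psi_properties}~\ref{lemma:psi_properties:a} given in Section~\ref{sec:vardec:lemma_psi_properties_proof}. Fix $ \Lambda \in \Psi $. As in the proof of statement \ref{lemma:psi_properties:a}, if $ \Lambda $ comes from $ \Delta $, i.e., from the covering produced by some recursive call, the statement holds by the induction hypothesis, since $ \Delta $ is built from recursive invocations of \cover whose coverings' DNF terms are exactly the sets $ \Lambda $ one considers there. Thus it suffices to treat the case when $ \Lambda = \Theta \cup \Upsilon' $ for some $ \Upsilon' \in D $ with $ \Upsilon' \cup \Gamma $ satisfiable, where $ \Theta $ and $ \Upsilon' $ are as computed in the top-level call $ \cover(\Pi, \Gamma) $ (see Line~\ref{alg:cover:line:return}). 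This $ \Theta $ will be the witness set required by the lemma.

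Now fix $ \Gamma_1, \Gamma_2 \in \DisjOf{\Theta} $ such that $ \Gamma_1 \wedge \Lambda $ and $ \Gamma_2 \wedge \Lambda $ are both satisfiable, and fix $ Z \in \Pi $. First I would argue that neither $ \Gamma_i $ has more $ Z $-dependencies than $ \Gamma $. Since $ \Gamma_i \in \DisjOf{\Theta} $, we have $ \Gamma_i \models \Theta $, and by Lemma~\ref{lemma:lindep_theta_eq_lindep_gamma} $ \Theta $ has exactly the same $ Z $-dependencies as $ \Gamma $; hence $ \LinDep_{\pi_Z}(\Gamma) = \LinDep_{\pi_Z}(\Theta) \subseteq \LinDep_{\pi_Z}(\Gamma_i) $, so $ \Gamma_i $ cannot have \emph{fewer} $ Z $-dependencies than $ \Gamma $. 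For the reverse inequality, suppose towards a contradiction that $ \Gamma_i $ had strictly more $ Z $-dependencies than $ \Gamma $. Then, by the characterization discussed in Section~\ref{sec:vardec:cover} (the basis of $ \ImageOf(\pi_Z \circ \lc_B)^{\bot} \cap \ImageOf(\pi_Z \circ \lc_A) $ being nonzero, with $ B $ obtained from $ \Gamma_i^= $), the disjunct $ \Gamma_i $ would have been caught by the \texttt{if} at Line~\ref{alg:cover:line:second_loop:main_if} when $ \Omega := \Gamma_i $ was considered in the second loop, and a separating inequality $ \pi_Z(\vec{z}) \cdot w \neq \pi_Z(b) \cdot w $ with $ \Gamma_i \models \pi_Z(\vec{z}) \cdot w = \pi_Z(b) \cdot w $ would have been added to $ \Upsilon $. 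Every $ \Upsilon' \in D $ satisfies $ \Upsilon' \models q^< \vee q^> $ for each such $ q \in \Upsilon $, hence $ \Upsilon' \models \pi_Z(\vec{z}) \cdot w \neq \pi_Z(b) \cdot w $; combined with $ \Gamma_i \models \pi_Z(\vec{z}) \cdot w = \pi_Z(b) \cdot w $ this forces $ \Gamma_i \wedge \Upsilon' $, and a fortiori $ \Gamma_i \wedge \Lambda $, to be unsatisfiable, contradicting the hypothesis. Therefore $ \LinDep_{\pi_Z}(\Gamma_i) = \LinDep_{\pi_Z}(\Gamma) $ for $ i \in \{1, 2\} $, and transitivity gives $ \LinDep_{\pi_Z}(\Gamma_1) = \LinDep_{\pi_Z}(\Gamma_2) $, as desired.

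One technical point I would want to pin down carefully is the bookkeeping that makes the contradiction in the previous paragraph legitimate: I must ensure that the $ \Omega $ considered in the second loop really does range over all disjuncts of $ \Theta $ (so that $ \Gamma_i $, being such a disjunct, is among them), and that the separating inequality synthesized for $ \Omega = \Gamma_i $ is indeed entailed by $ \Gamma_i $. The former is immediate from the \texttt{foreach} at Line~\ref{alg:cover:line:second_loop}; the latter follows from the discussion in the subsubsection ``Synthesis of a separating predicate'', where $ w \in \ImageOf(\pi_Z \circ \lc_B)^{\bot} $ with $ B $ a basis of $ \ModelsOf(\Gamma_i^=) $'s direction space, so $ \Gamma_i \models \pi_Z(\vec{z} - b) \cdot w = 0 $. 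I also need the self-consistency of the argument across the recursion: the induction hypothesis must cover every $ \Lambda $ arising from a \emph{nested} call, but since \cover terminates (Lemma~\ref{lemma:cover_equifix_rec_depth_bounded_by_n}), well-founded induction on the finite recursion tree applies without circularity.

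The main obstacle I anticipate is not any single deep step but rather the precise matching between (a) the ``more $ Z $-dependencies'' condition stated geometrically in terms of $ \LinDep_{\pi_Z} $ and (b) the linear-algebraic test $ \ImageOf(\pi_Z \circ \lc_B)^{\bot} \cap \ImageOf(\pi_Z \circ \lc_A) \neq \gen{0} $ actually executed by the algorithm at Line~\ref{alg:cover:line:second_loop:main_if}. Turning the intuition of Section~\ref{sec:vardec:cover} into a rigorous equivalence requires Theorem~\ref{thm:lindep_algebraic_char} (to identify $ \ImageOf(\pi_Z \circ \lc_A)^{\bot} $ and $ \ImageOf(\pi_Z \circ \lc_B)^{\bot} $ as the suprema of $ \LinDep_{\pi_Z}(\Gamma) $ and $ \LinDep_{\pi_Z}(\Gamma_i) $, respectively) together with the elementary observation that for subspaces $ U, U' \le \Q^{|Z|} $ one has $ U \subsetneq U' $ iff $ (U')^{\bot} \subsetneq U^{\bot} $ iff $ (U')^{\bot} \cap U \neq \gen{0} $. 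Once this dictionary is established, the rest of the argument is the routine case analysis sketched above. I would state and prove the subspace-intersection lemma first, then invoke it uniformly.
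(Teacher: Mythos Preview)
Your proposal is correct and follows essentially the same approach as the paper's proof: structural induction to reduce to the case $\Lambda = \Theta \cup \Upsilon'$, then for each $\Gamma_i \in \DisjOf{\Theta}$ with $\Gamma_i \wedge \Lambda$ satisfiable, show $\LinDep_{\pi_Z}(\Gamma_i) = \LinDep_{\pi_Z}(\Gamma)$ by combining Lemma~\ref{lemma:lindep_theta_eq_lindep_gamma} for the ``$\supseteq$'' direction with the contradiction argument (via the separating inequality added to $\Upsilon$) for ``$\subseteq$''. The paper carries out exactly this argument, using Theorem~\ref{thm:lindep_algebraic_char} to translate the strict inclusion $\LinDep_{\pi_Z}(\Gamma) \subsetneq \LinDep_{\pi_Z}(\Omega)$ into $\ImageOf(\pi_Z \circ \lc_B)^\bot \cap \ImageOf(\pi_Z \circ \lc_A) \neq \gen{0}$ just as you anticipated in your final paragraph.
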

\begin{proof}
	Let $ \Lambda \in \Psi $ be fixed and $ \Theta $ be the corresponding predicate set computed in the call to $ \cover $ that produced $ \Lambda $. By structural induction, we can assume that Lemma~\ref{lemma:disjuncts_touching_lambda_have_same_z_dependencies} holds if $ \Lambda $ comes from $ \Delta $, i.e., from the covering produced by any recursive call. Because of this, it suffices to show that for any $ \Omega \in \DisjOf{\Theta} $, \[
		\LinDep_{\pi_Z}(\Gamma) = \LinDep_{\pi_Z}(\Omega)
	\] holds under the assumption that $ \Omega \wedge \Lambda $ is satisfiable. We start by proving the ``$ \subseteq $'' inclusion. Since $ \Omega \models \Theta $, it follows that $ \LinDep_{\pi_Z}(\Theta) \subseteq \LinDep_{\pi_Z}(\Omega) $. Combining this with Lemma~\ref{lemma:lindep_theta_eq_lindep_gamma} yields \[
		\LinDep_{\pi_Z}(\Gamma) \overset{\ref{lemma:lindep_theta_eq_lindep_gamma}}{=} \LinDep_{\pi_Z}(\Theta) \subseteq \LinDep_{\pi_Z}(\Omega)
	\] as desired. To prove the remaining ``$ \supseteq $'' inclusion, we assume that \begin{align}
		\label{eqn:lindep_gamma_strict_subset_lindep_gammaprime}
		\LinDep_{\pi_Z}(\Gamma) \subsetneq \LinDep_{\pi_Z}(\Omega)
	\end{align} and derive a contradiction. Let $ b \in \Q^n $ be a vector and $ B := (b_1, \dots, b_r) $ be a basis such that $ \ModelsOf(\Omega^=) = b + \gen{b_1, \dots, b_r} $. Applying Theorem~\ref{thm:lindep_algebraic_char} to (\ref{eqn:lindep_gamma_strict_subset_lindep_gammaprime}) yields: \begin{center}
		\begin{tabular}{c c c}
			$ \LinDep_{\pi_Z}(\Gamma) $ & $ \overset{(\ref{eqn:lindep_gamma_strict_subset_lindep_gammaprime})}{\subsetneq} $ & $ \LinDep_{\pi_Z}(\Omega) $ \\
			\rotatebox[origin=c]{90}{$ \overset{\ref{thm:lindep_algebraic_char}}{=} $} && \rotatebox[origin=c]{90}{$ \overset{\ref{thm:lindep_algebraic_char}}{=} $} \\
			$ \{U \le \Q^{|Z|} \mid U \le \ImageOf(\pi_Z \circ \lc_A)^\bot\} $ & $ \subsetneq $ & $ \{U \le \Q^{|Z|} \mid U \le \ImageOf(\pi_Z \circ \lc_B)^\bot\} $
		\end{tabular}
	\end{center} Hence, $ \ImageOf(\pi_Z \circ \lc_A)^\bot \lneq \ImageOf(\pi_Z \circ \lc_B)^\bot $ and thus $ \ImageOf(\pi_Z \circ \lc_B) \lneq \ImageOf(\pi_Z \circ \lc_A) $. This implies \[
	\ImageOf(\pi_Z \circ \lc_B)^\bot \cap \ImageOf(\pi_Z \circ \lc_A) \neq \gen{0}
	\] It follows that the covering algorithm must have found $ \Omega \in \DisjOf{\Theta} $, computed a vector \[
	0 \neq w \in \ImageOf(\pi_Z \circ \lc_B)^\bot \cap \ImageOf(\pi_Z \circ \lc_A) \neq \gen{0}
	\] and added $ \pi_Z(\vec{z}) \cdot w \neq \pi_Z(b) \cdot w $ to $ \Upsilon $. In particular, Theorem~\ref{thm:lindep_algebraic_char} applied to $ \gen{w} \le \ImageOf(\pi_Z \circ \lc_B)^\bot $ yields that $ \gen{w} \in \LinDep_{\pi_Z, b}(\Omega) $ and thus \[
		\pi_Z(\ModelsOf(\Omega) - b) \subseteq \gen{w}^\bot
	\] This is equivalent to saying that \[
		\Omega \models \pi_Z(\vec{z} - b) \cdot w = 0
	\] Hence, $ \Omega \models \pi_Z(\vec{z}) \cdot w = \pi_Z(b) \cdot w $ and thus \begin{align}
		\label{eqn:omegalambda_entail_omega_entail_pizxw_eq_pizbw}
		\Omega \wedge \Lambda \models \Omega \models \pi_Z(\vec{z}) \cdot w = \pi_Z(b) \cdot w
	\end{align} On the other hand, by construction of the final covering at Line~\ref{alg:cover:line:return}, we have \begin{align}
		\label{eqn:omegalambda_entails_lambda_entails_upsilon_entails_pizxw_neq_pizbw}
		\Omega \wedge \Lambda \models \Lambda \models \Upsilon \models \pi_Z(\vec{z}) \cdot w \neq \pi_Z(b) \cdot w
	\end{align} Since $ \Omega \wedge \Lambda $ is satisfiable, (\ref{eqn:omegalambda_entails_lambda_entails_upsilon_entails_pizxw_neq_pizbw}) contradicts (\ref{eqn:omegalambda_entail_omega_entail_pizxw_eq_pizbw}).
\end{proof}

\section{Proof of the Overspilling Theorem~\ref{thm:overspilling}}
\label{sec:app:overspilling_proof}

\subsection{Intuition}
\label{sec:overspilling:intuition}

Before we give the rigorous proof of the Overspilling Theorem~\ref{thm:overspilling} in Section~\ref{sec:overspilling:rigorous_proof} below, we explain the main ideas by applying our argumentation to a concrete example. That is, we first give a very informal and incomplete but intuitive ``proof by example'' intending to help the reader understand important intuitions behind our argumentation in the actual rigorous proof.

\subsubsection{Preliminaries}

Let $ \Pi = \{X, Y\} $ be the partition where $ X = \{x_1, x_2, x_3\} $ and $ Y = \{y_1, y_2, y_3\} $. We fix the variable ordering $ x_1, x_2, x_3, y_1, y_2, y_3 $ -- until the end of Section~\ref{sec:overspilling:intuition} the components of all vectors correspond to values of these variables in this order. Let furthermore $ W_1 $ and $ W_2 $ be vectorspaces of solutions to $ \Theta_1^= $ and to $ \Theta_2^= $, respectively (at this point, for the sake of simplicity, we assume that the systems of equations are homogeneous, but it should be noted that in general, we would have to handle affine offsets).

\subsubsection{Bird's eye view of the main steps}

We now explain intuitions behind the main part of the proof by looking at it from a big picture perspective. The central steps in our argumentation are as follows.
\begin{enumerate}[label=(\alph*)]
	\item \label{step:overspilling:1} We prove that there exist vectors $ w_1 \in W_1 $ and $ w_2 \in W_2 \setminus W_1 $ such that for any vectorspace $ W $ satisfying \begin{align}
		\label{eqn:overspilling_step_1_property}
		\gen{w_1} \le W = \pi_X^{-1}(\pi_X(W)) \cap \pi_Y^{-1}(\pi_Y(W))
	\end{align} it holds that $ w_2 \in W $. Intuitively, vectorspaces obtained as a result of an analysis of $ \Pi $-respecting formulas satisfy (\ref{eqn:overspilling_step_1_property}). The existence of vectors $ w_1 $ and $ w_2 $ as specified above can thus be thought of as the ability to construct vectors in different vectorspaces, but so that these vectors are, in a certain sense, indistinguishable in the language of $ \Pi $-decompositions.
	\item \label{step:overspilling:2} We use $ w_1 $ to construct an infinite set $ I \subseteq \ModelsOf(\Theta_1) $ such that $ I \subseteq \gen{w_1} $. Intuitively, we need this to ensure that $ \gen{w_1} \le W $ holds in (\ref{eqn:overspilling_step_1_property}), which is needed to be able to apply step \ref{step:overspilling:1}.
	\item \label{step:overspilling:3} We then argue that for any set $ \Omega $ of $ \Pi $-respecting predicates, if $ w + W $ is the affine vectorspace of solutions to $ \Omega^= $, then \[
		W = \pi_X^{-1}(\pi_X(W)) \cap \pi_Y^{-1}(\pi_Y(W))
	\] Similarly, this is required to establish (\ref{eqn:overspilling_step_1_property}) and thus to be able to apply step \ref{step:overspilling:1}.
	\item \label{step:overspilling:4} We assume that $ \varphi $ is $ \Pi $-decomposable and consider a $ \Pi $-decomposition brought to DNF. Since $ \left|I\right| = \infty $, by the pigeonhole principle, it must be the case that some DNF term agrees with $ I $ on infinitely many models (points).
	\item \label{step:overspilling:5} Without loss of generality, we assume this DNF term to be a predicate set $ \Omega $ and let $ w + W $ be the affine vectorspace of solutions to $ \Omega^= $. Hence, steps \ref{step:overspilling:2} and \ref{step:overspilling:3} taken together with \ref{step:overspilling:1} yield that $ \gen{w_2} \le W $.
	\item \label{step:overspilling:6} By applying scaling to $ w_2 $ and handling affine offsets appropriately, we show that $ \ModelsOf(\Omega) - w $ must agree with $ W_2 $ on some point. We actually prove something slightly stronger, namely that $ \ModelsOf(\Omega) - w $ must agree with $ \ModelsOf(\Theta_2) - w $ on some point. Hence, it follows that $ \Omega \cup \Theta_2 $ is satisfiable.
	\item \label{step:overspilling:7} Since $ \Omega \models \varphi $, we conclude that $ \varphi \wedge \Theta_2 $ is satisfiable.
\end{enumerate}

\subsubsection{Discussion}

Essentially, steps \ref{step:overspilling:2} and \ref{step:overspilling:3} are needed in order to be able to apply step \ref{step:overspilling:1}, while steps \ref{step:overspilling:4}, \ref{step:overspilling:5}, \ref{step:overspilling:6} and \ref{step:overspilling:7} reduce the proof to showing a purely algebraic property. Although this reduction is crucial for the proof to work and involves many important intermediate steps, it does not rely on any fundamental ideas requiring a separate explanation. Thus, in the remainder of Section~\ref{sec:overspilling:intuition}, we only discuss step \ref{step:overspilling:1} in more detail because this is where the core ``overspilling'' effect happens.

\noindent Let \begin{align}
	\label{eqn:overspilling_m1_mat_def}
	M_1 &:= \begin{pNiceMatrix}[first-col, extra-margin=2pt, code-for-first-col=\scriptscriptstyle, columns-width=2em, vlines]
		x_1 & -1 & 2 \\
		x_2 & 2 & 2 \\
		x_3 & 1 & -2 \\
		y_1 & 0 & 2 \\
		y_2 & 1 & 0 \\
		y_3 & 1 & 1
		\CodeAfter
		\UnderBrace[yshift=3pt]{1-1}{last-1}{\scriptstyle u_1}
		\UnderBrace[yshift=3pt]{1-2}{last-2}{\scriptstyle u_2}
		\tikz \node [highlight = (1-1) (3-1)] {};
		\tikz \node [highlightblue = (4-2) (6-2)] {};
	\end{pNiceMatrix} \\[15pt]
	\label{eqn:overspilling_m2_mat_def}
	M_2 &:= \begin{pNiceMatrix}[first-col, extra-margin=2pt, code-for-first-col=\scriptscriptstyle, columns-width=2em, vlines]
		x_1 & -1 & 2 & -1 \\
		x_2 & 2 & 2 & 2 \\
		x_3 & 1 & -2 & 1 \\
		y_1 & 0 & 2 & 2\\
		y_2 & 1 & 0 & 0\\
		y_3 & 1 & 1 & 1
		\CodeAfter
		\UnderBrace[yshift=3pt]{1-1}{last-1}{\scriptstyle u_1}
		\UnderBrace[yshift=3pt]{1-2}{last-2}{\scriptstyle u_2}
		\UnderBrace[yshift=3pt]{1-3}{last-3}{\scriptstyle u}
		% X component of u
		\tikz \node [highlight = (1-1) (3-1)] {};
		\tikz \node [highlight = (1-3) (3-3)] {};
		% Y component of u
		\tikz \node [highlightblue = (4-2) (6-2)] {};
		\tikz \node [highlightblue = (4-3) (6-3)] {};
	\end{pNiceMatrix}
\end{align}\[ \\[12pt] \] and suppose that \begin{align*}
	W_1 &= \ImageOf(M_1) \\
	W_2 &= \ImageOf(M_2)
\end{align*} In (\ref{eqn:overspilling_m1_mat_def}) and (\ref{eqn:overspilling_m2_mat_def}), every row is labeled with the corresponding variable located to the left of that row.

Since $ \Theta_1 $ is $ p $-next to $ \Theta_2 $ on an equality predicate $ p \in \Theta_1^= $, it follows that $ W_1 \le W_2 $. This is the reason why we defined the matrices so that $ M_1 $ agrees on the first two columns with $ M_2 $. Let $ u_1 $, $ u_2 $ and $ u $ be the columns of $ M_2 $ as indicated in (\ref{eqn:overspilling_m1_mat_def}) and (\ref{eqn:overspilling_m2_mat_def}). It can be further observed that essentially due to $ \Theta_1 \not\equiv \Theta_2 $, there must exist a vector $ u \in W_2 \setminus W_1 $. In other words, $ M_2 $ must have a column that is not expressible as a linear combination of the columns of $ M_1 $. Indeed, in the running example, we immediately see that the rank of $ M_1 $ (resp. $ M_2 $) is $ 2 $ (resp. $ 3 $). With some algebraic reasoning (see the proof of Claim~\ref{claim:pix_w1_eq_pix_w2} below) it is possible to show that $ \pi_X(W_1) = \pi_X(W_2) $ and $ \pi_Y(W_1) = \pi_Y(W_2) $. This is equivalent to saying that the image of the first three rows of $ M_1 $ is equal to the image of the first three rows of $ M_2 $. The same is true for last three rows of $ M_1 $ and $ M_2 $. More precisely, it holds that \begin{align}
	\label{eqn:image_m1x_eq_image_m2x}
	\ImageOf(M_{1, X}) &= \ImageOf(M_{2, X}) \\
	\label{eqn:image_m1y_eq_image_m2y}
	\ImageOf(M_{1, Y}) &= \ImageOf(M_{2, Y})
\end{align} where \begin{center}
	\begin{tabular}{|P{0.45\textwidth}|P{0.51\textwidth}|}
		\hline
		\begin{tabular}{c}
			\vspace{3pt} \\
			$ M_{1, X} := \begin{pNiceMatrix}[first-col, extra-margin=2pt, code-for-first-col=\scriptscriptstyle, columns-width=2.5em, vlines]
				x_1 & -1 & 2 \\
				x_2 & 2 & 2 \\
				x_3 & 1 & -2
				\CodeAfter
				\UnderBrace[yshift=3pt]{1-1}{last-1}{\scriptstyle \pi_X(u_1)}
				\UnderBrace[yshift=3pt]{1-2}{last-2}{\scriptstyle \pi_X(u_2)}
				\tikz \node [highlight = (1-1) (3-1)] {};
			\end{pNiceMatrix} $ \\
			\vspace{15pt}
		\end{tabular} & 
		\begin{tabular}{c}
			\vspace{3pt} \\
			$ M_{2, X} := \begin{pNiceMatrix}[first-col, extra-margin=2pt, code-for-first-col=\scriptscriptstyle, columns-width=2.5em, vlines]
				x_1 & -1 & 2 & -1 \\
				x_2 & 2 & 2 & 2 \\
				x_3 & 1 & -2 & 1
				\CodeAfter
				\UnderBrace[yshift=3pt]{1-1}{last-1}{\scriptstyle \pi_X(u_1)}
				\UnderBrace[yshift=3pt]{1-2}{last-2}{\scriptstyle \pi_X(u_2)}
				\UnderBrace[yshift=3pt]{1-3}{last-3}{\scriptstyle \pi_X(u)}
				\tikz \node [highlight = (1-1) (3-1)] {};
				\tikz \node [highlight = (1-3) (3-3)] {};
			\end{pNiceMatrix} $ \\
			\vspace{15pt}
		\end{tabular} \\ \hline
		\begin{tabular}{c}
			\vspace{3pt} \\
			$ M_{1,Y} := \begin{pNiceMatrix}[first-col, extra-margin=2pt, code-for-first-col=\scriptscriptstyle, columns-width=2.5em, vlines]
				y_1 & 0 & 2 \\
				y_2 & 1 & 0 \\
				y_3 & 1 & 1
				\CodeAfter
				\UnderBrace[yshift=3pt]{1-1}{last-1}{\scriptstyle \pi_Y(u_1)}
				\UnderBrace[yshift=3pt]{1-2}{last-2}{\scriptstyle \pi_Y(u_2)}
				% Y component of u
				\tikz \node [highlightblue = (1-2) (3-2)] {};
			\end{pNiceMatrix} $ \\
			\vspace{15pt}
		\end{tabular} & \begin{tabular}{c}
			\vspace{3pt} \\
			$ M_{2,Y} := \begin{pNiceMatrix}[first-col, extra-margin=2pt, code-for-first-col=\scriptscriptstyle, columns-width=2.5em, vlines]
				y_1 & 0 & 2 & 2\\
				y_2 & 1 & 0 & 0\\
				y_3 & 1 & 1 & 1
				\CodeAfter
				\UnderBrace[yshift=3pt]{1-1}{last-1}{\scriptstyle \pi_Y(u_1)}
				\UnderBrace[yshift=3pt]{1-2}{last-2}{\scriptstyle \pi_Y(u_2)}
				\UnderBrace[yshift=3pt]{1-3}{last-3}{\scriptstyle \pi_Y(u)}
				% Y component of u
				\tikz \node [highlightblue = (1-2) (3-2)] {};
				\tikz \node [highlightblue = (1-3) (3-3)] {};
			\end{pNiceMatrix} $ \\
			\vspace{15pt}
		\end{tabular} \\ \hline
	\end{tabular}
\end{center} In the present example, note that (\ref{eqn:image_m1x_eq_image_m2x}) and (\ref{eqn:image_m1y_eq_image_m2y}) hold trivially because the matrices $ M_{2, X} $ and $ M_{2,Y} $ contain redundant columns $ \pi_X(u) $ and $ \pi_Y(u) $, respectively (see the highlighted columns of the matrices). Moreover, we can trace back these redundant columns in the original matrices -- see the highlighted parts in (\ref{eqn:overspilling_m1_mat_def}) and (\ref{eqn:overspilling_m2_mat_def}). Note that, in general, we can only say that there exists a linearly dependent column -- it need not be equal to any of the other columns like in the present example.

Having explained the properties of the solution spaces we are dealing with, we now discuss the crux of the construction of $ w_1 \in W_1 $ and $ w_2 \in W_2 \setminus W_1 $ for step \ref{step:overspilling:1}. First, note that the statement of step \ref{step:overspilling:1} is very strong in the sense that the vectorspace $ W $ is arbitrary; the only property we know about $ W $ is formulated in (\ref{eqn:overspilling_step_1_property}). This condition immediately implies \begin{align}
	\label{eqn:pix_w1_in_pix_w_overspill}
	\pi_X(w_1) &\in \pi_X(W) \\
	\label{eqn:piy_w1_in_piy_w_overspill}
	\pi_Y(w_1) &\in \pi_X(W)
\end{align} Since we are free to define $ w_1 \in W_1 $ as we wish, (\ref{eqn:pix_w1_in_pix_w_overspill}) and (\ref{eqn:piy_w1_in_piy_w_overspill}) can be thought of as giving us the power to enforce $ \gen{\pi_X(w_1)} \le \pi_X(W) $ and $ \gen{\pi_Y(w_1)} \le \pi_Y(W) $. Thus, in order to prove step \ref{step:overspilling:1}, it suffices to construct $ w_1 \in W_1 $ and $ w_2 \in W_2 \setminus W_1 $ such that \begin{align}
	\label{eqn:pix_w2_in_gen_pix_w1_overspill}
	\pi_X(w_2) &\in \gen{\pi_X(w_1)} \\
	\label{eqn:piy_w2_in_gen_piy_w1_overspill}
	\pi_Y(w_2) &\in \gen{\pi_Y(w_1)}
\end{align} In other words, it suffices to construct $ w_1 $ and $ w_2 $ such that their projections generate equal subspaces of the projected space. We now use the running example to explain the idea behind the way we achieve this (see Section~\ref{sec:overspilling:rigorous_proof} for all details and for the reason why everything we are about to do holds in general with only minor adjustments).

Let $ u_1, \dots, u_k $ be a collection of vectors generating $ W_1 $. In the present example, $ k = 2 $ and we choose the values of $ u_1 $ and $ u_2 $ to be the columns of $ M_1 $ (see (\ref{eqn:overspilling_m1_mat_def})). The intuition is that there will always be a column in $ M_1 $ agreeing with $ u $ on the projection onto $ X $ and a column agreeing with $ u $ on the projection onto $ Y $ (this follows from (\ref{eqn:image_m1x_eq_image_m2x}) and (\ref{eqn:image_m1y_eq_image_m2y})). In this case, as can be immediately seen from the highlighting in (\ref{eqn:overspilling_m2_mat_def}), \begin{align}
	\label{eqn:pix_u1_eq_pix_u_overspill}
	\pi_X(u_1) &= \pi_X(u) \\
	\label{eqn:piy_u2_eq_piy_u_overspill}
	\pi_Y(u_2) &= \pi_Y(u)
\end{align} Now the idea is to think about (\ref{eqn:piy_u2_eq_piy_u_overspill}) as saying that subtracting off any value from both $ u_2 $ and $ u $ maintains the equality of projections onto $ Y $ (see the parts of the vectors highlighted yellow below). Intuitively, we need this to ensure that (\ref{eqn:piy_w2_in_gen_piy_w1_overspill}) holds; moreover, if we subtract off an element of $ W_1 $ from $ u \notin W_1 $, then we automatically ensure $ w_2 \notin W_2 $, which is important. Now the question is -- what to subtract off? Since we also need to ensure (\ref{eqn:pix_w2_in_gen_pix_w1_overspill}), we choose $ u_1 $ to be the vector we subtract off because then the projection onto $ X $ cancels out, meaning that (\ref{eqn:pix_w2_in_gen_pix_w1_overspill}) follows as desired. More precisely, we set \begin{align*}
	w_1 &:= u_2 - u_1 = \begin{pNiceMatrix}
		2 \\ 2 \\ -2 \\ 2 \\ 0 \\ 1
		\CodeAfter
		\tikz \node [highlightblue = (4-1) (6-1)] {};
	\end{pNiceMatrix} - \begin{pNiceMatrix}
		-1 \\ 2 \\ 1 \\ 0 \\ 1 \\ 1
		\CodeAfter
		\tikz \node [highlight = (1-1) (3-1)] {};
	\end{pNiceMatrix} = \begin{pNiceMatrix}
		3 \\ 0 \\ -1 \\ 2 \\ -1 \\ 0
		\CodeAfter
		\tikz \node [highlightyellow = (4-1) (6-1)] {};
	\end{pNiceMatrix} \in W_1 \\
	w_2 &:= u - u_1 = \begin{pNiceMatrix}
		-1 \\ 2 \\ 1 \\ 2 \\ 0 \\ 1
		\CodeAfter
		\tikz \node [highlight = (1-1) (3-1)] {};
		\tikz \node [highlightblue = (4-1) (6-1)] {};
	\end{pNiceMatrix} - \begin{pNiceMatrix}
		-1 \\ 2 \\ 1 \\ 0 \\ 1 \\ 1
		\CodeAfter
		\tikz \node [highlight = (1-1) (3-1)] {};
	\end{pNiceMatrix} = \begin{pNiceMatrix}
		0 \\ 0 \\ 0 \\ 2 \\ -1 \\ 0
		\CodeAfter
		\tikz \node [highlightyellow = (4-1) (6-1)] {};
	\end{pNiceMatrix} \in W_2 \setminus W_1
\end{align*} and observe that \begin{align*}
	\pi_X(w_2) &= \pi_X(u) - \pi_X(u_1) \overset{(\ref{eqn:pix_u1_eq_pix_u_overspill})}{=} 0 \in \gen{\pi_X(w_1)} \\
	\pi_Y(w_2) &= \pi_Y(u) - \pi_Y(u_1) \overset{(\ref{eqn:piy_u2_eq_piy_u_overspill})}{=} \pi_Y(u_2) - \pi_Y(u_1) = \pi_Y(w_1) \in \gen{\pi_Y(w_1)}
\end{align*} This establishes (\ref{eqn:pix_w2_in_gen_pix_w1_overspill}) and (\ref{eqn:piy_w2_in_gen_piy_w1_overspill}), which, as discussed above, is sufficient to prove step \ref{step:overspilling:1}. At this point, it is worth noting that this example-driven construction is a simplified version of what we actually need for the proof. The main difference is that, in general, we need to subtract off not just a single vector $ u_i $ but a certain linear combination that satisfies the same properties as $ u_1 $ in the present example. To sum up, we have shown how to construct $ w_1 $ and $ w_2 $ so that those vectors cannot be ``distinguished'' using vectorspaces satisfying (\ref{eqn:overspilling_step_1_property}). Hence, every such vectorspace must ``overspill'' in the sense that it must contain $ \gen{w_2} $ for $ w_2 \notin W_1 $. As mentioned above, other steps in our proof essentially strengthen and translate this overspilling effect from the algebraic world into the logical setting.

\subsection{Rigorous proof}
\label{sec:overspilling:rigorous_proof}

We now rigorously prove the Overspilling Theorem~\ref{thm:overspilling}. Let $ \Theta $ be the predicate set such that $ \Theta_1 = \Theta \cup \{p\} $ and $ \Theta_2 = \Theta \cup \{p^P\} $ holds for some $ P \in \{<, >\} $ and an equality predicate $ p \in \Theta_1^= $. By Theorem~\ref{thm:only_equality_predicates_can_establish_lindep_strong}, \begin{align}
	\label{eqn:lindep_pix_theta1eq_eq_lindep_pix_theta2eq}
	\LinDep_{\pi_Z}(\Theta_1^=) = \LinDep_{\pi_Z}(\Theta_2^=)
\end{align} holds for all $ Z \in \Pi $. Let $ v + W_1 $, and $ v + W_2 $ be affine vectorspaces\footnote{We can assume that the affine offsets are equal because $ \Theta_1^= \models \Theta_2^= $} of solutions to $ \Theta_1^= $ and to $ \Theta_2^= $, respectively, such that $ v \models \Theta_1 $.

\begin{zbclaim}
	\label{claim:dim_w1_le_dim_w2}
	$ \dim{W_1} < \dim{W_2} $
\end{zbclaim}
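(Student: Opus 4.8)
The plan is to show the two inequalities $\dim W_1 \le \dim W_2$ and $\dim W_1 \neq \dim W_2$ separately. For the first, I would observe that $\Theta_1^= = \Theta^= \cup \{p\}$ while $\Theta_2^=$ consists of $\Theta^=$ together with \emph{no} equality coming from $p^P$ (since $p^P$ is a strict inequality, it contributes nothing to $\Theta_2^=$), so in fact $\Theta_2^= = \Theta^= \subseteq \Theta_1^=$. Hence $\ModelsOf(\Theta_1^=) \subseteq \ModelsOf(\Theta_2^=)$, which gives $v + W_1 \subseteq v + W_2$ and therefore $W_1 \le W_2$. In particular $\dim W_1 \le \dim W_2$.

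For strictness, I would argue by contradiction: suppose $\dim W_1 = \dim W_2$. Combined with $W_1 \le W_2$ this forces $W_1 = W_2$, hence $\ModelsOf(\Theta_1^=) = \ModelsOf(\Theta_2^=)$, i.e. $\Theta_1^= \equiv \Theta_2^=$. Now I need to derive a contradiction with the fact that $\Theta_1$ is $p$-next to $\Theta_2$. The key observation is that $\Theta_1$ and $\Theta_2$ are both $\Pi$-complex and hence both satisfiable. Pick a model $v \models \Theta_1$; then $v \models \Theta^= \cup \{p\}$, so in particular $v \models p^=$, i.e. $v$ lies on the hyperplane defined by $p$. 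But $\Theta_2 = \Theta \cup \{p^P\}$ is also satisfiable, say $u \models \Theta_2$; then $u \models \Theta^=$ but $u \models p^P$ with $P \in \{<,>\}$, so $u$ lies strictly off the hyperplane of $p$. Since $\Theta_1^= \equiv \Theta_2^= = \Theta^=$ means the affine solution space $v + W_1 = v + W_2$ coincides with $\ModelsOf(\Theta^=)$, and since $u \models \Theta^=$ we would get $u \in v + W_1 = \ModelsOf(\Theta_1^=) \subseteq \ModelsOf(p^=)$, contradicting $u \models p^P$.

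The main obstacle I anticipate is making the last step fully rigorous: one must be careful that $u \models \Theta^=$ genuinely places $u$ in the \emph{affine} span $v + W_2$, which holds precisely because $\Theta_2^= = \Theta^=$ and $u$ is a solution to that linear system, so $u - v \in W_2$. Once this is pinned down, the contradiction $u \models p^= \wedge p^P$ is immediate. An alternative, cleaner route avoiding the model-chasing is to note directly that $p$, being an equality predicate that is $\Pi$-complex and hence not a tautology ($p \not\equiv \top$), cuts down dimension: $\ModelsOf(\Theta_1^=) = \ModelsOf(\Theta^=) \cap \ModelsOf(p^=)$, and since $\Theta_2$ is satisfiable with a model strictly outside $\ModelsOf(p^=)$ yet inside $\ModelsOf(\Theta^=)$, the intersection with $\ModelsOf(p^=)$ is a proper affine subspace of $v + W_2 = \ModelsOf(\Theta^=)$, forcing $\dim W_1 < \dim W_2$. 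I would present whichever of these two arguments turns out to be shorter after checking the details, but the affine-subspace version seems most robust.
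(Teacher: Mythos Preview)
Your proposal is correct and follows essentially the same approach as the paper: both establish $W_1 \le W_2$ from $\Theta_2^= \subseteq \Theta_1^=$, then assume equality of dimensions to get $\Theta_1^= \equiv \Theta_2^=$, and derive a contradiction by taking a model of $\Theta_2$ (which then must satisfy $\Theta_1^=$ and hence $p$, contradicting $p^P$). The only cosmetic difference is that the paper phrases the last step as the entailment $\Theta_2 \models \Theta_1$, whereas you chase a concrete model $u$; your alternative affine-subspace phrasing is also fine and equivalent.
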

\begin{proof}
	First, observe that $ W_1 \le W_2 $ because $ \Theta_1^= = \Theta_2^= \cup \{p\} \models \Theta_2^= $. Hence, $ \dim{W_1} \le \dim{W_2} $. For the sake of contradiction, suppose that $ \dim{W_1} = \dim{W_2} $. This implies $ W_1 = W_2 $ and hence $ \Theta_1^= \equiv \Theta_2^= $. This, in turn, implies \[
	\Theta_2 \models \Theta = \Theta_2^= \cup \Theta \equiv \Theta_1^= \cup \Theta = \Theta_1
	\] Since $ \Theta_2 $ is satisfiable, it follows that $ \Theta_1 $ and $ \Theta_2 $ must agree on some model. In particular, this means that $ p \in \Theta_1^= $ and $ p^P \in \Theta_2 $ must agree on some model, which is impossible. This is a contradiction.
\end{proof}

\begin{zbclaim}
	\label{claim:pix_w1_eq_pix_w2}
	$ \pi_Z(W_1) = \pi_Z(W_2) $ holds for all $ Z \in \Pi $.
\end{zbclaim}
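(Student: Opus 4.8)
\textbf{Proof proposal for Claim~\ref{claim:pix_w1_eq_pix_w2}.}
The plan is to read off both inclusions from the algebraic characterization of linear dependencies. The easy inclusion $\pi_Z(W_1) \subseteq \pi_Z(W_2)$ is immediate: as observed in the proof of Claim~\ref{claim:dim_w1_le_dim_w2}, $\Theta_1^= = \Theta_2^= \cup \{p\} \models \Theta_2^=$, hence $W_1 \le W_2$, and applying the linear map $\pi_Z$ preserves this inclusion. The substance of the claim is therefore the reverse inclusion $\pi_Z(W_2) \subseteq \pi_Z(W_1)$, which I would obtain by comparing the suprema of the two sets $\LinDep_{\pi_Z}(\Theta_1^=)$ and $\LinDep_{\pi_Z}(\Theta_2^=)$.

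Concretely: fix $Z \in \Pi$ and let $B_i := (b_1^{(i)}, \dots, b_{r_i}^{(i)})$ be a basis of $W_i$, so that $\ModelsOf(\Theta_i^=) = v + \gen{b_1^{(i)}, \dots, b_{r_i}^{(i)}}$ for the chosen $v \models \Theta_1$ (note $v \models \Theta_1^= \models \Theta_2^=$, so $v$ is a legitimate offset for both systems). Since $\Theta_i^=$ consists only of equality predicates, $(\Theta_i^=)^= = \Theta_i^=$, so Theorem~\ref{thm:lindep_algebraic_char} applies verbatim and yields
\[
	\LinDep_{\pi_Z}(\Theta_i^=) = \{U \le \Q^{|Z|} \mid U \le \ImageOf(\pi_Z \circ \lc_{B_i})^{\bot}\}.
\]
Because $\ImageOf(\pi_Z \circ \lc_{B_i}) = \pi_Z(\gen{b_1^{(i)}, \dots, b_{r_i}^{(i)}}) = \pi_Z(W_i)$, the set $\LinDep_{\pi_Z}(\Theta_i^=)$ is precisely the lattice of all subspaces contained in $\pi_Z(W_i)^{\bot}$; in particular its supremum is $\pi_Z(W_i)^{\bot}$. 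Now equation~(\ref{eqn:lindep_pix_theta1eq_eq_lindep_pix_theta2eq}) says $\LinDep_{\pi_Z}(\Theta_1^=) = \LinDep_{\pi_Z}(\Theta_2^=)$, so these two lattices coincide and therefore so do their suprema: $\pi_Z(W_1)^{\bot} = \pi_Z(W_2)^{\bot}$. Taking orthogonal complements in the finite-dimensional space $\Q^{|Z|}$, where $(U^{\bot})^{\bot} = U$ holds for every subspace $U$, gives $\pi_Z(W_1) = \pi_Z(W_2)$. Since $Z \in \Pi$ was arbitrary, this proves the claim.

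I do not expect a genuine obstacle here; the only points that need care are bookkeeping ones. First, one must check that the unqualified notation $\LinDep_{\pi_Z}(\cdot)$ is meaningful, which is guaranteed by Lemma~\ref{lemma:lindep_gamma_v1_equals_lindep_gamma_v2} (independence of the offset), and that the offset $v$ chosen really satisfies both $\Theta_1^=$ and $\Theta_2^=$. Second, one must verify that the common value $\ImageOf(\pi_Z \circ \lc_{B_i})^{\bot}$ is indeed the supremum of $\LinDep_{\pi_Z}(\Theta_i^=)$ in the subspace lattice, which follows since (by Theorem~\ref{thm:lindep_algebraic_char}) the set is downward closed and contains $\ImageOf(\pi_Z \circ \lc_{B_i})^{\bot}$ itself as its largest element — consistent with the closure properties of Lemma~\ref{lemma:lindep_closed_under_addition}. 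Everything else is the standard fact that $(U^{\bot})^{\bot} = U$ over a field in finite dimension.
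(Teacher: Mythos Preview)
Your proof is correct and follows essentially the same approach as the paper: apply Theorem~\ref{thm:lindep_algebraic_char} to both $\Theta_1^=$ and $\Theta_2^=$, use equation~(\ref{eqn:lindep_pix_theta1eq_eq_lindep_pix_theta2eq}) to equate the resulting downward-closed families of subspaces, read off $\ImageOf(\pi_Z\circ\lc_{B_1})^\bot = \ImageOf(\pi_Z\circ\lc_{B_2})^\bot$ from their common supremum, and take double orthogonal complements. The paper's version is just terser and omits your preliminary discussion of the easy inclusion $\pi_Z(W_1)\subseteq\pi_Z(W_2)$, which is indeed redundant once you prove equality directly.
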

\begin{proof}
	Let $ Z \in \Pi $. Theorem~\ref{thm:lindep_algebraic_char} taken together with (\ref{eqn:lindep_pix_theta1eq_eq_lindep_pix_theta2eq}) yields
	\begin{center}
		\begin{tabular}{c c c}
			$ \LinDep_{\pi_Z}(\Theta_1^=) $ & $ \overset{(\ref{eqn:lindep_pix_theta1eq_eq_lindep_pix_theta2eq})}{=} $ & $ \LinDep_{\pi_Z}(\Theta_2^=) $ \\
			\rotatebox[origin=c]{90}{$ \overset{\ref{thm:lindep_algebraic_char}}{=} $} && \rotatebox[origin=c]{90}{$ \overset{\ref{thm:lindep_algebraic_char}}{=} $} \\
			$ \{U \le V \mid U \le \ImageOf(\pi_Z \circ \lc_{B_1})^{\bot}\} $ & $ = $ & $ \{U \le V \mid U \le \ImageOf(\pi_Z \circ \lc_{B_2})^{\bot}\} $ \\
			\rotatebox[origin=c]{90}{$ \in $} && \rotatebox[origin=c]{90}{$ \in $} \\
			$ \ImageOf(\pi_Z \circ \lc_{B_1})^\bot $ && $ \ImageOf(\pi_Z \circ \lc_{B_2})^\bot $
		\end{tabular}
	\end{center} for any bases $ B_1 $ of $ W_1 $ and $ B_2 $ of $ W_2 $. This implies \[
		\ImageOf(\pi_Z \circ \lc_{B_1})^\bot = \ImageOf(\pi_Z \circ \lc_{B_2})^\bot
	\] and thus $ \ImageOf(\pi_Z \circ \lc_{B_1}) = \ImageOf(\pi_Z \circ \lc_{B_2}) $. The claimed equality \begin{align*}
	\pi_Z(W_1) = \pi_Z(\ImageOf(\lc_{B_1})) &= \ImageOf(\pi_Z \circ \lc_{B_1}) \\
	&= \ImageOf(\pi_Z \circ \lc_{B_2}) = \pi_Z(\ImageOf(\lc_{B_2})) = \pi_Z(W_2)
	\end{align*} follows.
\end{proof}

Recall that we have assumed that $ \Pi $ is binary and let its elements be $ \Pi = \{X, Y\} $. Let $ (u_1, \dots, u_k) $ be a basis of $ W_1 = \gen{u_1, \dots, u_k} $ and let $ u \in W_2 \setminus W_1 $ be a vector such that $ \gen{u_1, \dots, u_k, u} \le W_2 $. Such $ u $ exists by Claim~\ref{claim:dim_w1_le_dim_w2}. Observe furthermore that $ \dim{W_1} = k \ge 1 $ because assuming $ \dim{W_1} = 0 $ would imply $ W_1 = \gen{0} $ and \[
v \in \ModelsOf(\Theta_1) \subseteq \ModelsOf(\Theta_1^=) = v + W_1 = v + \gen{0} = \{v\}
\] meaning, in particular, that $ \Theta_1 $ is $ \Pi $-simple. This contradicts the assumption that $ \Theta_1 $ is $ \Pi $-complex.

\begin{zbclaim}
	\label{claim:can_assume_pyu_eq_uj}
	At this point, we can assume, without loss of generality, that there exists $ j \in \{1, \dots, k\} $ such that $ \pi_Y(u) = \pi_Y(u_j) $. % Moreover, we can assume that $ \pi_Y(u) \neq 0 $.
\end{zbclaim}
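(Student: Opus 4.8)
The plan is to establish the claim by \emph{replacing the vector $u$}, leaving the basis $(u_1,\dots,u_k)$ of $W_1$ untouched. First I would invoke Claim~\ref{claim:pix_w1_eq_pix_w2} with $Z := Y$ to obtain $\pi_Y(W_1) = \pi_Y(W_2)$. Since $u \in W_2$ and $W_1 = \gen{u_1,\dots,u_k}$, linearity of $\pi_Y$ then gives
\[
	\pi_Y(u) \in \pi_Y(W_2) = \pi_Y(W_1) = \gen{\pi_Y(u_1),\dots,\pi_Y(u_k)},
\]
so there is a vector $w \in W_1$ with $\pi_Y(w) = \pi_Y(u)$.

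The key step is to check that $u' := u - w + u_1$ (here $u_1$ exists because $k \ge 1$, as recorded just before the claim) inherits every structural property that was assumed of $u$ up to this point. Indeed, $u' \in W_2$ because $w, u_1 \in W_1 \le W_2$; furthermore $u' \notin W_1$, since $u \notin W_1$ while $-w + u_1 \in W_1$; and $u' - u = -w + u_1 \in \gen{u_1,\dots,u_k}$, so $\gen{u_1,\dots,u_k,u'} = \gen{u_1,\dots,u_k,u} \le W_2$. Hence we may rename $u'$ to $u$, and then a one-line computation with linearity of $\pi_Y$ yields $\pi_Y(u') = \pi_Y(u) - \pi_Y(w) + \pi_Y(u_1) = \pi_Y(u_1)$, so the claim holds with $j := 1$.

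I do not anticipate a genuine obstacle, but the one subtlety worth flagging is the reason for the summand $+\,u_1$: the naive candidate $u - w$ only satisfies $\pi_Y(u - w) = 0$, which is of the required shape $\pi_Y(u_j)$ merely when some basis vector of $W_1$ happens to project to $0$ under $\pi_Y$ — adding $u_1$ sidesteps this case distinction uniformly. I would also note that this substitution does not jeopardize the companion reduction on the $X$-side that is carried out subsequently (cf.\ the discussion in Section~\ref{sec:overspilling:intuition}, where one needs some $j'$ with $\pi_X(u_{j'}) = \pi_X(u)$): since $\pi_X(W_1) = \pi_X(W_2)$ as well by Claim~\ref{claim:pix_w1_eq_pix_w2} with $Z := X$, we still have $\pi_X(u') \in \pi_X(W_1)$, so the analogous argument remains available for $\pi_X$.
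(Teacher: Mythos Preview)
Your argument is correct, and it is in fact cleaner than the paper's. Both proofs start the same way, using Claim~\ref{claim:pix_w1_eq_pix_w2} with $Z=Y$ to write $\pi_Y(u)$ as a $\Q$-linear combination $\sum_i \mu_i\,\pi_Y(u_i)$, equivalently to find some $w\in W_1$ with $\pi_Y(w)=\pi_Y(u)$. The paper then picks an index $j$ with $\mu_j\neq 0$ and rescales $u':=\mu_j^{-1}\bigl(u-\sum_{i\neq j}\mu_i u_i\bigr)$, which forces $\pi_Y(u')=\pi_Y(u_j)$; the edge case $\pi_Y(u)=0$ (all $\mu_i=0$) is handled separately by swapping the roles of $X$ and $Y$. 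Your trick of taking $u':=u-w+u_1$ achieves $\pi_Y(u')=\pi_Y(u_1)$ uniformly and sidesteps the $X\leftrightarrow Y$ swap altogether, at the cost of nothing: the later steps only need $\pi_Y(u)=\pi_Y(u_j)$ for \emph{some} $j$, together with $u\in W_2\setminus W_1$ and $\gen{u_1,\dots,u_k,u}\le W_2$, all of which you verified. One small remark: your closing comment about ``some $j'$ with $\pi_X(u_{j'})=\pi_X(u)$'' slightly overstates what the rigorous proof needs on the $X$-side---there only $\pi_X(u')\in\pi_X(W_1)$ is used, via the coefficients $\lambda_i$ in~(\ref{eqn:pixu_eq_sum_lambdai_pxui})---but you correctly observe that this survives your substitution.
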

\begin{proof}
	By Claim~\ref{claim:pix_w1_eq_pix_w2}, \[
	\pi_Y(\gen{u}) \le \pi_Y(W_2) \overset{\ref{claim:pix_w1_eq_pix_w2}}{=} \pi_Y(W_1) = \pi_Y(\gen{u_1, \dots, u_k})
	\] so there must exist $ \mu_1, \dots, \mu_k \in \Q $ such that \[
	\pi_Y(u) = \sum_{i=1}^k \mu_i \cdot \pi_Y(u_i)
	\] Let $ j \in \{1, \dots, k\} $ be an index such that $ \mu_j \neq 0 $. We can assume that such an index exists because otherwise, we would have $ \pi_Y(u) = 0 $, in which case we would simply swap the values of $ X $ and $ Y $, and $ \pi_X(u) = \pi_Y(u) = 0 $ is impossible because this would contradict $ u \in W_2 \setminus W_1 $.
	
	\noindent Redefine $ u $ to be \[
	u' := \frac{1}{\mu_j} \cdot \Big(u - \sum_{\substack{i=1 \\ i \neq j}}^k \mu_i \cdot u_i\Big)
	\] This construction ensures the claimed equality \begin{align*}
	\pi_Y(u') &= \frac{1}{\mu_j} \cdot \Big(\pi_Y(u) - \sum_{\substack{i=1 \\ i \neq j}}^k \mu_i \cdot \pi_Y(u_i)\Big) \\
	&= \frac{1}{\mu_j} \cdot \Big(\pi_Y(u) + \mu_j \cdot \pi_Y(u_j) - \underbrace{\sum_{i=1}^k \mu_i \cdot \pi_Y(u_i)}_{= \pi_Y(u)}\Big) \\
	&= \pi_Y(u_j)
	\end{align*} Moreover, since in the definition of $ u' $ we are subtracting off $ \sum_{i \neq j} \mu_i \cdot u_i \in W_1 $ from $ u \in W_2 \setminus W_1 $, it follows that $ u' \in W_2 \setminus W_1 $ and $ \gen{u_1, \dots, u_k, u'} \le W_2 $. We conclude that the above transformation does not break any assumptions we make about $ u $.
\end{proof}

\noindent Now observe that, by Claim~\ref{claim:pix_w1_eq_pix_w2}, \begin{align}
\label{eqn:pixu_subspace_pxw2_eq_pxw1_eq_pxu1uk}
\pi_X(\gen{u}) \le \pi_X(W_2) \overset{\ref{claim:pix_w1_eq_pix_w2}}{=} \pi_X(W_1) = \pi_X(\gen{u_1, \dots, u_k})
\end{align} By (\ref{eqn:pixu_subspace_pxw2_eq_pxw1_eq_pxu1uk}) and the linearity of $ \pi_X $, there exist $ \lambda_1, \dots, \lambda_k \in \Q $ such that \begin{align}
\label{eqn:pixu_eq_sum_lambdai_pxui}
\pi_X(u) = \sum_{i=1}^k \lambda_i \cdot \pi_X(u_i)
\end{align} Define \begin{align}
\label{eqn:w_u_minus_sum_lambdai_ui}
w_2 := u - \sum_{i=1}^k \lambda_i \cdot u_i
\end{align} Observe that $ w_2 \notin W_1 $ because $ \sum_{i=1}^k \lambda_i \cdot u_i \in \gen{u_1, \dots, u_k} = W_1 $ but $ u \notin W_1 $. On the other hand, $ \gen{u_1, \dots, u_k, w_2} \le W_2 $ because \[
	\gen{u_1, \dots, u_k, w_2} \le \gen{u_1, \dots, u_k, u} \le W_2
\] In particular, $ w_2 \in W_2 $. Moreover, $ w_2 $ has the additional property that \begin{align}
\label{eqn:pxw2_eq_zero}
\pi_X(w_2) \overset{(\ref{eqn:w_u_minus_sum_lambdai_ui})}{=} \pi_X(u) - \sum_{i=1}^k \lambda_i \cdot \pi_X(u_i) \overset{(\ref{eqn:pixu_eq_sum_lambdai_pxui})}{=} 0
\end{align} We now define \begin{align}
\label{eqn:w1_eq_uj_minus_sum_lambdai_ui}
w_1 := u_j - \sum_{i=1}^k \lambda_i \cdot u_i
\end{align} where $ j $ is the index provided by Claim~\ref{claim:can_assume_pyu_eq_uj}. Obviously, $ w_1 \in W_1 $. \begin{figure}[t]
	\centering
	\input{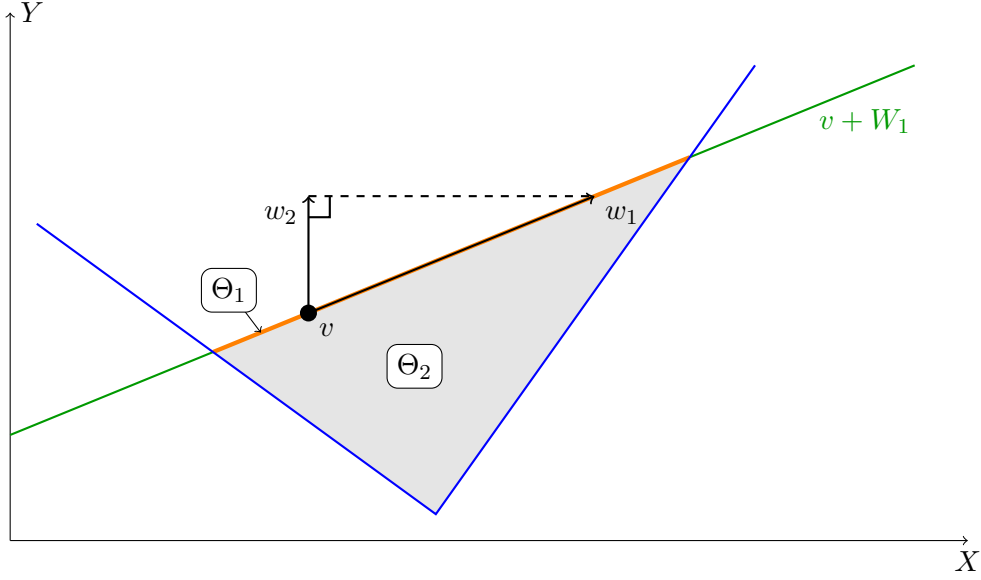}
	\caption{An example of how vectorspaces and individual vectors we introduced may look like when projected onto two dimensions. More precisely, we independently project the components of vectors corresponding to $ X \in \Pi $ and $ Y \in \Pi $. Labels at the endpoints of vectors correspond to positions relative to the initial point, not the origin. Blue lines depict the borders defined by linear constraints on which $ \Theta_1 $ and $ \Theta_2 $ agree.}
	\label{fig:overspilling_proof}
\end{figure} A way of geometrically thinking about the relationships between variables we defined so far is visualized in Figure~\ref{fig:overspilling_proof}.

\begin{zbclaim}
	\label{claim:pyw1_eq_pyw2}
	$ \pi_Y(w_1) = \pi_Y(w_2) $
\end{zbclaim}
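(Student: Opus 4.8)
The plan is to prove the claim by simply unfolding the definitions of $w_1$ and $w_2$ and invoking the linearity of the projection homomorphism $\pi_Y$. Recall from (\ref{eqn:w_u_minus_sum_lambdai_ui}) and (\ref{eqn:w1_eq_uj_minus_sum_lambdai_ui}) that $w_2 = u - \sum_{i=1}^k \lambda_i \cdot u_i$ and $w_1 = u_j - \sum_{i=1}^k \lambda_i \cdot u_i$, where $j \in \{1, \dots, k\}$ is the index supplied by Claim~\ref{claim:can_assume_pyu_eq_uj}, and the coefficients $\lambda_1, \dots, \lambda_k$ are the common ones fixed in (\ref{eqn:pixu_eq_sum_lambdai_pxui}). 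First I would apply $\pi_Y$ to both definitions and use that $\pi_Y$ is a $\Q$-vectorspace homomorphism to obtain $\pi_Y(w_2) = \pi_Y(u) - \sum_{i=1}^k \lambda_i \cdot \pi_Y(u_i)$ and $\pi_Y(w_1) = \pi_Y(u_j) - \sum_{i=1}^k \lambda_i \cdot \pi_Y(u_i)$. The two right-hand sides differ only in their leading term, so the claim reduces to the single equality $\pi_Y(u) = \pi_Y(u_j)$.

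That equality is precisely the normalization of $u$ that we arranged (without loss of generality) in Claim~\ref{claim:can_assume_pyu_eq_uj}, so the proof is finished by citing that claim. In short, there is essentially no obstacle here: the statement is an immediate consequence of linearity of $\pi_Y$ together with the earlier adjustment of the vector $u$. The only care required is bookkeeping — checking that the same coefficient vector $(\lambda_1, \dots, \lambda_k)$ literally appears in both $w_1$ and $w_2$ (it does, by how (\ref{eqn:w_u_minus_sum_lambdai_ui}) and (\ref{eqn:w1_eq_uj_minus_sum_lambdai_ui}) were written) and that the index $j$ in the definition of $w_1$ is the one from Claim~\ref{claim:can_assume_pyu_eq_uj}.

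It is worth noting why this tiny claim is wanted: combined with (\ref{eqn:pxw2_eq_zero}), which gives $\pi_X(w_2) = 0$, it will yield $\pi_X(w_2) \in \gen{\pi_X(w_1)}$ and $\pi_Y(w_2) \in \gen{\pi_Y(w_1)}$, i.e., the projections of $w_1$ and $w_2$ generate the same subspaces of the projected spaces. This is exactly the ``indistinguishability in the language of $\Pi$-decompositions'' property informally described in step~\ref{step:overspilling:1} of Section~\ref{sec:overspilling:intuition}, which drives the overspilling effect. So after this claim I would proceed to record those two subspace inclusions and then move on to the argument that any vectorspace satisfying (\ref{eqn:overspilling_step_1_property}) must contain $\gen{w_2}$.
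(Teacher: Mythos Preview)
Your proof is correct and follows essentially the same approach as the paper: apply the linearity of $\pi_Y$ to the definitions (\ref{eqn:w_u_minus_sum_lambdai_ui}) and (\ref{eqn:w1_eq_uj_minus_sum_lambdai_ui}), then invoke Claim~\ref{claim:can_assume_pyu_eq_uj} to identify $\pi_Y(u) = \pi_Y(u_j)$. Your additional remarks on the role of this claim in establishing the subspace inclusions for step~\ref{step:overspilling:1} are accurate and helpful context.
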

\begin{proof}
	Combining Claim~\ref{claim:can_assume_pyu_eq_uj} with (\ref{eqn:w1_eq_uj_minus_sum_lambdai_ui}) and (\ref{eqn:w_u_minus_sum_lambdai_ui}) yields
	\begin{align*}
	\pi_Y(w_1) \overset{(\ref{eqn:w1_eq_uj_minus_sum_lambdai_ui})}{=} \pi_Y(u_j) - \sum_{i=1}^k \lambda_i \cdot \pi_Y(u_i) \overset{\ref{claim:can_assume_pyu_eq_uj}}{=} \pi_Y(u) - \sum_{i=1}^k \lambda_i \cdot \pi_Y(u_i) \overset{(\ref{eqn:w_u_minus_sum_lambdai_ui})}{=} \pi_Y(w_2)
	\end{align*} as desired.
\end{proof}

\begin{zbclaim}
	\label{claim:w1_neq_zero}
	$ w_1 \neq 0 $
\end{zbclaim}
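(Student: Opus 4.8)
The claim to prove is that $w_1 \neq 0$, where $w_1 := u_j - \sum_{i=1}^k \lambda_i \cdot u_i$ as defined in (\ref{eqn:w1_eq_uj_minus_sum_lambdai_ui}), with $(u_1, \dots, u_k)$ a basis of $W_1$, $u \in W_2 \setminus W_1$ a vector chosen so that $\pi_Y(u) = \pi_Y(u_j)$ and $\pi_X(u) = \sum_{i=1}^k \lambda_i \pi_X(u_i)$.

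The plan is to argue by contradiction. Suppose $w_1 = 0$, i.e., $u_j = \sum_{i=1}^k \lambda_i u_i$. Since $(u_1, \dots, u_k)$ is a basis of $W_1$, its elements are linearly independent, so the only way to write $u_j$ as a linear combination of $u_1, \dots, u_k$ is the trivial one: $\lambda_j = 1$ and $\lambda_i = 0$ for all $i \neq j$. I would then substitute these values of the $\lambda_i$ into the relation $\pi_X(u) = \sum_{i=1}^k \lambda_i \pi_X(u_i)$ from (\ref{eqn:pixu_eq_sum_lambdai_pxui}), which collapses to $\pi_X(u) = \pi_X(u_j)$. Combined with $\pi_Y(u) = \pi_Y(u_j)$ from Claim~\ref{claim:can_assume_pyu_eq_uj}, this forces $u = u_j$ (since a vector is determined by its $X$- and $Y$-projections, as $\Pi = \{X, Y\}$ is a partition of all coordinates). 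But $u_j \in W_1$ whereas $u \in W_2 \setminus W_1$, so $u \neq u_j$ — a contradiction. Hence $w_1 \neq 0$.

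I expect this to be a short and routine argument; there is no genuine obstacle. The only point requiring minor care is the step from "$\pi_X(u) = \pi_X(u_j)$ and $\pi_Y(u) = \pi_Y(u_j)$" to "$u = u_j$", which I would justify by noting that $\pi_X$ and $\pi_Y$ together recover all $n$ coordinates of a vector (every variable lies in exactly one block of $\Pi$), so the pair $(\pi_X(v), \pi_Y(v))$ determines $v$ uniquely; equivalently $\ker \pi_X \cap \ker \pi_Y = \gen{0}$. Everything else is immediate from linear independence of a basis and the defining properties of $u$ already established in the preceding claims.
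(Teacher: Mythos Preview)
Your proof is correct but takes a different route from the paper's. The paper argues more directly via the already-established facts about $w_2$: assuming $w_1 = 0$, Claim~\ref{claim:pyw1_eq_pyw2} gives $\pi_Y(w_2) = \pi_Y(w_1) = 0$, and together with $\pi_X(w_2) = 0$ from (\ref{eqn:pxw2_eq_zero}) this forces $w_2 = 0 \in W_1$, contradicting $w_2 \notin W_1$. Your argument instead exploits the linear independence of the basis $(u_1,\dots,u_k)$ to pin down the $\lambda_i$'s and then reconstructs $u$ from its projections. Both are short and valid; the paper's version leverages the immediately preceding claim and so avoids re-deriving coordinate information, while yours is more self-contained in that it does not depend on Claim~\ref{claim:pyw1_eq_pyw2}.
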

\begin{proof}
	For the sake of contradiction, suppose that $ w_1 = 0 $. By Claim~\ref{claim:pyw1_eq_pyw2}, this means that $ \pi_Y(w_2) = \pi_Y(w_1) = 0 $. As we derived in (\ref{eqn:pxw2_eq_zero}), $ \pi_X(w_2) = 0 $. Hence, $ w_2 = 0 $, which contradicts $ w_2 \notin W_1 \ni 0 $.
\end{proof}

Having done the above preparatory work, we now get to the final part of the proof. Consider the set \[
P := (v + \gen{w_1}) \cap \ModelsOf(\Theta_1)
\] Recall that, by definition, $ v \in \ModelsOf(\Theta_1) $. Hence, $ v \in P $. Since $ v \models \Theta_1 $ and $ \Theta_1 \setminus \Theta_1^= $ contains only strict inequality predicates, there exists an $ \varepsilon > 0 $ such that \begin{align}
\label{eqn:I_def_v_v_plus_epsilon_w1}
I := (v, v + \varepsilon \cdot w_1) \subseteq P \subseteq \ModelsOf(\Theta_1)
\end{align} Note that by Claim~\ref{claim:w1_neq_zero}, $ v + \varepsilon \cdot w_1 \neq v $, so $ |I| = \infty $.

\begin{zbclaim}
	\label{claim:omega_agrees_I_on_two_els_implies_omega_theta2_sat}
	Let $ \Omega $ be a set of $ \Pi $-respecting predicates. Then $ \left|\ModelsOf(\Omega) \cap I\right| \ge 2 $ implies that $ \Omega \cup \Theta_2 $ is satisfiable.
\end{zbclaim}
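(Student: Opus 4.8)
The plan is to prove Claim~\ref{claim:omega_agrees_I_on_two_els_implies_omega_theta2_sat} by exploiting the fact that $I$ is an infinite subset of the line $v + \gen{w_1}$ inside $\ModelsOf(\Theta_1)$, so that two distinct points $a, b \in \ModelsOf(\Omega) \cap I$ give us a \emph{direction vector} $b - a \in \gen{w_1}$ that lies in the solution space of $\Omega^=$. First I would let $w + W_\Omega$ be the affine vectorspace of solutions to $\Omega^=$ for some $w \models \Omega$, and observe that since $\Omega$ respects $\Pi$, every equality predicate in $\Omega$ has its free variables within a single block of $\Pi$; this forces $W_\Omega = \pi_X^{-1}(\pi_X(W_\Omega)) \cap \pi_Y^{-1}(\pi_Y(W_\Omega))$ — that is, $W_\Omega$ is a ``product-like'' subspace in the sense of condition (\ref{eqn:overspilling_step_1_property}). (This is step \ref{step:overspilling:3} from the intuition section.) The two distinct points $a, b \in \ModelsOf(\Omega) \cap I$ both satisfy $\Omega^=$, so $b - a \in W_\Omega$; since $a, b \in v + \gen{w_1}$, we get $b - a \in \gen{w_1}$ and hence some nonzero scalar multiple of $w_1$ lies in $W_\Omega$, i.e.\ $\gen{w_1} \le W_\Omega$.

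Next I would invoke the algebraic core of the whole argument (step \ref{step:overspilling:1}): because $w_1$ and $w_2$ were constructed in (\ref{eqn:w_u_minus_sum_lambdai_ui}), (\ref{eqn:w1_eq_uj_minus_sum_lambdai_ui}) precisely so that $\pi_X(w_2) \in \gen{\pi_X(w_1)}$ (in fact $\pi_X(w_2) = 0 \in \gen{\pi_X(w_1)}$ by (\ref{eqn:pxw2_eq_zero})) and $\pi_Y(w_2) = \pi_Y(w_1) \in \gen{\pi_Y(w_1)}$ by Claim~\ref{claim:pyw1_eq_pyw2}, the product-structure of $W_\Omega$ together with $\gen{w_1} \le W_\Omega$ yields $\pi_X(w_2) \in \pi_X(W_\Omega)$ and $\pi_Y(w_2) \in \pi_Y(W_\Omega)$, whence $w_2 \in \pi_X^{-1}(\pi_X(W_\Omega)) \cap \pi_Y^{-1}(\pi_Y(W_\Omega)) = W_\Omega$. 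So $w_2 \in W_\Omega$, and recall $w_2 \in W_2$ as well. At this point the goal becomes showing that $\ModelsOf(\Omega)$ and $\ModelsOf(\Theta_2)$ actually intersect, not merely that their solution \emph{spaces} share a vector.

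To close the gap between the linear-algebraic statement ($w_2 \in W_\Omega \cap W_2$) and the logical statement ($\Omega \cup \Theta_2$ satisfiable), I would take a model $c \models \Omega$ (say $c \in \ModelsOf(\Omega) \cap I \subseteq \ModelsOf(\Theta_1)$, so $c$ already satisfies $\Theta = \Theta_2 \setminus \{p^P\}$) and argue that for a suitable scalar $\sigma$ with appropriate sign and sufficiently small magnitude, $c + \sigma w_2$ satisfies both $\Omega$ and $\Theta_2$ simultaneously. The key points: moving along $w_2 \in W_\Omega$ preserves $\Omega^=$, and since the non-equality part of $\Omega$ consists of strict inequalities, a small enough $\sigma$ keeps $c + \sigma w_2 \models \Omega$; moving along $w_2 \in W_2$ preserves $\Theta_2^=$, and again the strict inequalities of $\Theta_2$ survive a small perturbation — but crucially we must pick the \emph{sign} of $\sigma$ so that the predicate $p^P$ flips from being satisfied-as-equality (at a point on the $p$-hyperplane) to being satisfied-as-strict-inequality. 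Here I would use that $w_2 \notin W_1$ guarantees $w_2$ is not parallel to the hyperplane defined by $p$, so $p$ takes a strictly positive value in one direction along $w_2$ and strictly negative in the other, letting us choose $\sigma$ of the right sign to land inside $\ModelsOf(p^P)$; a reasoning of this flavor already appears in the proofs of Lemma~\ref{lemma:convexity} and Claim~\ref{claim:lambda_0_and_cube_sat}, and I would adapt it here, being careful about affine offsets (the common offset $v$ was arranged so that $\Theta_1^= \models \Theta_2^=$). The main obstacle is precisely this last step — making the sign and magnitude of the perturbation work against \emph{both} $\Omega$'s and $\Theta_2$'s constraint sets at once, especially ensuring $p^P$ becomes satisfied rather than violated; everything before it is a relatively mechanical unwinding of the constructions of $w_1, w_2$ and the product-structure of $\Pi$-respecting solution spaces.
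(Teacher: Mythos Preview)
Your proposal is correct and follows essentially the same route as the paper's proof: extract $\gen{w_1}\le W_\Omega$ from the two points in $I$, establish the product structure $W_\Omega=\pi_X^{-1}(\pi_X(W_\Omega))\cap\pi_Y^{-1}(\pi_Y(W_\Omega))$ from $\Pi$-respectingness, conclude $w_2\in W_\Omega$ via (\ref{eqn:pxw2_eq_zero}) and Claim~\ref{claim:pyw1_eq_pyw2}, and then perturb a point $c\in\ModelsOf(\Omega)\cap I$ along $w_2$ with a sign chosen (using $w_2\notin W_1$) so that $p^P$ becomes satisfied. The paper packages the last step as a separate claim showing $(c,c+\sigma\varepsilon w_2)\subseteq\ModelsOf(\Theta_2)$ via the same convexity-style sign argument you anticipate, and then intersects with the $\Omega$-interval exactly as you describe.
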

\begin{proof}
	Suppose $ \left|\ModelsOf(\Omega) \cap I\right| \ge 2 $. By definition of $ I $, the two witnesses of $ \left|\ModelsOf(\Omega) \cap I\right| \ge 2 $ can be assumed to be of the form \[
	w, w + \delta' \cdot w_1 \in \ModelsOf(\Omega) \cap I
	\] for some $ w \models \Omega \cup \Theta_1 $ and $ \delta' \neq 0 $. In particular, it follows that \begin{align}
	\label{eqn:w_wplusdeltaw1_in_modomega}
	w, w + \delta' \cdot w_1 \in \ModelsOf(\Omega) \subseteq \ModelsOf(\Omega^=) = w + W
	\end{align} where $ w + W $ is the affine vectorspace of solutions to $ \Omega^= $. By (\ref{eqn:w_wplusdeltaw1_in_modomega}) and Claim~\ref{claim:w1_neq_zero}, \begin{align}
	\label{eqn:zero_neq_wq_le_W}
	\gen{0} \overset{\ref{claim:w1_neq_zero}}{\neq} \gen{w_1} \overset{(\ref{eqn:w_wplusdeltaw1_in_modomega})}{\le} W
	\end{align} Now observe that since $ \Omega $ is a set of $ \Pi $-respecting predicates, $ \Omega^= $ can be partitioned as follows. \begin{align*}
		\Omega_X &:= \{p \in \Omega^= \mid \Free(p) \subseteq X\} \\
		\Omega_Y &:= \{p \in \Omega^= \mid \Free(p) \subseteq Y\} \\
		\Omega^= &= \Omega_X \cup \Omega_Y
	\end{align*} Let $ w + W_X $ and $ w + W_Y $ be affine vectorspaces of solutions to $ \Omega_X $ and $ \Omega_Y $, respectively. Since the truth value of $ \Omega_X $ (resp. $ \Omega_Y $) depends only on the variables from $ X $ (resp. $ Y $), \begin{align*}
		\pi_Y(W_X) &= \Q^{|Y|} \\
		\pi_X(W_Y) &= \Q^{|X|}
	\end{align*} Hence, for any $ Z \in \Pi $ and $ u \in \Q^n $, $ \pi_Z(u) \in \pi_Z(W_Z) $ implies $ u \in W_Z $. This property immediately yields the ``$ \supseteq $'' direction in \begin{align}
		\label{eqn:wz_eq_pizi_piz_wz}
		W_Z = \pi_Z^{-1}(\pi_Z(W_Z))
	\end{align} where $ Z \in \Pi $ is arbitrary (note that ``$ \subseteq $'' is trivial). Furthermore, observe that since the semantic evaluation of $ \Omega_X $ (resp. $ \Omega_Y $) depends only on the variables from $ X $ (resp. $ Y $), for all $ Z \in \Pi = \{X, Y\} $ it holds that \begin{align}
		\label{eqn:piz_mod_omegaz_eq_piz_mod_omegaeq}
		\pi_Z(\ModelsOf(\Omega_Z)) = \pi_Z(\ModelsOf(\Omega^=))
	\end{align} In terms of vectorspaces, this means that \begin{align*}
		\pi_Z(w) + \pi_Z(W_Z) &= \pi_Z(w + W_Z) \\
		&= \pi_Z(\ModelsOf(\Omega_Z)) \\
		&\overset{(\ref{eqn:piz_mod_omegaz_eq_piz_mod_omegaeq})}{=} \pi_Z(\ModelsOf(\Omega^=)) \\
		&= \pi_Z(w + W) = \pi_Z(w) + \pi_Z(W)
	\end{align*} and hence \begin{align}
		\label{eqn:piz_wz_piz_w}
		\pi_Z(W_Z) = \pi_Z(W)
	\end{align} holds for all $ Z \in \Pi $. We now combine the above observations to prove the following characterization of $ W $, which will later be of crucial importance.

	\begin{zbclaim}
		\label{claim:W_eq_pixinv_WX_intersect_piyinv_WY}
		$ W = \pi_X^{-1}(\pi_X(W)) \cap \pi_Y^{-1}(\pi_Y(W)) $
	\end{zbclaim}
	\begin{proof}
		We show $ w + W = w + \big(\pi_X^{-1}(\pi_X(W)) \cap \pi_Y^{-1}(\pi_Y(W))\big) $.
		\begin{align*}
			w + W &= \ModelsOf(\Omega^=) \\
			&= \ModelsOf(\Omega_X) \cap \ModelsOf(\Omega_Y) \\
			&= (w + W_X) \cap (w + W_Y) \\
			&= w + (W_X \cap W_Y) \\
			&\overset{(\ref{eqn:wz_eq_pizi_piz_wz})}{=} w + \big(\pi_X^{-1}(\pi_X(W_X)) \cap \pi_Y^{-1}(\pi_Y(W_Y))\big) \\
			&\overset{(\ref{eqn:piz_wz_piz_w})}{=} w + \big(\pi_X^{-1}(\pi_X(W)) \cap \pi_Y^{-1}(\pi_Y(W))\big) \\
		\end{align*}
	\end{proof}
	
	\begin{zbclaim}
		\label{claim:w2_is_in_W}
		$ w_2 \in W $
	\end{zbclaim}
	\begin{proof}
		We use the characterization of $ W $ we just derived in Claim~\ref{claim:W_eq_pixinv_WX_intersect_piyinv_WY}. It is immediate that $ w_2 \in \pi_X^{-1}(\pi_X(W)) $ because \[
		\pi_X(w_2) \overset{(\ref{eqn:pxw2_eq_zero})}{=} 0 = \pi_X(0) \in \pi_X(W)
		\] holds by (\ref{eqn:pxw2_eq_zero}). By Claim~\ref{claim:pyw1_eq_pyw2}, \[
		\pi_Y(w_2) \overset{\ref{claim:pyw1_eq_pyw2}}{=} \pi_Y(w_1) \in \pi_Y(\gen{w_1}) \overset{(\ref{eqn:zero_neq_wq_le_W})}{\le} \pi_Y(W)
		\] and hence $ w_2 \in \pi_Y^{-1}(\pi_Y(W)) $. The claim $ w_2 \in W $ follows from Claim~\ref{claim:W_eq_pixinv_WX_intersect_piyinv_WY}.
	\end{proof}

	\begin{figure}[t]
		\centering
		\input{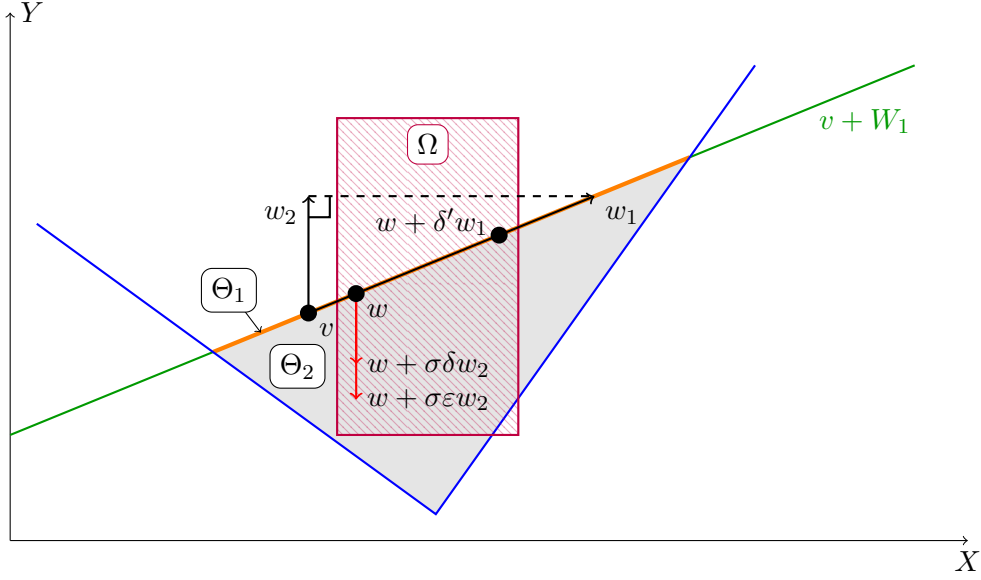}
		\caption{Figure~\ref{fig:overspilling_proof} extended with an additional visualization of $ w $, $ \Omega $ and endpoints of relevant intervals. In this example, $ \sigma = -1 $. The borders of $ \Omega $ are aligned along either $ X $ or $ Y $ because $ \Omega $ is a set of $ \Pi $-respecting predicates.}
		\label{fig:overspilling_proof_omega}
	\end{figure}
	
	\begin{zbclaim}
		\label{claim:existence_lambda_epsilon_overshoot}
		There exist $ \varepsilon > 0 $ and $ \sigma \in \{1, -1\} $ such that \[
			\varnothing \neq (w, w + \sigma \cdot \varepsilon \cdot w_2) \subseteq \ModelsOf(\Theta_2)
		\]
	\end{zbclaim}
	\begin{proof}
		Since $ w_2 \in W_2 $, $ w \models \Theta_1 \models \Theta $ and $ \Theta \setminus \Theta^= $ contains only strict inequality predicates, there exists an $ \varepsilon > 0 $ such that \[
		\varnothing \neq (w - \varepsilon \cdot w_2, w + \varepsilon \cdot w_2) \subseteq \ModelsOf(\Theta)
		\] Hence, due to $ \Theta_2 = \Theta \cup \{p^P\} $, it suffices to show the existence of $ \sigma \in \{1, -1\} $ such that \begin{align}
			\label{eqn:existence_lambda_epsilon_overshoot_suff_cond}
			\varnothing \neq (w, w + \sigma \cdot \varepsilon \cdot w_2) \subseteq \ModelsOf(p^P)
		\end{align} We first argue that $ (w, w + \varepsilon \cdot w_2) \cap \ModelsOf(p) = \varnothing $. Indeed, $ w + \varepsilon' \cdot w_2 \in \ModelsOf(p) $ for some $ 0 < \varepsilon' < \varepsilon $ would mean $ w + \varepsilon' \cdot w_2 \models \Theta_2^= \cup \{p\} = \Theta_1^= $ and consequently $ w + \varepsilon' \cdot w_2 - v \in W_1 $, but this implies a contradiction $ w_2 \in W_1 $. Hence, \begin{align}
		\label{eqn:w_plus_epsilon_w2_in_mod}
		\varnothing \neq (w, w + \varepsilon \cdot w_2) \subseteq \ModelsOf(p^<) \cup \ModelsOf(p^>)
		\end{align} so it follows that $ (w, w + \varepsilon \cdot w_2) \subseteq \ModelsOf(p^Q) $ must be true for some $ Q \in \{<, >\} $. This is because $ (w, w + \varepsilon \cdot w_2) $ is a convex set and thus assuming the contrary would imply by (\ref{eqn:w_plus_epsilon_w2_in_mod}) that $ \ModelsOf(p^<) \cup \ModelsOf(p^>) $ is closed under taking the convex hull of a point from $ \ModelsOf(p^<) $ and from $ \ModelsOf(p^>) $, but this is not true due to $ w \models p^= $. Repeating the above argumentation for $ (w, w - \varepsilon \cdot w_2) $ instead of $ (w, w + \varepsilon \cdot w_2) $ similarly yields that \begin{align}
		\label{eqn:w_minus_epsilon_w2_in_mod}
		\varnothing \neq (w, w - \varepsilon \cdot w_2) \subseteq \ModelsOf(p^Q)
		\end{align} must hold for some $ Q \in \{<, >\} $. Now observe that for every $ Q \in \{<, >\} $, \[
		(w, w + \varepsilon \cdot w_2) \subseteq \ModelsOf(p^Q) \supseteq (w, w - \varepsilon \cdot w_2)
		\] is impossible because $ p^Q $ defines a convex set and assuming \[
			(w - \varepsilon \cdot w_2, w) \cup (w, w + \varepsilon \cdot w_2) \subseteq \ModelsOf(p^Q)
		\] leads to (by taking the convex hull of the left-hand side) \[
			(w - \varepsilon \cdot w_2, w + \varepsilon \cdot w_2) \ni w \models p^Q
		\] which is a contradiction to $ w \models \Theta_1 $ because $ p \in \Theta_1 $ cannot agree with $ p^Q $ on any model. This observation, taken in conjunction with (\ref{eqn:w_plus_epsilon_w2_in_mod}) and (\ref{eqn:w_minus_epsilon_w2_in_mod}), implies the existence of $ \sigma \in \{1, -1\} $ such that (\ref{eqn:existence_lambda_epsilon_overshoot_suff_cond}) holds.
	\end{proof}
	
	At this point, we have derived everything we need to construct a model of $ \Omega \cup \Theta_2 $. Let $ \varepsilon, \sigma $ be the constants provided by Claim~\ref{claim:existence_lambda_epsilon_overshoot}. Since $ w_2 \in W $ (see Claim~\ref{claim:w2_is_in_W}), $ w \models \Omega $ and $ \Omega \setminus \Omega^= $ contains only strict inequality predicates, there exists some $ \delta > 0 $ such that \begin{align}
	\label{eqn:w_omega_interval}
	(w, w + \sigma \cdot \delta \cdot w_2) \subseteq \ModelsOf(\Omega)
	\end{align} See Figure~\ref{fig:overspilling_proof_omega} for a visualization of how $ \ModelsOf(\Omega) $ and the relevant intervals may look like geometrically. Claim~\ref{claim:existence_lambda_epsilon_overshoot} taken together with (\ref{eqn:w_omega_interval}) yields \begin{align*}
		\varnothing &\neq (w, w + \sigma \cdot \min\{\varepsilon, \delta\} \cdot w_2) \\
		&= (w, w + \sigma \cdot \varepsilon \cdot w_2) \cap (w, w + \sigma \cdot \delta \cdot w_2) \\
		&\subseteq \ModelsOf(\Theta_2) \cap \ModelsOf(\Omega) \\
		&= \ModelsOf(\Theta_2 \cup \Omega)
	\end{align*} Hence, $ \Omega \cup \Theta_2 $ is satisfiable.
\end{proof}

Since $ \varphi $ is $ \Pi $-decomposable, there exists a $ \Pi $-decomposition \begin{align}
	\label{eqn:overspilling_pi_dec}
	\varphi \equiv \bigvee_{\Omega \in D} \Omega
\end{align} where $ D $ is a finite set of sets of predicates respecting $ \Pi $ (without loss of generality). Since \[
I \overset{(\ref{eqn:I_def_v_v_plus_epsilon_w1})}{\subseteq} \ModelsOf(\Theta_1) \subseteq \ModelsOf(\varphi) \overset{(\ref{eqn:overspilling_pi_dec})}{=} \bigcup_{\Omega \in D} \ModelsOf(\Omega)
\] and $ |I| = \infty $ (see above), by the pigeonhole principle, there exists some $ \Omega \in D $ such that \[
	\left|\ModelsOf(\Omega) \cap I\right| = \infty
\] By Claim~\ref{claim:omega_agrees_I_on_two_els_implies_omega_theta2_sat}, $ \Omega \cup \Theta_2 $ is satisfiable. Since $ \Omega \models \varphi $, it follows that $ \varphi \wedge \Theta_2 $ is also satisfiable. This completes the proof of the Overspilling Theorem~\ref{thm:overspilling}.
	
\end{document}